\tikzset{cross/.style={cross out, draw, 
		minimum size=2*(#1-\pgflinewidth), 
		inner sep=0pt, outer sep=0pt}} 
\definecolor{airforceblue}{rgb}{0.36, 0.54, 0.66}
\definecolor{firebrick}{rgb}{0.7, 0.13, 0.13}
\definecolor{tealblue}{rgb}{0.21, 0.46, 0.53}
\definecolor{darkgreen}{rgb}{0.0, 0.2, 0.13}
\definecolor{dartmouthgreen}{rgb}{0.05, 0.5, 0.06}
\newtheorem{corollary}{Corollary}
\newtheorem{theorem}{Theorem} 
\newtheorem{proposition}{Proposition} 
\theoremstyle{definition}
\newtheorem{definition}{Definition}
\newtheorem{example}{Example}
\newtheorem{remark}{Remark}
\newcommand{\rotaterelation}[1]{\rotatebox[origin=c]{90}{$\mathstrut#1$}}
\newcommand{\Hilbertspace}{\mathcal{H}_{\text{inv}}}
\newcommand{\Adj}[1]{\operatorname{Ad}(#1)}
\newcommand{\Cclass}{C}
\DeclareMathOperator{\Perms}{Perms}
\DeclareMathOperator{\diag}{diag}
\newcommand{\Cl}[1]{\mathrm{Cl}(#1)}
\newcommand{\Rep}[1]{\mathrm{Rep}(#1)}
\newcommand{\N}{N}
\newcommand{\SN}{S_\N}
\newcommand{\sn}{\sigma}
\newcommand{\Pmat}[1]{(P_{#1})}
\renewcommand{\P}{P}
\newcommand{\VN}{V_\N}
\newcommand{\Paction}[1]{P_{#1}}
\newcommand{\fixp}[1]{F(#1)}
\newcommand{\YT}[1]{Y_{#1}}
\newcommand{\SYT}[1]{\mathrm{SYT}_{#1}}
\newcommand{\R}{\mathcal{R}}
\newcommand{\VNinner}[2]{(#1, #2)}
\newcommand{\DG}[1]{\mathcal{G}_{#1}} 
\newcommand{\vactab}{\mathfrak{v}}
\newcommand{\Zdual}{\widetilde{\mathcal{Z}}}
\newcommand{\setpart}[1]{\Pi_{#1}}
\newcommand{\X}{X}
\newcommand{\dgram}{D}
\newcommand{\dimPk}[1]{m^{#1}_{k,\N}}
\newcommand{\dimSN}[1]{\dim V_{#1}}
\newcommand{\join}{\vee}
\newcommand{\SPk}[1]{\mathfrak{#1}}
\newcommand{\PAdiagram}[3]{\;\begin{tikzpicture}[baseline={([yshift=-.5ex]current bounding box.center)}]
		\def \n {#1};
		\def \edges {#2};
		\def \arcs {#3};
		\def \sep {0.5};
		\foreach \v in {1,...,\n}
		{
			\pgfmathparse{(\v-1)*\sep};
			\coordinate (v\v) at (\pgfmathresult,0.25);
			\node[circle,fill,inner sep=1pt] at (v\v) {};
		}
		\foreach \v in {1,...,\n}
		{
			\pgfmathparse{(\v-1)*\sep};
			\coordinate (v-\v) at (\pgfmathresult,-0.25);
			\node[circle,fill,inner sep=1pt] at (v-\v) {};
		}
		\foreach \endpointOne/\endpointTwo in \edges
		{
			\draw[] (v\endpointOne) -- (v\endpointTwo);
		}
		\foreach \endpointOne/\endpointTwo in \arcs
		{
			\draw[] (v\endpointOne) to[bend left] (v\endpointTwo);
		}
	\end{tikzpicture}\;}
\newcommand{\PAdiagramLabeled}[3]{\;\begin{tikzpicture}[baseline={([yshift=-.5ex]current bounding box.center)}]
		\def \n {#1};
		\def \edges {#2};
		\def \arcs {#3};
		\def \sep {0.5};
		\foreach \v in {1,...,\n}
		{
			\pgfmathparse{(\v-1)*\sep};
			\coordinate (v\v) at (\pgfmathresult,0.25);
			\node[circle,fill,inner sep=1pt, label=above:{\tiny $\v'$}] at (v\v) {};
		}
		\foreach \v in {1,...,\n}
		{
			\pgfmathparse{(\v-1)*\sep};
			\coordinate (v-\v) at (\pgfmathresult,-0.25);
			\node[circle,fill,inner sep=1pt, label=below:{\tiny $\v$}] at (v-\v) {};
		}
		\foreach \endpointOne/\endpointTwo in \edges
		{
			\draw[] (v\endpointOne) -- (v\endpointTwo);
		}
		\foreach \endpointOne/\endpointTwo in \arcs
		{
			\draw[] (v\endpointOne) to[bend left] (v\endpointTwo);
		}
	\end{tikzpicture}\;}
\newcommand{\PAdiagramOrbit}[3]{\;\begin{tikzpicture}[baseline={([yshift=-.5ex]current bounding box.center)}]
		\def \n {#1};
		\def \edges {#2};
		\def \arcs {#3};
		\def \sep {0.5};
		\foreach \v in {1,...,\n}
		{
			\pgfmathparse{(\v-1)*\sep};
			\coordinate (v\v) at (\pgfmathresult,0.25);
			\node[circle, draw,inner sep=1pt] at (v\v) {};
		}
		\foreach \v in {1,...,\n}
		{
			\pgfmathparse{(\v-1)*\sep};
			\coordinate (v-\v) at (\pgfmathresult,-0.25);
			\node[circle, draw,inner sep=1pt] at (v-\v) {};
		}
		\foreach \endpointOne/\endpointTwo in \edges
		{
			\draw[] (v\endpointOne) -- (v\endpointTwo);
		}
		\foreach \endpointOne/\endpointTwo in \arcs
		{
			\draw[] (v\endpointOne) to[bend left] (v\endpointTwo);
		}
	\end{tikzpicture}\;}
\newcommand{\onerowdiagramunLabeled}[2]{\;\begin{tikzpicture}[baseline={([yshift=-.5ex]current bounding box.center)}]
		\def \n {#1};
		\def \arcs {#2};
		\def \sep {0.5};
		\foreach \v in {1,...,\n}
		{
			\pgfmathparse{(\v-1)*\sep};
			\coordinate (v\v) at (\pgfmathresult,0.25);
			\node[circle,fill,inner sep=1pt]  at (v\v) {};
		}
		\foreach \endpointOne/\endpointTwo in \arcs
		{
			\draw[] (v\endpointOne) to[bend right] (v\endpointTwo);
		}
	\end{tikzpicture}\;}
\newcommand{\onerowdiagram}[3]{\;\begin{tikzpicture}[baseline={([yshift=-.5ex]current bounding box.center)}]
		\def \n {#1};
		\def \edges {#2};
		\def \arcs {#3};
		\def \sep {0.5};
		\foreach \v in {1,...,\n}
		{
			\pgfmathparse{(\v-1)*\sep};
			\coordinate (v\v) at (\pgfmathresult,0.25);
			\node[circle,fill,inner sep=1pt, label=above:{\tiny $\v$}]  at (v\v) {};
		}
		\foreach \endpointOne/\endpointTwo in \edges
		{
			\draw[] (v\endpointOne) -- (v\endpointTwo);
		}
		\foreach \endpointOne/\endpointTwo in \arcs
		{
			\draw[] (v\endpointOne) to[bend right] (v\endpointTwo);
		}
	\end{tikzpicture}\;}
\DeclareMathOperator{\Span}{Span}
\DeclareMathOperator{\End}{End}
\DeclareMathOperator{\im}{im}
\DeclareMathOperator{\M}{Mat}
\DeclareMathOperator{\Sym}{Sym}
\DeclareMathOperator{\Hom}{Hom}
\newcommand{\idn}{\mathbb{I}}
\newcommand{\chr}{\chi}
\newcommand{\ptition}{\lambda}
\newcommand{\D}{D}
\newcommand{\Ind}[3]{\mathrm{Ind}^{#2}_{#1}(#3)}
\renewcommand{\Res}[3]{\mathrm{Res}^{#1}_{#2}(#3)}
\newcommand{\e}{\mathrm{e}}
\newcommand{\Z}{\mathcal{Z}}
\newcommand{\obs}{\mathcal{O}}
\def\Gread@@xetex#1{%
	\IfFileExists{"\Gin@base".bb}%
	{\Gread@eps{\Gin@base.bb}}%
	{\Gread@@xetex@aux#1}%
}
\definecolor{urlcolor}{rgb}{0,.145,.698}
\definecolor{linkcolor}{rgb}{.71,0.21,0.01}
\definecolor{citecolor}{rgb}{.12,.54,.11}
\definecolor{ansi-black}{HTML}{3E424D}
\definecolor{ansi-black-intense}{HTML}{282C36}
\definecolor{ansi-red}{HTML}{E75C58}
\definecolor{ansi-red-intense}{HTML}{B22B31}
\definecolor{ansi-green}{HTML}{00A250}
\definecolor{ansi-green-intense}{HTML}{007427}
\definecolor{ansi-yellow}{HTML}{DDB62B}
\definecolor{ansi-yellow-intense}{HTML}{B27D12}
\definecolor{ansi-blue}{HTML}{208FFB}
\definecolor{ansi-blue-intense}{HTML}{0065CA}
\definecolor{ansi-magenta}{HTML}{D160C4}
\definecolor{ansi-magenta-intense}{HTML}{A03196}
\definecolor{ansi-cyan}{HTML}{60C6C8}
\definecolor{ansi-cyan-intense}{HTML}{258F8F}
\definecolor{ansi-white}{HTML}{C5C1B4}
\definecolor{ansi-white-intense}{HTML}{A1A6B2}
\definecolor{ansi-default-inverse-fg}{HTML}{FFFFFF}
\definecolor{ansi-default-inverse-bg}{HTML}{000000}
\let\Oldtex\TeX
\let\Oldlatex\LaTeX
\renewcommand{\TeX}{\textrm{\Oldtex}}
\renewcommand{\LaTeX}{\textrm{\Oldlatex}}
\title{PIGM\_ExpVals}
\def\PY@reset{\let\PY@it=\relax \let\PY@bf=\relax%
	\let\PY@ul=\relax \let\PY@tc=\relax%
	\let\PY@bc=\relax \let\PY@ff=\relax}
\def\PY@tok#1{\csname PY@tok@#1\endcsname}
\def\PY@toks#1+{\ifx\relax#1\empty\else%
	\PY@tok{#1}\expandafter\PY@toks\fi}
\def\PY@do#1{\PY@bc{\PY@tc{\PY@ul{%
				\PY@it{\PY@bf{\PY@ff{#1}}}}}}}
\def\PY#1#2{\PY@reset\PY@toks#1+\relax+\PY@do{#2}}
\def\csname PY@tok@w\endcsname{\def\PY@tc##1{\textcolor[rgb]{0.73,0.73,0.73}{##1}}}
\def\csname PY@tok@c\endcsname{\let\PY@it=\textit\def\PY@tc##1{\textcolor[rgb]{0.25,0.50,0.50}{##1}}}
\def\csname PY@tok@cp\endcsname{\def\PY@tc##1{\textcolor[rgb]{0.74,0.48,0.00}{##1}}}
\def\csname PY@tok@k\endcsname{\let\PY@bf=\textbf\def\PY@tc##1{\textcolor[rgb]{0.00,0.50,0.00}{##1}}}
\def\csname PY@tok@kp\endcsname{\def\PY@tc##1{\textcolor[rgb]{0.00,0.50,0.00}{##1}}}
\def\csname PY@tok@kt\endcsname{\def\PY@tc##1{\textcolor[rgb]{0.69,0.00,0.25}{##1}}}
\def\csname PY@tok@o\endcsname{\def\PY@tc##1{\textcolor[rgb]{0.40,0.40,0.40}{##1}}}
\def\csname PY@tok@ow\endcsname{\let\PY@bf=\textbf\def\PY@tc##1{\textcolor[rgb]{0.67,0.13,1.00}{##1}}}
\def\csname PY@tok@nb\endcsname{\def\PY@tc##1{\textcolor[rgb]{0.00,0.50,0.00}{##1}}}
\def\csname PY@tok@nf\endcsname{\def\PY@tc##1{\textcolor[rgb]{0.00,0.00,1.00}{##1}}}
\def\csname PY@tok@nc\endcsname{\let\PY@bf=\textbf\def\PY@tc##1{\textcolor[rgb]{0.00,0.00,1.00}{##1}}}
\def\csname PY@tok@nn\endcsname{\let\PY@bf=\textbf\def\PY@tc##1{\textcolor[rgb]{0.00,0.00,1.00}{##1}}}
\def\csname PY@tok@ne\endcsname{\let\PY@bf=\textbf\def\PY@tc##1{\textcolor[rgb]{0.82,0.25,0.23}{##1}}}
\def\csname PY@tok@nv\endcsname{\def\PY@tc##1{\textcolor[rgb]{0.10,0.09,0.49}{##1}}}
\def\csname PY@tok@no\endcsname{\def\PY@tc##1{\textcolor[rgb]{0.53,0.00,0.00}{##1}}}
\def\csname PY@tok@nl\endcsname{\def\PY@tc##1{\textcolor[rgb]{0.63,0.63,0.00}{##1}}}
\def\csname PY@tok@ni\endcsname{\let\PY@bf=\textbf\def\PY@tc##1{\textcolor[rgb]{0.60,0.60,0.60}{##1}}}
\def\csname PY@tok@na\endcsname{\def\PY@tc##1{\textcolor[rgb]{0.49,0.56,0.16}{##1}}}
\def\csname PY@tok@nt\endcsname{\let\PY@bf=\textbf\def\PY@tc##1{\textcolor[rgb]{0.00,0.50,0.00}{##1}}}
\def\csname PY@tok@nd\endcsname{\def\PY@tc##1{\textcolor[rgb]{0.67,0.13,1.00}{##1}}}
\def\csname PY@tok@s\endcsname{\def\PY@tc##1{\textcolor[rgb]{0.73,0.13,0.13}{##1}}}
\def\csname PY@tok@sd\endcsname{\let\PY@it=\textit\def\PY@tc##1{\textcolor[rgb]{0.73,0.13,0.13}{##1}}}
\def\csname PY@tok@si\endcsname{\let\PY@bf=\textbf\def\PY@tc##1{\textcolor[rgb]{0.73,0.40,0.53}{##1}}}
\def\csname PY@tok@se\endcsname{\let\PY@bf=\textbf\def\PY@tc##1{\textcolor[rgb]{0.73,0.40,0.13}{##1}}}
\def\csname PY@tok@sr\endcsname{\def\PY@tc##1{\textcolor[rgb]{0.73,0.40,0.53}{##1}}}
\def\csname PY@tok@ss\endcsname{\def\PY@tc##1{\textcolor[rgb]{0.10,0.09,0.49}{##1}}}
\def\csname PY@tok@sx\endcsname{\def\PY@tc##1{\textcolor[rgb]{0.00,0.50,0.00}{##1}}}
\def\csname PY@tok@m\endcsname{\def\PY@tc##1{\textcolor[rgb]{0.40,0.40,0.40}{##1}}}
\def\csname PY@tok@gh\endcsname{\let\PY@bf=\textbf\def\PY@tc##1{\textcolor[rgb]{0.00,0.00,0.50}{##1}}}
\def\csname PY@tok@gu\endcsname{\let\PY@bf=\textbf\def\PY@tc##1{\textcolor[rgb]{0.50,0.00,0.50}{##1}}}
\def\csname PY@tok@gd\endcsname{\def\PY@tc##1{\textcolor[rgb]{0.63,0.00,0.00}{##1}}}
\def\csname PY@tok@gi\endcsname{\def\PY@tc##1{\textcolor[rgb]{0.00,0.63,0.00}{##1}}}
\def\csname PY@tok@gr\endcsname{\def\PY@tc##1{\textcolor[rgb]{1.00,0.00,0.00}{##1}}}
\def\csname PY@tok@ge\endcsname{\let\PY@it=\textit}
\def\csname PY@tok@gs\endcsname{\let\PY@bf=\textbf}
\def\csname PY@tok@gp\endcsname{\let\PY@bf=\textbf\def\PY@tc##1{\textcolor[rgb]{0.00,0.00,0.50}{##1}}}
\def\csname PY@tok@go\endcsname{\def\PY@tc##1{\textcolor[rgb]{0.53,0.53,0.53}{##1}}}
\def\csname PY@tok@gt\endcsname{\def\PY@tc##1{\textcolor[rgb]{0.00,0.27,0.87}{##1}}}
\def\csname PY@tok@err\endcsname{\def\PY@bc##1{\setlength{\fboxsep}{0pt}\fcolorbox[rgb]{1.00,0.00,0.00}{1,1,1}{\strut ##1}}}
\def\csname PY@tok@kc\endcsname{\let\PY@bf=\textbf\def\PY@tc##1{\textcolor[rgb]{0.00,0.50,0.00}{##1}}}
\def\csname PY@tok@kd\endcsname{\let\PY@bf=\textbf\def\PY@tc##1{\textcolor[rgb]{0.00,0.50,0.00}{##1}}}
\def\csname PY@tok@kn\endcsname{\let\PY@bf=\textbf\def\PY@tc##1{\textcolor[rgb]{0.00,0.50,0.00}{##1}}}
\def\csname PY@tok@kr\endcsname{\let\PY@bf=\textbf\def\PY@tc##1{\textcolor[rgb]{0.00,0.50,0.00}{##1}}}
\def\csname PY@tok@bp\endcsname{\def\PY@tc##1{\textcolor[rgb]{0.00,0.50,0.00}{##1}}}
\def\csname PY@tok@fm\endcsname{\def\PY@tc##1{\textcolor[rgb]{0.00,0.00,1.00}{##1}}}
\def\csname PY@tok@vc\endcsname{\def\PY@tc##1{\textcolor[rgb]{0.10,0.09,0.49}{##1}}}
\def\csname PY@tok@vg\endcsname{\def\PY@tc##1{\textcolor[rgb]{0.10,0.09,0.49}{##1}}}
\def\csname PY@tok@vi\endcsname{\def\PY@tc##1{\textcolor[rgb]{0.10,0.09,0.49}{##1}}}
\def\csname PY@tok@vm\endcsname{\def\PY@tc##1{\textcolor[rgb]{0.10,0.09,0.49}{##1}}}
\def\csname PY@tok@sa\endcsname{\def\PY@tc##1{\textcolor[rgb]{0.73,0.13,0.13}{##1}}}
\def\csname PY@tok@sb\endcsname{\def\PY@tc##1{\textcolor[rgb]{0.73,0.13,0.13}{##1}}}
\def\csname PY@tok@sc\endcsname{\def\PY@tc##1{\textcolor[rgb]{0.73,0.13,0.13}{##1}}}
\def\csname PY@tok@dl\endcsname{\def\PY@tc##1{\textcolor[rgb]{0.73,0.13,0.13}{##1}}}
\def\csname PY@tok@s2\endcsname{\def\PY@tc##1{\textcolor[rgb]{0.73,0.13,0.13}{##1}}}
\def\csname PY@tok@sh\endcsname{\def\PY@tc##1{\textcolor[rgb]{0.73,0.13,0.13}{##1}}}
\def\csname PY@tok@s1\endcsname{\def\PY@tc##1{\textcolor[rgb]{0.73,0.13,0.13}{##1}}}
\def\csname PY@tok@mb\endcsname{\def\PY@tc##1{\textcolor[rgb]{0.40,0.40,0.40}{##1}}}
\def\csname PY@tok@mf\endcsname{\def\PY@tc##1{\textcolor[rgb]{0.40,0.40,0.40}{##1}}}
\def\csname PY@tok@mh\endcsname{\def\PY@tc##1{\textcolor[rgb]{0.40,0.40,0.40}{##1}}}
\def\csname PY@tok@mi\endcsname{\def\PY@tc##1{\textcolor[rgb]{0.40,0.40,0.40}{##1}}}
\def\csname PY@tok@il\endcsname{\def\PY@tc##1{\textcolor[rgb]{0.40,0.40,0.40}{##1}}}
\def\csname PY@tok@mo\endcsname{\def\PY@tc##1{\textcolor[rgb]{0.40,0.40,0.40}{##1}}}
\def\csname PY@tok@ch\endcsname{\let\PY@it=\textit\def\PY@tc##1{\textcolor[rgb]{0.25,0.50,0.50}{##1}}}
\def\csname PY@tok@cm\endcsname{\let\PY@it=\textit\def\PY@tc##1{\textcolor[rgb]{0.25,0.50,0.50}{##1}}}
\def\csname PY@tok@cpf\endcsname{\let\PY@it=\textit\def\PY@tc##1{\textcolor[rgb]{0.25,0.50,0.50}{##1}}}
\def\csname PY@tok@c1\endcsname{\let\PY@it=\textit\def\PY@tc##1{\textcolor[rgb]{0.25,0.50,0.50}{##1}}}
\def\csname PY@tok@cs\endcsname{\let\PY@it=\textit\def\PY@tc##1{\textcolor[rgb]{0.25,0.50,0.50}{##1}}}
\newbox\Wrappedcontinuationbox 
\newbox\Wrappedvisiblespacebox 
\newcommand*\Wrappedvisiblespace {\textcolor{red}{\textvisiblespace}} 
\newcommand*\Wrappedcontinuationsymbol {\textcolor{red}{\llap{\tiny$\m@th\hookrightarrow$}}} 
\newcommand*\Wrappedcontinuationindent {3ex } 
\newcommand*\Wrappedafterbreak {\kern\Wrappedcontinuationindent\copy\Wrappedcontinuationbox} 
\newcommand*\Wrappedbreaksatspecials {%
	\def\PYGZus{\discretionary{\char`\_}{\Wrappedafterbreak}{\char`\_}}%
	\def\PYGZob{\discretionary{}{\Wrappedafterbreak\char`\{}{\char`\{}}%
	\def\PYGZcb{\discretionary{\char`\}}{\Wrappedafterbreak}{\char`\}}}%
	\def\PYGZca{\discretionary{\char`\^}{\Wrappedafterbreak}{\char`\^}}%
	\def\PYGZam{\discretionary{\char`\&}{\Wrappedafterbreak}{\char`\&}}%
	\def\PYGZlt{\discretionary{}{\Wrappedafterbreak\char`\<}{\char`\<}}%
	\def\PYGZgt{\discretionary{\char`\>}{\Wrappedafterbreak}{\char`\>}}%
	\def\PYGZsh{\discretionary{}{\Wrappedafterbreak\char`\#}{\char`\#}}%
	\def\PYGZpc{\discretionary{}{\Wrappedafterbreak\char`\%}{\char`\%}}%
	\def\PYGZdl{\discretionary{}{\Wrappedafterbreak\char`\$}{\char`\$}}%
	\def\PYGZhy{\discretionary{\char`\-}{\Wrappedafterbreak}{\char`\-}}%
	\def\PYGZsq{\discretionary{}{\Wrappedafterbreak\textquotesingle}{\textquotesingle}}%
	\def\PYGZdq{\discretionary{}{\Wrappedafterbreak\char`\"}{\char`\"}}%
	\def\PYGZti{\discretionary{\char`\~}{\Wrappedafterbreak}{\char`\~}}%
} 
\newcommand*\Wrappedbreaksatpunct {%
	\lccode`\~`\.\lowercase{\def~}{\discretionary{\hbox{\char`\.}}{\Wrappedafterbreak}{\hbox{\char`\.}}}%
	\lccode`\~`\,\lowercase{\def~}{\discretionary{\hbox{\char`\,}}{\Wrappedafterbreak}{\hbox{\char`\,}}}%
	\lccode`\~`\;\lowercase{\def~}{\discretionary{\hbox{\char`\;}}{\Wrappedafterbreak}{\hbox{\char`\;}}}%
	\lccode`\~`\:\lowercase{\def~}{\discretionary{\hbox{\char`\:}}{\Wrappedafterbreak}{\hbox{\char`\:}}}%
	\lccode`\~`\?\lowercase{\def~}{\discretionary{\hbox{\char`\?}}{\Wrappedafterbreak}{\hbox{\char`\?}}}%
	\lccode`\~`\!\lowercase{\def~}{\discretionary{\hbox{\char`\!}}{\Wrappedafterbreak}{\hbox{\char`\!}}}%
	\lccode`\~`\/\lowercase{\def~}{\discretionary{\hbox{\char`\/}}{\Wrappedafterbreak}{\hbox{\char`\/}}}%
	\catcode`\.\active
	\catcode`\,\active 
	\catcode`\;\active
	\catcode`\:\active
	\catcode`\?\active
	\catcode`\!\active
	\catcode`\/\active 
	\lccode`\~`\~ 	
}
\let\OriginalVerbatim=\Verbatim
\renewcommand{\Verbatim}[1][1]{%
	\sbox\Wrappedcontinuationbox {\Wrappedcontinuationsymbol}%
	\sbox\Wrappedvisiblespacebox {\FV@SetupFont\Wrappedvisiblespace}%
	\def\FancyVerbFormatLine ##1{\hsize\linewidth
		\vtop{\raggedright\hyphenpenalty\z@\exhyphenpenalty\z@
			\doublehyphendemerits\z@\finalhyphendemerits\z@
			\strut ##1\strut}%
	}%
	\def\FV@Space {%
		\nobreak\hskip\z@ plus\fontdimen3\font minus\fontdimen4\font
		\discretionary{\copy\Wrappedvisiblespacebox}{\Wrappedafterbreak}
		{\kern\fontdimen2\font}%
	}%
	
	\Wrappedbreaksatspecials
	\OriginalVerbatim[#1,codes*=\Wrappedbreaksatpunct]%
}
\definecolor{incolor}{HTML}{303F9F}
\definecolor{outcolor}{HTML}{D84315}
\definecolor{cellborder}{HTML}{CFCFCF}
\definecolor{cellbackground}{HTML}{F7F7F7}
\newcommand{\boxspacing}{\kern\kvtcb@left@rule\kern\kvtcb@boxsep}
\newcommand{\prompt}[4]{
	\ttfamily\llap{{\color{#2}[#3]:\hspace{3pt}#4}}\vspace{-\baselineskip}
}
\titleformat{\subsection}[runin]{\normalfont\bfseries}{\thesubsection}{.5em}{}[]
\begin{document}
	
	\frontmatter
	
\newcommand{\HRule}{\rule{\linewidth}{0.5mm}}

\begin{titlepage}
\centering
	
	\begin{figure}[h]
		\centering
		\includegraphics[scale=0.8]{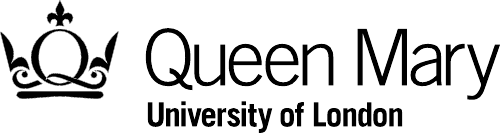}
	\end{figure}\ \\
	
	{\LARGE
		\textsc{Thesis submitted for the degree of \\[0.1cm] Doctor of Philosophy}
	}\\[1.1cm]
	
	\HRule\\[0.5cm]
	{\huge
		\bfseries{Permutation invariance, partition algebras and large $\N$ matrix models
	}}\\[0.5cm]
	\HRule\\[1.5cm]
	
	{\huge
		\textsc{Adrian Padellaro}
	}\\[2cm]
	
	{\Large
		Supervisor \\[0.3cm]
		\textsc{Dr Sanjaye Ramgoolam}\\[1.5cm]
		July 30, 2023}\\[0.5cm]
	
	{\large
		Centre for Theoretical Physics \\[0.1cm]
		School of Physical and Chemical Sciences \\[0.1cm]
		Queen Mary University of London}

\end{titlepage} 
	
	\thispagestyle{empty}
\topskip0pt
\vspace*{3cm}
\begin{center}
	\textit{
		Dedicated to my family and partner.} \\
\end{center}

\vspace*{\fill}

	\chapter*{Declaration}

\vspace{-0.5cm}
I, Adrian Padellaro, confirm that the research included within this thesis is my own work or that where it has been carried out in collaboration with, or supported by others, that this is duly acknowledged below and my contribution indicated. Previously published material is also acknowledged below. \\

\noindent I attest that I have exercised reasonable care to ensure that the work is original, and does not to the best of my knowledge break any UK law, infringe any third party’s copyright or other Intellectual Property Right, or contain any confidential material. \\

\noindent I accept that the College has the right to use plagiarism detection software to check the electronic version of the thesis. \\

\noindent I confirm that this thesis has not been previously submitted for the award of a degree by this or any other university. \\

\noindent The copyright of this thesis rests with the author and no quotation from it or information derived from it may be published without the prior written consent of the author. \\

\noindent Signature:
\begin{figure}[h]
	\hspace{0.2cm}
\end{figure}

\noindent \vspace{-0.1cm}Date: \today \\

\noindent This thesis describes research carried out with my supervisor Sanjaye Ramgoolam and colleague George Barnes. It is based on and has substantial overlap with the published papers \cite{Barnes2022b, Barnes:2021tjp, Barnes:2022qli} and the paper \cite{PIGTM} which is due to appear in the near future.

	\chapter*{Abstract}

Models with degrees of freedom that naturally arrange themselves into matrices have a long history in science. Statistical models of large Hermitian matrices are believed to capture universal features of chaotic quantum systems. The various connections between matrix theory and string theory have been so prolific that one might argue that matrix models capture generic features of string theories. The first sign of this connection (gauge-string duality) was discovered by 't Hooft, where string worldsheets emerge from the combinatorics of Feynman diagrams in $U(\N \rightarrow \infty)$ Yang-Mills theory.

Many aspects of this emergence can be understood from the mathematical duality known as Schur-Weyl duality. It relates two algebraic structures: the representation theory of $U(\N)$ and the representation theory of symmetric groups $S_k$. This has implications for $U(\N)$ matrix models where observables find an eloquent description in terms of the group algebras $\mathbb{C}(S_k)$. The duality underlies the geometric construction of gauge-string duality, where string worldsheets emerge from a connection between symmetric group elements, ribbon graphs and Riemann surfaces.

In this thesis we will study matrix models with discrete gauge group $\SN$. We will put these matrix models into a generalized Schur-Weyl duality framework where dual algebras, known as partition algebras, emerge. These form generalizations of the symmetric group algebras -- they are semi-simple finite-dimensional associative algebras with a basis labelled by diagrams. We review the structure and representation theory of partition algebras. These algebras are then used to compute expectation values of $\SN$ invariant observables. This is a step towards studying the emergence of new geometric structures in their Feynman diagram expansion. Matrix models also appear in the form of quantum mechanical models of matrix oscillators. We explore the implications of the Schur-Weyl duality framework to matrix quantum mechanics with permutation symmetry.

%
	\chapter*{Acknowledgements}
I want to thank my supervisor Sanjaye Ramgoolam for giving me the opportunity to do a PhD in theoretical physics. Without him I would not have experienced what ended up being the most fascinating four years of my life. I will be forever grateful for his patience as he guided me to my current level of mathematical maturity. I thank my family and my partner Felicia Tyllsjö for their love and patience, and their support in my pursuit of a career in academia. My friends and colleagues Rashid Alawadhi and Rajath Radhakrishnan -- who I spent countless hours discussing physics, philosophy and life with -- were a major contributing factor to my positive experience of the last four years. I also want to thank George Barnes, Sam Wikeley, Manuel Accettulli Huber, Stefano De Angelis, Shun-Qing Zhang, David Peinador Veiga, Graham Brown, Josh Gowdy, Mitchell Woolley, Lewis Sword, Tancredi Schettini, Kymani Armstrong-Williams, Mahesh Balasubramanian and Anindya Banerjee for making this journey a fun and lively experience. Lastly, I thank my examiners Denjoe O'Connor and Congkao Wen for their interesting comments, questions and suggestions for my thesis.

	\tableofcontents

	\mainmatter
	\chapter{Introduction}
Models with degrees of freedom that naturally arrange themselves into matrices are ubiquitous in science \cite{Wigner1955, Dyson1962,Guhr1998, Edelman2013, Akemann:2016keq}. In physics, statistical models of large Hermitian matrices are believed to -- since the work of Wigner and Dyson -- capture universal features of chaotic quantum systems. Our understanding of the strong force as a Yang-Mills theory is another instance where matrix degrees of freedom appear. Despite a lot of effort and progress, a full non-perturbative understanding of the strong force remains elusive to this day. A beautiful framework for studying the strong force emerges by taking the gauge group $U(\N)$ to be very large. In the $\N \rightarrow \infty$ limit a new picture of the strong force emerges in terms of strings and their worldsheet geometry, as discovered by 't Hooft \cite{tHooft}. This can be viewed as the first sign of a gauge-string duality. Since then, many connections between matrix models and strings have been discovered, giving evidence to the perspective that large $\N$ matrix models capture generic features of string theories. Matrix models, and more generally tensor models, are also known to be closely related to integrable structures, hidden algebras \cite{PCA2016, Geloun2021, Mironov2022,Ramgoolam:2023vyq}, geometry and topological quantum field theories  \cite{MulaseYu, Kimura2014,Kimura2017}.

Recently, random matrix theory has been applied to tasks in computational linguistics.
The use of frequency vectors to study the meaning of words and phrases is an old idea \cite{Firth1957,Harris1968}. New frameworks extend this idea by modelling grammatical composition using tensor contractions \cite{coecke2010mathematical, Baroni2014FregeIS}. These constructions give rise to matrix data from language. In \cite{Kartsaklis2017}, matrix models were constructed to study the statistics of these matrices. A salient feature of these models is the lack of continuous symmetry. The remaining symmetry group is a discrete group $\SN \subset GL(\N)$ of permutation matrices. It was further proposed that the most important features of the matrix data can be captured by permutation invariant polynomial functions, referred to as observables. The general permutation invariant Gaussian matrix model was developed and solved in \cite{Ramgoolam2019a}. A selection of analytic formulas were provided for expectation values of low degree invariant polynomials. These models have successfully predicted features of real-world data \cite{Ramgoolam2019, Huber2022a}. Permutation invariant 2-matrix models were developed in \cite{Barnes2022b} and theoretical features of permutation invariant observables were investigated in \cite{Barnes:2021tjp}.

Many aspects of large $\N$ simplifications in matrix systems with continuous group symmetry are consequences of Schur-Weyl duality. The standard instance of Schur-Weyl duality \cite{Fulton2013} concerns tensor products $\VN^{\otimes k}$ of the fundamental representation of $U(\N)$. Well-known special cases include: $k=2$ which corresponds to the decomposition of matrices into symmetric and anti-symmetric parts
\begin{equation}
	\VN \otimes \VN \cong \Sym^2(\VN) \oplus \Lambda^2(\VN) \label{eq: intro VNVN}
\end{equation}
and $N=2$ which corresponds to the decomposition of $k$-body wave functions of spin-$\tfrac{1}{2}$ particles giving
\begin{equation}
	V_2^{\otimes k} = W_1^{\otimes k} \cong W_{k} \oplus (k-1)W_{k-1}, \label{eq: intro VNk}
\end{equation}
where $W_s$ is a spin-$\tfrac{s}{2}$ representation.
Generally, the symmetric group $S_k$ acts on $\VN^{\otimes k}$ by permuting tensor factors. This action commutes with the diagonal action of $U(\N)$ and Schur-Weyl duality amounts to the statement that the corresponding action of the group algebra $\mathbb{C}(S_k)$ contains the full algebra $\End_{U(\N)}(\VN^{\otimes k})$ of operators commuting with the action of $U(\N)$ on $\VN^{\otimes k}$. For large $\N$, they are isomorphic $\mathbb{C}(S_k) \cong \End_{U(\N)}(\VN^{\otimes k})$ and all irreducible representation of $S_k$ appear in $\VN^{\otimes k}$. Together with the double centralizer theorem, this implies that dimensions of irreducible representations of $S_k$ control the multiplicity of irreducible representations of $U(\N)$ in the decomposition of $\VN^{\otimes k}$. Standard results in representation theory of symmetric groups then imply that multiplicities are computed by enumerating standard Young tableaux. From the perspective of Schur-Weyl duality, the r.h.s. of \eqref{eq: intro VNVN} should be understood as the trivial and sign representation of $S_2$, respectively. The decomposition \eqref{eq: intro VNk} is understood in terms of irreducible representations of $S_k$. Namely the trivial and $(k-1)$-dimensional representation (known as the standard or hook representation), respectively.
The fact that \eqref{eq: intro VNk} only contains two irreducible representations of $S_k$ is a so-called finite $\N$ effect -- anti-symmetrising over more than two indices gives a vanishing result.

The generalization to arbitrary $N$ and $k$  has important implications for the classification and construction of matrix polynomial functions $f(X)$ invariant under conjugation by $U(\N)$: $f(X) = f(gXg^\dagger)$ for $g \in U(\N)$. Well-known examples of such functions are the multi-trace observables in supersymmetric $U(N)$ gauge theories. At large $N$, multi-trace observables of degree $k$ are in one-to-one correspondence with conjugacy classes of $S_k$. Let $\tau$ be an element of $S_k$ with $c_i$ cycles of length $i=1,\dots, k$, then
\begin{equation}
	O_\tau = \sum_{i_1, \dots, i_k =1}^\N X_{i_1 i_{(1)\tau}} \dots X_{i_k i_{(k)\tau}} = (\Tr X^1)^{c_1} (\Tr X^2)^{c_2} \dots (\Tr X^k)^{c_k}.
\end{equation}
Fourier transforming $\mathbb{C}(S_k)$, using irreducible characters $\chi^\lambda(\tau)$ of $S_k$, gives an alternative basis for invariant observables, labelled by irreducible representations $V_\lambda$ of $S_k$
\begin{equation}
	O_\lambda = \frac{1}{k!}\sum_{\tau \in S_k} \chi^\lambda(\tau) O_\tau.
\end{equation}
While the multi-trace basis is orthogonal (with respect to the free two-point function) up to $O(1/\N)$ corrections, the representation basis is orthogonal for all $N$.
These techniques, based on Schur-Weyl duality have lead to many important results for gauge invariant observables in matrix theories relevant to AdS/CFT. Some highlights include the identification of CFT duals \cite{Balasubramanian2002, CJR, Berenstein2004} of giant gravitons \cite{McGreevy2000, HHI2000, GMT2000} and computation of correlators \cite{CJR, Kimura2007, Brown2008, Bhattacharyya:2008rb, Bhattacharyya2008b, Kimura2008, Brown2009, QuivCalc, CDD1301, Ber1504, KRS, CLBSR, ADHSSS, LY2107}.

The defining representation $\VN$ of the symmetric group $\SN$ corresponds to the set of permutation matrices. Schur-Weyl duality for $\SN$ acting on $\VN^{\otimes k}$ was first discovered by Jones and Martin \cite{Jones1994, Martin1994, Martin1996} while studying statistical mechanics and Potts models. The dual algebras $\End_{\SN}(\VN^{\otimes k})$, controlling the multiplicities in the decomposition of $\VN^{\otimes k}$, were identified with the so-called partition algebras $P_k(\N)$. In Potts models, partition algebras appear as the algebra generated by transfer matrices. Partition algebras are so-called diagram algebras. Namely, finite-dimensional associative algebras with distinguished bases labelled by diagrams, where the product is defined through diagram concatenation. The decomposition \eqref{eq: intro VNVN} is not irreducible with respect to the diagonal action of $\SN$. This is captured by Schur-Weyl duality and instead one finds
\begin{equation}
	\VN \otimes \VN \cong 2 V_{[\N]} \oplus 3 V_{[\N{-}1,1]} \oplus V_{[\N{-}2,2]} \oplus V_{[\N{-}2,1,1]}, \label{eq: intro VNVN for SN}
\end{equation}
where the irreducible representations on the r.h.s. are labelled by integer partitions of $\N \geq 4$. In analogy to the $U(N)$ case, the multiplicities on the r.h.s. are determined by dimensions of irreducible representations of $P_2(\N)$. This is a refinement of the matrix decomposition in \eqref{eq: intro VNVN}, into components of the matrix that are invariant under conjugation by a permutation matrix. The combinatorial analogue of standard Young tableaux, in this case, are called vacillating tableaux and their enumeration corresponds to computing the multiplicities in the above decomposition.
Recently, many mathematical results about partition algebras and their representation theory have been developed. Some highlights include expressions for characters \cite{Halverson2001}, combinatorial formulas for the dimensions of irreducible representations \cite{Benkart2017}, combinatorial (Young diagram) realisations of the irreducible representations \cite{Halverson2018} and many more (see \cite{Halverson2005, Enyang_2012, Benkart2016, Doty2019} and references therein).

In this thesis we will leverage these new results to put the zero-dimensional permutation invariant matrix models into the Schur-Weyl duality framework, previously only developed for continuous symmetries. The decomposition \eqref{eq: intro VNVN for SN} plays a crucial role in the construction \cite{Ramgoolam2019a} of the general permutation invariant Gaussian matrix model. As we will review, the quadratic part of the action is constructed by coupling parts of the matrices that transform in isomorphic irreducible representations of $\SN$. The resulting coupling matrices are invariant tensors that can be understood in terms of elements of $\End_{\SN}(\VN^{\otimes 2})$, or equivalently $P_2(\N)$.
Therefore, the explicit construction of these coupling matrices is equivalent to the construction of a basis for $P_2(\N)$. In particular, we will explain how the construction of, and solution to, the most general quadratic permutation invariant model in \cite{Ramgoolam2019a} can be understood from the construction of a basis of matrix units for $P_1(\N)$ and $P_2(\N)$, also known as an Artin-Wedderburn decomposition \cite{Artin,Wedderburn}. As we will see, general observables are closely related to partition algebras, and this connection gives rise to an algebraic/combinatorial algorithm for computing their expectation values as rational functions of $\N$.

Since the discovery of simplifications of large $\N$ matrix quantum field theories by 't Hooft \cite{tHooft}, systems with matrix degrees of freedom have played a major role in the development of gauge-string dualities. Examples of gauge-string duality based on large $\N$ include: the duality between non-critical strings and matrix models \cite{Douglas1990, Brezin1990, Gross1990}, Gaussian matrix models and Belyi maps \cite{ITZYK, MelloKoch2010, Gopak2011, dMKLN}, two-dimensional Yang-Mills and Hurwitz spaces \cite{1993Gross_1,GrossTaylor, Minahan1993, SCHNITZER1993, Gross1993a, MP1993, Horava1996, Cordes1997, Kimura:2008gs} and AdS/CFT \cite{Malda, Witten1998, Gubser1998}.
Gauge-string dualities are difficult to study in full generality. Therefore, it has been useful to study them in restricted corners of parameter space, or manageable sectors. Clever choices of such restrictions have led to a rich interplay between gauge-string dualities and quantum many-body systems. This includes the connections between free fermions and large $\N$ two-dimensional Yang-Mills \cite{Douglas1993YM}, half-BPS sectors of $\mathcal{N}=4$ SYM \cite{CJR, Berenstein2004}, supersymmetric indices \cite{Murthy2022}; 3D bosons and eight-BPS states in $\mathcal{N}=4$ SYM \cite{Berenstein2006}; and spin matrix theories as quantum mechanical models of AdS/CFT \cite{Harmark2014,Baiguera2022}.

Motivated by this, we also study the implications of permutation symmetry on quantum mechanical matrix models. The mathematical techniques used in zero dimensions can be leveraged to construct exactly solvable models of matrix oscillators in a permutation invariant potential. We also describe the subspace of states invariant under the adjoint action of permutations. We give a geometric basis for this subspace that generalizes the multi-trace basis for singlet states in gauged matrix quantum mechanics. This basis exhibits a form of large $\N$ factorization -- or large $\N$ orthogonality -- generalizing the large $\N$ factorization of multi-trace states that underlies their interpretation in terms of multi-particle states \cite{Balasubramanian2002}. Large $\N$ factorization is also used in the construction of gauge-string duals in collective field theory \cite{Jevicki1980, Yaffe1982, Das1990, MelloKoch2011a}, a useful tool for studying the emergence of classical limits at large $\N$. Similarly, it enters the Master field approach to large $\N$ \cite{Witten1980}.

The outline of the thesis is as follows.
\begin{itemize}
	\item
In Chapter \ref{ch: SN} we give a brief review of the basic objects used in the theory of symmetric groups $\SN$. Namely, Young diagrams, tableaux, group algebras and centres of group algebras. This is supplemented by Appendix \ref{apx: SN units} where we give a procedure for constructing matrix units for the symmetric group algebras, and consequently irreducible representations of $\SN$. This appendix is meant to serve as a warm-up, in a more familiar setting, to the construction of matrix units for partition algebras given in chapter \ref{chapter: partition algebra}. The main results of this chapter concerns the most concrete representation of $\SN$, as permutation matrices, also known as the defining representation $\VN$. We discuss explicit decompositions of the defining representation into irreducible representations. Lastly, we consider tensor powers $\VN^{\otimes k}$ and give two ways of counting the multiplicity of irreducible representations in the decomposition. We review how multiplicities can be understood through the combinatorial objects known as vacillating tableaux. These play a major role in the following chapter.
	\item 
Chapter \ref{chapter: partition algebra} elaborates on many of these results. They are put into the context of Schur-Weyl duality where the multiplicities are understood as dimensions of irreducible representations of the partition algebras $P_k(\N)$. We give an explicit description of the partition algebras in terms of diagrams, which multiply through diagram concatenation. The rest of the chapter sets up the necessary mathematical background for constructing matrix units for $P_k(\N)$. This is very similar in spirit to the previously mentioned construction in Appendix \ref{apx: SN units}. Explicit tables of matrix units for $P_2(\N)$ are found in Appendix \ref{apx: P2N units}, these are used in the subsequent chapter to construct permutation invariant Gaussian matrix models.
	\item
Chapter \ref{chapter: 0d} is all about zero-dimensional matrix models with permutation symmetry. In particular, permutation symmetry with respect to the adjoint (conjugation) action of permutation matrices. We describe the most general permutation invariant Gaussian matrix models and derive expression for the first and second moment of matrix elements. These moments can be expressed as linear combinations of the matrix units constructed in Appendix \ref{apx: P2N units}. Observables in this matrix model are defined to be permutation invariant matrix polynomials. We give two descriptions of the space of observables, in terms of equivalence classes of set partitions, and directed graphs. The former description is most useful for computing expectation values of observables, while the latter is more useful for combinatorial counting and construction. The penultimate section in this chapter describes a combinatorial algorithm for computing expectation values of observables. In the last section we give a description of directed graphs in terms of double cosets. This is supplemented with Appendix \ref{apx: double coset} where we describe generating functions for sizes of these double cosets. We also provide code implementing the double coset counting.
	\item
In Chapter \ref{ch: 1d} we apply the mathematical framework developed in previous chapters to matrix quantum mechanics. We start the Chapter with a review of quantum matrix harmonic oscillators, that is, $\N^2$ uncoupled harmonic oscillators. The next section defines a model of harmonic oscillators in a permutation invariant quadratic potential. We explain that this Hamiltonian can be exactly diagonalized for arbitrarily large $\N$ by taking advantage of the permutation symmetry of the system.
In the following section we consider a subspace of the total Hilbert space, made out of permutation invariant states. The invariant states have an algebraic description in terms of partition algebras and therefore inherit three natural bases: the diagram basis, orbit basis and representation basis. The diagram basis satisfies a generalization of the large $\N$ factorisation well-known for multi-trace observables. In the next section we describe some algebraic Hamiltonians based on diagram algebras. We give Hamiltonians that are diagonalized in the representation basis previously mentioned. We discuss extensions of this construction and possibilities of constructing Hamiltonians for which the representation basis forms a complete set of non-degenerate eigenvectors.
\end{itemize}
	
	\chapter{Symmetric groups}\label{ch: SN}
The symmetric groups $\SN$ of permutations of $\N$ objects play a prominent role in the theory of finite groups and their representation theory. Their representation theory, including computations of characters and construction of irreducible representations, is well understood in terms of combinatorial objects such as Young diagrams and tableaux (see \cite{Sagan2013} for a dedicated mathematical treatment or \cite{Hamermesh1962} for a standard reference aimed at physicists).

In this Chapter, we will start by reviewing a subset of these objects that will be relevant for future chapters. We will focus most of our attention on the most concrete realisation of the symmetric group, as a set of permutation matrices, known as the defining representation. The main result of this chapter concerns the decomposition \eqref{eq: VNotimesk} of tensor products of defining representations and its combinatorial interpretation in Theorem \ref{thm: multi is vac tab}. This result is absolutely central to the construction of matrix units in Chapter \ref{chapter: partition algebra} which leads to the construction of matrix models in Chapter \ref{chapter: 0d} and \ref{ch: 1d}.
\section{Review, notation and conventions}
In this thesis we use the following definition of symmetric groups.
\begin{definition}[Symmetric group] \label{def: SN}
	The symmetric group $\SN$ is the set of bijective maps $\sn: \{1,\dots,\N \} \rightarrow \{1,\dots,\N\}$ with multiplication given by composition of maps.
\end{definition}
The number of elements in the set $\SN$ is $\abs{\SN} = \N!$.

We read products of group elements from left to right. That is, for $i \in \{1, \dots, \N\}$ and $\sn_1, \sn_2 \in \SN$ the product $\sn_1 \sn_2$ corresponds to the map $i \mapsto \sn_2(\sn_1(i))$ which we write $(i)\sn_1 \sn_2$.
\begin{example}\label{ex: S3 grp law}
	Consider $\N = 3$ and the two maps $\sn_1, \sn_2 \in \SN$ given by
	\begin{equation}
		(1)\sn_1=2, (2)\sn_1=3, (3)\sn_1=1, \quad (1)\sn_2=2, (2)\sn_2=1, (3)\sn_2=3,
	\end{equation}
	or in cycle notation $\sn_1 = (123), \sn_2 = (12)(3)$.
	Then $\sn_1 \sn_2$ corresponds to the map
	\begin{equation}	
		\begin{aligned}
			&(1)\sn_1 \sn_2  = (2)\sn_2 = 1,\\
			&(2)\sn_1 \sn_2 = (3)\sn_2 = 3, \\
			&(3)\sn_1 \sn_2  = (1)\sn_2 = 2.
		\end{aligned}
	\end{equation}
	and in cycle notation $\sn_1 \sn_2 = (1)(23)$.	
\end{example}
\begin{definition}[Cycle structure]	
The conjugacy classes of $\SN$ correspond to cycle structures of elements in cycle notation. For $\sn = c_1 c_2 \dots c_l$ decomposed into disjoint cycles $c_i$ of length $\abs{c_i}$ we say that it has cycle structure $c(\sn) = [\abs{c_1}, \abs{c_2}, \dots, \abs{c_l}]$ and it lies in the conjugacy class $\Cclass_{[\abs{c_1}, \abs{c_2}, \dots, \abs{c_l}]}$. Note that
\begin{equation}
	\sum_{i=1}^l \abs{c_i} = \N,
\end{equation}
and because all the cycles are disjoint they commute. Therefore, we can choose to order the elements in the decomposition $\sn = c_1 c_2 \dots c_l$ such that $\abs{c_1} \geq \abs{c_2} \geq \dots \geq \abs{c_l}$. This defines an integer partition of $\N$.
\end{definition}
\begin{example}
	The following permutations $\sn_1 = (123)(45), \, \sn_2 = (12345), \, \sn_3 = (12)(3)(4)(5)$ in $S_5$ have cycle structure $[3,2], \, [5], \, [2,1,1]$ respectively.
\end{example}
\begin{definition}[Integer partition]
	A list $\ptition = [\ptition_1, \ptition_2, \dots, \ptition_l]$ of integers with $\ptition_1 \geq \ptition_2 \geq \dots \geq \ptition_l$ satisfying $\sum_i \ptition_i = \N$ is called an integer partition of $\N$. We use the shorthand $\ptition \vdash \N$ to say that $\ptition$ is an integer partition of $\N$ and shorthand $\abs{\lambda} = \N$.
	An entry in $\ptition$ is called a part and the number of parts is denoted $l(\ptition) = l$.
\end{definition}
\subsection{Young diagrams and representations of $\SN$.}
Integer partitions are in bijection with the combinatorial objects known as Young diagrams.
\begin{definition}[Young diagram]
	A Young diagram $\YT{\ptition}$  of shape $\ptition \vdash \N$ is diagram with $l(\ptition)$ rows where each row $i$ has $\ptition_i$ boxes.
\end{definition}
\begin{example}
	The following are Young diagrams with $\N = 4$ boxes,
	\begin{equation}\ytableausetup{smalltableaux}
		\YT{[4]} = \ydiagram{4}\quad \YT{[3,1]} = \ydiagram{3,1} \quad \YT{[2,2]} = \ydiagram{2,2}.
	\end{equation}
\end{example}
The purpose of describing integer partitions in terms of boxes is to fill the boxes with numbers, or more generally elements of an ordered set.
\begin{definition}[Young tableau]
	A filling of the boxes of a Young diagram with integers taken from a subset of the positive integers is called a Young tableau.
\end{definition}
\begin{example}
	All of the below fillings are examples of Young tableaux
	\begin{equation}\ytableausetup{smalltableaux=false}
		\ytableaushort{1 2 3, 4 5} \quad  \ytableaushort{1 3 3, 4 5 5} \quad \ytableaushort{5 2 3, 1, 1 ,1}
	\end{equation}
\end{example}
The most useful Young tableaux have restrictions on their fillings. For example, standard tableaux are used to compute dimensions of irreducible representations of $\SN$.
\begin{definition}[Standard tableau]
	Let $\lambda \vdash \N$ and $\YT{\lambda}$ be the corresponding Young diagram. Consider a filling of $\YT{\lambda}$ using integers from the set $\{1, \dots \N\}$. The filling is called standard if it is increasing along every row (read left to right) and every column (read top to bottom). A Young tableau with standard filling is called standard tableau. The set of standard tableaux with shape $\lambda$ is denoted $\SYT{\lambda}$.
\end{definition}
\begin{example}
	First we give some examples of standard tableaux
	\begin{equation}
		\ytableaushort{1 2 3, 4 5}\quad \ytableaushort{1 2, 3 4} \quad \ytableaushort{1 3, 2 4} \quad \ytableaushort{1, 2, 3}
	\end{equation}
	The following tableaux are not standard
	\begin{equation}
		\ytableaushort{1 3 2, 4 5} \quad \ytableaushort{1 2, 4 3} \quad \ytableaushort{1 3, 4 2} \quad \ytableaushort{2, 1, 3}
	\end{equation}
\end{example}
The number of standard tableaux can be computed using the famous hook formula.
\begin{theorem}(Hook formula)
	Let $\lambda \vdash \N$, then
	\begin{equation}
		\abs{\SYT{\lambda}} = \frac{\N!}{\prod_{(i,j) \in Y_\lambda} \abs{h(i,j)}},
	\end{equation}
	where the product is over the cells $c$ in the Young diagram $Y_\lambda$, $h(i,j)$ is the hook of cell $(i,j)$ and $\abs{h(i,j)}$ is the number of cells in the hook.
\end{theorem}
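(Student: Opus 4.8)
The plan is to prove the hook formula by induction on $\N$, using the branching rule for standard tableaux to reduce it to a rational-function identity in the hook lengths, which I would then establish by the probabilistic ``hook walk'' argument of Greene--Nijenhuis--Wilf. Throughout write $f^\lambda := \abs{\SYT{\lambda}}$ for the number to be computed, $h_{ij} := \abs{h(i,j)}$ for the hook length of the cell $(i,j) \in Y_\lambda$, and $F^\lambda := \N! / \prod_{(i,j)\in Y_\lambda} h_{ij}$ for the claimed value; the base case $\N=1$ is trivial since both sides equal $1$.

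The inductive step starts from the observation that in a standard tableau of shape $\lambda \vdash \N$ the entry $\N$ must occupy a \emph{corner} cell (one with no cell to its right and none below it), and deleting that cell gives a bijection between $\SYT{\lambda}$ and $\bigsqcup_{c} \SYT{\lambda - c}$, the union over corners $c$ of $\lambda$. Hence $f^\lambda = \sum_c f^{\lambda-c}$, and by the inductive hypothesis $f^{\lambda - c} = F^{\lambda-c}$, so it remains to prove the combinatorial identity $\sum_c F^{\lambda-c}/F^\lambda = 1$. For a fixed corner $c = (a,b)$, removing it decreases the hook length by exactly $1$ at each cell lying above $c$ in column $b$ or to the left of $c$ in row $a$, and leaves all other hook lengths unchanged; bookkeeping these changes together with the factor $\N!/(\N-1)!$ gives
\begin{equation}
	\frac{F^{\lambda-c}}{F^\lambda} = \frac{1}{\N}\left(\prod_{i=1}^{a-1}\frac{h_{ib}}{h_{ib}-1}\right)\left(\prod_{j=1}^{b-1}\frac{h_{aj}}{h_{aj}-1}\right).
\end{equation}

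To finish I would interpret this quantity probabilistically. Choose a cell of $Y_\lambda$ uniformly at random; from any non-corner cell, jump to a uniformly random other cell of its hook; stop upon reaching a corner. One shows, by a secondary induction on the number of cells using a partial-fraction expansion of $\prod_i \frac{x_i}{x_i - 1}$ over subsets, that this walk lands at the prescribed corner $c$ with probability exactly $F^{\lambda-c}/F^\lambda$. Since the walk always halts at some corner, these probabilities sum to $1$, which is the required identity, completing the induction. (Alternatively, the same identity can be extracted from the Frobenius determinant formula $f^\lambda = \N!\,\det\!\big(1/(\lambda_i - i + j)!\big)$ by a Vandermonde manipulation, using that the hook lengths in row $i$ are precisely $\{1,\dots,\ell_i\}\setminus\{\ell_i-\ell_j : j>i\}$ with $\ell_i = \lambda_i + l(\lambda) - i$.)

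The genuinely non-routine point is the identity $\sum_c F^{\lambda-c} = F^\lambda$: the branching bijection and the tally of which hooks shrink are mechanical, but verifying that those rational expressions sum to one is where the real content sits. I expect the main effort to be the partial-fraction computation underlying the hook-walk hitting probability (equivalently, the Vandermonde identity in the determinantal route); everything else is bookkeeping.
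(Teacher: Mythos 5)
Your proposal is correct, but it follows a genuinely different route from the paper. The paper simply cites Sagan and sketches a \emph{bijective} strategy: rewrite the claim as $\N! = \abs{\SYT{\lambda}}\prod_{(i,j)\in Y_\lambda}\abs{h(i,j)}$, observe that $\N!$ counts all bijective fillings of $Y_\lambda$, and then construct a set of cardinality $\abs{\SYT{\lambda}}\prod_{(i,j)}\abs{h(i,j)}$ together with a bijection to the set of fillings (this is the Novelli--Pak--Stoyanovskii style of proof). You instead induct on $\N$ via the branching rule $f^\lambda=\sum_c f^{\lambda-c}$ and prove the resulting identity $\sum_c F^{\lambda-c}=F^\lambda$ by the Greene--Nijenhuis--Wilf hook walk; your bookkeeping of which hooks shrink is right (note you implicitly use $h_{ab}=1$ at the corner when passing to the displayed ratio), and the hitting-probability lemma you defer to is exactly the subset expansion $\prod_i\bigl(1+\tfrac{1}{x_i-1}\bigr)=\sum_{S}\prod_{i\in S}\tfrac{1}{x_i-1}$ plus an induction on the walk, which is standard. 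What each approach buys: the bijective route gives an explicit correspondence and avoids any rational-function identities, which is why the paper can summarize it in two sentences; your inductive route is self-contained modulo the hook-walk lemma and has the incidental virtue that the branching step $f^\lambda=\sum_c f^{\lambda-c}$ is precisely the restriction rule \eqref{eq: RES SN} that the paper uses heavily later (vacillating tableaux), so it meshes naturally with the surrounding material. Either way the argument is sound; only the hook-walk hitting probability would need to be written out in full to make it a complete proof rather than a plan.
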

\begin{proof}
	See \cite[Section 3.10]{Sagan2013} which also includes the history surrounding the hook formula. One method for proving this formula is to re-write the hook formula as
	\begin{equation}
	 \N ! = \abs{\SYT{\lambda}}\prod_{(i,j) \in Y_\lambda} \abs{h(i,j)},
	\end{equation}
	and realising that $\N!$ is the number of tableaux (a filling using $1,\dots,\N$ without repetition but no further restrictions) of shape $\lambda$. One then constructs a set $S$ of order $\abs{S} = \abs{\SYT{\lambda}}\prod_{(i,j) \in Y_\lambda} \abs{h(i,j)}$ and bijective maps between $S$ and the set of tableaux of shape $\lambda$.
\end{proof}

Having introduced some of the important notions used in the combinatorial representation theory of symmetric groups we will quote the following beautiful result without proof.
\begin{theorem}[Irreducible representations of $\SN$] \label{thm: SN irreps}
	There exists a set of vector spaces $\{V_{\lambda} \, \vert \, \lambda \vdash \N\}$ with $\dim V_\lambda = \abs{\SYT{\lambda}}$ that forms a complete set of non-isomorphic irreducible representations of $\SN$. That is, the vector space $V_\lambda$ has a basis labelled by standard tableaux of shape $\lambda$ and $\sn \in \SN$ acts by permuting the numbers in the filling.
\end{theorem}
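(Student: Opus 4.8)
The statement is a classical but substantial theorem in the combinatorial representation theory of symmetric groups; the plan is to give the Specht-module construction, following the line of argument in \cite{Sagan2013}, in four stages: build the modules, prove irreducibility, prove pairwise non-isomorphism, and then establish the dimension formula. For each $\lambda \vdash \N$ I would first form the permutation module $M^\lambda$ spanned by $\lambda$-tabloids, i.e.\ row-equivalence classes of bijective fillings of $\YT{\lambda}$ by $\{1,\dots,\N\}$, with $\SN$ acting by permuting entries. For a tableau $t$ of shape $\lambda$ set $\kappa_t = \sum_{q \in C_t}\mathrm{sgn}(q)\,q$, the signed sum over the column-stabiliser of $t$, and let $e_t = \kappa_t\cdot\{t\}$ be the polytabloid of $t$. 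Define $V_\lambda := S^\lambda$ to be the $\mathbb{C}(\SN)$-span of all $e_t$. Since $e_{\sn(t)} = \sn\cdot e_t$, this is a submodule of $M^\lambda$ and $\sn$ acts precisely by permuting the numbers in the filling, matching the phrasing of the theorem.

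The key technical input is the dominance lemma: if $t$ is a $\lambda$-tableau, $s$ a $\mu$-tableau, and $\kappa_t\cdot\{s\} \neq 0$, then $\mu \trianglelefteq \lambda$ in the dominance order on partitions, and when $\mu = \lambda$ one has $\kappa_t\cdot\{s\} = \pm e_t$. Taking $\mu = \lambda$ this gives $\kappa_t\cdot M^\lambda = \mathbb{C}\,e_t$, whence for any $\SN$-submodule $U \subseteq M^\lambda$ either $\kappa_t\cdot U \neq 0$, which puts a nonzero multiple of $e_t$ inside $U$ and so forces $S^\lambda \subseteq U$, or $\kappa_t\cdot U = 0$, which one checks forces $U \subseteq (S^\lambda)^\perp$ with respect to the $\SN$-invariant inner product on $M^\lambda$ in which distinct tabloids are orthonormal. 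This is the submodule theorem. Since $\langle e_t,e_t\rangle = |C_t| \neq 0$ over $\mathbb{C}$, we have $S^\lambda \not\subseteq (S^\lambda)^\perp$, so applying the dichotomy to a submodule of $S^\lambda$ leaves only $0$ and $S^\lambda$: the Specht module is irreducible. For non-isomorphism, the same lemma shows every nonzero $\SN$-homomorphism $S^\lambda \to M^\mu$ forces $\mu \trianglelefteq \lambda$; since $S^\mu \hookrightarrow M^\mu$, an isomorphism $S^\lambda \cong S^\mu$ would give both $\mu \trianglelefteq \lambda$ and $\lambda \trianglelefteq \mu$, hence $\lambda = \mu$.

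Completeness is then a counting argument: by the cycle-structure classification of conjugacy classes, the number of partitions of $\N$ equals the number of conjugacy classes of $\SN$, which equals the number of irreducible representations up to isomorphism; so the pairwise non-isomorphic modules $\{S^\lambda\}_{\lambda\vdash\N}$ already exhaust all of them. Finally I would pin down the dimension. The easy half is that the standard polytabloids $\{e_t : t \in \SYT{\lambda}\}$ are linearly independent: ordering tabloids by a dominance-type order on the rows containing $1,2,\dots$, each standard $e_t$ has $\{t\}$ as its maximal tabloid and distinct standard tableaux yield distinct maxima, so no nontrivial cancellation can occur; hence $\dim S^\lambda \geq |\SYT{\lambda}|$. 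Now combine $\sum_{\lambda\vdash\N}(\dim S^\lambda)^2 = \N!$ (from completeness and the decomposition of the regular representation) with the RSK correspondence $\sum_{\lambda\vdash\N}|\SYT{\lambda}|^2 = \N!$: two tuples of positive integers with the same sum of squares, one dominating the other entrywise, must coincide, so $\dim S^\lambda = |\SYT{\lambda}|$ and the standard polytabloids form a basis.

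The main obstacle is the dominance lemma together with the leading-tabloid bookkeeping in the dimension step; the more direct alternative of proving $\dim S^\lambda = |\SYT{\lambda}|$ via the Garnir straightening relations (exhibiting the standard polytabloids as a spanning set outright) is considerably more laborious, which is why I would route the dimension count through RSK instead.
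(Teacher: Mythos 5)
Your proof is correct and follows essentially the route the paper relies on: the paper's ``proof'' of this theorem is a citation of \cite[Theorems 2.4.6 and 2.5.2]{Sagan2013}, i.e.\ exactly the Specht-module construction, dominance lemma, submodule theorem and standard-polytabloid basis that you reconstruct (the Jucys--Murphy/matrix-unit construction in the paper's appendix is only a supplementary elaboration, not the proof). The one small deviation is your last step, where you obtain $\dim V_\lambda = \abs{\SYT{\lambda}}$ by combining linear independence of the standard polytabloids with the identities $\sum_{\lambda \vdash \N} (\dim V_\lambda)^2 = \N!$ and $\sum_{\lambda \vdash \N} \abs{\SYT{\lambda}}^2 = \N!$ (RSK) rather than by Garnir straightening as in the cited source; this shortcut is valid.
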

\begin{proof}
	See \cite[Theorem 2.4.6 and Theorem 2.5.2]{Sagan2013}. In Appendix \ref{apx: SN units} we elaborate on this theorem and outline an algorithm for constructing all irreducible representations of $\SN$. These realisations are called the Young bases and have a combinatorial description in terms of permutations acting on standard Young tableaux. The principle used in this appendix is closely related to the construction in Chapter \ref{chapter: partition algebra} that allows us to solve permutation invariant Gaussian matrix models in Chapter \ref{chapter: 0d}.
\end{proof}

\subsection{Group algebras and centers.}
Representations of $\SN$ also give rise to representations of its group algebra, which will play an important role in this thesis.
\begin{definition}[Group algebra] \label{def: group algebra}
	Let $G$ be a finite group and $\mathbb{F} = \mathbb{R}, \mathbb{C}$ be the real or complex numbers. The group algebra $\mathbb{F}(G)$ is the vector space
	\begin{equation}
		\mathbb{F}(G) = \Span_\mathbb{F}(g \in G),
	\end{equation}
	with multiplication defined through group multiplication and linearity.
\end{definition}
\begin{example}
	Let $G$ be a finite group with multiplication defined by
	\begin{equation}
		gh = C_{gh}^{g'} g',
	\end{equation}
	for $g,h,g' \in G$. For two elements $a,b \in \mathbb{F}(G)$ with expansions
	\begin{equation}
		a = \sum_{g \in G} a_g g, \quad b = \sum_{h \in G} b_h h, 
	\end{equation}
	the product $ab$ is
	\begin{equation}
		ab = \sum_{g,h \in G} a_g b_{h} gh =  \sum_{g,h,g' \in G} a_g b_{h} C_{gh}^{g'} g'.
	\end{equation}
\end{example}
The group algebra has a subalgebra that is particularly important in representation theory.
\begin{definition}[Center of $\mathbb{F}(G)$] \label{def: center}
	The center $\mathcal{Z}[\mathbb{F}(G)]$ of a group algebra $\mathbb{F}(G)$ is the set of elements $z \in \mathbb{F}(G)$ such that
	\begin{equation}
		z g = gz, \quad \forall g \in G.
	\end{equation}
\end{definition}
The center of a group algebra has two canonical bases.
\begin{proposition} \label{prop: cc basis of ZFG}
	Let $\Cl{G}$ be the set of conjugacy classes of $G$. The set of elements
	\begin{equation}
		\{z_C = \sum_{g \in C} g \, \vert \, \forall C \in \Cl{G}\}
	\end{equation}
	form a basis for the center.
\end{proposition}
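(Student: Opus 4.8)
The plan is to verify the two things required of a basis: that each class sum $z_C$ actually lies in the center $\Z[\mathbb{F}(G)]$, and that the family $\{z_C : C \in \Cl{G}\}$ is a linearly independent spanning set for it. Both parts are short.

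First I would check $z_C \in \Z[\mathbb{F}(G)]$. By bilinearity of the product on $\mathbb{F}(G)$, it is enough to show $g z_C = z_C g$ for every $g \in G$, equivalently $g^{-1} z_C g = z_C$. Expanding, $g^{-1} z_C g = \sum_{h \in C} g^{-1} h g$; since conjugation by $g$ is a bijection of $G$ carrying the conjugacy class $C$ onto itself, the summands $\{g^{-1}hg : h \in C\}$ are exactly the elements of $C$ again, so the sum is unchanged. Hence $z_C$ commutes with every group element and therefore, by linearity, with all of $\mathbb{F}(G)$.

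Next I would prove spanning. Let $z = \sum_{g \in G} a_g g$ be an arbitrary central element. Centrality gives $h^{-1} z h = z$ for all $h \in G$. Substituting $g \mapsto h g h^{-1}$ (a bijection of $G$) and relabelling, the left-hand side becomes $\sum_{g \in G} a_{h g h^{-1}}\, g$; comparing coefficients with $z = \sum_{g \in G} a_g g$ forces $a_{h g h^{-1}} = a_g$ for all $g,h$. Thus the coefficient function is constant on each conjugacy class, say with common value $a_C$ on $C$, and $z = \sum_{C \in \Cl{G}} a_C\, z_C$. Linear independence is then immediate: distinct conjugacy classes are disjoint, so the $z_C$ are sums over pairwise disjoint subsets of the distinguished basis $\{g : g \in G\}$ of $\mathbb{F}(G)$, whence no nontrivial linear combination vanishes.

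There is no real obstacle; the only point needing a moment's care is the bookkeeping in the re-indexing step, where one must take the substitution to be $g \mapsto hgh^{-1}$ rather than its inverse, which with the paper's left-to-right composition convention is a one-line check. Equivalently, the whole argument can be packaged as the remark that $\Z[\mathbb{F}(G)]$ is precisely the fixed subspace of the conjugation action of $G$ on $\mathbb{F}(G)$, whose orbits are the conjugacy classes, so that summing over orbits yields exactly the class sums $z_C$.
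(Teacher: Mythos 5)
Your proof is correct and follows essentially the same route as the paper: conjugation permutes each class so $z_C$ is central, and centrality of an arbitrary element forces its coefficients to be constant on conjugacy classes, giving spanning. The only difference is that you also spell out linear independence via the disjoint supports of the $z_C$, which the paper leaves implicit.
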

\begin{proof}
	First we prove that any element in the center can be expanded in terms of $z_C$.
	Suppose $z \in \mathcal{Z}[\mathbb{F}(G)]$ has expansion
	\begin{equation}
		z = \sum_{h \in G} a_h h.
	\end{equation}
	For $z$ to be a central element it has to satisfy
	\begin{equation}
		g^{-1}zg = z \Rightarrow a_{g^{-1}hg} = a_h.
	\end{equation}
	In other words, the coefficients of elements in the same conjugacy class are equal.
	For $h \in C$ we write $a_C = a_h$, then we can rewrite $z$ as
	\begin{equation}
		z = \sum_{C \in \Cl{G}} \sum_{h \in C} a_h h = \sum_{C \in \Cl{G}} a_C z_C.
	\end{equation}
	Lastly, we note that $g^{-1} z_C g = z_C$. Therefore, any linear combination of $z_C$ is in the center.
\end{proof}
\begin{remark}
	This shows that $\dim \mathcal{Z}[\mathbb{F}(G)] = \abs{\Cl{G}}$.
\end{remark}
The second canonical basis is labelled by irreducible representations of $G$.
\begin{proposition} \label{prop: rep basis of ZFG}
	Let $\Rep{G}$ be a labelling set for the set of non-isomorphic irreducible representations of $G$ and for $R \in \Rep{G}$, let $\chr^R$ the corresponding irreducible character. The set
	\begin{equation}
		\{p^{}_R = \frac{1}{|G|} \sum_{g \in G} \chr^R(g^{-1})g \, \vert \, \forall R \in \Rep{G}\}
	\end{equation} 
	form a basis for $\mathcal{Z}[\mathbb{F}(G)]$.
\end{proposition}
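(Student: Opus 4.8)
The plan is to show that $\{p_R\}_{R \in \Rep{G}}$ spans $\mathcal{Z}[\mathbb{F}(G)]$ and is linearly independent, hence a basis. The dimension count is already in hand: by the Remark following Proposition \ref{prop: cc basis of ZFG}, $\dim \mathcal{Z}[\mathbb{F}(G)] = \abs{\Cl{G}}$, and the number of non-isomorphic irreducible representations equals the number of conjugacy classes, so $\abs{\Rep{G}} = \abs{\Cl{G}} = \dim \mathcal{Z}[\mathbb{F}(G)]$. Therefore it suffices to prove that each $p_R$ lies in the center and that the collection is linearly independent — then a family of the right size that is linearly independent is automatically a basis.

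First I would verify $p_R \in \mathcal{Z}[\mathbb{F}(G)]$. Using $\chr^R$ is a class function, one computes for any $h \in G$ that $h^{-1} p_R h = \frac{1}{|G|}\sum_{g} \chr^R(g^{-1}) h^{-1} g h = \frac{1}{|G|}\sum_{g'} \chr^R(h g'^{-1} h^{-1}) g' = \frac{1}{|G|}\sum_{g'} \chr^R(g'^{-1}) g' = p_R$, after the substitution $g' = h^{-1} g h$ and invariance of $\chr^R$ under conjugation. So each $p_R$ is central.

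For linear independence, the cleanest route is via the orthogonality relations for irreducible characters: $\frac{1}{|G|}\sum_{g \in G} \chr^R(g^{-1})\chr^S(g) = \delta_{RS}$. Expanding $p_R$ in the conjugacy-class basis $\{z_C\}$, the coefficient of $z_C$ in $p_R$ is $\frac{1}{|G|}\chr^R(g_C^{-1})$ for a chosen representative $g_C \in C$. Stack these coefficients into a matrix $M$ with rows indexed by $R \in \Rep{G}$ and columns by $C \in \Cl{G}$; the orthogonality relation (after absorbing the class sizes $|C|$ appropriately) says exactly that the rows of $M$ are orthogonal with respect to the weighted inner product $\langle u, v\rangle = \sum_C |C|\, u_C \overline{v_C}$ — equivalently $M$ times a diagonal matrix times $M^\dagger$ is the identity — hence $M$ has full rank $\abs{\Rep{G}}$, so the $p_R$ are linearly independent. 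Alternatively one can skip independence entirely and instead show directly that each $z_C$ is a linear combination of the $p_R$ by inverting the character table, using the column orthogonality relation $\sum_R \chr^R(g)\overline{\chr^R(h)} = \delta_{[g],[h]}\,|G|/|C|$; that gives an explicit expansion of $z_C$ in terms of $p_R$ and immediately yields spanning.

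The only genuine obstacle is that the proof relies on the orthogonality relations for irreducible characters, which have not been stated in the excerpt; I would cite these from a standard reference (e.g.\ \cite{Sagan2013}) as the one external input. Everything else — the centrality check and the rank argument — is routine linear algebra once orthogonality is granted. As a further (optional) remark, one can note that the $p_R$ are in fact orthogonal idempotents, $p_R p_S = \delta_{RS} p_R$, and $\sum_R p_R = 1$; this identifies them as the central primitive idempotents of $\mathbb{F}(G)$ and reproves the basis claim, but it is not needed for the statement as phrased.
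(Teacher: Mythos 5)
Your proposal is correct and follows essentially the same route as the paper: character orthogonality to get linear independence of the $p_R$, then the dimension count $\abs{\Rep{G}} = \abs{\Cl{G}} = \dim \mathcal{Z}[\mathbb{F}(G)]$ to conclude spanning. The only differences are cosmetic — you establish independence via the rank of the coefficient matrix in the $z_C$ basis (equivalently, invertibility of the character table) where the paper pairs the $p_R$ directly with the bilinear form $(g,h)=\delta_{g,h^{-1}}$, and you add an explicit check that each $p_R$ is central, a step the paper leaves implicit but which is genuinely needed for the conclusion, so its inclusion is a small improvement rather than a detour.
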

\begin{proof}
	First we show that the $p^{}_R$ are linearly independent by showing that they are orthogonal with respect to the inner product
	\begin{equation}
		(g,h) = \begin{cases}
			1, \qq{if $g=h^{-1}$}\\
			0, \qq{otherwise.}
		\end{cases}
	\end{equation}
	We have
	\begin{align}
		(p^{}_{R_1}, p^{}_{R_2}) &= \frac{1}{|G|^2}\sum_{g,h \in G} \chr^{R_1}(g^{-1})\chr^{R_2}(h^{-1}) (g,h) \\
		&=\frac{1}{|G|^2}\sum_{h \in G} \chr^{R_1}(h)\chr^{R_2}(h^{-1})= \frac{\delta^{R_1 R_2}}{|G|},
	\end{align}
	where the last step uses orthogonality of characters.
	From the equalities $\dim \mathcal{Z}[\mathbb{F}(G)]  = \abs{\Cl{G}} = \abs{\Rep{G}}$, it follows that the $p^{}_{R}$ form a spanning set.
\end{proof}

The following corollary highlights the importance of the centre to the study of representations.
\begin{corollary}
	Let $R \in \Rep{G}$ and $D^R(g)$ the corresponding matrix for $g \in G$. Schur's lemma implies that the irreducible representation of a central element $z \in \mathcal{Z}(\mathbb{F}(G))$ is proportional to the identity matrix
	\begin{equation}
		D^R(z) = \frac{\chi^R(z)}{\chi^R(1)}\idn.
	\end{equation}
\end{corollary}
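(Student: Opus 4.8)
The plan is to reduce everything to Schur's lemma applied to the extension of $D^R$ to the group algebra. First I would observe that $D^R$ extends by linearity to an algebra homomorphism $D^R \colon \mathbb{F}(G) \to \End(V_R)$, so that $D^R(z)$ is well defined and $D^R(ab) = D^R(a)D^R(b)$ for all $a,b \in \mathbb{F}(G)$. Since $z \in \mathcal{Z}(\mathbb{F}(G))$ we have $zg = gz$ for every $g \in G$, hence $D^R(z)D^R(g) = D^R(g)D^R(z)$; as the $D^R(g)$ span $D^R(\mathbb{F}(G))$, the operator $D^R(z)$ commutes with the entire image of the representation.

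Next, because $R$ is irreducible, Schur's lemma (over $\mathbb{C}$, or more generally for an absolutely irreducible representation, which is the relevant case for $\SN$ since all of its irreducibles are defined over $\mathbb{Q}$) forces any self-intertwiner of $V_R$ to be a scalar multiple of the identity. Therefore $D^R(z) = \mu\,\idn$ for some $\mu \in \mathbb{F}$.

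Finally, to determine $\mu$ I would take traces: on one hand $\operatorname{Tr} D^R(z) = \mu \operatorname{Tr}\idn = \mu \dim V_R = \mu\,\chi^R(1)$, and on the other hand, extending the character linearly, $\operatorname{Tr} D^R(z) = \chi^R(z)$ by definition. Comparing the two gives $\mu = \chi^R(z)/\chi^R(1)$, which is exactly the claimed formula. The argument is entirely routine; the only point that deserves a word of care is the invocation of Schur's lemma when $\mathbb{F} = \mathbb{R}$, where one should note that for $\SN$ (and, more generally, whenever the irreducibles are absolutely irreducible) self-intertwiners are still scalars, so the statement holds as written.
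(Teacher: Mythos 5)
Your proof is correct and follows exactly the route the paper intends: the paper states this corollary without a separate proof, simply invoking Schur's lemma, and your argument (extend $D^R$ linearly to $\mathbb{F}(G)$, note $D^R(z)$ commutes with the image, apply Schur's lemma, fix the scalar by taking traces) is the standard fleshing-out of that invocation. Your remark about absolute irreducibility when $\mathbb{F}=\mathbb{R}$ is a sensible extra precaution but not a departure from the paper's approach.
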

This will be used many times in the following chapters.

\section{Defining representation}
The symmetric group $\SN$ is faithfully represented by the set of $\N \times \N$ permutation matrices. This representation is called the defining representation of $\SN$ and is given by the following.
\begin{definition}[Defining representation of $\SN$]
	Let $\VN$ be a $\N$-dimensional vector space with basis $\{e_1, \dots, e_\N\}$. The defining representation of $\SN$ associates to every $\sn \in \SN$ a linear map $\Paction{\sn} \in \End(\VN)$ defined by
	\begin{equation}
		\Paction{\sn}e_i = e_{(i)\sn^{-1}}. \label{eq: perm action}
	\end{equation}
\end{definition}

As we now prove, the definition in equation \ref{eq: perm action} is a homomorphism. Let $\sn_1, \sn_2 \in \SN$ and consider
\begin{equation}
	\Paction{\sn_1}\Paction{\sn_2}e_i = \Paction{\sn_1}e_{(i)\sn_2^{-1}} = e_{(i)\sn_2^{-1}\sn_1^{-1}} = e_{(i)[\sn_1 \sn_2]^{-1}} = \Paction{\sn_1 \sn_2}e_i.
\end{equation}
We will use the common abuse of language of referring to $\VN$ as the defining representation, with the homomorphism $\Paction{\sn}$ implicitly included.
The matrices corresponding to the linear operators $\Paction{\sn}$ defined in equation \ref{eq: perm action} are permutation matrices. To see this, we use equation \ref{eq: perm action}, which gives
\begin{equation}
	\Pmat{\sn}^j_i e_j = e_{(i)\sn^{-1}}.
\end{equation}
or
\begin{equation}
	\Pmat{\sn}^j_i = \delta^j_{(i)\sn^{-1}} = \delta^{(j)\sn}_{ i}. \label{eq: VN as perm matrix}
\end{equation}
where the last equality follows because $j = (i)\sn^{-1}$ implies that $(j)\sn = i$. Notably, this is a permutation of the rows of the identity matrix. That is, $\Pmat{\sn}^j_{i}$ is a permutation matrix.

\begin{example}
	Consider $\sn_1 = (123), \sn_2 = (12)(3)$ as in Example \ref{ex: S3 grp law}. Using equation \ref{eq: VN as perm matrix} we have
	\begin{equation}
		\begin{aligned}
			&\Pmat{\sn_1}^1_{i} = \delta^2_{i}, 	\Pmat{\sn_1}^2_{i} = \delta^3_{i}, 	\Pmat{\sn_1}^3_{i} = \delta^1_{i} \\
			&\Pmat{\sn_2}^1_{i} = \delta^2_{i}, 	\Pmat{\sn_2}^2_{i} = \delta^1_{i}, 	\Pmat{\sn_2}^3_{i} = \delta^3_{i},
		\end{aligned}
	\end{equation}
	or
	\begin{equation}
		\begin{aligned}
			&\Pmat{\sn_1}^j_i = \mqty(0 & 1 & 0 \\ 0 & 0 & 1 \\ 1 & 0 & 0),\\
			&\Pmat{\sn_2}^j_i = \mqty(0 & 1 & 0 \\ 1 & 0 & 0 \\ 0 & 0 & 1).
		\end{aligned}
	\end{equation}
\end{example}

\subsection{Decomposition of defining representation.}
We will now study the defining representation in some detail. In particular, we will prove that it is a reducible representation, give its decomposition into irreducible representations and give an explicit basis for each irreducible subspace.
\begin{proposition}
	The defining representation of $\SN$ is a reducible representation. It decomposes into a one-dimensional representation $V_{[\N]}$ and a $(\N{-}1)$-dimensional irreducible representation $V_{[\N{-}1,1]}$.
\end{proposition}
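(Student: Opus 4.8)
The plan is to exhibit an explicit $\SN$-invariant subspace decomposition $\VN = U_0 \oplus U_1$, show $U_0$ is the trivial representation, and show $U_1$ is irreducible of dimension $\N-1$. First I would define $U_0 = \Span_{\mathbb{C}}(v_0)$ where $v_0 = \sum_{i=1}^{\N} e_i$. Using \eqref{eq: perm action}, $\Paction{\sn} v_0 = \sum_i e_{(i)\sn^{-1}} = \sum_j e_j = v_0$ since $\sn^{-1}$ is a bijection on $\{1,\dots,\N\}$; hence $U_0$ is a one-dimensional subrepresentation on which every $\Paction{\sn}$ acts as the identity, i.e. $U_0 \cong V_{[\N]}$. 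Next I would define $U_1 = \{\sum_i a_i e_i : \sum_i a_i = 0\}$, the "zero-sum" hyperplane, and note it is $\SN$-invariant because permuting coordinates preserves the sum of coefficients; clearly $\dim U_1 = \N-1$ and $\VN = U_0 \oplus U_1$ (for $\N \geq 2$, since $v_0 \notin U_1$).

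The substantive step is proving $U_1$ is irreducible. I would do this by a character / inner-product computation: compute the character $\chi^{\VN}$ of the defining representation, $\chi^{\VN}(\sn) = \#\{i : (i)\sn = i\} = \fixp{\sn}$, the number of fixed points of $\sn$. Then $\chi^{U_1} = \chi^{\VN} - \chi^{U_0} = \fixp{\sn} - 1$. To show $U_1$ is irreducible it suffices (by the standard criterion) to check $\langle \chi^{U_1}, \chi^{U_1}\rangle = \tfrac{1}{\N!}\sum_{\sn \in \SN}(\fixp{\sn}-1)^2 = 1$. Expanding, this requires $\tfrac{1}{\N!}\sum_{\sn} \fixp{\sn}^2 = 2$ and $\tfrac{1}{\N!}\sum_{\sn}\fixp{\sn} = 1$; both are classical Burnside-type facts (the first counts ordered pairs with a common fixed point and gives the number of orbits of $\SN$ on $\{1,\dots,\N\}^2$, which is $2$: the diagonal and its complement; the second is the number of orbits on $\{1,\dots,\N\}$, which is $1$). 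Alternatively, I could argue directly: any nonzero invariant subspace $W \subseteq U_1$ contains some $w = \sum a_i e_i$ with not all $a_i$ equal; applying the transposition $(jk)$ and subtracting gives $(a_j - a_k)(e_k - e_j) \in W$ with $a_j \neq a_k$ for some $j,k$, so $e_k - e_j \in W$, and then acting by $\SN$ produces all differences $e_a - e_b$, which span $U_1$; hence $W = U_1$.

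Finally I would reconcile the labels: the trivial representation of $\SN$ is $V_{[\N]}$ and the standard (hook) representation is $V_{[\N-1,1]}$, which has dimension $\abs{\SYT{[\N-1,1]}} = \N-1$ by the hook formula (Theorem on the hook formula above) and which by Theorem \ref{thm: SN irreps} is the unique irreducible of that dimension arising here; so $U_1 \cong V_{[\N-1,1]}$. The main obstacle is the irreducibility of $U_1$, but since it reduces either to a one-line character-norm computation built from two standard counting identities, or to a short direct span argument, it is not a serious one; the rest is bookkeeping with \eqref{eq: perm action}. I would probably present the direct span argument since it is self-contained and also sets up the "explicit basis for each irreducible subspace" promised in the surrounding text, namely $\{v_0\}$ for $V_{[\N]}$ and $\{e_i - e_{i+1} : 1 \leq i \leq \N-1\}$ (or $\{e_i - e_\N\}$) for $V_{[\N-1,1]}$.
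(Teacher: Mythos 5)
Your proposal is correct. The decomposition itself is the same as the paper's in substance: your $U_0=\Span(v_0)$ and zero-sum hyperplane $U_1$ are exactly the image and kernel of the averaging map $\phi: e_i \mapsto e_1+\dots+e_\N$ that the paper uses to define $V_{[\N]}$ and $V_{[\N{-}1,1]}$, so that part is only cosmetically different. For the key step, irreducibility of the $(\N{-}1)$-dimensional piece, your first option (compute $\chi^{U_1}=\fixp{\sn}-1$, expand $\tfrac{1}{\N!}\sum_{\sn}(\fixp{\sn}-1)^2$, and evaluate the two sums by Burnside's lemma as the number of orbits on $\{1,\dots,\N\}$ and on $\{1,\dots,\N\}^{\times 2}$, namely $1$ and $2$) is precisely the paper's proof. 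Your preferred alternative — take any nonzero invariant $W\subseteq U_1$, hit a vector with two unequal coordinates by a transposition and subtract to get $(a_j-a_k)(e_k-e_j)\in W$, then use transitivity of $\SN$ on pairs to span all differences $e_a-e_b$ — is a genuinely different and more elementary route: it avoids character theory and orthogonality entirely, is self-contained, and produces explicit generators of the irreducible subspace, which fits nicely with the explicit basis given in Proposition \ref{prop: VN decomp}. What the paper's character/Burnside route buys instead is an early rehearsal of fixed-point and orbit-counting arguments that recur later (e.g. in the counting of invariants via $\Tr(P_{[\N]}P_{[k]})$), so the choice is one of exposition rather than correctness. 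One minor caveat: identifying $U_1$ with the label $[\N{-}1,1]$ "by dimension" is slightly loose as stated, but this is pure bookkeeping — the paper itself introduces $V_{[\N{-}1,1]}$ simply as a name for $\ker\phi$ — so it is not a gap for this statement.
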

\begin{proof}
	Define the homomorphism $\phi: e_i \mapsto e_1 + e_2 + \dots + e_\N$ and
	\begin{equation}
		V_{[\N]} = \im \phi = \Span(e_1 + e_2 + \dots + e_{\N}), \quad V_{[\N{-}1,1]} = \ker \phi.
	\end{equation}
	Because $\phi$ is a homomorphism, the isomorphism theorem says that the above subspaces are representations of $\SN$, and we have the following decomposition of $\VN$,
	\begin{equation}
		\VN = V_{[\N]} \oplus V_{[\N{-}1,1]}. \label{eq: VN decomp}
	\end{equation}
	The representation $V_{[\N]}$ forms an invariant one-dimensional subspace of $\VN$ and is therefore irreducible. We will now prove that $V_{[\N{-}1,1]}$ is irreducible as well.
	
	Let $\chr$ be the character of $\VN$
	\begin{equation}
		\chr(\sn) =  \Pmat{\sn}^i_i.
	\end{equation}
	It is equal to the number of fixed points of $\sn$, which we call $\fixp{\sn}$
	\begin{equation}
		\chr(\sn)  = \fixp{\sn}. \label{eq: VN character}
	\end{equation}
	
	Let $\chr^{}_{[\N]}=1, \chr^{}_{[\N{-}1,1]}$ be characters of $V_{[\N]}, V_{[\N{-}1,1]}$ respectively.
	By equation \ref{eq: VN decomp} we have
	\begin{equation}
		\chr^{}_{[\N{-}1,1]} = \chr - \chr^{}_{[\N]} = \chr - 1. \label{eq: hook character}
	\end{equation}
	
	Character orthogonality implies that
	\begin{equation}
		\frac{1}{\abs{\SN}} \sum_{\sn \in \SN} \chr^{}_{[\N{-}1,1]}(\sn) \chr^{}_{[\N{-}1,1]}(\sn^{-1})  = 1
	\end{equation}
	if and only if $V_{[\N{-}1,1]}$ is irreducible. Substituting equation \ref{eq: hook character} into the above equation gives
	\begin{equation}
		\frac{1}{\abs{\SN}} \sum_{\sn \in \SN} \qty(\chr(\sn)^2 - 2 \chr(\sn) + 1) = 1.
	\end{equation}
	To prove this, we use Burnside's lemma. First, consider
	\begin{equation}
		\frac{1}{\abs{\SN}} \sum_{\sn \in \SN} \chr(\sn) = \frac{1}{\abs{\SN}} \sum_{\sn \in \SN} \fixp{\sn}.
	\end{equation}
	Burnside's lemma say that
	\begin{equation}
		\frac{1}{\abs{\SN}} \sum_{\sn \in \SN} \fixp{\sn} = \text{\# Orbits of $\SN$ acting on $\{1,\dots, \N\}$} = 1, \label{eq: orbits of X}
	\end{equation}
	where the last equality follows since there exists at least one $\sn \in \SN$ such that $(i)\sn = j$ for any pair $(i,j)$. To evaluate
	\begin{equation}
		\frac{1}{\abs{\SN}} \sum_{\sn \in \SN} \chr(\sn)^2 = \frac{1}{\abs{\SN}} \sum_{\sn \in \SN} \fixp{\sn}^2
	\end{equation}
	note that the number of fixed points of $(\sn_1, \sn_2) \in \SN \times \SN$ acting on $(i,j) \in \{1,\dots,\N\} \times \{1,\dots,\N\}$ as
	\begin{equation}
		(i,j) \mapsto ((i)\sn_1, (j)\sn_2)
	\end{equation}
	is $\fixp{\sn_1}\fixp{\sn_2}$. The relevant special case is $\sn_1 = \sn_2 = \sn$. The average number of fixed points of this diagonal subgroup is
	\begin{equation}
		\frac{1}{\abs{\SN}} \sum_{\sn \in \SN} \fixp{\sn}^2 = \text{\# Orbits of $\SN$ acting on $\{1,\dots,\N\}^{\times 2}$} = 2, \label{eq: orbits of X2}
	\end{equation}
	where the last equality follows because elements of the form $(i,i)$ and $(i,j)$ with $i\neq j$ form distinct orbits. Equation \ref{eq: orbits of X} and \ref{eq: orbits of X2} proves
	\begin{equation}
		\frac{1}{\abs{\SN}} \sum_{\sn \in \SN} \qty(\chr(\sn)^2 - 2 \chr(\sn) + 1) = 2-2+1 = 1,
	\end{equation}
	and therefore $V_{[\N{-}1,1]}$ is irreducible.
\end{proof}

\begin{proposition} \label{prop: VN decomp}
	The vectors
	\begin{align}
		E^{[\N]} &= \frac{e_1 + e_2 + \dots + e_\N}{\sqrt{\N}} \\
		E^{[\N-1,1]}_1&= \frac{e_1 - e_2}{\sqrt{2}} \\
		&\,\, \vdots \\
		E^{[\N-1,1]}_a &= \frac{e_1 + e_2 + \dots + e_a - a e_{a+1}}{\sqrt{a(a+1)}} \\
		&\, \, \vdots \\
		E^{[\N-1,1]}_{\N{-}1} &= \frac{e_1 + e_2 + \dots + e_{\N{-}1} - (\N{-}1) e_{\N} }{\sqrt{\N(\N+1)}}
	\end{align}
	form an orthonormal basis for $V_{[\N]} \oplus V_{[\N{-}1,1]}$ with respect to the $\SN$-invariant inner product $\VNinner{e_i}{e_j} = \delta_{ij}$.
\end{proposition}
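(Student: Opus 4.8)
The plan is to verify three things, in order --- membership in the correct summand, orthonormality, and a dimension count --- and then read off the conclusion.

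\textbf{Step 1 (membership).} The vector $E^{[\N]}$ is a scalar multiple of $e_1 + e_2 + \dots + e_\N$, hence lies in $V_{[\N]} = \im\phi$, where $\phi$ is the homomorphism with $\phi(e_i) = e_1 + e_2 + \dots + e_\N$ used in the proof of the previous proposition. For each $a \in \{1, \dots, \N-1\}$ the coordinates of $E^{[\N-1,1]}_a$ sum to $a \cdot 1 + (-a) = 0$, so $\phi(E^{[\N-1,1]}_a) = 0$ and therefore $E^{[\N-1,1]}_a \in V_{[\N-1,1]} = \ker\phi$. This step uses only the previous proposition.

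\textbf{Step 2 (orthonormality).} All inner products are computed from $(e_i, e_j) = \delta_{ij}$. For the norms: $\|E^{[\N]}\|^2 = \N/\N = 1$, and for a general $a$, $\|E^{[\N-1,1]}_a\|^2 = (a + a^2)/(a(a+1)) = 1$; this is exactly what fixes the normalizing denominators $\sqrt{a(a+1)}$ (so, for consistency, the last one should read $\sqrt{(\N-1)\N}$). Orthogonality $(E^{[\N]}, E^{[\N-1,1]}_a) = 0$ is immediate from the vanishing coordinate sum of $E^{[\N-1,1]}_a$ --- equivalently, $V_{[\N]}$ and $V_{[\N-1,1]}$ are non-isomorphic irreducible summands, hence orthogonal for the $\SN$-invariant inner product. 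For $1 \le a < b \le \N-1$ the overlap $(E^{[\N-1,1]}_a, E^{[\N-1,1]}_b)$ is proportional to $\sum_{i=1}^{a} 1\cdot 1 + (-a)\cdot 1 = a - a = 0$: the slots $1, \dots, a$ contribute $a$, while the entry $-a$ in slot $a+1$ of $E^{[\N-1,1]}_a$ pairs against the coordinate $+1$ that $E^{[\N-1,1]}_b$ still carries in slot $a+1$, since $a+1 \le b$.

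\textbf{Step 3 (counting) and conclusion.} An orthonormal family is linearly independent, and the list contains $1 + (\N-1) = \N = \dim\VN = \dim(V_{[\N]} \oplus V_{[\N-1,1]})$ vectors, so it is a basis of $V_{[\N]} \oplus V_{[\N-1,1]}$. (One can also argue summand by summand: $E^{[\N]}$ alone spans the one-dimensional $V_{[\N]}$, and the $\N-1$ independent vectors $E^{[\N-1,1]}_a$ span $V_{[\N-1,1]}$, which has dimension $\N-1$ by the previous proposition.) There is no genuine obstacle here --- every step is a short computation --- and the only point that needs a little care is the index bookkeeping in the cross term $(E^{[\N-1,1]}_a, E^{[\N-1,1]}_b)$, namely keeping track of which coordinate slot carries the negative entry of each vector and confirming it lies in the support of the other.
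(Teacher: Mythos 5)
Your proof is correct and follows essentially the same route as the paper's, whose proof consists of exactly the direct inner-product computations you perform in Step 2; your additional membership check (via the coordinate-sum criterion for $\ker\phi$) and the explicit dimension count simply make precise what the paper leaves implicit. You are also right about the normalization: the last vector should be divided by $\sqrt{(\N-1)\N}$, consistent with the general formula $\sqrt{a(a+1)}$ at $a=\N-1$, so the $\sqrt{\N(\N+1)}$ in the statement is a typo.
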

\begin{proof}
	Let $a,b \in \{1,\dots,\N{-}1\}$ and assume $a < b$. Then
	\begin{equation}
		\VNinner{E^{[\N-1,1]}_a}{E^{[\N-1,1]}_b} = \frac{\sum_{i = 1}^{a} \VNinner{e_i}{e_i} - a}{\sqrt{a(a+1)b(b+1)}} = \frac{a - a}{\sqrt{a(a+1)b(b+1)}} = 0,
	\end{equation}
	and by symmetry of the inner product $\VNinner{E_a}{E_b} = 0$ for $b > a$ as well. For $a=b$ we have
	\begin{equation}
		\VNinner{E^{[\N-1,1]}_a}{E^{[\N-1,1]}_a} = \frac{\sum_{i = 1}^{a} \VNinner{e_i}{e_i} + a^2}{a(a+1)} = \frac{a + a^2}{a(a+1)} = 1.
	\end{equation}
	Therefore,
	\begin{equation}
		\VNinner{E^{[\N-1,1]}_a}{E^{[\N-1,1]}_b} = \delta_{ab}.
	\end{equation}
	Similar computations give
	\begin{equation}
		\VNinner{E^{[\N]}}{E^{[\N]}} = 1, \quad \VNinner{E^{[\N]}}{E^{[\N-1,1]}_a} = 0.
	\end{equation}
\end{proof}

\section{Tensor powers of defining representation}
In the previous section we decomposed $\VN$ into irreducible representations. For applications to matrix models, we want to consider arbitrary tensor powers
\begin{equation}
	\VN^{\otimes k} \cong \underbrace{\VN \otimes \dots \otimes \VN}_{k}.
\end{equation}
This is a representation of $\SN$ with vector space
\begin{equation}
	\VN^{\otimes k} = \Span(e_{i_1} \otimes e_{i_2} \otimes \dots \otimes e_{i_k} \, \vert \, i_1,\dots,i_k =1,\dots,\N),
\end{equation}
where $\SN$ acts diagonally through permutation matrices
\begin{equation}
	\Paction{\sn}(e_{i_1} \otimes e_{i_2} \otimes \dots \otimes e_{i_k}) = \Paction{\sn}e_{i_1} \otimes \Paction{\sn}e_{i_2} \otimes \dots \otimes \Paction{\sn} e_{i_k}.
\end{equation}

The following special case is of central importance to the matrix models studied in Chapter \ref{chapter: 0d}.
\begin{proposition} \label{cor: VNVN decomp} 
	Assume $\N \geq 4$, then the decomposition of $\VN \otimes \VN$ into irreducible representations is given by
	\begin{equation}
		\VN \otimes \VN \cong 2 V_{[\N]} \oplus 3 V_{[\N{-}1,1]} \oplus V_{[\N{-}2,2]} \oplus V_{[\N{-}2,1,1]}, \label{eq: VNVN decomp}
	\end{equation}
	where $V_{[\N{-}2,2]}, V_{[\N{-}2,1,1]}$ are two irreducible representations with dimensions $\N(\N-3)/2$ and $(\N-1)(\N-2)/2$ respectively.
\end{proposition}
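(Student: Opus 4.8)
The plan is to read the decomposition off the character of $\VN\otimes\VN$, using the orbit-counting (Burnside) technique already exploited above together with the known decomposition $\VN\cong V_{[\N]}\oplus V_{[\N{-}1,1]}$. Since the character of $\VN$ is $\fixp{\sn}$, the character of $\VN\otimes\VN$ is $\fixp{\sn}^2$; equivalently $\VN\otimes\VN$ is the permutation module $\mathbb{C}[\{1,\dots,\N\}^2]$. The $\SN$-orbits on $\{1,\dots,\N\}^2$ are the diagonal $\{(i,i)\}$ and the off-diagonal $\{(i,j):i\neq j\}$, so as $\SN$-modules
\begin{equation}
	\VN\otimes\VN\;\cong\;\VN\;\oplus\;M,\qquad M:=\mathbb{C}[\{(i,j):i\neq j\}].
\end{equation}
The first summand contributes $V_{[\N]}\oplus V_{[\N{-}1,1]}$ by the decomposition of the defining representation established above, so the whole problem reduces to decomposing $M$.

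For $M$, the stabiliser of the basis vector labelled $(1,2)$ is the subgroup $S_{\N-2}$ fixing $1$ and $2$, hence $M\cong\Ind{S_{\N-2}}{\SN}{\mathbf 1}$. By Frobenius reciprocity the multiplicity of $V_\mu$ in $M$ equals the multiplicity of the trivial representation in $\Res{\SN}{S_{\N-2}}{V_\mu}$, which by two applications of the branching rule for $\SN\supset S_{\N-1}$ is the number of chains $\mu\supset\mu^{(1)}\supset[\N{-}2]$ obtained by removing one box at a time. Such a chain exists exactly when $\mu$ has at most two boxes outside its first row, i.e.\ for $\mu\in\{[\N],[\N{-}1,1],[\N{-}2,2],[\N{-}2,1,1]\}$, with one chain in each case except $[\N{-}1,1]$, which admits two. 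Hence $M\cong V_{[\N]}\oplus 2\,V_{[\N{-}1,1]}\oplus V_{[\N{-}2,2]}\oplus V_{[\N{-}2,1,1]}$, and combining with the diagonal piece gives the claimed decomposition $\VN\otimes\VN\cong 2V_{[\N]}\oplus 3V_{[\N{-}1,1]}\oplus V_{[\N{-}2,2]}\oplus V_{[\N{-}2,1,1]}$.

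It remains only to record the two new dimensions. By Theorem~\ref{thm: SN irreps} these equal $\abs{\SYT{[\N{-}2,2]}}$ and $\abs{\SYT{[\N{-}2,1,1]}}$, which the hook length formula evaluates to $\N(\N{-}3)/2$ and $(\N{-}1)(\N{-}2)/2$ respectively; as a check, $2\cdot 1+3(\N{-}1)+\N(\N{-}3)/2+(\N{-}1)(\N{-}2)/2=\N^2=\dim(\VN\otimes\VN)$. The hypothesis $\N\geq4$ is used exactly where $[\N{-}2,2]$ and $[\N{-}2,1,1]$ become genuine partitions. As an internal cross-check one can stay entirely inside the character formalism: obtain $m_{[\N]}$ and $m_{[\N{-}1,1]}$ from the orbit counts $\tfrac1{\N!}\sum_{\sn}\fixp{\sn}^2=2$ and $\tfrac1{\N!}\sum_{\sn}\fixp{\sn}^3=5$, and confirm that no further irreducibles occur from $\sum_\mu m_\mu^2=\dim\End_{\SN}(\VN\otimes\VN)=\tfrac1{\N!}\sum_{\sn}\fixp{\sn}^4=15$, the number of $\SN$-orbits on $\{1,\dots,\N\}^4$ when $\N\geq4$.

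The step I expect to be the real obstacle is pinning down \emph{which} two irreducibles appear beyond $V_{[\N]}$ and $V_{[\N{-}1,1]}$: a dimension count alone only shows that two more constituents are present, and isolating $[\N{-}2,2]$ and $[\N{-}2,1,1]$ requires either the branching rule, as above, or explicit character values for these two irreducibles (which, unlike $V_{[\N]}$ and $V_{[\N{-}1,1]}$, are not permutation characters). The remaining ingredients — the orbit counts $2,5,15$ and the two hook-length evaluations — are routine.
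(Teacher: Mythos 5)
Your argument is correct, and it reaches the decomposition by a genuinely different route from the paper. The paper distributes the tensor product over $\VN \cong V_{[\N]} \oplus V_{[\N-1,1]}$ and then imports the decomposition of $V_{[\N-1,1]} \otimes V_{[\N-1,1]}$ from Hamermesh; the only nontrivial input is that cited formula. You instead treat $\VN\otimes\VN$ as the permutation module on ordered pairs, split it along the two $\SN$-orbits (diagonal $\cong \VN$, off-diagonal $\cong \Ind{S_{\N-2}}{\SN}{\mathbf 1}$), and compute the multiplicities in the induced piece via Frobenius reciprocity together with two applications of the branching rule $\lambda \mapsto \lambda - {\scriptstyle\ydiagram{1}}$ — all of which are results the paper itself establishes. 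The trade-off: your version is self-contained within the paper's own machinery (no external citation), supplies the dimension formulas via the hook length formula and a sum check, and the orbit/induction viewpoint dovetails with the set-partition and vacillating-tableau picture used later (your chain-counting for the multiplicity of $V_\mu$ in $\Ind{S_{\N-2}}{\SN}{\mathbf 1}$ is exactly a length-one instance of the restriction–induction counting behind Theorem \ref{thm: multi is vac tab}); the paper's version is shorter because it delegates the tensor square of the hook representation to a standard tabulated result. Your closing remarks are also accurate: the Burnside counts $2,5,15$ (valid for $\N\geq 2,3,4$ respectively) pin down the multiplicities $2$ and $3$ and show exactly two further constituents of multiplicity one, but identifying them as $[\N{-}2,2]$ and $[\N{-}2,1,1]$ does require the branching-rule (or character) input, as you supply.
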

\begin{proof}
	From the decomposition \eqref{eq: VN decomp} we have
	\begin{equation}
		\begin{aligned}
			\VN \otimes \VN &\cong (V_{[\N]} \oplus V_{[\N-1,1]}) \otimes (V_{[\N]} \oplus V_{[\N-1,1]}) \\
			&\cong \begin{aligned}[t]
				&(V_{[\N]} \otimes V_{[\N]}) \oplus (V_{[\N]} \otimes V_{[\N-1,1]}) \oplus (V_{[\N-1,1]} \otimes V_{[\N]} ) \\ &\oplus( V_{[\N-1,1]} \otimes V_{[\N-1,1]}).   
			\end{aligned}
		\end{aligned}
	\end{equation}
	The first three factors decompose into $V_{[\N]}, V_{[\N-1,1]}, V_{[\N-1,1]}$ respectively since $V_{[\N]}$ is the trivial representation. It remains to decompose the last factor, for which we use \cite[Equation 7-167]{Hamermesh1962}
	\begin{equation}
		V_{[\N-1,1]} \otimes V_{[\N-1,1]} \cong V_{[\N]} \oplus V_{[\N-1,1]} \oplus V_{[\N-2,2]} \oplus V_{[\N-2,1,1]}.     
	\end{equation}
	This completes the proof of equation \ref{eq: VNVN decomp}.
\end{proof}

When dealing with large tensor products of vector spaces it is useful to introduce diagram notation. Diagram notation works as follows.
Let $\VN = \Span(e_1, \dots, e_\N)$ be a $\N$-dimensional vector space, and $M \in \End(\VN)$ be a linear map
\begin{equation}
	M(e_i) = \sum_j M^j_i e_j.
\end{equation}
We identify the matrix $M^j_i$ with the following diagram
\begin{equation}
	M^j_i = 	
	\vcenter{\hbox{\tikzset{every picture/.style={line width=0.75pt}} 
			\begin{tikzpicture}[x=0.75pt,y=0.75pt,yscale=-1,xscale=1]
				
				\draw    (90,40) -- (90,50) ;
				\draw   (80,50) -- (100,50) -- (100,70) -- (80,70) -- cycle ;
				\draw    (90,70) -- (90,80) ;
				
				\draw (82,52) node [anchor=north west][inner sep=0.75pt]    {$M$};
				\draw (85,19) node [anchor=north west][inner sep=0.75pt]  [font=\tiny]  {$j$};
				\draw (85,89) node [anchor=north west][inner sep=0.75pt]  [font=\tiny]  {$i$};
	\end{tikzpicture}}}
\end{equation}
Composition of maps, or index contractions corresponds to connecting edges. For example, matrix multiplication is given by
\begin{equation}
	(MN)^j_i = \sum_k M^j_k N^k_i= \vcenter{\hbox{
			\tikzset{every picture/.style={line width=0.75pt}} 
			
			\begin{tikzpicture}[x=0.75pt,y=0.75pt,yscale=-1,xscale=1]
				
				\draw    (240,30) -- (240,40) ;
				\draw   (230,40) -- (250,40) -- (250,60) -- (230,60) -- cycle ;
				\draw    (240,60) -- (240,70) ;
				\draw    (240,70) -- (240,80) ;
				\draw   (230,80) -- (250,80) -- (250,100) -- (230,100) -- cycle ;
				\draw    (240,100) -- (240,110) ;
				
				\draw (232,42) node [anchor=north west][inner sep=0.75pt]    {$M$};
				\draw (235,9) node [anchor=north west][inner sep=0.75pt]  [font=\tiny]  {$j$};
				\draw (233,82) node [anchor=north west][inner sep=0.75pt]    {$N$};
				\draw (235,119) node [anchor=north west][inner sep=0.75pt]  [font=\tiny]  {$i$};
	\end{tikzpicture}}}
\end{equation}
This generalizes to tensors and linear maps $T \in \End(\VN^{\otimes k})$,
\begin{equation}
	T^{j_1 \dots j_k}_{i_1 \dots i_k} = \vcenter{\hbox{

			\tikzset{every picture/.style={line width=0.75pt}} 
			
			\begin{tikzpicture}[x=0.75pt,y=0.75pt,yscale=-1,xscale=1]
				
				\draw    (120,170) -- (120,180) ;
				\draw   (110,180) -- (150,180) -- (150,200) -- (110,200) -- cycle ;
				\draw    (120,200) -- (120,210) ;
				\draw    (140,170) -- (140,180) ;
				\draw    (140,200) -- (140,210) ;
				
				\draw (122,183) node [anchor=north west][inner sep=0.75pt]    {$T$};
				\draw (109,158) node [anchor=north west][inner sep=0.75pt]  [font=\tiny]  {$j_{1}$};
				\draw (135,158) node [anchor=north west][inner sep=0.75pt]  [font=\tiny]  {$j_{k}$};
				\draw (109,209) node [anchor=north west][inner sep=0.75pt]  [font=\tiny]  {$i_{1}$};
				\draw (135,209) node [anchor=north west][inner sep=0.75pt]  [font=\tiny]  {$i_{k}$};
	\end{tikzpicture}}} = \vcenter{\hbox{

\tikzset{every picture/.style={line width=0.75pt}} 

\begin{tikzpicture}[x=0.75pt,y=0.75pt,yscale=-1,xscale=1]

\draw    (230,170) -- (230,180) ;
\draw   (210,180) -- (250,180) -- (250,200) -- (210,200) -- cycle ;
\draw    (230,200) -- (230,210) ;

\draw (222,183) node [anchor=north west][inner sep=0.75pt]    {$T$};
\draw (209,159) node [anchor=north west][inner sep=0.75pt]  [font=\tiny]  {$j_{1}$};
\draw (235,159) node [anchor=north west][inner sep=0.75pt]  [font=\tiny]  {$j_{k}$};
\draw (209,210) node [anchor=north west][inner sep=0.75pt]  [font=\tiny]  {$i_{1}$};
\draw (235,210) node [anchor=north west][inner sep=0.75pt]  [font=\tiny]  {$i_{k}$};
\end{tikzpicture}} }
\end{equation}
The last diagram is used when keeping track of every individual index is superfluous. We will now consider higher order tensor products of $\VN$.

\subsection{Decomposition of tensor powers.}
The decomposition of general tensor powers of the defining representation can be computed combinatorially in terms of elementary objects.
\begin{theorem}[Decomposition of $\VN^{\otimes k}$]
	Let $m_{k,\N}^{\lambda}$ be the multiplicity of irreducible representations $V_\lambda$ of $\SN$ in
	\begin{equation}
		\VN^{\otimes k} \cong \bigoplus_{\lambda \vdash \N} m^\lambda_{k,\N} V_\lambda. \label{eq: VNotimesk}
	\end{equation}
	Then
	\begin{equation}
		m_{k,\N}^{\lambda}  = \sum_{t = \abs{\lambda^{\#}}}^\N  S(k,t) f^{\lambda/[\N-t]},
	\end{equation}
	where $\lambda^{\#} = [\lambda_2, \dots, \lambda_l]$ is a partition of $\N - \lambda_1$, $S(k,t)$ is the second Stirling number and $f^{\lambda/[\N-t]}$ is the number of standard tableaux with skew-shape $\lambda/[\N-t]$.
\end{theorem}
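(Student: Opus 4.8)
The plan is to reduce the decomposition of $\VN^{\otimes k}$ to two independent combinatorial inputs: the decomposition of the permutation action into orbits (producing the Stirling number), and the branching rule from $S_t$ up to $S_\N$ (producing the skew tableaux count). First I would exploit the standard set-partition description of $\VN^{\otimes k}$. A basis vector $e_{i_1} \otimes \dots \otimes e_{i_k}$ determines a set partition $\pi$ of $\{1,\dots,k\}$ by declaring $a \sim b$ iff $i_a = i_b$; the diagonal $\SN$-action preserves $\pi$. Hence $\VN^{\otimes k} \cong \bigoplus_{t=1}^{\min(k,\N)} \bigoplus_{\pi : |\pi| = t} W_\pi$, where $W_\pi$ is spanned by those basis vectors whose associated partition is exactly $\pi$ and having $t$ blocks. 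The number of $\pi$ with $t$ blocks is the Stirling number $S(k,t)$, so it suffices to show each $W_\pi$ with $t$ blocks is isomorphic, as an $\SN$-representation, to the space spanned by injective maps from a $t$-element set into $\{1,\dots,\N\}$ — equivalently, to $\mathbb{C}(\SN / S_{\N-t})$, the permutation module on ordered $t$-tuples of distinct elements. This isomorphism is immediate: choosing block representatives identifies $W_\pi$ with the span of $(j_1,\dots,j_t)$, $j_a$ pairwise distinct, with $\SN$ permuting the values, and this is independent of $\pi$ up to isomorphism.

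Next I would decompose $M_t := \mathbb{C}(\SN / S_{\N-t})$ into irreducibles of $\SN$. Here $S_{\N-t}$ is the subgroup fixing $1,\dots,t$ pointwise (permuting $t{+}1,\dots,\N$), so $M_t = \Ind_{S_{\N-t}}^{\SN} \mathbf{1}$. By transitivity of induction through $S_t \times S_{\N-t}$ and the fact that $\Ind_{S_{\N-t}}^{S_t \times S_{\N-t}} \mathbf{1} = \mathbb{C}(S_t) \otimes \mathbf{1}$ decomposes as $\bigoplus_{\mu \vdash t} (\dim V_\mu)\, V_\mu \boxtimes \mathbf{1}$, we get
\begin{equation}
	M_t \cong \bigoplus_{\mu \vdash t} (\dim V_\mu)\, \Ind_{S_t \times S_{\N-t}}^{\SN}\big(V_\mu \boxtimes \mathbf{1}\big) = \bigoplus_{\mu \vdash t} (\dim V_\mu) \bigoplus_{\lambda \vdash \N} c^{\lambda}_{\mu,[\N-t]}\, V_\lambda,
\end{equation}
where $c^{\lambda}_{\mu,[\N-t]}$ is the Littlewood–Richardson coefficient for adding a single row $[\N-t]$, which by the Pieri rule is $1$ if $\lambda/\mu$ is a horizontal strip (forcing $\mu \subseteq \lambda$ with $|\lambda/\mu| = \N-t$) and $0$ otherwise. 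Therefore the multiplicity of $V_\lambda$ in $M_t$ equals $\sum_{\mu} \dim V_\mu$ summed over $\mu \vdash t$ with $\lambda/\mu$ a horizontal strip. Finally I would invoke the identity $f^{\lambda/[\N-t]} = \sum_{\mu \vdash t,\ \lambda/\mu \text{ horiz. strip}} \dim V_\mu$: the number of standard tableaux of skew shape $\lambda/[\N-t]$ is computed by removing, in the first $\N-t$ steps, a horizontal strip (the entries $1,\dots,\N-t$ must go in the skew part $\lambda/\mu$ with $\mu$ the remaining shape, and by column-strictness they occupy at most one cell per column, i.e. form a horizontal strip — using that $[\N-t]$ is a single row so the skew shape $\lambda/[\N-t]$ has its "removed" part being exactly one row means $\mu \supseteq [\N-t]$... more precisely the standardness forces the complement to be filled by a standard tableau of shape $\mu$). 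Combining: $m^\lambda_{k,\N} = \sum_{t} S(k,t) \cdot (\text{mult of } V_\lambda \text{ in } M_t) = \sum_{t} S(k,t)\, f^{\lambda/[\N-t]}$, and the lower limit $t \geq |\lambda^{\#}|$ is exactly the condition $\mu \supseteq [\N-t]$ has a solution with $\mu \vdash t$, $\mu \subseteq \lambda$, i.e. $t \geq \lambda_2 + \dots + \lambda_l = |\lambda^{\#}|$.

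The main obstacle is the last combinatorial identity relating $f^{\lambda/[\N-t]}$ to the Pieri-weighted sum of $\dim V_\mu$. One clean route is representation-theoretic rather than bijective: iterated branching $\Res^{\SN}_{S_{\N-t}}$ counts, by the branching rule for $\SN \downarrow S_{\N-1}$, exactly the standard skew tableaux of shape $\lambda/\nu$ over all $\nu \vdash t$ with $\nu \subseteq \lambda$; restricting further that we induced from the trivial rep of $S_{\N-t}$ (equivalently symmetrizing over the last $\N-t$ boxes) collapses those $\N-t$ removal steps into a single horizontal-strip removal, leaving $f^{\lambda/[\N-t]}$ as the count of ways to standardly fill $\mu$. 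I would present this via Frobenius reciprocity: $\langle M_t, V_\lambda \rangle_{\SN} = \langle \mathbf{1}, \Res^{\SN}_{S_{\N-t}} V_\lambda \rangle_{S_{\N-t}} = $ multiplicity of the trivial $S_{\N-t}$-rep in $V_\lambda$, and this multiplicity is classically $f^{\lambda/[\N-t]}$ (the number of ways to build $\lambda$ from the empty shape by first adding a horizontal strip of size $\N-t$ then adding single boxes in a standard order — but since the horizontal strip is added "all at once" as the trivial rep, what remains is a standard tableau on the complementary shape, and by skew-symmetry of the count, $f^{\lambda/[\N-t]}$). Writing this reciprocity step carefully, and citing the standard fact that $\langle \mathbf 1, \Res^{S_n}_{S_{n-r}} V_\lambda\rangle = f^{\lambda/(r)}$, is the crux; everything else is bookkeeping with Stirling numbers and the orbit decomposition.
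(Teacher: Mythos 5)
Your proposal is correct and follows essentially the same route the paper sketches: decompose $\VN^{\otimes k}$ according to the equality pattern (set partition) of the tensor factors, which yields the Stirling numbers $S(k,t)$, identify each piece with the permutation module $\Ind{S_{\N-t}}{\SN}{\mathbf{1}}$ on injective $t$-tuples, and obtain its $V_\lambda$-multiplicity $f^{\lambda/[\N-t]}$ via Frobenius reciprocity and the iterated branching rule. The paper itself only cites \cite[Section 3.1]{Benkart2017} together with exactly this sketch, so your writeup supplies the details delegated to the reference; the representation-theoretic step you identify as the crux is the clean one, and the Pieri-rule detour (and the slightly garbled remark about $\mu \supseteq [\N-t]$) is unnecessary, since $f^{\lambda/[\N-t]}=0$ precisely when $\lambda_1 < \N-t$, i.e.\ $t < \abs{\lambda^{\#}}$.
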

\begin{proof}
	See \cite[Section 3.1]{Benkart2017}. The Stirling numbers appear in this formula by considering a fixed basis element in $\VN^{\otimes k}$. This defines a set partition of $\{1,\dots, k\}$ by considering which tensor factor contain the same basis element of $\VN$. For example $e_1 \otimes e_2 \otimes e_1 \otimes e_3$ corresponds to the set partition $13|2|4$ with three blocks. The number of such set partitions is the Stirling number $S(4,3)$. This partition structure is preserved by the action of $\SN$ and therefore define subrepresentations of $\VN^{\otimes k}$. However, these subrepresentation turn out to be reducible and the decomposition into irreducible components is proven to be determined by $f^{\lambda / [\N-t]}$.
\end{proof}
\begin{example}
	A useful fact is that $m_{k,\N}^\lambda = 0$ if $\abs{\lambda^{\#}} > k$ because $S(k,t) = 0$ for $t > k$.
\end{example}
\begin{example}\label{ex: k=2 VNVN multi}\ytableausetup{smalltableaux}
	We can verify this theorem in the case of $k=2$ and $\N \geq 4$ using equation \ref{eq: VNVN decomp}. To compute $m^{[\N]}_{2,\N}$ note that $\abs{[\N]^{\#}} = \abs{[]} = 0$ and $S(k,t)$ vanishes for $t > k$ and $t=0$. Therefore
	\begin{align}
		m^{[\N]}_{2,\N} &= S(2,1)f^{[\N]/[\N-1]} + S(2,2)f^{[\N]/[\N-2]} \nonumber \\
		&= f^{[\N]/[\N-1]} + f^{[\N]/[\N-2]},
	\end{align}
	where the last equality follows because $S(2,1) = S(2,2) = 1$.
	The Young diagram $\YT{[\N]/[\N-1]}$ with skew-shape $[\N]/[\N-1]$ is just
	\begin{equation}
		\YT{[\N]/[\N-1]} = \ydiagram{1}
	\end{equation}
	and it has a single standard filling ${\ytableaushort{1}}$. Similarly,
	\begin{equation}
		\YT{[\N]/[\N-2]} = \ydiagram{2}
	\end{equation}
	has a single standard filling $\ytableaushort{1 2}$.
	To compute $m^{[\N-1,1]}_{2,\N}$ we use $\abs{[\N-1,1]^{\#}} = \abs{[1]}=1$ such that
	\begin{equation}
		m^{[\N-1,1]}_{2,\N} =f^{[\N-1,1]/[\N-1]}  + f^{[\N-1,1]/[\N-2]}.
	\end{equation}
	We have the two Young diagrams
	\begin{equation}
		\YT{[\N-1,1]/[\N-1]} = \ydiagram{1} \quad \YT{[\N-1,1]/[\N-2]} = \ydiagram{1+1,1},
	\end{equation}
	with standard fillings
	\begin{equation}
		\ytableaushort{1}, \quad \ytableaushort{\none 1, 2}, \quad \ytableaushort{\none 2, 1}.
	\end{equation}
	That is,
	\begin{equation}
		m^{[\N-1,1]}_{2,\N} = 1 + 2 = 3.
	\end{equation}
	We compute $m^{[\N-2,2]}_{2,\N}$ similarly. We have $\abs{[\N-2,2]^{\#}} = \abs{[2]} = 2$,
	\begin{equation}
		m^{[\N-2,2]}_{2,\N} = S(2,2)f^{[\N-2,2]/[\N-2]} = f^{[\N-2,2]/[\N-2]}.
	\end{equation}
	The relevant Young diagram is
	\begin{equation}
		\YT{[\N-2,2]/[\N-2]} = \ydiagram{2}
	\end{equation}
	with a single standard filling $\ytableaushort{1 2}$ such that $m^{[\N-2,2]}_{2,\N} = 1$.
	Lastly, for $m^{[\N-2,1,1]}_{2,\N}$ we have $\abs{[\N-2,1,1]^{\#}} = \abs{[1,1]} = 2$ and
	\begin{equation}
		m^{[\N-2,1,1]}_{2,\N} = S(2,2)f^{[\N-2,1,1]/[\N-2]},
	\end{equation}
	The Young diagram is
	\begin{equation}
		\YT{{[\N-2,1,1]/[\N-2]}} = \ydiagram{1,1}
	\end{equation}
	with a single standard filling $\ytableaushort{1, 2}$ such that $m^{[\N-2,1,1]}_{2,\N} = 1$.
	This verifies the result in equation \ref{eq: VNVN decomp}.
\end{example}
\begin{example}\label{ex: k=3 VNVNVN multi}\ytableausetup{smalltableaux}
	Another example is $k=3$ with $N \geq 6$. We have
	\begin{align}
		m^{[\N]}_{3,\N} &= S(3,1) f^{\scriptstyle \ydiagram{1}} + S(3,2) f^{\scriptstyle \ydiagram{2} } + S(3,3) f^{\scriptstyle \ydiagram{3} }\\
		m^{[\N-1,1]}_{3,\N} &= S(3,1) f^{\scriptstyle \ydiagram{1}}  + S(3,2) f^{\scriptstyle \ydiagram{1+1,1} }  + S(3,3) f^{\scriptstyle \ydiagram{1+2,1}} \\
		m^{[\N-2,2]}_{3,\N} &= S(3,2) f^{\scriptstyle \ydiagram{2}}  + S(3,3) f^{\scriptstyle \ydiagram{2+1,2}}  \\
		m^{[\N-2,1,1]}_{3,\N} &= S(3,2) f^{\scriptstyle \ydiagram{1,1}} + S(3,3) f^{\scriptstyle \ydiagram{1+1,1,1}}   \\
		m^{[\N-3,3]}_{3,\N} &= S(3,3) f^{\scriptstyle \ydiagram{3}}   \\
		m^{[\N-3,2,1]}_{3,\N} &=  S(3,3) f^{\scriptstyle \ydiagram{2,1}}\\
		m^{[\N-3,1,1,1]}_{3,\N} &=  S(3,3) f^{\scriptstyle \ydiagram{1,1,1}}
	\end{align}
	with $S(3,1) = 1, S(3,2) = 3, S(3,3) = 1$ and
	\begin{align}
		&f^{\scriptstyle \ydiagram{1}} = f^{\scriptstyle \ydiagram{2}} = f^{\scriptstyle \ydiagram{3} } = 1 \\
		&f^{\scriptstyle \ydiagram{1}}  = 1, \,  f^{\scriptstyle \ydiagram{1+1,1}}  = 2, \, f^{\scriptstyle \ydiagram{1+2,1}}  = 3 \\
		&f^{\scriptstyle \ydiagram{2}} = 1, \, f^{\scriptstyle \ydiagram{2+1,2}} = 3, \\
		&f^{\scriptstyle \ydiagram{1,1}} = 1, \, f^{\scriptstyle \ydiagram{1+1,1,1}} = 3\\
		&f^{\scriptstyle \ydiagram{3}} = 1 \\
		&f^{\scriptstyle \ydiagram{2,1}} = 2 \\ 
		&f^{\scriptstyle \ydiagram{1,1,1}} = 1.
	\end{align}
	This gives
		\begin{align}
		m^{[\N]}_{3,\N} &= 5\\
		m^{[\N-1,1]}_{3,\N} &= 10 \\
		m^{[\N-2,2]}_{3,\N} &= 6 \\
		m^{[\N-2,1,1]}_{3,\N} &=   6 \\
		m^{[\N-3,3]}_{3,\N} &= 1  \\
		m^{[\N-3,2,1]}_{3,\N} &=  2\\
		m^{[\N-3,1,1,1]}_{3,\N} &= 1,
	\end{align}
	and
	\begin{equation}
		\VN^{\otimes 3} \cong \begin{aligned}[t]
			5V_{[\N]} &\oplus10 V_{[\N-1,1]} \oplus 6V_{[\N-2,2]} \oplus 6 V_{[\N-2,1,1]} \\
			&\oplus V_{[\N-3,3]} \oplus 2V_{[\N-3,2,1]} \oplus V_{[\N-1,1,1,1]}.
		\end{aligned}\label{eq: VNVNVN decomp}
	\end{equation}
\end{example}

\subsection{Tensor powers from restriction and induction.}
The above theorem is very useful when explicit computation of the multiplicities is necessary. It will be useful to interpret these multiplicities using different combinatorial and representation theoretic structures. We will now see that multiplicities $m^\lambda_{k,\N}$ also have an interpretation in terms of restriction and induction of representations.

First, we recall what it means to restrict representations of $\SN$ to $S_{\N-1}$.
\begin{definition}
	Let $\sn \in \SN, \lambda \vdash \N$ and $\P^\lambda_{ab}(\sn)$ be an irreducible representation of $\SN$ with corresponding vector space $V_\lambda$. The representation $\Res{\SN}{S_{\N-1}}{V_\lambda}$ of $S_{\N-1}$ is given by $\P^\lambda_{ab}(\sn)$ acting on $V_{\lambda}$ for $\sn \in S_{\N-1}$. It is called the restricted representation.
\end{definition}
Induction is a procedure for going in the opposite direction.
\begin{definition}
	Let $\sn \in S_{\N-1}, \lambda \vdash \N-1$, and $\P^\lambda_{ab}(\sn)$ be an irreducible representation of $S_{\N-1}$ acting on the vector space $V_{\lambda}$ with basis $e_a$. Consider the coset $\SN/S_{\N-1}$ with a set $\qty{\sn_1, \dots, \sn_\N}$ of representatives. That is, any set of elements $\sn_1, \dots, \sn_\N \in \SN$ such that
	\begin{equation}
		\SN = \bigcup_{i=1}^{\N} \sn_i S_{\N-1}, \quad \sn_i S_{\N-1} \cap \sn_j S_{\N-1} = \emptyset \qq{if $i\neq j$.}
	\end{equation}
	We define the vector space
	\begin{equation}
		\Ind{S_{\N-1}}{\SN}{V_\lambda} =  \Span\qty(\,  \sn_i \otimes_{S_{\N-1}}  e_a \, \vert \, i=1,\dots,\N, a=1,\dots, \dimSN{\lambda}).
	\end{equation}
	The tensor product symbol $\otimes_{S_{\N-1}}$ means that elements of the group $S_{\N-1}$ can be passed through the tensor product. This new vector space, of dimension $\N \times \dimSN{\lambda}$ is turned into a representation of $\SN$ through the following action.
	Suppose $\sn \in \SN$ and $\sn_i$ are such that $\sn \sn_i = \sn_j \rho$ for $\rho \in S_{\N-1}$. Then the induced representation $\Pi^\lambda: \SN \rightarrow GL(\Ind{S_{\N-1}}{\SN}{V_\lambda})$ of $\SN$ is defined by
	\begin{equation}
		\begin{aligned}
			\Pi^{\lambda}(\sn)\qty(\sn_i \otimes_{S_{\N-1}} e_a) &= (\sn \sn_i) \otimes_{S_{\N-1}} e_a = \sn_j \rho \otimes_{S_{\N-1}} e_a \\
			&= \sn_j \otimes_{S_{\N-1}} \sum_{b} P^\lambda_{ba}(\rho)e_b.
		\end{aligned}
	\end{equation}
	For general vectors, the action is extended linearly.
\end{definition}
\begin{example}	\label{ex: defining from induction}
	The prototypical example of an induced representation comes from inducing the trivial representation $\P_{ab}^{[\N-1]}(\sn) = 1$ of $S_{\N-1}$ to $\SN$. The permutations $\sn_i = (i\N)$ for $i=1,\dots,\N-1$ and $\sn_\N = (1)\dots(\N)$ form a set of representatives of left cosets. With $V_{[\N-1]} = \Span e_0$ we have
	\begin{equation}
		\Ind{S_{\N-1}}{\SN}{V_{[\N-1]}} = \Span(\sigma_i \otimes_{S_{\N-1}}  e_0 \, \vert \, i=1,\dots,\N).
	\end{equation}
	If $\sn \in \SN$ is such that $\sn \sn_i = \sn_j \rho$ for $\rho \in S_{\N-1}$, then
	\begin{equation}
		\Pi^\lambda(\sn) \sn_i \otimes_{S_{\N-1}} e_0 = \sn_j \rho \otimes_{S_{\N-1}} e_0 = \sn_j \otimes_{S_{\N-1}} e_0,
	\end{equation}
	since $\P_{ab}^{[\N-1]}(\rho) = 1$.
	This is the permutation representation associated with the set $\SN/S_{\N-1}$ of cosets. In fact
	\begin{equation}
		\Ind{S_{\N-1}}{\SN}{V_{[\N-1]}} \cong \VN,
	\end{equation}
	where $\VN$ is the defining representation of $\SN$.
\end{example}
The following theorem also helps to build some intuition for induced representations.
\begin{theorem}[Frobenius reciprocity]
	Let $V_{\lambda}$ be an irreducible representation of $\SN$ and $V_{\rho}$ an irreducible representation of $S_{\N-1}$. Suppose we have the following two decompositions with multiplicities $I_\rho^\lambda, R_\lambda^{\rho}$
	\begin{align}
		&\Ind{S_{\N-1}}{\SN}{V_{\rho}} = \bigoplus_{\lambda \vdash \N} I_\rho^\lambda V_{\lambda}, \\
		&\Res{\SN}{S_{\N-1}}{V_\lambda} = \bigoplus_{\rho \vdash \N-1} R_\lambda^{\rho} V_{\rho}.
	\end{align}
	Frobenius reciprocity gives
	\begin{equation}
		I_\rho^\lambda = R_\lambda^\rho.
	\end{equation}
	Frobenius reciprocity gives a precise sense in which induction is the "dual" or adjoint of restriction.
\end{theorem}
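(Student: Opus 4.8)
The plan is to reduce both multiplicities to pairings of characters and then to match those pairings using the explicit form of the induced character. Since $\mathbb{C}(\SN)$ and $\mathbb{C}(S_{\N-1})$ are semisimple, the character orthogonality behind Proposition~\ref{prop: rep basis of ZFG} shows that for any representation $W$ of $\SN$ the multiplicity of $V_\lambda$ in $W$ equals $\tfrac{1}{\abs{\SN}}\sum_{\sn\in\SN}\chr^W(\sn)\,\chr^\lambda(\sn^{-1})$, and likewise one level down for $S_{\N-1}$. Writing $\psi$ for the character of $\Ind{S_{\N-1}}{\SN}{V_\rho}$, this gives
\[
 I_\rho^\lambda=\frac{1}{\abs{\SN}}\sum_{\sn\in\SN}\psi(\sn)\,\chr^\lambda(\sn^{-1}),
 \qquad
 R_\lambda^\rho=\frac{1}{\abs{S_{\N-1}}}\sum_{\tau\in S_{\N-1}}\chr^\lambda(\tau)\,\chr^\rho(\tau^{-1}),
\]
so it is enough to compute $\psi$ and rewrite the first sum as the second.

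\textbf{Computing $\psi$.} First I would read $\psi$ off the matrices $\Pi^\rho(\sn)$ in the basis $\{\sn_i\otimes_{S_{\N-1}}e_a\}$ used to define $\Ind{S_{\N-1}}{\SN}{V_\rho}$: if $\sn\sn_i=\sn_j\rho_{ij}$ with $\rho_{ij}\in S_{\N-1}$, the $(i,j)$ block of $\Pi^\rho(\sn)$ is $P^\rho(\rho_{ij})$, which sits in a diagonal block ($i=j$) precisely when $\sn\sn_i\in\sn_i S_{\N-1}$, i.e.\ $\sn_i^{-1}\sn\sn_i\in S_{\N-1}$, and then contributes $\chr^\rho(\sn_i^{-1}\sn\sn_i)$ to the trace. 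Extending $\chr^\rho$ to the function $\dot\chr^\rho$ on $\SN$ that vanishes off $S_{\N-1}$, and using that $\dot\chr^\rho$ is constant on $S_{\N-1}$-conjugacy classes (so that for $g$ in the coset $\sn_i S_{\N-1}$ one has $\dot\chr^\rho(g^{-1}\sn g)=\dot\chr^\rho(\sn_i^{-1}\sn\sn_i)$ in all cases), the sum over representatives turns into a sum over all of $\SN$:
\[
 \psi(\sn)=\sum_{\substack{1\le i\le\N\\ \sn_i^{-1}\sn\sn_i\in S_{\N-1}}}\chr^\rho(\sn_i^{-1}\sn\sn_i)
 =\frac{1}{\abs{S_{\N-1}}}\sum_{g\in\SN}\dot\chr^\rho(g^{-1}\sn g).
\]
The right-hand side is manifestly independent of the chosen coset representatives.

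\textbf{Matching.} Substituting this into $I_\rho^\lambda$, interchanging the sums over $\sn$ and $g$, and for each fixed $g$ changing variables $\tau=g^{-1}\sn g$, the class-function property of $\chr^\lambda$ gives $\chr^\lambda(\sn^{-1})=\chr^\lambda(\tau^{-1})$, so the inner sum is independent of $g$ and the sum over $g$ contributes only an overall factor $\abs{\SN}$ that cancels the prefactor. What remains is $\tfrac{1}{\abs{S_{\N-1}}}\sum_{\tau\in\SN}\dot\chr^\rho(\tau)\,\chr^\lambda(\tau^{-1})$; since $\dot\chr^\rho$ vanishes off $S_{\N-1}$ and the substitution $\tau\mapsto\tau^{-1}$ lets the two characters be swapped, this equals $R_\lambda^\rho$, proving $I_\rho^\lambda=R_\lambda^\rho$. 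The only real obstacle is the bookkeeping in this step: one must respect the paper's left-coset convention $\SN=\bigcup_i\sn_i S_{\N-1}$ and the resulting form $\sn_i^{-1}\sn\sn_i$, and confirm representative-independence -- both of which follow from $\dot\chr^\rho$ being a class function on $S_{\N-1}$. A character-free alternative would instead identify $I_\rho^\lambda=\dim\Hom_{\SN}(\Ind{S_{\N-1}}{\SN}{V_\rho},V_\lambda)$ with $R_\lambda^\rho=\dim\Hom_{S_{\N-1}}(V_\rho,\Res{\SN}{S_{\N-1}}{V_\lambda})$ via the tensor--Hom adjunction for $\mathbb{C}(\SN)$ viewed as a $\mathbb{C}(\SN)$--$\mathbb{C}(S_{\N-1})$ bimodule (using that $\mathbb{C}(\SN)$ is free over $\mathbb{C}(S_{\N-1})$), but I would keep the character computation as the main line since it uses only the machinery already set up above.
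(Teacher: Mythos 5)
Your proof is correct and follows the same route as the paper, which only sketches the argument: both reduce $I_\rho^\lambda$ and $R_\lambda^\rho$ to character inner products and then invoke the induced character formula (the paper cites Serre for this, while you derive it from the block structure of $\Pi^\rho(\sn)$ and carry out the coset/conjugation bookkeeping explicitly). Your fully worked computation, including the class-function checks and the final $\tau\mapsto\tau^{-1}$ swap, is sound.
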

\begin{proof}
	This is a standard result proven in most textbooks on representation theory of finite groups, see for example \cite[Theorem 13]{Serre1977}. We give a very rough sketch of an elementary proof of this. The above theorem corresponds to an identity between two inner products of characters of $\SN$ and $S_{\N-1}$ respectively. Schematically, it takes the form
	\begin{equation}
		\langle V_\lambda, \Ind{S_{\N-1}}{\SN}{V_{\rho}} \rangle_{\SN} = \langle V_\rho, \Res{\SN}{S_{\N-1}}{V_\lambda} \rangle_{S_{\N-1}}.
	\end{equation}
	The identity is proven by invoking the definition of the character of an induced representation (see \cite[Theorem 12]{Serre1977}).
\end{proof}

Example \ref{ex: defining from induction} is a special case of a more general relationship.
\begin{theorem} \label{thm: ind res VNotimesk}
	Tensor powers of the defining representation of $\SN$ are isomorphic to iterated induction and restriction of the trivial representation of $\SN$.
	\begin{align}
		\VN &\cong \Ind{S_{\N-1}}{\SN}{\Res{\SN}{S_{\N-1}}{V_{[\N]}}} \\
		\VN^{\otimes 2} &\cong \Ind{S_{\N-1}}{\SN}{\Res{\SN}{S_{\N-1}}{ \Ind{S_{\N-1}}{\SN}{\Res{\SN}{S_{\N-1}}{V_{[\N]}}} }} \\
		&\vdots \\
		\VN^{\otimes k}& \cong (\mathrm{Ind}\, \mathrm{Res})^k(V_{[\N]}), \label{eq: VNk is ind res}
	\end{align}
	where $(\mathrm{Ind}\,\mathrm{Res})^k$ is shorthand for iterated restriction followed by induction.
\end{theorem}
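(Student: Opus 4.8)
The plan is to establish the single-step identity
\begin{equation}
	\VN^{\otimes k} \otimes \VN \cong \Ind{S_{\N-1}}{\SN}{\Res{\SN}{S_{\N-1}}{\VN^{\otimes k}}}
\end{equation}
as representations of $\SN$, valid for every $k \geq 0$, and then to iterate it starting from the base case $\VN^{\otimes 0} = V_{[\N]}$ (the trivial representation). The theorem then follows immediately by induction on $k$, since applying the single-step identity $k$ times to $V_{[\N]}$ produces exactly $(\mathrm{Ind}\,\mathrm{Res})^k(V_{[\N]})$, and the displayed low-order cases in the statement are just the instances $k=1,2$.

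First I would prove the single-step identity. The key observation is Example \ref{ex: defining from induction}, which identifies $\VN$ with $\Ind{S_{\N-1}}{\SN}{V_{[\N]}}$; in other words, $\VN$ is the permutation module on the cosets $\SN/S_{\N-1}$. Tensoring an arbitrary representation $W$ of $\SN$ with such a permutation module is governed by the projection (tensor identity) formula for induced modules: for any $\SN$-module $W$ and any $S_{\N-1}$-module $U$,
\begin{equation}
	\Ind{S_{\N-1}}{\SN}{U} \otimes W \cong \Ind{S_{\N-1}}{\SN}{\left(U \otimes \Res{\SN}{S_{\N-1}}{W}\right)}.
\end{equation}
Taking $U = V_{[\N-1]}$, the trivial module of $S_{\N-1}$, and noting $V_{[\N-1]} \otimes \Res{\SN}{S_{\N-1}}{W} \cong \Res{\SN}{S_{\N-1}}{W}$, and using $\Ind{S_{\N-1}}{\SN}{V_{[\N-1]}} \cong \VN$ from Example \ref{ex: defining from induction}, the right-hand side becomes $\Ind{S_{\N-1}}{\SN}{\Res{\SN}{S_{\N-1}}{W}}$ and the left-hand side becomes $\VN \otimes W$. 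Applying this with $W = \VN^{\otimes k}$ gives exactly the single-step identity, since $\VN \otimes \VN^{\otimes k} \cong \VN^{\otimes k} \otimes \VN \cong \VN^{\otimes(k+1)}$.

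The main work is therefore justifying the tensor identity. I would give the explicit $\SN$-equivariant isomorphism on the level of the basis vectors $\sn_i \otimes_{S_{\N-1}} e_a$ used in the definition of induction in the excerpt: the map sending $(\sn_i \otimes_{S_{\N-1}} u) \otimes w \mapsto \sn_i \otimes_{S_{\N-1}} (u \otimes \Res{\SN}{S_{\N-1}}{\Paction{\sn_i^{-1}}} w)$ is well-defined (independent of the coset representative, since moving $\rho \in S_{\N-1}$ across $\otimes_{S_{\N-1}}$ is compensated by its restricted action on $u$ and on $w$) and intertwines the two $\SN$-actions; a short direct check using the defining formula $\Pi^\lambda(\sn)(\sn_i \otimes_{S_{\N-1}} e_a) = \sn_j \otimes_{S_{\N-1}} \sum_b P^\lambda_{ba}(\rho) e_b$ confirms equivariance. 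I expect this bookkeeping — verifying both well-definedness across coset representatives and the intertwining property simultaneously — to be the only genuinely fiddly step; everything else (the base case, the induction, and reading off the $k=1,2$ specializations) is immediate. An alternative, if one prefers to avoid the explicit map, is a character-theoretic proof: compute the character of both sides using the formula for an induced character together with $\chr_{\VN}(\sn) = F(\sn)$ from \eqref{eq: VN character}, but the module-level argument is cleaner and gives the isomorphism rather than merely equality of characters.
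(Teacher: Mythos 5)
Your proposal is correct and follows essentially the same route as the paper: the paper's proof also reduces \eqref{eq: VNk is ind res} to the projection formula $\Ind{S_{\N-1}}{\SN}{X \otimes \Res{\SN}{S_{\N-1}}{Y}} \cong \Ind{S_{\N-1}}{\SN}{X} \otimes Y$ (cited from Serre), combines it with $\Ind{S_{\N-1}}{\SN}{V_{[\N-1]}} \cong \VN$ from Example \ref{ex: defining from induction}, and concludes by induction on $k$. The only difference is that you supply an explicit equivariant isomorphism proving the projection formula, where the paper simply cites it; your map and the well-definedness/equivariance checks are sound.
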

\begin{proof}
	An inductive proof of this is given in \cite[Section 2.3]{Benkart2017}. It uses the identity
	\begin{equation}
		\Ind{S_{\N-1}}{\SN}{X \otimes \Res{\SN}{S_{\N-1}}{Y}} = \Ind{S_{\N-1}}{\SN}{X} \otimes Y,
	\end{equation}
	that can be found in for example \cite[Remark (3) below Theorem 13]{Serre1977}. Choose
	\begin{equation}
		X = \VN^{\otimes k}, \quad Y = \Res{\SN}{S_{\N-1}}{V_{[\N]}},
	\end{equation}
	the identity implies
	\begin{equation}
	\begin{aligned}	
		\Ind{S_{\N-1}}{\SN}{\Res{\SN}{S_{\N-1}}{ \VN^{\otimes k} } }&=  \Ind{S_{\N-1}}{\SN}{ \Res{\SN}{S_{\N-1}}{ \VN^{\otimes k} } \otimes \Res{\SN}{S_{\N-1}}{ V_{[\N]} }  } \\
		&= \VN^{\otimes k} \otimes  \Ind{S_{\N-1}}{\SN}{\Res{\SN}{S_{\N-1}}{ V_{[\N]} }} = \VN^{\otimes k} \otimes \VN = \VN^{\otimes k+1}
	\end{aligned}
	\end{equation}	
	The theorem follows by induction.
\end{proof}

The upside of this formulation of tensor powers is that restriction and induction of symmetric groups can be computed combinatorially using Young diagrams.
\begin{theorem}
	Let $\lambda \vdash \N$, then
	\begin{equation}
		\Res{\SN}{S_{\N-1}}{V_{\lambda}} \cong \bigoplus_{\lambda' \in \lambda - {\scriptstyle \ydiagram{1}}} V_{\lambda'}, \label{eq: RES SN}
	\end{equation}
	and if $\lambda \vdash \N-1$
	\begin{equation}
		\Ind{S_{\N-1}}{\SN}{V_{\lambda}} \cong \bigoplus_{\lambda' \in \lambda + {\scriptstyle \ydiagram{1}}} V_{\lambda'} \label{eq: IND SN},
	\end{equation}
	where $\lambda - {\scriptstyle \ydiagram{1}}$ is the set of Young diagrams obtainable by removing a box from the end of a row of $\lambda$ and $\lambda + {\scriptstyle \ydiagram{1}}$ the set of Young diagrams obtainable by adding a box to the end of a row of $\lambda$.
\end{theorem}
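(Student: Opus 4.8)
The plan is to establish the restriction rule \eqref{eq: RES SN} first, directly, using the explicit Young (seminormal/orthogonal) basis of $V_\lambda$ recalled in Appendix \ref{apx: SN units}, and then to deduce the induction rule \eqref{eq: IND SN} formally from Frobenius reciprocity. The single combinatorial fact driving everything is that in any standard tableau $T\in\SYT{\lambda}$ the cell occupied by the largest entry $\N$ is a \emph{removable corner} of $\YT{\lambda}$: since $T$ increases along rows and columns, the cell with $\N$ ends both its row and its column, so deleting it leaves a straight-shape Young diagram $\lambda'$ with $\lambda'\in\lambda-{\scriptstyle\ydiagram{1}}$, and $T$ with that cell erased is a standard tableau of shape $\lambda'$.

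For the restriction, recall from Appendix \ref{apx: SN units} that $V_\lambda$ has a basis $\{v_T : T\in\SYT{\lambda}\}$ on which each adjacent transposition $s_i=(i\ i{+}1)$, $1\le i\le \N-1$, acts by a formula determined solely by the axial distance between the cells holding $i$ and $i{+}1$ in $T$ (with $v_{s_iT}$ appearing in place of $v_T$ when $s_iT$ is again standard). Partition $\SYT{\lambda}=\bigsqcup_{\lambda'\in\lambda-{\scriptstyle\ydiagram{1}}}\mathcal{T}_{\lambda'}$, where $\mathcal{T}_{\lambda'}$ collects the tableaux whose $\N$-cell is the corner $\lambda/\lambda'$, and set $W_{\lambda'}=\Span(v_T : T\in\mathcal{T}_{\lambda'})$. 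Since $S_{\N-1}$ is generated by $s_1,\dots,s_{\N-2}$ and none of these generators moves the entry $\N$, each $W_{\lambda'}$ is $S_{\N-1}$-stable. Erasing the $\N$-cell gives a bijection $\mathcal{T}_{\lambda'}\xrightarrow{\ \sim\ }\SYT{\lambda'}$, and because the positions of $1,\dots,\N-1$ — hence all relevant axial distances — are unchanged, the matrix of each $s_i$ with $i\le\N-2$ on $W_{\lambda'}$ coincides with its matrix on $V_{\lambda'}$; thus $W_{\lambda'}\cong V_{\lambda'}$ as $S_{\N-1}$-modules. Summing over $\lambda'$ yields \eqref{eq: RES SN}, each $V_{\lambda'}$ occurring with multiplicity exactly one because a given $\lambda'\subset\lambda$ with $|\lambda'|=\N-1$ determines its corner uniquely.

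For the induction, Frobenius reciprocity (the theorem above) gives $I_\rho^\lambda=R_\lambda^\rho$, and \eqref{eq: RES SN} says $R_\lambda^\rho$ is $1$ if $\rho$ arises from $\lambda$ by deleting a corner and $0$ otherwise. Since ``$\rho\in\lambda-{\scriptstyle\ydiagram{1}}$'' is the same relation on pairs of diagrams as ``$\lambda\in\rho+{\scriptstyle\ydiagram{1}}$'', we get $I_\rho^\lambda=1$ exactly when $\lambda$ is obtained from $\rho$ by adding a box to the end of a row, which is \eqref{eq: IND SN}. As a consistency check, $\dim\Ind{S_{\N-1}}{\SN}{V_\rho}=\N\,\dimSN{\rho}$ must equal $\sum_{\lambda\in\rho+{\scriptstyle\ydiagram{1}}}\dimSN{\lambda}$, which is precisely the standard recursion satisfied by the numbers $\abs{\SYT{\lambda}}$.

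The main obstacle — really the only point that is not bookkeeping — is the claim that the Young-form matrix coefficients of $s_1,\dots,s_{\N-2}$ inside $V_\lambda$, restricted to $W_{\lambda'}$, are literally those of $V_{\lambda'}$, so that $W_{\lambda'}\cong V_{\lambda'}$ rather than merely an abstract subrepresentation of the right dimension. This is immediate from the axial-distance description once that description is in hand, which is why I would route the argument through Appendix \ref{apx: SN units} rather than through an abstract character computation. A self-contained alternative is to evaluate $\langle\Res{\SN}{S_{\N-1}}{\chr^\lambda},\chr^\rho\rangle_{S_{\N-1}}$ using the Murnaghan--Nakayama recursion, but that essentially re-proves the branching rule and is less transparent; a Specht-module argument is also possible but requires the Pieri/Littlewood--Richardson combinatorics, which has not been set up.
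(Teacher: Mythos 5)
Your argument is correct and follows essentially the same route as the paper's proof, which also organizes the standard-tableau basis of $V_\lambda$ by the removable corner occupied by $\N$, identifies each resulting block with $V_{\lambda'}$, and obtains the induction rule from Frobenius reciprocity (the paper delegates the details to Sagan's Theorem 2.8.3). Your additional use of the seminormal/axial-distance form to verify $W_{\lambda'}\cong V_{\lambda'}$ is exactly the missing bookkeeping in the paper's sketch, so no substantive difference remains.
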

\begin{proof}
	See \cite[Theorem 2.8.3]{Sagan2013} for a proper proof of this. Intuitively, the restriction can be understood by considering $V_\lambda$ as a vector space with basis labelled by standard Young tableaux of shape $\lambda$. The cell of a standard Young tableaux filled with $\N$ can always be removed to give a new standard Young tableaux of shape $\lambda' = \lambda - {\scriptstyle \ydiagram{1}}$. This tableau is a vector in the representation $V_{\lambda'}$. By considering all standard tableaux of shape $\lambda$ we find all possible $\lambda'$ that appear in the restriction. The induction formula follows from Frobenius reciprocity.
\end{proof}
Note that restriction(induction) of irreducible representations of $\SN$ is multiplicity free. Furthermore, it acts linearly on direct sums of irreducible representations
\begin{equation}
	\Res{\SN}{S_{\N-1}}{V_{\lambda} \oplus V_{\lambda'}} = \Res{\SN}{S_{\N-1}}{V_{\lambda}} \oplus \Res{\SN}{S_{\N-1}}{V_{\lambda'}},
\end{equation}
and similarly for induction. Consequently, specifying the irreducible representations in each sequence of restrictions and inductions in \eqref{eq: VNk is ind res} gives a complete description of the multiplicities of the final irreducible representations. This is a theorem, which we will state using the following definition.
\begin{definition}[Vacillating tableau] \label{def: vac tableu}
	Let $\lambda \vdash \N$ with $\abs{\lambda^{\#}} \leq k$. An alternating sequence
	\begin{equation}
		\vactab = (\lambda^{(0)} = [\N], \lambda^{(\frac{1}{2})} = [\N-1], \lambda^{(1)}, \lambda^{(\frac{3}{2})},\dots,\lambda^{(k)}=\lambda)
	\end{equation}
	is called a vacillating tableau of shape $\lambda$ and length $k$ if
	\begin{align}
		\lambda^{(i+\frac{1}{2})} &\in \lambda^{(i)} -{\scriptstyle \ydiagram{1}},\\
		\lambda^{(i+1)} &\in	\lambda^{(i+\frac{1}{2})} + {\scriptstyle \ydiagram{1}},
	\end{align}
	and
	\begin{align}
		\lambda^{(i)} \in \Lambda_{i, \N} &= \{\lambda \vdash \N \, \vert \, \abs*{\lambda^{\#}} \leq i\} \\
		\lambda^{(i+\frac{1}{2})} \in \Lambda_{i+\frac{1}{2}, \N} &= \{\lambda \vdash \N-1 \, \vert \, \abs*{\lambda^{\#}} \leq i\}.
	\end{align}
\end{definition}
\begin{example}
	The following is the set of all vacillating tableaux of length $k=2$.
	\begin{align}
		\lambda=[\N]: &\begin{aligned}[t]
			&\vactab_1 = ([\N], [\N-1], [\N], [\N-1], [\N]),\\
			&\vactab_2 = ([\N], [\N-1], [\N-1,1], [\N-1], [\N])
		\end{aligned} \\
		\lambda = [\N-1,1]:&\begin{aligned}[t]
			&\vactab_1 = ([\N], [\N-1], [\N], [\N-1], [\N-1,1]), \\
			&\vactab_2 = ([\N], [\N-1], [\N-1,1], [\N-2,1], [\N-1,1]), \\
			&\vactab_3 = ([\N], [\N-1], [\N-1,1], [\N-1], [\N-1,1])
		\end{aligned} \\
		\lambda = [\N-2,2]:&\begin{aligned}[t]
			&\vactab_1 = ([\N], [\N-1], [\N-1,1], [\N-2,1], [\N-2,2]),
		\end{aligned}\\
		\lambda = [\N-2,1,1]:&\begin{aligned}[t]
			&\vactab_1 = ([\N], [\N-1], [\N-1,1], [\N-2,1], [\N-2,1,1]).
		\end{aligned}
	\end{align}
\end{example}
Observe that the number of vacillating tableaux of shape $\lambda$ and length $k=2$ is equal to $m^\lambda_{2,\N}$ as computed in Example \ref{ex: k=2 VNVN multi}.
The general theorem is the following.
\begin{theorem} \label{thm: multi is vac tab}
	Let $\lambda \in \Lambda_{k,\N}$, then
	\begin{equation}
		m^\lambda_{k,\N} = \abs{\{\text{vacillating tableaux of shape $\lambda$ and length $k$}\}}.
	\end{equation}
\end{theorem}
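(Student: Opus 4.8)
The plan is to read off the multiplicity directly from Theorem~\ref{thm: ind res VNotimesk}, which identifies $\VN^{\otimes k} \cong (\mathrm{Ind}\,\mathrm{Res})^k(V_{[\N]})$, and to expand the right-hand side using the multiplicity-free branching rules \eqref{eq: RES SN} and \eqref{eq: IND SN}. Starting from $V_{[\N]}$ and alternately restricting to $S_{\N-1}$ and inducing back to $\SN$, write the module obtained after the $i$-th restriction as $\bigoplus_{\nu \vdash \N-1} c^{(i-1/2)}_\nu V_\nu$ and the module after the $i$-th induction as $\bigoplus_{\mu \vdash \N} c^{(i)}_\mu V_\mu$. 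Since restriction (resp.\ induction) of an irreducible is the direct sum of the irreducibles indexed by the Young diagrams obtained by deleting (resp.\ adding) a single box, and since both operations are additive on direct sums, the coefficients obey
\[
c^{(i+1/2)}_\nu = \sum_{\mu\,:\,\nu \in \mu - {\scriptstyle \ydiagram{1}}} c^{(i)}_\mu, \qquad c^{(i+1)}_\mu = \sum_{\nu\,:\,\mu \in \nu + {\scriptstyle \ydiagram{1}}} c^{(i+1/2)}_\nu,
\]
with initial data $c^{(0)}_\mu = \delta_{\mu,[\N]}$.

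Solving this recursion by repeated substitution, $c^{(k)}_\lambda$ counts exactly the sequences
\[
([\N] = \mu^{(0)}, \mu^{(1/2)}, \mu^{(1)}, \dots, \mu^{(k)} = \lambda)
\]
in which $\mu^{(i+1/2)}$ is obtained from $\mu^{(i)}$ by removing one box and $\mu^{(i+1)}$ is obtained from $\mu^{(i+1/2)}$ by adding one box. By Theorem~\ref{thm: ind res VNotimesk} this coefficient is precisely the multiplicity $m^\lambda_{k,\N}$ appearing in \eqref{eq: VNotimesk}.

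It then remains to match these sequences with the vacillating tableaux of Definition~\ref{def: vac tableu}. The first box-removal from $[\N]$ can only yield $[\N-1]$, so $\mu^{(1/2)} = [\N-1]$ as demanded; the box-addition and box-removal conditions are verbatim those in the definition. Finally, the weight conditions $\mu^{(i)} \in \Lambda_{i,\N}$ and $\mu^{(i+1/2)} \in \Lambda_{i+1/2,\N}$ hold automatically: writing $\lvert\mu^{\#}\rvert = \lvert\mu\rvert - \mu_1$ for a partition $\mu$, removing a box changes this quantity by $0$ or $-1$ (according as the box does or does not come from the first row) and adding a box changes it by $0$ or $+1$; since it vanishes for $[\N]$, an induction on $i$ gives $\lvert(\mu^{(i)})^{\#}\rvert \le i$ and $\lvert(\mu^{(i+1/2)})^{\#}\rvert \le i$, which are exactly the stated membership conditions. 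Hence the sequences counted by $c^{(k)}_\lambda$ are precisely the vacillating tableaux of shape $\lambda$ and length $k$, and the theorem follows.

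The substantive input is contained entirely in Theorem~\ref{thm: ind res VNotimesk} together with the combinatorial branching rules \eqref{eq: RES SN}, \eqref{eq: IND SN}; once these are available the argument is a bookkeeping exercise. I expect the only point needing a little care to be confirming that no finite-$\N$ truncation intervenes in the first $k$ steps of the branching, so that the iterated $\mathrm{Ind}\,\mathrm{Res}$ is governed exactly by adding and removing boxes (this is what the standing hypothesis $\lambda \in \Lambda_{k,\N}$, i.e.\ $\N$ large relative to $k$, guarantees), and checking that the $\Lambda_{i,\N}$-filtration in Definition~\ref{def: vac tableu} is respected automatically, as noted above.
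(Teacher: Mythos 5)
Your proof is correct and takes essentially the same route the paper relies on: the paper's own ``proof'' is a citation to Benkart--Halverson, but the argument it sketches beforehand (Theorem \ref{thm: ind res VNotimesk} plus the multiplicity-free, additive branching rules \eqref{eq: RES SN}, \eqref{eq: IND SN}) is exactly the recursion you unfold, and your observation that the $\Lambda_{i,\N}$ membership conditions hold automatically is sound. One minor aside: $\lambda \in \Lambda_{k,\N}$ does not mean $\N \geq 2k$, but no largeness assumption is actually needed, since the branching rules are exact for every $\N$ and the sequences are required to consist of genuine partitions of $\N$ and $\N-1$.
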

\begin{proof}
	This is proven in for example \cite[Section 2.3]{Benkart2017}.
\end{proof}
Therefore the multiplicities $\dimPk{\lambda}$ in the decomposition of $\VN^{\otimes k}$ can be understood in terms of counting of vacillating tableaux of shape $\lambda$ and length $k$. This structure behind the tensor powers will be crucial for the constructions in the next chapter.

\section{Summary}
In this chapter we reviewed some basic aspects of symmetric groups and their representation theory. We focused on the defining representation $\VN$, where the symmetric group is represented by permutation matrices. This representation will be directly relevant to the symmetry in permutation invariant matrix models. We saw that the defining representation of $\SN$ can be constructed by inducing the trivial representation of $S_{\N-1}$. This was a special case of a more general construction, that of the tensor powers $\VN^{\otimes k}$ from repeated restriction and induction. This repeated restriction and induction could be understood from the combinatorial rules \eqref{eq: RES SN}, \eqref{eq: IND SN} of removing and adding boxes to a Young diagram. A very important theorem was the relationship between the sequence of restrictions and inductions, called vacillating tableaux, and multiplicities in the decomposition of tensor powers $\VN^{\otimes k}$ into irreducible representations. This structure will play an important role in understanding the inductive chain of partition algebras and construction of matrix units, which we review in the next chapter. In matrix models, tensor powers with $k=1,2$ will be essential. They correspond to linear and quadratic parts of the action, respectively. 

\chapter{Partition algebras}\label{chapter: partition algebra}
The group algebra $\mathbb{C}(S_k)$ fits into a class of finite associative algebras known as diagram algebras. These are algebras with distinguished bases labelled by diagrams. In the diagram basis the algebra product takes the form of diagram concatenation. For example, consider $S_3$ where
\begin{equation}
	(12) \leftrightarrow \PAdiagram{3}{-1/2,-2/1,-3/3}{}, \quad (123) \leftrightarrow \PAdiagram{3}{-1/2,-2/3,-3/1}{},
\end{equation}
and
\begin{equation}
	(12)(123) = \begin{aligned}
		&\PAdiagram{3}{-1/2,-2/3,-3/1}{}\\
		&\PAdiagram{3}{-1/2,-2/1,-3/3}{}
	\end{aligned} = \PAdiagram{3}{-1/3,-2/2,-3/1}{} = (13).
\end{equation}

As mentioned in the introduction, diagram algebras feature in the mathematical duality known as (generalized) Schur-Weyl duality. The classical instance of Schur-Weyl duality relates representation theory of $GL(\N)$ to representation theory of $\mathbb{C}(S_k)$ \cite{Fulton2013}. In particular, consider the diagonal action of $g \in GL(\N)$ on $\VN^{{\otimes k}}$
\begin{equation}
	g(v_1 \otimes v_2 \otimes \dots v_k) = gv_1 \otimes gv_2 \otimes \dots \otimes g v_k.
\end{equation}
Since the same $g \in GL(\N)$ acts on all tensor factors, it commutes with the action of $\mathbb{C}(S_k)$ permuting the order of tensor factors. Schur-Weyl duality says that this action of $\mathbb{C}(S_k)$ is the complete set of elements in $\End_{GL(\N)}(\VN^{\otimes k}) \subset \End(\VN^{\otimes k})$ that commute with $GL(\N)$. That is,
\begin{equation}
	\End_{GL(\N)}(\VN^{\otimes k}) \cong \mathbb{C}(S_k),
\end{equation}
as long as $\N > k$.
As we will review, the double centralizer theorem then implies that
\begin{equation}
	\VN^{\otimes k} \cong \bigoplus_{\substack{\lambda \vdash k \\ l(\lambda) \leq \N}} W_\lambda \otimes V_\lambda,
\end{equation}
where $W_\lambda$ are irreducible representations of $GL(\N)$ and $V_\lambda$ irreducible representations of $S_k$. Note that this decomposition is multiplicity free, or equivalently, there is a one-to-one correspondence between irreducible representations $W_\lambda$ and $V_\lambda$. This implies that the multiplicity of $W_\lambda$ in the decomposition is equal to the dimension of $V_\lambda$ -- that is $\abs{\SYT{\lambda}}$. Therefore, Schur-Weyl duality serves as a powerful tool for dealing with the multiplicities in this tensor product decomposition.

Schur-Weyl duality has been generalized in various directions, including other matrix groups such as $O(\N)$ and $Sp(\N)$ \cite{Brauer} . It has also been generalized to other actions and vector spaces. In these cases, more general diagram algebras replace $\mathbb{C}(S_k)$, such as Brauer algebras, which have played a role in large $\N$ physics \cite{Kimura2007}.

In this thesis we are concerned with large $\N$ models with permutation symmetry. As we will review in detail in chapter \ref{chapter: 0d}, such models are described by a quadratic polynomial in matrices $X = \vert \vert X_{ij} \vert \vert$. For example,
\begin{equation}
	\sum_{i,j,k,l=1}^N G^{ij;kl}X_{ij}X_{kl},
\end{equation}
where the tensor $G^{ij;kl}$ parametrises the contribution to the action. The model is permutation invariant if and only if
\begin{equation}
	G^{ij;kl} = G^{(i)\sigma (j){\sigma}; (k)\sigma (l)\sigma} \quad \forall \sigma \in \SN.
\end{equation}
Invariant tensors like these correspond to elements of $\End_{\SN}(\VN^{\otimes 2})$. This is how Schur-Weyl duality appears in the context of matrix models.

Fortunately, Schur-Weyl duality has been developed for tensor powers of defining representations of $\SN$. In this case, the dual algebras $\End_{{\SN}}(\VN^{\otimes k})$ are isomorphic to the partition algebras $P_k(\N)$. They were first discovered in the context of statistical mechanics and Potts models \cite{Martin1994, Jones1994}, where they appear as the algebra generated by transfer matrices. The study of these diagram algebras has seen a lot of recent progress \cite{Halverson2005, Benkart2017, Halverson2018, Enyang_2012} (this is by no means an exhaustive list). 

In this chapter, we will use these results to construct a basis of matrix units for $P_1(\N), P_2(\N)$. This basis gives an efficient and explicit parametrisations of the permutation invariant Gaussian matrix models, or equivalently the tensors $G^{ij;kl}$.

\section{Schur-Weyl duality} \label{sec: partition algebras}
In this section we will investigate the set of linear maps on $\VN^{\otimes k}$ that commute with $\SN$. We will give a basis for the vector space of such maps and explain its relevance to the decomposition of $\VN^{\otimes k}$ through the double centralizer theorem. We will then explain how the partition algebras enter the picture and explore its inductive structure.

Let $\End(\VN^{\otimes k})$ be the set of linear maps $T: \VN^{\otimes k} \rightarrow \VN^{\otimes k}$. It is isomorphic to
\begin{equation}
	\End(\VN^{\otimes k}) \cong \VN^{\otimes k} \otimes (\VN^*)^{\otimes k}
\end{equation}
with  $\VN^*$ the dual space
\begin{equation}
	\VN^{*} = \Span(f_i \, \vert \, i=1,\dots,\N \text{ with } f_i(e_j) = \delta_{ij} ).
\end{equation}
We have
\begin{equation}
	f_i(e_j) = \VNinner{e_i}{e_j},
\end{equation}
and the map $f_i \mapsto e_i$ is an isomorphism of vector spaces.

In fact, the two vector spaces are isomorphic as representations of $\SN$.
\begin{definition}(Dual representation)
	We define the representation $\Paction{\sn}^*$ of $\sn \in \SN$ on $\VN^*$ by
	\begin{equation}
		(\Paction{\sn}^*f_i)(e_j) = f_i(\Paction{\sn^{-1}}e_j), \label{eq: dual VN rep}
	\end{equation}
	such that
	\begin{equation}
		(\Paction{\sn_1 \sn_2}^*f_i)(e_j) = f_i(\Paction{\sn_2^{-1}} \Paction{\sn_1^{-1}}e_j) = (\Paction{\sn_2}^*f_i)( \Paction{\sn_1^{-1}}e_j) =  (\Paction{\sn_1}^*\Paction{\sn_2}^*f_i)(e_j).
	\end{equation}
	In coordinates \eqref{eq: dual VN rep} reads
	\begin{equation}
		(\Paction{\sn}^*)_i^j = (\Paction{\sn^{-1}})_j^i.
	\end{equation}
\end{definition}
\begin{corollary}
	Because the defining representation is orthogonal (real unitary) $(\Paction{\sn^{-1}})_j^i = (\Paction{\sn})^j_i$. Therefore, $\VN^* \cong \VN$ as representations of $\SN$ and by extension $(\VN^*)^{\otimes k} \cong \VN^{\otimes k}$.
\end{corollary}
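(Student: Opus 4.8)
The plan is to exhibit an explicit $\SN$-equivariant linear isomorphism $\psi\colon\VN^{*}\to\VN$ and then tensor it up. The obvious candidate is the map $\psi(f_i)=e_i$ already observed to be a vector-space isomorphism; the only thing left to check is that it commutes with the two group actions.

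First I would unwind the dual action in coordinates. Using \eqref{eq: dual VN rep} together with \eqref{eq: perm action} one computes $(\Paction{\sn}^{*}f_i)(e_j)=f_i(\Paction{\sn^{-1}}e_j)=f_i(e_{(j)\sn})=\delta_{i,(j)\sn}$, and since $(j)\sn=i\iff j=(i)\sn^{-1}$ this says $\Paction{\sn}^{*}f_i=f_{(i)\sn^{-1}}$. Comparing with $\Paction{\sn}e_i=e_{(i)\sn^{-1}}$ from \eqref{eq: perm action}, the two actions are literally the same permutation of basis labels, so $\psi(\Paction{\sn}^{*}f_i)=\Paction{\sn}\,\psi(f_i)$ for all $\sn$ and $i$, hence by linearity on all of $\VN^{*}$. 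Equivalently, in matrix form one plugs the orthogonality relation $(\Paction{\sn^{-1}})_j^i=(\Paction{\sn})^j_i$ (a restatement of the orthogonality of permutation matrices, immediate from \eqref{eq: VN as perm matrix}) into the coordinate formula $(\Paction{\sn}^{*})_i^j=(\Paction{\sn^{-1}})_j^i$ for the dual representation, which shows that the matrix representing $\Paction{\sn}^{*}$ in the basis $\{f_i\}$ coincides with the one representing $\Paction{\sn}$ in $\{e_i\}$. Either way, $\VN^{*}\cong\VN$ as representations of $\SN$.

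A cleaner, coordinate-free variant I could present instead: the $\SN$-invariant inner product $\VNinner{e_i}{e_j}=\delta_{ij}$ on $\VN$ gives a canonical map $\VN\to\VN^{*}$, $v\mapsto\VNinner{v}{\,\cdot\,}$; invariance of the inner product under the (orthogonal) action $\Paction{\sn}$ is precisely the statement that this map is $\SN$-equivariant, and nondegeneracy makes it a bijection. This is the conceptual reason the statement holds for any real representation carrying an invariant inner product.

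For the tensor powers I would take $\psi^{\otimes k}\colon(\VN^{*})^{\otimes k}\to\VN^{\otimes k}$, $f_{i_1}\otimes\cdots\otimes f_{i_k}\mapsto e_{i_1}\otimes\cdots\otimes e_{i_k}$. Since $\SN$ acts diagonally on both sides ($\Paction{\sn}$ on each factor of $\VN^{\otimes k}$, and $\Paction{\sn}^{*}$ on each factor of $(\VN^{*})^{\otimes k}$) and $\psi$ is equivariant factor by factor, $\psi^{\otimes k}$ intertwines the diagonal actions; being a bijection, it is the desired isomorphism $(\VN^{*})^{\otimes k}\cong\VN^{\otimes k}$. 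There is no genuine obstacle here — the only thing to watch is the index bookkeeping in the coordinate computation (inverse versus transpose, upper versus lower indices), which the invariant-inner-product argument sidesteps entirely. This identification is exactly what is needed to read $\End(\VN^{\otimes k})\cong\VN^{\otimes k}\otimes(\VN^{*})^{\otimes k}$ as a representation of $\SN$ in the discussion that follows.
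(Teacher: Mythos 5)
Your proposal is correct and follows essentially the same route as the paper: the coordinate formula $(\Paction{\sn}^{*})_i^j = (\Paction{\sn^{-1}})_j^i$ combined with orthogonality of permutation matrices shows the dual matrices coincide with the defining ones, so $f_i \mapsto e_i$ is an $\SN$-equivariant isomorphism, and tensoring factor by factor gives $(\VN^{*})^{\otimes k} \cong \VN^{\otimes k}$. Your coordinate-free variant via the invariant inner product is a pleasant addition that explains why the statement holds for any real representation with an invariant inner product, but it is not a substantively different argument.
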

This also implies that $\End(\VN^{\otimes k}) \cong \VN^{\otimes 2k}$. This correspondence is frequently used in matrix model computations.

We are interested in the subspace of $\End(\VN^{\otimes k})$ that commute with $\SN$.
\begin{definition}
The $\SN$ invariant subspace of $\End(\VN^{\otimes k})$ is defined as
\begin{equation}
	\End_{\SN}(\VN^{\otimes k}) = \{T \in \End(\VN^{\otimes k}) \text{ with } T\Paction{\sn} = \Paction{\sn}T \quad \forall \sn \in \SN\}.
\end{equation}
\end{definition}
This vector space is in fact an algebra with multiplication given by composition of maps.
\begin{remark}
It is instructive to look at the constraint $T\Paction{\sn} = \Paction{\sn}T$ in components. We re-write this as $\Paction{\sn^{-1}}T\Paction{\sn} = T$, which in components reads
\begin{equation}
	T_{(i_1)\sn (i_2)\sn {\dots} (i_k)\sn }^{(j_1)\sn (j_2)\sn {\dots} (j_k)\sn } = T_{i_1 i_2 \dots i_k}^{j_1 j_2 \dots j_k} \quad \forall \sn \in \SN. \label{eq: SN end constraint}
\end{equation}
\end{remark}
A basis for $\End_{\SN}(\VN^{\otimes k}) $ can be constructed by considering orbits of $\SN$ on a basis for $\End(\VN^{\otimes k})$. A basis for $\End(\VN^{\otimes k})$ is simply the set of rank $2k$ tensors that vanish for all components except one. For example, consider $k=3$ and focus on the tensor whose only non-zero component is
\begin{equation}
	T^{112}_{123}.
\end{equation}
We construct the $\SN$ orbit
\begin{equation}
	\{T^{(1)\sn(1)\sn(2)\sn}_{(1)\sn(2)\sn(3)\sn} \, \vert \, \forall \sn \in \SN\}.
\end{equation}
Because $\sn \in \SN$ is a bijection, it preserves relations such as $i\neq j$ or $i=j$. Therefore, the above set is equal to the set
\begin{equation}
	\{T^{iij}_{ijk} \, \vert \,i,j,k\in\{1,\dots,\N\},\, i\neq j, i\neq k, j\neq k\}. \label{eq: tensor orbits}
\end{equation}
By considering all such basis tensors, we find a set of distinct orbits.
The set of distinct orbits label a basis for $\End_{S_N}(\VN^{\otimes 3})$.

\begin{example}\label{ex: End S2 V2}
As an example, we can consider $\End_{S_2}(V_2)$. A basis for $\End(V_2)$ is given by the elementary 2-by-2 matrices $\{E_{11}, E_{12}, E_{21}, E_{22}\}$. They form two distinct orbits under the action of $S_2$,
\begin{equation}
	O_1 = \{E_{11}, E_{22}\}, \quad O_2 = \{E_{12}, E_{21}\}.
\end{equation}
The tensors
\begin{equation}
	(O_1)^{i}_j = (E_{{11}})^i_j + (E_{22})^i_j, \quad (O_2)^{i}_j = (E_{{12}})^i_j + (E_{21})^i_j,
\end{equation}
are invariant since, for example,
\begin{equation}
	(O_1)^{(i)\sigma}_{(j)\sigma} = \delta_{1}^{(i)\sigma} \delta^1_{(j)\sigma}+ \delta_{2}^{(i)\sigma} \delta^2_{(j)\sigma} = \delta_{(1)\sigma^{-1}}^i \delta^{(1)\sigma^{-1}}_j + \delta_{(2)\sigma^{-1}}^i \delta^{(2)\sigma^{-1}}_j = (O_1)^i_j
\end{equation}
for all $\sigma \in S_2$ and form a basis for $\End_{S_2}(V_2)$.
\end{example}

The orbits $O_1, O_2$ are naturally thought of as set partitions of the set $\{i,j\}$. Since the indices of the elementary matrices in $O_1$ are equal, we think of this as the set partition $\{i,j\} = \{i,j\}$. On the other hand, the indices are distinct in the orbit $O_2$ and we think of this as corresponding to the set partition $\{i\} \dot\cup \{j\} = \{i,j\}$.
We will now give a basis for $\End_{\SN}(\VN^{\otimes k})$ based on these observations. To state the theorem, we need the following definition.
\begin{definition}[Set partition]
	Let $S$ be a set of order $k$. A set $\pi = \{\pi_1, \dots, \pi_b\}$ of non-empty subsets of $S$ is called a set partition of $S$  if
	\begin{equation}
		\pi_1 \dot\cup \pi_2 \dot\cup \dots \dot\cup \pi_b = S,
	\end{equation}
	where $\dot\cup$ is meant to emphasize that the subsets in $\pi$ are disjoint. The set of all set partitions is denoted $\setpart{S}$. A subset in a set partition is called a block and we write $\abs{\pi} = b$ for the number of blocks.
\end{definition}
\begin{example}
	As an example consider $S=\{1,2,3,4\}$. There are $15$ set partitions $\pi$ of $S$, which we write down in the following notation $\pi = \pi_1 | \pi_2 | \dots | \pi_b$.
	\begin{align}
		&1234, \quad 12|34, \quad 13|24, \quad 14|23, \quad 123|4, \quad 124|3, \quad 134|2, \quad 234|1 \\
		&1|2|34, \quad 1|3|24, \quad 1|4|23, \quad 2|3|14, \quad 2|4|13, \quad 3|4|12, \quad 1|2|3|4.
	\end{align}
\end{example}

Every orbit of the type in \eqref{eq: tensor orbits} corresponds to a set partition of the $6$ indices on $T^{j_1 j_2 j_3}_{i_1 i_2 i_3}$. For example, the orbit in \eqref{eq: tensor orbits} corresponds to $i_1 j_1 j_2 | i_2 j_3 | i_3$. However, the orbits where the number of distinct indices is larger than $\N$ are empty. Generalizing these observations to general $k$ gives the following theorem
\begin{theorem}[Orbit basis]\label{thm: orbit basis}
	The space $\End_{\SN}(\VN^{\otimes k})$ of $\SN$ equivariant maps $\VN^{\otimes k} \rightarrow \VN^{\otimes k}$ has a basis labelled by set partitions $\pi \in \setpart{\{1,\dots,k,1',\dots,k'\}}$ with $\abs{\pi} \leq \N$.
	We use the short-hand
	\begin{equation}
		[k\vert k'] = \{1,\dots,k,1',\dots,k'\}
	\end{equation}
	and
	\begin{equation}
		\setpart{[k\vert k'], \N} = \qty{\pi \in \setpart{[k\vert k']}, \,\, \abs{\pi} \leq \N  }.
	\end{equation}
	For $\pi \in \setpart{[k\vert k'], \N}$ we define $X_\pi \in \End_{\SN}(\VN^{\otimes k})$ by
	\begin{equation}
		\X_\pi(e_{i_1} {\otimes} \dots \otimes e_{i_k}) = (X_\pi)^{i_{1'} \dots i_{k'}}_{i_1 \dots i_k} e_{i_{1'}} {\otimes} \dots \otimes e_{i_{k'}},
	\end{equation}
	where
	\begin{equation} \label{eq: orbit basis action}
		(\X_\pi)^{i_{1'} \dots i_{k'}}_{i_1 \dots i_k} =  \begin{cases}
			1 \qq{if $i_a = i_b$ if and only if $a$ and $b$ are in the same block of $\pi$,} \\
			0 \qq{otherwise.}
		\end{cases}
	\end{equation}
	The set of maps $\X_\pi$ for $\pi \in \setpart{[k \vert k'], \N}$ form the so called orbit basis for $\End_{\SN}(\VN^{\otimes k})$.
\end{theorem}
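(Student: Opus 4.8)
The plan is to identify $\End_{\SN}(\VN^{\otimes k})$ with the fixed-point subspace of a natural permutation action of $\SN$ on $\End(\VN^{\otimes k})$, and then to produce a basis by averaging the standard basis of $\End(\VN^{\otimes k})$ over $\SN$-orbits. In broad strokes: (i) orbit sums of elementary tensors form a basis of the invariant subspace; (ii) those orbits are classified by set partitions with at most $\N$ blocks; (iii) the orbit sum attached to $\pi$ is exactly $\X_\pi$.

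For step (i), I would take the standard basis $\{E^{i_{1'}\cdots i_{k'}}_{i_1\cdots i_k}\}$ of $\End(\VN^{\otimes k})$, where this tensor has a single nonzero component (equal to $1$) in position $(i_1,\dots,i_k;i_{1'},\dots,i_{k'})$. Since each $\Paction{\sn}$ is a permutation matrix, conjugation by $\Paction{\sn}$ permutes these basis tensors, acting diagonally on the $2k$-tuple of indices; this is precisely the component constraint \eqref{eq: SN end constraint} read as an $\SN$-action on $\{1,\dots,\N\}^{2k}$. Hence a tensor lies in $\End_{\SN}(\VN^{\otimes k})$ iff its components are constant on $\SN$-orbits. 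Consequently the orbit sums --- for a fixed orbit $\mathcal{O} \subseteq \{1,\dots,\N\}^{2k}$, the tensor $\sum E^{i_{1'}\cdots i_{k'}}_{i_1\cdots i_k}$ with the sum ranging over all index assignments in $\mathcal{O}$ --- are invariant, have pairwise disjoint support (hence are linearly independent), and span $\End_{\SN}(\VN^{\otimes k})$ because any invariant tensor equals the sum of its restrictions to the orbits. This exhibits a basis; it remains to index it combinatorially.

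For steps (ii) and (iii), I would label the $2k$ positions by $[k\vert k']$ and assign to a tuple $(i_1,\dots,i_k,i_{1'},\dots,i_{k'})$ the set partition $\pi \in \setpart{[k\vert k']}$ whose blocks are the level sets of $a \mapsto i_a$, i.e.\ $a$ and $b$ share a block iff their entries coincide. Because every $\sn \in \SN$ is a bijection of $\{1,\dots,\N\}$, it preserves and reflects all equalities and inequalities among the entries; hence two tuples lie in the same orbit iff they induce the same set partition, and a given $\pi$ is realized by some tuple iff one can choose $\abs{\pi}$ distinct values in $\{1,\dots,\N\}$, i.e.\ iff $\abs{\pi} \le \N$. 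This yields the bijection between $\SN$-orbits and $\setpart{[k\vert k'],\N}$, and by construction the orbit sum attached to $\pi$ has components given exactly by \eqref{eq: orbit basis action}, so it equals $\X_\pi$. Together with step (i), $\{\X_\pi \mid \pi \in \setpart{[k\vert k'],\N}\}$ is a basis. As a direct cross-check that $\X_\pi \in \End_{\SN}(\VN^{\otimes k})$, I would verify equivariance straight from \eqref{eq: orbit basis action}, exactly as in Example \ref{ex: End S2 V2}.

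I expect the main (and essentially only) obstacle to be the bookkeeping around the bound $\abs{\pi} \le \N$: one must check both that a set partition with more than $\N$ blocks corresponds to the empty orbit --- hence to the zero tensor, consistent with $\X_\pi = 0$ under \eqref{eq: orbit basis action} in that regime --- and that every $\pi$ with $\abs{\pi} \le \N$ is genuinely realized by a tuple, so that the associated orbit sum is a nonzero, honest basis vector. Everything else is the standard fact that averaging a basis over a finite group action produces a basis of the invariant subspace.
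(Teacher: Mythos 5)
Your proposal is correct and follows essentially the same route the paper takes: the paper's discussion preceding the theorem (the $\SN$-orbits of elementary basis tensors, their labelling by set partitions of the $2k$ indices, and the emptiness of orbits needing more than $\N$ distinct values) is exactly your argument, with the formal details deferred to the cited references. Your write-up simply fills in those details, including the correct handling of the bound $\abs{\pi}\le\N$ and the identification of each orbit sum with $\X_\pi$.
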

\begin{proof}
	See \cite[Section §1]{Jones1994} or \cite[Theorem 5.4]{Benkart2017}.
\end{proof}
A great feature of this basis is that it works for all $\N$. A downside is that the composition of maps is inconvenient to state in this basis.

For $\N \geq 2k$ there exists a basis where multiplication has a beautiful combinatorial description
\begin{theorem}[Diagram basis]
	Let $\pi \in \setpart{[k\vert k']}$ and $\N \geq 2k$. Then the following set of maps form a basis of $\End_{\SN}(\VN^{\otimes k})$
	\begin{equation}
		(\dgram_{\pi})^{i_{1'} \dots i_{k'}}_{i_1 \dots i_k} =  \begin{cases}
		1 \qq{if $i_a = i_b$ when $a$ and $b$ are in the same block of $\pi$,} \\
		0 \qq{otherwise.}
	\end{cases}
	\end{equation}
\end{theorem}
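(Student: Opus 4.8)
The plan is to express every map $\dgram_\pi$ as an integer linear combination of the orbit basis elements $\X_\rho$ of Theorem~\ref{thm: orbit basis}, and to observe that the resulting change of basis is unitriangular with respect to the refinement order on set partitions, hence invertible.

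First I would record how $\dgram_\pi$ decomposes in the orbit basis. Given indices $i_1,\dots,i_k,i_{1'},\dots,i_{k'}\in\{1,\dots,\N\}$, write $\rho(\vec\imath)\in\setpart{[k\vert k']}$ for the set partition whose blocks are the level sets of the map $a\mapsto i_a$ (with $a$ ranging over $[k\vert k']$), so that $(\X_\rho)^{i_{1'}\dots i_{k'}}_{i_1\dots i_k}=1$ exactly when $\rho(\vec\imath)=\rho$. The condition defining $\dgram_\pi$, namely that $i_a=i_b$ for every pair $a,b$ lying in a common block of $\pi$, is precisely the statement that $\pi$ refines $\rho(\vec\imath)$. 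Comparing entries then gives
\begin{equation}
	\dgram_\pi=\sum_{\substack{\rho\in\setpart{[k\vert k']}\\ \rho\,\succeq\,\pi}}\X_\rho,
\end{equation}
the sum running over all set partitions $\rho$ coarser than $\pi$ (every block of $\pi$ contained in a block of $\rho$); in particular each $\dgram_\pi$ lies in $\End_{\SN}(\VN^{\otimes k})$.

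Next I would bring in the hypothesis $\N\ge 2k$. Since $[k\vert k']$ has $2k$ elements, every $\pi\in\setpart{[k\vert k']}$ satisfies $\abs\pi\le 2k\le\N$, so $\setpart{[k\vert k'],\N}=\setpart{[k\vert k']}$; by Theorem~\ref{thm: orbit basis} the family $\{\X_\rho:\rho\in\setpart{[k\vert k']}\}$ is then a basis of $\End_{\SN}(\VN^{\otimes k})$, whose dimension is therefore $\abs{\setpart{[k\vert k']}}$, exactly the number of maps $\dgram_\pi$. Moreover $\rho\succeq\pi$ forces $\abs\rho\le\abs\pi\le 2k\le\N$, so every $\rho$ appearing in the sum above is an admissible orbit basis label, and the displayed identity exhibits $\{\dgram_\pi\}$ as the image of the orbit basis under the square matrix $M$, indexed by $\setpart{[k\vert k']}$, with $M_{\pi\rho}=1$ if $\rho\succeq\pi$ and $M_{\pi\rho}=0$ otherwise. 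Ordering $\setpart{[k\vert k']}$ by decreasing number of blocks (ties broken arbitrarily), which is a linear extension of $\succeq$, the matrix $M$ becomes triangular with unit diagonal, since $M_{\pi\pi}=1$ and $M_{\pi\rho}\neq 0$ implies $\rho\succeq\pi$. Hence $M$ is invertible over $\mathbb Z$ — its inverse is given by the Möbius function of the partition lattice — and $\{\dgram_\pi\}$, being an invertible image of a basis, is itself a basis of $\End_{\SN}(\VN^{\otimes k})$.

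I do not expect a serious obstacle here. The one delicate point is getting the direction of refinement right in the first step — the defining condition of $\dgram_\pi$ corresponds to summing over \emph{coarsenings} of $\pi$, not refinements — together with the remark that coarsening never increases the number of blocks, which is exactly what lets the hypothesis $\N\ge 2k$ guarantee both that every summand is a legitimate orbit basis element and that the two indexing sets have the same size. This hypothesis is genuinely needed: for $\N<2k$ the full family $\{\dgram_\pi\}_{\pi\in\setpart{[k\vert k']}}$ is overcomplete, and one must instead restrict to $\pi$ with $\abs\pi\le\N$.
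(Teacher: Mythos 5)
Your proof is correct, and it actually supplies an argument where the paper gives none: at this point the thesis simply defers to Jones (and only later, in the quantum-mechanics chapter, quotes the diagram-to-orbit change of basis $d_\pi=\sum_{\pi\le\pi'}x_{\pi'}$ with its unitriangular zeta matrix, citing Benkart--Halverson). Your route is exactly that standard zeta/M\"obius argument, so it is fully consistent with the machinery the paper already has: you correctly identify that $(\dgram_\pi)_{\vec\imath}=1$ precisely when the level-set partition $\rho(\vec\imath)$ is a coarsening of $\pi$, giving $\dgram_\pi=\sum_{\rho\succeq\pi}\X_\rho$; you correctly use $\N\ge 2k$ to ensure every $\rho\in\setpart{[k\vert k']}$ (in particular every coarsening, which has at most $\abs{\pi}\le 2k\le\N$ blocks) is an admissible orbit-basis label, so the two families have the same cardinality; and the observation that a strict coarsening strictly decreases the number of blocks makes your ordering by decreasing block count a genuine linear extension, so the change-of-basis matrix is unitriangular and invertible, whence $\{\dgram_\pi\}$ spans and is independent in $\End_{\SN}(\VN^{\otimes k})$. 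Your closing remark that the hypothesis $\N\ge 2k$ is genuinely needed (for $\N<2k$ the full diagram family is linearly dependent, matching the paper's statement that $\phi_k$ is only surjective there) is also accurate. In short: correct, and a self-contained proof of a statement the paper outsources to the literature, using the same orbit-basis theorem the paper proves just before.
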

\begin{proof}
	 See \cite[Section §1]{Jones1994}. We will give formulas relating these two bases in Chapter \ref{ch: 1d}.
\end{proof}
This basis has very simple expression in terms of Kronecker deltas
\begin{example}\label{ex: kron deltas}
	Let $\pi = 123|1'|2'3'$ then
	\begin{equation}
		(\dgram_\pi)^{i_{1'} i_{2'} i_{3'}}_{i_1 i_2 i_3}  = \delta_{i_1 i_2} \delta_{i_2 i_3} \delta^{i_{2'} i_{3'}}.
	\end{equation}
\end{example}
We will give a diagrammatic description of composition in this basis shortly. First, we want to elaborate on why this algebra is important to study.

The motivation for studying the algebra $\End_{\SN}(\VN^{\otimes k})$ is the double centralizer theorem.
\begin{theorem}
	The double centralizer theorem gives a multiplicity free decomposition of the tensor power $\VN^{\otimes k}$. We write
	\begin{equation}
		\VN^{\otimes k} \cong \bigoplus_{\lambda \vdash \N}V_{\lambda} \otimes M_{\lambda},
		\label{eq: VN SW}
	\end{equation}
	where $M_{\lambda}$ are irreducible representations of $\End_{\SN}(\VN^{\otimes k})$.
\end{theorem}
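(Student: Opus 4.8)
The plan is to apply the double centralizer theorem to the pair consisting of the semisimple algebra $A = \mathbb{C}(\SN)$ and its centralizer $B = \End_{\SN}(\VN^{\otimes k}) = \End_A(\VN^{\otimes k})$ acting on $W = \VN^{\otimes k}$. The only hypothesis needing verification is that $A$ is semisimple, which is Maschke's theorem: $\mathbb{C}$ has characteristic zero, so $\mathbb{C}(\SN)$ is a product of matrix algebras and every finite-dimensional $A$-module, in particular $W$, is completely reducible.

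First I would compute $B$ explicitly, taking the $\SN$-decomposition \eqref{eq: VNotimesk}, $W \cong \bigoplus_{\lambda \vdash \N} m^\lambda_{k,\N}\, V_\lambda$, as the starting point. Since $\mathbb{C}$ is algebraically closed, Schur's lemma gives $\Hom_A(V_\lambda, V_\mu) = \delta_{\lambda\mu}\,\mathbb{C}$, whence
\begin{equation}
	B = \End_A(W) \cong \bigoplus_{\lambda:\, m^\lambda_{k,\N} \geq 1} \M_{m^\lambda_{k,\N}}(\mathbb{C}).
\end{equation}
Thus $B$ is a direct sum of full matrix algebras over $\mathbb{C}$, hence semisimple, and its pairwise non-isomorphic simple modules are exactly the natural modules $M_\lambda := \mathbb{C}^{m^\lambda_{k,\N}}$, one for each $\lambda \vdash \N$ with $m^\lambda_{k,\N} \geq 1$; in particular $\dim M_\lambda = m^\lambda_{k,\N}$.

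Next I would identify the multiplicity spaces with these $M_\lambda$ and assemble the decomposition. For each relevant $\lambda$ the space $\Hom_A(V_\lambda, W)$ has dimension $m^\lambda_{k,\N}$ and carries a $B$-action by post-composition; tracing through the block description of $B$ above shows this is precisely the action of $\M_{m^\lambda_{k,\N}}(\mathbb{C})$ on its natural module, so $\Hom_A(V_\lambda,W) \cong M_\lambda$ as $B$-modules. The evaluation maps $V_\lambda \otimes \Hom_A(V_\lambda, W) \to W$, $v \otimes \phi \mapsto \phi(v)$, are $A\otimes B$-equivariant and assemble into
\begin{equation}
	\bigoplus_{\lambda \vdash \N} V_\lambda \otimes M_\lambda \longrightarrow W,
\end{equation}
which is an isomorphism because, forgetting the $B$-action, it is the isotypic decomposition of $W$ as an $A$-module. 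Finally, the decomposition is multiplicity free because, over $\mathbb{C}$, each $V_\lambda \otimes M_\lambda$ is an irreducible $A\otimes B$-module, these are pairwise non-isomorphic (distinct $A$-irreducibles tensored with distinct $B$-irreducibles), and each appears exactly once.

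I do not anticipate a genuine obstacle: this is the standard semisimple double-centralizer package (Maschke, Schur, Artin–Wedderburn), and all the work is in the earlier chapter's decomposition \eqref{eq: VNotimesk} together with the explicit orbit/diagram bases already established for $\End_{\SN}(\VN^{\otimes k})$. The one point that deserves a word of care is checking that the $B$-module structure on the multiplicity space $\Hom_A(V_\lambda, W)$ is the abstract simple module $M_\lambda$ itself and not a multiple of it, which is immediate from the matrix-algebra description of $B$. If desired one can also record the complementary half of the theorem, namely that $\End_B(W)$ equals the image of $\mathbb{C}(\SN)$ in $\End(W)$, but it is not needed for the stated decomposition.
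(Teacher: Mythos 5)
Your proposal is correct, and it is the standard semisimple double-centralizer argument (Maschke, Schur's lemma, the block description of $\End_A(W)$, and the evaluation isomorphism). The paper does not actually prove this theorem itself: it defers to the cited reference (Etingof et al., Theorem 4.54) with only a heuristic remark, and the proof of that cited result is precisely the package you have written out, so your route coincides with the one the paper delegates to. The only cosmetic point is the one you already note: strictly, only those $\lambda$ with $m^\lambda_{k,\N}\geq 1$ contribute to \eqref{eq: VN SW}, which is consistent with the paper's subsequent restriction of the sum to $\Lambda_{k,\N}$.
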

\begin{proof}
	See \cite[Theorem 4.54]{Etingof09} for a proof of the double centralizer theorem. The idea is that $\End_{\SN}(\VN^{\otimes k})$ is the full centralizer of the image of $\SN$ in $\End(\VN^{\otimes k})$ and vice versa. In particular, every element in $\End(\VN^{\otimes k})$ commutes with every element in the image of $\SN$. Therefore, there should exist a basis where this is manifest. Proving the exact form of the decomposition in the theorem requires more care.
\end{proof}

In fact, from the restriction and induction description of tensor powers \eqref{eq: VNk is ind res} we know that above decomposition is more constrained,
\begin{equation}
	\VN^{\otimes k} \cong \bigoplus_{l=0}^k \bigoplus_{\lambda^{\#}} V_{[\N-l,\lambda^{\#}]} \otimes M_{[\N-l, \lambda{\#}]}.
\end{equation}
The Young diagram $\lambda = [\N-l,\lambda^\#]$, which is an integer partition of $\N$, is constructed by placing the diagram $\lambda^\#$ (having $l$ boxes) below a first row of $\N-l$ boxes. This follows from the fact that each repetition of restriction and induction can move at most one box below the first row.
Requiring $\lambda $ to be a valid Young diagram imposes a condition on the first row length of   $ r_1(  \lambda^{\#} )  \le  \N- l $. For $\N < 2k$ this is a non-trivial constraint, while it is trivially satisfied for all  $\lambda^{\#} $ having up to $k$ boxes  for $ \N \ge 2k$. The latter is called the stable limit.

\subsection{$\SN$ Schur-Weyl duality.}
We now reach the first main theorem of this section, which is the fact that $\End_{\SN}(\VN^{\otimes k}) \cong P_k(\N)$ for $\N \geq 2k$. As mentioned, this is a diagram algebra and the best way to introduce it is to introduce partition diagrams.
\begin{definition}[Partition diagram]
	A partition diagram is a graphical representation of a set partition $\pi \in \setpart{[k \vert k']}$. Consider an undirected graph $d$ with $k$ vertices labelled by $\{1,\dots,k\}$ and another set of $k$ vertices labelled by $\{1',\dots,k'\}$. We allow at most one edge connecting a pair of vertices and disallow edges connected to the same vertex (loops). Every such graph $d$ corresponds to a set partition $\pi$ through the following map:
	\begin{equation}
		d \mapsto \pi
	\end{equation}
	where $\pi$ is a set partition where $a$ is in the same subset as $b$ if they are connected by an edge in $d$. This map is surjective but not bijective -- two different graphs can correspond to the same set partition. For convenience these graphs are drawn with the first $k$ vertices ordered horizontally in a row below the second set of $k$ vertices (see example below).
\end{definition}
\begin{example}
	Many of these aspects can be seen already for $k=2$. First, consider a set of diagrams and their corresponding set partitions
	\begin{align}
		&\PAdiagramLabeled{2}{-1/1,-2/2}{} \mapsto 11'|22', \quad 	\PAdiagramLabeled{2}{-1/1}{} \mapsto 11'|2|2', \\
		&\PAdiagramLabeled{2}{-1/1,-1/-2,-2/2}{} \mapsto 11'22', \quad 	\PAdiagramLabeled{2}{-1/1,1/2,-2/2}{} \mapsto 11'22'.
	\end{align}
	The last two examples show that two different diagrams can correspond to the same set partition.
\end{example}
We think of two partition diagrams that correspond to the same set partition as equivalent. When we use the term the set of partition diagrams, we mean the set of equivalence classes under this identification. We write $d_\pi$ for any representative diagram that corresponds to the set partition $\pi$.

It is often useful to consider the reflection of a partition diagram.
\begin{definition}[Transpose diagram]\label{def: transpose diagram}
Let $d_\pi$ be a partition diagram corresponding to the set partition $\pi$. Then the transpose $d_\pi^T$ is the diagram corresponding to the set partition $\pi^T$ where all unprimed integers are turned into primed integers and all primed integers are turned into unprimed integers. Geometrically this is a horizontal reflection of the diagram $d_\pi$.
\end{definition}
\begin{definition}(Diagram basis)
Consider the vector space
\begin{equation}
	P_k = \Span(d_\pi \, \vert \, \pi \in \setpart{[k\vert k']}).
\end{equation}
We call this the diagram basis for $P_k$.
It has finite dimension $\dim P_k = B(2k)$. The Bell number $B(2k)$ is the number of possible partitions of a set with $2k$ distinct elements and can be computed using the generating function
\begin{equation}
	\sum_{k=0}^\infty \frac{B(k)}{k!}x^k = e^{e^x -1},
\end{equation}
from which one finds $B(2k) = 2, 15, 203, 4140$ for $k=1,2,3,4$.
\end{definition}
The vector space $P_k$ is turned into an algebra $P_k(\N)$ as follows.
\begin{definition}[Partition algebra]\label{def: partition algebra}
The partition algebras $P_k(N)$ for $k=1,2,\dots$ are vector spaces $P_k$ with multiplication defined through diagram concatenation (in the diagram basis). Let $d_{\pi}$ and $d_{\pi'}$ be two diagrams in $P_k$.
The composition $d_{\pi''} = d_{\pi} d_{\pi'}$ is constructed by placing $d_{\pi}$ above $d_{\pi'}$ and identifying the bottom vertices of $d_{\pi}$ with the top vertices of $d_{\pi'}$. The diagram is simplified by following the edges connecting the bottom vertices of $d_{\pi'}$ to the top vertices of $d_{\pi}$. Any connected components within  the middle rows are removed and we multiply by $\N^c$, where $c$ is the number of these complete blocks removed. Multiplication is independent of the choice of representative diagram. For linear combinations of diagrams, multiplication is defined by linear extension.
\end{definition}
\begin{example}
For example,
\begin{equation}
	\begin{aligned}
		&\PAdiagram{3}{-3/2}{-1/-2} \\
		&\PAdiagram{3}{-3/3}{2/1,-2/-3}
	\end{aligned} = \N \PAdiagram{3}{-3/2}{-2/-3}
	\qq{and} 
	\begin{aligned}
		&\PAdiagram{3}{-1/1,-2/3,-3/2}{} \\
		&\PAdiagram{3}{-1/2,-3/3}{-2/-3}
	\end{aligned} = \PAdiagram{3}{-1/3,-3/2}{-2/-3},
\end{equation}
where the factor of $\N$ in the first equation comes from removing the middle component at vertex $1$ and $2$.
\end{example}

Partition algebras allow us to study $\VN^{\otimes k}$ and $\End_{S_N}(\VN^{\otimes k})$ concretely, using Schur-Weyl duality.
\begin{theorem}[$\SN$ Schur-Weyl duality] \label{thm: VN Schur-Weyl}
	The map
	\begin{align}
		\phi_k: \, &P_k(\N) \longrightarrow \End_{\SN}(\VN^{\otimes k}) \\
		&d_\pi \mapsto (\dgram_{\pi})^{i_{1'} \dots i_{k'}}_{i_1 \dots i_k},
	\end{align}
	is a surjective algebra homomorphism.
\end{theorem}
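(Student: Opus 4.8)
I would verify in turn that $\phi_k$ is well defined and linear, that it is multiplicative, and that it is onto; I expect the multiplicativity step to be the only one requiring genuine care.

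\emph{Well-definedness and linearity.} By the definition of the partition algebra (Definition~\ref{def: partition algebra}) the diagram classes $d_\pi$, $\pi\in\setpart{[k\vert k']}$, form a basis of $P_k(\N)$, so it suffices to check that $d_\pi\mapsto(\dgram_\pi)$ is independent of the representative graph used to draw $d_\pi$. This is immediate: the entry $(\dgram_\pi)^{i_{1'}\cdots i_{k'}}_{i_1\cdots i_k}$ is defined solely in terms of which of the $2k$ vertices lie in a common block of $\pi$, hence depends only on the set partition. Extending by linearity defines a linear map $\phi_k\colon P_k(\N)\to\End(\VN^{\otimes k})$, and each $\dgram_\pi$ actually lies in $\End_{\SN}(\VN^{\otimes k})$, since the condition ``$i_a=i_b$ whenever $a$ and $b$ lie in the same block of $\pi$'' is stable under the simultaneous relabelling $i_c\mapsto i_{(c)\sn}$. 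As a sanity check, the propagating diagram $\{11',\dots,kk'\}$ is sent to $\mathrm{id}_{\VN^{\otimes k}}$, so $\phi_k$ is unital.

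\emph{Multiplicativity.} Fix diagrams $d_\pi,d_{\pi'}$. Writing the composite endomorphism in coordinates,
\[
\bigl(\dgram_\pi\dgram_{\pi'}\bigr)^{\,i_{1'}\cdots i_{k'}}_{\,i_1\cdots i_k}
=\sum_{m_1,\dots,m_k=1}^{\N}(\dgram_\pi)^{i_{1'}\cdots i_{k'}}_{m_1\cdots m_k}\,(\dgram_{\pi'})^{m_1\cdots m_k}_{i_1\cdots i_k}.
\]
Regard the $m$'s, together with the primed and unprimed $i$'s, as labelling the three rows of vertices obtained by stacking $d_\pi$ above $d_{\pi'}$ and identifying the bottom row of $d_\pi$ with the top row of $d_{\pi'}$; let $\widehat\pi$ be the join, on these $3k$ vertices, of the set partition induced by $\pi$ on the (top, middle) rows and the one induced by $\pi'$ on the (middle, bottom) rows. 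Both Kronecker factors equal $1$ exactly when the assigned index values are constant on the blocks of $\widehat\pi$. Summing over $m_1,\dots,m_k$: a block of $\widehat\pi$ meeting the top or bottom row has its value pinned by the external indices, and the totality of these blocks forces exactly the constraint that the external indices be constant on the blocks of the set partition $\pi''$ obtained by restricting $\widehat\pi$ to the external vertices; a block of $\widehat\pi$ contained entirely in the middle row contributes an unconstrained sum over $\N$ values. Hence
\[
\bigl(\dgram_\pi\dgram_{\pi'}\bigr)^{\,i_{1'}\cdots i_{k'}}_{\,i_1\cdots i_k}=\N^{\,c}\,(\dgram_{\pi''})^{\,i_{1'}\cdots i_{k'}}_{\,i_1\cdots i_k},
\]
with $c$ the number of connected components of the stacked diagram living entirely in the middle row. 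This is precisely the diagram-concatenation rule of Definition~\ref{def: partition algebra}, so $\phi_k(d_\pi d_{\pi'})=\phi_k(d_\pi)\phi_k(d_{\pi'})$, and bilinearity upgrades this to all of $P_k(\N)$.

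\emph{Surjectivity.} Compare $\dgram_\pi$ with the orbit basis of Theorem~\ref{thm: orbit basis}. Given a tuple of indices on the $2k$ vertices, let $\sigma$ be the set partition recording which entries coincide; then $\X_\sigma$ has entry $1$ precisely on tuples whose recorded partition is $\sigma$, whereas $\dgram_\pi$ has entry $1$ precisely on tuples whose recorded partition is a coarsening of $\pi$. Consequently $\dgram_\pi=\sum_{\sigma\succeq\pi}\X_\sigma$, the sum over set partitions $\sigma$ coarser than $\pi$ (all satisfying $\abs{\sigma}\le\abs{\pi}$, with the convention $\X_\sigma=0$ when $\abs{\sigma}>\N$). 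Möbius inversion on the partition lattice gives, for every $\pi$ with $\abs{\pi}\le\N$,
\[
\X_\pi=\sum_{\sigma\succeq\pi}\mu(\pi,\sigma)\,\dgram_\sigma ,
\]
where every $\sigma$ occurring also has $\abs{\sigma}\le\N$, so the right-hand side is $\phi_k$ applied to an element of $P_k(\N)$. Since the $\X_\pi$ with $\abs{\pi}\le\N$ span $\End_{\SN}(\VN^{\otimes k})$, the map $\phi_k$ is surjective.

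\emph{Main obstacle.} The delicate point is the multiplicativity step: one must match the purely diagrammatic prescription ``delete interior loops, multiply by $\N^{c}$'' against the index count in the matrix product and confirm the count is an honest power of $\N$. This works because the diagram-basis entries of $\dgram_\pi$ impose only the equalities dictated by $\pi$ and no inequalities between distinct blocks, so the interior blocks of $\widehat\pi$ are genuinely free; for the orbit basis $\X_\pi$ the analogous product carries Möbius-type corrections, which is exactly why the diagram basis is the convenient one. Beyond this, one should confirm that $\pi''$ and $c$ are independent of the chosen representative graphs, which follows from the representative-independence established at the outset.
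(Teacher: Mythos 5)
Your proof is correct. For the homomorphism property you take the same route the paper sketches: the Kronecker-delta realisation of $\dgram_\pi$ reproduces diagram concatenation, and your index count over the middle row — one pinned value per block meeting an external vertex, a free sum of size $\N$ per purely interior block — is exactly the detailed verification the paper omits and delegates to Jones \cite{Jones1994}. Where you genuinely add something is surjectivity: the paper's proof says nothing about it beyond the citation (it only remarks that outside the stable limit $\phi_k$ acquires a kernel), whereas you derive it from Theorem \ref{thm: orbit basis} by writing $\dgram_\pi=\sum_{\sigma\succeq\pi}\X_\sigma$ and inverting with the M\"obius function of the partition lattice, correctly observing that coarsening can only decrease the number of blocks so the inversion stays inside the span of diagrams with at most $\N$ blocks, and that $\X_\sigma=0$ when $\abs{\sigma}>\N$. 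This makes the argument self-contained modulo Theorem \ref{thm: orbit basis} (itself cited in the paper), at the cost of leaning on the orbit basis; the paper's choice buys brevity by outsourcing both steps to the literature. Your remark that the diagram basis imposes only equalities (no inequalities between blocks), so interior sums are honest powers of $\N$ while the orbit basis would pick up M\"obius-type corrections, is precisely the right justification for why the diagram basis is the one in which concatenation is multiplicative.
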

\begin{proof}
	The Kronecker delta representation given in Example \ref{ex: kron deltas} exactly reproduces the diagram multiplication described above. However, outside the stable limit $P_k(\N)$ is larger than $\End_{S_N}(\VN^{\otimes k})$ and one will find non-trivial linear dependence among the elements in the image of $\phi_k$. We refer to \cite{Jones1994} for a detailed proof of the theorem.
\end{proof}
For most purposes in this thesis, we will consider the stable limit, where the above homomorphism becomes an isomorphism.  In other words, the map that associates with each diagram a product of Kronecker deltas for every edge is a realisation of $P_k(\N)$.
\begin{theorem}
	 In the stable limit we can write the decomposition \eqref{eq: VN SW} in terms of $\SN \times P_k(\N)$ representations
	\begin{align} \label{eq: VN SW simple}
		\VN^{\otimes k} = \bigoplus_{\lambda  \in \Lambda_{k,\N} } V_{\lambda} \otimes Z_{\lambda},
	\end{align}
	where
	\begin{equation}
		\Lambda_{k,\N} = \{\lambda \vdash \N \, \vert \, \abs*{\lambda^{\#}} \leq k\},
	\end{equation}
	is a labelling set of irreducible representations $Z_{\lambda}$ of $P_k(\N)$.
\end{theorem}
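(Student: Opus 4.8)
The plan is to combine the double centralizer decomposition \eqref{eq: VN SW} with the Schur--Weyl homomorphism $\phi_k$ of Theorem \ref{thm: VN Schur-Weyl}. In the stable limit $\N \geq 2k$ one knows (as already noted just after Theorem \ref{thm: VN Schur-Weyl}) that $\phi_k\colon P_k(\N) \to \End_{\SN}(\VN^{\otimes k})$ is an algebra isomorphism, so the $\End_{\SN}(\VN^{\otimes k})$-module structure carried by each multiplicity space $M_\lambda$ in \eqref{eq: VN SW} pulls back along $\phi_k$ to a $P_k(\N)$-module; I would set $Z_\lambda$ to be $M_\lambda$ with $P_k(\N)$ acting through $\phi_k$. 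Since the $\SN$- and $\End_{\SN}(\VN^{\otimes k})$-actions on $\VN^{\otimes k}$ commute, so do the $\SN$- and $P_k(\N)$-actions, and \eqref{eq: VN SW} becomes the asserted decomposition of $\VN^{\otimes k}$ into $\SN \times P_k(\N)$-bimodules. It then remains to (i) determine exactly which $\lambda$ give $Z_\lambda \neq 0$, and (ii) check that those $Z_\lambda$ form a complete, irredundant list of irreducible $P_k(\N)$-modules.

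For (i) I would use the restriction--induction description $\VN^{\otimes k} \cong (\mathrm{Ind}\,\mathrm{Res})^k(V_{[\N]})$ from Theorem \ref{thm: ind res VNotimesk} together with the branching rules \eqref{eq: RES SN} and \eqref{eq: IND SN}. Each $\mathrm{Res}$-then-$\mathrm{Ind}$ step removes one box from a Young diagram and adds one back, so it moves at most one extra box below the first row; after $k$ such steps only partitions $\lambda \vdash \N$ with $\abs{\lambda^{\#}} \leq k$, i.e. $\lambda \in \Lambda_{k,\N}$, can appear, so $Z_\lambda = 0$ for $\lambda \notin \Lambda_{k,\N}$. For the converse I would invoke Theorem \ref{thm: multi is vac tab}: $m^\lambda_{k,\N}$ equals the number of vacillating tableaux of shape $\lambda$ and length $k$, and one constructs at least one such tableau for every $\lambda \in \Lambda_{k,\N}$ by peeling the boxes of $\lambda^{\#}$ off through the first row one at a time and restoring them, checking along the way that the intermediate shapes satisfy the constraints of Definition \ref{def: vac tableu}; a short induction on $\abs{\lambda^{\#}}$ finishes this, so $Z_\lambda \neq 0$ precisely for $\lambda \in \Lambda_{k,\N}$.

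For (ii), the structural input is that $P_k(\N)$ is semisimple in the stable limit: via $\phi_k$ it is isomorphic to $\End_{\SN}(\VN^{\otimes k})$, the endomorphism algebra of the semisimple $\mathbb{C}(\SN)$-module $\VN^{\otimes k}$, which by Schur's lemma is a product of matrix algebras and hence semisimple. A faithful module over a semisimple algebra contains every simple summand of that algebra, and $\VN^{\otimes k}$ is faithful for $P_k(\N)$ because $\phi_k$ is injective; therefore the $Z_\lambda$ occurring in \eqref{eq: VN SW} exhaust the simple $P_k(\N)$-modules up to isomorphism. That the $Z_\lambda$ with distinct $\lambda$ are pairwise non-isomorphic is part of the double centralizer theorem (the bijection between the simple constituents of $\mathbb{C}(\SN)$ and of $\End_{\SN}(\VN^{\otimes k})$ on $\VN^{\otimes k}$). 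Combining, $\{Z_\lambda \mid \lambda \in \Lambda_{k,\N}\}$ is a complete set of non-isomorphic irreducible representations of $P_k(\N)$.

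The main obstacle is the combinatorics in step (i): the ``nothing outside $\Lambda_{k,\N}$'' half is a clean box count, but the ``everything in $\Lambda_{k,\N}$ really occurs'' half needs an explicit vacillating-tableau construction for each prescribed shape, keeping the intermediate partitions inside $\Lambda_{i,\N}$ and $\Lambda_{i+\frac12,\N}$. Everything else is either cited here (the isomorphism $\phi_k$ in the stable limit, the double centralizer theorem) or is the formal transport of module structure along an algebra isomorphism.
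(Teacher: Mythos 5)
Your proposal is correct and follows essentially the same route as the paper, which proves the theorem by combining the stable-limit isomorphism $P_k(\N) \cong \End_{\SN}(\VN^{\otimes k})$ of Theorem \ref{thm: VN Schur-Weyl} with the double centralizer theorem (citing Halverson--Jacobson for the details). The extra steps you spell out -- the box-counting via restriction--induction to cut the label set down to $\Lambda_{k,\N}$, the existence of a vacillating tableau for each such $\lambda$, and the faithfulness/semisimplicity argument for completeness -- are precisely the content the paper delegates to its references, so they are a welcome elaboration rather than a different method.
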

\begin{proof}
	This follows from the isomorphism discussed in the previous theorem, together with the double centralizer theorem, see \cite[Section 2.5]{Halverson2018}.
\end{proof}
This implies that $\dim Z_\lambda = \dimPk{\lambda}$.

We take the r.h.s. of \eqref{eq: VN SW simple} to have an orthonormal basis
\begin{equation}
	\{E^\lambda_a \otimes E^\lambda_\alpha \, \vert \, a=1,\dots, \dimSN{\lambda}, \alpha =1,\dots, m^\lambda_{k,\N}, \lambda \in \Lambda_{k,\N} \}. \label{eq: VN SW basis}
\end{equation}
with respect to the inner product
\begin{equation}
	(e_{i_{1}} \otimes {\dots} \otimes e_{i_{k}}, e_{i_{1'}} \otimes {\dots} \otimes e_{i_{k'}}) = \begin{cases}
		1 \qq{if $i_a = i_{a'}$ for all $a=1,\dots,k$}\\
		0 \qq{otherwise.}
	\end{cases} \label{eq: VNk inner prod}
\end{equation}
They are related to the l.h.s. by Clebsch-Gordan coefficients
\begin{equation}
	E^\lambda_a \otimes E^\lambda_\alpha = (C^{\lambda}_{a,\alpha})^{i_1 i_2 \dots i_k} e_{i_1} \otimes \dots \otimes e_{i_k}, \label{eq: VNk clebsch}
\end{equation}
or inversely
\begin{equation}
	e_{i_1}^{} \otimes \dots \otimes e_{i_k}^{} = (C_\lambda^{a, \alpha})_{i_1 \dots i_k}E_a^\lambda \otimes E_\alpha^\lambda.
\end{equation}
It will be useful to introduce diagram notation for these equations. We write the Clebsch-Gordan coefficients in \eqref{eq: VNk clebsch} as
\begin{equation}
	(C^{\lambda}_{a,\alpha})^{i_1 i_2 \dots i_k} = \vcenter{\hbox{
			\tikzset{every picture/.style={line width=0.75pt}} 
			
			\begin{tikzpicture}[x=0.75pt,y=0.75pt,yscale=-1,xscale=1]
				
				\draw    (230,170) -- (230,180) ;
				\draw   (210,180) -- (250,180) -- (250,200) -- (210,200) -- cycle ;
				\draw    (220,200) -- (220,210) ;
				\draw    (240,200) -- (240,210) ;
				
				\draw (222,183) node [anchor=north west][inner sep=0.75pt]    {$C^{\lambda }$};
				\draw (212,159) node [anchor=north west][inner sep=0.75pt]  [font=\tiny]  {$i_{1}$};
				\draw (239,159) node [anchor=north west][inner sep=0.75pt]  [font=\tiny]  {$i_{k}$};
				\draw (212,209) node [anchor=north west][inner sep=0.75pt]  [font=\tiny]  {$a$};
				\draw (239,209) node [anchor=north west][inner sep=0.75pt]  [font=\tiny]  {$\alpha $};
	\end{tikzpicture}} }
\end{equation}

As matrix elements of equivariant maps, they have the property
\begin{align}
	(C^{\lambda}_{a,\alpha})^{i_1 i_2 \dots i_k} D_\pi(e_{i_1} \otimes \dots \otimes e_{i_k} )= E^\lambda_a \otimes D^\lambda_{\beta \alpha }(d_\pi)E^\lambda_\beta \label{eq: VNk clebsch PkN equivariance}\\
	(C^{\lambda}_{a,\alpha})^{i_1 i_2 \dots i_k} \P_\sn(e_{i_1} \otimes \dots \otimes e_{i_k} )= \P^\lambda_{ba}(\sn)E^\lambda_b \otimes E^\lambda_\beta, \label{eq: VNk clebsch SN equivariance}
\end{align}
where $\P^\lambda(\sn), D^\lambda(d_\pi)$ are irreducible representations of $\SN$ and $P_k(N)$ respectively.
Diagrammatically, equivariance takes the form
\begin{equation}
	\vcenter{\hbox{

			\tikzset{every picture/.style={line width=0.75pt}} 
			
			\begin{tikzpicture}[x=0.75pt,y=0.75pt,yscale=-1,xscale=1]
				
				\draw    (400,170) -- (400,180) ;
				\draw   (380,180) -- (420,180) -- (420,200) -- (380,200) -- cycle ;
				\draw    (390,200) -- (390,210) ;
				\draw    (410,200) -- (410,210) ;
				\draw   (380,150) -- (420,150) -- (420,170) -- (380,170) -- cycle ;
				\draw    (400,140) -- (400,150) ;
				
				\draw (389,180) node [anchor=north west][inner sep=0.75pt]    {$C^{\lambda }$};
				\draw (379,209) node [anchor=north west][inner sep=0.75pt]  [font=\tiny]  {$a$};
				\draw (405,209) node [anchor=north west][inner sep=0.75pt]  [font=\tiny]  {$\alpha $};
				\draw (390,152) node [anchor=north west][inner sep=0.75pt]    {$D_{\pi }$};
				\draw (379,129) node [anchor=north west][inner sep=0.75pt]  [font=\tiny]  {$i_{1}$};
				\draw (405,129) node [anchor=north west][inner sep=0.75pt]  [font=\tiny]  {$i_{k}$};

	\end{tikzpicture}}} = \vcenter{\hbox{

\tikzset{every picture/.style={line width=0.75pt}} 

\begin{tikzpicture}[x=0.75pt,y=0.75pt,yscale=-1,xscale=1]

\draw    (470,140) -- (470,150) ;
\draw   (450,150) -- (490,150) -- (490,170) -- (450,170) -- cycle ;
\draw    (460,170) -- (460,210) ;
\draw    (480,170) -- (480,180) ;
\draw   (470,180) -- (490,180) -- (490,200) -- (470,200) -- cycle ;
\draw    (480,200) -- (480,210) ;

\draw (459,150) node [anchor=north west][inner sep=0.75pt]    {$C^{\lambda }$};
\draw (449,129) node [anchor=north west][inner sep=0.75pt]  [font=\tiny]  {$i_{1}$};
\draw (475,129) node [anchor=north west][inner sep=0.75pt]  [font=\tiny]  {$i_{k}$};
\draw (449,209) node [anchor=north west][inner sep=0.75pt]  [font=\tiny]  {$a$};
\draw (475,209) node [anchor=north west][inner sep=0.75pt]  [font=\tiny]  {$\alpha $};
\draw (470,181) node [anchor=north west][inner sep=0.75pt]    {$d_{\pi }$};
\draw (435,219) node [anchor=north west][inner sep=0.75pt]  [font=\tiny]  {$i_{k}$};
\end{tikzpicture}} }
\end{equation}
and
\begin{equation}
	\vcenter{\hbox{

			\tikzset{every picture/.style={line width=0.75pt}} 
			
			\begin{tikzpicture}[x=0.75pt,y=0.75pt,yscale=-1,xscale=1]
				
				\draw    (350,260) -- (350,270) ;
				\draw   (330,270) -- (370,270) -- (370,290) -- (330,290) -- cycle ;
				\draw    (340,290) -- (340,300) ;
				\draw    (360,290) -- (360,300) ;
				\draw   (330,240) -- (370,240) -- (370,260) -- (330,260) -- cycle ;
				\draw    (350,230) -- (350,240) ;
				
				\draw (339,270) node [anchor=north west][inner sep=0.75pt]    {$C^{\lambda }$};
				\draw (329,299) node [anchor=north west][inner sep=0.75pt]  [font=\tiny]  {$a$};
				\draw (355,299) node [anchor=north west][inner sep=0.75pt]  [font=\tiny]  {$\alpha $};
				\draw (337,242) node [anchor=north west][inner sep=0.75pt]    {$P_{\sigma }{}$};
				\draw (329,219) node [anchor=north west][inner sep=0.75pt]  [font=\tiny]  {$i_{1}$};
				\draw (355,219) node [anchor=north west][inner sep=0.75pt]  [font=\tiny]  {$i_{k}$};

	\end{tikzpicture}} } = \vcenter{\hbox{

\tikzset{every picture/.style={line width=0.75pt}} 

\begin{tikzpicture}[x=0.75pt,y=0.75pt,yscale=-1,xscale=1]

\draw    (430,230) -- (430,240) ;
\draw   (410,240) -- (450,240) -- (450,260) -- (410,260) -- cycle ;
\draw    (420,260) -- (420,270) ;
\draw    (420,290) -- (420,300) ;
\draw   (410,270) -- (430,270) -- (430,290) -- (410,290) -- cycle ;
\draw    (440,260) -- (440,300) ;

\draw (419,240) node [anchor=north west][inner sep=0.75pt]    {$C^{\lambda }$};
\draw (409,219) node [anchor=north west][inner sep=0.75pt]  [font=\tiny]  {$i_{1}$};
\draw (435,219) node [anchor=north west][inner sep=0.75pt]  [font=\tiny]  {$i_{k}$};
\draw (409,299) node [anchor=north west][inner sep=0.75pt]  [font=\tiny]  {$a$};
\draw (435,299) node [anchor=north west][inner sep=0.75pt]  [font=\tiny]  {$\alpha $};
\draw (413,275) node [anchor=north west][inner sep=0.75pt]    {$\sigma $};

\end{tikzpicture}} }
\end{equation}
Requiring orthonormality
\begin{equation}
	(E_a^\lambda \otimes E_\alpha^\lambda, E_b^{\lambda'} \otimes E_\beta^{\lambda'}) = \delta_{ab} \delta_{\alpha \beta} \delta^{\lambda \lambda'},
\end{equation}
gives
\begin{equation}
	\sum_{i_1 \dots i_k} [(C^{\lambda}_{a,\alpha})^{i_1 i_2 \dots i_k}]^* (C^{\lambda'}_{b,\beta})^{i_1 i_2 \dots i_k} = \delta_{ab} \delta_{\alpha \beta} \delta^{\lambda \lambda'}.
\end{equation}
In other words,
\begin{equation}
	[(C^{\lambda}_{a,\alpha})^{i_1 i_2 \dots i_k} ]^*= (C_\lambda^{a, \alpha})_{i_1 \dots i_k}.
\end{equation}
However, Clebsch-Gordan coefficients for $S_N$ can be chosen real \cite[Section 7.14]{Hamermesh1962}. Therefore
\begin{equation}
	(C^{\lambda}_{a,\alpha})^{i_1 i_2 \dots i_k}= (C_\lambda^{a, \alpha})_{i_1 \dots i_k}. \label{eq: clebsch orthonormality}
\end{equation}

\begin{example}
	It may be useful for the reader to consider Schur-Weyl duality in the simple case of $\VN^{\otimes 1}=\VN$. The decomposition, including Clebsch-Gordan coefficients, was given in \ref{prop: VN decomp}. From this, we may deduce the irreducible representations of $P_1(\N)$. Note that $P_1(\N)$ has two non-isomorphic irreducible representations, both of dimension one, since the decomposition of $\VN$ is multiplicity free. To find the irreducible representation $Z_{[\N]}$, we act on $E^{[\N]}$ with the non-identity element in $P_1(\N)$:
	\begin{equation}
		 \PAdiagram{1}{}{}(E^{[\N]}) = \frac{1}{\sqrt{N}} \sum_{i=1}^N \PAdiagram{1}{}{}(e_i) = \frac{1}{\sqrt{N}} \sum_{i=1}^N \sum_{j=1}^N e_j = \N E^{[\N]}.
 	\end{equation}
 	Therefore, the non-identity element of $P_1(\N)$ is represented by the number $\N$. To find the irreducible representation $Z_{[\N-1,1]}$, we may use any of the vectors $E^{[\N-1,1]}_a$. For example, we have
 	\begin{equation}
 		\PAdiagram{1}{}{}(E^{[\N-1,1]}_1) =\frac{1}{\sqrt{2}}\qty(\PAdiagram{1}{}{}(e_1) - \PAdiagram{1}{}{}(e_2))=\frac{1}{\sqrt{2}}\qty(\sum_i e_i - \sum_i e_i)=0.
 	\end{equation}
 	Therefore, in this irreducible representation, the non-identity element is represented by $0$.
\end{example}

One of the many powerful consequences of Schur-Weyl duality is that it allows us to relate central elements in the group algebra of $\SN$ to elements in the center of partition algebras.
\begin{definition}[$k$-dual center]	\label{def: kdual}
Let $z \in \mathcal{Z}[\mathbb{F}(\SN)]$ have expansion
\begin{equation}
	z = \sum_{{\sn }\in \SN} a_\sn \sn.
\end{equation}
We define a corresponding element $\P_z \in \End_{S_N}(\VN^{\otimes k})$ by
\begin{equation}
	\P_z = \sum_{\sn \in \SN} a_{\sn} \P_\sn.
\end{equation}
Since
\begin{equation}
	\P_z \in \End_{S_N}(\VN^{\otimes k}) \cong \P_k(\N)
\end{equation}
for $\N \geq 2k$, there exists
\begin{equation}
	D_z = \sum_{\pi \in \setpart{[k \vert k']}} a_\pi D_\pi
\end{equation}
such that
\begin{equation}
	\P_z(e_{i_1} \otimes {\dots} \otimes e_{i_k}) = D_z(e_{i_1} \otimes {\dots} \otimes e_{i_k})= (D_z)^{i_{1'} \dots i_{k'}}_{i_1 \dots i_k}e_{i_{1'}} \otimes {\dots} \otimes e_{i_{k'}}. \label{eq: kdual def}
\end{equation}
This fixes the coefficients $a_\pi$ and we define $d_z \in P_k(N)$ by
\begin{equation}
	d_z = \sum_{\pi \in \setpart{[k \vert k']}} a_\pi d_\pi.
\end{equation}
For fixed $k$, we call the image under the map $z \mapsto d_z$ the $k$-dual center $\Zdual[P_k(N)]$.
\end{definition}
\begin{example}\label{ex: kdual}
	A useful example to consider is the $k=1$ dual of the sum of transpositions
	\begin{equation}
		z = \sum_{i < j} (ij).
	\end{equation}
	To compute $D_z$, we will use a trick involving elementary matrices
	\begin{equation}
		(E^j_i)^l_k = \delta^{jl} \delta_{ik}.
	\end{equation}
	Introduce the matrix-valued map $E: \VN \rightarrow \End(\VN) \otimes \VN$ by
	\begin{equation}
		E(e_i) = E^j_i e_j.
	\end{equation}
	Let $D_\pi$ be a diagram basis element for $\End_{\SN}(\VN)$, we have the following property
	\begin{equation}
		\Tr_{\VN}(E D_\pi)^i_j = [E^k_l (D_\pi)_k^l]^i_j = (E^k_l)^i_j (D_\pi)^k_l = \delta^{ki}\delta_{lj}(D_\pi)^k_l = (D_\pi)^i_j.
	\end{equation}
	For example, we have
	\begin{equation}
		\Tr_{\VN}(E D_{12}) = \sum_i E^i_i = D_{12}, \quad 	\Tr_{\VN}(E D_{1|2}) = \sum_{i,j} E^i_j = D_{1|2}. \label{eq: elem matrix diagram dual}
	\end{equation}
	
	Now note that the linear operator corresponding to transpositions can be written as elementary matrices
	\begin{equation}
		P_{(ij)} = \qty(\sum_k E^k_k) + E^i_j + E^j_i - E^i_i- E^j_j.
	\end{equation}
	and $z$ can be re-written as an unrestricted sum
	\begin{equation}
		z = \frac{1}{2}\sum_{i \neq j} (ij).
	\end{equation}
	Therefore,
	\begin{equation}
		P_z = \frac{1}{2}\sum_{i \neq j} \qty[\qty(\sum_k E^k_k) + E^i_j + E^j_i - E^i_i- E^j_j].
	\end{equation}
	We re-write
	\begin{equation}
		\sum_{i \neq j} E^i_j = \sum_{i,j} E^i_j - \sum_k E^k_k, \qq{and } \sum_{i\neq j} E^i_i = (\N-1)\sum_{k} E^k_k
	\end{equation}
	to get
	\begin{align}
		P_z &=  \frac{\N(\N-1)}{2}\sum_k  E^k_k + \sum_{i,j} E^i_j  - \N \sum_{k} E^k_k
	\end{align}
	From the previous observation \eqref{eq: elem matrix diagram dual} about sums of elementary matrices we can identify
	\begin{equation}
		D_z = D_{1|1'} + \binom{\N}{2} D_{11'} - \N D_{11'},
	\end{equation}
	and
	\begin{equation}
		d_z = \PAdiagram{1}{}{} + \binom{\N}{2}\PAdiagram{1}{-1/1}{}-\N \PAdiagram{1}{-1/1}{}.
	\end{equation}
\end{example}

\begin{proposition}
	The $k$-dual center is a subalgebra of the full center of the partition algebra: $\Zdual[P_k(N)] \subseteq \mathcal{Z}[P_k(N)]$.
\end{proposition}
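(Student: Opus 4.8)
The plan is to show that the map $z \mapsto d_z$ is an algebra homomorphism from the center $\mathcal{Z}[\mathbb{F}(\SN)]$ into $P_k(\N)$, and that its image lands inside $\mathcal{Z}[P_k(\N)]$; since the image of an algebra under a homomorphism is a subalgebra, both claims of the proposition follow at once. First I would observe that the composite map $z \mapsto \P_z = \sum_\sn a_\sn \P_\sn$ is manifestly linear, and it is multiplicative because $\P_{\sn_1}\P_{\sn_2} = \P_{\sn_1\sn_2}$ (the defining representation is a homomorphism, as proved after equation~\ref{eq: perm action}); hence $z \mapsto \P_z$ is an algebra homomorphism $\mathbb{F}(\SN) \to \End(\VN^{\otimes k})$, and its restriction to the center is an algebra homomorphism into $\End_{\SN}(\VN^{\otimes k})$ — indeed $\P_z$ commutes with every $\P_\sn$ precisely because $z$ is central. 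In the stable limit $\N \geq 2k$, Theorem~\ref{thm: VN Schur-Weyl} gives an algebra isomorphism $\phi_k: P_k(\N) \xrightarrow{\sim} \End_{\SN}(\VN^{\otimes k})$, so $d_z := \phi_k^{-1}(\P_z)$ is well-defined (this is exactly Definition~\ref{def: kdual}), and $z \mapsto d_z = \phi_k^{-1} \circ (z \mapsto \P_z)$ is a composition of algebra homomorphisms, hence an algebra homomorphism $\mathcal{Z}[\mathbb{F}(\SN)] \to P_k(\N)$. Therefore $\Zdual[P_k(N)]$, being the image of this homomorphism, is a subalgebra of $P_k(\N)$.

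It remains to check that this subalgebra lies inside $\mathcal{Z}[P_k(\N)]$. For this I would take an arbitrary diagram basis element $d_\pi \in P_k(\N)$ and show $d_z\, d_\pi = d_\pi\, d_z$. Applying the isomorphism $\phi_k$, this is equivalent to $\P_z\, \phi_k(d_\pi) = \phi_k(d_\pi)\, \P_z$ in $\End(\VN^{\otimes k})$. But $\phi_k(d_\pi) = D_\pi \in \End_{\SN}(\VN^{\otimes k})$ by construction, and $\P_z \in \End_{\SN}(\VN^{\otimes k})$ as well; so the claim reduces to the statement that the algebra $\End_{\SN}(\VN^{\otimes k})$ is commutative enough for these two particular elements to commute. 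That is not automatic in general, so the cleanest route is: $\P_z$ lies in the image of $\mathbb{C}(\SN)$ and commutes with that entire image (centrality of $z$), while $D_\pi$ lies in $\End_{\SN}(\VN^{\otimes k})$, which by $\SN$ Schur-Weyl duality is the full centralizer of the image of $\SN$; hence $D_\pi$ commutes with everything in the image of $\SN$, in particular with $\P_z$. Pulling back through $\phi_k^{-1}$ gives $d_z d_\pi = d_\pi d_z$ for all $\pi$, so $d_z \in \mathcal{Z}[P_k(\N)]$, and thus $\Zdual[P_k(N)] \subseteq \mathcal{Z}[P_k(\N)]$.

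The main obstacle, such as it is, is conceptual rather than computational: one must be careful that $\P_z$ really does lie in $\End_{\SN}(\VN^{\otimes k})$ and is therefore in the domain of $\phi_k^{-1}$ — this uses centrality of $z$ together with $\P_{\sn_1}\P_{\sn_2} = \P_{\sn_1\sn_2}$ — and one must invoke the Schur-Weyl isomorphism in the stable limit $\N \geq 2k$ throughout, since outside that range $\phi_k$ is only surjective and $d_z$ is not uniquely determined. No diagram-concatenation computations are needed; the argument is purely a matter of chaining together the homomorphism $z \mapsto \P_z$, the isomorphism $\phi_k$, and the centralizer property built into $\SN$ Schur-Weyl duality.
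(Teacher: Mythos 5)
Your argument is correct and follows essentially the same route as the paper: the defining property of $D_\pi \in \End_{\SN}(\VN^{\otimes k})$ is that it commutes with every $\P_\sn$, hence with $\P_z = D_z$, and pulling back through the Schur-Weyl isomorphism in the stable limit gives $d_z d_\pi = d_\pi d_z$. Your additional observation that $z \mapsto d_z$ is an algebra homomorphism (so the image is automatically a subalgebra) is a nice explicit justification of the subalgebra claim that the paper leaves implicit; note only that the commutation of $D_\pi$ with the image of $\SN$ is just the definition of $\End_{\SN}(\VN^{\otimes k})$, so the "full centralizer" part of Schur-Weyl is not needed there — it is needed only to identify $\End_{\SN}(\VN^{\otimes k})$ with $P_k(\N)$.
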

\begin{proof}
	Since $\P_z D_\pi = D_\pi \P_z$ for all $D_\pi \in \End_{S_N}(\VN^{\otimes k}) $  we have $D_z D_\pi = D_\pi D_z$ and consequently $d_z d_\pi = d_\pi d_z$.
\end{proof}
This will be used to construct Cartan-like elements (known as Jucys-Murphy elements in the theory of symmetric groups and partition algebras) of the partition algebra, allowing us to construct important bases for $P_k(\N)$.

\subsection{$S_{\N-1}$ Schur-Weyl duality.}
Theorem \ref{thm: ind res VNotimesk} indicates that the subgroup $S_{\N-1} \subset \SN$ plays an important role in the description of $\VN^{\otimes k}$. In this section we give a version of Schur-Weyl duality for $\VN^{\otimes k}$ when the action of $\SN$ is restricted to $S_{\N-1}$.

As a representation of $S_{\N-1}$ we have
\begin{equation}
	\VN^{\otimes k} \cong \VN^{\otimes k} \otimes e_{\N},
\end{equation}
since $S_{\N-1}$ leaves $e_{\N}$ invariant (see \cite[Section 3]{Halverson2005}). This clever trick is used to consider
\begin{equation}
	\End_{S_{\N-1}}(\VN^{\otimes k}) \cong \End_{S_{\N-1}}(\VN^{\otimes k} \otimes e_\N). \label{eq: SN-1 end}
\end{equation}
The elements on the r.h.s. are tensors satisfying
\begin{equation}
	T_{(i_1)\sn (i_2)\sn {\dots} (i_k)\sn \N }^{(j_1)\sn (j_2)\sn {\dots} (j_k)\sn \N} = T_{i_1 i_2 \dots i_k \N}^{j_1 j_2 \dots j_k \N} \quad \forall \sn \in S_{\N-1}. \label{eq: SN-1 end constraint}
\end{equation}
In analogy to before, we consider the $S_{\N-1}$ orbits of bases elements for $\End(\VN^{\otimes k} \otimes e_\N)$. These correspond to set partitions $\pi \in \setpart{[k+1 \vert (k+1)']}$ with the constraint that $k+1$ and $(k+1)'$ always lie in the same block, as to maintain the form $\VN^{\otimes k} \otimes e_\N$. We note that the block containing $k+1$ and $(k+1)'$ can also contain other numbers.

This indicates that the following special subalgebra of $P_{k+1}(\N)$ is important.
\begin{definition}
	The partition algebras labelled by half-integers are defined by
	\begin{equation}
		P_{k+\frac{1}{2}}(\N) = \Span(d_\pi \, \vert \, \pi \in \setpart{[k+1 \vert (k+1)']} \text{ where $k+1$ and $(k+1)'$ are in the same block}).
	\end{equation}
\end{definition}
\begin{example}
	For example,
	\begin{equation}
		\PAdiagram{1}{-1/1}{} \in P_{0+\frac{1}{2}}(\N), \quad \PAdiagram{1}{}{} \not\in P_{0+\frac{1}{2}}(\N), \quad \PAdiagram{2}{-2/2,-1/-2}{} \in P_{1+\frac{1}{2}}(\N).
	\end{equation}
\end{example}

The half-integer partition algebras $P_{k+\frac{1}{2}}(\N)$ act on $\VN^{\otimes k}$ in an a priori unintuitive way, inspired by the observation in \eqref{eq: SN-1 end}.
\begin{definition}\label{def: half integer action}
	Let $d_\pi \in P_{k+\frac{1}{2}}(\N)$. Since $d_\pi$ is also an element of $P_{k+1}(\N)$, it  acts on $\VN^{\otimes k+1}$ in the conventional way
	\begin{equation}
		D_\pi(e_{i_1} \otimes \dots \otimes e_{i_k} \otimes e_{i_{k+1}^{}}).
	\end{equation}
	For the action on $\VN^{\otimes k}$ we define
	\begin{equation}
		(\Delta_\pi)^{i_{1'} \dots i_{k'}}_{i_1 \dots i_k} = (D_\pi)^{i_{1'} \dots i_{k'} \N}_{i_1 \dots i_k \N},
	\end{equation}
	and $\Delta_\pi \in \End(\VN^{\otimes k})$ corresponding $d_\pi$ by
	\begin{equation}
		\Delta_\pi(e_{i_1} \otimes \dots \otimes e_{i_k}) = (\Delta_\pi)^{i_{1'} \dots i_{k'}}_{i_1 \dots i_k}e_{i_{1'}} \otimes \dots \otimes e_{i_{k'}}.
	\end{equation}	
\end{definition}

This action of $P_{k+\frac{1}{2}}(\N)$ is Schur-Weyl dual to $S_{\N-1}$ acting on $\VN^{\otimes k}$.
\begin{theorem}
	Let $d_\pi \in P_{k+\frac{1}{2}}(\N)$ and $\Delta_\pi$ the corresponding linear map on $\VN^{\otimes k}$ in Definition \ref{def: half integer action}. This correspondence is an isomorphism
	\begin{equation}
		\End_{S_{\N-1}}(\VN^{\otimes k}) \cong P_{k+\frac{1}{2}}(\N),
	\end{equation}
	of algebras	for $\N \geq 2k+1$.
\end{theorem}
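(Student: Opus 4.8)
The plan is to exhibit the isomorphism explicitly through the correspondence $d_\pi\mapsto\Delta_\pi$ of Definition~\ref{def: half integer action}, establish that it is a unital algebra homomorphism into $\End_{S_{\N-1}}(\VN^{\otimes k})$, and then deduce bijectivity from an orbit count parallel to the one proving the orbit basis of Theorem~\ref{thm: orbit basis}.

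\textbf{That $\Delta$ is a homomorphism.} First I would note that $P_{k+\frac{1}{2}}(\N)$ is a unital subalgebra of $P_{k+1}(\N)$: concatenating two diagrams, each having $k{+}1$ and $(k{+}1)'$ in a single block, produces a diagram with the same property, since the through-strand at position $k{+}1$ always survives — its block meets both the top and the bottom row, so it is never discarded as a middle loop and never produces a spurious factor of $\N$. Hence the surjective homomorphism $\phi_{k+1}$ of Theorem~\ref{thm: VN Schur-Weyl} restricts to an algebra homomorphism on $P_{k+\frac{1}{2}}(\N)$. Next, for such $\pi$ the operator $D_\pi=\phi_{k+1}(d_\pi)$ preserves the subspace $\VN^{\otimes k}\otimes e_{\N}\subseteq \VN^{\otimes(k+1)}$: if the $(k{+}1)$-st input index equals $\N$, then, because $k{+}1$ and $(k{+}1)'$ lie in the same block, the Kronecker-delta pattern defining $D_\pi$ forces the $(k{+}1)$-st output index to be $\N$ as well. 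Under the identification $\VN^{\otimes k}\otimes e_{\N}\cong \VN^{\otimes k}$ the restriction $D_\pi|_{\VN^{\otimes k}\otimes e_{\N}}$ is precisely $\Delta_\pi$, and restriction of operators to a common invariant subspace is an algebra homomorphism carrying unit to unit. Finally, since $\phi_{k+1}$ takes values in $\End_{\SN}(\VN^{\otimes(k+1)})$, every $D_\pi$ commutes with the diagonal $\SN$-action, hence with $S_{\N-1}$; and $S_{\N-1}$ fixes $e_{\N}$, so it preserves $\VN^{\otimes k}\otimes e_{\N}$ and acts there exactly as it does on $\VN^{\otimes k}$. Therefore $\Delta$ maps $P_{k+\frac{1}{2}}(\N)$ into $\End_{S_{\N-1}}(\VN^{\otimes k})$ as a unital algebra homomorphism.

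\textbf{Bijectivity.} Using the $S_{\N-1}$-module isomorphism $\VN^{\otimes k}\cong \VN^{\otimes k}\otimes e_{\N}$, I would identify $\End_{S_{\N-1}}(\VN^{\otimes k})$ with $\End_{S_{\N-1}}(\VN^{\otimes k}\otimes e_{\N})$ and run the orbit argument of Theorem~\ref{thm: orbit basis}: a basis is given by the $S_{\N-1}$-orbit sums of the elementary operators $e_{j_1}\otimes\dots\otimes e_{j_k}\otimes e_{\N}\mapsto\delta\,(e_{i_1}\otimes\dots\otimes e_{i_k}\otimes e_{\N})$. Since $S_{\N-1}$ permutes $\{1,\dots,\N{-}1\}$ and fixes $\N$, such an orbit is labelled by the equality pattern of the $2k$ indices $i_1,\dots,i_k,j_1,\dots,j_k$ — a set partition of $\{1,\dots,k,1',\dots,k'\}$ — together with the choice of which block, if any, takes the value $\N$. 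This is exactly the data of a set partition $\pi$ of $\{1,\dots,k{+}1,1',\dots,(k{+}1)'\}$ with $k{+}1$ and $(k{+}1)'$ in the same block, i.e. a diagram basis element of $P_{k+\frac{1}{2}}(\N)$, and unwinding the definitions shows the corresponding orbit sum is precisely $\Delta_\pi$. Hence $\Delta$ carries the diagram basis of $P_{k+\frac{1}{2}}(\N)$ onto this orbit basis; for $\N\geq 2k+1$ all these orbits are non-empty and distinct diagrams give distinct orbits, so $\Delta$ is a linear bijection, and therefore an algebra isomorphism. This is essentially the argument of \cite[Section 3]{Halverson2005}.

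\textbf{Main obstacle.} The delicate point is multiplicativity in the first step: one must check that restriction to $\VN^{\otimes k}\otimes e_{\N}$ really intertwines the $P_{k+1}(\N)$ product with composition of the $\Delta_\pi$, i.e. that in the $P_{k+1}(\N)$ product the sum over the $(k{+}1)$-st internal index collapses to the single value $\N$ and that no extra power of $\N$ is generated along the distinguished through-strand. Both facts are forced by the "same block" condition, and this — together with the linear independence of the diagram basis, which needs $\N\geq 2k+1$ — is exactly where the hypotheses of the theorem enter.
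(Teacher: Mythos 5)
Your overall strategy is sound, and it is in fact more informative than the paper, which does not prove this statement at all but simply cites \cite[Theorem 3.6]{Halverson2005} and remarks that the kernel vanishes for $\N \geq 2k+1$; what you have written is essentially a reconstruction of Halverson--Ram's argument (as you acknowledge), with the homomorphism property obtained by restricting $\phi_{k+1}$ to the unital subalgebra $P_{k+\frac{1}{2}}(\N)$ and then to the invariant subspace $\VN^{\otimes k}\otimes e_\N$, and bijectivity obtained by counting $S_{\N-1}$-orbits. All of those steps check out: the block through position $k{+}1$ survives concatenation, $D_\pi$ preserves $\VN^{\otimes k}\otimes e_\N$ because the ``same block'' condition propagates the value $\N$, and the orbit count matches the number of set partitions of $\{1,\dots,k{+}1,1',\dots,(k{+}1)'\}$ with $k{+}1\sim(k{+}1)'$, all of which are realised precisely when $\N\geq 2k+1$.

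There is, however, one identification in your bijectivity step that is not correct as stated: the orbit sum attached to a given equality pattern is \emph{not} ``precisely $\Delta_\pi$''. The orbit element enforces $i_a=i_b$ \emph{if and only if} $a,b$ lie in the same block (and the marked block equals $\N$ exactly), whereas $\Delta_\pi$, being a diagram-basis operator in the sense of \eqref{eq: orbit basis action} versus the diagram basis, only enforces equality \emph{when} $a,b$ are in the same block. Already for $k=1$ and $\pi=11'|22'$ one has $\Delta_\pi=\mathrm{id}_{\VN}$, while the corresponding orbit sum is the projector onto $\Span(e_1,\dots,e_{\N-1})$. So $\Delta$ does not carry the diagram basis \emph{onto} the orbit basis; rather $\Delta_\pi$ is the sum of orbit elements over all coarsenings of $\pi$, exactly as in the zeta-matrix relation \eqref{eq: diagram to orbit basis}. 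The conclusion survives with a one-line repair: coarsening preserves the condition that $k{+}1$ and $(k{+}1)'$ share a block, so the images $\Delta_\pi$ are related to the (nonzero, linearly independent, spanning) orbit elements of $\End_{S_{\N-1}}(\VN^{\otimes k})$ by a unitriangular change of basis, and hence themselves form a basis when $\N\geq 2k+1$; bijectivity of $\Delta$ then follows since it sends the $B(2k{+}1)$-type diagram basis of $P_{k+\frac{1}{2}}(\N)$ to a basis of equal cardinality. With that correction your proof is complete and supplies the details the paper delegates to the literature.
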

\begin{proof}
	See \cite[Theorem 3.6]{Halverson2005} and note that the kernel of the map is empty for $\N \geq 2k+1$. This is the analogue of Theorem \ref{thm: VN Schur-Weyl} but for $S_{\N-1}$.
\end{proof}
As before, it follows that $\VN^{\otimes k}$ has a multiplicity free decomposition in terms of $S_{\N-1} \times P_{k + \frac{1}{2}}(N)$ representations (see \cite[Theorem 3.22]{Halverson2005})
\begin{corollary}
	As a representation of $S_{\N-1} \times P_{k + \frac{1}{2}}(\N)$
	\begin{equation}
		\VN^{\otimes k} \cong \bigoplus_{\lambda \in \Lambda_{k+\frac{1}{2}, \N}} V_\lambda \otimes Z_\lambda^{1/2},  \label{eq: SN-1 SW simple}
	\end{equation}
	where $Z_\lambda^{1/2}$ are irreducible representations of $ P_{k + \frac{1}{2}}(\N)$.
\end{corollary}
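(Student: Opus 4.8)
The plan is to run, essentially verbatim, the argument that produced the decomposition \eqref{eq: VN SW simple}, but with $\SN$ replaced by $S_{\N-1}$ and $P_k(\N)$ replaced by $P_{k+\frac{1}{2}}(\N)$. First I would invoke the double centralizer theorem for the commuting pair given by the image of $\mathbb{C}(S_{\N-1})$ in $\End(\VN^{\otimes k})$ and its full centralizer $\End_{S_{\N-1}}(\VN^{\otimes k})$. This yields a multiplicity-free decomposition $\VN^{\otimes k}\cong\bigoplus_{\rho}V_\rho\otimes M_\rho$, where $V_\rho$ ranges over the irreducible representations of $S_{\N-1}$ occurring in $\VN^{\otimes k}$ and $M_\rho$ are the paired irreducible representations of $\End_{S_{\N-1}}(\VN^{\otimes k})$. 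By the isomorphism $\End_{S_{\N-1}}(\VN^{\otimes k})\cong P_{k+\frac{1}{2}}(\N)$ just established (valid for $\N\geq 2k+1$, where the kernel of the map is empty), the $M_\rho$ are exactly the irreducible representations of $P_{k+\frac{1}{2}}(\N)$, so setting $Z^{1/2}_\rho:=M_\rho$ gives the stated direct sum once the index set is identified.

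The second step is to show that the set of occurring $\rho$ is precisely $\Lambda_{k+\frac{1}{2},\N}=\{\lambda\vdash\N-1\mid\abs{\lambda^{\#}}\leq k\}$. For this I would use the restriction–induction presentation $\VN^{\otimes k}\cong(\mathrm{Ind}\,\mathrm{Res})^k(V_{[\N]})$ from Theorem~\ref{thm: ind res VNotimesk} and \eqref{eq: VNk is ind res}: restricting once more to $S_{\N-1}$ and unwinding the branching rules \eqref{eq: RES SN}, \eqref{eq: IND SN}, an irreducible $V_\rho$ of $S_{\N-1}$ appears in $\mathrm{Res}^{\SN}_{S_{\N-1}}\bigl(\VN^{\otimes k}\bigr)$ iff $\rho$ is reachable from $[\N]$ by $k$ remove-then-add moves followed by one final box removal, with every intermediate diagram a legitimate partition of $\N$ or $\N-1$. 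Exactly as in Definition~\ref{def: vac tableu}, this is the condition that the portion below the first row never exceeds $k$ boxes, i.e.\ $\abs{\rho^{\#}}\leq k$, and conversely every such $\rho$ is reachable; counting the paths identifies $\dim Z^{1/2}_\rho$ with the number of length-$(k+\tfrac12)$ vacillating tableaux of shape $\rho$, in parallel with Theorem~\ref{thm: multi is vac tab}.

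Since these pieces are all assembled in \cite{Halverson2005}, the cleanest writeup is to present the double-centralizer/branching argument in outline and then cite \cite[Theorem 3.22]{Halverson2005} for the precise statement, exactly as \eqref{eq: VN SW simple} was deduced from the double centralizer theorem together with \cite[Section 2.5]{Halverson2018}. I do not expect any genuinely new mathematical difficulty here; the only real work is bookkeeping — making sure the labelling set $\Lambda_{k+\frac12,\N}$ and the multiplicities $\dim Z^{1/2}_\lambda$ are stated consistently with the half-integer conventions, and that the stability hypothesis $\N\geq 2k+1$ is invoked precisely where it is needed so that "irreducible representations of the centralizer algebra" and "irreducible representations of $P_{k+\frac12}(\N)$" genuinely coincide.
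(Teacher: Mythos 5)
Your proposal is correct and follows essentially the same route as the paper: the double centralizer theorem applied to the commuting actions of $S_{\N-1}$ and $\End_{S_{\N-1}}(\VN^{\otimes k})\cong P_{k+\frac{1}{2}}(\N)$, with the precise statement (including the labelling set $\Lambda_{k+\frac{1}{2},\N}$) delegated to \cite[Theorem 3.22]{Halverson2005}. The only difference is that you spell out the identification of the index set via the restriction--induction branching, which the paper leaves implicit in its citation; this is consistent bookkeeping rather than a new argument.
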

We take the r.h.s. to have an orthonormal (with respect to \eqref{eq: VNk inner prod}) basis
\begin{equation}
	\{E^\lambda_{a} \otimes E^\lambda_\alpha \, \vert \, a=1,\dots, \dimSN{\lambda}, \alpha =1,\dots, \dim Z_\lambda^{1/2},  \lambda \in \Lambda_{k+\frac{1}{2},\N} \}. \label{eq: SN-1 SW basis}
\end{equation}


Analogously to the $\SN$ Schur-Weyl duality case, we have a duality between the center $\mathcal{Z}[\mathbb{F}(S_{\N-1})]$ and the center $\mathcal{Z}[P_{k+\frac{1}{2}}(\N)]$.
\begin{definition}[$(k+\frac{1}{2})$-dual center] \label{def: half dual}
	Let $z \in \mathcal{Z}[\mathbb{F}(S_{\N-1})]$ and define $\Delta_z$ through the equality
	\begin{equation}
		\P_z(e_{i_1} \otimes {\dots} \otimes e_{i_k}) = \Delta_z(e_{i_1} \otimes {\dots} \otimes e_{i_k}).
	\end{equation}
	Suppose it has an expansion
	\begin{equation}
		\Delta_z = \sum_{\pi \in \setpart{[k+1\vert (k+1)']}} a_\pi \Delta_\pi.
	\end{equation}
	The $(k+\tfrac{1}{2})$-dual is the element $d_z^{}$ defined by
	\begin{equation}
		d_z^{} = \sum_{\pi \in \setpart{[k+1\vert (k+1)']}} a_\pi d_\pi.
	\end{equation}
	We denote the image under the map $z \rightarrow d_z^{}$ by $\widetilde{\mathcal{Z}}[P_{k+\frac{1}{2}}(\N)]$.
\end{definition}
\begin{example}\label{ex: khalf dual}
	A useful example to consider is the $1+\frac{1}{2}$-dual of
	\begin{equation}
		z = \sum_{1 \leq l < m \leq \N-1} (lm).
	\end{equation}
	To find the dual, note that
	\begin{equation}
		\sum_{1 \leq l < m  \leq \N-1} P_{(lm)}(e_i \otimes e_\N) = \sum_{1 \leq l < m \leq \N-1} P_{(lm)}e_i \otimes e_\N = \sum_{1 \leq l < m \leq \N-1} P_{(lm)} \otimes \idn(e_i \otimes e_\N).
	\end{equation}
	We re-write
	\begin{equation}
		\sum_{1 \leq l < m \leq \N} (lm) - \sum_{l=1}^{\N-1} (l\N),
	\end{equation}
	such that
	\begin{equation}
		\sum_{1 \leq l < m \leq \N-1} P_{(lm)} \otimes \idn(e_i \otimes e_\N) = \sum_{1 \leq l < m \leq \N} P_{(lm)} \otimes \idn(e_i \otimes e_\N) - \sum_{l=1}^{\N-1} P_{(l\N)} \otimes \idn(e_i \otimes e_\N).
	\end{equation}
	The dual of the first term immediately follows from Example \ref{ex: kdual} as
	\begin{equation}
		D_{1|1'|22'} + \binom{\N}{2} D_{11'|22'} - \N D_{11'|22'}, \label{eq: khalf dual term 1}
	\end{equation}
	since $\idn = \sum_k E^k_k$.
	It remains to find the dual of the second term. The second term can be simplified by considering the cases $i=\N, i\neq \N$
	\begin{align}
	&\sum_{l=1}^{\N-1} P_{(l\N)} \otimes \idn(e_i \otimes e_\N) = \delta_{i\N} \sum_{l=1}^{\N-1} e_l \otimes e_\N + (1-\delta_{i\N})(e_\N \otimes e_\N + (\N-2)e_i \otimes e_\N) \\
		&= \delta_{i\N} \sum_{l=1}^{\N} e_l \otimes e_\N - \delta_{i\N} e_\N \otimes e_\N + (1-\delta_{i\N})(e_\N \otimes e_\N + (\N-2)e_i \otimes e_\N),
	\end{align}
	in the second equality we added $1 = \delta_{i\N} e_\N \otimes e_\N - \delta_{i\N} e_\N \otimes e_\N$. Expand all the terms gives
	\begin{equation}
		\delta_{i\N} \sum_l e_l \otimes e_\N - 2 \delta_{i\N} e_\N \otimes e_\N + e_\N \otimes e_\N + (\N-2)e_i \otimes e_\N - (\N-2)\delta_{i\N}e_i \otimes e_\N.
	\end{equation}
	This corresponds to the action of the following tensor product of elementary matrices
	\begin{equation*}
		\sum_{k,l} E^l_k \otimes E^{k}_k - 2 \sum_k E^k_k \otimes E^{k}_k + \sum_{k,l} E^k_l \otimes E^k_k + (\N-2) \sum_{k,l} E^k_k \otimes E^l_l - (\N-2)\sum_{k}E^k_k \otimes E^k_k(e_i \otimes e_\N).
	\end{equation*}
	Using the correspondence observed in Example \ref{ex: kdual} we find that this corresponds to the following diagram basis elements
	\begin{equation}
		D_{1'|122'}-\N D_{11'22'}+D_{1|1'22'} + (\N-2) D_{11'|22'}(e_i \otimes e_\N).
	\end{equation}
	Together with the terms in \eqref{eq: khalf dual term 1} we have the dual
	\begin{equation}
		d_z = \PAdiagram{2}{-2/2}{} + \binom{\N}{2} \PAdiagram{2}{-1/1,-2/2}{} - 2\N  \PAdiagram{2}{-1/1,-2/2}{} +2  \PAdiagram{2}{-1/1,-2/2}{} - \PAdiagram{2}{-1/-2,-2/2}{}- \PAdiagram{2}{1/2,-2/2}{}+\N  \PAdiagram{2}{-1/1,-2/2,1/2,-1/-2}{}.
	\end{equation}
\end{example}

\subsection{Inductive chain.}
The partition algebras $P_{k}(\N), P_{k-\frac{1}{2}}(\N)$ form a chain of subalgebras,
\begin{equation}
	P_1(\N) \subset P_{1+\frac{1}{2}}(\N) \subset \dots \subset P_{k-\frac{1}{2}}(\N) \subset  P_{k}(\N),
\end{equation}
where $P_{k}(\N)$ is embedded into $P_{k + \frac{1}{2}}(\N)$ by adding a strand to the right of all diagram basis elements. For example
\begin{align}
	P_1(\N) \ni \PAdiagram{1}{}{} \mapsto &\PAdiagram{2}{-2/2}{} \in P_{1+\frac{1}{2}}(\N) \qq{and}\\
											&\begin{aligned}[t]
												\PAdiagram{2}{-2/2}{} \mapsto &\PAdiagram{2}{-2/2}{} \in P_{2}(\N) \qq{and} \\
												&\PAdiagram{2}{-2/2}{} \mapsto \PAdiagram{3}{-2/2, -3/3}{} \in P_{2+\frac{1}{2}}(\N).
											\end{aligned}											
\end{align}

We will now investigate the properties of restrictions of irreducible representations along the chain. We will see that there is a close relationship between the induction/restriction construction of $\VN^{\otimes k}$ and restriction of partition algebras.
\begin{proposition}
	We will prove the following statements.
	\begin{enumerate}
	\item[(a)]
	Let $Z_\lambda$ be an irreducible representation of $P_k(\N)$. For  $\lambda \in \Lambda_{k,\N}$, the restriction to $P_{k-\frac{1}{2}}(\N)$ decomposes into
	\begin{equation}
		\Res{P_k(\N)}{P_{k-\frac{1}{2}}(\N)}{Z_\lambda} \cong \bigoplus_{\lambda' \in \lambda - {\scriptstyle \ydiagram{1}}} Z_{\lambda'}^{1/2}. \label{eq: res k to k-12}
	\end{equation}
	\item[(b)]
	Let $Z_\lambda^{1/2}$ be an irreducible representation of $P_{k+\frac{1}{2}}(\N)$. For $\lambda \in \Lambda_{k+\frac{1}{2},\N}$. The restriction to $P_{k}(\N)$ decomposes into
	\begin{equation}
		\Res{P_{k+\frac{1}{2}}(\N)}{P_{k}(\N)}{Z_\lambda^{1/2}} \cong \bigoplus_{\lambda' \in \lambda + {\scriptstyle \ydiagram{1}}} Z_{\lambda}. \label{eq: res k+12 to k}
	\end{equation}
	\end{enumerate}
\end{proposition}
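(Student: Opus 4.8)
The plan is to deduce both branching rules from Schur--Weyl duality, transporting the box-removing and box-adding rules \eqref{eq: RES SN} and \eqref{eq: IND SN} for the pair $\SN\supset S_{\N-1}$ across the duality by means of Frobenius reciprocity. Throughout I work in the stable range $\N\geq 2k+1$, where Theorem~\ref{thm: VN Schur-Weyl} and its $S_{\N-1}$-analogue give $P_k(\N)\cong\End_{\SN}(\VN^{\otimes k})$, $P_{k+\frac12}(\N)\cong\End_{S_{\N-1}}(\VN^{\otimes k})$ and $P_{k-\frac12}(\N)\cong\End_{S_{\N-1}}(\VN^{\otimes (k-1)})$, and where \eqref{eq: VN SW simple}, \eqref{eq: SN-1 SW simple} hold. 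I use the realisation of the irreducibles as multiplicity spaces: $Z_\lambda=\Hom_{\SN}(V_\lambda,\VN^{\otimes k})$ (a $P_k(\N)$-module through the action of $\End_{\SN}(\VN^{\otimes k})$ on the target), $Z^{1/2}_\lambda=\Hom_{S_{\N-1}}(V_\lambda,\VN^{\otimes k})$ (a $P_{k+\frac12}(\N)$-module), and $Z^{1/2}_\mu=\Hom_{S_{\N-1}}(V_\mu,\VN^{\otimes (k-1)})$ for the $P_{k-\frac12}(\N)$-irreducibles in part~(a). Under these identifications the chain inclusions are $\End_{\SN}(\VN^{\otimes k})\subseteq\End_{S_{\N-1}}(\VN^{\otimes k})$, i.e.\ $P_k(\N)\subseteq P_{k+\frac12}(\N)$.

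Part~(b) is the direct one. For $\mu\in\Lambda_{k+\frac12,\N}$, Frobenius reciprocity gives a natural isomorphism
\[
Z^{1/2}_\mu=\Hom_{S_{\N-1}}\!\big(V_\mu,\Res{\SN}{S_{\N-1}}{\VN^{\otimes k}}\big)\ \cong\ \Hom_{\SN}\!\big(\Ind{S_{\N-1}}{\SN}{V_\mu},\VN^{\otimes k}\big),
\]
which only rearranges the source and is therefore equivariant for $P_k(\N)=\End_{\SN}(\VN^{\otimes k})$ (this restriction from $P_{k+\frac12}(\N)$ being, concretely, the action of $\End_{\SN}(\VN^{\otimes k})\subseteq\End_{S_{\N-1}}(\VN^{\otimes k})$ on the target). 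By \eqref{eq: IND SN}, $\Ind{S_{\N-1}}{\SN}{V_\mu}\cong\bigoplus_{\lambda\in\mu+{\scriptstyle\ydiagram{1}}}V_\lambda$ multiplicity-free, so the right-hand side equals $\bigoplus_{\lambda\in\mu+{\scriptstyle\ydiagram{1}}}\Hom_{\SN}(V_\lambda,\VN^{\otimes k})=\bigoplus_{\lambda\in\mu+{\scriptstyle\ydiagram{1}}}Z_\lambda$ (where $Z_\lambda:=0$ if $\lambda\notin\Lambda_{k,\N}$). This is exactly \eqref{eq: res k+12 to k}.

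Part~(a) requires one to first rewrite the target. By Theorem~\ref{thm: ind res VNotimesk}, $\VN^{\otimes k}\cong\Ind{S_{\N-1}}{\SN}{\Res{\SN}{S_{\N-1}}{\VN^{\otimes (k-1)}}}$, and since $P_{k-\frac12}(\N)=\End_{S_{\N-1}}(\VN^{\otimes (k-1)})$ commutes with $S_{\N-1}$ it acts on this induced module by acting on the $\VN^{\otimes(k-1)}$-tensorand. Granting that this action agrees with the diagram-algebra inclusion $P_{k-\frac12}(\N)\hookrightarrow P_k(\N)$ acting on $\VN^{\otimes k}$ (discussed next), Frobenius reciprocity gives, for $\lambda\in\Lambda_{k,\N}$, a $P_{k-\frac12}(\N)$-equivariant isomorphism
\[
Z_\lambda=\Hom_{\SN}\!\big(V_\lambda,\Ind{S_{\N-1}}{\SN}{\Res{\SN}{S_{\N-1}}{\VN^{\otimes (k-1)}}}\big)\ \cong\ \Hom_{S_{\N-1}}\!\big(\Res{\SN}{S_{\N-1}}{V_\lambda},\VN^{\otimes (k-1)}\big),
\]
and \eqref{eq: RES SN} turns the right-hand side into $\bigoplus_{\mu\in\lambda-{\scriptstyle\ydiagram{1}}}\Hom_{S_{\N-1}}(V_\mu,\VN^{\otimes (k-1)})=\bigoplus_{\mu\in\lambda-{\scriptstyle\ydiagram{1}}}Z^{1/2}_\mu$ (where $Z^{1/2}_\mu:=0$ if $\mu\notin\Lambda_{k-\frac12,\N}$); this is \eqref{eq: res k to k-12}.

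The step I expect to be the main obstacle is the compatibility assertion used in (a): that the $P_{k-\frac12}(\N)$-action induced on $\Ind{S_{\N-1}}{\SN}{\Res{\SN}{S_{\N-1}}{\VN^{\otimes (k-1)}}}$ from the $\Delta$-action of Definition~\ref{def: half integer action} coincides with the chain inclusion $P_{k-\frac12}(\N)\hookrightarrow P_k(\N)$ on $\VN^{\otimes k}$. This is not automatic, since $P_{k-\frac12}(\N)$ (unlike $P_k(\N)$ and $P_{k+\frac12}(\N)$) does not commute with all of $\SN$ on $\VN^{\otimes k}$. One verifies it by observing that $P_{k-\frac12}(\N)\subseteq P_k(\N)=\End_{\SN}(\VN^{\otimes k})$ does commute with $\SN$, that it stabilises $\VN^{\otimes (k-1)}\otimes e_\N\subseteq\VN^{\otimes k}$ and acts there through $\Delta$, and that under the $\SN$-isomorphism $g\otimes_{S_{\N-1}}v\mapsto\Paction{g}(v\otimes e_\N)$ one has $d\,\Paction{g}(v\otimes e_\N)=\Paction{g}\,d\,(v\otimes e_\N)=\Paction{g}\big((\Delta_d v)\otimes e_\N\big)$ for $d\in P_{k-\frac12}(\N)$. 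Finally, one should record that discarding the boundary terms with $\lambda\notin\Lambda_{k,\N}$ or $\mu\notin\Lambda_{k-\frac12,\N}$ loses nothing: the multiplicities obtained match the counts of vacillating tableaux, consistent with $\dim Z_\lambda=m^\lambda_{k,\N}$ from Theorem~\ref{thm: multi is vac tab}.
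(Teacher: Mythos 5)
Your proof is correct, and at its core it uses the same ingredients as the paper: the identification $\VN^{\otimes k}\cong\Ind{S_{\N-1}}{\SN}{\Res{\SN}{S_{\N-1}}{\VN^{\otimes (k-1)}}}$, the two Schur--Weyl decompositions, and the box-removal/box-addition branching rules for $\SN\supset S_{\N-1}$. The packaging differs: the paper decomposes $\VN^{\otimes k}$ globally in two ways (as $\SN\times P_k(\N)$ versus $\SN\times P_{k-\frac{1}{2}}(\N)$ for (a), and as $S_{\N-1}\times P_k(\N)$ versus $S_{\N-1}\times P_{k+\frac{1}{2}}(\N)$ for (b)) and reads off the multiplicity space sitting next to each symmetric-group irreducible, whereas you realize $Z_\lambda$ and $Z^{1/2}_\mu$ directly as Hom multiplicity spaces and apply Frobenius reciprocity as a natural (hence centralizer-equivariant) isomorphism of modules. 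This buys two things: the boundary terms are handled automatically, since $\Hom_{S_{\N-1}}(V_\mu,\VN^{\otimes(k-1)})$ simply vanishes when $\mu\notin\Lambda_{k-\frac{1}{2},\N}$ (and likewise in (b)); and, more importantly, you isolate and actually verify the compatibility the paper uses silently, namely that the $\Delta$-action of $P_{k-\frac{1}{2}}(\N)$ on the tensorand of the induced module corresponds, under $g\otimes_{S_{\N-1}}v\mapsto\Paction{g}(v\otimes e_\N)$, to the chain-inclusion action of $P_{k-\frac{1}{2}}(\N)\subset P_k(\N)$ on $\VN^{\otimes k}$; your one-line check $d\,\Paction{g}(v\otimes e_\N)=\Paction{g}\,d\,(v\otimes e_\N)=\Paction{g}\bigl((\Delta_d v)\otimes e_\N\bigr)$ is exactly right, since diagrams with $k,k'$ in one block preserve $\VN^{\otimes(k-1)}\otimes e_\N$ and act there by $\Delta$. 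One small wording slip: your remark that $P_{k-\frac{1}{2}}(\N)$ ``does not commute with all of $\SN$ on $\VN^{\otimes k}$'' contradicts your own next sentence (under the chain inclusion it sits inside $\End_{\SN}(\VN^{\otimes k})$, so it does commute); what is genuinely non-automatic is only that the two realizations of $P_{k-\frac{1}{2}}(\N)$ — as $\End_{S_{\N-1}}(\VN^{\otimes(k-1)})$ via $\Delta$ and as a subalgebra of $P_k(\N)$ — agree, which is precisely what you then prove, so the argument itself is unaffected.
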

\begin{proof}
	To prove (a), recall that $\Ind{\SN}{S_{\N-1}}{\Res{\SN}{S_{\N-1}}{\VN^{\otimes k-1}} }\cong V^{\otimes k}$. As a representation of $\SN \times P_k(\N)$ we have
	\begin{equation}
		V^{\otimes k} \cong \bigoplus_{\lambda \in \Lambda_{k,\N}} V_\lambda \otimes Z_\lambda.
	\end{equation}
	On the other hand, as a representation of $S_{\N} \times P_{k-\frac{1}{2}}(\N)$,
	\begin{align}
		\Ind{\SN}{S_{\N-1}}{\Res{\SN}{S_{\N-1}}{\VN^{\otimes k-1}}} &\cong  \bigoplus_{\lambda' \in \Lambda_{k-1,\N}}  \Ind{S_{\N-1}}{\SN}{V_{\lambda'}}  \otimes Z_{\lambda'}^{1/2} \\
		&\cong \bigoplus_{\lambda' \in \Lambda_{k-1,\N}} \bigoplus_{\lambda \in \lambda' + {\scriptstyle \ydiagram{1}}}V_{\lambda}  \otimes Z_{\lambda'}^{1/2}.
	\end{align}
	Note that the set of $Z_{\lambda'}^{1/2}$ appearing next to $V_{\lambda}$ are such that $\lambda \in \lambda' + {\scriptstyle \ydiagram{1}}$, or equivalently the subset of $\lambda' \in \Lambda_{k-1,\N}$ satisfying $\lambda' \in \lambda - {\scriptstyle \ydiagram{1}}$.
	Restricting $Z_{\lambda}$ to $P_{k-\frac{1}{2}}(\N)$ and comparing the representations appearing next to $V_{\lambda}$ in the two decompositions gives
	\begin{equation}
		\Res{P_{k}(\N)}{P_{k-\frac{1}{2}}(\N)}{Z_\lambda} \cong \bigoplus_{\lambda' \in \lambda - {\scriptstyle \ydiagram{1}}} Z_{\lambda'}^{1/2}.
	\end{equation}

	To prove (b) we use a similar argument. As a representation of $\SN \times P_k(\N)$ we have
	\begin{equation}
		V^{\otimes k} \cong \bigoplus_{\lambda \in \Lambda_{k,\N}} V_\lambda \otimes Z_\lambda,
	\end{equation}
	and therefore as a $S_{\N-1} \times P_k(\N)$ representation
	\begin{equation}
		V^{\otimes k} \cong \bigoplus_{\lambda \in \Lambda_{k,\N}} \bigoplus_{\lambda' \in \lambda - {\scriptstyle \ydiagram{1}}} V_{\lambda'} \otimes Z_\lambda.
	\end{equation}
	On the other hand,
	\begin{equation}
		V^{\otimes k} \cong \bigoplus_{\lambda' \in \Lambda_{k+\frac{1}{2}, \N}} V_{\lambda'} \otimes Z^{1/2}_{\lambda'},
	\end{equation}
	as a representation of $S_{\N-1} \times P_{k+\frac{1}{2}}(\N)$. Restricting $Z^{1/2}_{\lambda'}$ to $P_{k}(\N)$ and comparing the two decompositions gives
	\begin{equation}
		\Res{P_{k+\frac{1}{2}}(\N)}{P_{k}(\N)}{Z^{1/2}_{\lambda'}} \cong \bigoplus_{\lambda \in \lambda' + {\scriptstyle \ydiagram{1}}} Z_{\lambda},
	\end{equation}
	where $\lambda \in \Lambda_{k,\N}$.
\end{proof}

It will be useful to introduce explicit matrices for these restrictions.
\begin{definition}
	Let the representation $Z_\lambda$ of $P_k(\N)$ have a basis $E_\alpha^\lambda$ and define an inner product where the basis is orthonormal. We take the r.h.s. of \eqref{eq: res k to k-12} to have a basis $\{E_{\beta}^{\lambda \rightarrow \lambda'} \, \vert \, \lambda' \in \lambda - {\scriptstyle \ydiagram{1}}, \beta = 1,\dots, \dim Z_{\lambda'}^{1/2}\}$. The two bases are related by the matrix $\R^{\lambda \rightarrow \lambda'}_{\alpha\beta}$
	\begin{equation}
		E^{\lambda}_\alpha = \sum_{\lambda', \beta} \R^{\lambda \rightarrow \lambda'}_{\beta \alpha} E^{\lambda \rightarrow \lambda'}_\beta
	\end{equation}
\end{definition}
We introduce a diagram for this change of basis matrix
\begin{equation}
	\R^{\lambda \rightarrow \lambda'}_{\beta \alpha} = \vcenter{\hbox{

			\tikzset{every picture/.style={line width=0.75pt}} 
			
			\begin{tikzpicture}[x=0.75pt,y=0.75pt,yscale=-1,xscale=1]
				
				\draw    (460,40) -- (460,60) ;
				\draw   (430,60) -- (490,60) -- (490,80) -- (430,80) -- cycle ;
				\draw    (460,80) -- (460,100) ;
				
				\draw (437,62) node [anchor=north west][inner sep=0.75pt]    {$\mathcal{R}^{\lambda \rightarrow \lambda '}$};
				\draw (455,23) node [anchor=north west][inner sep=0.75pt]  [font=\tiny]  {$\beta $};
				\draw (455,99) node [anchor=north west][inner sep=0.75pt]  [font=\tiny]  {$\alpha $};
				\draw (439,85) node [anchor=north west][inner sep=0.75pt]  [font=\tiny]  {$\lambda $};
				\draw (438,44) node [anchor=north west][inner sep=0.75pt]  [font=\tiny]  {$\lambda '$};
		\end{tikzpicture}} }
\end{equation}
Demanding that $E^{\lambda \rightarrow \lambda'}_\beta$ are orthonormal gives
\begin{equation}
	E_\beta^{\lambda \rightarrow \lambda'} = \sum_\alpha \langle E^{\lambda \rightarrow \lambda'}_\beta, E^{\lambda}_\alpha\rangle E^{\lambda}_\alpha = \sum_\alpha  \R^{\lambda \rightarrow \lambda'}_{\beta \alpha}E^{\lambda}_\alpha, \label{eq: res basis small}
\end{equation}
and
\begin{equation}
	\langle E^{\lambda \rightarrow \lambda'}_\alpha, E^{\lambda \rightarrow \lambda''}_\beta \rangle = \sum_{\gamma, \gamma'}  \R^{\lambda \rightarrow \lambda'}_{ \alpha \gamma}\R^{\lambda \rightarrow \lambda''}_{\beta \gamma' } \langle E^{\lambda}_{\gamma}, E^{\lambda}_{\gamma'} \rangle =\sum_{\gamma, \gamma'}  \R^{\lambda \rightarrow \lambda'}_{\alpha \gamma }\R^{\lambda \rightarrow \lambda''}_{\beta \gamma' }\delta_{\gamma \gamma'}= \delta^{\lambda' \lambda''}\delta_{\alpha \beta} \label{eq: res basis ON condition}
\end{equation}
Let $d \in P_{k-\frac{1}{2}}(\N)$ and $D^\lambda(d), D^{\lambda'}(d)$ be irreducible representations of $P_k(\N), P_{k-\frac{1}{2}}(\N)$ respectively. Then the change of basis matrix satisfies
$\R^{\lambda \rightarrow \lambda'} D^\lambda(d) = D^{\lambda'}(d)\R^{\lambda \rightarrow \lambda'} $ or
\begin{equation}
	\vcenter{\hbox{

			\tikzset{every picture/.style={line width=0.75pt}} 
			
			\begin{tikzpicture}[x=0.75pt,y=0.75pt,yscale=-1,xscale=1]
				
				\draw    (550,50) -- (550,70) ;
				\draw   (520,70) -- (580,70) -- (580,90) -- (520,90) -- cycle ;
				\draw    (550,90) -- (550,110) ;
				\draw   (540,30) -- (560,30) -- (560,50) -- (540,50) -- cycle ;
				\draw    (550,20) -- (550,30) ;
				
				\draw (528,71) node [anchor=north west][inner sep=0.75pt]    {$\R^{\lambda \rightarrow \lambda '}$};
				\draw (545,9) node [anchor=north west][inner sep=0.75pt]  [font=\tiny]  {$\beta $};
				\draw (545,109) node [anchor=north west][inner sep=0.75pt]  [font=\tiny]  {$\alpha $};
				\draw (529,95) node [anchor=north west][inner sep=0.75pt]  [font=\tiny]  {$\lambda $};
				\draw (529,54) node [anchor=north west][inner sep=0.75pt]  [font=\tiny]  {$\lambda '$};
				\draw (546,32) node [anchor=north west][inner sep=0.75pt]    {$d$};

	\end{tikzpicture}}  } = \vcenter{\hbox{

\tikzset{every picture/.style={line width=0.75pt}} 

\begin{tikzpicture}[x=0.75pt,y=0.75pt,yscale=-1,xscale=1]

\draw    (630,50) -- (630,70) ;
\draw   (600,70) -- (660,70) -- (660,90) -- (600,90) -- cycle ;
\draw    (630,90) -- (630,110) ;
\draw   (620,110) -- (640,110) -- (640,130) -- (620,130) -- cycle ;
\draw    (630,130) -- (630,140) ;

\draw (608,71) node [anchor=north west][inner sep=0.75pt]    {$\R^{\lambda \rightarrow \lambda '}$};
\draw (625,39) node [anchor=north west][inner sep=0.75pt]  [font=\tiny]  {$\beta $};
\draw (625,139) node [anchor=north west][inner sep=0.75pt]  [font=\tiny]  {$\alpha $};
\draw (609,95) node [anchor=north west][inner sep=0.75pt]  [font=\tiny]  {$\lambda $};
\draw (609,54) node [anchor=north west][inner sep=0.75pt]  [font=\tiny]  {$\lambda '$};
\draw (625,112) node [anchor=north west][inner sep=0.75pt]    {$d$};

\end{tikzpicture}}  }
\end{equation}
We use the same notation $(\R^{\lambda \rightarrow \lambda'})$ for restriction matrices $P_{k+\frac{1}{2}}(\N) \rightarrow P_k(\N)$ and $\lambda \in \Lambda_{k+\frac{1}{2},\N}, \lambda' \in \Lambda_{k,\N}$.


Let $E^{\lambda}_\alpha$ a basis for $Z_\lambda$ and repeat the restriction to find
\begin{align}
	E^{\lambda^{(k)}}_\alpha &= \sum_{\lambda^{(k-\frac{1}{2})}, \beta} \R^{\lambda^{(k)} \rightarrow \lambda^{(k-\frac{1}{2})}}_{\beta \alpha} E^{\lambda^{(k)} \rightarrow \lambda^{(k-\frac{1}{2})}}_\beta \\
	&=\sum_{\lambda^{(k-\frac{1}{2})}, \beta} \R^{\lambda^{(k)} \rightarrow \lambda^{(k-\frac{1}{2})}}_{\beta \alpha} 
	\sum_{\lambda^{(k-1)}, \gamma} \R^{\lambda^{(k-\frac{1}{2})} \rightarrow \lambda^{(k-1)}}_{\gamma \beta} E^{\lambda^{(k-\frac{1}{2})} \rightarrow \lambda^{(k-1)}}_\gamma.
\end{align}
Let $\vactab = (\lambda^{(0)} = [\N], \lambda^{(\frac{1}{2})} = [\N-1], \lambda^{(1)}, \lambda^{(\frac{3}{2})},\dots,\lambda^{(k)})$ be a vacillating tableaux with shape $\lambda = \lambda^{(k)}$ and length $k$ (see Definition \ref{def: vac tableu}). Define
\begin{equation}
	E^\vactab_\alpha = (\R^{\lambda^{(1+\frac{1}{2})} \rightarrow \lambda^{(1)}} \R^{\lambda^{(2)} \rightarrow \lambda^{(1+\frac{1}{2})}}\dots \R^{\lambda^{(k-\frac{1}{2})} \rightarrow \lambda^{(k-1)}}\R^{\lambda^{(k)} \rightarrow \lambda^{(k-\frac{1}{2})}} )_{1 \alpha}E^{\lambda^{(1+\frac{1}{2})} \rightarrow \lambda^{(1)}}_1, \label{def: inductive basis}
\end{equation}
where we have used the fact that all irreducible $P_1(\N)$ representations are one-dimensional to fix the index on $E^{\lambda^{(1)}}_\beta = E^{\lambda^{(1+\frac{1}{2})} \rightarrow \lambda^{(1)}}_1$.
With this definition, we can write the basis in the following suggestive form
\begin{equation}
	E^{\lambda^{(k)}}_\alpha = \sum_{\vactab} E^\vactab_\alpha, \label{eq: vac tab expansion}
\end{equation}
where the sum is over all vacillating tableaux of shape $\lambda^{(k)}$ and length $k$.

We give a diagrammatic definition of this chain of restriction matrices
\begin{equation}
	\vcenter{\hbox{

			\tikzset{every picture/.style={line width=0.75pt}} 
			
			\begin{tikzpicture}[x=0.75pt,y=0.75pt,yscale=-1,xscale=1]
				
				\draw    (120,280) -- (120,300) ;
				\draw   (100,300) -- (140,300) -- (140,320) -- (100,320) -- cycle ;
				\draw    (120,320) -- (120,340) ;
				
				\draw (108,303) node [anchor=north west][inner sep=0.75pt]    {$\R^{\vactab }$};
				\draw (115,339) node [anchor=north west][inner sep=0.75pt]  [font=\tiny]  {$\alpha $};
				\draw (99,323) node [anchor=north west][inner sep=0.75pt]  [font=\tiny]  {$\lambda ^{( k)}$};
				\draw (99,284) node [anchor=north west][inner sep=0.75pt]  [font=\tiny]  {$\lambda ^{( 1)}$};

	\end{tikzpicture}} } = \vcenter{\hbox{

\tikzset{every picture/.style={line width=0.75pt}} 

\begin{tikzpicture}[x=0.75pt,y=0.75pt,yscale=-1,xscale=1]
	
	\draw    (210,360) -- (210,380) ;
	\draw   (180,380) -- (260,380) -- (260,410) -- (180,410) -- cycle ;
	\draw    (210,410) -- (210,430) ;
	\draw  [dash pattern={on 0.84pt off 2.51pt}]  (210,320) -- (210,360) ;
	\draw    (210,250) -- (210,270) ;
	\draw   (180,270) -- (260,270) -- (260,300) -- (180,300) -- cycle ;
	\draw    (210,300) -- (210,320) ;
	
	\draw (185,380) node [anchor=north west][inner sep=0.75pt]    {$\R^{\lambda ^{( k)}\rightarrow \lambda ^{( k-\frac{1}{2})}}$};
	\draw (205,429) node [anchor=north west][inner sep=0.75pt]  [font=\tiny]  {$\alpha $};
	\draw (185,270) node [anchor=north west][inner sep=0.75pt]    {$\R^{\lambda ^{( 1+\frac{1}{2})}\rightarrow \lambda ^{( 1)}}$};
\end{tikzpicture}} }
\end{equation}
This will be useful for constructing matrix units, which is the topic of the following sections.

\section{Semi-simplicity of $P_k(N)$ and matrix units}
\label{sec: Semi-Simple Algebra Technology}
The partition algebras are known to be semi-simple for $\N \geq 2k$. This means that they can be understood as algebras of block matrices.
In this section we will review the theoretical background necessary to go from the diagram basis to the basis that makes this correspondence manifest. This is used in the next section where we give an explicit algorithm for constructing the change of basis matrix as a function of $\N$ for all $\N \geq 2k$.

Let $\mathcal{B} = \{b_1,\dots,b_{B(2k)}\}$ be a basis for $P_k(N)$ with structure constants
\begin{equation}
	b_i b_j = \sum_{k=1}^{B(2k)}C_{ij}^k b_k.
\end{equation}
The regular representation of $P_k(N)$ is defined by the action of $P_k(N)$ on itself.
\begin{definition}[Regular representation]
	Let $V^{\text{reg}}$ be the vector space
	\begin{equation}
		V^{\text{reg}} = \Span(\vec{b}_i),
	\end{equation}
	where we emphasise that $\vec{b}_i$ are vectors in $V^{\text{reg}}$ in contrast to elements of the algebra $P_k(N)$. The partition algebra acts on $V^{\text{reg}}$ by left multiplication
	\begin{equation}
		b_i \vec{b}_j = C_{ij}^k \vec{b}_k.
	\end{equation}
	and right multiplication
	\begin{equation}
		\vec{b}_j	b_i  = C_{ji}^k \vec{b}_k.
	\end{equation}
\end{definition}
The trace of the left action on the regular representation can be written as
\begin{equation}
	\tr(b_i) = \sum_{j=1}^{B(2k)} C_{ij}^j = \sum_{j=1}^{B(2k)} \mathrm{Coeff}(b_j, b_i b_j),
\end{equation}
where $\mathrm{Coeff}(b_j, d)$ is the coefficient of $b_j$ in the expansion of $d \in P_k(N)$ in the basis $\mathcal{B}$.

We will use the following closely related theorems.
\begin{theorem}
	For $N \geq 2k$, $P_k(N)$ is semi-simple and therefore
	\begin{equation}
		G_{ij} \equiv \tr(b_i b_j) \label{eq: gram for reg rep}
	\end{equation}
	is an invertible matrix. We say that the trace in the regular representation defines a non-degenerate bilinear form on $P_k(N)$
\end{theorem}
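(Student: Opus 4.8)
The plan is to read off the Artin--Wedderburn decomposition of $P_k(\N)$ directly from the Schur--Weyl decomposition \eqref{eq: VN SW simple}, and then to reduce the non-degeneracy of the regular trace form to the elementary fact that the trace pairing on a full matrix algebra is non-degenerate.

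First I would recall that for $\N \geq 2k$ the homomorphism $\phi_k$ of Theorem \ref{thm: VN Schur-Weyl} is an isomorphism $P_k(\N) \cong \End_{\SN}(\VN^{\otimes k})$, and that \eqref{eq: VN SW simple} exhibits $\VN^{\otimes k} \cong \bigoplus_{\lambda \in \Lambda_{k,\N}} V_\lambda \otimes Z_\lambda$ as an $\SN \times P_k(\N)$--module, with $\{Z_\lambda\}$ a complete set of non-isomorphic irreducible $P_k(\N)$--modules of dimension $\dimPk{\lambda}$. By Schur's lemma this yields
\[
	P_k(\N) \;\cong\; \bigoplus_{\lambda \in \Lambda_{k,\N}} \End(Z_\lambda) \;\cong\; \bigoplus_{\lambda \in \Lambda_{k,\N}} \M_{\dimPk{\lambda}}(\mathbb{C}),
\]
so that $P_k(\N)$ is a direct sum of full matrix algebras; in particular it is semi-simple. (Alternatively one may simply cite the known semi-simplicity and invoke the general fact that over a field of characteristic zero a finite-dimensional algebra is semi-simple if and only if its regular trace form is non-degenerate, but the concrete route above is cleaner here.)

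Next I would decompose the left regular representation of such an algebra: for $A = \bigoplus_\lambda \M_{d_\lambda}(\mathbb{C})$ one has $V^{\text{reg}} \cong \bigoplus_\lambda d_\lambda\, U_\lambda$, where $U_\lambda$ is the standard $d_\lambda$--dimensional module of the $\lambda$--block (the columns of matrices are invariant), whence for $a = (a_\lambda)_\lambda \in A$ we get $\tr(a) = \sum_\lambda d_\lambda\, \tr_{U_\lambda}(a_\lambda)$. Applying this to $a = b_i b_j$ and writing $b_i = (b_i^\lambda)_\lambda$, and using $b_i^\lambda b_j^{\lambda'} = 0$ for $\lambda \neq \lambda'$, one obtains
\[
	G_{ij} \;=\; \tr(b_i b_j) \;=\; \sum_{\lambda \in \Lambda_{k,\N}} d_\lambda\, \tr_{U_\lambda}\!\bigl(b_i^\lambda b_j^\lambda\bigr),
\]
so the trace form is the orthogonal direct sum over $\lambda$ of $d_\lambda$ times the matrix trace pairing $(M,M') \mapsto \tr(M M')$ on $\M_{d_\lambda}(\mathbb{C})$. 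Finally, in the elementary-matrix basis $\tr(E_{pq} E_{rs}) = \delta_{qr}\delta_{ps}$, so $E_{qp}$ is dual to $E_{pq}$ and the Gram matrix of the matrix trace pairing is a permutation matrix, hence invertible; scaling by the positive integer $d_\lambda$ and taking an orthogonal direct sum both preserve non-degeneracy, so $G$ is non-degenerate and $(G_{ij})$ in the diagram basis $\mathcal{B}$ is invertible.

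I do not expect a serious obstacle; the one point to state carefully is that the whole argument rests on $\phi_k$ being an isomorphism, which is exactly the stable-limit hypothesis $\N \geq 2k$. Outside that range $\phi_k$ has a nontrivial kernel, $P_k(\N)$ need not be semi-simple, and the trace form on the diagram basis can genuinely degenerate, so the bound $\N \geq 2k$ cannot be dropped.
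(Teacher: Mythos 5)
Your argument is correct, but it is worth noting that the paper does not actually prove this statement: its ``proof'' is a citation to Halverson--Ram (their Theorem~3.27 and Equation~5.9). What you do instead is assemble a self-contained proof from results already established earlier in the chapter: in the stable limit $\N \geq 2k$ the map $\phi_k$ identifies $P_k(\N)$ with $\End_{\SN}(\VN^{\otimes k})$, which, as the endomorphism algebra of a semisimple $\SN$-module over $\mathbb{C}$, is a direct sum of full matrix algebras $\bigoplus_{\lambda} \M_{\dimPk{\lambda}}(\mathbb{C})$; the regular trace then splits blockwise as $\tr(b_i b_j) = \sum_\lambda \dimPk{\lambda}\,\tr(b_i^\lambda b_j^\lambda)$, and non-degeneracy reduces to the elementary fact that $\tr(E_{pq}E_{rs}) = \delta_{qr}\delta_{ps}$ makes the trace pairing on a matrix algebra non-degenerate. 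This buys transparency and keeps the chapter internally consistent (it is essentially the Artin--Wedderburn picture the thesis develops anyway in \eqref{eq: PkN AW decomp}), at the price of being genuinely tied to the injectivity of $\phi_k$, i.e.\ to $\N \geq 2k$; the cited Halverson--Ram results are proved by different (inductive/generic-parameter) methods and give semisimplicity already for $\N \geq 2k-1$. For that reason your closing remark that the bound $\N \geq 2k$ ``cannot be dropped'' is slightly overstated -- it is only your concrete route through $\phi_k$ that fails at $\N = 2k-1$, not semisimplicity itself -- but this is a side comment and does not affect the validity of your proof of the theorem as stated.
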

\begin{proof}
	See \cite[Theorem~3.27]{Halverson2005} and \cite[Equation~5.9]{Halverson2005}.
\end{proof}
It will be useful to have the following equivalent definition of non-degeneracy in what follows.
\begin{definition}
	 A bilinear form on $P_k(N)$ is non-degenerate if there exists no non-zero element $d \in P_k(N)$ such that
\begin{equation}
	\tr(b_i d) = 0 \quad \forall i=1,\dots,B(2k).
\end{equation}
\end{definition}

\subsection{Schur-Weyl duality and non-degenerate bilinear forms.}
Semi-simplicity of $P_k(N)$ also implies that the regular representation of $P_k(N)$ is completely decomposable. In particular, because the left and right action of $P_k(N)$ commutes we have the following result.
\begin{theorem}
	Let $V^{\text{reg}}$ be the regular representation of $P_k(N)$, then
	\begin{equation}
		V^{\text{reg}} = \bigoplus_{\lambda \in \Lambda_{k,\N}} Z_\lambda \otimes Z_\lambda. \label{eq: pkn reg rep decomp}
	\end{equation}
	as a representation of the left and right action of $P_k(N)$.
\end{theorem}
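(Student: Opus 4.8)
The plan is to use the standard structure theory of semi-simple algebras (Artin--Wedderburn) together with the double centralizer theorem already invoked above. The key point is that for $\N \geq 2k$ the partition algebra $P_k(\N)$ is semi-simple, so it is isomorphic to a direct sum of matrix algebras, one block for each irreducible representation. Concretely, the Artin--Wedderburn theorem gives $P_k(\N) \cong \bigoplus_{\lambda \in \Lambda_{k,\N}} \End(Z_\lambda)$, since by the theorems above $\{Z_\lambda \, \vert \, \lambda \in \Lambda_{k,\N}\}$ is a complete set of non-isomorphic irreducible representations of $P_k(\N)$. The regular representation $V^{\text{reg}}$ is by definition $P_k(\N)$ acting on itself, so as a bimodule it is $P_k(\N)$ viewed under the left and right multiplication actions.

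First I would recall that for a semi-simple algebra $A = \bigoplus_\lambda \End(Z_\lambda)$, the two-sided regular representation of $A$ on itself decomposes block by block, and each block $\End(Z_\lambda) \cong Z_\lambda \otimes Z_\lambda^*$ carries the left $A$-action on the first factor and the right $A$-action on the second. Then I would invoke the fact, established for $\SN$ earlier in the excerpt (the defining representation is orthogonal, Clebsch--Gordan coefficients are real), that the irreducible representations $Z_\lambda$ of $P_k(\N)$ can be taken to carry an invariant bilinear form, so $Z_\lambda^* \cong Z_\lambda$ as modules; this lets us replace $Z_\lambda^*$ by $Z_\lambda$ and obtain \eqref{eq: pkn reg rep decomp}. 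Summing over $\lambda \in \Lambda_{k,\N}$ gives the claimed decomposition
\begin{equation}
	V^{\text{reg}} \cong \bigoplus_{\lambda \in \Lambda_{k,\N}} Z_\lambda \otimes Z_\lambda
\end{equation}
as a representation of the commuting left and right actions.

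The main obstacle is making precise the identification of a single Wedderburn block $\End(Z_\lambda)$ with $Z_\lambda \otimes Z_\lambda$ as a left--right bimodule, and in particular the passage from $Z_\lambda^*$ to $Z_\lambda$; this is where the semi-simplicity (hence $\N \geq 2k$) and the self-duality of the $P_k(\N)$-irreducibles are genuinely used. A clean alternative, which avoids dualizing, is to apply the double centralizer theorem to the pair consisting of the left and right regular actions: these two actions are mutual centralizers inside $\End(V^{\text{reg}})$ because any linear map commuting with all left multiplications is itself a right multiplication, and vice versa. The double centralizer theorem then directly produces the multiplicity-free decomposition $V^{\text{reg}} \cong \bigoplus_\lambda Z_\lambda \otimes Z_\lambda'$ where $Z_\lambda'$ is the irreducible of the right action paired with $Z_\lambda$, and one checks that the left- and right-module structures on $P_k(\N)$ are related by the anti-automorphism given by diagram transpose (Definition \ref{def: transpose diagram}), which sends $Z_\lambda$ to $Z_\lambda$ up to isomorphism, giving $Z_\lambda' \cong Z_\lambda$. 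I would cite \cite[Section 2.5]{Halverson2018} or a standard reference such as \cite[Chapter 3]{Etingof09} for the Artin--Wedderburn decomposition and the bimodule structure of the regular representation, and present the argument in the shorter double-centralizer form.
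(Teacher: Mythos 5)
Your argument is correct, but note that the paper itself does not give an in-house proof of this statement: it simply points to \cite[Proposition~5.7]{Halverson2005}. What you have written out is essentially the standard argument underlying that citation, so the comparison is between a delegation to the literature and a self-contained proof. Your double-centralizer phrasing is the cleaner of your two routes: the observation that any endomorphism of $V^{\text{reg}}$ commuting with all left multiplications is a right multiplication (evaluate at the unit, $T(a)=aT(1)$) is exactly what makes the left and right regular actions mutual centralizers, and semi-simplicity for $\N\geq 2k$ then yields the multiplicity-free pairing $\bigoplus_\lambda Z_\lambda\otimes Z'_\lambda$. Two points deserve to be made explicit. First, the whole statement lives in the semi-simple regime $\N\geq 2k$ (the paper assumes this throughout the section, since the preceding theorem on non-degeneracy of the trace form is stated for $\N\geq 2k$); outside that regime $P_k(\N)$ is not semi-simple and \eqref{eq: pkn reg rep decomp} fails, so you should state the hypothesis. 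Second, your passage from $Z'_\lambda$ (an irreducible of the right action, i.e.\ of $P_k(\N)^{\mathrm{op}}$) to $Z_\lambda$ via the transpose anti-automorphism of Definition~\ref{def: transpose diagram} is legitimate, but it rests on the fact that transposition fixes the isomorphism class of each irreducible; in the paper this is precisely the content of \eqref{eq: orthogonal reps of PkN}, $D^\lambda_{\alpha\beta}(d^T)=D^\lambda_{\beta\alpha}(d)$, which is proved there from the reality of the $\SN$ Clebsch--Gordan coefficients --- so if you want your proof to be self-contained you should either prove that identity first or cite it, rather than gesture at orthogonality of the defining representation. With those two caveats supplied, your proof is a valid replacement for the citation, and arguably more useful to the reader since the identification of each Wedderburn block $\End(Z_\lambda)\cong Z_\lambda\otimes Z_\lambda$ as a left--right bimodule is exactly what is exploited later in the construction of matrix units.
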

\begin{proof}
	See statements in proof of \cite[Proposition~5.7]{Halverson2005}.
\end{proof}
As a consequence of \eqref{eq: pkn reg rep decomp} we have
\begin{corollary}
The trace of $d\in P_k(N)$ in the regular representation can be decomposed as
\begin{equation}
	\tr(d) = \sum_{\lambda \in \Lambda_{k,\N}} \Tr(D^\lambda(d)) \Tr(D^\lambda(1)) = \sum_{\lambda \in \Lambda_{k,\N} } \chr^{{\lambda}}(d) \dimPk{\lambda} , \label{eq: reg trace decomposition}
\end{equation}
where the sum is over all irreducible representations of $P_k(N)$, $\dimPk{\lambda}$ is the dimension of the representation $Z_\lambda$ and $\chr^{{\lambda}}$ is the corresponding character.
\end{corollary}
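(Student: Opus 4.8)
The plan is to read off the trace of the left regular action directly from the decomposition $V^{\text{reg}} \cong \bigoplus_{\lambda \in \Lambda_{k,\N}} Z_\lambda \otimes Z_\lambda$ established in the previous theorem, the only real work being to identify precisely how left multiplication by a fixed $d \in P_k(N)$ acts on each summand.

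First I would argue that, in the Artin--Wedderburn picture underlying semi-simplicity, the summand $Z_\lambda \otimes Z_\lambda$ is the $\lambda$-block of matrices, with the left action of $P_k(N)$ multiplying on the row index and the right action on the column index; concretely, left multiplication by $d$ restricts to the operator $D^\lambda(d) \otimes \idn_{Z_\lambda}$ on $Z_\lambda \otimes Z_\lambda$, where $D^\lambda$ is the irreducible representation carried by $Z_\lambda$. This is exactly what is recorded in the proof of \cite[Proposition~5.7]{Halverson2005}, and it is the reason the two factors of $Z_\lambda$ appear. Equivalently, one may phrase it via the matrix units: for fixed second index, the vectors spanning a column form a copy of $Z_\lambda$ closed under left multiplication on which $d$ acts by $D^\lambda(d)$, and the second index ranges over a multiplicity space of dimension $\dim Z_\lambda$.

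Granted this, the computation is immediate. Using multiplicativity of the trace over tensor products, $\Tr(A \otimes B) = \Tr(A)\,\Tr(B)$, one gets
\begin{equation}
	\tr(d) = \sum_{\lambda \in \Lambda_{k,\N}} \Tr\!\big(D^\lambda(d) \otimes \idn_{Z_\lambda}\big) = \sum_{\lambda \in \Lambda_{k,\N}} \Tr\!\big(D^\lambda(d)\big)\,\Tr\!\big(\idn_{Z_\lambda}\big).
\end{equation}
Then I would identify $\Tr(\idn_{Z_\lambda}) = \dim Z_\lambda = \dimPk{\lambda} = \Tr(D^\lambda(1))$, giving the first equality in the statement, and $\Tr(D^\lambda(d)) = \chr^{\lambda}(d)$ by the definition of the character of $D^\lambda$, giving the second. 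Since, by the earlier theorem, $\Lambda_{k,\N}$ is a complete labelling set for the irreducible representations of $P_k(N)$, the sum is indeed over all of them.

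The only genuine obstacle is the first step: pinning down that the left action is block-diagonal of the form $D^\lambda(d)\otimes\idn$ and does not also touch the right-hand factor. This is precisely the structural content of semi-simplicity (every finite-dimensional semi-simple algebra is a product of matrix algebras, and the regular bimodule of a matrix algebra is $\mathbb{C}^n \otimes (\mathbb{C}^n)^*$), so once semi-simplicity for $\N \geq 2k$ is invoked, as in the cited theorem, there is nothing further to establish and the remaining manipulations are purely formal.
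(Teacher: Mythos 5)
Your proposal is correct and follows exactly the route the paper intends: the corollary is stated as an immediate consequence of the bimodule decomposition $V^{\text{reg}} \cong \bigoplus_{\lambda \in \Lambda_{k,\N}} Z_\lambda \otimes Z_\lambda$, and your observation that left multiplication by $d$ acts as $D^\lambda(d)\otimes\idn$ on each summand, so that $\tr(d)=\sum_\lambda \Tr(D^\lambda(d))\dim Z_\lambda$, is precisely the argument the paper leaves implicit. Nothing is missing.
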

\begin{corollary}
The characters can be extracted from the trace by means of projection operators $p_{\lambda} \in P_k(N)$,
\begin{equation}
	\tr(p_{\lambda}d) = \dimPk{\lambda} \chr^{{\lambda}}(d). \label{eq: reg trace decomp}
\end{equation}
\end{corollary}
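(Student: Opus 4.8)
The plan is to read this off from the trace decomposition \eqref{eq: reg trace decomposition} together with semi-simplicity of $P_k(\N)$. First I would invoke the stated fact that $P_k(\N)$ is semi-simple for $\N \geq 2k$, so by the Artin--Wedderburn theorem the assignment $d \mapsto \big(D^\lambda(d)\big)_{\lambda \in \Lambda_{k,\N}}$ is an algebra isomorphism $P_k(\N) \cong \bigoplus_{\lambda \in \Lambda_{k,\N}} \End(Z_\lambda)$, where the $Z_\lambda$, $\lambda \in \Lambda_{k,\N}$, form a complete set of non-isomorphic irreducibles (as recorded in \eqref{eq: VN SW simple}). I would then define $p_\lambda \in P_k(\N)$ to be the preimage of the element that is the identity on the $\lambda$ summand and zero on every other summand; equivalently $D^\mu(p_\lambda) = \delta^{\mu\lambda}\,\idn$. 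This $p_\lambda$ is a central idempotent with $\sum_\lambda p_\lambda = 1$, which is what justifies calling it a projection operator.

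Next I would simply apply \eqref{eq: reg trace decomposition} to the element $p_\lambda d$ in place of $d$. Using $D^\mu(p_\lambda d) = D^\mu(p_\lambda) D^\mu(d) = \delta^{\mu\lambda} D^\lambda(d)$ one gets
\begin{equation}
	\tr(p_\lambda d) = \sum_{\mu \in \Lambda_{k,\N}} \Tr\!\big(D^\mu(p_\lambda d)\big)\,\dimPk{\mu} = \Tr\!\big(D^\lambda(d)\big)\,\dimPk{\lambda} = \dimPk{\lambda}\,\chr^{\lambda}(d),
\end{equation}
which is the assertion. Since everything is linear in $d$, it suffices to check this on a basis $\{b_i\}$ and extend, so no limiting or density argument is needed.

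An alternative that avoids the abstract Wedderburn decomposition, and which also explains why $p_\lambda$ is forced to exist, uses the non-degenerate bilinear form $G_{ij} = \tr(b_i b_j)$ from \eqref{eq: gram for reg rep}: since $G$ is invertible one may simply \emph{define} $p_\lambda = \sum_{i,j} (G^{-1})^{ij}\,\dimPk{\lambda}\,\chr^{\lambda}(b_j)\, b_i$, whence $\tr(p_\lambda b_m) = \sum_{i,j}(G^{-1})^{ij}\,\dimPk{\lambda}\,\chr^{\lambda}(b_j)\, G_{im} = \dimPk{\lambda}\,\chr^{\lambda}(b_m)$ for every $m$, and the identity for general $d$ follows by linearity. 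One would then remark that, by uniqueness (again from non-degeneracy), this $p_\lambda$ agrees with the central idempotent constructed above, so the two routes are consistent.

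I do not expect a genuine obstacle here: the statement is essentially a repackaging of the preceding corollary once the block-diagonal structure is in hand. The only points needing a little care are bookkeeping ones: confirming that $\Lambda_{k,\N}$ indexes exactly the irreducibles occurring in $V^{\text{reg}}$, i.e. in $\VN^{\otimes k}$ (this is \eqref{eq: pkn reg rep decomp} and \eqref{eq: VN SW simple}), and noting that $\chr^{\lambda}$ in the statement is by definition the character of $D^\lambda$, so that $\Tr(D^\lambda(d)) = \chr^{\lambda}(d)$ needs no proof.
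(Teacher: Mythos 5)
Your proposal is correct and follows essentially the same two routes the paper itself indicates: the block-diagonal (Artin--Wedderburn / regular-representation) argument and the explicit formula $p_\lambda = \sum_{i,j}\dimPk{\lambda}\,\chr^{\lambda}(b_j)(G^{-1})_{ij}b_i$ built from the non-degenerate form $G_{ij}=\tr(b_ib_j)$. The only cosmetic difference is that you verify $\tr(p_\lambda b_m)=\dimPk{\lambda}\chr^\lambda(b_m)$ directly from the definition of $G^{-1}$ (using symmetry of $G$) rather than invoking character orthogonality, which is a harmless and if anything slightly more self-contained bookkeeping step.
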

\begin{proof}
	
This can be seen as a consequence of character orthogonality (see \cite[Theorem~3.8, Theorem~3.9]{AR90DissertCh1})
\begin{equation}
	\sum_{i,j=1}^{B(2k)} \dimPk{\lambda } \chr^{\lambda }(b_i)(G^{-1})_{ij}\chr^{\lambda'}(b_j)  = \delta^{\lambda \lambda'},
\end{equation}
and the fact that projectors can be written as
\begin{equation}
	p_{\lambda} = \sum_{i,j=1}^{B(2k)} \dimPk{\lambda} \chr^{\lambda}(b_i)(G^{-1})_{ij}b_j,
\end{equation}
where $(G^{-1})_{ij}$ is the inverse of the matrix $G_{ij}$ in \eqref{eq: gram for reg rep}. Alternatively, it follows from the decomposition \eqref{eq: pkn reg rep decomp}.
\end{proof}

We now move on to the trace in $V_N^{\otimes k}$ and show that it defines a non-degenerate bilinear form. First we need the following proposition that relates the two traces.
\begin{proposition}
Let $d \in P_k(N)$ and $D \in \End_{\SN}(\VN^{\otimes k})$ be the corresponding linear map. Define
\begin{equation}
	\Omega = \sum_{\lambda \in \Lambda_{k,\N}} \frac{\dimSN{\lambda }}{\dimPk{\lambda }} p_{\lambda }. \label{eq: omega}
\end{equation}
then,
\begin{equation}
	\Tr_{V_N^{\otimes k}}(d) = \tr(\Omega d). \label{eq: traces relation omega factor}
\end{equation}
\end{proposition}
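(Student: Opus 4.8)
## Proof Proposal

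The plan is to use the decomposition of $\VN^{\otimes k}$ as an $\SN \times P_k(\N)$ representation given in \eqref{eq: VN SW simple} together with the regular representation decomposition \eqref{eq: pkn reg rep decomp}. The key observation is that both traces — the trace $\Tr_{\VN^{\otimes k}}$ in the Schur-Weyl module and the trace $\tr$ in the regular representation — can be decomposed into the irreducible characters $\chr^\lambda$ of $P_k(\N)$, but with different multiplicities. In the Schur-Weyl module the multiplicity of $Z_\lambda$ is $\dimSN{\lambda}$ (the dimension of the $\SN$-irreducible $V_\lambda$ sitting next to it), whereas in the regular representation the multiplicity of $Z_\lambda$ is $\dimPk{\lambda}$. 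The element $\Omega$ is precisely the operator that rescales each isotypic component to convert one multiplicity into the other.

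First I would establish the character decomposition of $\Tr_{\VN^{\otimes k}}$. From \eqref{eq: VN SW simple}, for $d \in P_k(\N)$ with corresponding map $D \in \End_{\SN}(\VN^{\otimes k})$, we have
\begin{equation}
	\Tr_{\VN^{\otimes k}}(D) = \sum_{\lambda \in \Lambda_{k,\N}} \dimSN{\lambda}\, \chr^\lambda(d),
\end{equation}
since $D$ acts as $\idn_{V_\lambda} \otimes D^\lambda(d)$ on the summand $V_\lambda \otimes Z_\lambda$ and the trace of the identity on $V_\lambda$ is $\dimSN{\lambda}$. This should be compared with the analogous formula \eqref{eq: reg trace decomposition} for the regular trace, $\tr(d) = \sum_{\lambda \in \Lambda_{k,\N}} \dimPk{\lambda}\, \chr^\lambda(d)$.

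Next I would use the projector property \eqref{eq: reg trace decomp}, namely $\tr(p_\lambda d) = \dimPk{\lambda}\, \chr^\lambda(d)$, to compute the right-hand side of the claimed identity. With $\Omega$ as in \eqref{eq: omega},
\begin{equation}
	\tr(\Omega d) = \sum_{\lambda \in \Lambda_{k,\N}} \frac{\dimSN{\lambda}}{\dimPk{\lambda}}\, \tr(p_\lambda d) = \sum_{\lambda \in \Lambda_{k,\N}} \frac{\dimSN{\lambda}}{\dimPk{\lambda}}\, \dimPk{\lambda}\, \chr^\lambda(d) = \sum_{\lambda \in \Lambda_{k,\N}} \dimSN{\lambda}\, \chr^\lambda(d),
\end{equation}
which matches the expression for $\Tr_{\VN^{\otimes k}}(d)$ derived above, completing the proof. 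I do not expect a serious obstacle here; the only point requiring a little care is the first step — justifying that the trace of $D$ on $\VN^{\otimes k}$ genuinely splits according to \eqref{eq: VN SW simple} with the multiplicity factor $\dimSN{\lambda}$ coming from the $\SN$-module $V_\lambda$ rather than from $Z_\lambda$. This is immediate once one writes $D$ in the Schur-Weyl-adapted basis \eqref{eq: VN SW basis}, where $D$ is block-diagonal with blocks $\idn_{\dimSN{\lambda}} \otimes D^\lambda(d)$, so that $\Tr_{\VN^{\otimes k}}(D) = \sum_\lambda \dimSN{\lambda} \Tr(D^\lambda(d))$. The rest is bookkeeping with the projectors $p_\lambda$, whose defining property \eqref{eq: reg trace decomp} is already available.
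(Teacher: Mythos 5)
Your argument is correct and is essentially the paper's own proof: both decompose $\Tr_{\VN^{\otimes k}}(D)$ via Schur--Weyl duality as $\sum_{\lambda}\dimSN{\lambda}\,\chr^{\lambda}(d)$ and then use the projector identity $\tr(p_{\lambda}d)=\dimPk{\lambda}\,\chr^{\lambda}(d)$ to rewrite this as $\tr(\Omega d)$. No gaps to note.
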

\begin{proof}
Assume $\N \geq 2k$, then Schur-Weyl duality \eqref{eq: VN SW simple} implies that the trace in $V_N^{\otimes k}$ decomposes as
\begin{equation}
	\Tr_{V_N^{\otimes k}}(D) = \sum_{\lambda \in \Lambda_{k,\N}} \dimSN{\lambda} \chr^{{\lambda}}(d), \label{eq: traces V_N otimes k decomposition}
\end{equation}
where the sum is over the irreducible representations that appear in equation \eqref{eq: VN SW simple}.
Consequently, we can relate the two traces by substituting \eqref{eq: reg trace decomp} into each summand of \eqref{eq: traces V_N otimes k decomposition}
\begin{equation}
	\Tr_{V_N^{\otimes k}}(D) = \sum_{\lambda \in \Lambda_{k,\N}} \dimSN{\lambda} \chr^{{\lambda}}(d) =  \sum_{\lambda \in \Lambda_{k,\N}} \frac{\dimSN{\lambda }}{\dimPk{\lambda }} \tr(p_{\lambda }d) = \tr(\Omega d). \label{eq: Vn trace as reg trace}
\end{equation}
\end{proof}

We can now prove non-degeneracy.
\begin{proposition}\label{prop: nondegen VNtrace}
	Let $d,d' \in P_k(N)$ and $D,D' \in \End_{S_N}(\VN^{\otimes k})$ be the corresponding linear maps. Define the bilinear form $\langle - , - \rangle$
	\begin{equation}
		\langle d, d' \rangle = \Tr_{V_N^{\otimes k}}(D (D')^T) \label{eq: VN trace form}
	\end{equation}
	on $P_k(N)$. It is non-degenerate for $N \geq 2k$.
\end{proposition}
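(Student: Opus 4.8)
The plan is to reduce non-degeneracy of $\langle -,-\rangle$ to the already-established non-degeneracy of the regular trace form $G_{ij}=\tr(b_ib_j)$ on $P_k(N)$, using the relation $\Tr_{V_N^{\otimes k}}(\cdot)=\tr(\Omega\,\cdot\,)$ from \eqref{eq: traces relation omega factor}. First I would observe that the matrix transpose on $\End(\VN^{\otimes k})$, taken with respect to the orthonormal basis $\{e_{i_1}\otimes\dots\otimes e_{i_k}\}$ of \eqref{eq: VNk inner prod}, restricts to an anti-involution of $\End_{\SN}(\VN^{\otimes k})$ which is intertwined by $\phi_k$ of Theorem \ref{thm: VN Schur-Weyl} with the diagrammatic transpose $d_\pi\mapsto d_\pi^T$ of Definition \ref{def: transpose diagram}: transposing the matrix $(\dgram_\pi)^{i_{1'}\dots i_{k'}}_{i_1\dots i_k}$ simply exchanges the roles of primed and unprimed indices, which is exactly the effect of $\pi\mapsto\pi^T$. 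Hence $(D')^T=\phi_k\big((d')^T\big)$, and $d'\mapsto (d')^T$ is a linear involution, hence a bijection, of $P_k(N)$. Combining this with the homomorphism property of $\phi_k$ and \eqref{eq: traces relation omega factor},
\begin{equation}
	\langle d, d'\rangle=\Tr_{V_N^{\otimes k}}\big(\phi_k(d\,(d')^T)\big)=\tr\big(\Omega\, d\,(d')^T\big).
\end{equation}

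Next I would record that $\Omega=\sum_{\lambda\in\Lambda_{k,\N}}\tfrac{\dimSN{\lambda}}{\dimPk{\lambda}}\,p_\lambda$ is a unit of $P_k(N)$. From the regular-representation decomposition \eqref{eq: pkn reg rep decomp} the $p_\lambda$ are orthogonal idempotents with $p_\lambda p_\mu=\delta_{\lambda\mu}p_\lambda$ summing to $1$, and each scalar $\dimSN{\lambda}/\dimPk{\lambda}$ is nonzero, so $\Omega^{-1}=\sum_\lambda\tfrac{\dimPk{\lambda}}{\dimSN{\lambda}}\,p_\lambda$.

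Finally, suppose $d$ lies in the left radical of $\langle-,-\rangle$, i.e. $\langle d,d'\rangle=0$ for all $d'\in P_k(N)$. Since $d'\mapsto (d')^T$ is surjective, the displayed identity gives $\tr(\Omega d\,a)=0$ for every $a\in P_k(N)$. Non-degeneracy of the regular trace form (the theorem surrounding \eqref{eq: gram for reg rep}; equivalently, no nonzero element of $P_k(N)$ pairs trivially with all of $P_k(N)$) then forces $\Omega d=0$, and invertibility of $\Omega$ yields $d=0$. Because $\langle-,-\rangle$ lives on a finite-dimensional space, its left and right radicals have equal dimension, so vanishing of the left radical is precisely non-degeneracy, which finishes the proof.

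The steps are essentially bookkeeping once the right objects are assembled; the only points needing genuine care are the first one -- verifying that the matrix transpose on $\End(\VN^{\otimes k})$ is carried by $\phi_k$ to the diagrammatic transpose, so that $(D')^T$ really corresponds to an element $(d')^T$ of $P_k(N)$ -- and the observation that the $p_\lambda$ are orthogonal idempotents summing to the identity, which makes $\Omega$ invertible. Neither is a serious obstacle; the substantive input, semi-simplicity of $P_k(N)$ for $\N\ge 2k$ and the resulting non-degeneracy of the regular trace form, has already been imported.
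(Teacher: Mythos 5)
Your proof is correct and follows essentially the same route as the paper: both reduce non-degeneracy of $\langle-,-\rangle$ to that of the regular trace form via the relation $\Tr_{V_N^{\otimes k}}(\cdot)=\tr(\Omega\,\cdot)$ and the bijectivity of the diagram transpose. Your explicit verification that $\Omega$ is invertible (the paper instead passes to $d'=d^T\Omega$ and implicitly uses that this is nonzero when $d\neq 0$) is a welcome but minor tightening of the same argument.
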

\begin{proof}	
We give a proof by contradiction.
Let $\mathcal{B}=\{b_1, \dots, b_{B(2k)}\}$ be a basis for $P_k(N)$ and $B_i, D \in \End_{\SN}(\VN^{\otimes k})$ the corresponding linear maps.
Assume $d \in P_k(N)$ is a non-zero element (with corresponding linear map $D$) satisfying
\begin{equation}
	\langle b_i, d \rangle = 0, \quad \forall i=1,\dots,B(2k).
\end{equation}
From \eqref{eq: traces relation omega factor} this implies
\begin{equation}
	\langle b_i, d \rangle = \Tr_{V_N^{\otimes k}}(B_i D^T) = \tr(\Omega b_i d^T) = 0, \quad \forall i=1,\dots,B(2k).
\end{equation}
Consequently, the element $d' = d^T \Omega \in P_k(N)$ is such that
\begin{equation}
	\tr(b_i d') = 0, \quad \forall i=1,\dots,B(2k),
\end{equation}
which contradicts the fact that the trace in the regular representation of $P_k(N)$ defines a non-degenerate bilinear form.
\end{proof}
The point of having a non-degenerate bilinear form is to define dual elements.
\begin{proposition}\label{def: dual PkN}
	Let $\mathcal{B} = \{b_1, \dots, b_{B(2k)}\}$  be a basis for $P_k(N)$ and $B_i$ the corresponding linear maps. Define
	\begin{equation}
		g_{ij} =  \Tr_{V_N^{\otimes k}}(B_i B_j)
	\end{equation}
	and
	\begin{equation}
		b_i^* = \sum_{j=1}^{B(2k)} (g^{-1})_{ij} b_j
	\end{equation}
	The dual basis with respect to the inner product \eqref{eq: VN trace form} is given by
	\begin{equation}
		b^\dagger_i = \sum_{j=1}^{B(2k)}(g^{-1})_{ji}b_j^T = (b_i^*)^T.
	\end{equation}
\end{proposition}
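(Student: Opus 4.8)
The plan is to verify directly that the proposed elements $b_i^\dagger$ pair to the Kronecker delta with the basis $\{b_i\}$ under the form $\langle\,\cdot\,,\,\cdot\,\rangle$ of \eqref{eq: VN trace form}, and that they genuinely form a basis, so that they are \emph{the} dual basis.

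First I would record three elementary facts. (i) The transpose map $d \mapsto d^T$ of Definition \ref{def: transpose diagram} is a linear bijection of $P_k(N)$, since $\pi \mapsto \pi^T$ permutes the set partitions indexing the diagram basis; hence $\{b_1^T,\dots,b_{B(2k)}^T\}$ is again a basis of $P_k(N)$, with $D \mapsto D^T$ the corresponding bijection of $\End_{\SN}(\VN^{\otimes k})$. (ii) The matrix $g_{ij} = \Tr_{\VN^{\otimes k}}(B_i B_j)$ is symmetric, by cyclicity of the trace, and therefore so is $g^{-1}$. (iii) Unwinding \eqref{eq: VN trace form}, $\langle b_i, b_j^T\rangle = \Tr_{\VN^{\otimes k}}(B_i (B_j^T)^T) = \Tr_{\VN^{\otimes k}}(B_i B_j) = g_{ij}$, so $g$ is precisely the matrix of the bilinear form $\langle\,\cdot\,,\,\cdot\,\rangle$ evaluated on the two bases $\{b_i\}$ and $\{b_j^T\}$.

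The one real input is invertibility of $g$, and this is where Proposition \ref{prop: nondegen VNtrace} enters: since $\langle\,\cdot\,,\,\cdot\,\rangle$ is non-degenerate for $N \ge 2k$ and $\{b_j^T\}$ is a basis by (i), the matrix $g_{ij} = \langle b_i, b_j^T\rangle$ from (iii) is non-singular, so $g^{-1}$ exists (and is symmetric by (ii)). Consequently $b_i^* = \sum_j (g^{-1})_{ij} b_j$ and $b_i^\dagger = \sum_j (g^{-1})_{ji} b_j^T$ are well-defined, and — being obtained from the bases $\{b_j\}$, resp. $\{b_j^T\}$, by the invertible matrix $g^{-1}$ — are themselves bases of $P_k(N)$. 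The stated identity $b_i^\dagger = (b_i^*)^T$ is then immediate from symmetry of $g^{-1}$: $(b_i^*)^T = \sum_j (g^{-1})_{ij} b_j^T = \sum_j (g^{-1})_{ji} b_j^T = b_i^\dagger$.

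Finally I would carry out the duality computation. Using bilinearity of $\langle\,\cdot\,,\,\cdot\,\rangle$ together with fact (iii),
\begin{equation}
	\langle b_i, b_j^\dagger \rangle = \Big\langle b_i,\ \sum_{l=1}^{B(2k)} (g^{-1})_{lj}\, b_l^T \Big\rangle = \sum_{l=1}^{B(2k)} (g^{-1})_{lj}\, \langle b_i, b_l^T\rangle = \sum_{l=1}^{B(2k)} g_{il}\, (g^{-1})_{lj} = \delta_{ij},
\end{equation}
so $\{b_i^\dagger\}$ is the basis dual to $\{b_i\}$ with respect to \eqref{eq: VN trace form}. I do not expect a genuine obstacle: the argument is a routine Gram-matrix inversion, and the only step that must be handled with care is the appeal to Proposition \ref{prop: nondegen VNtrace} (equivalently, to semi-simplicity of $P_k(N)$ for $N \ge 2k$) guaranteeing that $g$ is invertible, without which no dual basis would exist.
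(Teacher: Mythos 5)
Your proof is correct and follows essentially the same route as the paper's: the core step is the same direct verification $\langle b_i, b_j^\dagger\rangle = \sum_l (g^{-1})_{lj}\Tr_{\VN^{\otimes k}}(B_i B_l) = \delta_{ij}$, i.e. inverting the Gram matrix of the form \eqref{eq: VN trace form}. Your additional bookkeeping — invertibility of $g$ via Proposition \ref{prop: nondegen VNtrace} together with the fact that transposition permutes the diagram basis, symmetry of $g$ giving $b_i^\dagger = (b_i^*)^T$, and the basis property of $\{b_i^\dagger\}$ — is left implicit in the paper but is a welcome tightening rather than a different argument.
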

\begin{proof}
	Plugging the definition of the dual into \eqref{eq: VN trace form} gives
	\begin{equation}
		\langle b_i, b_j^\dagger \rangle =  \sum_{k=1}^{B(2k)} \Tr_{V_N^{\otimes k}}(B_i (g^{-1})_{jk}(B_k^T)^T ) = \sum_{k=1}^{B(2k)} \Tr_{V_N^{\otimes k}}(B_i (g^{-1})_{jk}B_k ) = \delta_{ij}.
	\end{equation}
\end{proof}


The dual elements are essential for proving orthogonality of matrix elements.
The proof also uses the following property of the bilinear form
\begin{equation}
	\langle b_i, b_j b_k \rangle = \langle b_i b_k^T, b_j \rangle = \langle b_j^T b_i, b_k \rangle. \label{eq: frob associ}
\end{equation}
The first step uses $(b_j b_k)^T = b_k^T b_j^T$ and the second step uses cyclicity of the trace.

\subsection{Orthogonality of matrix elements.}
The matrix elements $D_{\alpha \beta}^{\lambda_1}(b_i)$ of irreducible representations of $P_k(N)$ are orthogonal. This is a generalization of the corresponding orthogonality theorem for group algebras (see section 3.15 in \cite{Hamermesh1962}). Before proving this, we need some intermediary results.
\begin{proposition}
	Let $d \in P_k(N)$ and $D \in \End_{\SN}(\VN^{\otimes k})$ the corresponding linear map. For $\N \geq 2k$ the irreducible representations of $P_k(N)$ can be written as
	\begin{equation}
		D_{\alpha \beta}^{\lambda}(d) = \sum_{\substack{i_1,\dots,i_k \\ i_{1'},\dots, i_{k'}}}(C^\lambda_{a, \beta})^{i_1 \dots i_k} D^{i_{1'} \dots i_{k'}}_{i_1 \dots i_k} (C^\lambda_{a, \alpha})^{i_{1'} \dots i_{k'}}. \label{eq: PkN irreps from clebsch}
	\end{equation}
\end{proposition}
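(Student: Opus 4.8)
The plan is to read off the formula directly from the equivariance property \eqref{eq: VNk clebsch PkN equivariance} of the Clebsch--Gordan coefficients together with their orthonormality. First I would observe that, although \eqref{eq: VNk clebsch PkN equivariance} is stated for a diagram basis element $d_\pi$, both sides are linear in the algebra element and in the corresponding map, so for $\N \geq 2k$ — where $d \mapsto D$ is the isomorphism $\phi_k$ of Theorem \ref{thm: VN Schur-Weyl} and $D^\lambda$ is a genuine irreducible representation of $P_k(\N)$ — it extends by linearity to
\begin{equation*}
	(C^{\lambda}_{a,\beta})^{i_1 \dots i_k}\, D(e_{i_1} \otimes \dots \otimes e_{i_k}) = E^\lambda_a \otimes \sum_\gamma D^\lambda_{\gamma \beta}(d)\, E^\lambda_\gamma .
\end{equation*}

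Next I would unpack the right-hand side of \eqref{eq: PkN irreps from clebsch} by carrying out the two contractions in turn. Summing first over $i_1, \dots, i_k$, the quantity $\sum_{i_1 \dots i_k} (C^{\lambda}_{a,\beta})^{i_1 \dots i_k} D^{i_{1'} \dots i_{k'}}_{i_1 \dots i_k}$ is exactly the coefficient of $e_{i_{1'}} \otimes \dots \otimes e_{i_{k'}}$ in $D(E^\lambda_a \otimes E^\lambda_\beta)$, which by the displayed identity and $E^\lambda_a \otimes E^\lambda_\gamma = (C^{\lambda}_{a,\gamma})^{i_{1'} \dots i_{k'}} e_{i_{1'}} \otimes \dots \otimes e_{i_{k'}}$ equals $\sum_\gamma D^\lambda_{\gamma\beta}(d)\,(C^{\lambda}_{a,\gamma})^{i_{1'} \dots i_{k'}}$. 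Contracting with $(C^{\lambda}_{a,\alpha})^{i_{1'} \dots i_{k'}}$ and summing over $i_{1'}, \dots, i_{k'}$ then gives
\begin{equation*}
	\sum_{\gamma} D^\lambda_{\gamma\beta}(d) \sum_{i_{1'} \dots i_{k'}} (C^{\lambda}_{a,\gamma})^{i_{1'} \dots i_{k'}} (C^{\lambda}_{a,\alpha})^{i_{1'} \dots i_{k'}} = \sum_\gamma D^\lambda_{\gamma\beta}(d)\,\delta_{\gamma\alpha} = D^\lambda_{\alpha\beta}(d),
\end{equation*}
where the middle equality uses the orthonormality relation $\sum_{i_1 \dots i_k} (C^{\lambda}_{a,\gamma})^{i_1 \dots i_k}(C^{\lambda}_{a,\alpha})^{i_1 \dots i_k} = \delta_{\gamma\alpha}$ at fixed $a$, i.e.\ the $\lambda'=\lambda$, $b=a$ case of the orthonormality derived just before \eqref{eq: clebsch orthonormality} together with the reality of the Clebsch--Gordan coefficients. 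This proves \eqref{eq: PkN irreps from clebsch}.

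I do not expect a genuine obstacle: the content is just bookkeeping of Clebsch--Gordan contractions. The one point deserving care is the role of the index $a$, which appears twice on the right-hand side but is not summed; I would state explicitly that the identity holds for each fixed $a \in \{1,\dots,\dimSN{\lambda}\}$ and that independence of the choice of $a$ is therefore a consequence of the computation rather than an assumption. A secondary point worth flagging is exactly where $\N \geq 2k$ enters — it guarantees $\End_{\SN}(\VN^{\otimes k}) \cong P_k(\N)$ via Theorem \ref{thm: VN Schur-Weyl}, so that $D^\lambda$ is well defined on all of $P_k(\N)$ and the Schur--Weyl decomposition \eqref{eq: VN SW simple} with its orthonormal Clebsch--Gordan basis is available in the first place.
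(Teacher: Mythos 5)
Your proof is correct and follows essentially the same route as the paper: both start from the $P_k(\N)$-equivariance property \eqref{eq: VNk clebsch PkN equivariance} of the Clebsch--Gordan coefficients and finish with their orthonormality/reality \eqref{eq: clebsch orthonormality}, the only difference being that you contract directly with the second Clebsch--Gordan coefficient while the paper expands both sides in the irreducible basis and equates coefficients before setting $a=b$, $\lambda=\lambda'$. Your explicit remark that the index $a$ is held fixed (not summed) is a correct and worthwhile clarification that the paper leaves implicit.
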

\begin{proof}
	From \eqref{eq: VNk clebsch PkN equivariance} we have
	\begin{equation}
		(C^{\lambda}_{a,\alpha})^{i_1 i_2 \dots i_k} D(e_{i_1} \otimes \dots \otimes e_{i_k} )= E^\lambda_a \otimes D^\lambda_{\beta \alpha }(d)E^\lambda_\beta.
	\end{equation}
	The l.h.s. can be expanded as
	\begin{equation}
		\begin{aligned}
		(C^{\lambda}_{a,\alpha})^{i_1 i_2 \dots i_k} D^{i_{1'} \dots i_{k'}}_{i_1 \dots i_k}& (C^{b \beta}_{\lambda'})_{i_{1'} \dots i_{k'}}E^{\lambda'}_b \otimes E^{\lambda'}_\beta \\
		&= (C^{\lambda}_{a,\alpha})^{i_1 i_2 \dots i_k} D^{i_{1'} \dots i_{k'}}_{i_1 \dots i_k} (C_{b \beta}^{\lambda'})^{i_{1'} \dots i_{k'}}E^{\lambda'}_b \otimes E^{\lambda'}_\beta,
		\end{aligned}
	\end{equation}
	where the equality follows from \eqref{eq: clebsch orthonormality}.
	The r.h.s. can be re-written as
	\begin{equation}
		\delta_{ab} \delta^{\lambda {\lambda'}} E_b^{\lambda'} \otimes D^{\lambda'}_{\beta \alpha}(d)E_\beta^{\lambda'}.
	\end{equation}
	Equating the coefficients on the l.h.s. and r.h.s. gives
	\begin{equation}
		(C^{\lambda}_{a,\alpha})^{i_1 i_2 \dots i_k} D^{i_{1'} \dots i_{k'}}_{i_1 \dots i_k} (C_{b \beta}^{\lambda'})^{i_{1'} \dots i_{k'}} = \delta_{ab} \delta^{\lambda {\lambda'}}  D^{\lambda'}_{\beta \alpha}(d),
	\end{equation}
	which reduces to \eqref{eq: PkN irreps from clebsch} for $a=b, \lambda=\lambda'$ and swapping $\alpha \leftrightarrow \beta$.	
\end{proof}
A consequence of this relationship between irreducible representations of $P_k(N)$ and Clebsch-Gordan coefficients is the following.
\begin{corollary}
The irreducible representations of $P_k(N)$ as defined above satisfy
\begin{equation}
	D_{\alpha \beta}^{\lambda}(d^T) = D_{\beta \alpha}^{\lambda}(d), \qq{for $d \in P_k(N)$,} \label{eq: orthogonal reps of PkN}
\end{equation}
where the transpose $d^T$ is defined in Definition \ref{def: transpose diagram}.
\end{corollary}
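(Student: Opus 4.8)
The plan is to reduce the claim to the single diagram-level fact that passing to the transpose diagram is implemented on $\VN^{\otimes k}$ by the ordinary matrix transpose, and then to substitute this into the Clebsch--Gordan formula \eqref{eq: PkN irreps from clebsch} and relabel summation indices.

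\textbf{Step 1 (transpose of a diagram map).} First I would show that if $d_\pi$ is a diagram basis element with associated map $D_\pi \in \End_{\SN}(\VN^{\otimes k})$, then $D_{d_\pi^T}$ equals the matrix transpose $(D_\pi)^T$ in the standard basis, i.e.
$(D_{\pi^T})^{i_{1'} \dots i_{k'}}_{i_1 \dots i_k} = (D_\pi)^{i_1 \dots i_k}_{i_{1'} \dots i_{k'}}$. This is immediate from the Kronecker-delta description of the diagram basis together with Definition~\ref{def: transpose diagram}: the matrix entry of a diagram map is $1$ exactly when the index values attached to vertices lying in a common block agree, and $d_\pi^T$ is obtained by interchanging the primed and unprimed vertices, which has precisely the effect of swapping the two index tuples. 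By linearity of $d \mapsto d^T$ on $P_k(N)$ and of matrix transposition, this extends to all $d \in P_k(N)$: the linear map corresponding to $d^T$ is $(D)^T$, where $D$ corresponds to $d$.

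\textbf{Step 2 (substitution and relabelling).} Now apply \eqref{eq: PkN irreps from clebsch} to $d^T$ (the formula holds for every element of $P_k(N)$). Using Step~1 to replace the matrix of $d^T$ by $(D)^T$, we get
\begin{equation*}
  D_{\alpha \beta}^{\lambda}(d^T) = \sum (C^\lambda_{a, \beta})^{i_1 \dots i_k}\, D^{i_1 \dots i_k}_{i_{1'} \dots i_{k'}}\, (C^\lambda_{a, \alpha})^{i_{1'} \dots i_{k'}}.
\end{equation*}
Relabelling the two blocks of dummy indices $i_1\dots i_k \leftrightarrow i_{1'}\dots i_{k'}$ turns the right-hand side into $\sum (C^\lambda_{a, \alpha})^{i_1 \dots i_k}\, D^{i_{1'} \dots i_{k'}}_{i_1 \dots i_k}\, (C^\lambda_{a, \beta})^{i_{1'} \dots i_{k'}}$, which is exactly \eqref{eq: PkN irreps from clebsch} with $\alpha$ and $\beta$ interchanged, i.e. $D^\lambda_{\beta\alpha}(d)$. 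Hence $D_{\alpha\beta}^{\lambda}(d^T) = D_{\beta\alpha}^{\lambda}(d)$.

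\textbf{Main obstacle.} The only delicate point is the bookkeeping in Step~1 — making sure that the "transpose diagram" of Definition~\ref{def: transpose diagram} corresponds to honest matrix transpose (and not to some index rearrangement internal to the $k$-tuple, nor to a Hermitian adjoint). Once that identification is pinned down, the conclusion is a one-line relabelling; it is also worth noting explicitly that this works without complex conjugation because the Clebsch--Gordan coefficients were chosen real, as recorded in \eqref{eq: clebsch orthonormality}.
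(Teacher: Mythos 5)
Your proposal is correct and follows essentially the same route as the paper: the paper's proof is exactly the substitution of $(d^T)^{i_{1'}\dots i_{k'}}_{i_1\dots i_k} = (d)_{i_{1'}\dots i_{k'}}^{i_1\dots i_k}$ into \eqref{eq: PkN irreps from clebsch} followed by relabelling of the contracted indices, with your Step~1 simply making explicit the diagram-level fact (transpose diagram $=$ matrix transpose) that the paper uses implicitly. Your remark about the reality of the Clebsch--Gordan coefficients is a helpful clarification but not a deviation in method.
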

\begin{proof}
This follows from
\begin{align}
	D_{\alpha \beta}^{\lambda}(d^T) &= C_{a,i_{1'} \dots i_{k'}}^{\lambda, \alpha}  C_{a,i_1 \dots i_k}^{\lambda, \beta} (d^T)^{i_{1'} \dots i_{k'}}_{i_1 \dots i_k} \\
	&= C_{a,i_{1'} \dots i_{k'}}^{\lambda, \alpha}  C_{a,i_1 \dots i_k}^{\lambda, \beta} (d)_{i_{1'} \dots i_{k'}}^{i_1 \dots i_k} = D_{\beta \alpha}^{\lambda}(d). \label{eq: transpose diagram is transpose matrix element}
\end{align}
\end{proof}

Because the bilinear form \eqref{eq: VN trace form} includes a transpose, the symmetrisation theorem (proposition 2.6 in \cite{AR90DissertCh1}) is modified slightly. We have the following version.
\begin{proposition}	
Let $C$ be a $\dimPk{\lambda} \times \dimPk{\lambda'}$ rectangular matrix, and $D^{\lambda}(d), D^{\lambda'}(d)$ be two irreducible matrix representations of $P_k(N)$ with dimension $\dimPk{\lambda}, \dimPk{\lambda'}$ respectively. Define the symmetrised matrix
\begin{equation}
	[C] = \sum_{i=1}^{B(2k)} D^{\lambda}(b_i)C[D^{\lambda'}(b_i^\dagger)]^T = \sum_{i=1}^{B(2k)} D^{\lambda}(b_i)CD^{\lambda'}(b_i^*).
\end{equation}
It satisfies
\begin{equation}
	D^{\lambda}(d)[C] = [C]D^{\lambda'}(d), \label{eq: sym theorem}
\end{equation}
for all $d \in P_k(N)$.
\end{proposition}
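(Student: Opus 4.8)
The plan is to prove the intertwining identity $D^{\lambda}(d)[C] = [C]D^{\lambda'}(d)$ by a standard "averaging over a basis and its dual" argument, analogous to the symmetrisation trick for group algebras, but carefully tracking the transpose that enters the bilinear form \eqref{eq: VN trace form}. First I would fix $d \in P_k(N)$ and compute $D^{\lambda}(d)[C]$ directly from the definition:
\begin{equation*}
	D^{\lambda}(d)[C] = \sum_{i=1}^{B(2k)} D^{\lambda}(d\, b_i)\, C\, D^{\lambda'}(b_i^*).
\end{equation*}
The key is to re-expand $d\, b_i$ in the basis $\mathcal{B}$ and use the duality relation $\langle b_i, b_j^\dagger\rangle = \delta_{ij}$ from Proposition \ref{def: dual PkN} to "move" $d$ onto the dual factor. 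Concretely, writing $d\, b_i = \sum_j \mathrm{Coeff}(b_j, d\, b_i)\, b_j$, the coefficient can be recovered via the non-degenerate form as $\mathrm{Coeff}(b_j, d\, b_i) = \langle d\, b_i, b_j^\dagger\rangle$.

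The main technical step is the reindexing identity: I want to show that summing $D^{\lambda}(d\, b_i)\, C\, D^{\lambda'}(b_i^*)$ over $i$ equals summing $D^{\lambda}(b_i)\, C\, D^{\lambda'}((b_i)^*\, \text{``}d\text{''})$ over $i$, where the "$d$" has been transported to the right side via the transpose. This is where property \eqref{eq: frob associ}, $\langle b_i, b_j b_k\rangle = \langle b_j^T b_i, b_k\rangle$, together with \eqref{eq: orthogonal reps of PkN}, $D^{\lambda}_{\alpha\beta}(d^T) = D^{\lambda}_{\beta\alpha}(d)$, does the work: the transpose in the bilinear form is exactly compensated by the transpose-swaps-indices property of the orthogonal representations, so that $D^{\lambda'}(b_i^*)$ paired against $\langle d b_i, b_j^\dagger\rangle$ reorganises into $D^{\lambda'}(b_j^*)\, D^{\lambda'}(d)$ after relabelling $i \leftrightarrow j$ and using completeness of $\{b_i\}$ and $\{b_i^\dagger\}$. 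I would carry this out in components (matrix elements $D^{\lambda}_{\alpha\beta}$), since the transpose bookkeeping is cleanest there, then reassemble into matrix form at the end.

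Explicitly, the computation in components looks like: expand
\begin{equation*}
	[C]_{\alpha\gamma} = \sum_i \sum_{\beta,\delta} D^{\lambda}_{\alpha\beta}(b_i)\, C_{\beta\delta}\, D^{\lambda'}_{\delta\gamma}(b_i^*),
\end{equation*}
then $\big(D^{\lambda}(d)[C]\big)_{\alpha\gamma} = \sum_i D^{\lambda}(d b_i)\,C\,D^{\lambda'}(b_i^*)$, and insert the resolution $b_i = \sum_j \langle b_i, b_j^\dagger\rangle b_j$ applied after writing $d b_i$ in terms of the form. The identity $\langle d b_i, b_j^\dagger\rangle = \Tr_{V_N^{\otimes k}}(D B_i (B_j^\dagger)^T)$, manipulated with cyclicity and \eqref{eq: frob associ}, rewrites $d b_i \mapsto$ (something with $d^T$ acting on the $b_j^\dagger$ slot), and then \eqref{eq: orthogonal reps of PkN} converts the $D^{\lambda'}(d^T)$ factor into $D^{\lambda'}(d)$ with swapped indices, landing on the right-hand side after summing over the now-free basis index.

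The main obstacle I anticipate is getting the transpose/index-swap bookkeeping exactly right: there are two transposes floating around (one built into the bilinear form $\langle d,d'\rangle = \Tr(D(D')^T)$ and one in the definition $b_i^\dagger = (b_i^*)^T$), and one must verify they interact with $D^{\lambda}_{\alpha\beta}(d^T)=D^{\lambda}_{\beta\alpha}(d)$ so that the net effect is to transport $d$ from left to right without any spurious transpose remaining. A clean way to sidestep some of this is to first establish the auxiliary "completeness" lemma $\sum_i D^{\lambda}(b_i)\otimes D^{\lambda'}(b_i^*)$ is a left-right intertwiner — i.e. $\sum_i D^{\lambda}(d b_i)\otimes D^{\lambda'}(b_i^*) = \sum_i D^{\lambda}(b_i)\otimes D^{\lambda'}(b_i^* d)$ — which follows purely from $\sum_i b_i \otimes b_i^*$ being the canonical Casimir-type element for the non-degenerate form, and is the content one really needs. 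Once that lemma is in hand, conjugating $C$ and summing gives \eqref{eq: sym theorem} immediately, and the transpose issues are confined to proving the lemma.
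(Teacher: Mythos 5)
Your proposal is correct and follows essentially the same route as the paper: expand $d\,b_i$ via the non-degenerate form, use the modified Frobenius associativity \eqref{eq: frob associ} to move $d$ across the pairing at the cost of a transpose, and re-sum over the dual basis, which is exactly the ``Casimir'' identity $\sum_i D^{\lambda}(d b_i)\, C\, D^{\lambda'}(b_i^*) = \sum_i D^{\lambda}(b_i)\, C\, D^{\lambda'}(b_i^* d)$ that the paper's computation establishes. The only simplification worth noting is that the transpose bookkeeping closes at the level of algebra elements, since $(d^T b_j^\dagger)^T = b_j^*\, d$, so the representation-transpose property \eqref{eq: orthogonal reps of PkN} you invoke is not actually needed.
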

\begin{proof}
The proof follows from a straight-forward computation
\begin{equation}
	\begin{aligned}
		D^{\lambda}(d)[C] &= \sum_{i} D^{\lambda}(db_i)CD^{\lambda'}(b_i^*) = \sum_{i} D^{\lambda}(\sum_j \langle b_j^\dagger, db_i\rangle b_j)CD^{\lambda'}(b_i^*) \\
		&=\sum_j  D^{\lambda}(b_j)CD^{\lambda'}(\sum_{i} b_i^* \langle b_j^\dagger, db_i\rangle ) \\
		&=\sum_j  D^{\lambda}(b_j)CD^{\lambda'}(\sum_{i} b_i^* \langle d^Tb_j^\dagger, b_i\rangle ) \\
		&=\sum_j  D^{\lambda}(b_j)CD^{\lambda'}(\sum_{i} b_i^T \langle d^Tb_j^\dagger, b_i^\dagger \rangle ) \\
		&=\sum_j  D^{\lambda}(b_j)CD^{\lambda'}( (d^T b_j^\dagger)^T) \\
		&=[C]D^{\lambda'}(d),
	\end{aligned}
\end{equation}
where in the second equality we expand $db_i$ in terms of $b_j$ using the inner product, the third line uses the modified Frobenius associativity \eqref{eq: frob associ}, the fourth line uses $(b_i^*)^\dagger = b_i^T$.
\end{proof}
By Schur's lemma, $[C]$ is proportional to the identity matrix if and only if $\lambda = \lambda'$ and zero otherwise. For some constant $c^{\lambda_1}$,
\begin{equation}
	[C]_{\alpha \sigma} = \delta^{\lambda_1 \lambda_1'} c^{\lambda_1} \delta_{\alpha \sigma}.
\end{equation}

This observation is key to proving orthogonality of matrix elements of irreducible representations of $P_k(N)$.
\begin{proposition} \label{prop: orthogonality of matrix elements}
	Let $D_{\alpha \beta}^{\lambda}(b_i)$ be irreducible representations of $P_k(N)$ in the basis $\mathcal{B} = \{b_1, \dots, b_{B(2k)}\}$ and $b^\dagger$ the dual element defined in Definition \ref{def: dual PkN}. Then
	\begin{equation}
		\sum_{i=1}^{B(2k)} D_{\alpha \beta}^{\lambda}(b_i)D_{\rho \sigma}^{\lambda'}(b_i^\dagger) = \frac{1}{\dimSN{\lambda}} \delta_{\alpha \rho} \delta_{\beta \sigma} \delta^{\lambda \lambda'}. \label{eq: orthogonality Pk matrix elements}
	\end{equation}
\end{proposition}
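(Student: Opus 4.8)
The plan is to run the classical ``great orthogonality'' argument, adapted to the semisimple algebra $P_k(N)$, by feeding a single matrix unit into the symmetrisation theorem and then fixing the resulting Schur constant by a trace computation. First I would specialise the symmetrisation theorem \eqref{eq: sym theorem} to $C = E_{\beta\sigma}$, the $\dimPk{\lambda}\times\dimPk{\lambda'}$ rectangular matrix whose only nonzero entry is a $1$ in position $(\beta,\sigma)$. Reading off the $(\alpha,\rho)$ entry of $[C] = \sum_i D^\lambda(b_i)\,C\,D^{\lambda'}(b_i^*)$ and using $b_i^* = (b_i^\dagger)^T$ together with the corollary \eqref{eq: orthogonal reps of PkN} (so that $D^{\lambda'}_{\sigma\rho}(b_i^*) = D^{\lambda'}_{\rho\sigma}(b_i^\dagger)$) produces exactly
\[
	[E_{\beta\sigma}]_{\alpha\rho} \;=\; \sum_{i=1}^{B(2k)} D^\lambda_{\alpha\beta}(b_i)\,D^{\lambda'}_{\rho\sigma}(b_i^\dagger),
\]
which is the left-hand side of the claim. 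Since $[E_{\beta\sigma}]$ intertwines $D^\lambda$ and $D^{\lambda'}$, Schur's lemma (as recorded right after the symmetrisation theorem) forces $[E_{\beta\sigma}] = \delta^{\lambda\lambda'}\,c^\lambda_{\beta\sigma}\,\idn$ for some scalar $c^\lambda_{\beta\sigma}$, which already delivers the $\delta^{\lambda\lambda'}\delta_{\alpha\rho}$ factors; it then suffices to show $c^\lambda_{\beta\sigma} = \delta_{\beta\sigma}/\dimSN{\lambda}$.

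To compute the constant I would take $\lambda = \lambda'$, take the matrix trace of both sides, and use cyclicity: writing $T \coloneqq \sum_i b_i^* b_i = \sum_i b_i b_i^* \in P_k(N)$,
\[
	\dimPk{\lambda}\, c^\lambda_{\beta\sigma} \;=\; \Tr\!\big([E_{\beta\sigma}]\big) \;=\; \sum_i \Tr\!\big(E_{\beta\sigma}\, D^\lambda(b_i^* b_i)\big) \;=\; \big[D^\lambda(T)\big]_{\sigma\beta}.
\]
The form $(d,d')\mapsto\Tr_{\VN^{\otimes k}}(D D')$ used to define the duals $b_i^*$ in Proposition \ref{def: dual PkN} is an ordinary matrix-trace form, hence associative, so the Casimir $T$ is central and $D^\lambda(T) = t^\lambda\,\idn$ by Schur's lemma. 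To identify $t^\lambda$ I would use the Wedderburn decomposition $\End_{\SN}(\VN^{\otimes k}) \cong \bigoplus_{\lambda\in\Lambda_{k,\N}} \M_{\dimPk{\lambda}}(\mathbb{C})$ coming from \eqref{eq: VN SW simple}: on the block labelled $\lambda$ the trace $\Tr_{\VN^{\otimes k}}$ restricts to $\dimSN{\lambda}$ times the ordinary matrix trace, so the dual of a matrix unit $E^\lambda_{ab}$ within that block is $\tfrac{1}{\dimSN{\lambda}}E^\lambda_{ba}$, and $\sum_{a,b}\tfrac{1}{\dimSN{\lambda}}E^\lambda_{ba}E^\lambda_{ab} = \tfrac{\dimPk{\lambda}}{\dimSN{\lambda}}\idn_\lambda$; summing over blocks gives $t^\lambda = \dimPk{\lambda}/\dimSN{\lambda}$ (equivalently, $T = \Omega^{-1}$ with $\Omega$ as in \eqref{eq: omega}, using the relation \eqref{eq: traces relation omega factor} between the two traces). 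Substituting back, $c^\lambda_{\beta\sigma} = t^\lambda\,\delta_{\sigma\beta}/\dimPk{\lambda} = \delta_{\beta\sigma}/\dimSN{\lambda}$, which is exactly the asserted identity.

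The main obstacle is this last step: getting the normalisation exactly right. One must carefully track that the ambient trace $\Tr_{\VN^{\otimes k}}$ weights the Wedderburn block $\lambda$ by $\dimSN{\lambda}$ — not by $1$ (the regular-representation trace $\tr$, whose Casimir is simply the identity) nor by $\dimPk{\lambda}$ — since it is precisely this weighting that converts the Casimir eigenvalue into $\dimPk{\lambda}/\dimSN{\lambda}$ and hence produces the factor $1/\dimSN{\lambda}$ appearing in the statement. Everything else is the standard orthogonality computation, the only additional bookkeeping being the transposes built into the bilinear form, which are absorbed by the corollary \eqref{eq: orthogonal reps of PkN} and the Frobenius associativity \eqref{eq: frob associ}.
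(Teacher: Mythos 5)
Your proof is correct and follows essentially the same route as the paper: specialise the symmetrisation theorem to $C=E_{\beta\sigma}$, apply Schur's lemma, and fix the constant by taking a trace, which reduces to identifying the Casimir $\sum_i (b_i^\dagger)^T b_i=\sum_i b_i^* b_i$ with $\Omega^{-1}$. The only minor divergence is in that last identification: you compute the Casimir directly in the Wedderburn/matrix-unit basis, using the fact (from Schur--Weyl) that $\Tr_{\VN^{\otimes k}}$ weights the block $\lambda$ by $\dimSN{\lambda}$ together with the (implicit, standard) basis-independence of $\sum_i b_i^* b_i$, whereas the paper establishes $\sum_i (b_i^\dagger)^T b_i=\Omega^{-1}$ by comparing with the regular-representation trace and invoking its non-degeneracy --- both arguments are valid and produce the same normalisation $1/\dimSN{\lambda}$.
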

\begin{proof}
The l.h.s. of \eqref{eq: orthogonality Pk matrix elements} reads
\begin{equation}
	\sum_{i} \big( D^{\lambda}(b_i)E_{\beta \rho}D^{\lambda'}( b_i^\dagger)  \big)_{\alpha \sigma} =  \sum_{i} \big( D^{\lambda}(b_i)E_{\beta \sigma}[D^{\lambda'}( b_i^\dagger) ]^T \big)_{\alpha \rho}= [E_{\beta \sigma}]_{\alpha \rho},
\end{equation}
where $E_{\beta \rho}$ is an elementary matrix with $0$ everywhere except in row $\beta$, column $\rho$ which has a $1$. Schur's lemma gives
\begin{equation}
	[E_{\beta \sigma}]_{\alpha \rho} = C^{\lambda}_{\beta \sigma} \delta_{\alpha \rho}\delta^{\lambda \lambda'}.
\end{equation}
It remains to determine the constant $C^{\lambda}_{\beta \sigma}$.

For this, set $\lambda = \lambda'$ and note that
\begin{equation}		
\begin{array}{@{} l l l @{}}
	\sum_{\alpha = 1}^{\dimPk{\lambda}}[E_{\beta \sigma}]_{\alpha \alpha} = \sum_{i=1}^{B(2k)}	\sum_{\alpha = 1}^{\dimPk{\lambda}} D_{\alpha \beta}^{\lambda}(b_i)D_{\alpha \sigma}^{\lambda}(b_i^\dagger) &=& \sum_{i} \Tr(D^{\lambda}(b_i)E_{\beta \sigma}[D^{\lambda}( b_i^\dagger) ]^T) \\
	\, \, \rotaterelation{=} & & \rotaterelation{=} \\
		\sum_{\alpha = 1}^{\dimPk{\lambda}} C^{\lambda}_{\beta \sigma} \delta_{\alpha \alpha} = C^{\lambda}_{\beta \sigma} \dimPk{\lambda}&= &\sum_{i} \Tr(D^{\lambda}((b_i^\dagger)^Tb_i)E_{\beta \sigma}).
\end{array}\label{eq: orthogonality step 2}
\end{equation}
As we now show, the matrix $D^{\lambda}(\sum_i (b_i^\dagger)^Tb_i)$ is proportional to the identity matrix.

In particular, the element $\sum_i (b_i^\dagger)^Tb_i$ is related to the $\Omega$ factor in \eqref{eq: Vn trace as reg trace}. Since
\begin{equation}
	(b_i^\dagger)^T = b_i^* = \sum_k(g^{-1})_{ki}b_k
\end{equation}
we have
\begin{equation}
	\sum_i (b_i^\dagger)^Tb_i = \sum_{i,k}(g^{-1})_{ki}b_k b_i = \sum_i b_i(b_i^\dagger)^T,
\end{equation}
where the last step follows by $(g^{-1})_{ki}$ being symmetric.
Define
\begin{equation}
	\Omega^{-1} = \sum_{\lambda \in \Lambda_{k,\N}} \frac{\dimPk{\lambda}}{\dimSN{\lambda}} p_\lambda,
\end{equation}
and consider
\begin{equation}
	\begin{aligned}
		\tr(d\sum_i (b_i^\dagger)^T b_i ) &= \tr(d\sum_i b_i(b_i^\dagger)^T)\\
		&= \Tr_{V_N^{\otimes k}}(\Omega^{-1} d\sum_i b_i(b_i^\dagger)^T) \\
		&=\sum_i \langle \Omega^{-1} d b_i, b_i^\dagger \rangle \\
		&=\sum_i\mathrm{Coeff}(b_i, \Omega^{-1} d b_i) \\
		&=\tr(\Omega^{-1}d).
	\end{aligned}
\end{equation}
This implies that
\begin{equation}
	\tr(d(\sum_i (b_i^\dagger)^T b_i -\Omega^{-1})) = 0,
\end{equation}
for all $d \in P_k(\N)$ and by the non-degeneracy of the trace in the regular representation
\begin{equation}
	\sum_i (b_i^\dagger)^T b_i  = \Omega^{-1}.
\end{equation}

Thus, substituting this into \eqref{eq: orthogonality step 2} we find
\begin{equation}
	\begin{aligned}
		C^\lambda_{\beta \sigma} \dimPk{\lambda} &= \Tr(D^{\lambda}(\Omega^{-1})E_{\beta \sigma}) \\
		&=  \sum_{\lambda' \in \Lambda_{k,\N}}\frac{\dimPk{\lambda'}}{\dimSN{\lambda'}} \Tr(D^{\lambda}( p_{\lambda'})E_{\beta \sigma}) \\
		&=  \frac{\dimPk{\lambda}}{\dimSN{\lambda}} \delta_{\beta \sigma},
	\end{aligned}
\end{equation}
where the last step uses
\begin{equation}
	D^{\lambda}(p_{\lambda'}) = \delta^{\lambda \lambda'}, \quad \Tr(E_{\beta \sigma}) = \delta_{\beta \sigma}.
\end{equation}
Therefore,
\begin{equation}
	C^\lambda_{\beta \sigma} = \frac{1}{\dimSN{\lambda}} \delta_{\beta \sigma}.
\end{equation}
\end{proof}

\subsection{Matrix units for $P_k(\N)$.} \label{apx: subsec matrix unit PkN}
We will use the orthogonality in Proposition \ref{prop: orthogonality of matrix elements} to prove that the following linear combinations of elements form a basis of matrix units.
\begin{proposition}
	Let $\mathcal{B} = \{b_1, \dots, b_{B(2k)}\}$ be a basis for $P_k(\N)$, $b_i^\dagger$ the corresponding duals with respect to the inner product \eqref{eq: VN trace form}, $b_i^T$ the transpose element defined in Definition \ref{def: transpose diagram} and $\lambda \in \Lambda_{k,\N}$ be a labelling set of non-isomorphic irreducible representations of $P_k(\N)$. Then elements
	\begin{equation}
		Q_{\alpha \beta}^\lambda = \sum_{i=1}^{B(2k)} \dimSN{\lambda} D_{\alpha \beta}^\lambda(b_i^\dagger)b_i, \label{def: matrix unit PkN}
	\end{equation}
	form a basis for $P_k(\N)$ with structure constants given by
	\begin{equation}
		Q_{\alpha \beta}^\lambda Q_{\alpha' \beta'}^{\lambda'} = \delta^{\lambda \lambda'}\delta_{\beta \alpha'} Q_{\alpha \beta'}^\lambda.
	\end{equation}
\end{proposition}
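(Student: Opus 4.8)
The plan is to pass from the diagram basis $\mathcal{B}=\{b_1,\dots,b_{B(2k)}\}$ to the Artin--Wedderburn picture. Since $P_k(\N)$ is semisimple for $\N\ge 2k$ and $\{Z_\lambda\mid\lambda\in\Lambda_{k,\N}\}$ is a complete set of pairwise non-isomorphic irreducible representations, the map of algebras
\begin{equation}
	D=\bigoplus_{\lambda\in\Lambda_{k,\N}}D^\lambda\colon P_k(\N)\longrightarrow\bigoplus_{\lambda\in\Lambda_{k,\N}}\M_{\dimPk{\lambda}}(\mathbb{C})
\end{equation}
is an isomorphism of algebras; this is consistent with $\dim P_k(\N)=B(2k)=\sum_{\lambda}\dimPk{\lambda}^{\,2}$, the last equality being a restatement of the decomposition \eqref{eq: pkn reg rep decomp} of the regular representation. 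It therefore suffices to compute $D(Q_{\alpha\beta}^\lambda)$ and observe that the resulting matrices form the standard block-diagonal matrix-unit basis of the right-hand side.

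The key computation is $D^\mu(Q_{\alpha\beta}^\lambda)$ for an arbitrary $\mu\in\Lambda_{k,\N}$. Since $D^\mu$ is linear and multiplicative, its $(\rho,\sigma)$ entry is
\begin{equation}
	D^\mu_{\rho\sigma}(Q_{\alpha\beta}^\lambda)=\dimSN{\lambda}\sum_{i=1}^{B(2k)}D^\mu_{\rho\sigma}(b_i)\,D^\lambda_{\alpha\beta}(b_i^\dagger).
\end{equation}
Applying the orthogonality relation \eqref{eq: orthogonality Pk matrix elements} with $\mu$ playing the role of $\lambda$ there, $\lambda$ playing the role of $\lambda'$, and the index pairs renamed accordingly, gives $\sum_i D^\mu_{\rho\sigma}(b_i)D^\lambda_{\alpha\beta}(b_i^\dagger)=(\dimSN{\mu})^{-1}\delta^{\mu\lambda}\delta_{\rho\alpha}\delta_{\sigma\beta}$, hence
\begin{equation}
	D^\mu_{\rho\sigma}(Q_{\alpha\beta}^\lambda)=\frac{\dimSN{\lambda}}{\dimSN{\mu}}\,\delta^{\mu\lambda}\,\delta_{\rho\alpha}\delta_{\sigma\beta}=\delta^{\mu\lambda}\,(E_{\alpha\beta})_{\rho\sigma},
\end{equation}
where $E_{\alpha\beta}$ denotes the elementary matrix with $(E_{\alpha\beta})_{\rho\sigma}=\delta_{\rho\alpha}\delta_{\sigma\beta}$ (the dimension ratio equals $1$ on the support $\mu=\lambda$). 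Thus $D^\mu(Q_{\alpha\beta}^\lambda)=\delta^{\mu\lambda}E_{\alpha\beta}$.

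Both assertions then follow formally. The matrices $\{\delta^{\mu\lambda}E_{\alpha\beta}:\lambda\in\Lambda_{k,\N},\ 1\le\alpha,\beta\le\dimPk{\lambda}\}$ are precisely the matrix units of $\bigoplus_{\lambda}\M_{\dimPk{\lambda}}(\mathbb{C})$, so since $D$ is an isomorphism the $Q_{\alpha\beta}^\lambda$ form a basis of $P_k(\N)$. For the product, evaluating $D$ on $Q_{\alpha\beta}^\lambda Q_{\alpha'\beta'}^{\lambda'}$ block by block gives, in the $\mu$-block,
\begin{equation}
	D^\mu(Q_{\alpha\beta}^\lambda)\,D^\mu(Q_{\alpha'\beta'}^{\lambda'})=\delta^{\mu\lambda}\delta^{\mu\lambda'}E_{\alpha\beta}E_{\alpha'\beta'}=\delta^{\mu\lambda}\delta^{\lambda\lambda'}\delta_{\beta\alpha'}E_{\alpha\beta'},
\end{equation}
which is exactly the $\mu$-block of $D(\delta^{\lambda\lambda'}\delta_{\beta\alpha'}Q_{\alpha\beta'}^\lambda)$; injectivity of $D$ then yields $Q_{\alpha\beta}^\lambda Q_{\alpha'\beta'}^{\lambda'}=\delta^{\lambda\lambda'}\delta_{\beta\alpha'}Q_{\alpha\beta'}^\lambda$.

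The only delicate point is the index bookkeeping in the application of \eqref{eq: orthogonality Pk matrix elements}: one must correctly match the pattern $D^\mu(b_i)\,D^\lambda(b_i^\dagger)$ to the stated form $\sum_i D^\lambda(b_i)D^{\lambda'}(b_i^\dagger)=(\dimSN{\lambda})^{-1}(\cdots)$, and keep the two kinds of dimension $\dimSN{}$ and $\dimPk{}$ separate; once this is done the argument is purely formal. If one prefers to avoid invoking Artin--Wedderburn explicitly, linear independence of the $Q_{\alpha\beta}^\lambda$ follows directly from $D^\mu(Q_{\alpha\beta}^\lambda)=\delta^{\mu\lambda}E_{\alpha\beta}$, and then the count $\sum_\lambda\dimPk{\lambda}^{\,2}=B(2k)=\dim P_k(\N)$ shows they also span.
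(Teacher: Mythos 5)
Your proof is correct, but it proceeds along a genuinely different route from the paper's. The paper never leaves the algebra: it multiplies $Q^\lambda_{\alpha\beta}Q^{\lambda'}_{\alpha'\beta'}$ directly, expands $b_ib_j=\sum_k\langle b_ib_j,b_k^\dagger\rangle b_k$ in the dual basis of the trace form, shuffles the pairing with the Frobenius-associativity property \eqref{eq: frob associ} and the transpose relation \eqref{eq: orthogonal reps of PkN}, and then collapses the double sum with the orthogonality of matrix elements \eqref{eq: orthogonality Pk matrix elements}; the Artin--Wedderburn decomposition \eqref{eq: PkN AW decomp} is deduced afterwards as a corollary. You instead evaluate every irreducible representation on $Q^\lambda_{\alpha\beta}$, obtain $D^\mu(Q^\lambda_{\alpha\beta})=\delta^{\mu\lambda}E_{\alpha\beta}$ (your index bookkeeping in applying \eqref{eq: orthogonality Pk matrix elements}, including the cancellation $\dimSN{\lambda}/\dimSN{\mu}=1$ on the support $\mu=\lambda$, is right), and transfer the matrix-unit relations back through $\bigoplus_\lambda D^\lambda$. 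The common engine is Proposition \ref{prop: orthogonality of matrix elements} with its $1/\dimSN{\lambda}$ normalization; the difference is what else you lean on. Your argument needs the additional structural input that $\bigoplus_\lambda D^\lambda$ is injective (equivalently the dimension count $B(2k)=\sum_\lambda\dimPk{\lambda}^{\,2}$), which is supplied by the regular-representation decomposition \eqref{eq: pkn reg rep decomp} cited in the paper, so there is no circularity — but it does presuppose part of what the paper's computation re-derives. In exchange, your route yields the basis claim and the product relations in one stroke and makes linear independence explicit (the paper's proof establishes only the structure constants and leaves the basis statement implicit), while the paper's computation is self-contained within the bilinear-form machinery and produces the Wedderburn isomorphism as output rather than input.
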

\begin{proof}
	Plugging in the definition \eqref{def: matrix unit PkN} we have
	\begin{align}
		Q_{\alpha \beta}^\lambda Q_{\alpha' \beta'}^{\lambda'}  &= \sum_{i,j=1}^{B(2k)} \dimSN{\lambda}  \dimSN{\lambda'} D_{\alpha \beta}^\lambda(b_i^\dagger) D_{\alpha' \beta'}^{\lambda'}(b_j^\dagger)\underbrace{b_i b_j }_{\sum_k \langle b_i b_j, b_k^\dagger \rangle b_k}\\
		&= \sum_{i,j,k=1}^{B(2k)} \dimSN{\lambda}  \dimSN{\lambda'} D_{\alpha \beta}^\lambda(b_i^\dagger) D_{\alpha' \beta'}^{\lambda'}(b_j^\dagger)\underbrace{\langle b_i b_j, b_k^\dagger \rangle}_{\langle b_j, b_i^T b_k^\dagger {\rangle}} b_k \label{eq: matrix unit frob prop step} \\
		&=\sum_{i,j,k=1}^{B(2k)} \dimSN{\lambda}  \dimSN{\lambda'} D_{\alpha \beta}^\lambda(b_i^\dagger) D_{\alpha' \beta'}^{\lambda'}(b_j^\dagger \langle b_j, b_i^T b_k^\dagger \rangle) b_k \\
		&=\sum_{i,k=1}^{B(2k)} \dimSN{\lambda}  \dimSN{\lambda'} D_{\alpha \beta}^\lambda(b_i^\dagger) D_{\alpha' \beta'}^{\lambda'}(b_i^T b_k^\dagger ) b_k \\
		&=\sum_{i,k=1}^{B(2k)} \dimSN{\lambda}  \dimSN{\lambda'} D_{\alpha \beta}^\lambda(b_i^\dagger) \sum_{\rho=1}^{\dimPk{\lambda'}} D_{\alpha' \rho}^{\lambda'}(b_i^T) D^{\lambda'}_{\rho \beta'}(b_k^\dagger) b_k, \label{eq: matrix unit last step}
	\end{align}
	where \eqref{eq: matrix unit frob prop step} uses \eqref{eq: frob associ}.
	The last line is simplified by using \eqref{eq: orthogonal reps of PkN} and orthogonality \eqref{eq: orthogonality Pk matrix elements},
	\begin{equation}
		\sum_{i=1}^{B(2k)} D_{\alpha \beta}^\lambda(b_i^\dagger) D_{\alpha' \rho}^{\lambda'}(b_i^T) = \sum_{i=1}^{B(2k)} D_{\rho \alpha'}^{\lambda'}(b_i) D_{\alpha \beta}^\lambda(b_i^\dagger) = \frac{1}{\dimSN{\lambda}}\delta^{\lambda' \lambda} \delta_{\rho \alpha} \delta_{\alpha' \beta}.
	\end{equation}
	Plugging this into \eqref{eq: matrix unit last step} gives
	\begin{equation}
		Q_{\alpha \beta}^\lambda Q_{\alpha' \beta'}^{\lambda'}  = \sum_{k=1}^{B(2k)}\sum_{\rho=1}^{\dimPk{\lambda'}} \delta^{\lambda' \lambda}\delta_{\rho \alpha} \delta_{\alpha' \beta} \dimSN{\lambda'} D^{\lambda'}_{\rho \beta'}(b_k^\dagger) b_k = \delta^{\lambda' \lambda} \delta_{\alpha' \beta} Q^{\lambda}_{\alpha \beta'}.
	\end{equation}
\end{proof}

\begin{corollary}\label{cor: d on Q}
	Equipped with a matrix unit basis of $P_k(N)$ we use this to show
	\begin{align} \label{eq: d on Q left}
		d Q^{\lambda}_{\alpha \beta} = \sum_{\sigma = 1}^{\dimPk{\lambda}} D^{\lambda}_{ \sigma \alpha}(d) Q^{\lambda}_{\sigma \beta}, \quad Q^{\lambda}_{\alpha \beta} d= \sum_{\sigma = 1}^{\dimPk{\lambda}}D^{\lambda}_{  \beta \sigma}(d) Q^{\lambda}_{\alpha \sigma}.
	\end{align}
\end{corollary}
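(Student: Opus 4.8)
The plan is to combine two facts: (i) the matrix-unit multiplication rule $Q_{\alpha \beta}^\lambda Q_{\alpha' \beta'}^{\lambda'} = \delta^{\lambda \lambda'}\delta_{\beta \alpha'} Q_{\alpha \beta'}^\lambda$ just proved, and (ii) a reconstruction formula expressing an arbitrary $d \in P_k(\N)$ in the matrix-unit basis with coefficients given precisely by the matrix elements $D^\lambda_{\gamma\delta}(d)$. Granting (ii), both identities are immediate: expanding $d = \sum_{\mu,\gamma,\delta} D^\mu_{\gamma\delta}(d)\, Q^\mu_{\gamma\delta}$ and multiplying on the right by $Q^\lambda_{\alpha\beta}$ leaves only the $\mu=\lambda$, $\delta=\alpha$ terms, giving $d\,Q^\lambda_{\alpha\beta} = \sum_\gamma D^\lambda_{\gamma\alpha}(d)\, Q^\lambda_{\gamma\beta}$; multiplying on the left by $Q^\lambda_{\alpha\beta}$ and keeping the $\mu=\lambda$, $\gamma=\beta$ terms gives $Q^\lambda_{\alpha\beta}\,d = \sum_\delta D^\lambda_{\beta\delta}(d)\, Q^\lambda_{\alpha\delta}$. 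Relabelling the summation index as $\sigma$ yields the two displayed equations.

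So the work is in establishing (ii). First I would compute the image of a matrix unit under an irreducible representation: taking components in the definition \eqref{def: matrix unit PkN}, $D^\mu_{\rho\sigma}(Q^\lambda_{\alpha\beta}) = \dimSN{\lambda}\sum_{i} D^\lambda_{\alpha\beta}(b_i^\dagger)\, D^\mu_{\rho\sigma}(b_i)$, and the orthogonality of matrix elements (Proposition \ref{prop: orthogonality of matrix elements}) collapses the sum to $\tfrac{1}{\dimSN{\mu}}\delta^{\mu\lambda}\delta_{\rho\alpha}\delta_{\sigma\beta}$, so that $D^\mu(Q^\lambda_{\alpha\beta}) = \delta^{\mu\lambda}\,E_{\alpha\beta}$, the elementary matrix in the block labelled by $\lambda$. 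Since the $Q^\lambda_{\gamma\delta}$ form a basis of $P_k(\N)$, any $d$ has a unique expansion $d = \sum_{\mu,\gamma,\delta} c^\mu_{\gamma\delta}\, Q^\mu_{\gamma\delta}$; applying $D^\lambda$ and reading off the $(\gamma,\delta)$ entry forces $c^\lambda_{\gamma\delta} = D^\lambda_{\gamma\delta}(d)$, which is exactly (ii).

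There is no genuine obstacle here: the entire content is packaged in Proposition \ref{prop: orthogonality of matrix elements} and the matrix-unit multiplication rule. The only point requiring attention is the placement of indices — that $Q^\lambda_{\alpha\beta}$ maps to $E_{\alpha\beta}$ and not its transpose under $D^\lambda$ — and this is settled by the component computation above, which then propagates correctly through the left- and right-multiplication steps to give the asymmetric index patterns $D^\lambda_{\sigma\alpha}(d)$ in the first identity and $D^\lambda_{\beta\sigma}(d)$ in the second.
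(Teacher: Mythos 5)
Your proof is correct, but it proceeds along a genuinely different route from the paper's. The paper's argument never expands $d$ in the matrix-unit basis: it works directly from the definition \eqref{def: matrix unit PkN}, expanding $db_i$ (respectively $b_i d$) in the basis $\mathcal{B}$ via the non-degenerate bilinear form, then using the Frobenius-type associativity \eqref{eq: frob associ}, the homomorphism property of $D^\lambda$, and finally the transpose relation \eqref{eq: orthogonal reps of PkN} to convert $D^\lambda_{\alpha\sigma}(d^T)$ into $D^\lambda_{\sigma\alpha}(d)$. You instead first prove the clean intermediate fact $D^\mu_{\rho\sigma}(Q^\lambda_{\alpha\beta}) = \delta^{\mu\lambda}\delta_{\rho\alpha}\delta_{\sigma\beta}$ — which is a correct application of \eqref{eq: orthogonality Pk matrix elements} with the factor $\dimSN{\lambda}/\dimSN{\mu}$ collapsing to $1$ — and then invoke the spanning property of the $Q^\lambda_{\alpha\beta}$ to obtain the Fourier-inversion expansion $d = \sum_{\mu,\gamma,\delta} D^\mu_{\gamma\delta}(d)\,Q^\mu_{\gamma\delta}$, after which both identities follow from the already-established multiplication rule. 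The one dependency worth flagging is that your route genuinely needs the $Q$'s to span $P_k(\N)$: this is asserted in the preceding proposition, and it is justified by the count $\sum_{\lambda}(\dimPk{\lambda})^2 = \dim P_k(\N)$ coming from the regular-representation decomposition \eqref{eq: pkn reg rep decomp}, but the paper's own proof of the corollary deliberately avoids using it. What your approach buys in exchange is two reusable facts the paper does not state explicitly — the image of a matrix unit under each irreducible representation, and the inversion formula expressing any $d$ in the matrix-unit basis with coefficients $D^\lambda_{\gamma\delta}(d)$ — and your index bookkeeping (that $Q^\lambda_{\alpha\beta}\mapsto E_{\alpha\beta}$ rather than its transpose) is handled correctly, reproducing the asymmetric patterns $D^\lambda_{\sigma\alpha}(d)$ and $D^\lambda_{\beta\sigma}(d)$ in \eqref{eq: d on Q left}.
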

\begin{proof}
	Using the definition we have
	\begin{align}
		d Q^{\lambda}_{\alpha \beta}&=  \sum_{i=1}^{B(2k)} \dimSN{\lambda} D_{\alpha \beta}^\lambda(b_i^\dagger)db_i \\
		&=\sum_{i,j=1}^{B(2k)} \dimSN{\lambda} D_{\alpha \beta}^\lambda(b_i^\dagger) \langle db_i, b_j^\dagger \rangle b_j  \\
		&=\sum_{i,j=1}^{B(2k)} \dimSN{\lambda} D_{\alpha \beta}^\lambda(b_i^\dagger) \langle b_i, d^T b_j^\dagger \rangle b_j  \\
		&=\sum_{i,j=1}^{B(2k)} \dimSN{\lambda} D_{\alpha \beta}^\lambda(b_i^\dagger \langle b_i, d^T b_j^\dagger \rangle)  b_j \\
		&=\sum_{j=1}^{B(2k)} \dimSN{\lambda} D_{\alpha \beta}^\lambda(d^T b_j^\dagger)  b_j \\
		&=\sum_{\sigma = 1}^{\dimPk{\lambda}}\sum_{j=1}^{B(2k)} \dimSN{\lambda} D_{\alpha \sigma}^\lambda(d^T) D_{\sigma \beta}^\lambda(b_j^\dagger)  b_j \\
		&=\sum_{\sigma = 1}^{\dimPk{\lambda}}D_{\sigma \alpha }^\lambda(d) Q_{\sigma \beta}^\lambda,
	\end{align}
	where in the last line we used \eqref{eq: orthogonal reps of PkN}. The proof is identical for the right action.
\end{proof}
In summary, we have found an Artin-Wedderburn decomposition
\begin{equation}
	P_k(N)  \cong \bigoplus_{\lambda \in \Lambda_{k,\N}} \End(Z_\lambda, Z_\lambda) \cong \bigoplus_{\lambda \in \Lambda_{k,\N}} Z_\lambda \otimes Z_\lambda, \label{eq: PkN AW decomp}
\end{equation}
where $\End(Z_\lambda, Z_\lambda)$ is the algebra of $m^{\lambda}_{k,\N} \times m^{\lambda}_{k,\N}$ matrices.

\section{Construction of matrix units}\label{sec: construction of units}
In this subsection we will give a construction of matrix units of $P_k(N)$ starting from the diagram basis. The construction uses the notion of $k$-duals and $(k+\tfrac{1}{2})$-dual (Definition \ref{def: kdual}, \ref{def: half dual}) to find a complete set of commuting elements in $P_k(N)$, whose eigenvectors are the matrix units. We show that the simultaneous eigenspaces of these elements are one-dimensional and produce projection operators for each eigenspace.

\subsection{Eigenvalues of duals of central elements.}

For $n=\N$ or $\N-1$, let $z \in \mathcal{Z}[\mathbb{F}(S_n)]$ be a central element and $\P^\lambda$ an irreducible representation of $S_n$. Schur's lemma implies that
\begin{equation}
	\P^{\lambda}_{ab}(z) = \frac{\chr^{}_\lambda(z)}{\dimSN{\lambda}}\delta_{ab},
\end{equation}
where $\chr_\lambda(z) = \sum_a \P^\lambda_{aa}(z)$.
\begin{definition}[Normalized character]	
The combination
\begin{equation}
	\hat{\chr}^{}_\lambda(z) = \frac{\chr^{}_\lambda(z)}{\dimSN{\lambda}},
\end{equation}
is known as a normalized character.
\end{definition}
A particularly important instance of this is the conjugacy class basis element
\begin{equation}
	z_n = \sum_{1 \leq i < j \leq n} (ij) \in \mathcal{Z}[\mathbb{F}(S_{n})] \label{eq: z_n}
\end{equation}
Normalized characters of $z_n$ are expressible in terms of combinatorial quantities of Young diagrams.
\begin{theorem}
	Let $\lambda \vdash n=\N, \N-1$ and $\YT{\lambda}$ the corresponding Young diagram. Then
	\begin{equation}
		\hat{\chr}_\lambda(z_n) = \sum_{(i,j) \in \YT{\lambda}} (j-i)\label{eq: norm characters of T2s},
	\end{equation}
	where $(i,j)$ corresponds to the cell in the $i$th row and $j$th column of the Young diagram (the top left box has coordinate $(1,1)$).
\end{theorem}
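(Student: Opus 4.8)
The plan is to compute the normalized character $\hat{\chr}_\lambda(z_n)$ of the sum of all transpositions by recognizing $z_n$ as (a rescaling of) the quadratic Casimir-like element of $\mathbb{C}(S_n)$ and using the combinatorics of Young diagrams. First I would recall that $z_n = \sum_{1 \le i < j \le n} (ij)$ is central, so by Schur's lemma $\P^\lambda(z_n) = \hat{\chr}_\lambda(z_n)\idn$, and it suffices to evaluate $\hat{\chr}_\lambda(z_n)$ on any convenient vector in $V_\lambda$, or to evaluate the trace $\chr_\lambda(z_n) = \binom{n}{2}\chr^\lambda((12))$ and divide by $\dim V_\lambda = \abs{\SYT{\lambda}}$. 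This reduces the statement to the identity
\begin{equation}
	\binom{n}{2}\frac{\chr^\lambda\big((12)\big)}{\dim V_\lambda} = \sum_{(i,j) \in \YT{\lambda}} (j-i),
\end{equation}
i.e.\ to the classical Frobenius formula for the character of a transposition in terms of contents of the Young diagram.

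The key steps, in order: (1) establish centrality of $z_n$ (immediate: conjugation permutes the transpositions among themselves); (2) invoke Schur's lemma to get proportionality to the identity, so the eigenvalue is the normalized character; (3) prove the content formula. For step (3) the cleanest route is induction on $n$ using the branching rule \eqref{eq: RES SN}: one writes $z_{n} = z_{n-1} + \sum_{i=1}^{n-1}(i\,n)$, notes that the second piece is the Jucys--Murphy element $X_n$, and uses the fact that on the Young basis $X_n$ acts on the standard tableau $t$ by the content $j-i$ of the cell containing $n$. Summing the eigenvalue of $X_m$ over $m=2,\dots,n$ along any maximal chain in Young's lattice telescopes to $\sum_{(i,j)\in\YT\lambda}(j-i)$, and since $z_n$ is central this common value is the normalized character. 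Alternatively, one can cite the Frobenius character formula directly; the excerpt's surrounding material (Jucys--Murphy elements, the inductive chain) suggests the author prefers the JM-element argument, and I would present that.

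The main obstacle is making precise the claim that $\sum_{m=2}^{n} X_m$ is central and equals $z_n$, together with the statement that the $X_m$ act diagonally in the Young basis with content eigenvalues — these are standard but require the Young-basis construction outlined in Appendix~\ref{apx: SN units}. If one wants a self-contained argument avoiding JM elements, the fallback is to compute $\chr^\lambda((12))$ via the Murnaghan--Nakayama rule or Frobenius's formula and then verify the content-sum identity combinatorially (e.g.\ by the hook-content symmetry), but that is more calculation-heavy. Either way, once the content formula is in hand the theorem follows immediately, and the cases $n = \N$ and $n = \N-1$ are treated identically since nothing in the argument uses a relation between $n$ and the matrix size beyond $\lambda \vdash n$.
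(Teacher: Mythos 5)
Your argument is correct, but it is not the route the thesis takes: the thesis disposes of this theorem by a bare citation to Macdonald (Example 7 in Section I.7 of \cite{Macdonald1998}), where the content-sum formula is obtained by symmetric-function/Frobenius-type manipulations, whereas you give an actual in-text proof via Jucys--Murphy elements. Your decomposition $z_n=\sum_{m=2}^{n}X_m$ is exactly right (each transposition $(ij)$ with $i<j\le n$ appears once), and since on the Young basis $X_m$ acts on a standard tableau by the content of the cell containing $m$, the eigenvalue of $z_n$ on any Young-basis vector is the sum of the contents of all cells of $\YT{\lambda}$ --- no chain or telescoping is even needed, because the multiset of contents $\{c_m(t)\}_{m=1}^{n}$ is the multiset of contents of all cells for every standard tableau $t$; centrality plus Schur's lemma then identifies this scalar with $\hat{\chr}_\lambda(z_n)$, and the cases $n=\N$ and $n=\N-1$ are indeed identical. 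What your approach buys is coherence with the machinery the thesis itself sets up (the JM elements and Young basis of Appendix \ref{apx: SN units}), making the statement self-contained relative to that appendix; what it costs is that the diagonal action of $X_m$ with content eigenvalues is itself a nontrivial theorem which the thesis also only cites (to Jucys/Vershik--Okounkov), so strictly speaking you have traded one external citation for another, slightly more elementary one. The Macdonald route the paper points to is shorter to invoke but gives no insight into the structures (JM elements, vacillating tableaux eigenvalues) that the normalized characters are later used for in Chapter \ref{chapter: partition algebra}.
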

\begin{proof}
	See \cite[Example 7 in Section I.7]{Macdonald1998}.
\end{proof}
\begin{example}\label{ex: norm chars}
	Some relevant examples for $n=\N$ are
	\begin{align}
		\hat{\chr}_{[\N]}(z_\N) = 0 + 1 + 2 + \dots + (\N -1) = \frac{\N(\N-1)}{2}, \\
		\hat{\chr}_{[\N-1,1]}(z_\N) = 0 + 1 + 2 \dots + (\N-2) - 1 = \frac{\N(\N-3)}{2}, \\
		\hat{\chr}_{[\N-2,2]}(z_\N) = 0 + 1 + 2 \dots + (\N-3) - 1 + 0 = \frac{(\N-1)(\N-4)}{2}, \\
		\hat{\chr}_{[\N-2,1,1]}(z_\N) = 0 + 1 + 2 \dots + (\N-3) - 1 - 2= \frac{\N(\N-5)}{2}.
	\end{align}
\end{example}
The relevance of this result is that these normalized characters form eigenvalues of the following linear operator.
\begin{proposition}
	Define the following linear operator on $\VN^{\otimes k}$
	\begin{equation}
		\P_{\N} = \sum_{1 \leq i < j \leq \N} \P_{(ij)}.
	\end{equation}
	It has eigenvalues $\hat{\chr}^{}_\lambda(z_\N)$ for $\lambda \in \Lambda_{k,\N}$.
\end{proposition}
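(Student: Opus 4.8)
The plan is to recognise $\P_{\N}$ as the operator by which the central element $z_{\N}=\sum_{1\le i<j\le\N}(ij)\in\mathcal{Z}[\mathbb{F}(\SN)]$ of \eqref{eq: z_n} acts on $\VN^{\otimes k}$, and then to read off its spectrum from the decomposition of $\VN^{\otimes k}$ into $\SN$-irreducibles. First I would note that, since $\SN$ acts diagonally on $\VN^{\otimes k}$ and $\P_{(ij)}$ denotes this diagonal action of the transposition $(ij)$, the operator $\P_{\N}=\sum_{i<j}\P_{(ij)}$ is exactly the image of $z_{\N}$ under the representation of $\mathbb{F}(\SN)$ on $\VN^{\otimes k}$ obtained by extending the defining representation linearly.

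I would then invoke the decomposition $\VN^{\otimes k}\cong\bigoplus_{\lambda\in\Lambda_{k,\N}}m^{\lambda}_{k,\N}V_{\lambda}$ of \eqref{eq: VNotimesk} (equivalently the $\SN$-content of \eqref{eq: VN SW simple}), in which $m^{\lambda}_{k,\N}\ge 1$ precisely when $\lambda\in\Lambda_{k,\N}$. On each $V_{\lambda}$-isotypic component the element $z_{\N}$ acts copy-by-copy as $\P^{\lambda}(z_{\N})$, and since $z_{\N}$ is central the corollary following Proposition \ref{prop: rep basis of ZFG} (Schur's lemma for central elements) gives $\P^{\lambda}(z_{\N})=\hat{\chr}_{\lambda}(z_{\N})\,\idn$. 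Hence $\P_{\N}$ acts as the single scalar $\hat{\chr}_{\lambda}(z_{\N})$ on the whole $V_{\lambda}$-isotypic subspace of $\VN^{\otimes k}$, and as these subspaces span $\VN^{\otimes k}$ the full set of eigenvalues of $\P_{\N}$ is $\{\hat{\chr}_{\lambda}(z_{\N}):\lambda\in\Lambda_{k,\N}\}$; by \eqref{eq: norm characters of T2s} these are the content sums $\sum_{(i,j)\in\YT{\lambda}}(j-i)$ tabulated in Example \ref{ex: norm chars}.

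I do not expect a real obstacle. The only points needing care are the identification of $\P_{\N}$ with the action of $z_{\N}$, which is immediate once the diagonal action is unwound, and the fact that $\Lambda_{k,\N}$ indexes exactly the irreducibles appearing with nonzero multiplicity in $\VN^{\otimes k}$; both are already supplied by earlier results in the text, so the argument is simply the combination of the Schur--Weyl decomposition with Schur's lemma for central elements.
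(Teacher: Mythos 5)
Your proposal is correct and follows essentially the same route as the paper: the paper also identifies $\P_{\N}$ with the action of the central element $z_{\N}$ and applies Schur's lemma on the $\SN$-irreducible pieces of the Schur--Weyl decomposition (stated there concretely via the basis vectors $E^\lambda_a \otimes E^\lambda_\alpha$ rather than via isotypic components). No gap to report.
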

\begin{proof}
	The eigenvectors of $\P_{\N}$ are
	\begin{equation}
		E^\lambda_a \otimes E^\lambda_\alpha
	\end{equation}
	as defined in \eqref{eq: VNk clebsch SN equivariance}, since
	\begin{equation}
		\P_{\N}(E^\lambda_a \otimes E^\lambda_\alpha) = \P^\lambda_{ba}(z_\N)E_b^\lambda \otimes E^\lambda_\alpha = \hat{\chr}^{}_\lambda(z_\N)E^\lambda_a \otimes E^\lambda_\alpha.
	\end{equation}
\end{proof}

We can use this result for all $l \leq k$ to produce eigenvalues corresponding to characters of irreducible representations of $\SN$ in each tensor product factor.
\begin{proposition}
Define the operator $\P_{\N}^{(l)}$ by
\begin{equation}
	\P_{\N}^{(l)}(e_{i_1} \otimes \dots \otimes e_{i_k}) = \P_{\N}(e_{i_1} \otimes \dots \otimes e_{i_l}) \otimes e_{i_{l+1}} \otimes \dots \otimes e_{i_k}.
\end{equation}
It has eigenvalues $\hat{\chr}^{}_\lambda(z_\N)$ for $\lambda \in \Lambda_{l,\N}$.
\end{proposition}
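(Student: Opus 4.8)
The plan is to reduce this to the preceding proposition by exploiting the factorisation $\VN^{\otimes k} \cong \VN^{\otimes l} \otimes \VN^{\otimes (k-l)}$, under which $\P_\N^{(l)}$ is precisely the operator $\P_\N$ acting diagonally on the first factor tensored with the identity on the second. In other words, writing $\rho_l(\sn) = \P_\sn^{\otimes l} \otimes \idn^{\otimes (k-l)}$ for the representation of $\SN$ that acts diagonally on the first $l$ tensor slots only, one has $\P_\N^{(l)} = \rho_l(z_\N)$ with $z_\N = \sum_{1 \le i < j \le \N}(ij)$ the central element of $\mathbb{F}(\SN)$ of \eqref{eq: z_n}.

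First I would invoke the preceding proposition \emph{with $l$ in place of $k$}: since it holds for an arbitrary tensor power, $\P_\N$ acting on $\VN^{\otimes l}$ has eigenvalue $\hat{\chr}_\lambda(z_\N)$ on the vectors $E^\lambda_a \otimes E^\lambda_\alpha$ of the Schur--Weyl decomposition $\VN^{\otimes l} \cong \bigoplus_{\lambda \in \Lambda_{l,\N}} V_\lambda \otimes Z_\lambda$, this being a consequence of the centrality of $z_\N$ and Schur's lemma via \eqref{eq: VNk clebsch SN equivariance}, with the numerical value of $\hat{\chr}_\lambda(z_\N)$ given by \eqref{eq: norm characters of T2s}. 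Then I would fix any basis $\{ e_{j_{l+1}} \otimes \cdots \otimes e_{j_k} \}$ of $\VN^{\otimes (k-l)}$ and form the vectors $E^\lambda_a \otimes E^\lambda_\alpha \otimes e_{j_{l+1}} \otimes \cdots \otimes e_{j_k}$. These constitute a basis of $\VN^{\otimes k}$ (a tensor product of bases of the two factors), and by the defining property of $\P_\N^{(l)}$ each is an eigenvector:
\[
	\P_\N^{(l)}\bigl( E^\lambda_a \otimes E^\lambda_\alpha \otimes e_{j_{l+1}} \otimes \cdots \otimes e_{j_k} \bigr) = \hat{\chr}_\lambda(z_\N)\, E^\lambda_a \otimes E^\lambda_\alpha \otimes e_{j_{l+1}} \otimes \cdots \otimes e_{j_k}.
\]
Since these exhaust a basis and $\lambda$ ranges over $\Lambda_{l,\N}$, the eigenvalues of $\P_\N^{(l)}$ are exactly $\{ \hat{\chr}_\lambda(z_\N) : \lambda \in \Lambda_{l,\N} \}$, which is the claim.

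There is no serious obstacle here; it is essentially a bookkeeping argument, and the only point needing care is that the relevant index set is $\Lambda_{l,\N}$ and not $\Lambda_{k,\N}$: only partitions with $\abs{\lambda^{\#}} \le l$ occur in $\VN^{\otimes l}$, and tensoring with $\VN^{\otimes (k-l)}$ changes the multiplicities but not which $\SN$-types appear in the first $l$ slots. One should also note that distinct $\lambda$ can produce the same number $\hat{\chr}_\lambda(z_\N)$, so the statement is understood with the natural labelling by $\lambda \in \Lambda_{l,\N}$. An alternative, more intrinsic write-up simply observes that $\P_\N^{(l)} = \rho_l(z_\N)$ is central in $\rho_l(\mathbb{F}(\SN))$, hence acts by a scalar on each $\rho_l$-isotypic subspace of $\VN^{\otimes k}$, the scalars being the normalized characters $\hat{\chr}_\lambda(z_\N)$ of the $V_\lambda$ occurring, i.e.\ those with $\lambda \in \Lambda_{l,\N}$.
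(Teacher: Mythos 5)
Your proof is correct and follows the same route as the paper: the paper likewise notes that the result "follows analogously to the previous proof" with eigenbasis vectors of the form $E^\lambda_a \otimes E^\lambda_\alpha \otimes e_{i_{l+1}} \otimes \dots \otimes e_{i_k}$ for $\lambda \in \Lambda_{l,\N}$. Your write-up simply spells out the bookkeeping (the factorisation $\VN^{\otimes k} \cong \VN^{\otimes l} \otimes \VN^{\otimes(k-l)}$ and the index set $\Lambda_{l,\N}$) a little more explicitly than the paper does.
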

\begin{proof}
This follows analogously to the previous proof. An eigenbasis corresponds to vectors of the form
\begin{equation}
	E^\lambda_a \otimes E^\lambda_\alpha \otimes e_{i_{l+1}} \otimes \dots \otimes e_{i_k},
\end{equation}
for $\lambda \in \Lambda_{l,\N}$.
\end{proof}

We also want to produce eigenvalues corresponding to characters of irreducible representations of $S_{\N-1}$.
\begin{proposition}
Define
\begin{equation}
	\P_{\N-1} = \sum_{1 \leq i < j \leq \N-1} \P_{(ij)}
\end{equation}
and
\begin{equation}
	\P^{(l + \frac{1}{2})}_{\N}(e_{i_1} \otimes \dots \otimes e_{i_k}) = \P_{\N-1}(e_{i_1} \otimes \dots \otimes e_{i_l}) \otimes e_{i_{l+1}} \otimes \dots \otimes e_{i_k}.
\end{equation}
The eigenvalues of $\P^{(l+\frac{1}{2})}_{\N}$ are normalized characters $\hat{\chr}^{}_{\lambda}(z^{}_{\N-1})$ for $\lambda \in \Lambda_{l+\frac{1}{2}, \N}$.
\end{proposition}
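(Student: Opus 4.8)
The plan is to mimic the proof of the two preceding propositions, with $\SN$ replaced throughout by $S_{\N-1}$ and the $\SN$ Schur--Weyl decomposition replaced by its $S_{\N-1}$ counterpart. First I would observe that $\P_{\N-1} = \sum_{1 \le i < j \le \N-1} \P_{(ij)}$ is the image, under the defining action of $S_{\N-1}$ on $\VN^{\otimes l}$, of the central element $z_{\N-1} = \sum_{1 \le i < j \le \N-1}(ij) \in \mathcal{Z}[\mathbb{F}(S_{\N-1})]$ of \eqref{eq: z_n}. Consequently, on any $S_{\N-1}$-subrepresentation of $\VN^{\otimes l}$ isomorphic to the irreducible $V_\lambda$ with $\lambda \vdash \N-1$, Schur's lemma forces $\P_{\N-1}$ to act as the scalar $\hat{\chr}_\lambda(z_{\N-1}) = \chr_\lambda(z_{\N-1})/\dimSN{\lambda}$, which by \eqref{eq: norm characters of T2s} equals $\sum_{(i,j)\in \YT{\lambda}}(j-i)$.

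Next I would invoke the $S_{\N-1}$ Schur--Weyl duality already established: as a representation of $S_{\N-1} \times P_{l+\frac{1}{2}}(\N)$ one has $\VN^{\otimes l} \cong \bigoplus_{\lambda \in \Lambda_{l+\frac{1}{2},\N}} V_\lambda \otimes Z^{1/2}_\lambda$ by \eqref{eq: SN-1 SW simple}, with orthonormal basis $\{E^\lambda_a \otimes E^\lambda_\alpha\}$ from \eqref{eq: SN-1 SW basis}. Applying $\P_{\N-1}$, which acts only through the $S_{\N-1}$-factor, gives $\P_{\N-1}(E^\lambda_a \otimes E^\lambda_\alpha) = \P^\lambda_{ba}(z_{\N-1}) E^\lambda_b \otimes E^\lambda_\alpha = \hat{\chr}_\lambda(z_{\N-1}) E^\lambda_a \otimes E^\lambda_\alpha$, so that the eigenvalues of $\P_{\N-1}$ on $\VN^{\otimes l}$ are precisely the $\hat{\chr}_\lambda(z_{\N-1})$ for $\lambda \in \Lambda_{l+\frac{1}{2},\N}$.

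Finally, since $\P^{(l+\frac{1}{2})}_{\N}$ acts by $\P_{\N-1}$ on the first $l$ tensor factors and by the identity on the remaining $k-l$ factors, the vectors $E^\lambda_a \otimes E^\lambda_\alpha \otimes e_{i_{l+1}} \otimes \dots \otimes e_{i_k}$, with $\lambda \in \Lambda_{l+\frac{1}{2},\N}$ and $i_{l+1},\dots,i_k \in \{1,\dots,\N\}$, form an eigenbasis of $\P^{(l+\frac{1}{2})}_{\N}$ with eigenvalues $\hat{\chr}_\lambda(z_{\N-1})$, exactly as in the proof for $\P^{(l)}_{\N}$. I do not anticipate a real obstacle: the only point needing care is that $\VN^{\otimes l}$, viewed by restriction as an $S_{\N-1}$-module, decomposes with parts labelled by $\lambda \vdash \N-1$ lying in $\Lambda_{l+\frac{1}{2},\N}$ -- but this is exactly the content of the $S_{\N-1}$ Schur--Weyl corollary already proved -- so the argument is a direct transcription of the previous two proofs.
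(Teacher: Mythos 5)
Your argument is correct, and it differs mildly but genuinely from the paper's route. You diagonalize $\P_{\N-1}$ on the first $l$ factors by invoking the $S_{\N-1}\times P_{l+\frac{1}{2}}(\N)$ Schur--Weyl decomposition \eqref{eq: SN-1 SW simple} directly, so your eigenbasis is $\{E^\lambda_a \otimes E^\lambda_\alpha \otimes e_{i_{l+1}} \otimes \dots \otimes e_{i_k}\}$ with $\lambda \in \Lambda_{l+\frac{1}{2},\N}$, and Schur's lemma applied to the central element $z_{\N-1}$ immediately gives the eigenvalues $\hat{\chr}_\lambda(z_{\N-1})$ — a literal transcription of the proofs for $\P_\N$ and $\P^{(l)}_\N$. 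The paper instead stays with the $\SN\times P_l(\N)$ decomposition of $\VN^{\otimes l}$ and refines it: each $V_\lambda$, $\lambda \in \Lambda_{l,\N}$, is branched to $S_{\N-1}$ via explicit restriction matrices $R^{\lambda'}_\lambda$, producing an eigenbasis $\{E^{\lambda'}_{\underline{a}} \otimes E^\lambda_\alpha \otimes e_{i_{l+1}} \otimes \dots \otimes e_{i_k}\}$ labelled by $\lambda' \in \lambda - {\scriptstyle\ydiagram{1}}$ (which is the same labelling set $\Lambda_{l+\frac{1}{2},\N}$ in disguise). Your version is shorter and leans on the quoted corollary \eqref{eq: SN-1 SW simple} as a black box; the paper's version buys an explicit description of how the $S_{\N-1}$-isotypic labels sit inside the $\SN$ Schur--Weyl basis via box removal, which is the structure exploited later for vacillating tableaux and the restriction matrices $\R^{\lambda\rightarrow\lambda'}$. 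Both establish the claimed eigenvalues, namely normalized characters of $z_{\N-1}$ on irreducibles of $S_{\N-1}$ labelled by $\Lambda_{l+\frac{1}{2},\N}$.
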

\begin{proof}
	Let $\lambda \vdash \N, \lambda' \in \lambda - {\scriptstyle \ydiagram{1}}$, and
	\begin{align}
		V_\lambda &= \Span(E^{\lambda}_a \, \vert \, a=1,\dots, \dimSN{\lambda}), \\
		V_{\lambda'} &= \Span(E^{\lambda'}_{\underline{a}} \, \vert \, \underline{a}=1,\dots, \dimSN{\lambda'}),
	\end{align}
	be irreducible representations of $\SN$ and $S_{\N-1}$, respectively.
	We assume the two bases to be orthonormal, and therefore related by the restriction matrix $R^{\lambda'}_\lambda$ as
	\begin{equation}
		E^{\lambda}_a = \sum_{\lambda' \in \lambda - {\scriptstyle \ydiagram{1}}} \sum_{\underline{a}}(R^{\lambda'}_\lambda)^{\underline{a}}_a E^{\lambda'}_{\underline{a}},
	\end{equation}
	or
	\begin{equation}
		E^{\lambda'}_{\underline{a}} = \sum_a (R^{\lambda'}_\lambda)_{\underline{a}}^a E_{a}^{\lambda}.
	\end{equation}
	Therefore, $\VN^{\otimes k}$ has a basis
	\begin{equation}
		\{E^{\lambda'}_{\underline{a}} \otimes E^\lambda_{\alpha} \otimes e_{i_{l+1}} \otimes \dots \otimes e_{i_k}\,\vert \, \lambda \in \Lambda_{l,\N}, \lambda' \in \lambda - {\scriptstyle \ydiagram{1}}\},
	\end{equation}
	with elements satisfying
	\begin{align}
		\P_{\N-1}(E^{\lambda'}_{\underline{a}} \otimes E^\lambda_{\alpha}) \otimes e_{i_{l+1}} \otimes \dots \otimes e_{i_k} &=\sum_{\underline{b}}\P^{\lambda'}_{\underline{b}\, \underline{a}}(z_{\N-1}) E^{\lambda'}_{\underline{b}} \otimes E^\lambda_{\alpha} \otimes e_{i_{l+1}} \otimes \dots \otimes e_{i_k} \\
		&= \hat{\chr}^{}_\lambda(z_{\N-1})E^{\lambda'}_{\underline{a}} \otimes E^\lambda_{\alpha} \otimes e_{i_{l+1}} \otimes \dots \otimes e_{i_k}.
	\end{align}
\end{proof}

The $l$-duals of $\P^{(l)}_{\N}$ and $(l+\tfrac{1}{2})$-duals of $\P^{(l+\frac{1}{2})}_{\N}$ are in fact known. We will need the special cases $l=1,2$ and $l+\tfrac{1}{2} = 1+\tfrac{1}{2}$.
\begin{theorem}\label{thm: murphys}
	\begin{enumerate}
	\item[]
	\item[(a)]
	Define
	\begin{equation}
		Z_1 = \PAdiagram{1}{}{},
	\end{equation}
	it acts on $\VN$ as
	\begin{equation}
		\P_{\N}^{(1)} - \binom{\N}{2} + \N,
	\end{equation}
	and is central in $P_1(\N)$.
	\item[(b)]
	Define
	\begin{align}
		Z_2 &= \PAdiagram{2}{-1/1,-2/2}{} + \PAdiagram{2}{-1/1}{} + \PAdiagram{2}{-2/2}{} - \PAdiagram{2}{1/2,-2/2}{} - \PAdiagram{2}{-1/1,1/2}{} - \PAdiagram{2}{-1/1,-1/-2}{} - \PAdiagram{2}{-1/-2,-2/2}{} +   \PAdiagram{2}{-1/2,-2/1}{} + \N \PAdiagram{2}{-1/1,-2/2, 1/2, -1/-2}{}, \\
		Z_{1\frac{1}{2}} &= \PAdiagram{2}{-1/1,-2/2}{} + \PAdiagram{2}{-2/2}{} -\PAdiagram{2}{-1/-2,-2/2}{} - \PAdiagram{2}{1/2,-2/2}{} + \N \PAdiagram{2}{-1/1,-2/2, -1/-2, 1/2}{}.
	\end{align}
	They act on $\VN^{\otimes 2}$ as
	\begin{equation}
		\P_{\N}^{(2)} - \binom{\N}{2} + 2\N, \quad \P_{\N}^{(1+\frac{1}{2})} - \binom{\N}{2} +2\N -1
	\end{equation}
	respectively. Further, $Z_2$ is central in $P_2(\N)$ and $Z_{1\frac{1}{2}}$ is central in $P_{1+\frac{1}{2}}(\N)$.
	\end{enumerate}
\end{theorem}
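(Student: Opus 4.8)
The plan is to recognise $Z_1$ and $Z_2$ as the $1$-dual and $2$-dual of the central sum of transpositions $z_\N=\sum_{1\le i<j\le\N}(ij)\in\mathcal Z[\mathbb F(\SN)]$, and $Z_{1\frac12}$ as the $(1+\tfrac12)$-dual of $z_{\N-1}=\sum_{1\le i<j\le\N-1}(ij)\in\mathcal Z[\mathbb F(S_{\N-1})]$, each up to a shift by a multiple of the identity diagram $\idn$; both assertions of the theorem then follow from material already developed in this chapter.

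Part (a) is essentially Example~\ref{ex: kdual}, which computes the $1$-dual of $z_\N$ to be $\PAdiagram{1}{}{}+\binom{\N}{2}\PAdiagram{1}{-1/1}{}-\N\PAdiagram{1}{-1/1}{}=Z_1+(\binom{\N}{2}-\N)\idn$ in $P_1(\N)$. Since, by Definition~\ref{def: kdual}, the image of this element under $\phi_1$ is the operator $\P_\N^{(1)}$ on $\VN$, rearranging gives $Z_1=\P_\N^{(1)}-\binom{\N}{2}+\N$. Centrality of $Z_1$ is automatic: $P_1(\N)$ is two-dimensional and commutative, the unique non-identity diagram satisfying $\PAdiagram{1}{}{}\PAdiagram{1}{}{}=\N\PAdiagram{1}{}{}$.

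For part (b) the half-integer element is already available: Example~\ref{ex: khalf dual} computes the $(1+\tfrac12)$-dual of $z_{\N-1}$, and collecting there the three contributions to $\idn=\PAdiagram{2}{-1/1,-2/2}{}$ (coefficients $\binom{\N}{2}$, $-2\N$ and $2$) and comparing with the displayed expression for $Z_{1\frac12}$ gives $d_{z_{\N-1}}=Z_{1\frac12}+(\binom{\N}{2}-2\N+1)\idn$, which is the claimed action once one uses that $d_{z_{\N-1}}$ realises $\P_\N^{(1+\frac12)}$ through the actions of Definitions~\ref{def: half integer action} and~\ref{def: half dual}. For $Z_2$ I would redo the computation of Example~\ref{ex: kdual} at $k=2$: write $\P_{z_\N}=\tfrac12\sum_{i\ne j}\P_{(ij)}^{\otimes2}$ on $\VN^{\otimes2}$, substitute $\P_{(ij)}=\idn+E^i_j+E^j_i-E^i_i-E^j_j$ into each tensor slot, expand the product, reduce the restricted sums with $\sum_{i\ne j}E^i_j=\sum_{i,j}E^i_j-\sum_kE^k_k$ and $\sum_{i\ne j}E^i_i=(\N-1)\sum_kE^k_k$, and read each resulting tensor product of (sums of) elementary matrices back as a diagram via the dictionary $\Tr_{\VN}(ED_\pi)=D_\pi$ from Example~\ref{ex: kdual}. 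Matching against the nine-term expression for $Z_2$ yields $d_{z_\N}=Z_2+(\binom{\N}{2}-2\N)\idn$, hence $Z_2=\P_\N^{(2)}-\binom{\N}{2}+2\N$ on $\VN^{\otimes2}$.

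Finally, centrality in part (b) follows from the same reasoning that shows the $k$-dual centre is contained in $\mathcal Z[P_k(\N)]$: since $z_\N$ is central in $\mathbb C(\SN)$, the operator $\P_{z_\N}=\sum_\sn a_\sn\P_\sn$ is a linear combination of the $\SN$-operators $\P_\sn$ and so commutes with every element of $\End_{\SN}(\VN^{\otimes2})$; under the stable-limit isomorphism $P_2(\N)\cong\End_{\SN}(\VN^{\otimes2})$ of Theorem~\ref{thm: VN Schur-Weyl} ($\N\geq4$) this makes $d_{z_\N}$, hence $Z_2$, central in $P_2(\N)$. The identical argument with $S_{\N-1}$ in place of $\SN$, $z_{\N-1}$ in place of $z_\N$, and the half-integer Schur--Weyl isomorphism ($\N\geq3$) shows $Z_{1\frac12}$ is central in $P_{1+\frac12}(\N)$. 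The only genuinely laborious step I anticipate is the $k=2$ elementary-matrix expansion for $Z_2$ — on the order of $5^2$ terms to expand and collect, with the $\N$-dependent coefficients and the final identity shift needing care — although every individual move is the one already illustrated in Example~\ref{ex: kdual}; alternatively one could quote the literature's formulas for these partition-algebra analogues of the Jucys--Murphy elements and verify only the normalisation against the eigenvalue propositions above.
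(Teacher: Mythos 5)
Your treatment of part (a), of $Z_{1\frac{1}{2}}$, and of centrality follows the paper's own route: the paper's proof literally consists of pointing to Example~\ref{ex: kdual} for (a), Example~\ref{ex: khalf dual} for $Z_{1\frac{1}{2}}$, and a citation of Halverson--Ram for $Z_2$, with centrality coming from the dual-centre argument you also invoke. Your bookkeeping $d_{z_{\N-1}}=Z_{1\frac{1}{2}}+(\binom{\N}{2}-2\N+1)\idn$, the stable-limit caveats $\N\geq 4$ and $\N\geq 3$, and your reading of the $Z_{1\frac{1}{2}}$ action through the half-integer ($e_\N$) mechanism of Definitions~\ref{def: half integer action} and~\ref{def: half dual} are all correct and match the paper.

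The gap is in the one step you did not execute, the $k=2$ elementary-matrix expansion for $Z_2$, whose outcome you assert rather than verify --- and the assertion is not what the expansion gives. Carrying it out, the cross terms $\tfrac12\sum_{i\neq j}\bigl(E^i_j\otimes E^i_j+E^j_i\otimes E^j_i\bigr)=\sum_{i\neq j}E^i_j\otimes E^i_j$ translate into the diagram $12|1'2'$ minus the full-block diagram, and after collecting all terms one finds, as operators on $\VN^{\otimes 2}$,
\begin{equation*}
\P_{\N}^{(2)} \;=\; \Big(\binom{\N}{2}-2\N\Big)\idn \;+\; Z_2 \;+\; \PAdiagram{2}{-1/-2,1/2}{},
\end{equation*}
i.e.\ the expansion produces the nine printed diagrams \emph{plus} the diagram $\PAdiagram{2}{-1/-2,1/2}{}$ with coefficient $+1$, not the printed $Z_2$ alone. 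You can see the mismatch without the full expansion: on $e_a\otimes e_a$ the printed $Z_2$ gives $\N\,e_a\otimes e_a$, whereas $\P_{\N}^{(2)}-\binom{\N}{2}+2\N$ gives $(\N+1)\,e_a\otimes e_a+\sum_{m\neq a}e_m\otimes e_m$; equivalently, the traces over $\VN^{\otimes 2}$ of the two sides differ by exactly $\Tr D_{12|1'2'}=\N$. Moreover $\PAdiagram{2}{-1/-2,1/2}{}$ is not central in $P_2(\N)$ (its two products with $\PAdiagram{2}{-1/1}{}$ are $\PAdiagram{2}{1/2}{}$ and $\PAdiagram{2}{-1/-2}{}$, which differ), so the nine-term element cannot be central either, and your centrality argument for $Z_2$ --- that it differs from the central $d_{z_\N}$ by a multiple of $\idn$ --- only goes through once the missing diagram is restored. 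The resolution is that the displayed $Z_2$ is missing the term $+\PAdiagram{2}{-1/-2,1/2}{}$ (consistent with the Halverson--Ram formula the paper cites, and with the code in Appendix~\ref{apx: EV code}, which constructs $Z_2$ from Sage's Jucys--Murphy elements rather than from the printed diagrams); but as written your proof claims a matching that the computation does not deliver, and the fallback of quoting the literature runs into the same discrepancy with the printed diagram list. For $Z_2$ you must therefore actually perform the expansion (or quote the literature) and reconcile the extra diagram, rather than declaring the match.
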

\begin{proof}
	See Example \ref{ex: kdual} for statement (a), Example \ref{ex: khalf dual} for $Z_{1\frac{1}{2}}$ and \cite[Theorem 3.35 and examples below Equation 3.32]{Halverson2005} for the general case and $Z_2$.
\end{proof}
This theorem leads to an important corollary.
\begin{corollary}
The inclusion of $Z_1$ into $P_{1+\frac{1}{2}}(\N)$ is
\begin{equation}
	Z_1 \otimes 1 = \PAdiagram{2}{-2/2}{}.
\end{equation}
It commutes with $Z_{1\frac{1}{2}}$, which is central in $P_{1+\frac{1}{2}}(\N)$. It also commutes with $Z_2$, which is central in $P_2(\N)$. That is $Z_1 \otimes 1, Z_{1\frac{1}{2}}, Z_2$ form a set of commuting elements in $P_2(\N)$.
\end{corollary}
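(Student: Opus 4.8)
The plan is to read off all three commutation relations directly from Theorem~\ref{thm: murphys} together with the chain of subalgebras $P_1(\N) \subset P_{1+\frac{1}{2}}(\N) \subset P_2(\N)$ recorded in the inductive-chain subsection; no diagram concatenation should be needed.

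First I would make the identifications explicit. The embedding $P_1(\N) \hookrightarrow P_{1+\frac{1}{2}}(\N)$ adjoins a strand on the right, so it sends $Z_1 = \PAdiagram{1}{}{}$ to $Z_1 \otimes 1 = \PAdiagram{2}{-2/2}{}$, and the embedding $P_{1+\frac{1}{2}}(\N) \hookrightarrow P_2(\N)$ is the identity on diagrams; both are unital algebra homomorphisms. Hence $Z_1 \otimes 1$ and $Z_{1\frac{1}{2}}$ both lie in $P_{1+\frac{1}{2}}(\N)$, and all three of $Z_1 \otimes 1$, $Z_{1\frac{1}{2}}$, $Z_2$ may be regarded as elements of $P_2(\N)$.

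The three commutators then follow at once. Since $Z_{1\frac{1}{2}}$ is central in $P_{1+\frac{1}{2}}(\N)$ by Theorem~\ref{thm: murphys}(b) and $Z_1 \otimes 1 \in P_{1+\frac{1}{2}}(\N)$, we get $(Z_1 \otimes 1)\,Z_{1\frac{1}{2}} = Z_{1\frac{1}{2}}\,(Z_1 \otimes 1)$. Since $Z_2$ is central in $P_2(\N)$ by Theorem~\ref{thm: murphys}(b) and both $Z_1 \otimes 1$ and $Z_{1\frac{1}{2}}$ lie in $P_2(\N)$, it commutes with each of them. These three facts are exactly the assertion of the corollary.

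There is no genuine obstacle here: the only point to watch is that the two embeddings really are algebra homomorphisms, so that "central in the larger algebra" upgrades to "commutes with every element of the subalgebra" — this is precisely what the inductive-chain subsection provides. Note that $Z_1 \otimes 1$ itself need not be central in $P_{1+\frac{1}{2}}(\N)$, and it does not need to be; it only has to commute with the central element $Z_{1\frac{1}{2}}$, which it automatically does. As an optional sanity check one could instead expand $Z_1 \otimes 1$ and $Z_2$ in the diagram basis and verify $(Z_1 \otimes 1)\,Z_2 = Z_2\,(Z_1 \otimes 1)$ by composing diagrams, but this is strictly more laborious and adds nothing.
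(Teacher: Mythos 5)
Your proof is correct and follows exactly the reasoning the paper intends: the corollary is stated without a separate proof precisely because it is read off from Theorem \ref{thm: murphys} together with the inductive chain $P_1(\N)\subset P_{1+\frac{1}{2}}(\N)\subset P_2(\N)$, with the strand-adding embedding being a (unital) algebra homomorphism so that centrality of $Z_{1\frac{1}{2}}$ in $P_{1+\frac{1}{2}}(\N)$ and of $Z_2$ in $P_2(\N)$ gives all pairwise commutators. Your remark that $Z_1\otimes 1$ need not itself be central, only contained in the relevant subalgebras, is exactly the right point to flag.
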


For the construction of matrix units, we are interested in the eigenvalues of these dual elements in the regular representation of $P_2(\N)$. From Schur's lemma, the eigenvalues will be normalized characters of irreducible representations of $P_1(\N), P_{1+\frac{1}{2}}(\N)$ and $P_2(\N)$, respectively. We will now show that Schur-Weyl duality has implications for the normalized characters of dual elements.
\begin{proposition}	\label{prop: kduals same normalized characters}
Let $d_{\N} \in \Zdual[P_k(\N)]$ be the $k$-dual of $z_\N$, $\hat{\chr}_\lambda(z_\N)$ the normalized $\SN$ character of $z_\N$ and $\hat{\chr}^\lambda(d_{\N})$ the normalized $P_k(\N)$ character of $d_\N$ for $\lambda \in \Lambda_{k,\N}$, then
\begin{equation}
	\hat{\chr}_\lambda(z_\N) = \hat{\chr}^\lambda(d_{\N}), 
\end{equation}
\end{proposition}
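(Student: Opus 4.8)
The statement is an immediate consequence of the way the $k$-dual is defined together with Schur-Weyl duality, so the plan is to trace through the representation-theoretic dictionary and match eigenvalues. First I would recall from Definition \ref{def: kdual} that $d_\N \in P_k(\N)$ is characterised by the fact that the linear operator $\P_{z_\N} = \sum_{\sn \in \SN} a_\sn \P_\sn$ on $\VN^{\otimes k}$ coincides with the operator $D_{z_\N}$ obtained from $d_\N$ under the isomorphism $\phi_k : P_k(\N) \xrightarrow{\sim} \End_{\SN}(\VN^{\otimes k})$ (valid since $\N \geq 2k$). In particular, as operators on $\VN^{\otimes k}$ we have the single operator $T := \P_{z_\N} = D_{z_\N}$, which is simultaneously: (i) the image of the central element $z_\N \in \mathcal{Z}[\mathbb{F}(\SN)]$ acting diagonally on tensor factors, and (ii) the image of $d_\N \in \mathcal{Z}[P_k(\N)]$ under the Schur-Weyl map.

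**Key steps.** Step one: decompose $\VN^{\otimes k} \cong \bigoplus_{\lambda \in \Lambda_{k,\N}} V_\lambda \otimes Z_\lambda$ as in \eqref{eq: VN SW simple}, with the orthonormal basis $E^\lambda_a \otimes E^\lambda_\alpha$. Step two: compute the action of $T$ on this basis in two ways. Viewing $T$ via (i), $z_\N$ is central in $\mathbb{F}(\SN)$, so Schur's lemma gives $\P^\lambda(z_\N) = \hat{\chr}_\lambda(z_\N)\,\idn$ on $V_\lambda$; hence $T(E^\lambda_a \otimes E^\lambda_\alpha) = \hat{\chr}_\lambda(z_\N)\, E^\lambda_a \otimes E^\lambda_\alpha$ (this is exactly the computation already spelled out in the Proposition on eigenvalues of $\P_\N$). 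Viewing $T$ via (ii), $d_\N$ is central in $P_k(\N)$ (this is the content of the Proposition stating $\Zdual[P_k(\N)] \subseteq \mathcal{Z}[P_k(\N)]$, applied to the dual of $z_\N$), so by Schur's lemma $D^\lambda(d_\N) = \hat{\chr}^\lambda(d_\N)\,\idn$ on $Z_\lambda$; using the Clebsch-Gordan equivariance \eqref{eq: VNk clebsch PkN equivariance} this gives $T(E^\lambda_a \otimes E^\lambda_\alpha) = \hat{\chr}^\lambda(d_\N)\, E^\lambda_a \otimes E^\lambda_\alpha$. Step three: since the vectors $E^\lambda_a \otimes E^\lambda_\alpha$ form a basis and each lies in exactly one isotypic block $\lambda$, comparing the two eigenvalue computations on a single such vector forces $\hat{\chr}_\lambda(z_\N) = \hat{\chr}^\lambda(d_\N)$ for every $\lambda \in \Lambda_{k,\N}$.

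**Main obstacle.** There is no serious analytic difficulty here; the only thing requiring care is bookkeeping of which "$D_z$" and "$\P_z$" refer to the same operator — that is, making fully explicit that the equality $\P_{z_\N}(e_{i_1}\otimes\cdots\otimes e_{i_k}) = D_{z_\N}(e_{i_1}\otimes\cdots\otimes e_{i_k})$ from Definition \ref{def: kdual} is precisely what lets us read off the two eigenvalue expressions from the same spectral decomposition. One should also note that the claim is vacuous for $\lambda \notin \Lambda_{k,\N}$ (such $Z_\lambda$ do not occur) and that the hypothesis $\N \geq 2k$ is used exactly once, to invoke the isomorphism $\End_{\SN}(\VN^{\otimes k}) \cong P_k(\N)$ so that $d_\N$ is well-defined and its image is genuinely central. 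With that in hand the proof is a two-line eigenvalue comparison.
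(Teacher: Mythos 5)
Your proposal is correct and follows essentially the same route as the paper: both proofs rest on the defining equation $\P_{z_\N}=D_{z_\N}$ of the $k$-dual together with the $\SN\times P_k(\N)$ decomposition of $\VN^{\otimes k}$ and the Clebsch--Gordan equivariance relations. The only (cosmetic) difference is that you diagonalize each side via Schur's lemma and compare eigenvalues on a single basis vector, whereas the paper matches coefficients to get $\P^\lambda_{ba}(z_\N)\delta_{\beta\alpha}=\delta_{ab}D^\lambda_{\beta\alpha}(d_\N)$ and then takes traces.
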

\begin{proof}
	Let $d_{\N}$ have the expansion
	\begin{equation}
		d_{\N} = \sum_{\pi \in \setpart{[k \vert k']}} a_\pi d_\pi,
	\end{equation}
	and define
	\begin{equation}
		\D_{\N} = \sum_{\pi \in \setpart{[k \vert k']}} a_\pi D_\pi.
	\end{equation}
	$k$-duality demands (see Definition \ref{def: kdual})
	\begin{equation}
		\P_{\N}(e_{i_1} \otimes \dots \otimes e_{i_k}) = D_{\N}(e_{i_1} \otimes \dots \otimes e_{i_k}).
	\end{equation}
	Therefore, using \eqref{eq: VNk clebsch SN equivariance} and \eqref{eq: VNk clebsch PkN equivariance} we have
	\begin{equation}
		\begin{aligned}
			\sum_b \P_{ba}^\lambda(z_\N)E^\lambda_b \otimes E^\lambda_\alpha &= (C^\lambda_{a, \alpha})^{i_1 \dots i_k}\sum_{1 \leq i < j \leq \N} \P_{\N}(e_{i_1} \otimes \dots \otimes e_{i_k}) \\
			&= (C^\lambda_{a, \alpha})^{i_1 \dots i_k}D_\N(e_{i_1} \otimes \dots \otimes e_{i_k}) \\
			&= \sum_\beta D^\lambda_{\beta \alpha}(d_{\N})E^\lambda_a \otimes E^\lambda_\beta.
		\end{aligned}
	\end{equation}
	In other words,
	\begin{equation}
		\P_{ba}^\lambda(z_\N) \delta_{\beta \alpha} = \delta_{ab} D^\lambda_{\beta \alpha}(d_{\N}).
	\end{equation}
	Taking the trace on both sides gives
	\begin{equation}
		\chr_\lambda(z_\N) \dimPk{\lambda} = \dimSN{\lambda} \chr^\lambda(d_{\N}) \Rightarrow \hat{\chr}_\lambda(z_\N) = \hat{\chr}^\lambda(d_{\N}),
	\end{equation}
	which is the claim in the proposition.
\end{proof}
The analogous result holds for normalized characters of $(k+\tfrac{1}{2})$ dual elements.
\begin{proposition}	\label{prop: khalf dual same normalized character}
	Let $d_{\N-1} \in \Zdual[P_{k+\frac{1}{2}}(\N)]$ be the $(k+\frac{1}{2})$-dual of $z_{\N-1}$, then
	\begin{equation}
		\hat{\chr}_\lambda(z_{\N-1}) = \hat{\chr}^\lambda(d_{\N-1}), 
	\end{equation}
	for $\lambda \in \Lambda_{k+\frac{1}{2},\N}$.
\end{proposition}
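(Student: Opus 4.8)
The plan is to run the argument of Proposition~\ref{prop: kduals same normalized characters} essentially unchanged, with the $\SN$ Schur--Weyl decomposition \eqref{eq: VN SW simple} replaced by its $S_{\N-1}$ counterpart \eqref{eq: SN-1 SW simple} and the conventional $P_k(\N)$-action $D_\pi$ replaced by the half-integer action $\Delta_\pi$ of $P_{k+\frac12}(\N)$ from Definition~\ref{def: half integer action}. First I would write $d_{\N-1} = \sum_{\pi} a_\pi d_\pi$, the sum over $\pi \in \setpart{[k+1\vert (k+1)']}$ with $k+1$ and $(k+1)'$ in the same block, and set $\Delta_{\N-1} = \sum_\pi a_\pi \Delta_\pi \in \End_{S_{\N-1}}(\VN^{\otimes k})$. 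By Definition~\ref{def: half dual} the defining property of the $(k+\tfrac12)$-dual is
\begin{equation*}
	\P_{\N-1}(e_{i_1} \otimes \dots \otimes e_{i_k}) = \Delta_{\N-1}(e_{i_1} \otimes \dots \otimes e_{i_k}).
\end{equation*}

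Next I would fix $\lambda \in \Lambda_{k+\frac12,\N}$ and contract both sides of this identity with the Clebsch--Gordan coefficients $(C^\lambda_{a,\alpha})^{i_1 \dots i_k}$ of the orthonormal basis \eqref{eq: SN-1 SW basis} of the $\lambda$-isotypic component in \eqref{eq: SN-1 SW simple}. The half-integer analogues of the equivariance relations \eqref{eq: VNk clebsch SN equivariance} and \eqref{eq: VNk clebsch PkN equivariance} turn the left side into $\sum_b \P^\lambda_{ba}(z_{\N-1})\, E^\lambda_b \otimes E^\lambda_\alpha$ and the right side into $\sum_\beta D^\lambda_{\beta\alpha}(d_{\N-1})\, E^\lambda_a \otimes E^\lambda_\beta$. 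Equating coefficients of the independent vectors $E^\lambda_b \otimes E^\lambda_\beta$ gives $\P^\lambda_{ba}(z_{\N-1})\,\delta_{\beta\alpha} = \delta_{ab}\, D^\lambda_{\beta\alpha}(d_{\N-1})$; tracing over $(a,b)$ and over $(\alpha,\beta)$ yields $\dim Z^{1/2}_\lambda \cdot \chr_\lambda(z_{\N-1}) = \dimSN{\lambda}\,\chr^\lambda(d_{\N-1})$, which after rearranging into normalized characters is exactly $\hat\chr_\lambda(z_{\N-1}) = \hat\chr^\lambda(d_{\N-1})$, as claimed.

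The one step that is genuinely new relative to Proposition~\ref{prop: kduals same normalized characters}, and hence the main obstacle, is justifying those two equivariance identities in the half-integer setting: one must know that the decomposition \eqref{eq: SN-1 SW simple} admits a basis whose Clebsch--Gordan coefficients simultaneously intertwine $\P_\sn$ for $\sn \in S_{\N-1}$ with $\P^\lambda(\sn)$ and intertwine the non-obvious action $\Delta_\pi$ with $D^\lambda(d_\pi)$. This is where the identification $\VN^{\otimes k} \cong \VN^{\otimes k} \otimes e_\N$ and the isomorphism $\End_{S_{\N-1}}(\VN^{\otimes k}) \cong P_{k+\frac12}(\N)$ enter; granting the double centralizer statement behind \eqref{eq: SN-1 SW simple} (established in Section~\ref{sec: partition algebras}), these follow exactly as in the integer case, and the remaining trace manipulation is routine. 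It is worth noting explicitly that the normalization on the partition-algebra side is by $\dim Z^{1/2}_\lambda$ rather than $\dimPk{\lambda}$, since here $\lambda \vdash \N-1$.
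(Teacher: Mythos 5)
Your proposal is correct and follows essentially the same route as the paper's own proof: expand $d_{\N-1}$, invoke the defining property of the $(k+\tfrac{1}{2})$-dual, pass to the $S_{\N-1}\times P_{k+\frac{1}{2}}(\N)$ irreducible basis \eqref{eq: SN-1 SW basis} to obtain $\P^\lambda_{ba}(z_{\N-1})\delta_{\beta\alpha}=\delta_{ab}D^\lambda_{\beta\alpha}(d_{\N-1})$, and take traces. Your explicit remark that the normalization involves $\dim Z^{1/2}_\lambda$ rather than $\dimPk{\lambda}$ matches the paper's computation exactly.
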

\begin{proof}
	Let $d_{\N-1}$ have the expansion
	\begin{equation}
		d_{\N-1} = \sum_{\pi \in \setpart{[k+1 \vert (k+1)']}} a_\pi d_\pi,
	\end{equation}
	and define
	\begin{equation}
		\Delta_{\N-1} = \sum_{\pi \in \setpart{[k+1 \vert (k+1)']}} a_\pi \Delta_\pi.
	\end{equation}
	$(k+\frac{1}{2})$-duality demands (see Definition \ref{def: half dual})
	\begin{equation}
		\P_{\N-1}(e_{i_1} \otimes \dots \otimes e_{i_k}) = \Delta_{\N-1}(e_{i_1} \otimes \dots \otimes e_{i_k}).
	\end{equation}
	Using the $S_{\N-1} \times P_{k+\frac{1}{2}}$ irreducible basis \eqref{eq: SN-1 SW basis} for $\VN^{\otimes k}$ this gives
	\begin{equation}
		\P_{ba}^\lambda(z_{\N-1}) \delta_{\beta \alpha} = \delta_{ab} D^\lambda_{\beta \alpha}(d_{\N-1}),
	\end{equation}
	for $\lambda \in \Lambda_{k+\frac{1}{2}, \N}$.
	Taking the trace on both sides gives
	\begin{equation}
		\chr_\lambda(z_{\N-1}) \dim Z_\lambda^{1/2} = \dimSN{\lambda} \chr^\lambda(d_{\N-1}) \Rightarrow \hat{\chr}_\lambda(z_{\N-1}) = \hat{\chr}^\lambda(d_{\N-1}).
	\end{equation}
\end{proof}
\begin{example}\label{ex: norm chars of Z}
	It will be useful to consider some normalized characters of $Z_1$.
	\begin{align}
		\hat{\chr}^{[\N]}(Z_1) &= \hat{\chr}_{[\N]}(z_\N) - \binom{\N}{2} + \N = \N \\
		\hat{\chr}^{[\N-1,1]}(Z_1) &= \hat{\chr}_{[\N-1,1]}(z_\N) - \binom{\N}{2} + \N = \frac{\N(\N-3) - \N(\N-1)}{2} + \N = -\N
	\end{align}
\end{example}

\subsection{Vacillating tableaux and projectors.}
In equation \eqref{def: inductive basis} we defined a set of elements coming from repeated restriction of an irreducible representation $Z_\lambda$ of $P_k(\N)$. We will now use this definition to show that there exists a set of basis vectors that are eigenvectors of elements dual to $\P_{\N}, \P_{\N-1}$.

The following definition will be useful.
\begin{definition}
Let $\vactab = (\lambda^{(0)} = [\N], \lambda^{(\frac{1}{2})} = [\N-1], \lambda^{(1)}, \lambda^{(\frac{3}{2})},\dots,\lambda^{(k)})$ be a vacillating tableaux of shape $\lambda$ and length $k$ (see Definition \ref{def: vac tableu}). Consider the set of orthogonal projectors
\begin{equation}
	\{p^{}_{\lambda^{(l)}} \in P_{l}(\N) \subset P_k(\N)\, \vert \, l = 1,\frac{3}{2}, \dots, k\},
\end{equation}
to the corresponding irreducible representations of $P_{l}(\N)$ and define
\begin{equation}
	p_\vactab = \prod_l p^{}_{\lambda^{(l)}}.
\end{equation}
\end{definition}
Note that the projectors $p_{\lambda^{(l)}}$ commute among themselves because $p_{\lambda^{(l)}}$ is central in $P_l(\N)$ and therefore commutes with $p_{\lambda^{(m)}}$ for $m \leq l$ since $P_m(\N) \subseteq P_l(\N)$. It follows from orthogonality of projectors that $p_{\vactab} p_{\vactab'} = \delta_{\vactab \vactab'} p_\vactab$.

The image of $p_\vactab$ in $Z_\lambda$ has a particularly simple expression in terms of elements in equation \eqref{def: inductive basis}. To see this we will prove the following.
\begin{proposition} \label{prop: image of P_vactab}
	Let $\vactab = (\lambda^{(0)} = [\N], \lambda^{(\frac{1}{2})} = [\N-1], \lambda^{(1)}, \lambda^{(\frac{3}{2})},\dots,\lambda = \lambda^{(k)})$ be a vacillating tableau, $p_\vactab$ be as above and consider $E^{\lambda}_\alpha$ as a sum of elements defined in \eqref{def: inductive basis}, then
	\begin{equation}
		\sum_{\gamma} D^{\lambda}_{\gamma \alpha}(p_\vactab)E^{\lambda}_\gamma =  E^{\vactab}_\alpha
	\end{equation}
\end{proposition}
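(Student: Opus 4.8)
The plan is to read the left-hand side as the action of the central idempotent $p_\vactab$ on the vector $E^\lambda_\alpha$ inside the irreducible $P_k(\N)$-module $Z_\lambda$. By \eqref{eq: d on Q left} (applied with $d=p_\vactab$), the $P_k(\N)$-action on $Z_\lambda$ reads $p_\vactab\cdot E^\lambda_\alpha=\sum_\gamma D^\lambda_{\gamma\alpha}(p_\vactab)E^\lambda_\gamma$, so the proposition is equivalent to $p_\vactab\cdot E^\lambda_\alpha=E^\vactab_\alpha$. First I would expand $E^\lambda_\alpha$ via \eqref{eq: vac tab expansion}, $E^\lambda_\alpha=\sum_{\vactab'}E^{\vactab'}_\alpha$, the sum running over vacillating tableaux $\vactab'$ of shape $\lambda$ and length $k$; and I would record, directly from \eqref{def: inductive basis}, that $E^{\vactab'}_\alpha=(\R^{\vactab'})_{1\alpha}\,u_{\vactab'}$, where $\R^{\vactab'}$ is the composite restriction matrix along the chain of $\vactab'$ and $u_{\vactab'}:=E^{{\lambda'}^{(1+\frac{1}{2})}\to{\lambda'}^{(1)}}_1$ is a single vector of $Z_\lambda$ not depending on $\alpha$. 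It then remains to establish the key identity $p_\vactab\,u_{\vactab'}=\delta_{\vactab\vactab'}\,u_{\vactab'}$.

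\textbf{The key identity, level by level.} For each half-integer or integer $l$ with $1\le l\le k$, the nested definitions of the restriction bases together with \eqref{eq: res k to k-12} and \eqref{eq: res k+12 to k} show that $u_{\vactab'}$ lies in $Z_{{\lambda'}^{(1)}}\subseteq Z^{1/2}_{{\lambda'}^{(1+\frac{1}{2})}}\subseteq\cdots\subseteq Z_{{\lambda'}^{(l)}}\subseteq\cdots\subseteq Z_\lambda$; since restriction is transitive, $u_{\vactab'}$ therefore belongs to the ${\lambda'}^{(l)}$-isotypic component of $\Res{P_k(\N)}{P_l(\N)}{Z_\lambda}$. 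On the other hand $p_{\lambda^{(l)}}\in P_l(\N)\subseteq P_k(\N)$ is the central idempotent of $P_l(\N)$ attached to $Z_{\lambda^{(l)}}$, hence acts on $\Res{P_k(\N)}{P_l(\N)}{Z_\lambda}$ as the projection onto the $\lambda^{(l)}$-isotypic component. Consequently $p_{\lambda^{(l)}}\,u_{\vactab'}=u_{\vactab'}$ if ${\lambda'}^{(l)}=\lambda^{(l)}$, and $p_{\lambda^{(l)}}\,u_{\vactab'}=0$ if ${\lambda'}^{(l)}\ne\lambda^{(l)}$.

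\textbf{Conclusion.} The factors $p_{\lambda^{(l)}}$ of $p_\vactab$ commute (as noted above, each is central in $P_l(\N)$ and $P_m(\N)\subseteq P_l(\N)$ for $m\le l$), so $p_\vactab$ may be rewritten with any chosen factor applied first. If $\vactab'\ne\vactab$ they differ at some step $l_0$, and applying $p_{\lambda^{(l_0)}}$ first annihilates $u_{\vactab'}$, giving $p_\vactab\,u_{\vactab'}=0$; if $\vactab'=\vactab$, every factor fixes $u_\vactab$, so $p_\vactab\,u_\vactab=u_\vactab$. Hence $p_\vactab\cdot E^\lambda_\alpha=\sum_{\vactab'}(\R^{\vactab'})_{1\alpha}\,p_\vactab\,u_{\vactab'}=(\R^{\vactab})_{1\alpha}\,u_\vactab=E^\vactab_\alpha$, which is the assertion.

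\textbf{Main obstacle.} The step requiring genuine care is the nesting/transitivity claim in the second paragraph: one must verify that the subspace $Z^{1/2}_{{\lambda'}^{(l-\frac{1}{2})}}$, defined a priori only as a summand of $Z_{{\lambda'}^{(l)}}$, really does sit inside the ${\lambda'}^{(l-\frac{1}{2})}$-isotypic component of $Z_\lambda$ restricted all the way down to $P_{l-\frac{1}{2}}(\N)$, and similarly at the integer steps — this is where the inductive structure of \eqref{def: inductive basis} and transitivity of restriction along the chain $P_1(\N)\subset P_{1+\frac{1}{2}}(\N)\subset\cdots\subset P_k(\N)$ are genuinely used. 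A minor secondary point is pinning down the index convention so that the left-hand side is literally $p_\vactab\cdot E^\lambda_\alpha$ rather than its transpose, which is exactly what \eqref{eq: d on Q left} supplies.
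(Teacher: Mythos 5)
Your proof is correct and follows essentially the same route as the paper's: expand $E^{\lambda}_\alpha$ over vacillating tableaux via \eqref{eq: vac tab expansion} and evaluate $p_\vactab$ factor by factor on each summand $E^{\vactab'}_\alpha$, using centrality of the $p_{\lambda^{(l)}}$ and the projector property $D^{\rho^{(l)}}(p_{\lambda^{(l)}})=\delta^{\rho^{(l)}\lambda^{(l)}}\idn$ to produce $\delta_{\vactab\vactab'}$. The only cosmetic difference is that where you argue via membership of $u_{\vactab'}$ in the successive isotypic components of the restrictions, the paper expresses the same fact in matrix form by pushing $D^{\lambda}(p_{\lambda^{(l)}})$ through the chain of restriction matrices using their equivariance $\R^{\lambda\rightarrow\lambda'}D^{\lambda}(d)=D^{\lambda'}(d)\R^{\lambda\rightarrow\lambda'}$.
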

\begin{proof}
Plugging in \eqref{eq: vac tab expansion} we have
\begin{equation}
	\sum_\gamma D^{\lambda}_{\gamma \alpha}(p^{}_{\vactab})E^{\lambda}_\gamma = \sum_{\vactab'} \sum_\gamma D^{\lambda}_{\gamma \alpha}(p^{}_{\vactab})E^{\vactab'}_\gamma.
\end{equation}
Consider the contribution on the r.h.s. from a single vacillating tableaux $\vactab' = (\rho^{(0)} = [\N], \rho^{(\frac{1}{2})} = [\N-1], \rho^{(1)}, \rho^{(\frac{3}{2})},\dots,\lambda^{(k)})$ of the same shape as $\vactab$. This is equal to
\begin{equation}
\sum_\gamma D^{\lambda}_{\gamma \alpha}(p^{}_{\vactab}) (\R^{\rho^{(1+\frac{1}{2})} \rightarrow \rho^{(1)}} \R^{\rho^{(2)} \rightarrow \rho^{(1+\frac{1}{2})}}\dots \R^{\rho^{(k-\frac{1}{2})} \rightarrow \rho^{(k-1)}}\R^{\lambda^{(k)} \rightarrow \rho^{(k-\frac{1}{2})}} )_{1 \gamma}E^{\rho^{(1)}}_1.
\end{equation}
Using the equivariance of the restriction matrix this gives
 \begin{equation}
 	\left(\begin{aligned}
 	&D^{\rho^{(1)}}(p^{}_{\lambda^{(1)}})\R^{\rho^{(1+\frac{1}{2})} \rightarrow \rho^{(1)}}
	 D^{\rho^{(1+\frac{1}{2})}}(p^{}_{\lambda^{(1+\frac{1}{2})}})\R^{\rho^{(2)} \rightarrow \rho^{(1+\frac{1}{2})}}
	 \dots\\
	 &
	 D^{\rho^{(k-1)}}(p^{}_{\lambda^{(k-1)}})\R^{\rho^{(k-\frac{1}{2})} \rightarrow \rho^{(k-1)}}
	 D^{\rho^{(k-\frac{1}{2})}}(p^{}_{\lambda^{(k-\frac{1}{2})}})\R^{\lambda^{(k)} \rightarrow \rho^{(k-\frac{1}{2})}}
	 D^{\lambda^{(k)}}(p^{}_{\lambda^{(k)}})
 	\end{aligned}\right)_{1 \alpha}E^{\rho^{(1)}}_1,
 \end{equation}
or diagrammatically
\begin{equation}
	\vcenter{\hbox{

			\tikzset{every picture/.style={line width=0.75pt}} 
			
			\begin{tikzpicture}[x=0.75pt,y=0.75pt,yscale=-1,xscale=1]
				
				\draw    (390,360) -- (390,380) ;
				\draw   (370,380) -- (410,380) -- (410,400) -- (370,400) -- cycle ;
				\draw    (390,400) -- (390,420) ;
				\draw   (380,420) -- (400,420) -- (400,440) -- (380,440) -- cycle ;
				\draw    (390,440) -- (390,460) ;
				
				\draw (376,378) node [anchor=north west][inner sep=0.75pt]    {$\R^{\vactab' }$};
				\draw (385,459) node [anchor=north west][inner sep=0.75pt]  [font=\tiny]  {$\alpha $};
				\draw (369,403) node [anchor=north west][inner sep=0.75pt]  [font=\tiny]  {$\lambda ^{( k)}$};
				\draw (369,364) node [anchor=north west][inner sep=0.75pt]  [font=\tiny]  {$\lambda ^{( 1)}$};
				\draw (380,420) node [anchor=north west][inner sep=0.75pt]    {$p_{\vactab }$};
	\end{tikzpicture} }  } = \vcenter{\hbox{

			\tikzset{every picture/.style={line width=0.75pt}} 
			
			\begin{tikzpicture}[x=0.75pt,y=0.75pt,yscale=-1,xscale=1]
				
				\draw    (230,620) -- (230,640) ;
				\draw   (200,640) -- (280,640) -- (280,670) -- (200,670) -- cycle ;
				\draw    (230,670) -- (230,690) ;
				\draw  [dash pattern={on 0.84pt off 2.51pt}]  (230,580) -- (230,620) ;
				\draw    (230,470) -- (230,490) ;
				\draw   (200,490) -- (280,490) -- (280,520) -- (200,520) -- cycle ;
				\draw    (230,520) -- (230,550) ;
				\draw   (210,690) -- (250,690) -- (250,710) -- (210,710) -- cycle ;
				\draw    (230,710) -- (230,730) ;
				\draw   (210,550) -- (250,550) -- (250,570) -- (210,570) -- cycle ;
				\draw   (210,450) -- (250,450) -- (250,470) -- (210,470) -- cycle ;
				\draw    (230,430) -- (230,450) ;
				
				\draw (202,640) node [anchor=north west][inner sep=0.75pt]    {$\R^{\lambda ^{( k)}\rightarrow \rho ^{( k-\frac{1}{2})}}$};
				\draw (222,732) node [anchor=north west][inner sep=0.75pt]  [font=\tiny]  {$\alpha $};
				\draw (200,490) node [anchor=north west][inner sep=0.75pt]    {$\R^{\rho ^{( 1+\frac{1}{2})}\rightarrow \rho ^{( 1)}}$};
				\draw (215,691) node [anchor=north west][inner sep=0.75pt]    {$p_{\lambda ^{( k)}}$};
				\draw (211,552) node [anchor=north west][inner sep=0.75pt]    {$p_{\lambda ^{( 1+\frac{1}{2})}}$};
				\draw (215,455) node [anchor=north west][inner sep=0.75pt]    {$p_{\lambda ^{( 1)}}$};

	\end{tikzpicture}  }  }
\end{equation}
This vanishes unless $\vactab = \vactab'$ because of the projector property $D^{\lambda}(p^{}_{\lambda^{(l)}}) = \delta^{\lambda \lambda^{(l)}} \idn$, where $\idn$ is the identity matrix.
\end{proof}
In other words
\begin{equation}
	\im p_\vactab = \Span(E_\alpha^\vactab).
\end{equation}

Crucially, as we will now see, the images are one-dimensional and therefore the projectors $p_\vactab$ provide a means for constructing a nice basis for $Z_\lambda$.
\begin{proposition}
	Let $\vactab = (\lambda^{(0)} = [\N], \lambda^{(\frac{1}{2})} = [\N-1], \lambda^{(1)}, \lambda^{(\frac{3}{2})},\dots,\lambda = \lambda^{(k)})$ be a vacillating tableau of shape $\lambda$. The image of $p^{}_\vactab$ is one-dimensional as a linear map on $Z_\lambda$.
\end{proposition}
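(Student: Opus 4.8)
The plan is to convert the statement into a trace computation. Because the projectors $p^{}_{\lambda^{(l)}}$ commute and are mutually orthogonal, their product $p^{}_\vactab = \prod_l p^{}_{\lambda^{(l)}}$ is idempotent, so the matrix $D^\lambda(p^{}_\vactab)$ representing it on $Z_\lambda$ is an idempotent matrix and $\dim \im p^{}_\vactab\big|_{Z_\lambda} = \Tr D^\lambda(p^{}_\vactab)$. Hence it suffices to prove that this trace equals $1$.

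First I would establish the upper bound, namely that $D^\lambda(p^{}_\vactab)$ has rank at most one. By Proposition \ref{prop: image of P_vactab}, the $\alpha$th column of $D^\lambda(p^{}_\vactab)$ is the coordinate vector of $\sum_\gamma D^\lambda_{\gamma\alpha}(p^{}_\vactab)E^\lambda_\gamma = E^\vactab_\alpha$; and the defining formula \eqref{def: inductive basis} exhibits $E^\vactab_\alpha$ as a scalar — the $(1,\alpha)$ entry of a product of restriction matrices — times the single vector $E^{\lambda^{(1+\frac12)}\to\lambda^{(1)}}_1$, which does not depend on $\alpha$. Therefore all columns of $D^\lambda(p^{}_\vactab)$ are proportional to one fixed coordinate vector, so $\operatorname{rank} D^\lambda(p^{}_\vactab)\le 1$, and since the matrix is idempotent, $\Tr D^\lambda(p^{}_\vactab)\in\{0,1\}$.

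Then I would rule out the value $0$ by a global counting argument. Summing the expansion \eqref{eq: vac tab expansion} over all vacillating tableaux $\vactab$ of shape $\lambda$ and length $k$ and inserting Proposition \ref{prop: image of P_vactab} again gives, after comparing coefficients of $E^\lambda_\gamma$, the identity $\sum_\vactab D^\lambda(p^{}_\vactab) = \idn$ on $Z_\lambda$, hence $\dim Z_\lambda = \sum_\vactab \Tr D^\lambda(p^{}_\vactab)$. By Theorem \ref{thm: multi is vac tab} together with $\dim Z_\lambda = \dimPk{\lambda}$, the number of such vacillating tableaux is exactly $\dim Z_\lambda$; so this displays $\dim Z_\lambda$ as a sum of $\dim Z_\lambda$ non-negative integers, each of which is at most $1$. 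Every summand must then equal $1$, in particular the one attached to our chosen $\vactab$, and the image of $p^{}_\vactab$ on $Z_\lambda$ is one-dimensional.

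The routine parts here are harmless; the only points that demand care are recognizing that \eqref{def: inductive basis} genuinely factors $E^\vactab_\alpha$ through a single $\alpha$-independent vector (which is what makes the rank bound clean) and invoking the counting identity $\#\{\text{vacillating tableaux of shape }\lambda\} = \dim Z_\lambda$ correctly. As a more conceptual alternative to the counting step I would keep in reserve a Gelfand--Tsetlin / Bratteli-diagram argument: since each branching $P_l(\N)\downarrow P_{l-\frac12}(\N)$ and $P_{l-\frac12}(\N)\downarrow P_{l-1}(\N)$ is multiplicity free and $p^{}_\vactab$ is a product of commuting idempotents, $\im p^{}_\vactab = \bigcap_l \im p^{}_{\lambda^{(l)}}$ cuts $Z_\lambda$ down the chain to a single copy of the $P_1(\N)$-module $Z_{\lambda^{(1)}}$, which is one-dimensional because all irreducible representations of $P_1(\N)$ are.
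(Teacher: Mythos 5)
Your proof is correct, and it reaches the conclusion by a genuinely different route from the paper's. The paper also reduces the claim to showing $\sum_\alpha D^\lambda_{\alpha\alpha}(p_\vactab)=1$, but it computes that trace head-on: it pairs the expression from Proposition \ref{prop: image of P_vactab} against the orthonormal basis $E^\lambda_\alpha$ and collapses the resulting products of restriction matrices using the orthonormality relation \eqref{eq: res basis ON condition}, arriving at $\sum_{\vactab'}\delta^{\vactab\vactab'}=1$. You instead split the problem: the upper bound $\operatorname{rank} D^\lambda(p_\vactab)\le 1$ comes purely from the structural observation that \eqref{def: inductive basis} writes every $E^\vactab_\alpha$ as an $\alpha$-independent vector times a scalar, and the lower bound comes from the global identity $\sum_\vactab D^\lambda(p_\vactab)=\idn$ (obtained by comparing coefficients in \eqref{eq: vac tab expansion} with Proposition \ref{prop: image of P_vactab}) together with the counting statement of Theorem \ref{thm: multi is vac tab}, $\#\{\vactab\ \text{of shape}\ \lambda\}=\dimPk{\lambda}=\dim Z_\lambda$. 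What each approach buys: the paper's computation never needs the enumeration of vacillating tableaux and is entirely local to the fixed $\vactab$, whereas your argument dispenses with the orthonormality of the $\R^{\lambda\rightarrow\lambda'}$ matrices (only their role as change-of-basis matrices and the idempotency of $p_\vactab$ enter) at the price of importing the external counting theorem; since that theorem is quoted from the literature independently of this proposition, there is no circularity. One small wording slip: the $p_{\lambda^{(l)}}$ for different levels $l$ commute but are not mutually orthogonal (their product is $p_\vactab\neq 0$); commutativity alone is what makes $p_\vactab$, and hence $D^\lambda(p_\vactab)$, idempotent, which is all your trace-equals-rank step needs.
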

\begin{proof}
	Since $p_\vactab$ is a projector, the trace of the representation matrix gives the rank (dimension of the image). Therefore, we compute
	\begin{align}
		\sum_{\alpha} D^{\lambda}_{\alpha \alpha}(p_\vactab^{}) &=\sum_\alpha \langle E^{\lambda}_\alpha, \sum_{\gamma} D^{\lambda}_{\gamma \alpha}(p_\vactab)E^{\lambda}_\gamma \rangle \\
		&= \sum_\alpha \langle E^{\lambda}_\alpha, (\R^{\lambda^{(1+\frac{1}{2})} \rightarrow \lambda^{(1)}} \R^{\lambda^{(2)} \rightarrow \lambda^{(1+\frac{1}{2})}}\dots \R^{\lambda^{(k-\frac{1}{2})} \rightarrow \lambda^{(k-1)}}\R^{\lambda^{(k)} \rightarrow \lambda^{(k-\frac{1}{2})}} )_{1 \alpha}E^{\lambda^{(1)}}_1 \rangle \nonumber \\
		&=  \begin{aligned}[t]
			\sum_\alpha \langle \sum_{\vactab'} &(\R^{\rho^{(1+\frac{1}{2})} \rightarrow \rho^{(1)}} \R^{\rho^{(2)} \rightarrow \rho^{(1+\frac{1}{2})}}\dots \R^{\rho^{(k-\frac{1}{2})} \rightarrow \rho^{(k-1)}}\R^{\lambda^{(k)} \rightarrow \rho^{(k-\frac{1}{2})}} )_{1 \alpha}E^{\rho^{(1)}}_1,\\
			& (\R^{\lambda^{(1+\frac{1}{2})} \rightarrow \lambda^{(1)}} \R^{\lambda^{(2)} \rightarrow \lambda^{(1+\frac{1}{2})}}\dots \R^{\lambda^{(k-\frac{1}{2})} \rightarrow \lambda^{(k-1)}}\R^{\lambda^{(k)} \rightarrow \lambda^{(k-\frac{1}{2})}} )_{1 \alpha}E^{\lambda^{(1)}}_1 \rangle, \nonumber
		\end{aligned}
	\end{align}
where the second equality uses Proposition \ref{prop: image of P_vactab} and the sum is over vacillating tableaux $\vactab' = (\rho^{(0)} = [\N], \rho^{(\frac{1}{2})} = [\N-1], \rho^{(1)}, \rho^{(\frac{3}{2})},\dots,\lambda^{(k)})$ with shape $\lambda^{(k)}=\lambda$.
The last equation includes matrix multiplications of the form
\begin{equation}
	\sum_\alpha \R^{\lambda^{(k)} \rightarrow \rho^{(k-\frac{1}{2})}}_{\beta \alpha} \R^{\lambda^{(k)} \rightarrow \lambda^{(k-\frac{1}{2})}}_{\gamma \alpha} = \delta^{\rho^{(k-\frac{1}{2})} \lambda^{(k-\frac{1}{2})} } \delta_{\beta \gamma},
\end{equation}
where we used the orthonormality condition \eqref{eq: res basis ON condition}. Successively applying this identity gives
\begin{equation}
	\sum_{\alpha} D^{\lambda}_{\alpha \alpha}(p_\vactab^{}) = \sum_{\vactab'} \delta^{\vactab \vactab'} = 1,
\end{equation}
where
\begin{equation}
	 \delta^{\vactab \vactab'}  = \prod_{l} \delta^{\lambda^{(l)} \rho^{(l)}}.
\end{equation}
\end{proof}

As we promised, we have the following Corollary about dual elements acting on the image of $p_\vactab$.
\begin{corollary}
Let $\vactab$ be a vacillating tableaux of shape $\lambda \in \Lambda_{2,\N}$.
The elements $Z_1, Z_{1\frac{1}{2}}, Z_2 \in P_2(\N)$ defined in Theorem \ref{thm: murphys} act on $E^\vactab_\alpha$ (the image of $p_\vactab$ on $Z_\lambda$) by normalized characters.
\end{corollary}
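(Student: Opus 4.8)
The plan is to exploit that $E^\vactab_\alpha$ spans the one-dimensional image of the composite projector $p_\vactab = p_{\lambda^{(1)}} p_{\lambda^{(1+\frac{1}{2})}} p_{\lambda^{(2)}}$, established in Proposition \ref{prop: image of P_vactab} and the proposition following it, and to combine this with the centrality statements of Theorem \ref{thm: murphys} to reduce the action of each of $Z_1, Z_{1\frac{1}{2}}, Z_2$ on $E^\vactab_\alpha$ to multiplication by a scalar, which will then be identified with a normalized character via Propositions \ref{prop: kduals same normalized characters} and \ref{prop: khalf dual same normalized character}.

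First I would record that the three factors of $p_\vactab$ commute: each $p_{\lambda^{(l)}}$ is central in $P_l(\N)$, hence commutes with $p_{\lambda^{(m)}}$ whenever $P_m(\N) \subseteq P_l(\N)$, so $p_\vactab$ is a genuine idempotent and $D^\lambda(p_\vactab)$ is a projector on $Z_\lambda$. Since $E^\vactab_\alpha$ lies in $\im D^\lambda(p_\vactab)$ and $D^\lambda(p_\vactab)$ acts as the identity there, we have $D^\lambda(p_\vactab)E^\vactab_\alpha = E^\vactab_\alpha$, and moreover $D^\lambda(p_{\lambda^{(l)}})E^\vactab_\alpha = E^\vactab_\alpha$ for each $l \in \{1,\tfrac{3}{2},2\}$; equivalently $E^\vactab_\alpha$ lies in the $\lambda^{(l)}$-isotypic component of $\Res{P_2(\N)}{P_l(\N)}{Z_\lambda}$.

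Next, for each $Z \in \{Z_1, Z_{1\frac{1}{2}}, Z_2\}$, which by Theorem \ref{thm: murphys} is central in the corresponding $P_l(\N)$ with $l \in \{1,\tfrac{3}{2},2\}$, I would argue as follows. Both $Z$ and $p_{\lambda^{(l)}}$ are central in $P_l(\N)$, so $Z p_{\lambda^{(l)}}$ is a central element of the semisimple algebra $P_l(\N)$; decomposing it over the central projectors of $P_l(\N)$ and invoking Schur's lemma (as in the Corollary after Proposition \ref{prop: rep basis of ZFG}, applied to $P_l(\N)$) gives the identity $Z p_{\lambda^{(l)}} = \hat{\chr}^{\lambda^{(l)}}(Z)\, p_{\lambda^{(l)}}$ in $P_l(\N) \subseteq P_2(\N)$, where $\hat{\chr}^{\lambda^{(l)}}(Z)$ is the normalized $P_l(\N)$-character of $Z$ on the irreducible representation labelled $\lambda^{(l)}$. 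Using that $Z$ commutes with the remaining two factors of $p_\vactab$ (each of which is central in an algebra of the chain that contains $Z$, which is exactly the content of the commuting-set corollary stated just above), I would rewrite $Z p_\vactab = \hat{\chr}^{\lambda^{(l)}}(Z)\, p_\vactab$, and hence conclude
\begin{equation}
	D^\lambda(Z)E^\vactab_\alpha = D^\lambda(Z p_\vactab)E^\vactab_\alpha = \hat{\chr}^{\lambda^{(l)}}(Z)\,E^\vactab_\alpha .
\end{equation}
For $Z_2$ this is immediate since it is already central in $P_2(\N)$, so $D^\lambda(Z_2) = \hat{\chr}^{\lambda}(Z_2)\,\idn$ on all of $Z_\lambda$.

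Finally I would make the three eigenvalues explicit. By Theorem \ref{thm: murphys} the elements $Z_1, Z_{1\frac{1}{2}}, Z_2$ differ from the $1$-, $(1\tfrac{1}{2})$- and $2$-duals of $z_\N$, $z_{\N-1}$, $z_\N$ by scalar multiples of the identity, so Propositions \ref{prop: kduals same normalized characters} and \ref{prop: khalf dual same normalized character} yield $\hat{\chr}^{\lambda^{(1)}}(Z_1) = \hat{\chr}_{\lambda^{(1)}}(z_\N) - \binom{\N}{2} + \N$, $\hat{\chr}^{\lambda^{(1+\frac{1}{2})}}(Z_{1\frac{1}{2}}) = \hat{\chr}_{\lambda^{(1+\frac{1}{2})}}(z_{\N-1}) - \binom{\N}{2} + 2\N - 1$ and $\hat{\chr}^{\lambda^{(2)}}(Z_2) = \hat{\chr}_{\lambda^{(2)}}(z_\N) - \binom{\N}{2} + 2\N$, each evaluated through the content sum $\hat{\chr}_\mu(z_n) = \sum_{(i,j)\in \YT{\mu}}(j-i)$ exactly as in Example \ref{ex: norm chars}. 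I do not expect any deep obstacle here: the only point requiring care is the bookkeeping of which algebra in the chain $P_1(\N) \subset P_{1+\frac{1}{2}}(\N) \subset P_2(\N)$ each dual element and each projector is central in, and therefore which pairs commute — but the commuting-set corollary preceding the statement already supplies essentially all of this.
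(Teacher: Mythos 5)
Your proof is correct, and it reaches the conclusion by a slightly different mechanism than the paper. The paper works with the explicit inductive-basis expression for $E^\vactab_\alpha$ as a string of restriction matrices and pushes $Z_1$ (resp. $Z_{1\frac{1}{2}}$, $Z_2$) through that string using the equivariance $\R^{\lambda\rightarrow\lambda'}D^{\lambda}(d)=D^{\lambda'}(d)\R^{\lambda\rightarrow\lambda'}$ until Schur's lemma can be applied at the level of the irreducible $P_1(\N)$, $P_{1+\frac{1}{2}}(\N)$ or $P_2(\N)$ representation, where $D^{\lambda^{(l)}}(Z)=\hat{\chr}^{\lambda^{(l)}}(Z)\idn$. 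You instead never touch the restriction matrices: you use only the characterization of $E^\vactab_\alpha$ as spanning the image of the idempotent $p_\vactab=\prod_l p_{\lambda^{(l)}}$, together with the algebra identity $Z\,p_{\lambda^{(l)}}=\hat{\chr}^{\lambda^{(l)}}(Z)\,p_{\lambda^{(l)}}$ for the central element $Z$ and the central isotypic projector in $P_l(\N)$, to get $Z\,p_\vactab=\hat{\chr}^{\lambda^{(l)}}(Z)\,p_\vactab$ and hence the eigenvalue on $E^\vactab_\alpha$. This is a legitimate and arguably cleaner route, since it makes the statement manifestly independent of the particular restriction-matrix realization. Two small points of hygiene: (i) the identity $Z\,p_{\lambda^{(l)}}=\hat{\chr}^{\lambda^{(l)}}(Z)\,p_{\lambda^{(l)}}$ should be justified by the partition-algebra analogue of the group-algebra corollary you cite (Schur's lemma plus semisimplicity of $P_l(\N)$, valid here since $\N\geq 2k$), which the paper uses elsewhere but does not state as that corollary; (ii) your parenthetical justification that $Z$ commutes with the remaining projector factors is slightly off for the pair $Z_{1\frac{1}{2}}$ and $p_{\lambda^{(1)}}$ — there the correct reason is that $Z_{1\frac{1}{2}}$ is central in $P_{1+\frac{1}{2}}(\N)\supset P_1(\N)\ni p_{\lambda^{(1)}}$ — but in fact this commutation is not even needed, since the $p_{\lambda^{(l)}}$ commute among themselves and you may simply reorder them so that the relevant one sits next to $Z$ before applying the scalar identity. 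Your final evaluation of the three eigenvalues via Propositions \ref{prop: kduals same normalized characters} and \ref{prop: khalf dual same normalized character} and the content sums matches the paper's conventions.
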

\begin{proof}
This readily follows by considering for example
\begin{equation}
	\sum_{\gamma} D^{\lambda}_{\gamma \alpha}(Z_1)E^{\vactab}_{\gamma},
\end{equation}
and using the equivariance property of the restriction matrices together with
\begin{equation}
	D^{\lambda^{(1)}}(Z_1) = \hat{\chr}^{\lambda^{(1)}}(Z_1) \idn.
\end{equation}
Similarly for $ Z_{1\frac{1}{2}}, Z_2 $.
\end{proof}
Note that this allows us to pick up normalized characters of the various irreducible representations in the vacillating tableau.

From Theorem \ref{thm: multi is vac tab}, the number of vacillating tableaux is equal to the dimension of $Z_\lambda$. Therefore, we have found a complete set of orthogonal projectors with rank one, and together their images form a total space isomorphic to $Z_\lambda$. Let $E^{\lambda}_\vactab$ be a basis for the image of $p_\vactab$, then the irreducible representation of $P_k(\N)$
\begin{equation}
	Z_\lambda \cong \Span(E^{\lambda}_\vactab \, \vert \, \text{for all vacillating tableaux $\vactab$ of shape $\lambda$ and length $k$.}).
\end{equation}
The vectors $E^{\lambda}_\vactab$ are simultaneous eigenvectors of $Z_1, Z_{1\frac{1}{2}}, Z_2 \in P_2(\N)$. We give a name to the change of basis matrix
\begin{equation}
	E^\lambda_\vactab = \sum_{\alpha} V^{\lambda}_{\alpha \vactab} E^{\lambda}_\alpha.
\end{equation}
Because the restriction properties of these basis elements are manifest in the vacillating tableaux labelling them, it is called an inductive basis. Inductive bases respecting the restriction of $P_k(\N)$ are discussed in \cite[Theorem 3.37]{Halverson2005}.

So far we have only discussed the existence of an inductive basis and seen that it forms an eigenbasis of a set of commuting operators (e.g. $Z_1, Z_{1\frac{1}{2}}, Z_2$). In the next subsection we will use this in the context of the regular representation of $P_k(\N)$ to give an explicit procedure for constructing matrix units.

\subsection{All $\N$ construction of matrix units.} \label{subsec: all N matrix units}
Recall that the regular representation of $P_k(\N)$ decomposes into representations of the left and right action as (see \eqref{eq: PkN AW decomp})
\begin{equation}
	P_k(\N) \cong \bigoplus_{\lambda  \in \Lambda_{k,\N} } Z_\lambda \otimes Z_\lambda.
\end{equation}
From the previous section, each component on the r.h.s. has a basis
\begin{equation}
	Z_\lambda \otimes Z_\lambda \cong\Span(E^{\lambda}_\vactab \otimes E^{\lambda}_{\vactab'} \, \vert \, \text{for all vacillating tableaux $\vactab, \vactab'$ of shape $\lambda$ and length $k$.}).
\end{equation}

In principle, this basis is found by acting on $P_k(\N)$ from the left and right using $p_\vactab, p_{\vactab'}$ and producing a basis for the image. However, explicit forms of the projectors are not known in general. Fortunately, a Lagrange interpolation method can be used to construct them \cite{Doty2019}. We state the result for  $k=2$. The case of $k=1$ is given by removing the terms involving $Z_{1\frac{1}{2}}, Z_{2}$ (see Example \ref{ex: P1N units} below). For general $k$ it is necessary to consider additional $Z_i$.
\begin{proposition}
	Let $\vactab = (\lambda^{(0)} = [\N], \lambda^{(\frac{1}{2})} = [\N-1], \lambda^{(1)}, \lambda^{(1 +\frac{1}{2})}, \lambda^{(2)})$ be a vacillating tableau of length two, then
	\begin{equation}
		p_\vactab = \begin{aligned}[t]			
			&\prod_{\substack{\rho^{(1)} \in \Lambda_{1,\N} \\ \rho^{(1)} \neq \lambda^{(1)} } } \frac{Z_1 - \hat{\chr}^{\rho^{(1)}}(Z_1)}{\hat{\chr}^{\lambda^{(1)}}(Z_1) - \hat{\chr}^{\rho^{(1)}}(Z_1)} \times \\
			&\prod_{\substack{\rho^{(1+\frac{1}{2})} \in \Lambda_{1+\frac{1}{2},\N}  \\
					\rho^{(1+\frac{1}{2})} \neq \lambda^{(1+\frac{1}{2})} } } \frac{Z_{1\frac{1}{2}} - \hat{\chr}^{\rho^{(1+\frac{1}{2})}}(Z_{1\frac{1}{2}})}{\hat{\chr}^{\lambda^{(1+\frac{1}{2})}}(Z_{1\frac{1}{2}}) - \hat{\chr}^{\rho^{(1+\frac{1}{2})}}(Z_{1\frac{1}{2}})} \times \\
			&\prod_{\substack{\rho^{(2)} \in \Lambda_{2,\N} \\
					\rho^{(2)} \neq \lambda^{(2)} } } \frac{Z_{2} - \hat{\chr}^{\rho^{(2)}}(Z_{2})}{\hat{\chr}^{\lambda^{(2)}}(Z_{2}) - \hat{\chr}^{\rho^{(2)}}(Z_{2})}.
		\end{aligned}
	\end{equation}	\label{eq: vactab projector}
\end{proposition}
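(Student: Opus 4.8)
The plan is to reduce the claim to three independent Lagrange‑interpolation identities, one at each level $l\in\{1,\,1\tfrac12,\,2\}$, and to match the $l$‑th factor of the displayed product with the primitive central idempotent $p_{\lambda^{(l)}}\in P_l(\N)$. Recall that by the definition of $p_\vactab$ we have $p_\vactab=\prod_l p_{\lambda^{(l)}}$, the product being unambiguous because these idempotents commute ($p_{\lambda^{(l)}}$ is central in $P_l(\N)$ and $P_m(\N)\subseteq P_l(\N)$ for $m\le l$). Since the three factors of the claimed formula are built from $Z_1,Z_{1\frac12},Z_2$, which also commute (Corollary after Theorem~\ref{thm: murphys}), it suffices to prove, for each fixed $l$, that
\[
\prod_{\substack{\rho^{(l)}\in\Lambda_{l,\N}\\ \rho^{(l)}\neq\lambda^{(l)}}}\frac{Z_l-\hat{\chr}^{\rho^{(l)}}(Z_l)}{\hat{\chr}^{\lambda^{(l)}}(Z_l)-\hat{\chr}^{\rho^{(l)}}(Z_l)}=p_{\lambda^{(l)}}
\]
as an identity in $P_2(\N)$, where $Z_l$ denotes $Z_1,Z_{1\frac12},Z_2$ for $l=1,1\tfrac12,2$ respectively.

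First I would assemble the structural input. By Theorem~\ref{thm: murphys}, $Z_l$ is central in $P_l(\N)$; using the Artin--Wedderburn decomposition~\eqref{eq: PkN AW decomp} (with $\Lambda_{l,\N}$ the labelling set of its irreducibles), $Z_l$ acts on the block indexed by $\rho^{(l)}$ by the scalar $\hat{\chr}^{\rho^{(l)}}(Z_l)$, so $Z_l=\sum_{\rho^{(l)}\in\Lambda_{l,\N}}\hat{\chr}^{\rho^{(l)}}(Z_l)\,p_{\rho^{(l)}}$ with the $p_{\rho^{(l)}}$ nonzero, mutually orthogonal idempotents summing to $1$; since $P_l(\N)\hookrightarrow P_2(\N)$ is injective the same relation holds in $P_2(\N)$. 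Hence, provided the scalars $\{\hat{\chr}^{\rho^{(l)}}(Z_l):\rho^{(l)}\in\Lambda_{l,\N}\}$ are pairwise distinct, the minimal polynomial of $Z_l$ has simple roots at exactly these values, and the elementary Lagrange‑interpolation argument (cf.\ \cite{Doty2019}) forces the displayed product to equal $p_{\lambda^{(l)}}$. Multiplying over $l$ then yields the proposition.

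The only genuine computation is the verification of the distinctness hypothesis at each level. Using Propositions~\ref{prop: kduals same normalized characters} and~\ref{prop: khalf dual same normalized character} (together with the explicit shifts recorded in Theorem~\ref{thm: murphys}) I would identify $\hat{\chr}^{\rho^{(l)}}(Z_l)$ with a shift of a normalized $\SN$‑ (resp.\ $S_{\N-1}$‑) character of $z_\N$ (resp.\ $z_{\N-1}$), and evaluate the latter via the content‑sum formula~\eqref{eq: norm characters of T2s}: for $l=1$, $\Lambda_{1,\N}=\{[\N],[\N-1,1]\}$ gives eigenvalues $\N$ and $-\N$ (Example~\ref{ex: norm chars of Z}); for $l=1\tfrac12$, $\Lambda_{1+\frac12,\N}=\{[\N-1],[\N-2,1]\}$ gives two shifted characters of $z_{\N-1}$ differing by $\N-1$; for $l=2$, $\Lambda_{2,\N}=\{[\N],[\N-1,1],[\N-2,2],[\N-2,1,1]\}$ and the four values of $\hat{\chr}_\lambda(z_\N)$ in Example~\ref{ex: norm chars} have pairwise differences $\N$, $\N-2$, $2$, $2(\N-1)$, $2\N$. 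All of these are nonzero in the semisimple range $\N\ge 2k=4$, so the hypothesis holds in every case.

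The hard part — and the structural reason the statement is confined to $k=2$ — is precisely this separation property: at level $l$ the single element $Z_l$ must distinguish all of $\Lambda_{l,\N}$, which succeeds here only because those labelling sets are so small. For larger $k$ a single dual no longer separates the blocks and one must invoke the full commuting family of Jucys--Murphy‑type elements; the $k=1$ case is the obvious specialisation, obtained by deleting the $Z_{1\frac12}$ and $Z_2$ factors.
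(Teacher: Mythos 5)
Your argument is correct and follows essentially the same route as the paper's own (one-line) proof: both rest on Lagrange interpolation for the commuting elements $Z_1, Z_{1\frac{1}{2}}, Z_2$, whose normalized characters on $\Lambda_{1,\N}$, $\Lambda_{1+\frac{1}{2},\N}$, $\Lambda_{2,\N}$ are pairwise distinct, combined with the definition $p_\vactab = \prod_l p_{\lambda^{(l)}}$. Your level-by-level identification of each factor with the central idempotent $p_{\lambda^{(l)}}$, and your explicit verification of the distinctness of the eigenvalues (which the paper asserts but does not compute in the proof), are welcome refinements but not a different method.
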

\begin{proof}
	By construction, this operator vanishes when acting on $E^{\lambda}_{\vactab'}$ unless $\vactab = \vactab'$, since $Z_1, Z_{1\frac{1}{2}}, Z_2$ act through normalized characters that are all distinct.
\end{proof}

We define new linear operators on $P_k(\N)$ labelled by two vacillating tableaux.
\begin{definition} \label{def: vactab proj}
	Let $\vactab, \vactab'$ be two vacillating tableaux of the same shape and length $k$. Define $\mathcal{P}_{\vactab, \vactab'}$ by
	\begin{equation}
		\mathcal{P}_{\vactab \vactab'}(d) = p_\vactab d p_{\vactab'}.
	\end{equation}
\end{definition}
\begin{example}\label{ex: P1N units}
	The simplest example is to consider $P_1(\N)$, where we only need to consider $Z_1 = {\scriptstyle \PAdiagram{1}{}{}}$. There are two vacillating tableaux $\vactab_1 = ([\N], [\N], [\N]), \vactab_2 = ([\N], [\N], [\N-1,1])$ and using Theorem \ref{thm: murphys} together with Example \ref{ex: norm chars} we have
	\begin{align}
		&p_{\vactab_1}^{} =  \frac{\PAdiagram{1}{}{} - \PAdiagram{1}{-1/1}{}\chi^{[\N-1,1]}\qty({\scriptstyle \PAdiagram{1}{}{}}) }{\chi^{[\N]}\qty({\scriptstyle \PAdiagram{1}{}{}})-\chi^{[\N-1,1]}\qty({\scriptstyle \PAdiagram{1}{}{}})} = \frac{\PAdiagram{1}{}{}}{\N}\\
		&p_{\vactab_2}^{} =  \frac{\PAdiagram{1}{}{} - \PAdiagram{1}{-1/1}{}\chi^{[\N]}\qty({\scriptstyle \PAdiagram{1}{}{}}) }{\chi^{[\N-1,1]}\qty({\scriptstyle \PAdiagram{1}{}{}})-\chi^{[\N]}\qty({\scriptstyle \PAdiagram{1}{}{}})} = \frac{\PAdiagram{1}{}{} - \PAdiagram{1}{-1/1}{}\N }{-\N}
	\end{align}
	Therefore
	\begin{align}
		\begin{aligned}
			\mathcal{P}_{\vactab_1 \vactab_1}(\PAdiagram{1}{}{}) &= \PAdiagram{1}{}{}\\
			\mathcal{P}_{\vactab_1 \vactab_1}(\PAdiagram{1}{-1/1}{}) &= \frac{1}{\N}\PAdiagram{1}{}{}
		\end{aligned}
		\quad
		\begin{aligned}
			\mathcal{P}_{\vactab_2 \vactab_2}(\PAdiagram{1}{}{}) &= 0\\
			\mathcal{P}_{\vactab_2 \vactab_2}(\PAdiagram{1}{-1/1}{}) &= \PAdiagram{1}{-1/1}{}-\frac{1}{\N}\PAdiagram{1}{}{}
		\end{aligned}
	\end{align}
	and
	\begin{equation}
		\im(\mathcal{P}_{\vactab_1 \vactab_1}) = \Span(Q^{[\N]} = \frac{1}{\N} \PAdiagram{1}{}{}), \quad \im(\mathcal{P}_{\vactab_2 \vactab_2}) = \Span(Q^{[\N-1,1]} =\PAdiagram{1}{-1/1}{}-\frac{1}{\N}\PAdiagram{1}{}{}).
	\end{equation}
	Note that $Q^{\lambda}Q^{\lambda'} = \delta^{\lambda \lambda'} Q^{\lambda}$.
\end{example}

The image of a matrix can be found by computing its pivot columns \cite[\textbf{2O} in Section 2.4]{strang2006linear}. The vectors formed by the pivot columns constitute a basis of the image. From the previous example, we see that the matrices corresponding to $\mathcal{P}_{\vactab \vactab'}$ have rational functions of $\N$ as entries. In general, it is non-trivial to compute the pivot columns of a matrix with rational functions. Instead, we use the following trick.

Consider the matrix associated with the operators $P_{\vactab \vactab'}$ in the diagram basis
\begin{equation}
	\mathcal{P}_{\vactab \vactab'}(d_\pi) = \sum_{\pi' \in \setpart{[k \vert k']}} (\mathcal{P}_{\vactab \vactab'})_{\pi' \pi} d_{\pi'}. \label{eq: P vactab vactab matrix}
\end{equation}
Because we know that the image is one-dimensional, this matrix has a single pivot column. We find the pivot column by substituting $\N = n$ for some integer $n \geq 2k$. Let this matrix have pivot column $\nu$, then the element
\begin{equation}
	\sum_{\pi \in \setpart{[k \vert k']}}  (\mathcal{P}_{\vactab \vactab'})_{\pi \nu} d_{\pi},
\end{equation}
is a basis for the image of $\mathcal{P}_{\vactab \vactab'}$. For two vacillating tableaux $\vactab, \vactab'$ of shape $\lambda \in \Lambda_{k,\N}$ we define the matrix units
\begin{equation}
	Q^{\lambda}_{\vactab \vactab'} = \sum_{\pi \in \setpart{[k \vert k']}}  (\mathcal{P}_{\vactab \vactab'})_{\pi \nu} d_{\pi}.
\end{equation}

The validity of this trick is argued as follows. Suppose we were able to find the row echelon form
\begin{equation}
	\widetilde{(\mathcal{P}_{\vactab \vactab'})}_{\pi' \pi}
\end{equation}
of $(\mathcal{P}_{\vactab \vactab'})_{\pi' \pi}$. The row echelon form has a single non-zero row and pivot element
\begin{equation}
	\widetilde{(\mathcal{P}_{\vactab \vactab'})}_{1 \nu} = f(\N).
\end{equation}
Since $f(\N)$ is a rational function it has a finite number of zeroes and poles. Away from these we can construct the matrix
\begin{equation}
	\frac{1}{f(\N)}\widetilde{(\mathcal{P}_{\vactab \vactab'})}_{\pi' \pi},
\end{equation}
whose pivot element is $1$, and in particular independent of $\N$. This matrix is now in reduced row echelon form. The reduced row echelon form of a matrix is unique. Thus, away from the poles and zeroes of $f(\N)$, we can argue that the pivot column is independent of $\N$.

In appendix \ref{apx: P2N units} we give the result of applying this procedure to $P_2(\N)$ and give a table of matrix units. Note that the above procedure does not fix the normalization of each matrix unit, since a basis for a one-dimensional subspace is only determined up to a scalar. We address this issue in the following chapter (see section \ref{subsection: normalization}).

\section{Summary}
In the first section of this chapter we saw that partition algebras $P_k(\N)$, $P_{k+\frac{1}{2}}(\N)$, form a family of diagram algebras that are Schur-Weyl dual to the symmetric groups $\SN$ and $S_{\N-1}$ respectively. This duality implies that the representation theory of $P_k(\N)(P_{k+\frac{1}{2}}(\N))$ can be used to study the representation theory of $\SN(S_{\N-1})$ and vice versa. The partition algebras form an "inductive chain"
\begin{equation}
	P_1(\N) \subset P_{1+\frac{1}{2}}(\N) \subset \dots \subset P_{k-\frac{1}{2}}(\N) \subset P_k(\N),
\end{equation}
and we saw that the restriction-induction construction of tensor powers, discussed in Chapter \ref{ch: SN}, determine the decomposition of representations $Z_\lambda$ of partition algebras under restriction along this chain. Particularly noteworthy is the absence of multiplicities of irreducible representations in the restriction. This gave rise to an "inductive basis" of elements labelled by vacillating tableaux. A vacillating tableau describes the set of irreducible representations a particular basis element of $Z_\lambda$ belongs to under restriction along the inductive chain. The fact that all irreducible representations of $P_1(\N)$ are one-dimensional, together with the fact that restrictions along the chain are multiplicity free, guaranteed that these elements form a basis and that each basis element is unique up to normalization. The inductive structure of $P_k(\N)$ and its relation, through Schur-Weyl duality, to induction and restriction of $\SN$ and $S_{\N-1}$ is well-known in the mathematical literature \cite{Martin1996, Halverson2005}. We have attempted to present this structure in language familiar to physicists.

In the second section, we studied the semi-simplicity of $P_k(\N)$. We found that there exists a basis of matrix units for $P_k(\N)$, where it is clear that it corresponds to an algebra of block matrices. The explicit change of basis is a generalization of the Fourier inversion formula studied in finite group theory. Semi-simplicity of partition algebras and formal expressions for matrix units are known in the literature, see \cite{Halverson2005} and \cite{AR90DissertCh1} respectively. In \cite{AR90DissertCh1}, general semi-simple algebras are discussed, and the regular representation is used to define a non-degenerate bilinear form. The formula for matrix units in this thesis uses a non-degenerate bilinear form coming from a trace in $\VN^{\otimes k}$. The proof of this formula is new and was first proven in \cite{Barnes:2022qli}.

In the third and last section, we gave a method for constructing the matrix units starting from the diagram basis. For this, we used the existence of the inductive basis. Crucially, we were able to construct projection operators $\mathcal{P}_{\vactab \vactab'}: P_k(\N) \rightarrow P_k(\N)$, with one-dimensional images corresponding to each matrix unit (up to normalization). Because the matrices corresponding to projection operators contain rational functions in $\N$, the usual procedure for finding a basis of the image does not work and/or is inefficient. Instead, we devised an all $\N$ construction of the images using a trick. The validity of this trick can be argued based on poles and zeroes of a particular entry in the row echelon form of the projection matrix. The matrix units for $P_1(\N)$ and $P_2(\N)$ underlie the construction of permutation invariant matrix models in the next chapter.

The use of $k$-duals and $(k+\tfrac{1}{2})$-duals of central elements in $\mathbb{C}(S_\N)$ to construct matrix units for $P_k(\N)$ is new. The construction of idempotent elements in algebras like the partition algebras, using central elements, has been discussed in \cite{Doty2019}. The fact that $\mathcal{P}_{\vactab \vactab'}$ can be constructed from left and right actions of such idempotents is implicit in the mathematical literature but we have not seen an explicit and concrete discussion of this in the literature. In this sense, our construction of matrix units is new and the technique used for finding the image for all $\N$ is new as well. This procedure was invented in the upcoming work \cite{PIGTM}, where it is used to construct permutation invariant Gaussian tensor models.

	\chapter{Permutation invariant matrix models}\label{chapter: 0d}
Matrix models can be thought of as zero-dimensional quantum field theories. They are defined by giving a probability distribution $p(X)$ on a space of matrices $X=\abs{\abs{X_{ij}}}$ for $i,j=1, \dots, \N$.
In this chapter we use the results of the previous section to give some new perspectives on the class of matrix models defined in \cite{Kartsaklis2017, Ramgoolam2019a}. Unlike the classical matrix models \cite{Wigner1955, Dyson1962}, which are invariant under a continuous symmetry group, these matrix models have discrete (permutation) symmetry. We give a new algorithm for computing expectation values of observables in these models, based on the new ideas.

The most general permutation invariant Gaussian one-matrix models were first solved in \cite{Ramgoolam2019a}: a 13-parameter model was constructed using Clebsch-Gordan coefficients for the decomposition of $\VN \otimes \VN$ into irreducible representations of $\SN$; the first and second moments were given as functions of $\N$ and general expectation values were shown to be computable using Wick's theorem. Observables in permutation invariant matrix models are permutation invariant matrix polynomials. A basis for the space of observables at large $\N$, labelled by directed graphs, was proposed in \cite{Kartsaklis2017}. Several expectation values of observables in the graph basis were computed in \cite{Ramgoolam2019a} using Wick's theorem. These results were generalized to two-matrix models in \cite{Barnes2022b} and a precise bijection between observables and directed graphs with no more than $\N$ vertices was proven. A combinatorial framework for counting and constructing observables using directed graphs and double cosets of permutation groups was explained in \cite{Barnes2022b}. A general algorithm for computing expectation values, as explicit functions of $\N$, of observables in the 2-matrix model was given in \cite{Barnes2022b}. The expected connection between permutation invariant matrix models and partition algebras through Schur-Weyl duality was mentioned in \cite{Kartsaklis2017}. This connection was explicitly used in \cite{Barnes:2021tjp} to construct observables from partition algebra elements. Various parameter limits of the permutation invariant Gaussian matrix models were explored as well, and points in the parameter space where the $\SN$ symmetry got enhanced to $O(\N)$ symmetry were discovered. Interesting factorisation results, for two-point functions of observables, were proven at the simplest of $O(\N)$ symmetric points in parameter space. The factorisation result intimately relied on the connection to partition algebras. We now give an introductory description of the model solved in \cite{Ramgoolam2019a} and elaborate on some of the above points. Along the way we will point out the new contributions in this thesis and point to the specific sections where further details are given.

The probability distribution of a permutation invariant Gaussian matrix model (PIGMM) is defined in terms of a quadratic polynomial function $V(X)$
\begin{equation}
	V(X) = \sum_{i,j=1}^{\N}J^{ij}X_{ij} + \sum_{i,j,k,l=1}^{\N}G^{ij; kl} X_{ij} X_{kl}, \label{eq: perm inv potential}
\end{equation}
where the parameters $J^{ij}, G^{ij;kl}$ are constrained by demanding
\begin{equation}
	V(\Paction{\sn}X\Paction{\sn}^T) = V(X) \quad \forall \sn \in \SN,
\end{equation}
or equivalently
\begin{equation}
	J^{(i)\sn (j)\sn} = J^{ij}, \quad G^{(i)\sn (j)\sn; (k)\sn (l)\sn} = G^{ij;kl} \quad \forall \sn \in \SN,
\end{equation}
and
\begin{equation}
	\Z = \int \dd{X} \e^{-V(X)} < \infty.
\end{equation}
Here $\dd{X} = \prod_{i,j=1}^\N \dd{X_{ij}}$. The probability to find the matrix $X$ in the interval $\dd{X}$ is given by
\begin{equation}
	p(X) = \frac{\dd{X} \e^{-V(X)}}{\Z}.
\end{equation}
In this form, the model is non-trivial to solve for large $\N$ because it involves inverting the $\N^2 \times \N^2$ matrix $G^{ij;kl}$.

Instead, the model was reformulated in \cite{Ramgoolam2019a} using a representation theoretic change of basis, from the matrix basis $X_{ij}$ to the irreducible basis
\begin{equation}
	X_{\lambda, \alpha, a} = \sum_{i,j=1}^\N C^{ij}_{\lambda \alpha a} X_{ij}.
\end{equation}
Here $C^{ij}_{\lambda \alpha a}$ are Clebsch-Gordan coefficients for the decomposition of $\VN \otimes \VN$ into irreducible representations of $\SN$. The indices $\alpha, \beta$ are multiplicity indices for this decomposition and the most general quadratic invariants were shown to be linear combinations of polynomials $q_{\lambda; \alpha \beta}(X)$ defined as
\begin{equation}
	q_{\lambda; \alpha, \beta}(X) = \sum_{a} X_{\lambda, \alpha, a} X_{\lambda, \beta, a}.
\end{equation}
As we will see, there are only $11$ independent polynomials of this form. This follows from studying the multiplicities in the decomposition of $\VN \otimes \VN$.
From Proposition \ref{cor: VNVN decomp} the multiplicity of any particular irreducible representations in $\VN \otimes \VN$ is less than four. This implies that solving the model in this formulation only requires the inversion of matrices of at most size three. As mentioned, we will elaborate on this in full detail in the next section.
By first computing the one-point and two-point functions in the representation basis, where they are simple, and then going back to the matrix basis, formulas for the expectation values $\expval{X_{ij}}$ and $\expval{X_{ij} X_{kl}}$ were computed. As we will see, the latter takes the form of linear combinations of the invariant tensors
\begin{equation}
	Q_{\lambda; \alpha\beta}^{ij;kl}=\sum_a C^{ij}_{\lambda \alpha a} C^{kl}_{\lambda \beta a}. \label{eq: pigmm intro tensors}
\end{equation}
A similar story is true for $\expval{X_{ij}}$, which has an expansion in terms of invariant tensors as well.
Since the model is quadratic, general expectation values could be exactly formulated using Wick's theorem. Therefore, an explicit solution to the model is determined by computing the above-mentioned invariant tensors. In \cite{Ramgoolam2019a} this was done through a careful analysis of the decomposition of $\VN$ as well as $\VN \otimes \VN$. The above formulation of PIGMMs is reviewed section \ref{subsec: distribution}.



The expectation value of a polynomial function $f:  \M_\N(\mathbb{R}) \rightarrow \mathbb{R}$ in a PIGMM is given by
\begin{equation}
	\expval{f} =  \frac{\int \dd{X} f(X) \e^{-V(X)}}{\Z}.
\end{equation}
For our purposes, we will restrict our attention to invariant functions $\obs(X)$ satisfying
\begin{equation}
	\obs(\Paction{\sn}X\Paction{\sn}^T) = \obs(X) \quad \forall \sn \in \SN,
\end{equation}
where $\Paction{\sn}$ is a permutation matrix.
As mentioned, a bijection between directed graphs with $k$ edges and a basis for the space of invariant functions of homogeneous degree $k$ was proposed  in \cite{Kartsaklis2017, Ramgoolam2019a}. This bijection was proven in \cite{Barnes2022b}. In fact, the proof also gives a bijection that works away from the stable limit ($\N \geq 2k$). We review this in section \ref{subsec: observables}. Section \ref{subsec: observables} also gives a new description of the graph basis in terms of equivalence classes of 1-row set partition diagrams. This description was implicitly used in the algorithm in \cite{Barnes2022b}, but the connection to partition algebras and 1-row diagrams was not known at the time. The construction of observables using partition algebras, given in  \cite{Barnes:2021tjp}, naturally leads to 2-row set partition diagrams. In this thesis we will see that the 1-row partition diagram description is useful for computing expectation values. However, the invariant tensors in \eqref{eq: pigmm intro tensors} are closely related to 2-row diagrams and we will need a prescription for translating between the two.

A complete set of quadratic and a selection of cubic and quartic expectation values were computed in \cite{Ramgoolam2019a} using the graph basis and Wick's theorem. Later \cite{Barnes2022b}, an algorithm for computing general expectation values was given in terms of manipulations of so-called F-graphs. In this thesis, and in particular section \ref{subsec: exp vals}, we will give a new algorithm for computing expectation values of observables. It is based on the connection between permutation invariant tensors and partition algebra elements on one hand, and partition algebra elements and 1-row partition diagrams on the other hand. In particular, we will see that the invariant tensors in \eqref{eq: pigmm intro tensors} correspond to matrix units of $P_2(\N)$. Matrix units of $P_1(\N)$ enter into the one-point function. This observation is new and has not been exploited in the previous literature on permutation invariant matrix models. However, the  observation is largely based on the insight from upcoming work \cite{PIGTM} that uses this to construct permutation invariant tensor models. While we claim that this algorithm is new, it can roughly speaking be understood, in the matrix model case, as fully expanding the F-tensors appearing in the algorithm of \cite{Barnes2022b} in terms of products of Kronecker deltas and interpreting these products as 1-row diagrams. We will explain how the 2-row diagrams appearing in the expansion of matrix units of $P_1(\N), P_2(\N)$ are translated into 1-row diagrams. This is crucial, since it allows us to formulate degree $k$ expectation values as formal linear combinations of 1-row diagrams $\pi_i$, using Wick's theorem. In turn, the expectation value of an invariant polynomial corresponding to a 1-row diagram $\pi$ can be computed by pairing $\pi$ with all the $\pi_i$. This is explained in detail in section \ref{subsec: exp vals} and is supplemented by Appendix \ref{apx: EV code} which contains a detailed description of computer code implementing the algorithm.

The combinatorial framework for counting and constructing observables, first presented in \cite{Barnes2022b} for the case of permutation invariant observables, is reviewed in section \ref{subsec: graph counting} and Appendix \ref{apx: double coset} with accompanying code for counting double cosets of permutation groups.

\section{Distributions: Potentials and block diagonalization} \label{subsec: distribution}
We will now show that the space of PIGMM's is parametrised by
\begin{equation}
	\M_2^{+}(\mathbb{R}) \times \M_3^{+}(\mathbb{R})  \times \mathbb{R}^+ \times \mathbb{R}^+, \label{eq: moduli space of PIGMM}
\end{equation}
where $\M_i^+(\mathbb{R})$ are real positive-definite $i$-by-$i$ matrices and $\mathbb{R}^+$ positive real numbers. This is achieved using the matrix units of $P_2(\N)$ constructed in the previous chapter. As we will see, they give a basis for the space of quadratic invariant functions of $X$ that block diagonalize the matrix $G^{ij; kl}$ in equation \ref{eq: perm inv potential}. Further, the construction gives explicit formulas for the expectation values
\begin{equation}
	\expval{X_{ij}}, \quad \expval{X_{ij} X_{kl}}.
\end{equation}
This is sufficient data to determine arbitrary expectation values $\expval{\obs}$ through Wick's theorem.

To understand the connection between permutation invariant matrix models and partition algebras note that
\begin{equation}
	(\Paction{\sn}X\Paction{\sn}^T)_{ij} = X_{(i)\sn^{-1}(j)\sn^{-1}},
\end{equation}
and therefore
\begin{equation}
	\Span(X_{ij} \, \vert \, i,j=1,\dots,\N) \cong \VN\otimes \VN,
\end{equation}
as a representation of $\SN$. From corollary \ref{cor: VNVN decomp}
\begin{equation}
	\VN \otimes \VN \cong 2 V_{[\N]} \oplus 3 V_{[\N{-}1,1]} \oplus V_{[\N{-}2,2]} \oplus V_{[\N{-}2,1,1]}.
\end{equation}
The l.h.s. has a basis $X_{ij}$ while the RHS has a basis $X_{\lambda, \alpha,a}$ with labels
\begin{align}
	&\lambda = \{[\N], [\N-1,1], [\N-2,2], [\N-2,1,1]\}, \\
	&\alpha = 1,\dots, m^\lambda_{2,\N}, \\
	&a = 1, \dots, \dim V_{\lambda}.
\end{align}
The indices $\alpha,\beta$ should be thought of as vacillating tableaux $\vactab$ of shape $\lambda$ and length $2$.
The two bases are related by Clebsch-Gordan coefficients
\begin{align}
	X_{\lambda, \alpha,a} &=  \sum_{i,j=1}^\N C_{\lambda \alpha a}^{ij} X_{ij}, \label{eq: rep basis}\\
	X_{ij} &= \sum_{\lambda, \alpha, a} C^{\lambda \alpha a}_{ij}	X_{\lambda, \alpha,a}, \label{eq: matrix basis}
\end{align}
satisfying
\begin{align}
	\sum_{\lambda, \alpha, a} C^{\lambda \alpha a}_{ij}  C_{\lambda \alpha a}^{kl} &= \delta_i^k \delta_{j}^l \\
	\sum_{i,j=1}^\N C^{\lambda \alpha a}_{ij}C_{\lambda' \beta b}^{ij}  &= \delta^{\lambda}_{\lambda'}\delta^\alpha_\beta \delta^a_b.
\end{align}
That is $C^{\lambda \alpha a}_{ij}$ is the inverse of $C_{\lambda \alpha a}^{ij}$ and vice versa.

We are free to choose an orthonormal basis satisfying
\begin{equation}
	\VNinner{X_{\lambda, \alpha,a}}{X_{\lambda', \beta,b}} = \delta_{\lambda \lambda'}\delta_{\alpha \beta}\delta_{ab}, \label{eq: rep basis ON}
\end{equation}
with respect to the inner product defined by
\begin{equation}
	\VNinner{X_{ij}}{X_{kl}} = \delta_{ik} \delta_{jl}. \label{eq: matrix basis ON}
\end{equation}
This implies
\begin{equation}
	C^{\lambda \alpha a}_{ij}  = C_{\lambda \alpha a}^{ij}.
\end{equation}

Because the Clebsch-Gordan coefficients are matrix elements of an equivariant map they satisfy
\begin{equation}
	\sum_{i,j=1}^\N C_{\lambda \alpha a}^{ij} (\Paction{\sn}X\Paction{\sn}^T)_{ij} = \sum_{b} \D^{(\lambda)}(\sn)^b_a X_{\lambda \alpha b},
\end{equation}
where $\D^{\lambda}(\sn)$ are real orthogonal irreducible representations of $\SN$. The above equation corresponds to a system of linear equations over real numbers. Consequently, the solutions can be chosen real. In other words, Clebsch-Gordan coefficients for $\SN$ can be chosen to be real.

The above results can be used to state a characterisation of the set of invariant potentials $V(X)$.
\begin{proposition}The space of permutation invariant Gaussian matrix models can be understood by the following two statements.
	
	\begin{enumerate}
	\item[(a)] 
	The linear parameters in $V(X)$ can be expanded in terms of Clebsch-Gordan coefficients
	\begin{equation}
		J^{ij} = \sum_{\alpha=1}^2 J^{\alpha} C_{[\N] \alpha}^{ij},
	\end{equation}
	\item[(b)] The quadratic parameters have the form
	\begin{equation}
		G^{ij;kl} = \sum_{\lambda,\alpha,\beta,a} G^{\lambda; \alpha \beta} C_{\lambda \alpha a}^{ij} C_{\lambda \beta a}^{kl},
	\end{equation}
	where $G^{\lambda; \alpha \beta}$ is symmetric in $\alpha, \beta$. Define
	\begin{equation}
		Q_{\lambda; \alpha\beta}^{ij;kl} = \sum_a C_{\lambda \alpha a}^{ij} C_{\lambda \beta a}^{kl}
	\end{equation}
	then
	\begin{equation}
		G^{ij;kl} = \sum_{\lambda, \alpha, \beta} G^{\lambda; \alpha \beta} Q_{\lambda; \alpha\beta}^{ij;kl}
	\end{equation}
	\end{enumerate}
\end{proposition}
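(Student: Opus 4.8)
The plan is to prove the two statements about the structure of $J^{ij}$ and $G^{ij;kl}$ as direct consequences of Schur's lemma applied to the equivariant decomposition of $\VN \otimes \VN$ established in Corollary \ref{cor: VNVN decomp} and Proposition \ref{cor: VNVN decomp}, using the reality of the $\SN$ Clebsch--Gordan coefficients noted just above the statement.

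\medskip

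For part (a): the tensor $J^{ij}$ is, by permutation invariance $J^{(i)\sn(j)\sn} = J^{ij}$, an element of the $\SN$-invariant subspace of $\VN \otimes \VN$, i.e.\ of $\Hom_{\SN}(V_{[\N]}, \VN \otimes \VN)$. By the decomposition $\VN \otimes \VN \cong 2V_{[\N]} \oplus 3V_{[\N-1,1]} \oplus V_{[\N-2,2]} \oplus V_{[\N-2,1,1]}$ and Schur's lemma, this invariant subspace is two-dimensional, spanned exactly by the two trivial-isotypic vectors. These vectors are precisely $C^{ij}_{[\N]\alpha}$ for $\alpha = 1,2$, since the Clebsch--Gordan coefficient $C^{ij}_{\lambda \alpha a}$ is by construction an intertwiner component picking out the $a$-th basis vector of the $\alpha$-th copy of $V_\lambda$, and for $\lambda = [\N]$ the representation is one-dimensional so the index $a$ is trivial. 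Hence $J^{ij} = \sum_{\alpha=1}^2 J^\alpha C^{ij}_{[\N]\alpha}$ for some real constants $J^\alpha$, obtained by contracting $J^{ij}$ against the (orthonormal, real) Clebsch--Gordan vectors.

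\medskip

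For part (b): the tensor $G^{ij;kl}$, by the invariance condition $G^{(i)\sn(j)\sn;(k)\sn(l)\sn} = G^{ij;kl}$, defines an element of $\End_{\SN}((\VN \otimes \VN))$ — equivalently an invariant vector in $(\VN\otimes\VN)^{\otimes 2}$. First I would write $G^{ij;kl}$ fully in the irreducible basis using \eqref{eq: matrix basis} on each pair of indices: $G^{ij;kl} = \sum G^{\lambda\alpha a;\,\lambda'\beta b} C^{ij}_{\lambda\alpha a} C^{kl}_{\lambda'\beta b}$ where $G^{\lambda\alpha a;\,\lambda'\beta b} = \sum_{ijkl} C^{ij}_{\lambda\alpha a}C^{kl}_{\lambda'\beta b}G^{ij;kl}$. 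Then I would invoke the $\SN$-equivariance of the Clebsch--Gordan coefficients, $\sum_{ij} C^{ij}_{\lambda\alpha a}(\P_\sn X \P_\sn^T)_{ij} = \sum_b D^\lambda(\sn)^b{}_a X_{\lambda\alpha b}$, to show that the coefficient tensor $G^{\lambda\alpha a;\,\lambda'\beta b}$ intertwines $D^\lambda$ and $D^{\lambda'}$ in the $a,b$ indices; by Schur's lemma it must vanish unless $\lambda = \lambda'$ and then be proportional to $\delta_{ab}$, i.e.\ $G^{\lambda\alpha a;\,\lambda'\beta b} = \delta^{\lambda\lambda'}\delta_{ab}G^{\lambda;\alpha\beta}$. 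Substituting back gives $G^{ij;kl} = \sum_{\lambda,\alpha,\beta,a} G^{\lambda;\alpha\beta} C^{ij}_{\lambda\alpha a}C^{kl}_{\lambda\beta a} = \sum_{\lambda,\alpha,\beta} G^{\lambda;\alpha\beta} Q^{ij;kl}_{\lambda;\alpha\beta}$. Symmetry of $G^{\lambda;\alpha\beta}$ in $\alpha,\beta$ follows from the symmetry $G^{ij;kl} = G^{kl;ij}$ that can be assumed without loss of generality (only the symmetric part contributes to the quadratic form $X_{ij}X_{kl}$), together with $Q^{ij;kl}_{\lambda;\alpha\beta} = Q^{kl;ij}_{\lambda;\beta\alpha}$.

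\medskip

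The main obstacle — really the only subtle point — is the careful bookkeeping in the Schur's lemma step of part (b): one must be precise that $G^{\lambda\alpha a;\,\lambda'\beta b}$ is an intertwiner in the $(a,b)$ pair while the multiplicity indices $\alpha,\beta$ are spectators, which requires using the equivariance relation on both index pairs and exploiting orthogonality/reality \eqref{eq: clebsch orthonormality} of the Clebsch--Gordan coefficients. Everything else is a routine expansion; no delicate estimate or induction is needed, and the dimension counts ($m^{[\N]}_{2,\N} = 2$ etc.) that make the number of independent parameters explicit are already supplied by Example \ref{ex: k=2 VNVN multi}.
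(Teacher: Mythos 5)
Your proposal is correct and follows essentially the same route as the paper: both parts rest on the decomposition $\VN\otimes\VN \cong 2V_{[\N]}\oplus 3V_{[\N-1,1]}\oplus V_{[\N-2,2]}\oplus V_{[\N-2,1,1]}$, the reality/orthonormality of the $\SN$ Clebsch--Gordan coefficients, and Schur's lemma, with the only cosmetic difference being that you phrase the key step of (b) as the coefficient tensor $G^{\lambda\alpha a;\,\lambda'\beta b}$ being an intertwiner in the $(a,b)$ indices, while the paper equivalently identifies the $\SN$-invariant vectors in $V_{\lambda'}\otimes V_{\lambda}$ with $\delta_{\lambda\lambda'}\sum_a E_{\lambda a}\otimes E_{\lambda a}$. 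Your remark on extracting the symmetry of $G^{\lambda;\alpha\beta}$ from $G^{ij;kl}=G^{kl;ij}$ is consistent with the paper, which builds it in by choosing the basis $\{\sum_a X_{\lambda\alpha a}X_{\lambda\beta a},\ \alpha\leq\beta\}$ of commuting variables.
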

\begin{proof}
	First we prove statement (a).
	The invariant linear polynomials in $X_{ij}$ are vectors in the trivial subrepresentation of $\VN \otimes \VN$. This is spanned by $X_{[\N], 1}, X_{[\N], 2}$ and therefore
	\begin{equation}
		\sum_{i,j=1}^\N J^{ij}X_{ij} = \sum_\alpha J^{\alpha} X_{[\N], \alpha} = \sum_{i,j=1, \alpha}^\N J^{\alpha} C_{[\N] \alpha}^{ij} X_{ij}.
	\end{equation}

	The proof of (b) goes as follows.
	The quadratic invariant polynomials in $X_{ij}$ correspond to invariant vectors $\Sym^2(\VN \otimes \VN)$, the symmetric subspace of $(\VN \otimes \VN)^{\otimes 2}$. Let $V_\lambda, V_{\lambda'}$ be two irreducible representations of $\SN$, then
	\begin{equation}
		\Hom(V_\lambda, V_{\lambda'}) = V_{\lambda'} \otimes V_{\lambda}^*,
	\end{equation}
	where $V_{\lambda}^*$ is the dual space (linear functions on $V_{\lambda}$). If $\SN$ acts on $V_\lambda$ through the matrix $\D^\lambda(\sn)$, the action on $V_\lambda^*$ in the dual basis is through the matrix $\D^\lambda(\sn^{-1})^T$. For orthogonal representations
	\begin{equation}
		\D^\lambda(\sn^{-1})^T = \D^{\lambda}(\sn),
	\end{equation}
	and therefore
	\begin{equation}
		V_{\lambda'} \otimes V_{\lambda}^* \cong V_{\lambda'} \otimes V_{\lambda}. \label{eq: Vdual is V}
	\end{equation}
	Secondly, the vector space of $\SN$-equivariant linear maps (homomorphisms)
	\begin{equation}
		\Hom_{\SN}(V_\lambda, V_{\lambda'}),
	\end{equation}
	is isomorphic to the subspace of invariants in \eqref{eq: Vdual is V}. We write this as
	\begin{equation}
		\Hom_{\SN}(V_\lambda, V_{\lambda'}) \cong (V_{\lambda'} \otimes V_{\lambda}^*)^{\SN} \cong (V_{\lambda'} \otimes V_{\lambda})^{\SN}.
	\end{equation}
	To show this, let $E_{\lambda a}$ be a basis for $V_\lambda$. In this basis, an element in $\Hom_{\SN}(V_\lambda, V_{\lambda'}) $ corresponds to a matrix $M$ satisfying
	\begin{equation}
		MD^{\lambda}(\sn) = D^{\lambda'}(\sn)M,
	\end{equation}
	or
	\begin{equation}
		D^{\lambda'}(\sn^{-1})MD^{\lambda}(\sn) = [D^{\lambda'}(\sn)]^T MD^{\lambda}(\sn) = M.
	\end{equation}
	If $M_{a}^b$ are the matrix elements of $M$, then the above equation implies that
	\begin{equation}
		\sum_{a,b} M_{a}^b E_{\lambda a} \otimes E_{\lambda' b}
	\end{equation}
	is an invariant vector. The converse statement holds as well.
	Schur's lemma says that for any $M \in \Hom_{\SN}(V_\lambda, V_{\lambda'})$, $M_{b}^a \propto \delta_{b}^a \delta_{\lambda \lambda'}$. That is,
	\begin{equation}
		(V_{\lambda} \otimes V_{\lambda})^{\SN} = \Span(\sum_{a} E_{\lambda a} \otimes E_{\lambda a}).
	\end{equation}

	We now apply this to
	\begin{align}
		\Hom_{\SN}(\VN \otimes \VN, \VN \otimes \VN) &= \End_{S_N}(\VN^{\otimes 2})\\
		 &\cong  (\VN \otimes \VN \otimes \VN \otimes \VN)^{\SN}.
	\end{align}
	Schur's lemma together with \eqref{eq: VNVN decomp} gives,
	\begin{equation}
		\End_{S_N}(\VN^{\otimes 2})\cong \bigoplus_{\lambda} \Hom_{\SN}(m^{\lambda}_{2,\N} V_{\lambda}, m^{\lambda}_{2,\N} V_{\lambda}),
	\end{equation}
	since there are no homomorphisms between non-isomorphic irreducible representations.
	This implies that
	\begin{equation}
		(\VN \otimes \VN \otimes \VN \otimes \VN)^{\SN} = \Span(\sum_a X_{\lambda \alpha a} \otimes X_{\lambda \beta a}).
	\end{equation}
	A basis of invariant quadratic polynomials is
	\begin{equation}
		\qty{\sum_a X_{\lambda \alpha a} X_{\lambda \beta a} \, \vert \, \alpha \leq \beta},
	\end{equation}
	since $X_{\lambda \alpha a}$ are commuting variables.
	Explicitly, any quadratic invariant polynomial has the form
	\begin{equation}
		\sum_{\lambda, \alpha, \beta} G^{\lambda; \alpha \beta} \sum_a X_{\lambda \alpha a}  X_{\lambda \beta a} = \sum_{\lambda, \alpha, \beta, a} \sum_{i,j,k,l=1}^\N G^{\lambda; \alpha \beta}  C_{\lambda \alpha a}^{ij} C_{\lambda \beta a}^{kl} X_{ij} X_{kl}.
	\end{equation}
	This concludes the proof.
\end{proof}

It immediately follows that
\begin{corollary}\label{cor: inv potential}
	The most general permutation invariant Gaussian matrix potential is
	\begin{equation}
		V(X) =  -\sum_\alpha J^{\alpha} X_{[\N], \alpha}  + \frac{1}{2}\sum_{\lambda, a}\sum_{\alpha \beta} G^{\lambda; \alpha \beta} X_{\lambda \alpha a} X_{\lambda \beta a}.
	\end{equation}
\end{corollary}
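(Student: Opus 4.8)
The plan is to read off Corollary~\ref{cor: inv potential} directly from the Proposition just proved, since both the structure of the linear parameters $J^{ij}$ and of the quadratic parameters $G^{ij;kl}$ have already been established there. What remains is purely to substitute those expansions into the general permutation invariant quadratic potential \eqref{eq: perm inv potential} and rewrite everything in the irreducible basis, using the change-of-basis formulas \eqref{eq: rep basis}--\eqref{eq: matrix basis} and the reality/orthonormality of the Clebsch--Gordan coefficients.

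First I would treat the linear term. By part (a) of the Proposition, $J^{ij} = \sum_{\alpha=1}^{2} J^{\alpha} C^{ij}_{[\N]\alpha}$, where the $a$-index is suppressed because $\dim V_{[\N]} = 1$. Then, using \eqref{eq: rep basis},
\begin{equation}
	\sum_{i,j=1}^{\N} J^{ij} X_{ij} = \sum_{\alpha=1}^{2} J^{\alpha} \sum_{i,j=1}^{\N} C^{ij}_{[\N]\alpha} X_{ij} = \sum_{\alpha=1}^{2} J^{\alpha} X_{[\N],\alpha}.
\end{equation}
For the quadratic term, part (b) gives $G^{ij;kl} = \sum_{\lambda,\alpha,\beta} G^{\lambda;\alpha\beta} Q^{ij;kl}_{\lambda;\alpha\beta}$ with $Q^{ij;kl}_{\lambda;\alpha\beta} = \sum_a C^{ij}_{\lambda\alpha a} C^{kl}_{\lambda\beta a}$, so that again by \eqref{eq: rep basis}
\begin{equation}
	\sum_{i,j,k,l=1}^{\N} G^{ij;kl} X_{ij} X_{kl} = \sum_{\lambda,\alpha,\beta,a} G^{\lambda;\alpha\beta} \Big( \sum_{i,j} C^{ij}_{\lambda\alpha a} X_{ij} \Big) \Big( \sum_{k,l} C^{kl}_{\lambda\beta a} X_{kl} \Big) = \sum_{\lambda,a} \sum_{\alpha,\beta} G^{\lambda;\alpha\beta} X_{\lambda\alpha a} X_{\lambda\beta a}.
\end{equation}
Adding the two contributions produces $V(X)$ in the claimed form.

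The overall minus sign on the linear term and the factor $\tfrac12$ on the quadratic term in the statement are purely conventional: they are absorbed into the harmless redefinitions $J^{\alpha} \mapsto -J^{\alpha}$ and $G^{\lambda;\alpha\beta} \mapsto \tfrac12 G^{\lambda;\alpha\beta}$, since these are free parameters. I do not expect any genuine obstacle here; the only bookkeeping point is that the ranges of $\alpha,\beta$ (vacillating tableaux of shape $\lambda$ and length $2$, equivalently $\alpha = 1,\dots,m^{\lambda}_{2,\N}$) and of $a$ ($a = 1,\dots,\dim V_\lambda$) depend on $\lambda$, and that the symmetry $G^{\lambda;\alpha\beta} = G^{\lambda;\beta\alpha}$ from the Proposition makes the symmetric double sum used here equivalent to a sum over $\alpha \le \beta$ only. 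Since $m^{\lambda}_{2,\N} = 2,3,1,1$ for $\lambda = [\N],[\N{-}1,1],[\N{-}2,2],[\N{-}2,1,1]$, the number of independent quadratic invariants $\sum_a X_{\lambda\alpha a} X_{\lambda\beta a}$ is $\binom{3}{2} + \binom{4}{2} + 1 + 1 = 11$, consistent with the surrounding discussion, and the $J^{\alpha}$ contribute $2$ linear parameters; together with the two positivity/normalization scalars this reproduces the parametrisation \eqref{eq: moduli space of PIGMM}.
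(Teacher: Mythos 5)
Your proposal is correct and follows essentially the same route as the paper, which presents the corollary as an immediate consequence of the preceding Proposition: substituting the expansions of $J^{ij}$ and $G^{ij;kl}$ in Clebsch--Gordan coefficients and passing to the variables $X_{\lambda,\alpha,a}$. Your added remarks on the sign and the factor $\tfrac12$ being parameter redefinitions, and the count of $11$ independent quadratic invariants, are consistent with the surrounding discussion in the paper.
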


From now on we will use summation convention for paired upper and lower indices, unless stated otherwise.
Having introduced the variables $X_{\lambda \alpha a}$, it will be useful to define the partition function in terms of them. In fact they give rise to an almost fully decoupled partition function. We will now prove that
\begin{proposition}\label{thm: PIGMM}
	The permutation invariant Gaussian matrix models have partition function
	\begin{equation}
		\mathcal{Z} = \int \dd{X} \e^{J^{\alpha} X_{[\N], \alpha}  - \frac{1}{2}\sum_{\lambda, a}G^{\lambda; \alpha \beta} X_{\lambda \alpha a} X_{\lambda \beta a}},
	\end{equation}
	where $\dd{X} = \prod_{\lambda,\alpha, a} \dd{X_{\lambda \alpha a}} = \prod_{i,j=1} \dd{X_{ij}}$ .
\end{proposition}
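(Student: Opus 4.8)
The plan is to reduce the claim to the explicit form of the potential recorded in Corollary~\ref{cor: inv potential} together with a Jacobian computation for the linear change of variables $X_{ij} \leftrightarrow X_{\lambda \alpha a}$.

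First I would substitute the most general invariant Gaussian potential
\[
	V(X) = -J^\alpha X_{[\N], \alpha} + \tfrac{1}{2}\sum_{\lambda, a} G^{\lambda; \alpha \beta}\, X_{\lambda \alpha a} X_{\lambda \beta a}
\]
from Corollary~\ref{cor: inv potential} into $\Z = \int \dd{X}\, \e^{-V(X)}$. This is purely algebraic and produces the exponent appearing in the statement. Equivalently, one may substitute the expansions of $J^{ij}$ and $G^{ij;kl}$ in Clebsch--Gordan coefficients from the preceding proposition directly into \eqref{eq: perm inv potential} and use \eqref{eq: rep basis} to recognise $J^{ij}X_{ij} = J^\alpha X_{[\N],\alpha}$ and $G^{ij;kl}X_{ij}X_{kl} = \sum_{\lambda,a} G^{\lambda;\alpha\beta}X_{\lambda\alpha a}X_{\lambda\beta a}$.

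The substantive point is the measure identity $\prod_{i,j=1}^\N \dd{X_{ij}} = \prod_{\lambda,\alpha,a} \dd{X_{\lambda\alpha a}}$. The relations \eqref{eq: rep basis}--\eqref{eq: matrix basis} exhibit $X_{\lambda \alpha a}$ and $X_{ij}$ as images of one another under the linear map with matrix $C^{ij}_{\lambda\alpha a}$, an $\N^2\times\N^2$ matrix after flattening the index pairs $(i,j)$ and $(\lambda,\alpha,a)$. Since the bases were chosen to be orthonormal, \eqref{eq: rep basis ON} with respect to \eqref{eq: matrix basis ON}, we have $C^{\lambda\alpha a}_{ij} = C^{ij}_{\lambda\alpha a}$, and the two orthonormality relations displayed just below \eqref{eq: matrix basis} say precisely that this matrix is orthogonal. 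Hence its determinant is $\pm 1$, the Jacobian of the change of variables has absolute value $1$, and the Lebesgue measures agree, which is the asserted equality $\dd{X} = \prod_{\lambda,\alpha,a}\dd{X_{\lambda\alpha a}} = \prod_{i,j}\dd{X_{ij}}$.

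Finally, finiteness of $\Z$ requires the quadratic form $\sum_{\lambda,a} G^{\lambda;\alpha\beta}X_{\lambda\alpha a}X_{\lambda\beta a}$ to be positive-definite; by Schur's lemma it decomposes block-diagonally over $\lambda \in \{[\N],[\N{-}1,1],[\N{-}2,2],[\N{-}2,1,1]\}$, and positive-definiteness is equivalent to $(G^{[\N];\alpha\beta})$ and $(G^{[\N{-}1,1];\alpha\beta})$ being positive-definite symmetric matrices and $G^{[\N{-}2,2]}, G^{[\N{-}2,1,1]}$ being positive reals, which is the content of \eqref{eq: moduli space of PIGMM}. Under these conditions the integral factorises into ordinary one-dimensional Gaussian integrals and is manifestly convergent. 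The main obstacle, such as it is, is the clean justification that $C^{ij}_{\lambda\alpha a}$ is an orthogonal matrix so that the measure is literally preserved rather than merely rescaled; this rests entirely on the orthonormality of the chosen bases, and everything else is bookkeeping.
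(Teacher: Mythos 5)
Your proposal is correct and follows essentially the same route as the paper: the exponent is read off from Corollary~\ref{cor: inv potential}, and the measure identity is established by noting that the Jacobian of the change of variables is the Clebsch--Gordan matrix, which is orthogonal by the orthonormality relation \eqref{eq: CGC norm}, so its determinant has absolute value one. Your closing remark on positive-definiteness and convergence is not part of the paper's proof of this proposition (the paper defers it to a comment at the end of the section), but it is consistent with what is said there.
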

\begin{proof}
	The form of the polynomial in the exponential follows from corollary \ref{cor: inv potential}. It remains to show that
	\begin{equation}
		\prod_{\lambda,\alpha, a} \dd{X_{\lambda \alpha a}} = \prod_{i,j=1} \dd{X_{ij}}.
	\end{equation}
	This is a consequence of both $X_{\lambda \alpha a}$ and $X_{ij}$ forming orthonormal bases and therefore the Jacobian matrix
	\begin{equation}
		\pdv{X_{ij}}{X_{\lambda \alpha a}} = C^{\lambda \alpha a}_{ij},
	\end{equation}
	is an orthogonal matrix. From equation \eqref{eq: matrix basis ON}
	\begin{equation}
		\sum_{i,j}  C^{\lambda \alpha a}_{ij} C^{\lambda' \beta b}_{ij} = \delta_{\lambda \lambda'} \delta_{\alpha \beta}\delta_{ab}, \label{eq: CGC norm}
	\end{equation}
	which should be read as the matrix equation $CC^T = 1$. Consequently $\abs{\det(C)} = 1$ and
	\begin{equation}
		\prod_{i,j=1} \dd{X_{ij}} = \abs{\det(C)} 	\prod_{\lambda,\alpha, a} \dd{X_{\lambda \alpha a}}   = 	\prod_{\lambda,\alpha, a} \dd{X_{\lambda \alpha a}} .
	\end{equation}
\end{proof}

We now have all the ingredients necessary to construct the generating function of expectation values
\begin{definition}
The generating function is given by
\begin{equation}
	\Z[J] = \int \dd{X}  \e^{J^{\lambda \alpha a} X_{\lambda \alpha a}  - \frac{1}{2}\sum_{\lambda, a}G^{\lambda; \alpha \beta} X_{\lambda \alpha a} X_{\lambda \beta a}}.
\end{equation}
\end{definition}
Standard results on multi-variate Gaussian integration gives
\begin{equation}
	\Z[J]= \sqrt{ \frac{(2\pi)^{N^2}}{\prod_{\lambda} \det(G^{\lambda})} }\e^{\frac{1}{2}\sum_{\lambda, \alpha,\beta, a} J^{\lambda \alpha a} (G^{-1})_{\lambda; \alpha \beta} J^{\lambda \beta a}}. \label{eq: PIGMM ZJ}
\end{equation}
\begin{corollary}
We compute the average as
\begin{equation}
	\expval{X_{\lambda, \alpha a}} = \frac{1}{\Z}\left. \pdv{}{J^{\lambda \alpha a}}\Z[J] \right \vert_{J^\alpha \neq 0} = \delta_{\lambda [\N]} (G^{-1})_{[\N]; \alpha \beta}J^\beta \label{eq: 1pt}
\end{equation}
where the subscript $J^\alpha \neq 0$ is a reminder to set all $J^{\lambda \alpha a}$ except $J^\alpha = J^{[\N] \alpha}$ to zero.
\end{corollary}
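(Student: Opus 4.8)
The plan is to read the statement off directly from the closed form \eqref{eq: PIGMM ZJ} for $\Z[J]$ by differentiation, with no further integration needed. First I would note that the prefactor $\sqrt{(2\pi)^{\N^2}/\prod_\lambda \det(G^\lambda)}$ in \eqref{eq: PIGMM ZJ} is independent of the sources, so $\pdv{}{J^{\lambda \alpha a}}$ acts only on the exponent $\tfrac12\sum_{\lambda',\alpha',\beta',a'} J^{\lambda'\alpha'a'}(G^{-1})_{\lambda';\alpha'\beta'}J^{\lambda'\beta'a'}$. Because $G^{\lambda;\alpha\beta}$, and hence $(G^{-1})_{\lambda;\alpha\beta}$, is symmetric in $\alpha\leftrightarrow\beta$, differentiating this quadratic form gives $\sum_\beta (G^{-1})_{\lambda;\alpha\beta}J^{\lambda\beta a}$, so that $\pdv{}{J^{\lambda \alpha a}}\Z[J] = \Z[J]\,\sum_\beta (G^{-1})_{\lambda;\alpha\beta}J^{\lambda\beta a}$.

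Next I would restrict to the locus on which all sources vanish except $J^\alpha := J^{[\N]\alpha}$. There the exponential factor $\Z[J]$ reduces to the partition function $\Z$ of Proposition \ref{thm: PIGMM} --- this uses that the trivial representation $V_{[\N]}$ is one-dimensional, so the index $a$ carried by $X_{[\N],\alpha,a}$ is trivial and $J^{[\N]\alpha a}=J^\alpha$ --- while the remaining factor $\sum_\beta (G^{-1})_{\lambda;\alpha\beta}J^{\lambda\beta a}$ collapses: any term with $\lambda\neq[\N]$ carries a vanishing source, leaving $\delta_{\lambda[\N]}\sum_\beta(G^{-1})_{[\N];\alpha\beta}J^\beta$. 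Dividing by $\Z$ then yields exactly $\expval{X_{\lambda,\alpha a}} = \delta_{\lambda[\N]}(G^{-1})_{[\N];\alpha\beta}J^\beta$.

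The only point that deserves a word of care is the identity $\expval{X_{\lambda,\alpha a}} = \tfrac1\Z\,\pdv{}{J^{\lambda \alpha a}}\Z[J]\big|_{J^\alpha\neq0}$ used to set up the computation, i.e. the interchange of $\pdv{}{J^{\lambda \alpha a}}$ with $\int\dd{X}$. This is legitimate because each $G^\lambda$ is positive-definite --- part of the defining data in \eqref{eq: moduli space of PIGMM} --- so the integrand and all its $J$-derivatives are dominated by a fixed Gaussian on a neighbourhood of the locus $J^\alpha\neq0$, giving the uniform integrability required to differentiate under the integral sign. Beyond that, the argument is pure index bookkeeping (keeping the vacillating-tableau labels $\alpha,\beta$ straight and remembering the implicit sum over $\beta$), and I do not anticipate a genuine obstacle.
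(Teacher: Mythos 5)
Your proof is correct and follows exactly the route the paper intends: the corollary is read off by differentiating the closed form \eqref{eq: PIGMM ZJ} of the generating function, using the symmetry of $(G^{-1})_{\lambda;\alpha\beta}$, and then setting all sources to zero except $J^{[\N]\alpha}$ so that only the $\lambda=[\N]$ term survives and $\Z[J]$ reduces to $\Z$. The extra remark on differentiating under the integral sign (via positive-definiteness of the $G^\lambda$) is a harmless refinement the paper leaves implicit.
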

\begin{corollary}
To compute the second moment we need
\begin{align}
		&\expval{X_{\lambda \alpha a} X_{\lambda' \beta b}} = \frac{1}{\Z}\left. \pdv{}{J^{\lambda \alpha a}}{J^{\lambda' \beta b}}\Z[J] \right \vert_{J^\alpha \neq 0} \\
		&=  \qty(\delta_{\lambda [\N]} (G^{-1})_{[\N]; \alpha \gamma}J^\gamma) \qty(\delta_{\lambda' [\N]} (G^{-1})_{[\N]; \beta \delta}J^\delta) + \delta_{\lambda \lambda'} (G^{-1})_{\lambda; \alpha \beta}\delta_{ab},
\end{align}
note that the first term is simply
\begin{equation}
	\expval{X_{\lambda \alpha a}}\expval{X_{\lambda' \beta b}}.
\end{equation}
\end{corollary}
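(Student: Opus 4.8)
The plan is to differentiate the closed form \eqref{eq: PIGMM ZJ} of the generating function twice in the sources and then restrict to $J^{\lambda\alpha a}=\delta_{\lambda[\N]}J^\alpha$. The only inputs are: the explicit expression \eqref{eq: PIGMM ZJ} for $\Z[J]$, the symmetry of $G^{\lambda;\alpha\beta}$ (hence of $(G^{-1})_{\lambda;\alpha\beta}$) in $\alpha,\beta$, and the fact that the quadratic form in the exponent of \eqref{eq: PIGMM ZJ} is block-diagonal in $\lambda$ and diagonal in the $V_\lambda$-index $a$. First I would record the single derivative
\begin{equation}
	\pdv{}{J^{\lambda\alpha a}}\Z[J] = \Z[J]\,(G^{-1})_{\lambda;\alpha\gamma}\,J^{\lambda\gamma a},
\end{equation}
where the factor $\tfrac12$ in the exponent is absorbed because the derivative can hit either of the two $J$'s in each term of the quadratic form and $(G^{-1})$ is symmetric.

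Next I would apply $\pdv{}{J^{\lambda'\beta b}}$ by the product rule. One contribution has the second derivative act again on $\Z[J]$, producing an extra factor $(G^{-1})_{\lambda';\beta\delta}J^{\lambda'\delta b}$; the other has it act on the explicit factor $J^{\lambda\gamma a}$, giving $(G^{-1})_{\lambda;\alpha\gamma}\,\pdv{J^{\lambda\gamma a}}{J^{\lambda'\beta b}} = \delta_{\lambda\lambda'}\,\delta_{ab}\,(G^{-1})_{\lambda;\alpha\beta}$, the Kronecker deltas coming precisely from the block/diagonal structure. Dividing by $\Z$ and using that $\Z[J]$ collapses to $\Z$ once the sources are restricted (so the overall ratio is $1$), one obtains
\begin{equation}
	\expval{X_{\lambda\alpha a}X_{\lambda'\beta b}} = \bigl((G^{-1})_{\lambda;\alpha\gamma}J^{\lambda\gamma a}\bigr)\bigl((G^{-1})_{\lambda';\beta\delta}J^{\lambda'\delta b}\bigr)\Big\vert_{\text{restricted}} + \delta_{\lambda\lambda'}\,(G^{-1})_{\lambda;\alpha\beta}\,\delta_{ab}.
\end{equation}
Restricting turns $(G^{-1})_{\lambda;\alpha\gamma}J^{\lambda\gamma a}$ into $\delta_{\lambda[\N]}(G^{-1})_{[\N];\alpha\gamma}J^\gamma$, which is exactly $\expval{X_{\lambda\alpha a}}$ by \eqref{eq: 1pt}; this establishes the stated formula and the remark that the first term factorises as $\expval{X_{\lambda\alpha a}}\expval{X_{\lambda'\beta b}}$.

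There is no analytic obstacle here — finiteness of the partition function and the closed form \eqref{eq: PIGMM ZJ} are already in hand — so the main thing to watch is purely the index bookkeeping: making sure the factor of $2$ from the two ways of differentiating the quadratic form cancels the $\tfrac12$, and that the diagonal-in-$a$, block-in-$\lambda$ structure of $(G^{-1})$ correctly yields $\delta_{\lambda\lambda'}\delta_{ab}$ in the connected term. As an alternative I would mention the shift $X_{\lambda\alpha a}\mapsto X_{\lambda\alpha a}+\expval{X_{\lambda\alpha a}}$, which centres the Gaussian and makes the split into a disconnected piece $\expval{X_{\lambda\alpha a}}\expval{X_{\lambda'\beta b}}$ plus the centred two-point function $\delta_{\lambda\lambda'}(G^{-1})_{\lambda;\alpha\beta}\delta_{ab}$ manifest, giving the same answer without explicit double differentiation.
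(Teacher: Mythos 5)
Your proposal is correct and follows exactly the route the paper takes: the corollary is obtained by differentiating the closed form of $\Z[J]$ in \eqref{eq: PIGMM ZJ} twice, using the symmetry and block/diagonal structure of $(G^{-1})_{\lambda;\alpha\beta}$, and then restricting the sources so that the product term is identified with $\expval{X_{\lambda\alpha a}}\expval{X_{\lambda'\beta b}}$ via \eqref{eq: 1pt}. Your index bookkeeping (the factor of $2$ cancelling the $\tfrac12$, and the $\delta_{\lambda\lambda'}\delta_{ab}$ from differentiating the explicit source) is sound, so nothing is missing.
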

It is useful to define
\begin{equation}
	\expval{X_{\lambda \alpha a} X_{\lambda' \beta b}}_c = \expval{X_{\lambda \alpha a} X_{\lambda' \beta b}} - \expval{X_{\lambda \alpha a }}\expval{X_{\lambda' \beta b}} = \delta_{\lambda \lambda'} (G^{-1})_{\lambda; \alpha \beta}\delta_{ab}. \label{eq: con 2pt}
\end{equation}

This sets us up for the main result in this section.
\begin{proposition} \label{prop: 1mom 2mom}
	The average in the matrix basis is given by
	\begin{equation}
		\expval{X_{ij}} = C^{[\N] \alpha}_{ij} (G^{-1})_{[\N]; \alpha \beta}J^\beta
	\end{equation}
	and
	\begin{equation}
		\expval{X_{ij}X_{kl}}_c =  (G^{-1})_{\lambda; \alpha \beta} Q_{ij;kl}^{\lambda; \alpha \beta},
	\end{equation}
	where
	\begin{equation}
		Q_{ij:kl}^{\lambda; \alpha \beta} = \sum_a C^{\lambda \alpha a}_{ij} C^{\lambda \beta a}_{kl} = \sum_a C_{\lambda \alpha a}^{ij} C_{\lambda \beta a}^{kl} = Q^{ij;kl}_{\lambda; \alpha \beta}.
	\end{equation}
\end{proposition}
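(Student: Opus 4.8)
The plan is to obtain both moments in the matrix basis by expanding $X_{ij}$ in the representation basis via the Clebsch-Gordan coefficients and then applying the results already derived in the representation basis, namely equations \eqref{eq: 1pt} and \eqref{eq: con 2pt}. This is a direct change-of-basis computation, so the main work is bookkeeping the index contractions carefully.

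First I would compute $\expval{X_{ij}}$. Using the expansion \eqref{eq: matrix basis}, $X_{ij} = C^{\lambda \alpha a}_{ij} X_{\lambda \alpha a}$, linearity of the expectation value gives $\expval{X_{ij}} = C^{\lambda \alpha a}_{ij} \expval{X_{\lambda \alpha a}}$. Substituting the one-point function \eqref{eq: 1pt}, which reads $\expval{X_{\lambda \alpha a}} = \delta_{\lambda [\N]} (G^{-1})_{[\N]; \alpha \beta} J^\beta$, collapses the $\lambda$-sum to $\lambda = [\N]$ and the $a$-index disappears (the one-point function is independent of $a$, and there is no $a$ left to sum since the Kronecker delta structure forces it to be absent). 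This yields $\expval{X_{ij}} = C^{[\N] \alpha}_{ij} (G^{-1})_{[\N]; \alpha \beta} J^\beta$, as claimed; here I would note that $C^{[\N]\alpha a}_{ij}$ has a trivial $a$-index since $\dim V_{[\N]} = 1$, so I can write it as $C^{[\N]\alpha}_{ij}$.

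Next I would compute $\expval{X_{ij} X_{kl}}_c$. Again expanding both factors, $\expval{X_{ij} X_{kl}}_c = C^{\lambda \alpha a}_{ij} C^{\lambda' \beta b}_{kl} \expval{X_{\lambda \alpha a} X_{\lambda' \beta b}}_c$. Inserting the connected two-point function \eqref{eq: con 2pt}, $\expval{X_{\lambda \alpha a} X_{\lambda' \beta b}}_c = \delta_{\lambda \lambda'} (G^{-1})_{\lambda; \alpha \beta} \delta_{ab}$, the $\delta_{\lambda\lambda'}$ identifies the two representation labels and $\delta_{ab}$ identifies the two irreducible-representation indices, leaving $\expval{X_{ij} X_{kl}}_c = \sum_{\lambda, \alpha, \beta} (G^{-1})_{\lambda; \alpha \beta} \sum_a C^{\lambda \alpha a}_{ij} C^{\lambda \beta a}_{kl}$. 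The inner sum over $a$ is precisely the definition of $Q^{\lambda; \alpha\beta}_{ij;kl}$, so this gives the stated formula. The equality $Q^{\lambda;\alpha\beta}_{ij;kl} = Q_{\lambda;\alpha\beta}^{ij;kl}$ follows from $C^{\lambda\alpha a}_{ij} = C_{\lambda\alpha a}^{ij}$, which was established earlier from the orthonormality of the bases (equations \eqref{eq: rep basis ON}, \eqref{eq: matrix basis ON}).

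There is no serious obstacle here — the proposition is essentially the statement that the change of basis commutes with taking expectation values, together with the observation that the $Q$-tensors are exactly the partial contractions of Clebsch-Gordan coefficients that survive. The only point requiring a word of care is the handling of the multiplicity index $a$ versus the (trivial) irreducible-representation index on $V_{[\N]}$ in the one-point function, and making sure the summation conventions for paired upper/lower indices are applied consistently; I would state explicitly that summation over repeated $\lambda, \alpha, \beta, a$ indices is implied throughout.
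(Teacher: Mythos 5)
Your proposal is correct and follows essentially the same route as the paper: expand $X_{ij}$ in the representation basis via \eqref{eq: matrix basis}, insert \eqref{eq: 1pt} and \eqref{eq: con 2pt}, and identify the surviving Clebsch-Gordan contraction with $Q^{\lambda;\alpha\beta}_{ij;kl}$, with the symmetry of the indices coming from the reality/orthonormality of the Clebsch-Gordan coefficients. No gaps.
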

\begin{proof}
	From equation \ref{eq: matrix basis} and \ref{eq: 1pt} we have
	\begin{equation}
		\begin{aligned}
		\expval{X_{ij}} &= C^{\lambda \alpha a}_{ij} \expval{	X_{\lambda, \alpha,a}} = \sum_{\lambda}C^{\lambda \alpha a}_{ij} \delta_{\lambda [\N]} (G^{-1})_{\lambda; \alpha \beta} J^\beta\\
		& = C^{[\N] \alpha}_{ij} (G^{-1})_{[\N]; \alpha \beta} J^\beta,	
		\end{aligned} \label{eq: matrix basis 1pt}
	\end{equation}
	and from \eqref{eq: con 2pt}
	\begin{equation}
		\begin{aligned}
			\expval{X_{ij}X_{kl}}_c& = C^{\lambda \alpha a}_{ij} C^{\lambda' \beta b}_{kl} \expval{X_{\lambda, \alpha,a}   X_{\lambda', \beta,b}   }_c \\
			&= C^{\lambda \alpha a}_{ij} C^{\lambda' \beta b}_{kl} \delta_{\lambda \lambda'} (G^{-1})_{\lambda; \alpha \beta} \delta_{ab} = (G^{-1})_{\lambda; \alpha \beta}Q^{\lambda; \alpha \beta}_{ij;kl}.
		\end{aligned} \label{eq: matrix basis con 2pt}
	\end{equation}
\end{proof}
\begin{remark}
	The Clebsch-Gordan coefficients $C_{ij}^{[\N] \alpha}$ correspond to elements of $\End_{\SN}(\VN)$. In particular, they are the matrix units for $P_1(\N)$ found in example \ref{ex: P1N units}.
	The invariant tensors $Q^{\lambda; \alpha\beta}_{ij;kl}$ correspond to elements of $\End_{\SN}(\VN^{\otimes 2})$. In fact, they correspond to the matrix units constructed in section \ref{sec: construction of units}. These can be found in Appendix \ref{apx: P2N units}.
\end{remark}

Before concluding this section, we comment on equation \ref{eq: moduli space of PIGMM}, the claimed parameter space of PIGMM's. The convergence of the partition function in Theorem \ref{thm: PIGMM} is guaranteed for positive definite matrices $G^{\lambda}$ and the generating function \eqref{eq: PIGMM ZJ} exists since positive definite matrices are invertible.
\section{Observables: Directed graphs and set partitions}\label{subsec: observables}
In this section we will give two bases of observables. The first one is based on 1-row partition diagrams and is particularly useful for computing expectation values. The second basis uses directed graphs and is more amenable to combinatorial counting and construction methods.

To understand the space of observables we must first understand the space of matrix polynomials. The following proposition gives a description in terms of representations of $\SN$.
\begin{proposition}
	The vector space of polynomials in $X_{ij}$ is isomorphic to
	\begin{equation}
		\bigoplus_{k=0}^\infty \Sym^{k}(\VN^{\otimes 2}).
	\end{equation}
\end{proposition}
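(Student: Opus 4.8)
The plan is to set up the obvious correspondence between matrix polynomials and symmetric tensor powers of $\VN \otimes \VN$, and check it is an isomorphism of vector spaces (and in fact of $\SN$-representations). First I would recall that the matrix entries $X_{ij}$ for $i,j = 1,\dots,\N$ are commuting indeterminates, so the polynomial ring $\mathbb{R}[X_{ij}]$ is a free commutative algebra on $\N^2$ generators. The degree-$k$ homogeneous component is spanned by monomials $X_{i_1 j_1} X_{i_2 j_2} \cdots X_{i_k j_k}$, and since the $X_{ij}$ commute, such a monomial depends only on the multiset of pairs $\{(i_1,j_1),\dots,(i_k,j_k)\}$. This is precisely the defining property of the symmetric power: $\Sym^k(W)$ for a vector space $W$ has a basis given by unordered $k$-tuples (multisets of size $k$) drawn from a basis of $W$. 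Taking $W = \VN \otimes \VN = \Span(e_i \otimes e_j \mid i,j = 1,\dots,\N)$, the assignment $e_i \otimes e_j \leftrightarrow X_{ij}$ extends to a linear isomorphism of the degree-$k$ polynomial component with $\Sym^k(\VN^{\otimes 2})$, sending the symmetric product $(e_{i_1}\otimes e_{j_1})\cdots(e_{i_k}\otimes e_{j_k})$ to the monomial $X_{i_1 j_1}\cdots X_{i_k j_k}$.

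Next I would take the direct sum over all $k \geq 0$ (with $\Sym^0 = \mathbb{R}$ corresponding to the constants) to get the stated isomorphism of vector spaces
\begin{equation}
	\mathbb{R}[X_{ij} \mid i,j=1,\dots,\N] \;\cong\; \bigoplus_{k=0}^\infty \Sym^k(\VN^{\otimes 2}).
\end{equation}
To make this cleanly, I would note that the polynomial ring is the graded vector space $\bigoplus_{k\geq 0} \mathbb{R}[X_{ij}]_k$ where $\mathbb{R}[X_{ij}]_k$ is the span of degree-$k$ monomials, so the statement reduces to the degree-by-degree claim established above. If one wants the stronger statement that this is an isomorphism of $\SN$-representations — which is what is actually needed in the surrounding discussion, since $\SN$ acts on polynomials by $X_{ij} \mapsto X_{(i)\sn^{-1}(j)\sn^{-1}}$ (from $(\Paction{\sn}X\Paction{\sn}^T)_{ij} = X_{(i)\sn^{-1}(j)\sn^{-1}}$) and on $\VN\otimes\VN$ diagonally via $\Paction{\sn}$ — I would observe that the bijection $e_i \otimes e_j \leftrightarrow X_{ij}$ is $\SN$-equivariant by construction, matching $\Paction{\sn}(e_i\otimes e_j) = e_{(i)\sn^{-1}}\otimes e_{(j)\sn^{-1}}$ with the action on $X_{ij}$, and the symmetric power functor carries equivariant maps to equivariant maps.

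The main (very mild) obstacle is purely bookkeeping: being careful that the multiset structure of commuting monomials is genuinely the same combinatorial datum as a basis element of $\Sym^k$ of a space of dimension $\N^2$, i.e. that $\dim \mathbb{R}[X_{ij}]_k = \binom{\N^2 + k - 1}{k} = \dim \Sym^k(\VN^{\otimes 2})$, and that the identification respects the grading. There is no hard analytic or representation-theoretic content here; the proof is essentially the universal property of the symmetric algebra applied to $W = \VN^{\otimes 2}$, together with the remark that $\mathbb{R}[X_{ij}]$ is itself a symmetric algebra on the dual of that space. I would phrase the write-up around that universal property to keep it short and rigorous.
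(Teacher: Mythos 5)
Your proposal is correct and follows essentially the same route as the paper: grade the polynomial ring by degree and identify the degree-$k$ monomials (which depend only on a multiset of index pairs, by commutativity) with a basis of $\Sym^k(\VN^{\otimes 2})$, the paper just writing the identification concretely as the symmetrization map $X_{i_1 i_{1'}}\cdots X_{i_k i_{k'}} \mapsto \frac{1}{k!}\sum_{\tau\in S_k}(e_{i_{(1)\tau}}\otimes e_{i_{(1)\tau'}})\otimes\cdots\otimes(e_{i_{(k)\tau}}\otimes e_{i_{(k)\tau'}})$ into $(\VN^{\otimes 2})^{\otimes k}$. Your added remarks on $\SN$-equivariance and the dimension count $\binom{\N^2+k-1}{k}$ are consistent with, though not required by, the paper's argument.
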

\begin{proof}
A general matrix polynomial has the form
\begin{equation}
	\sum_{k=0}^\infty a_{(k)}^{i_1 i_{1'}, \dots, i_{k} i_{k'}} \X_{i_1 i_{1'}} \dots \X_{i_{k} j_{k'}},
\end{equation}
with a finite number of non-zero coefficients $a_{(k)}$. Therefore, the space of matrix polynomials is a graded vector space, with grading given by the degree. For fixed degree $k$ the map
\begin{equation}
	\X_{i_1 i_{1'}} \dots \X_{i_{k} j_{k'}} \mapsto \frac{1}{k!} \sum_{\tau \in S_k} (e_{i_{(1)\tau}} \otimes e_{i_{(1)\tau'}}) \otimes \dots \otimes (e_{i_{(k)\tau}} \otimes e_{i_{(k)\tau'}}), \label{eq: Symk proj}
\end{equation}
is an isomorphism from the vector space of degree $k$ matrix polynomials to $\Sym^k(\VN^{\otimes 2})$. Applying this map for every degree $k$ proves the above theorem.

\end{proof}

As we discovered in section \ref{sec: partition algebras}, $\SN$ invariant tensors are closely related to set partitions. For invariant maps, it was useful to describe the set partitions in terms of 2-row partition diagrams.
For a basis of $(\VN^{\otimes 2k})^{\SN}$ it is natural to consider 1-row partition diagrams. We will use set partitions $\pi \in \setpart{\{1,\dots,2k\}}$ and diagrams interchangeably. As an example of this correspondence we have
\begin{align}
	1\vert 2 = \onerowdiagram{2}{}{1/2} &\longleftrightarrow \sum_{i_1, i_2}\delta^{i_1 i_2} e_{i_1} \otimes e_{i_2}, \\
	13|2|4 = \onerowdiagram{4}{}{1/3} &\longleftrightarrow \sum_{i_1, i_2, i_3, i_4} \delta^{i_1 i_3} e_{i_1} \otimes e_{i_2} \otimes e_{i_3} \otimes e_{i_4}, \\
	1234 = \onerowdiagram{4}{}{1/2,2/3,3/4} &\longleftrightarrow \sum_{i_1, i_2,i_3,i_4} \delta^{i_1 i_2}\delta^{i_2 i_3} \delta^{i_3 i_4} e_{i_1} \otimes e_{i_2} \otimes e_{i_3} \otimes e_{i_4}.
\end{align}
\begin{remark}	
The projection to $\Sym^k(\VN^{\otimes 2})$ is given on the r.h.s of \eqref{eq: Symk proj}. It identifies distinct invariant vectors, for example
\begin{equation}
	\onerowdiagram{4}{}{1/2} - \onerowdiagram{4}{}{3/4} \mapsto 0.
\end{equation}
Therefore, a basis of invariant matrix polynomials is labelled by equivalence classes of 1-row diagrams.
\end{remark}
For a set partition(1-row diagram) $\pi \in \setpart{\{1,\dots,2k\}}$ we write $X(\pi)$ for the corresponding matrix polynomial and vector in $\Sym^k(\VN^{\otimes 2})$, for example
\begin{align}
	\X(12|3|4) &= \X({\onerowdiagram{4}{}{1/2}}) 
	= \sum_{i_1, i_2, i_3, i_4} \delta^{i_1 i_2} X_{i_1 i_2} X_{i_3 i_4} \\ 
	&= \X({\onerowdiagram{4}{}{3/4}}).
\end{align}

We will now prove that there is a bijection from equivalence classes of 1-row partition diagrams and invariant matrix polynomials. For this, we will use an intermediate bijection, between equivalence classes of 1-row partition diagrams and directed graphs. We will then prove that directed graphs are in bijection with invariant matrix polynomials.
\begin{proposition}
	Let $\DG{k}$ be the set of unlabelled directed graphs with $k$ edges. There exists a surjective map
	\begin{align}
		P:\, &\setpart{\{1,\dots,2k\}} \rightarrow \DG{k} \\
		&\pi \mapsto G
	\end{align}
	with the property that the inverse image
	\begin{equation}
		P^{-1}(G) = \{\pi \in \setpart{\{1,\dots,2k\}} \, \vert \, P(\pi) ={G}\}
	\end{equation}
	satisfies
	\begin{equation}
		X(\pi) = X({\pi'})
	\end{equation}
	for all pairs $\pi, \pi' \in P^{-1}(G)$.
\end{proposition}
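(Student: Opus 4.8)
The plan is to make the map $P$ explicit and then verify the two required properties. First I would describe the construction: given a set partition $\pi \in \setpart{\{1,\dots,2k\}}$, think of the $2k$ points as $k$ ordered pairs $(1,2), (3,4), \dots, (2k-1,2k)$, one for each matrix factor $X_{i_{2m-1} i_{2m}}$. Build a directed graph $G$ whose vertices are the blocks of $\pi$, with one directed edge for each pair: the $m$-th edge goes from the block containing $2m-1$ to the block containing $2m$ (tail = left/row index, head = right/column index). Then forget the labels on vertices and edges to land in $\DG{k}$, the set of isomorphism classes of directed graphs with $k$ edges. This is manifestly well-defined, and I would observe surjectivity by running the construction backwards: any directed graph with $k$ edges and, say, $v$ vertices arises from the set partition on $\{1,\dots,2k\}$ obtained by labelling the edges $1,\dots,k$ in any order, labelling the vertices $1,\dots,v$, and declaring $2m-1 \sim a$, $2m \sim b$ whenever edge $m$ goes from vertex $a$ to vertex $b$; different labellings give set partitions with the same image.

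Next I would establish the key invariance property: $X(\pi) = X(\pi')$ whenever $P(\pi) = P(\pi')$. The cleanest route is to compute $X(\pi)$ directly from $G = P(\pi)$ and show it depends only on $G$. Unwinding the definition of $X(\pi)$ from the correspondence between $1$-row diagrams and vectors in $\Sym^k(\VN^{\otimes 2})$, one gets
\begin{equation}
	X(\pi) = \sum_{f : V(G) \to \{1,\dots,\N\}} \prod_{e \in E(G)} X_{f(\operatorname{tail} e)\, f(\operatorname{head} e)},
\end{equation}
the sum of monomials obtained by assigning an index in $\{1,\dots,\N\}$ to each vertex of $G$ and reading off one matrix entry per edge. (Strictly: this is the preimage polynomial; the $\Sym^k$-projection then symmetrises over the $k$ factors, which is exactly why reorderings of the edges give the same element.) The right-hand side visibly depends only on the isomorphism class of $G$: relabelling vertices just reindexes the summation variable $f$, and relabelling (equivalently, reordering) edges permutes the factors in the product, which leaves the element of $\Sym^k(\VN^{\otimes 2})$ unchanged. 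Hence any two $\pi, \pi' \in P^{-1}(G)$ give the same matrix polynomial.

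I expect the main obstacle to be bookkeeping rather than mathematical depth: one has to be careful about the distinction between the vector in $\VN^{\otimes 2k}$ attached to a raw $1$-row diagram and its image in $\Sym^k(\VN^{\otimes 2})$ after the projection of \eqref{eq: Symk proj}, since it is precisely that projection that collapses, e.g., $\onerowdiagram{4}{}{1/2}$ and $\onerowdiagram{4}{}{3/4}$ to the same observable. The cleanest way to handle this is to phrase everything at the level of the matrix polynomial $X(\pi)$ (equivalently the symmetrised tensor), where the edge-ordering ambiguity is automatically quotiented out, so that the directed graph $G$ — which has no intrinsic ordering on its edges — is the natural invariant. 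Once this is set up, surjectivity and the invariance property both follow by the relabelling arguments sketched above, and the subsequent claim (bijection between $\DG{k}$ and invariant matrix polynomials) reduces to checking that distinct isomorphism classes of directed graphs yield linearly independent $X(G)$, which is where the $\N \geq 2k$ hypothesis or a separate counting argument would enter.
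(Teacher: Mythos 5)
Your proposal is correct and follows essentially the same route as the paper: the same block-to-vertex, pair-to-edge construction of $P$, the same reverse-labelling argument for surjectivity, and the same underlying observation for invariance (vertex relabellings only reindex dummy summation indices, while edge reorderings permute commuting factors, which is the $\diag(S_k)$ relation the paper invokes). Your explicit formula $X(\pi)=\sum_{f:V(G)\to\{1,\dots,\N\}}\prod_{e}X_{f(\mathrm{tail}\,e)\,f(\mathrm{head}\,e)}$ just makes that last step slightly more self-contained than the paper's phrasing, but it is not a different argument.
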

\begin{proof}
	We describe a directed graph $G \in \DG{k}$ with $l$ vertices by a collection of $k$ pairs $(i,j) \in \{1,\dots,l\} \times \{1,\dots,l\}$. Each ordered pair corresponds to an edge from the vertex $i$ to the vertex $j$. 
	
	The graph $G=P(\pi)$ is constructed as follows.
	Given a set partition $\pi = \{\pi_1, \dots, \pi_l\} \in \setpart{\{1,\dots,2k\}}$ we define an ordered list of blocks $\pi^o=(\pi_1, \dots, \pi_l)$ (for example any total ordering on the set of subsets of $\{1,\dots,2k\}$ will work). The ordered list $\pi^o$ defines a map
	\begin{align}
		p:\, &\{1,\dots,2k\} \rightarrow \{1,\dots,l\}
		&p(i) = j \quad \text{if $i$ is in $\pi_j$.}
	\end{align}
	The collection
	\begin{equation}
		G = P(\pi) = (p(1), p(2)), \dots, (p(2k-1),p(2k))
	\end{equation}
	of pairs is a graph $G \in \DG{k}$ with $l$ vertices, corresponding to the set partition $\pi$. Observe that re-ordering the blocks in $\pi^o$ corresponds to relabelling the vertices in $G(\pi)$.
	
	For this to be a surjection there should exist at least one $\pi$ for every graph $G$. Let $G \in \DG{k}$ with $l$ vertices be described by a collection of $k$ pairs. We want to match these with the set of pairs $(1,2), \dots, (2k-1,2k)$ to define a map $p$ as above. To do this, pick an ordering of the pairs in $G$ and label the entries of the ordered pairs as
	\begin{equation}
		(i_1, i_2), \dots, (i_{2k-1}, i_{2k})
	\end{equation}
	Now define the function
	\begin{equation}
		p(r) = i_r, \quad r=1,\dots,2k.
	\end{equation}
	This corresponds to the set partition made out of blocks $\pi_j$ with the property
	\begin{equation}
		\pi_{j} = \{r  \in \{1,\dots,2k\} \, \vert \, p(r) = j\}.
	\end{equation}
	Therefore, at least one $\pi = \pi_1 \vert \dots \vert \pi_l \in \setpart{\{1,\dots,2k\}}$ exists for every $G \in \DG{k}$.
	
	Lastly, note that two set partitions $\pi, \pi'$ give rise to the same graph $G$ if there exists $\tau \in \diag(S_k) \subset S_{k} \times S_k \subset S_{2k}$ permuting the $k$ pairs such that
	\begin{align}
		&(p((1)\tau), p((2)\tau)), \dots, (p((2k-1)\tau),p((2k)\tau)) \nonumber\\
		&= (p'(1), p'(2)), \dots, (p'(2k-1),p'(2k)) = P(\pi').
	\end{align}
	This is exactly the relation that makes two set partitions satisfy $X(\pi) = X(\pi')$.
\end{proof}
\begin{example}
	To illustrate this, we consider some simple examples for $k=2$. First, let $\pi = 12|3|4$ and pick $\pi^o=(\pi_1, \pi_2,\pi_3)=(\{1,2\}, \{3\}, \{4\})$ such that
	\begin{equation}
		G = P(12|3|4) = P({\onerowdiagram{4}{}{1/2}}) = \{(p(1), p(2)), (p(3),p(4))\} = \{(1,1), (2,3)\}.
	\end{equation}
	Had we chosen $\pi^o=(\{2\},\{3\},\{1,2\})$ we would find
	\begin{equation}
		G=P(12|3|4) = \{(p(1), p(2)), (p(3),p(4))\} = \{(3,3), (1,2)\}.
	\end{equation}
	As unlabelled directed graphs these are the same.
\end{example}
\begin{example}
	We give an example of the opposite construction. Let $G$ be the graph described by the set of pairs
	\begin{equation}
		G = \{(1,1), (2,3), (2,4)\}.
	\end{equation}
	We order them as written above. This gives the map
	\begin{equation}
		p(1) = 1, p(2) = 1, p(3) =2, p(4) = 3, p(5) = 2, p(6) = 4.
	\end{equation}
	The corresponding blocks of $\pi$ are $\pi_1 = \{1,2\}, \pi_2 = \{3,5\}, \pi_3 = \{4\}, \pi_4 = \{6\}$ or
	\begin{equation}
		\pi = \onerowdiagram{6}{}{1/2,3/5}.
	\end{equation}
	Had we ordered the pairs as $(2,3), (1,1), (2,4)$ we get the map
	\begin{equation}
		p(1)=2, p(2) =3, p(3)=1,p(4)=1, p(5)=2, p(6)=4
	\end{equation}
	and set partition with blocks $\pi'_1 = \{3,4\}, \pi'_2 = \{1,5\}, \pi'_3 = \{2\}, \pi'_4=\{6\}$ or
	\begin{equation}
		\pi' = \onerowdiagram{6}{}{1/5,3/4}.
	\end{equation} 
	However, these are equivalent in the sense of $X(\pi) = X(\pi')$ since we can swap the vertices $1,2$ with the vertices $3,4$ in the 1-row diagrams.
\end{example}

The space of degree $k$ observables is the space of invariants under the action of $\SN \times S_k$ on $\VN^{\otimes 2k}$. We denote this subspace of invariants by $[\VN^{\otimes 2k}]^{\SN \times S_k}$. Define the following projectors on $\VN^{\otimes 2k}$
\begin{align}
	&\begin{aligned}[t]
		P_{[\N]}(e_{i_1} \otimes e_{i_{1'}} \otimes \dots &\otimes e_{i_k} \otimes e_{i_{k'}}) \\&= \frac{1}		{\N!}\sum_{\sn \in \SN} e_{(i_1)\sn} \otimes e_{(i_{1'})\sn} \otimes \dots \otimes e_{(i_k)\sn} \otimes e_{(i_{k'})\sn},
	\end{aligned}\\
	&\begin{aligned}[t]
		P_{[k]}(e_{i_1} \otimes e_{i_{1'}} \otimes \dots &\otimes e_{i_k} \otimes e_{i_{k'}}) \\ &= \frac{1}{k!} \sum_{\tau \in S_k}   e_{i_{(1)\tau}} \otimes e_{i_{(1)\tau'}}  \otimes \dots \otimes  e_{i_{(k)\tau}} \otimes e_{i_{(k)\tau'}}.
	\end{aligned}
\end{align}
Then
\begin{equation}
	\dim \, [\VN^{\otimes 2k}]^{\SN \times S_k} = \Tr_{{{\VN^{\otimes 2k}}}}(P_{[\N]} P_{[k]})
\end{equation}
and we will prove that
\begin{proposition}\label{prop: graphs equals trace}
	Let $\mathcal{G}_{k,\N}$ be the set of unlabelled directed graphs with $k$ edges and $\N$ vertices, then
	\begin{equation}
		\dim \, [\VN^{\otimes 2k}]^{\SN \times S_k} = \abs{\mathcal{G}_{k,\N}}.
	\end{equation}
\end{proposition}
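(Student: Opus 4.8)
The plan is to combine the two bijections already established or sketched in this section with a Burnside/trace count. First I would use the proposition just proved: there is a surjection $P : \setpart{\{1,\dots,2k\}} \to \DG{k}$ whose fibres are exactly the equivalence classes of $1$-row partition diagrams giving the same matrix polynomial $X(\pi)$. Restricting attention to set partitions with at most $\N$ blocks, the image consists precisely of directed graphs with at most $\N$ vertices, i.e.\ elements of $\mathcal{G}_{k,\N}$ (a vertex of the graph corresponds to a block, and a graph using fewer than $\N$ vertices is realised by a set partition with fewer than $\N$ blocks, which is allowed). So the set $\mathcal{G}_{k,\N}$ is in bijection with the set of equivalence classes of $1$-row diagrams $\pi \in \setpart{\{1,\dots,2k\}}$ with $\abs{\pi} \le \N$ under the relation $X(\pi) = X(\pi')$; and by the preceding remark these equivalence classes form a basis for the space of invariant matrix polynomials of degree $k$, which is exactly $[\VN^{\otimes 2k}]^{\SN \times S_k}$.

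Concretely, the key steps in order are: (i) identify $[\VN^{\otimes 2k}]^{\SN \times S_k}$ as the image of the projector $P_{[\N]} P_{[k]}$ acting on $\VN^{\otimes 2k}$, so its dimension is $\Tr_{\VN^{\otimes 2k}}(P_{[\N]} P_{[k]})$ (this is the display immediately preceding the statement, and follows because $P_{[\N]}$ and $P_{[k]}$ are commuting idempotents whose product projects onto the joint invariants). (ii) Show that a spanning set for this invariant subspace is given by the symmetrised orbit sums of the standard basis vectors $e_{i_1} \otimes e_{i_{1'}} \otimes \dots \otimes e_{i_k} \otimes e_{i_{k'}}$ of $\VN^{\otimes 2k}$: applying $P_{[\N]}$ averages over relabellings of the index values $\{1,\dots,\N\}$, so the orbit sum only depends on the set partition $\pi \in \setpart{\{1,\dots,2k\}}$ recording which positions carry equal index values, and the nonzero orbit sums are exactly those with $\abs{\pi} \le \N$; then applying $P_{[k]}$ further identifies $\pi$ with $\pi'$ whenever they differ by the diagonal $S_k$ acting on the $k$ pairs, i.e.\ exactly when $X(\pi) = X(\pi')$. (iii) Show these symmetrised orbit sums are linearly independent — distinct $(\SN \times S_k)$-orbits of basis vectors give orthogonal orbit sums with respect to the inner product \eqref{eq: VNk inner prod} extended to $\VN^{\otimes 2k}$, so nonzero ones are linearly independent. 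Hence the dimension equals the number of $(\SN \times S_k)$-orbits on basis vectors of $\VN^{\otimes 2k}$, which by the two bijections above is $\abs{\mathcal{G}_{k,\N}}$.

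The main obstacle I anticipate is bookkeeping in step (ii)–(iii): one must be careful that the $P_{[k]}$ projection acts on the $k$ \emph{pairs} $(i_a, i_{a'})$ as a block (the diagonal $S_k \subset S_k \times S_k \subset S_{2k}$), matching precisely the equivalence relation on set partitions used in the surjectivity proof of $P$, and that "vanishing orbit sum" corresponds exactly to "$\abs{\pi} > \N$" so that the surviving orbits biject with $\mathcal{G}_{k,\N}$ rather than all of $\DG{k}$. Once these identifications are made cleanly, the count is immediate: $\dim\,[\VN^{\otimes 2k}]^{\SN \times S_k}$ counts $(\SN\times S_k)$-orbits on standard basis vectors, the $\SN$-orbits are the set partitions with $\le \N$ blocks, the residual $S_k$-action on these is what $P$ quotients by, and $P$ lands surjectively on $\mathcal{G}_{k,\N}$. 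I would close by remarking that this simultaneously re-proves the claimed bijection between the graph basis and degree-$k$ observables, recovering the result of \cite{Barnes2022b} including the non-stable range $\N < 2k$.
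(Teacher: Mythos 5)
Your argument is correct, but it takes a somewhat different route from the paper's proof. The paper never invokes the surjection $P:\setpart{\{1,\dots,2k\}}\rightarrow \DG{k}$ at this point: it encodes a labelled graph on $\N$ vertices directly as an ordered list of $k$ pairs $(i_a,i_{a'})\in\{1,\dots,\N\}^{\times 2}$, counts unlabelled graphs as $(\SN\times S_k)$-orbits of such lists via Burnside's lemma, and then observes that the resulting sum of products of Kronecker deltas is literally the expansion of $\Tr_{\VN^{\otimes 2k}}(P_{[\N]}P_{[k]})$, which is the dimension of the invariant subspace. You instead bypass Burnside and the trace computation altogether: you build an explicit spanning and linearly independent set of orbit sums of the standard basis vectors of $\VN^{\otimes 2k}$, so that the dimension is manifestly the number of $(\SN\times S_k)$-orbits, and you then identify these orbits with $\mathcal{G}_{k,\N}$ by restricting the previously established surjection $P$ to set partitions with at most $\N$ blocks (the $S_k$-action preserves the block count, so fibres restrict correctly). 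Both proofs are avatars of "dimension of invariants of a permutation representation equals the number of orbits"; yours has the advantage of producing an explicit basis, so the correspondence between graphs and the $X(\pi)$ basis of observables (including the finite-$\N$ cutoff $\abs{\pi}\le\N$) is visible rather than inferred from a count, while the paper's version yields the fixed-point formula that feeds directly into the explicit counting expression of \cite{Kartsaklis2017}. One cosmetic point: there are no "vanishing orbit sums" for $\abs{\pi}>\N$ — such partitions are simply not realised by any index assignment, so no orbit exists; phrasing it as a vanishing condition on $P_{[\N]}$ applied to a would-be vector slightly misstates what happens, though it does not affect the argument. Also note that your step (i) (the trace identity) becomes redundant once (ii)–(iii) exhibit a basis indexed by orbits.
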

\begin{proof}
	A directed graph with $l$ labelled vertices and $k$ labelled edges corresponds to an ordered list of $k$ pairs $(i_a,i_{a'}) \in \{1,\dots, l\}^{\times 2}$ with $a \in \{1,\dots,k\}$. Unlabelled directed graphs are in one-to-one correspondence with orbits of such sets under the action
	\begin{equation}
		(i_a,i_{a'}) \mapsto ((i_a)\sn, (i_{a'})\sn), \quad \sn \in S_l
	\end{equation}
	and
	\begin{equation}
		(i_a, i_{a'}) \mapsto (i_{(a)\tau}, i_{(a)'\tau}) \quad \tau \in S_k.
	\end{equation}
	By Burnside's lemma the number of orbits is equal to the average number of fixed points.
	The ordered list is fixed by $\sn \in S_l$, $\tau \in S_k$ if
	\begin{equation}
		(i_a,i_{a'}) = ((i_{(a)\tau})\sn, (i_{(a)'\tau})\sn), \quad \forall a \in \{1,\dots,k\},
	\end{equation}
	or equivalently
	\begin{equation}
		\prod_{a=1}^k \delta_{i_a, (i_{(a)\tau})\sn} \delta_{i_{a'}, (i_{(a)'\tau})\sn} = 1.
	\end{equation}
	Therefore, the number of orbits is equal to
	\begin{equation}
		\abs{\mathcal{G}_{k,l}}=\sum_{i_a,i_{a'} = 1}^l \frac{1}{l!} \frac{1}{k!}\sum_{\sn \in S_l} \sum_{\tau \in S_k}\prod_{a=1}^k \delta_{i_a, (i_{(a)\tau})\sn} \delta_{i_{a'}, (i_{(a)'\tau})\sn} .
	\end{equation}	
	On the other hand, this is equal to
	\begin{equation}
		\Tr_{{{V_l^{\otimes 2k}}}}(P_{[l]} P_{[k]})
	\end{equation}
	for $l=\N$ since the trace can be computed as
	\begin{align}
		&\sum_{i_a,i_{a'}=1}^\N \VNinner{e_{i_1} \otimes e_{i_{1'}} \otimes \dots \otimes e_{i_k} \otimes e_{i_{k'}}}{P_{[\N]} P_{[k]}e_{i_1} \otimes e_{i_{1'}} \otimes \dots \otimes e_{i_k} \otimes e_{i_{k'}}} \\
		&=\frac{1}{\N!} \frac{1}{k!}\sum_{\sn \in \SN} \sum_{\tau \in S_k} \sum_{i_a,i_{a'}=1}^\N  \prod_{a=1}^k \VNinner{e_{i_a}}{e_{(i_{(a)\tau})\sn}}  \VNinner{e_{i_{a'}}}{e_{(i_{(a)'\tau})\sn}} \\
		&=\frac{1}{\N!} \frac{1}{k!}\sum_{\sn \in \SN} \sum_{\tau \in S_k} \sum_{i_a,i_{a'}=1}^\N  \prod_{a=1}^k  \delta_{i_a, (i_{(a)\tau})\sn} \delta_{i_{a'}, (i_{(a)'\tau})\sn}.
	\end{align}
	To conclude we have found that
	\begin{equation}
		\abs{\mathcal{G}_{k,\N}} = \Tr_{{{\VN^{\otimes 2k}}}}(P_{[\N]} P_{[k]}) = \dim \, [\VN^{\otimes 2k}]^{\SN \times S_k}.
	\end{equation}
\end{proof}

The previous results gives an immediate corollary on the stability of the dimension of the space of observables.
\begin{corollary}
	Since the maximum number of vertices that can be occupied by $k$ edges is equal to $2k$ we have
	\begin{equation}
		\abs{\mathcal{G}_{k,2k}} = \abs{\mathcal{G}_{k,2k+p}}, 
	\end{equation}
	for all non-negative integers $p$. Consequently
	\begin{equation}
		\dim \, [\VN^{\otimes 2k}]^{\SN \times S_k},
	\end{equation}
	stabilizes at $\N = 2k$.
\end{corollary}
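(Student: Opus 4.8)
The plan is to reduce the claim to a bijection obtained by adjoining and deleting isolated vertices, and then to feed the result into Proposition \ref{prop: graphs equals trace}.

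First I would recall the combinatorial model used in the proof of Proposition \ref{prop: graphs equals trace}: an element of $\mathcal{G}_{k,\N}$ is an isomorphism class of directed graphs (loops and repeated edges allowed) on $\N$ vertices with $k$ edges, each edge being an ordered pair of vertices. Call a vertex \emph{isolated} if no edge is incident to it. Since each edge touches at most two vertices, any such graph has at most $2k$ non-isolated vertices; in particular, if $\N \geq 2k$ then every graph on $\N+1$ vertices with $k$ edges has at least one isolated vertex.

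Next I would set up, for each $\N \geq 2k$, two mutually inverse maps between $\mathcal{G}_{k,\N}$ and $\mathcal{G}_{k,\N+1}$. Let $\iota$ adjoin one isolated vertex; this is visibly well defined on isomorphism classes. Let $\delta$ delete one isolated vertex (which exists by the previous paragraph); this is well defined because the symmetric group permuting the isolated vertices of a fixed graph acts by automorphisms, so the choice of which isolated vertex to remove does not affect the resulting isomorphism class, and isomorphic inputs give isomorphic outputs. A short check shows $\delta \circ \iota = \mathrm{id}$ and $\iota \circ \delta = \mathrm{id}$: deleting an isolated vertex after adjoining one recovers the original class, and re-adjoining an isolated vertex to $\delta(G')$ recovers $G'$ since $G'$ and $\delta(G')$ differ only by an isolated vertex. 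Hence $\abs{\mathcal{G}_{k,\N}} = \abs{\mathcal{G}_{k,\N+1}}$ for every $\N \geq 2k$, and iterating from $\N = 2k$ gives $\abs{\mathcal{G}_{k,2k}} = \abs{\mathcal{G}_{k,2k+p}}$ for all non-negative integers $p$. Combined with Proposition \ref{prop: graphs equals trace}, which identifies $\dim [\VN^{\otimes 2k}]^{\SN \times S_k}$ with $\abs{\mathcal{G}_{k,\N}}$, this shows the dimension is constant for $\N \geq 2k$, i.e. it stabilises at $\N = 2k$.

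The argument is essentially bookkeeping; the only point demanding care — and the main (rather mild) obstacle — is verifying that $\delta$ descends to isomorphism classes, i.e. that deleting an isolated vertex is independent of which isolated vertex one picks. This is immediate once one notes that the isolated vertices of a fixed graph are interchanged by graph automorphisms, so I do not expect any genuine difficulty; one could alternatively bypass $\mathcal{G}_{k,\N}$ entirely and argue directly that $\Tr_{\VN^{\otimes 2k}}(P_{[\N]}P_{[k]})$ is eventually constant in $\N$, but the graph route is cleaner given the preceding proposition.
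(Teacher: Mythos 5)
Your proposal is correct and follows essentially the same reasoning as the paper, which disposes of the corollary with the single observation that $k$ edges can occupy at most $2k$ vertices, so for $\N \geq 2k$ any additional vertices are isolated and do not produce new isomorphism classes; your adjoin/delete-isolated-vertex bijection is just a careful write-up of that observation, and the combination with Proposition \ref{prop: graphs equals trace} is exactly how the paper concludes stabilization of the dimension.
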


The trace in
\begin{equation}
	\abs{\mathcal{G}_{k,\N}} = \Tr_{{{\VN^{\otimes 2k}}}}(P_{[\N]} P_{[k]}) 
\end{equation}
was computed in \cite[Appendix B]{Kartsaklis2017}. It was found that
\begin{align} \label{Eqn: Non-coloured trivial counting} \nonumber 
	\abs{\mathcal{G}_{k,\N}} &= \sum_{p \vdash \N} \sum_{q \vdash k}  \frac{1}{\prod_{i=1}^\N i^{p_i} p_i! \prod_{i=1}^k i^{q_i} q_i!} \prod^{k}_{i=1} \big(\sum_{l|i} l p_l \big)^{2q_i}.
\end{align}
The sums are over $l | i$, the set of divisors of $i$, and partitions $p = \{p_1, p_2, \dots, p_\N \}$, $q = \{ q_1, q_2, \dots, q_k \}$ obeying $\sum_i i p_i = \N$, $\sum_i i q_i = k$ respectively. A more detailed derivation of this result is contained within the appendices of \cite{Kartsaklis2017}. In Table \ref{tab: Table of invariant dimensions} we give $\abs{\mathcal{G}_{k, \N=2k}}$ for various $k$.
\begin{table}[h!]
	\begin{center}
		\caption{Number of invariants contained within $\Sym^k(\VN^{\otimes 2})$}
		\label{tab: Table of invariant dimensions}
		\begin{tabular}{l | l}
			$k$ & $\abs{\mathcal{G}_{k, \N=2k}}$\\
			\hline
			1 &  2 \\
			2 & 11 \\
			3 & 52 \\
			4 & 296 \\
			5 & 1724 \\
			6 & 11060 \\
			\end{tabular}
	\end{center}
\end{table}

In section \ref{subsec: graph counting} we give a group theoretical algorithm for enumerating and counting directed graphs. Having introduced the equivalence classes of 1-row diagrams, we can describe a combinatorial algorithm for computing expectation values of invariant polynomials.

\section{Expectation values: Combinatorial algorithms} \label{subsec: exp vals}
In this section we will give a combinatorial algorithm for computing expectation values $\expval{X(\pi)}$, for general permutation invariant matrix polynomials labelled by $\pi \in \setpart{\{1,\dots,2k\}}$.

As previously mentioned, expectation values can be computed exactly through Wick's theorem which says
\begin{theorem}[Wick's theorem]
	Consider a Gaussian distribution with mean/1-point function/expectation value $\expval{X_{ij}}$ and covariance/two-point function/propagator $\expval{X_{ij}X_{kl}}_c$. For $\pi=\pi_1 \vert \dots \vert \pi_b$ a partition of the set $\{(1,2), \dots, (2k-1,2k)\}$ with parts of size one or two, define
	\begin{equation}
		\expval{X}_{\pi_i} = \begin{cases}
			\expval{X_{i_a i_b}} \quad &\text{for $(a,b) \in \pi_i$, if $\abs{\pi_i} = 1$,}\\
			\expval{X_{i_a i_b} X_{i_c i_d}}_c \quad &\text{for $(a,b), (c,d) \in \pi_i$ if $\abs{\pi_i} = 2$.}
		\end{cases}
	\end{equation}
	The expectation value of a product of matrix elements is equal to
	\begin{equation}
		\expval{X_{i_1 i_2} \dots X_{i_{2k-1} i_{2k}} } = \sum_{\pi} \prod_{i=1}^{\abs{\pi}} \expval{X}_{\pi_i}, \label{eq: wicks thm}
	\end{equation}
	where the sum is over all set partitions of the kind mentioned above.
\end{theorem}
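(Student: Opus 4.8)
The plan is to prove Wick's theorem in the standard way, by differentiating the generating function. Recall that we already have, from Proposition \ref{thm: PIGMM} and the discussion of the generating function, the closed form \eqref{eq: PIGMM ZJ} for $\Z[J]$ in the representation basis, with source $J^{\lambda\alpha a}$ coupled to $X_{\lambda\alpha a}$. Pulling back to the matrix basis through the Clebsch-Gordan change of variables, one obtains a generating function $\widetilde\Z[J] = \int \dd X \, \e^{\sum_{ij} J^{ij} X_{ij} - V_0(X)}$ whose logarithm is, by standard multivariate Gaussian integration, a quadratic polynomial in $J^{ij}$ plus a linear term: schematically $\log \widetilde\Z[J] = \log\Z + \sum_{ij} \expval{X_{ij}} J^{ij} + \tfrac12 \sum_{ij,kl} \expval{X_{ij}X_{kl}}_c J^{ij} J^{kl}$, where the coefficients are exactly the one- and two-point functions identified in Proposition \ref{prop: 1mom 2mom} (this is where one uses that the connected two-point function is the covariance and the one-point function is the mean).

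First I would set up the notation: write $i_1, \dots, i_{2k}$ for the index pairs, so that the object of interest is $\partial^{2k}\widetilde\Z[J] / \partial J^{i_1 i_2} \cdots \partial J^{i_{2k-1} i_{2k}}$ evaluated at the physical source (only $J^\alpha = J^{[\N]\alpha}$ nonzero), divided by $\Z$. Then I would invoke the combinatorial identity for derivatives of the exponential of a polynomial of degree $\leq 2$: since $\widetilde\Z[J] = \Z \, \e^{L(J) + \tfrac12 Q(J)}$ with $L$ linear and $Q$ bilinear, repeated differentiation of an exponential whose exponent has vanishing third and higher derivatives produces precisely a sum over pairings-and-singletons. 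Concretely, $\e^{-F(J)} \prod_{r} \partial_{J_r} \e^{F(J)}$, where $F = L + \tfrac12 Q$, expands as $\sum_{\pi} \prod_{\pi_i} (\text{first or second derivative of } F)$, the sum running over all ways of partitioning the $2k$ differentiation slots into blocks of size one and two — blocks of size one contributing $\partial F = \expval{X_{i_a i_b}}$ and blocks of size two contributing $\partial^2 F = \expval{X_{i_a i_b} X_{i_c i_d}}_c$. This is exactly the statement \eqref{eq: wicks thm} once one matches the block contributions to the definitions of $\expval{X}_{\pi_i}$. I would prove the exponential-derivative identity by induction on the number of derivatives $2k$: the inductive step applies the Leibniz/product rule to $\partial_{J_{r}}\big(\sum_\pi \prod \cdots\big)$, where each existing block either absorbs the new slot (if it is a singleton, becoming a pair, using $\partial^2 F$) or the new slot forms a fresh singleton; the vanishing of $\partial^3 F$ is what kills all other terms and guarantees blocks never exceed size two.

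The main obstacle, such as it is, is bookkeeping rather than conceptual: one must be careful that the pairing $\pi$ in Wick's theorem is a partition of the set of \emph{pairs} $\{(1,2),\dots,(2k-1,2k)\}$ — i.e. of the matrix variables $X_{i_1 i_2}, \dots$ — not of the $2k$ individual indices, and that the bilinear form $Q$ restricted to the physical source correctly reproduces $\expval{X_{ij}X_{kl}}_c$ rather than the full $\expval{X_{ij}X_{kl}}$; the splitting into connected (paired) and disconnected (product-of-singletons) pieces must be handled cleanly. I would address this by keeping $L$ and $Q$ formally separate throughout the induction so that the linear-term contributions automatically assemble into the $\expval{X}_{\pi_i}$ for singleton blocks and the bilinear-term contributions into the connected two-point functions for doubleton blocks, and only at the very end substitute the explicit expressions from Proposition \ref{prop: 1mom 2mom}. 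Since $\widetilde\Z$ is genuinely Gaussian (the exponent is exactly quadratic after the change of variables, with no higher terms), no approximation or large-$\N$ truncation enters, and the identity is exact for all $\N$.
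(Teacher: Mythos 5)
The paper itself gives no proof of this theorem: Wick's theorem is quoted as a standard fact about Gaussian measures, the paper's actual work being the identification of the mean and connected covariance in Proposition \ref{prop: 1mom 2mom} and the Gaussianity established in Proposition \ref{thm: PIGMM}. Your proposal therefore supplies an argument the paper omits, and it is correct: since the exponent of the generating function \eqref{eq: PIGMM ZJ}, pulled back to matrix-basis sources $J^{ij}$ through the (orthogonal, hence invertible) Clebsch-Gordan change of variables, is exactly a polynomial of degree two in $J$, the identity $\e^{-F}\,\partial_{J^{i_1 i_2}}\cdots\partial_{J^{i_{2k-1}i_{2k}}}\e^{F}=\sum_{\pi}\prod_{B\in\pi}\partial_B F$ with blocks of size at most two follows by the induction you describe, the vanishing of $\partial^3 F$ being what forbids larger blocks; evaluating at the physical source and using Proposition \ref{prop: 1mom 2mom} then identifies singleton blocks with $\expval{X_{i_a i_b}}$ and doubleton blocks with $\expval{X_{i_a i_b}X_{i_c i_d}}_c$, which is precisely \eqref{eq: wicks thm}. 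Your two flagged bookkeeping points are the right ones: the partition is of the $k$ differentiation slots (i.e.\ of the index \emph{pairs}), and Taylor-expanding the quadratic exponent around the physical source (equivalently, keeping the linear and bilinear pieces separate) is what makes the first derivative equal the mean and the second derivative equal the connected, rather than full, two-point function. An alternative, equally valid route would be to work directly in the decoupled representation basis of Proposition \ref{thm: PIGMM}, prove the one-variable moment formula, and reassemble via the Clebsch-Gordan coefficients, but your generating-function induction is cleaner and matches how the paper computes its first and second moments.
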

\begin{example}
	The first non-trivial example is $k=3$. Note that there are $4$ set partitions of $\{(1,2),(3,4),(5,6)\}$ with parts of size one or two:
	\begin{align}
		&\pi = (1,2) \vert (3,4)  \vert (5,6), \quad \pi = (1,2) (3,4)  \vert (5,6) \\
		&\pi = (1,2) (5,6) \vert (3,4), \quad \pi = (3,4)  (5,6) \vert (1,2).
	\end{align}
	Thus,
	\begin{equation} \label{eq: wicks thm degree 3}
		\expval{X_{i_1 i_2} X_{i_3 i_4} X_{i_5 i_6}} =\begin{aligned}[t]
			 &\expval{X_{i_1 i_2}}\expval{X_{i_3 i_4}}\expval{X_{i_5 i_6}} + \expval{X_{i_1 i_2} {X_{i_3 i_4}}}_c\expval{X_{i_5 i_6}}   \\
			+ &\expval{X_{i_1 i_2} {X_{i_5 i_6}}}_c\expval{X_{i_3 i_4}} + \expval{X_{i_3 i_4} {X_{i_5 i_6}}}_c\expval{X_{i_1 i_2}}.
		\end{aligned}
	\end{equation}
\end{example}

\subsection{1-row diagrams from 2-row diagrams.} \label{subsec: 1row to 2-row}
Each term on the r.h.s. of \eqref{eq: wicks thm} can be expanded in terms of Clebsch-Gordan coefficients and invariant tensors $Q^{\lambda; \alpha \beta}_{ij:kl}$ using \eqref{eq: matrix basis 1pt}, \eqref{eq: matrix basis con 2pt}. We now connect these invariant tensors to the matrix units constructed in \ref{sec: construction of units} and give them a 1-row partition interpretation.

Clebsch-Gordan coefficients $C_{ij}^{[\N] \alpha}$ correspond to matrix units of $P_1(\N)$ since
\begin{equation}
	\Hom_{\SN}(\VN^{\otimes 2}, \mathbb{C}) \cong \End_{\SN}(\VN), \label{eq: Hom VN2 End VN iso}
\end{equation}
as vector spaces.
The matrix units are (see Example \ref{ex: P1N units})
\begin{align}
	&Q^{[\N]}=  \frac{1}{\N} \PAdiagram{1}{}{}, \\
	&Q^{[\N-1,1]} = \PAdiagram{1}{-1/1}{} - \frac{1}{\N}\PAdiagram{1}{}{}.
\end{align}
The corresponding Clebsch-Gordan coefficients are written as linear combinations of 1-row partitions. We define
\begin{equation}
	\vactab_1 = ([\N], [\N-1], [\N], [\N-1], [\N]), \quad \vactab_2 = ([\N], [\N-1], [\N-1,1], [\N-1], [\N]) \\
\end{equation}
then
\begin{align}
	C_{ij}^{[\N], \vactab_1} &= \frac{1}{\N} \onerowdiagramunLabeled{2}{}\\
	C_{ij}^{[\N], \vactab_2} &= \frac{1}{\sqrt{\N-1}} \onerowdiagramunLabeled{2}{2/1} - \frac{1}{\N \sqrt{\N-1}} \onerowdiagramunLabeled{2}{}.
\end{align}
The translation between 2-row diagrams and 1-row diagrams is defined by placing the top row to the right of the bottom row. The difference in normalization comes from demanding that the Clebsch-Gordan coefficients define vectors with norm 1. Note that the normalization constants are
\begin{align}
	&\sqrt{\Tr_{V_N}(Q^{[\N]})} = 1,\\
	&\sqrt{\Tr_{V_N}(Q^{[\N-1,1]})} = \sqrt{\dimSN{[\N-1,1]}} = \sqrt{\N-1}.
\end{align}

Similarly for the invariant tensors of degree two.
Recall that matrix units $Q^{\lambda}_{\alpha \beta}$ of $P_2(\N)$ have an expansion in the diagram basis. We write the corresponding element of $\End_{\SN}(\VN^{\otimes 2})$ as
\begin{equation}
	(Q^{\lambda}_{\alpha \beta})^{i_{3} i_{4}}_{i_1 i_2}.
\end{equation}
This is identified with the invariant tensor of degree two
\begin{equation}
	Q^{\lambda; \alpha \beta}_{i_1 i_2: i_{3} i_{4}}.
\end{equation}
The diagrammatic translation between the two is again defined by placing the top row to the right of the bottom row. For example
\begin{equation}
	\PAdiagram{2}{1/-1}{} \mapsto \onerowdiagramunLabeled{4}{3/1}, \quad \PAdiagram{2}{1/-2}{} \mapsto \onerowdiagramunLabeled{4}{4/1}
\end{equation}
With this translation, the invariant tensors are formal linear combinations of 1-row diagrams.

Wick's theorem \eqref{eq: wicks thm} includes tensor products of invariant tensors. Thus, it is natural to define a tensor product on the space of 1-row diagrams. The tensor product $d \otimes d'$ of two 1-row diagrams is the union of the 1-row diagrams. For example
\begin{equation}
	\onerowdiagramunLabeled{4}{3/1} \otimes \onerowdiagramunLabeled{4}{2/1} = \onerowdiagramunLabeled{8}{3/1,5/6}.
\end{equation}
We remark that at the level of set partitions, this procedure requires a relabelling in the second set partitions. Considering the same example but in set partition language, we have
\begin{equation}
	13\vert 2\vert 4 \otimes 12 \vert 3 \vert 4 = 13 \vert 56 \vert 2 \vert 4 \vert 7 \vert 8.
\end{equation}
This allows us to expand Wick's theorem in terms linear combinations of tensor products of 1-row diagrams (for example at degree three \eqref{eq: wicks thm degree 3}, using \eqref{eq: matrix basis 1pt}, \eqref{eq: matrix basis con 2pt}). That is, we can write
\begin{equation}
	\expval{X_{i_1 i_2} \dots X_{i_{2k-1} i_{2k}} } = \sum_{\pi \in \setpart{[2k]}} c_\pi(\N) \pi, \label{eq: exp val as 1row sum}
\end{equation}
where $\pi$ is a 1-row partition diagram with $2k$ vertices, and $c_\pi(\N)$ are coefficients that are computed through Wick's theorem. Examples of this expansion are given below.

\subsection{Join of 1-row diagrams.}
We now describe the different contributions to expectation values of observables in terms of operations on 1-row diagrams. We begin by considering a toy example. Consider the expansion of
\begin{equation}
	\expval{X_{i_1 i_2} X_{i_3 i_4} X_{i_5 i_6}}
\end{equation}
using Equation \ref{eq: exp val as 1row sum} and suppose it contained a tensor
\begin{equation}
	\delta_{i_1 i_2} \delta_{i_2 i_3} \delta_{i_4 i_5}  \longleftrightarrow 	\onerowdiagramunLabeled{6}{2/1,3/2, 5/4}.
\end{equation}
As an example, we want to compute its contribution to the expectation value
\begin{equation}
	\expval{\sum_{i,j} X_{ii} X_{ij} X_{jj}} = \expval{X(\onerowdiagramunLabeled{6}{2/1,3/2,5/4,6/5})}.
\end{equation}
Writing
\begin{equation}
	\sum_{i,j} X_{ii} X_{ij} X_{jj} = \sum_{i_1, \dots i_6} \delta^{i_1 i_2}\delta^{i_2 i_3} \delta^{i_4 i_5} \delta^{i_5 i_6}X_{i_1 i_2} X_{i_3 i_4} X_{i_5 i_6},
\end{equation}
the contribution corresponds to
\begin{equation}
	\sum_{i_1, \dots i_6} \delta^{i_1 i_2}\delta^{i_2 i_3} \delta^{i_4 i_5} \delta^{i_5 i_6} \delta_{i_1 i_2} \delta_{i_2 i_3} \delta_{i_4 i_5}  = \N^2.
\end{equation}
Our algorithm is based on the simple observation that the power of $\N$ on the r.h.s. is equal to the number of components in the join
\begin{equation}
	\onerowdiagramunLabeled{6}{2/1,3/2,5/4,6/5} \join \onerowdiagramunLabeled{6}{2/1,3/2, 5/4} = \PAdiagram{6}{-1/1,-2/2,-3/3,-4/4,-5/5/,-6/6}{-2/-1,-3/-2,-5/-4,-6/-5,2/1,3/2,5/4} = \onerowdiagramunLabeled{6}{2/1,3/2,5/4,6/5}.
\end{equation}
We give the general definition.
\begin{definition}[Join] \label{def: join}
	The join $\pi \join {\pi'}$ of two 1-row diagrams $\pi, \pi' \in \setpart{[k]}$ is constructed as follows. Place $\pi $ above $\pi'$ and connect the $i$th vertex of $\pi$ to the $i$th vertex of $\pi'$. Simplify the diagram into a 1-row diagram $\pi \join {\pi'}$ where vertex $i$ is connected to vertex $j$ if they are in the same part of the above 2-row diagram.
\end{definition}

Given this observation and \eqref{eq: exp val as 1row sum} we have the following result.
\begin{corollary}
Let $\pi \in \setpart{[2k]}$ be a 1-row diagram and $X(\pi)$ the corresponding polynomial. Then the expectation value
\begin{equation}
	\expval{X(\pi)} = \sum_{\pi' \in \setpart{[2k]}} c_{\pi'} \N^\abs{\pi \join \pi'},
\end{equation}
where $\abs{\pi \join \pi'}$ is the number of components in the join of $\pi$ and $\pi'$.
\end{corollary}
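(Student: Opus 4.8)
The statement is essentially a bookkeeping corollary of the expansion \eqref{eq: exp val as 1row sum}, so the plan is to reduce everything to the single scalar identity relating the contraction of two 1-row diagrams to the number of connected components of their join. First I would start from Wick's theorem \eqref{eq: wicks thm} applied to $\expval{X(\pi)} = \sum_{i_1,\dots,i_{2k}} \pi^{i_1\dots i_{2k}} \expval{X_{i_1 i_2}\cdots X_{i_{2k-1} i_{2k}}}$, where $\pi^{i_1\dots i_{2k}}$ denotes the product of Kronecker deltas encoding the set partition $\pi$. Substituting the expansion of the right-hand side from \eqref{eq: exp val as 1row sum}, namely $\expval{X_{i_1 i_2}\cdots X_{i_{2k-1} i_{2k}}} = \sum_{\pi' \in \setpart{[2k]}} c_{\pi'}(\N)\, (\pi')^{i_1 \dots i_{2k}}$ (again writing $(\pi')^{i_1\dots i_{2k}}$ for the associated product of deltas), one immediately gets
\begin{equation}
	\expval{X(\pi)} = \sum_{\pi' \in \setpart{[2k]}} c_{\pi'}(\N) \sum_{i_1,\dots,i_{2k}} \pi^{i_1\dots i_{2k}} (\pi')^{i_1\dots i_{2k}}.
\end{equation}
So the whole content is the evaluation of the inner sum.

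The key lemma I would isolate and prove is: for any two set partitions $\pi,\pi' \in \setpart{[2k]}$,
\begin{equation}
	\sum_{i_1,\dots,i_{2k}=1}^{\N} \pi^{i_1\dots i_{2k}} (\pi')^{i_1\dots i_{2k}} = \N^{\abs{\pi \join \pi'}}.
\end{equation}
The argument is combinatorial: the product $\pi^{i_1\dots i_{2k}} (\pi')^{i_1\dots i_{2k}}$ is $1$ precisely when the index assignment $a \mapsto i_a$ is constant on every block of $\pi$ and constant on every block of $\pi'$, i.e.\ constant on every block of the finest common coarsening $\pi \vee \pi'$. By Definition~\ref{def: join} the join $\pi \join \pi'$ is exactly this common coarsening (the $2$-row diagram obtained by stacking $\pi$ over $\pi'$ and joining vertex $i$ to vertex $i$, then reading off connected components), so the number of free index choices is $\N$ raised to the number of blocks of $\pi \join \pi'$, which is $\abs{\pi \join \pi'}$. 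I would present this cleanly by noting that an index tuple contributing $1$ is the same data as a function from the set of connected components of the stacked diagram to $\{1,\dots,\N\}$. Plugging the lemma back in gives $\expval{X(\pi)} = \sum_{\pi'} c_{\pi'}(\N)\, \N^{\abs{\pi\join\pi'}}$, which is the claim; I would also remark that writing $c_{\pi'}$ rather than $c_{\pi'}(\N)$ matches the corollary's notation, the $\N$-dependence of the coefficients being understood from \eqref{eq: exp val as 1row sum}.

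The only mildly delicate point — and the one I'd spend the most care on — is making sure the translation conventions line up: the expansion \eqref{eq: exp val as 1row sum} is produced by expanding the one- and two-point functions via \eqref{eq: matrix basis 1pt} and \eqref{eq: matrix basis con 2pt} in terms of matrix units $Q^{[\N]}, Q^{[\N-1,1]}$ and $Q^{\lambda;\alpha\beta}$, each of which is a linear combination of $2$-row partition diagrams, and these must be converted to $1$-row diagrams by the "place the top row to the right of the bottom row" rule of Section~\ref{subsec: 1row to 2-row}; the tensor products appearing in Wick's theorem must be turned into disjoint unions of $1$-row diagrams with the relabelling described there. I would state explicitly that once this translation is performed consistently, the object $\pi^{i_1\dots i_{2k}}$ attached to a $1$-row diagram $\pi$ really is the product $\prod \delta$ over the edges of $\pi$, so that the index-counting lemma applies verbatim. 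No genuine obstacle remains beyond this bookkeeping; the proof is short once the lemma is in hand.
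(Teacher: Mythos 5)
Your proposal is correct and follows the same route as the paper: the paper obtains the corollary by combining the Wick expansion \eqref{eq: exp val as 1row sum} with the observation (illustrated there by a toy example, and proved in the same spirit as Proposition \ref{prop: trace in diagram basis}) that contracting the Kronecker-delta tensors of two 1-row diagrams yields $\N$ raised to the number of blocks of their join. You simply make that key contraction identity explicit as a lemma, which is a fine way to write it up.
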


The Sage code implementing this algorithm is detailed in Appendix \ref{apx: EV code}.
Instead of explaining this implementation here we will exemplify the algorithm outlined above for the cases of $k=1,2$.

\subsection{Example 1.}
For $k=1$ the only contribution to the expectation value comes from Clebsch-Gordan coefficients
\begin{equation}
	\expval{X_{i_1 i_2}} = J^\beta\qty(
	\frac{(G^{-1})_{[\N]; \vactab_1 \beta}}{\N} \onerowdiagram{2}{}{} +
	\frac{(G^{-1})_{[\N]; \vactab_2 \beta}}{\sqrt{\N-1}} \onerowdiagram{2}{}{2/1} - \frac{(G^{-1})_{[\N];\vactab_2 \beta}}{\N\sqrt{\N-1}} \onerowdiagram{2}{}{}).
\end{equation}
Therefore, Wick's theorem gives
\begin{align}
	c_{1|2}(\N) &= \frac{J^\beta}{\N}\qty((G^{-1})_{[\N]; \vactab_1\beta} - \frac{(G^{-1})_{[\N];  \vactab_2 \beta}}{\sqrt{\N-1}}) \\
	c_{12}(\N)  &= \frac{J^\beta(G^{-1})_{[\N]; \vactab_2\beta}}{\sqrt{\N-1}}.
\end{align}
This allows us to compute the expectation value of $X(1|2)$
\begin{align}
	\expval{X(1|2)} &= c_{1|2}(\N)\N^{\abs{1|2 \join 1|2}} + c_{12}(\N)\N^{\abs{1|2 \join 12}} =  c_{1|2}(\N)\N^{2} + c_{12}(\N)\N^{1} \\
	&=  {J^\beta}\N (G^{-1})_{[\N]; \vactab_1\beta}
\end{align}
Similarly for $X(12)$
\begin{align}
		\expval{X(12)} &= c_{1|2}(\N)\N^{\abs{12 \join 1|2}} + c_{12}(\N)\N^{\abs{12 \join 12}} =  c_{1|2}(\N)\N^{1} + c_{12}(\N)\N^{1} \\
		&= J^\beta \qty((G^{-1})_{[\N]; \vactab_1 \beta} + (\N-1)\frac{(G^{-1})_{[\N]; \vactab_2 \beta}}{\sqrt{\N-1}}).
\end{align}

In what follows, it will be useful to define
\begin{equation}
	\mu_1 = J^\alpha (G^{-1})_{[\N],\vactab_1,\alpha}, \quad \mu_2 = J^\alpha (G^{-1})_{[\N],\vactab_2,\alpha},
\end{equation}
such that
\begin{equation}
	\expval{X(12)} = \mu_1 + \sqrt{\N-1}\mu_2
\end{equation}

\subsection{Example 2.}
The case of $k=2$ is more interesting because the tensor product of 1-row diagrams enters the computation. We will not compute the full contributions here, but from Wick's theorem we have a contribution
\begin{equation}
	\expval{X_{i_1 i_2}}\expval{X_{i_3 i_4}}
\end{equation}
to the degree two expectation value. We will study this contribution because it showcases the tensor product of 1-row diagrams. In the previous section we gave the diagrammatic expansion of the degree one expectation values. The tensor product we want to compute is
\begin{align}
		&\qty(\frac{\mu_1}{\N} \onerowdiagram{2}{}{} +
			\frac{\mu_2}{\sqrt{\N-1}} \onerowdiagram{2}{}{2/1} -\frac{\mu_2}{\N\sqrt{\N-1}} \onerowdiagram{2}{}{})
			\otimes	\qty(\frac{\mu_1}{\N} \onerowdiagram{2}{}{} +
			\frac{\mu_2}{\sqrt{\N-1}} \onerowdiagram{2}{}{2/1} -\frac{\mu_2}{\N\sqrt{\N-1}} \onerowdiagram{2}{}{}) \nonumber \\
	&=\left(\begin{aligned}
		&\frac{\mu_1 \mu_1}{\N^2}\onerowdiagram{4}{}{} + \frac{\mu_1 \mu_2 }{\N\sqrt{\N-1}}\onerowdiagram{4}{}{4/3}  -\frac{\mu_1\mu_2
		}{\N^2\sqrt{\N-1}} \onerowdiagram{4}{}{}\\
		& +	\frac{\mu_1 \mu_2
		}{\N \sqrt{\N-1}} \onerowdiagram{4}{}{2/1} +\frac{\mu_2\mu_2
		}{\N-1}  \onerowdiagram{4}{}{2/1,4/3} -\frac{\mu_2\mu_2
		}{\N(\N-1)}  \onerowdiagram{4}{}{2/1} \\
		&-\frac{\mu_1 \mu_2
		}{\N^2\sqrt{\N-1}}  \onerowdiagram{4}{}{} -\frac{\mu_2\mu_2
		}{\N(\N-1)}  \onerowdiagram{4}{}{4/3} +\frac{\mu_2\mu_2
		}{\N^2\N-1}  \onerowdiagram{4}{}{}
	\end{aligned}\right)
\end{align}
Therefore, we find that it contributes to $c_{12|34}(\N)$ by
\begin{equation}
	J^\alpha J^\beta \frac{\mu_2 \mu_2}{\N-1}  
\end{equation}
and $c_{1|2|3|4}(\N)$ by
\begin{equation}
	\frac{1}{\N^2}\qty(\mu_1 - \frac{\mu_2}{\sqrt{\N-1}} )^2
\end{equation}

Having computed some of the $c_\pi(\N)$, we can compute their contributions to expectation values of degree two observables. For example $X(1|23|4)$ receives contributions from the above given by
\begin{equation}
	c_{12|34}(\N)\N^{\abs{1|23|4 \join 12|34}} +c_{1|2|3|4}(\N) \N^{\abs{1|23|4 \join 1|2|3|4}} = c_{12|34}(\N)\N^{1} +c_{1|2|3|4}(\N) \N^{3}.
\end{equation}

\subsection{Normalization}\label{subsection: normalization}
In the next subsection we will compare the results of this algorithm to the previously know results \cite{Kartsaklis2017, Ramgoolam2019a, Barnes2022b}. Before this, we need to fix the normalization of the $P_2(\N)$ matrix units, or equivalently the constants $(G^{-1})_{\lambda; \alpha \beta}$ multiplying them.

To fix the normalization, consider the combination
\begin{equation}
	\sum_{i,j,k,l} \expval{X_{ij}X_{kl}}_c \expval{X_{kl}X_{ij}}_c = (G^{-1})_{\lambda, \alpha \beta}(G^{-1})_{\lambda'; \alpha' \beta'} Q^{\lambda; \alpha \beta}_{ij;kl}  Q^{\lambda'; \alpha' \beta'}_{kl;ij}.
\end{equation}
From the normalization condition \eqref{eq: CGC norm} we find
\begin{equation}
	\sum_{i,j,k,l} \expval{X_{ij}X_{kl}}_c \expval{X_{kl}X_{ij}}_c = \sum_{\lambda, \alpha, \beta} (G^{-1})_{\lambda, \alpha \beta}(G^{-1})_{\lambda, \alpha \beta}\dimSN{\lambda},
\end{equation}
where we have used the fact that parameter matrices are symmetric to put it into a sum of squares form.

We will now compute this using the matrix units in Appendix \ref{apx: P2N units}. Matching the factor of $\dimSN{\lambda}$ fixes the normalization up to signs. To keep equations short, we introduce the following short-hands for vacillating tableaux of length $k=2$
\begin{align}
	&\vactab_1 = ([\N], [\N-1], [\N], [\N-1], [\N]),\\
	&\vactab_2 = ([\N], [\N-1], [\N-1,1], [\N-1], [\N]) \\
	&\vactab_3 = ([\N], [\N-1], [\N], [\N-1], [\N-1,1]), \\
	&\vactab_4 = ([\N], [\N-1], [\N-1,1], [\N-1], [\N-1,1]) \\
	&\vactab_5 = ([\N], [\N-1], [\N-1,1], [\N-2,1], [\N-1,1]), \\
	&\vactab_6 = ([\N], [\N-1], [\N-1,1], [\N-2,1], [\N-2,2]), \\
	&\vactab_7 = ([\N], [\N-1], [\N-1,1], [\N-2,1], [\N-2,1,1]).
\end{align}
In this notation, computing the above combination of correlators using the matrix units in \ref{apx: P2N units} gives
\begin{equation}
	\begin{aligned}[t]
	&\frac{1}{\N^2}(G^{-1})_{[\N]; \vactab_1 \vactab_1 }^2+
	\frac{(\N-1)}{2\N^2}(G^{-1})_{[\N]; \vactab_2 \vactab_1 }^2+
	\frac{(\N-1)^2}{\N^2}(G^{-1})_{[\N]; \vactab_2 \vactab_2 }^2 +\\
	&\frac{(\N-1)}{\N^2}(G^{-1})_{[\N-1,1]; \vactab_3 \vactab_3 }^2+ \frac{(\N-1)^2}{2\N^2}(G^{-1})_{[\N-1,1]; \vactab_4 \vactab_3 }^2+ \frac{(\N-1)^3}{\N^2}(G^{-1})_{[\N-1,1]; \vactab_4 \vactab_4 }^2 +\\ &\frac{(\N-2)}{2}(G^{-1})_{[\N-1,1]; \vactab_5 \vactab_3 }^2+ 
	\frac{(\N-1)(\N-2	)}{2}(G^{-1})_{[\N-1,1]; \vactab_5 \vactab_4 }^2+ \frac{1}{(\N-1)}(G^{-1})_{[\N-1,1]; \vactab_5 \vactab_5 }^2+\\ &\frac{\N(\N-3)}{2}(G^{-1})_{[\N-2,2]; \vactab_6 \vactab_6 }^2+ \frac{(\N-1)(\N-2)}{2}(G^{-1})_{[\N-2,1,1]; \vactab_7 \vactab_7 }^2.
	\end{aligned}
\end{equation}
Therefore, the correct normalization of matrix units is given by
\begin{equation}\label{eq: matrix units normalizations}
	\begin{aligned}[t]
	&Q^{[\N]}_{\vactab_1 \vactab_1} \mapsto \N Q^{[\N]}_{\vactab_1 \vactab_1} \\
	&Q^{[\N]}_{\vactab_2 \vactab_1} \mapsto \N \sqrt{\frac{2}{\N-1}}Q^{[\N]}_{\vactab_2 \vactab_1} \\
	&Q^{[\N]}_{\vactab_2 \vactab_2} \mapsto {\frac{\N}{(\N-1)}}Q^{[\N]}_{\vactab_2 \vactab_2} \\
	&Q^{[\N-1,1]}_{\vactab_3 \vactab_3} \mapsto \N Q^{[\N-1,1]}_{\vactab_3 \vactab_3} \\
	&Q^{[\N-1,1]}_{\vactab_4 \vactab_3} \mapsto \N \sqrt{\frac{2}{\N-1}}Q^{[\N-1,1]}_{\vactab_4 \vactab_3} \\
	&Q^{[\N-1,1]}_{\vactab_4 \vactab_4} \mapsto {\frac{\N}{(\N-1)}}Q^{[\N-1,1]}_{\vactab_4 \vactab_4} \\	
	&Q^{[\N-1,1]}_{\vactab_5 \vactab_3} \mapsto \sqrt{\frac{2(\N-1)}{(\N-2)}}Q^{[\N-1,1]}_{\vactab_5 \vactab_3} \\	
	&Q^{[\N-1,1]}_{\vactab_5 \vactab_4} \mapsto \sqrt{\frac{2}{\N-2}}Q^{[\N-1,1]}_{\vactab_5 \vactab_4} \\
	&Q^{[\N-1,1]}_{\vactab_5 \vactab_5} \mapsto (N-1)Q^{[\N-1,1]}_{\vactab_5 \vactab_5}.
	\end{aligned}
\end{equation}
while the $\lambda = [N-2,2], [N-2,1,1]$ units are correctly normalized (up to signs) as stated in Appendix \ref{apx: P2N units}. Alternatively, the same transformations can be performed on the parameters $(G^{-1})_{\lambda; \alpha \beta}$.

\subsection{Matching results}
Having fixed the normalizations in the previous subsection, we can now compare the outputs of our algorithm to previous results. Since we are using a different orthonormal basis for the multiplicity space compared to \cite{Ramgoolam2019a}, we should expect the coupling constants to be related by a change of basis. We will now compute the set of degree two expectation values necessary to set up a system of linear equations for finding the change of basis by comparing the results of \cite[Section 3]{Ramgoolam2019a}.

Computing the expectation value $\expval{X(13|24)}=\sum_{i,j} \expval{X_{ij} X_{ij}}$ using the Sage code gives
\begin{align}
		&\mu_1^2 + \mu_2^2 + (G^{-1})_{[\N]; \vactab_1 \vactab_1}+(G^{-1})_{[\N]; \vactab_2 \vactab_2}
		+(\N-1)(G^{-1})_{[\N-1,1]; \vactab_3 \vactab_3} \\
		&+(\N-1)(G^{-1})_{[\N-1,1]; \vactab_4 \vactab_4}
		+(\N-1)(G^{-1})_{[\N-1,1]; \vactab_5 \vactab_5}\\
		&+\frac{\N(\N-3)}{2}(G^{-1})_{[\N-2,2]; \vactab_6 \vactab_6}
		-\frac{(\N-1)(\N-2)}{2}(G^{-1})_{[\N-2,1,1]; \vactab_7 \vactab_7}.
\end{align}

The expectation value $\expval{X(14|23)}=\sum_{i,j} \expval{X_{ij} X_{ji}}$ is equal to
\begin{align}
		&\mu_1^2 + \mu_2^2 + (G^{-1})_{[\N]; \vactab_1 \vactab_1}+(G^{-1})_{[\N]; \vactab_2 \vactab_2}
		+\sqrt{2(\N-1)}(G^{-1})_{[\N-1,1]; \vactab_4 \vactab_3} \\
		&+\sqrt{2(\N-1)(\N-2)}(G^{-1})_{[\N-1,1]; \vactab_5 \vactab_3}
		-\sqrt{2(\N-2)}(G^{-1})_{[\N-1,1]; \vactab_5 \vactab_4} \\
		&+\sqrt{2(\N-2)}(G^{-1})_{[\N-1,1]; \vactab_5 \vactab_5} \\
		&+\frac{\N(\N-3)}{2}(G^{-1})_{[\N-2,2]; \vactab_6 \vactab_6}
		+\frac{(\N-1)(\N-2)}{2}(G^{-1})_{[\N-2,1,1]; \vactab_7 \vactab_7}.
\end{align}

The expectation value $\expval{X(123|4)}=\sum_{i,j} \expval{X_{ii} X_{ij}}$ is equal to
\begin{align}
		&\mu_1^2 +
		\sqrt{\N-1}\mu_1 \mu_2+
		(G^{-1})_{[\N]; \vactab_1 \vactab_1}+
		\sqrt{\frac{\N-1}{2}}(G^{-1})_{[\N]; \vactab_2 \vactab_1}\\
		&+ \sqrt{\frac{\N-1}{2}}(\N-1)(G^{-1})_{[\N-1,1]; \vactab_4 \vactab_3}
		+(\N-1)(G^{-1})_{[\N-1,1]; \vactab_3 \vactab_3}
\end{align}

The expectation value $\expval{X(124|3)}=\sum_{i,j} \expval{X_{ii} X_{ji}}$ is equal to
\begin{align}
		&\mu_1^2 +
		\sqrt{\N-1}\mu_1 \mu_2+
		(G^{-1})_{[\N]; \vactab_1 \vactab_1}+
		\sqrt{\frac{\N-1}{2}}(G^{-1})_{[\N]; \vactab_2 \vactab_1}\\
		&+ \sqrt{\frac{\N-1}{2}}(G^{-1})_{[\N-1,1]; \vactab_4 \vactab_3} 
		+(\N-1)(G^{-1})_{[\N-1,1]; \vactab_4 \vactab_4} \\
		&+\sqrt{\frac{(\N-1)(\N-2)}{2}}(G^{-1})_{[\N-1,1]; \vactab_5 \vactab_3}
		+\sqrt{\frac{(\N-2)}{2}}(\N-1)(G^{-1})_{[\N-1,1]; \vactab_5 \vactab_4}
\end{align}

The expectation value $\expval{X(13|2|4)}=\sum_{i,j,k} \expval{X_{ij} X_{ik}}$ is equal to
\begin{align}
		&\mu_1^2 + \N(G^{-1})_{[\N]; \vactab_1 \vactab_1}+
		\sqrt{2(\N-2)}\N(G^{-1})_{[\N]; \vactab_5 \vactab_4}\\
		&+(\N-2)(G^{-1})_{[\N-1,1]; \vactab_5 \vactab_5} 
		+\N(G^{-1})_{[\N-1,1]; \vactab_4 \vactab_4}
\end{align}

The expectation value $\expval{X(24|1|3)}=\sum_{i,j,k} \expval{X_{ij} X_{kj}}$ is equal to
\begin{equation}
		\N \mu_1^2 +
		\N (G^{-1})_{[\N]; \vactab_1 \vactab_1}+
		(\N-1)(G^{-1})_{[\N]; \vactab_3 \vactab_3}
\end{equation}

The expectation value $\expval{X(23|1|4)}=\sum_{i,j} \expval{X_{ij} X_{jk}}$ is equal to
\begin{align}
		&\N \mu_1^2 +
		\N (G^{-1})_{[\N]; \vactab_1 \vactab_1}+
		\N\sqrt{\frac{\N-1}{2}}(G^{-1})_{[\N]; \vactab_4 \vactab_3} \\
		&+ \N \sqrt{\frac{(\N-1)(\N-2)}{2}}(G^{-1})_{[\N-1,1]; \vactab_5 \vactab_3}
\end{align}

The expectation value $\expval{X(1|2|3|4)}=\sum_{i,j,k,l} \expval{X_{ij} X_{kl}}$ is equal to
\begin{equation}
		\N^2 \mu_1^2 +	\N^2 (G^{-1})_{[\N]; \vactab_1 \vactab_1}
\end{equation}

The expectation value $\expval{X(1234)}=\sum_{i} \expval{X_{ii} X_{ii}}$ is equal to
\begin{align}
		&\frac{1}{\N}\mu_1^2 +
		\frac{\N-1}{\N} \mu_2^2+
		2\frac{\sqrt{\N-1}}{\N}\mu_1 \mu_2 + \frac{1}{\N}(G^{-1})_{[\N]; \vactab_1 \vactab_1}+
		\frac{\sqrt{2(\N-1)}}{\N}(G^{-1})_{[\N]; \vactab_2 \vactab_1}+\\
		&\frac{(\N-1)}{\N}(G^{-1})_{[\N]; \vactab_2 \vactab_2}
		+\frac{(\N-1)}{\N}(G^{-1})_{[\N-1,1]; \vactab_3 \vactab_3}\\
		&+\frac{\sqrt{2(\N-1)}}{\N}(\N-1)(G^{-1})_{[\N-1,1]; \vactab_4 \vactab_3}
		+\frac{(\N-1)^2}{\N}(G^{-1})_{[\N-1,1]; \vactab_4 \vactab_4}
\end{align}

The expectation value $\expval{X(12|34)}=\sum_{i,j} \expval{X_{ii} X_{jj}}$ is equal to
\begin{align}
		&\mu_1^2 +
		{\N-1} \mu_2^2+
		2\sqrt{\N-1}\mu_1 \mu_2 +
		(G^{-1})_{[\N]; \vactab_1 \vactab_1}+\\
		&\sqrt{2(\N-1)}(G^{-1})_{[\N]; \vactab_2 \vactab_1}+
		{(\N-1)}(G^{-1})_{[\N]; \vactab_2 \vactab_2}
\end{align}

The expectation value $\expval{X(12|3|4)}=\sum_{i,j,k} \expval{X_{ii} X_{jk}}$ is equal to
\begin{equation}
		\N\mu_1^2 +
		\N \sqrt{\N-1}\mu_1 \mu_2 + 
		\N (G^{-1})_{[\N]; \vactab_1 \vactab_1}+
		\sqrt{\frac{(\N-1)}{2}}(G^{-1})_{[\N]; \vactab_2 \vactab_1}
\end{equation}

As a proof of concept, we can relate the constants $(G^{-1})_{[\N], \alpha \beta}$ to the constants $(\Lambda^{-1}_{V_0})_{11}, (\Lambda^{-1}_{V_0})_{12}, (\Lambda^{-1}_{V_0})_{22}$ in \cite{Ramgoolam2019a}. Comparing the expectation values of $X(13|24), X(123|4)$ and $X(13|2|4)$ gives
\begin{equation}
	\mqty(1 & 0 & 1 \\ 1 & \sqrt{\frac{\N-1}{2}} & 0 \\ \N & 0 & 0) \mqty((G^{-1})_{[\N], \vactab_1 \vactab_1} \\ (G^{-1})_{[\N], \vactab_2 \vactab_1} \\ (G^{-1})_{[\N], \vactab_2 \vactab_2}) = \mqty(1 & 0 & 1 \\ 1 & \sqrt{\N-1} & 0 \\ \N & 0 & 0) \mqty((\Lambda^{-1}_{V_0})_{11} \\ (\Lambda^{-1}_{V_0})_{12} \\ (\Lambda^{-1}_{V_0})_{22}).
\end{equation}
Inverting the matrix on the l.h.s. gives
\begin{align}
	&(G^{-1})_{[\N], \vactab_1 \vactab_1} = (\Lambda^{-1}_{V_0})_{11} \\
	&(G^{-1})_{[\N], \vactab_2 \vactab_1} = \frac{1}{\sqrt{2}}(\Lambda^{-1}_{V_0})_{12} \\
	&(G^{-1})_{[\N], \vactab_2 \vactab_2} = (\Lambda^{-1}_{V_0})_{22}.
\end{align}
Identical procedures will relate the constants $(G^{-1})_{[\N-1,1], \alpha \beta}$ to $(\Lambda^{-1}_{V_H})$ in \cite{Ramgoolam2019a}. We do not perform this procedure here.

\section{Graph counting: Graph generating permutation diagrams}\label{subsec: graph counting}
By Proposition \ref{prop: graphs equals trace}, observables can be enumerated by directed graphs. We will now set up a scheme for counting and constructing graphs using group theory. By generalizing the double coset description of directed graphs introduced in \cite{deMelloKoch:2011uq, MelloKoch2012} we can enumerate invariants using appropriate equivalence classes of permutations, which define double cosets.

It is useful to describe the local structure (number of incoming and outgoing edges at each vertex) of a directed graph using vector partitions.
\begin{definition}[Vector partition]
	A vector partition of the vector $(K_1, K_2, \dots, K_c) \in \mathbb{Z}_+^{\times c}$ is a set of vectors $\{(k^{(i)}_1, \dots, k^{(i)}_c)\}_{i=1}^l$ satisfying
	\begin{equation}
		(K_1, \dots, K_c) = \sum_{i=1}^l (k^{(i)}_1, \dots, k^{(i)}_c).
	\end{equation}
	We call $l$ the number of parts.
\end{definition}
Let $G$ be a directed graph with $k$ edges and $l$ vertices. We record the number of outgoing and incoming edges at each vertex in terms of a set of $l$ pairs $(k^+_i, k^-_i)$, where $k^+_i(k^-_i)$ is the number of outgoing (incoming) edges at vertex $i$. We use the set of pairs to define a vector partition
\begin{equation}
	\label{eq: One Colour Graph Vector Partition}
	(k,k) = (k^+_1, k^-_1)+ \dots +(k^+_l, k^-_l), \quad 0 \leq k^\pm_i \leq k.
\end{equation}
We also define the vectors
\begin{align}
	\vec{k}^+ &= (k^+_1, \dots, k^+_l)\\
	\vec{k}^- &= (k^-_1, \dots, k^-_l),
\end{align}
and introduce the shorthand $(\vec{k}^+,\vec{k}^-)$ for the corresponding vector partition.
\begin{example}
	For example, the graph in Figure \ref{fig: One Colour Double Coset Graph} is associated with the vector partition $(\vec{k}^+,\vec{k}^-)=\{(3,0),(1,2),(1,1),(0,2)\}$.
\end{example}

The set of directed graphs with $k$ edges and local structure $(\vec{k}^+,\vec{k}^-)$ can be generated using permutations $\sn \in S_k$. This is illustrated in Figure \ref{fig: One Colour Double Coset Graph _a}. We call a diagram of this type a graph generating permutation diagram (GGPD). 
\begin{definition}[Graph Generating Permutation Diagram (GGPD)]	 \label{def: GGPD}
	A GGPD is specified by a vector partition $(\vec{k}^+,\vec{k}^-)$ with $l$ parts and the following choices
	\begin{enumerate}
		\item Label the set of vertices using $\{1,\dots,l\}$ and order them in ascending order from left to right.
		\item The $i$th vertex has $k_i^+$ outgoing edges emanating north of the vertex, and $k_i^-$ incoming edges emanating south of the vertex.
		\item Label the incoming edges using $\{1,\dots,k\}$. Pick an order for the incoming edge labels. For example, we will use $1<2<\dots<k$ from left to right as seen in Figure \ref{fig: One Colour Double Coset Graph}.
		\item Label and order the outgoing edges similarly.
	\end{enumerate}
	The edges in a GGPD are connected to vertices using the following rules.
	\begin{enumerate}
		\item Apply a permutation $ \sigma $ to the labels of the outgoing edges, which corresponds in Figure  
		\ref{fig: One Colour Double Coset Graph _a} to a re-ordering of the edges coming into the $ \sigma$-box from below before they emerge at the top. 
		\item Identify the end-points on the top line which have incoming lines to the points on the bottom line directly below them, with outgoing lines. 
	\end{enumerate}
\end{definition}
\begin{example}
	In Figure \ref{fig: One Colour Double Coset Graph _a} we take the incoming (and outgoing) edges as initially labeled $1,2,\dots, 5$ (from left to right). For $\sigma = (34)$ the third edge on the first vertex is swapped with the edge on the second vertex and we arrive at the graph in Figure \ref{fig: One Colour Double Coset Graph _b}.
	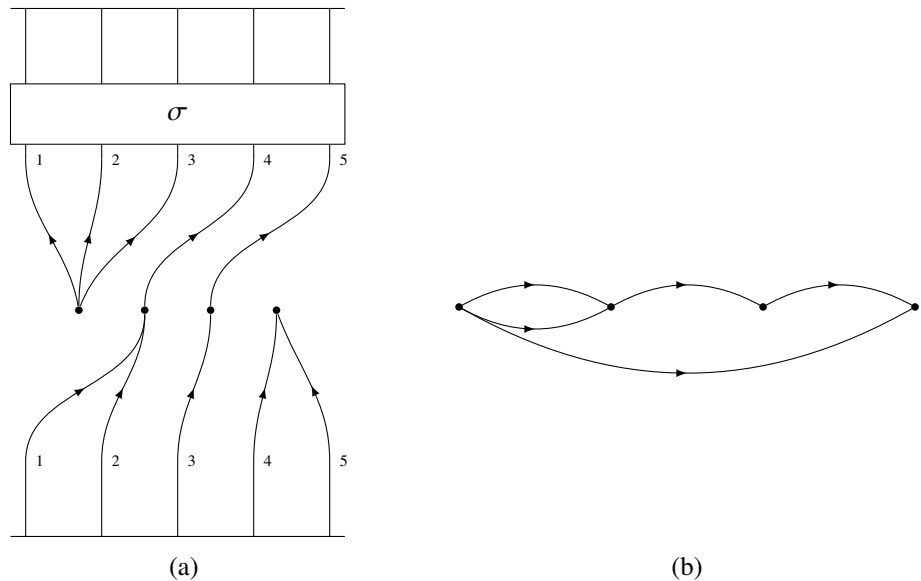
\begin{figure}[h!]
		\centering
		\subcaptionbox{\label{fig: One Colour Double Coset Graph _a}}[0.4\textwidth]
		{\begin{tikzpicture}[scale=2,baseline]
				\def \k {3}
				\def \m {4}
				\def \sep {0.5}
				\def \voffset {0.35}
				\pgfmathparse{(\sep*(\m)-2*\voffset)/\k};
				\pgfmathsetmacro{\vsep}{\pgfmathresult};
				\pgfmathint{\k-1};
				\pgfmathsetmacro{\kk}{\pgfmathresult};
				\foreach \v in {0,...,\k}
				{
					\pgfmathparse{\v*\vsep+\voffset}
					\node[circle, fill, inner sep=1pt](v\v) at (\pgfmathresult,0) {};
				}
				\foreach \eOutm in {0,...,\m}
				{
					\pgfmathint{\eOutm+1};
					\pgfmathsetmacro{\seOutm}{\pgfmathresult};
					\pgfmathparse{\eOutm*\sep};
					\coordinate (eom\seOutm) at (\pgfmathresult,2);
				}
				\foreach \eOutm in {0,...,\m}
				{
					\pgfmathint{\eOutm+1};
					\pgfmathsetmacro{\seOutm}{\pgfmathresult};
					\pgfmathparse{\eOutm*\sep};
					\coordinate (eomm\seOutm) at (\pgfmathresult,1);
					\node[right, node distance = 0pt and 0pt] at (eomm\seOutm) {\tiny \seOutm};
					\draw[] (eom\seOutm) -- (eomm\seOutm);
				}
				\foreach \eInm in {0,...,\m}
				{
					\pgfmathint{\eInm+1};
					\pgfmathsetmacro{\seInm}{\pgfmathresult};
					\pgfmathparse{\eInm*\sep};
					\coordinate (eim\seInm) at (\pgfmathresult,-1.5);
				}
				\foreach \eInm in {0,...,\m}
				{
					\pgfmathint{\eInm+1};
					\pgfmathsetmacro{\seInm}{\pgfmathresult};
					\pgfmathparse{\eInm*\sep};
					\coordinate (eimm\seInm) at (\pgfmathresult,-1);
					\node[right, node distance = 0pt and 0pt] at (eimm\seInm) {\tiny \seInm};
					\draw[] (eim\seInm) -- (eimm\seInm);
				}
				\begin{scope}[decoration={markings, mark=at position 0.5 with \arrow{latex}}]
					\draw[postaction={decorate}] (v0) to[out=100,in=-90] (eomm1);
					\draw[postaction={decorate}] (v0) to[out=90,in=-90] (eomm2);
					\draw[postaction={decorate}] (v0) to[out=70,in=-90] (eomm3);
					\draw[postaction={decorate}] (v1) to[out=90,in=-90] (eomm4);
					\draw[postaction={decorate}] (v2) to[out=90,in=-90] (eomm5);
					\draw[postaction={decorate}] (eimm1) to[out=90,in=-90] (v1);
					\draw[postaction={decorate}] (eimm2) to[out=90,in=-90] (v1);
					\draw[postaction={decorate}] (eimm3) to[out=90,in=-90] (v2);
					\draw[postaction={decorate}] (eimm4) to[out=90,in=-90] (v3);
					\draw[postaction={decorate}] (eimm5) to[out=90,in=-70] (v3);
				\end{scope}
				\draw[fill=white] ($(eomm1)+(-0.1,0.5)$) rectangle node{$\sigma$} ($(eomm5)+(0.1,0.1)$);
				\draw ($(eom1)-(0.1,0)$) -- ($(eom5)+(0.1,0)$);
				\draw ($(eim1)-(0.1,0)$) -- ($(eim5)+(0.1,0)$);
		\end{tikzpicture}}
		\subcaptionbox{\label{fig: One Colour Double Coset Graph _b}}[0.5\textwidth]
		{\begin{tikzpicture}[scale=2,baseline]
				\begin{scope}[decoration={markings, mark=at position 0.5 with \arrow{latex}}]
					\node[circle, fill, inner sep=1pt] at (0,0) {};
					\node[circle, fill, inner sep=1pt] at (1,0) {};	
					\node[circle, fill, inner sep=1pt] at (2,0) {};	
					\node[circle, fill, inner sep=1pt] at (3,0) {};
					\draw[postaction={decorate}] (0,0) to[bend left] (1,0);
					\draw[postaction={decorate}] (0,0) to[bend right] (1,0);
					\draw[postaction={decorate}] (0,0) to[bend right] (3,0);
					\draw[postaction={decorate}] (1,0) to[bend left] (2,0);
					\draw[postaction={decorate}] (2,0) to[bend left] (3,0);
				\end{scope}
				\node[circle, fill, inner sep=1pt,opacity=0] at (0,-1.5) {};			\end{tikzpicture}}
		\caption{Directed graphs of a fixed type (determined by a vector partition) correspond to a permutation $\sigma \in S_k$, where $k$ is the number of edges. (a) illustrates the correspondence with an example where the graph type is a vector partition $(5,5) = (3,0)+(1,2)+(1,1)+(0,2)$. (b) is the graph constructed from this vector partition with the permutation $\sigma = (34)$.}
		\label{fig: One Colour Double Coset Graph}
	\end{figure}
\end{example}

By scanning over all $\sigma \in S_k$ we can construct all graphs consistent with the local structure defined by the vector partition. However, some permutations lead to equivalent graphs. We use this to define an equivalence relation on permutations $ \sigma \in S_k $. The permutations within an equivalence class lead to different labellings of the same graph. We will now describe this equivalence relation in detail.

\subsection{Edge symmetry groups.}
There are two parts to the equivalence relation on permutations $\sigma \in S_k$. We will first describe the part due to edge permutation symmetry. The second part, described in the next subsection is due to vertex permutation symmetry.

Given a GGDP we can define two ordered lists of ordered lists
\begin{align}
	&[ K_1^+  , K_2^+ , \cdots , K_l^+ ], \\
	&[ K_1^-  , K_2^+ , \cdots , K_l^- ],
\end{align}
where $K_i^+(K_i^-)$ is an ordered list of the outgoing (incoming) edge labels at vertex $i$.
Explicitly for $K_i^+$ we have
\begin{align}
	K_1^+  & =    [ 1, 2, \cdots , k_1^+ ] \\
	K_2^+  & =    [ k_1^+ + 1, k_1^+ + 2, \cdots , k_1^+ + k_2^+ ] \\
	& \vdots  \nonumber  \\
	K_i^+  & =    [ k_1^+ + k_2^+ + \cdots + k_{i-1}^+ + 1 ,   k_1^+ + k_2^+ + \cdots + k_{i-1}^+ +2 , \cdots , k_1^+ + k_2^+ + \cdots + k_{ i-1}^+ +  k_{ i}^+  ] \\
	& \vdots   \nonumber\\
	K_l^+  & =   [ k_1^+ + k_2^+ + \cdots + k_{l-1}^+ + 1 ,   k_1^+ + k_2^+ + \cdots + k_{l-1}^+ +2 , \cdots , k_1^+ + k_2^+ + \cdots + k_{ l-1}^+ +  k_{ l}^+  ],
\end{align}
and similarly for $K_i^-$. Note that the concatenation  of these lists is the set of numbers $ [ 1 ,  \dots ,  k ] $.
\begin{align}\label{concatM}  
	[ K_1^+  , K_2^+ , \cdots , K_l^+ ] = [ 1, 2, \dots , k ] 
\end{align}

\begin{example}
For example, Figure \ref{fig: One Colour Double Coset Graph _a} defines
\begin{align}
	K_1^+ = [1,2,3], K_2^+ = [4], K_3^+ = [5] K_4^+ = [],\\
	K_1^- = [], K_2^- = [1,2], K_3^- = [3] K_4^- = [4,5].	
\end{align}
\end{example}

We view a permutation $ \sigma \in S_k$ as a re-arrangement of this list (similarly to one-line notation for permutations). 
The permutations within the sublists $ K_i^+$ or $K_i^-$ define subgroups of $S_k$ isomorphic to 
\begin{equation} 
	S_{\vec{k}^+} \cong S_{ k_1^+}  \times S_{ k_2^+ } \times \cdots \times S_{ k_l^+},
\end{equation}
and
\begin{equation} 
	S_{\vec{k}^-} =S_{ k_1^-}  \times S_{ k_2^- } \times \cdots \times S_{ k_l^-}.
\end{equation}

\begin{example}
For example, Figure \ref{fig: One Colour Double Coset Graph _a} gives
\begin{equation} 
	S_{\vec{k}^+} \cong S_3  \times S_1 \times S_1 \times S_0
\end{equation}
and
\begin{equation} 
	S_{\vec{k}^-} = S_0 \times S_2 \times S_1 \times S_2
\end{equation}
where $S_1$ is the trivial group and $S_0$ is the empty set.
\end{example}

It will be useful to view these groups as symmetric groups
\begin{equation}
	S_{k_i^\pm} = \Perms(\{1,\dots, k_i^\pm\}),
\end{equation}
and as subgroups of $S_k = \Perms(\{1,\dots,k\})$ simultaneously. For this, we define injective homomorphisms $\Perms(\{1,\dots, k_i^\pm\}) \rightarrow S_k$.
\begin{definition}[Edge symmetry group]\label{def: edge sym grp}
	
We describe the map for outgoing edges -- the construction is identical for incoming.
Let $\nu^+_i \in S_{ k_i^+}$ and define $\gamma_{  k_i^+ } (\nu^+_i ) \in S_k$ as the map
\begin{align}
	\gamma^+_i ( \nu_i^+ )  : 
	[ K_1^+ , K_2^+ , \cdots , K_i^+ , \cdots , K_l^+ ] \mapsto [ K_1^+ , K_2^+ , \cdots , \nu_i^+ ( K_i^+ ) , \cdots , K_l^+ ] 
\end{align}
where
\begin{align} 
	\nu_i^+ ( K_i^+ ) = [ \sum_{ j =1}^{ i-1}  k_j^+ + \nu_i^+ ( 1 ) , 
	\sum_{ j =1}^{ i-1}  k_j^+ + \nu_i^+ ( 2 ) , \cdots ,  \sum_{ j =1}^{ i-1}  k_j^+  + \nu_i^+ ( k_i^+ ) ]
\end{align}
For a general element $\nu^+ \in S_{\vec{k}^+}$ we apply $\gamma^+_i$ to each factor to get a homomorphism
\begin{equation}
	\gamma^+: S_{\vec{k}^+ } \rightarrow S_k.
\end{equation}
A similar construction exists for incoming lines collected into lists $K_i^-$ and a homomorphism
\begin{align} 
	\gamma^-  : S_{\vec{k}^-} \rightarrow S_k.
\end{align}
We call the image of $\gamma^+, \gamma^-$ the outgoing and incoming edge symmetry group of a GGPD, respectively.
\end{definition}
The naming of these subgroups is motivated by the fact that $\sigma \in S_k$ and
\begin{equation}
	\sigma' = \gamma^+ \sigma (\gamma^-)^{-1},
\end{equation}
generate the same directed graph for any $\gamma^+ \in \im \gamma^+, \gamma^- \in \im \gamma^-$.

\begin{example}
Continuing the example of Figure \ref{fig: One Colour Double Coset Graph _a} we have $S_k = S_5$ and we have the embeddings(homomorphisms)
\begin{equation} 
	\gamma^+: S_{\vec{k}^+} \rightarrow \Perms(\{1,2,3\})  \times \Perms(\{4\}) \times \Perms(\{5\}) \times \Perms(\emptyset)
\end{equation}
and
\begin{equation} 
	\gamma^-: S_{\vec{k}^-} \rightarrow \Perms(\emptyset)  \times \Perms(\{1,2\})\times \Perms(\{3\})\times \Perms(\{4,5\}),
\end{equation}
where the group of permutations of a single element set is just the identity element and the group of permutations of the empty set is the empty "group".
\end{example}

\begin{example}
Consider the graph in Figure \ref{fig: One Color One Sigma Equivalence Relation Without Vertex Symmetry} where the first vertex has three outgoing edges labeled $1,2,3$. Here two permutations $\sigma \in S_5$, which are related by a permutation $\nu^+_1 $ in $S_3$ permuting the list $ [ 1,2,3]$, lead to equivalent graphs.
From Figure \ref{fig: One Color One Sigma Equivalence Relation Without Vertex Symmetry} we see that this equivalence comes from left multiplication $\sigma \sim \gamma^+_1(\nu^+_1) \sigma $.
Similarly, for incoming edges we have equivalence under right multiplication $\sigma \sim \sigma \gamma^-_2((\nu^-_2)^{-1})\gamma^-_4((\nu^-_4)^{-1})$.

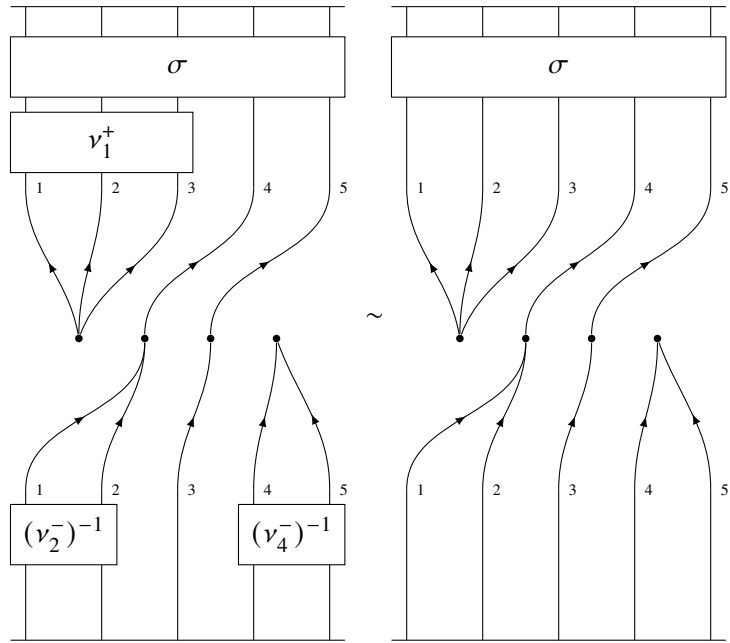
\begin{figure}[h!]
	\centering	
	\caption{For any permutation $\sigma$ in $S_5$, the two diagrams correspond to the same graph for any $\nu_1^+ \in S_3, \nu_2^- \in S_2, \nu_4^- \in S_2$.}
	\raisebox{-0.5\height}{\begin{tikzpicture}[scale=2]
			\def \k {3}
			\def \m {4}
			\def \sep {0.5}
			\def \voffset {0.35}
			\pgfmathparse{(\sep*(\m)-2*\voffset)/\k};
			\pgfmathsetmacro{\vsep}{\pgfmathresult};
			\pgfmathint{\k-1};
			\pgfmathsetmacro{\kk}{\pgfmathresult};
			\foreach \v in {0,...,\k}
			{
				\pgfmathparse{\v*\vsep+\voffset}
				\node[circle, fill, inner sep=1pt](v\v) at (\pgfmathresult,0) {};
			}
			\foreach \eOutm in {0,...,\m}
			{
				\pgfmathint{\eOutm+1};
				\pgfmathsetmacro{\seOutm}{\pgfmathresult};
				\pgfmathparse{\eOutm*\sep};
				\coordinate (eom\seOutm) at (\pgfmathresult,2.2);
			}
			\foreach \eOutm in {0,...,\m}
			{
				\pgfmathint{\eOutm+1};
				\pgfmathsetmacro{\seOutm}{\pgfmathresult};
				\pgfmathparse{\eOutm*\sep};
				\coordinate (eomm\seOutm) at (\pgfmathresult,1);
				\node[right, node distance = 0pt and 0pt] at (eomm\seOutm) {\tiny \seOutm};
				\draw[] (eom\seOutm) -- (eomm\seOutm);
			}
			\foreach \eInm in {0,...,\m}
			{
				\pgfmathint{\eInm+1};
				\pgfmathsetmacro{\seInm}{\pgfmathresult};
				\pgfmathparse{\eInm*\sep};
				\coordinate (eim\seInm) at (\pgfmathresult,-2);
			}
			\foreach \eInm in {0,...,\m}
			{
				\pgfmathint{\eInm+1};
				\pgfmathsetmacro{\seInm}{\pgfmathresult};
				\pgfmathparse{\eInm*\sep};
				\coordinate (eimm\seInm) at (\pgfmathresult,-1);
				\node[right, node distance = 0pt and 0pt] at (eimm\seInm) {\tiny \seInm};
				\draw[] (eim\seInm) -- (eimm\seInm);
			}
			\begin{scope}[decoration={markings, mark=at position 0.5 with \arrow{latex}}]
				\draw[postaction={decorate}] (v0) to[out=100,in=-90] (eomm1);
				\draw[postaction={decorate}] (v0) to[out=90,in=-90] (eomm2);
				\draw[postaction={decorate}] (v0) to[out=70,in=-90] (eomm3);
				\draw[postaction={decorate}] (v1) to[out=90,in=-90] (eomm4);
				\draw[postaction={decorate}] (v2) to[out=90,in=-90] (eomm5);
				\draw[postaction={decorate}] (eimm1) to[out=90,in=-90] (v1);
				\draw[postaction={decorate}] (eimm2) to[out=90,in=-90] (v1);
				\draw[postaction={decorate}] (eimm3) to[out=90,in=-90] (v2);
				\draw[postaction={decorate}] (eimm4) to[out=90,in=-90] (v3);
				\draw[postaction={decorate}] (eimm5) to[out=90,in=-70] (v3);
			\end{scope}
			\draw[fill=white] ($(eomm1)+(-0.1,1)$) rectangle node{$\sigma$} ($(eomm5)+(0.1,0.6)$);
			\draw[fill=white] ($(eomm1)+(-0.1,0.5)$) rectangle node{$\nu_1^+$} ($(eomm3)+(0.1,0.1)$);
			\draw[fill=white] ($(eimm1)-(0.1,0.5)$) rectangle node{$(\nu_2^-)^{-1}$} ($(eimm2)-(-0.1,0.1)$);
			\draw[fill=white] ($(eimm4)-(0.1,0.5)$) rectangle node{$(\nu_4^-)^{-1}$} ($(eimm5)-(-0.1,0.1)$);
			\draw ($(eom1)-(0.1,0)$) -- ($(eom5)+(0.1,0)$);
			\draw ($(eim1)-(0.1,0)$) -- ($(eim5)+(0.1,0)$);
	\end{tikzpicture}} $\sim$
	\raisebox{-0.5\height}{\begin{tikzpicture}[scale=2]
			\def \k {3}
			\def \m {4}
			\def \sep {0.5}
			\def \voffset {0.35}
			\pgfmathparse{(\sep*(\m)-2*\voffset)/\k};
			\pgfmathsetmacro{\vsep}{\pgfmathresult};
			\pgfmathint{\k-1};
			\pgfmathsetmacro{\kk}{\pgfmathresult};
			\foreach \v in {0,...,\k}
			{
				\pgfmathparse{\v*\vsep+\voffset}
				\node[circle, fill, inner sep=1pt](v\v) at (\pgfmathresult,0) {};
			}
			\foreach \eOutm in {0,...,\m}
			{
				\pgfmathint{\eOutm+1};
				\pgfmathsetmacro{\seOutm}{\pgfmathresult};
				\pgfmathparse{\eOutm*\sep};
				\coordinate (eom\seOutm) at (\pgfmathresult,2.2);
			}
			\foreach \eOutm in {0,...,\m}
			{
				\pgfmathint{\eOutm+1};
				\pgfmathsetmacro{\seOutm}{\pgfmathresult};
				\pgfmathparse{\eOutm*\sep};
				\coordinate (eomm\seOutm) at (\pgfmathresult,1);
				\node[right, node distance = 0pt and 0pt] at (eomm\seOutm) {\tiny \seOutm};
				\draw[] (eom\seOutm) -- (eomm\seOutm);
			}
			\foreach \eInm in {0,...,\m}
			{
				\pgfmathint{\eInm+1};
				\pgfmathsetmacro{\seInm}{\pgfmathresult};
				\pgfmathparse{\eInm*\sep};
				\coordinate (eim\seInm) at (\pgfmathresult,-2);
			}
			\foreach \eInm in {0,...,\m}
			{
				\pgfmathint{\eInm+1};
				\pgfmathsetmacro{\seInm}{\pgfmathresult};
				\pgfmathparse{\eInm*\sep};
				\coordinate (eimm\seInm) at (\pgfmathresult,-1);
				\node[right, node distance = 0pt and 0pt] at (eimm\seInm) {\tiny \seInm};
				\draw[] (eim\seInm) -- (eimm\seInm);
			}
			\begin{scope}[decoration={markings, mark=at position 0.5 with \arrow{latex}}]
				\draw[postaction={decorate}] (v0) to[out=100,in=-90] (eomm1);
				\draw[postaction={decorate}] (v0) to[out=90,in=-90] (eomm2);
				\draw[postaction={decorate}] (v0) to[out=70,in=-90] (eomm3);
				\draw[postaction={decorate}] (v1) to[out=90,in=-90] (eomm4);
				\draw[postaction={decorate}] (v2) to[out=90,in=-90] (eomm5);
				\draw[postaction={decorate}] (eimm1) to[out=90,in=-90] (v1);
				\draw[postaction={decorate}] (eimm2) to[out=90,in=-90] (v1);
				\draw[postaction={decorate}] (eimm3) to[out=90,in=-90] (v2);
				\draw[postaction={decorate}] (eimm4) to[out=90,in=-90] (v3);
				\draw[postaction={decorate}] (eimm5) to[out=90,in=-70] (v3);
			\end{scope}
			\draw[fill=white] ($(eomm1)+(-0.1,1)$) rectangle node{$\sigma$} ($(eomm5)+(0.1,0.6)$);
			\draw ($(eom1)-(0.1,0)$) -- ($(eom5)+(0.1,0)$);
			\draw ($(eim1)-(0.1,0)$) -- ($(eim5)+(0.1,0)$);
	\end{tikzpicture}}
	\label{fig: One Color One Sigma Equivalence Relation Without Vertex Symmetry}
\end{figure}
\end{example}
In general, we have combined left and right equivalence
\begin{equation}
\sigma \sim \sigma' \qq{iff} \quad \exists \nu^+ \in S_{\vec{k}^+}, \nu^- \in S_{\vec{k}^-}, \qq{st } \sigma = \gamma^+(\nu^+)  \sigma' \gamma^-((\nu^-)^{-1})
\end{equation}
The equivalence classes are in one-to-one correspondence with distinct graphs when the ordered pairs $(k^+_i, k^-_i)$ are all different.\footnote{Recall that $k^\pm_i$ can be zero and one, and we defined $S_0$ to be the empty set and $S_1$ to be the trivial group, containing just the identity element.} 

\subsection{Vertex symmetry group.}
When $(k^+_i, k^-_i)=(k^+_j, k^-_j)$ for $i\neq j$, the symmetry is enhanced and permutations which are related by permuting indistinguishable vertices give equivalent graphs. For example, the graphs in Figure \ref{fig: One Color Graph Permutation Equivalence With Vertex Symmetry} have $(k^+_1, k^-_1) = (k^+_2, k^-_2) = (3,2)$.
Indistinguishable vertices define a subgroup of $ S_{ l} = \Perms  ( [ 1, 2, \cdots , l ] ) $ which permutes labels of vertices having the same number of incoming and outgoing vertices. As mentioned, this is a symmetry of the GGPD. To describe the equivalence relation on $\sigma \in S_k$ due to this symmetry we embed this subgroup into $S_k$ as well.
\begin{definition}[Vertex symmetry group]\label{def: vertex sym grp}
The permutations $ \mu \in S_l$  of identical vertices are mapped to permutations in $ S_k $ as rearrangements of the concatenated lists. We define $\rho^+, \rho^-: S_l \rightarrow S_k$ as
\begin{align} 
	\rho^+(\mu) : [ K_1^+ , K_2^+ , \cdots , K_l^+ ] \rightarrow [ K_{ \mu ( 1) }^+ , K_{ \mu (2) }^+ , \cdots,	K_{ \mu (l) }^+ ] 
\end{align}
and
\begin{align} 
	\rho^-(\mu) : [ K_1^- , K_2^- , \cdots , K_l^- ] \rightarrow [ K_{ \mu ( 1) }^- , K_{ \mu (2) }^- , \cdots , K_{ \mu (l) }^- ].
\end{align}
The images of these maps are called outgoing and incoming vertex symmetry groups, respectively.
\end{definition}

\begin{example}
Consider the GGPD on the l.h.s. of Figure \ref{fig: One Color Graph Permutation Equivalence With Vertex Symmetry}.
A permutation $\mu \in S_2 \subset S_3$ which swaps the first two vertices
\begin{align} \nonumber
	\rho^+((1,2)) :&[ K_1^+ , K_2^+ , K_3^+ ] \rightarrow [ K_{2 }^+ , K_{1 }^+ , K_{3}^+ ] \\
	\rho^-((1,2)) :&[ K_1^- , K_2^- , K_3^- ] \rightarrow [ K_{2 }^- , K_{1 }^- , K_{3}^- ] 
\end{align}
gives back the same graph. In cycle notation we have
\begin{equation}
	\rho^+((1,2)) = (14)(25)(36), \quad \rho^-((1,2)) = (13)(24).
\end{equation}
\end{example}
More generally, if $(k^+_{i_{_1}}, k^-_{i_{_1}}) = \dots = (k^+_{i_{_r}}, k^-_{i_{_r}})$ for a set of vertex labels $\{i_1,\dots,i_r\} \subseteq \{1,\dots,l\}$, the subgroup $S_r \cong \Perms([i_1, \dots i_r]) \subseteq S_l$ will give equivalent graphs when acting on vertices. For sets of identical vertices of size $l_1, l_2, \dots$ the full group of permutations is isomorphic to
\begin{equation}
	S_{\vec{l}} = S_{l_{_1}} \times S_{l_{_2}} \times \dots.
\end{equation}

We can now state the full equivalence relation on GGPD's given by edge and vertex symmetry.
\begin{definition}[Equivalent GGPD]
	Two GGPDs are equivalent if their vector partition is the same and their respective permutations $\sigma, \sigma'$ are equivalent under the relation
	\begin{align}
		\sigma \sim \sigma' &\qq{iff} \; \exists \nu^+ \in S_{\vec{k}^+}, \nu^- \in S_{\vec{k}^-}, \mu \in S_{\vec{l}}, \\
		&\qq{st } \sigma  = \rho^+(\mu)\gamma^+(\nu^+) \sigma' (\rho^-(\mu)\gamma^-(\nu^-))^{-1}. \label{eqn: One Color Graph Permutation Equivalence}
	\end{align}
\end{definition}
Diagrammatically this equivalence corresponds to Figure \ref{fig: One Color Graph Permutation Equivalence With Vertex Symmetry}.

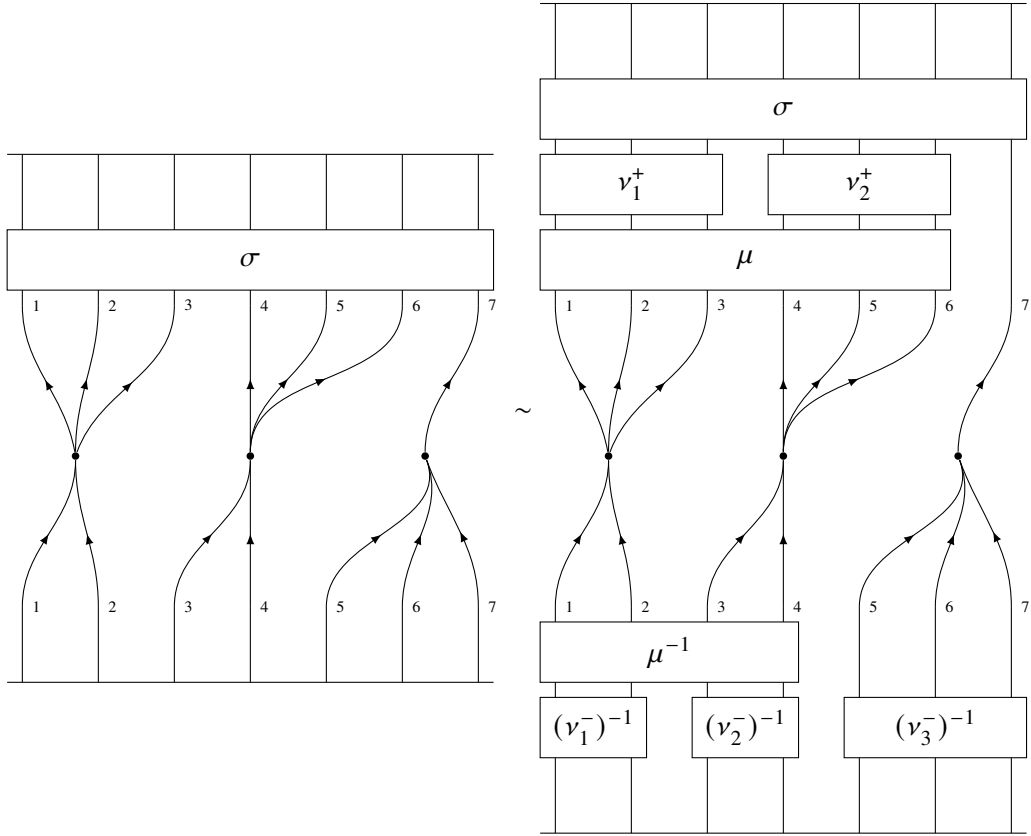
\begin{figure}[h!]
	\centering
	\raisebox{-0.5\height}{\begin{tikzpicture}[scale=2]
			\def \k {2}
			\def \m {6}
			\def \sep {0.5}
			\def \voffset {0.35}
			\pgfmathparse{(\sep*(\m)-2*\voffset)/\k};
			\pgfmathsetmacro{\vsep}{\pgfmathresult};
			\pgfmathint{\k-1};
			\pgfmathsetmacro{\kk}{\pgfmathresult};
			\foreach \v in {0,...,\k}
			{
				\pgfmathparse{\v*\vsep+\voffset}
				\node[circle, fill, inner sep=1pt](v\v) at (\pgfmathresult,0) {};
			}
			\foreach \eOutm in {0,...,\m}
			{
				\pgfmathint{\eOutm+1};
				\pgfmathsetmacro{\seOutm}{\pgfmathresult};
				\pgfmathparse{\eOutm*\sep};
				\coordinate (eom\seOutm) at (\pgfmathresult,2);
			}
			\foreach \eOutm in {0,...,\m}
			{
				\pgfmathint{\eOutm+1};
				\pgfmathsetmacro{\seOutm}{\pgfmathresult};
				\pgfmathparse{\eOutm*\sep};
				\coordinate (eomm\seOutm) at (\pgfmathresult,1);
				\node[right, node distance = 0pt and 0pt] at (eomm\seOutm) {\tiny \seOutm};
				\draw[] (eom\seOutm) -- (eomm\seOutm);
			}
			\foreach \eInm in {0,...,\m}
			{
				\pgfmathint{\eInm+1};
				\pgfmathsetmacro{\seInm}{\pgfmathresult};
				\pgfmathparse{\eInm*\sep};
				\coordinate (eim\seInm) at (\pgfmathresult,-1.5);
			}
			\foreach \eInm in {0,...,\m}
			{
				\pgfmathint{\eInm+1};
				\pgfmathsetmacro{\seInm}{\pgfmathresult};
				\pgfmathparse{\eInm*\sep};
				\coordinate (eimm\seInm) at (\pgfmathresult,-1);
				\node[right, node distance = 0pt and 0pt] at (eimm\seInm) {\tiny \seInm};
				\draw[] (eim\seInm) -- (eimm\seInm);
			}
			\begin{scope}[decoration={markings, mark=at position 0.5 with \arrow{latex}}]
				\draw[postaction={decorate}] (v0) to[out=100,in=-90] (eomm1);
				\draw[postaction={decorate}] (v0) to[out=90,in=-90] (eomm2);
				\draw[postaction={decorate}] (v0) to[out=70,in=-90] (eomm3);
				\draw[postaction={decorate}] (v1) to[out=90,in=-90] (eomm4);
				\draw[postaction={decorate}] (v1) to[out=90,in=-90] (eomm5);
				\draw[postaction={decorate}] (v1) to[out=90,in=-90] (eomm6);
				\draw[postaction={decorate}] (v2) to[out=90,in=-90] (eomm7);
				\draw[postaction={decorate}] (eimm1) to[out=90,in=-90] (v0);
				\draw[postaction={decorate}] (eimm2) to[out=90,in=-90] (v0);
				\draw[postaction={decorate}] (eimm3) to[out=90,in=-90] (v1);
				\draw[postaction={decorate}] (eimm4) to[out=90,in=-90] (v1);
				\draw[postaction={decorate}] (eimm5) to[out=90,in=-70] (v2);
				\draw[postaction={decorate}] (eimm6) to[out=90,in=-70] (v2);
				\draw[postaction={decorate}] (eimm7) to[out=90,in=-70] (v2);
			\end{scope}
			\draw[fill=white] ($(eomm1)+(-0.1,0.5)$) rectangle node{$\sigma$} ($(eomm7)+(0.1,0.1)$);
			\draw ($(eom1)-(0.1,0)$) -- ($(eom7)+(0.1,0)$);
			\draw ($(eim1)-(0.1,0)$) -- ($(eim7)+(0.1,0)$);
	\end{tikzpicture}} $\sim$
	\raisebox{-0.5\height}{\begin{tikzpicture}[scale=2]
			\def \k {2}
			\def \m {6}
			\def \sep {0.5}
			\def \voffset {0.35}
			\pgfmathparse{(\sep*(\m)-2*\voffset)/\k};
			\pgfmathsetmacro{\vsep}{\pgfmathresult};
			\pgfmathint{\k-1};
			\pgfmathsetmacro{\kk}{\pgfmathresult};
			\foreach \v in {0,...,\k}
			{
				\pgfmathparse{\v*\vsep+\voffset}
				\node[circle, fill, inner sep=1pt](v\v) at (\pgfmathresult,0) {};
			}
			\foreach \eOutm in {0,...,\m}
			{
				\pgfmathint{\eOutm+1};
				\pgfmathsetmacro{\seOutm}{\pgfmathresult};
				\pgfmathparse{\eOutm*\sep};
				\coordinate (eom\seOutm) at (\pgfmathresult,3);
			}
			\foreach \eOutm in {0,...,\m}
			{
				\pgfmathint{\eOutm+1};
				\pgfmathsetmacro{\seOutm}{\pgfmathresult};
				\pgfmathparse{\eOutm*\sep};
				\coordinate (eomm\seOutm) at (\pgfmathresult,1);
				\node[right, node distance = 0pt and 0pt] at (eomm\seOutm) {\tiny \seOutm};
				\draw[] (eom\seOutm) -- (eomm\seOutm);
			}
			\foreach \eInm in {0,...,\m}
			{
				\pgfmathint{\eInm+1};
				\pgfmathsetmacro{\seInm}{\pgfmathresult};
				\pgfmathparse{\eInm*\sep};
				\coordinate (eim\seInm) at (\pgfmathresult,-2.5);
			}
			\foreach \eInm in {0,...,\m}
			{
				\pgfmathint{\eInm+1};
				\pgfmathsetmacro{\seInm}{\pgfmathresult};
				\pgfmathparse{\eInm*\sep};
				\coordinate (eimm\seInm) at (\pgfmathresult,-1);
				\node[right, node distance = 0pt and 0pt] at (eimm\seInm) {\tiny \seInm};
				\draw[] (eim\seInm) -- (eimm\seInm);
			}
			\begin{scope}[decoration={markings, mark=at position 0.5 with \arrow{latex}}]
				\draw[postaction={decorate}] (v0) to[out=100,in=-90] (eomm1);
				\draw[postaction={decorate}] (v0) to[out=90,in=-90] (eomm2);
				\draw[postaction={decorate}] (v0) to[out=70,in=-90] (eomm3);
				\draw[postaction={decorate}] (v1) to[out=90,in=-90] (eomm4);
				\draw[postaction={decorate}] (v1) to[out=90,in=-90] (eomm5);
				\draw[postaction={decorate}] (v1) to[out=90,in=-90] (eomm6);
				\draw[postaction={decorate}] (v2) to[out=90,in=-90] (eomm7);
				\draw[postaction={decorate}] (eimm1) to[out=90,in=-90] (v0);
				\draw[postaction={decorate}] (eimm2) to[out=90,in=-90] (v0);
				\draw[postaction={decorate}] (eimm3) to[out=90,in=-90] (v1);
				\draw[postaction={decorate}] (eimm4) to[out=90,in=-90] (v1);
				\draw[postaction={decorate}] (eimm5) to[out=90,in=-70] (v2);
				\draw[postaction={decorate}] (eimm6) to[out=90,in=-70] (v2);
				\draw[postaction={decorate}] (eimm7) to[out=90,in=-70] (v2);
			\end{scope}
			\draw[fill=white] ($(eomm1)+(-0.1,1.5)$) rectangle node{$\sigma$} ($(eomm7)+(0.1,1.1)$);
			\draw[fill=white] ($(eomm1)+(-0.1,1)$) rectangle node{$\nu_1^+$} ($(eomm3)+(0.1,.6)$);
			\draw[fill=white] ($(eomm4)+(-0.1,1)$) rectangle node{$\nu_2^+$} ($(eomm6)+(0.1,.6)$);
			\draw[fill=white] ($(eimm1)-(0.1,1)$) rectangle node{$(\nu_1^-)^{-1}$} ($(eimm2)-(-0.1,.6)$);
			\draw[fill=white] ($(eimm3)-(0.1,1)$) rectangle node{$(\nu_2^-)^{-1}$} ($(eimm4)-(-0.1,.6)$);
			\draw[fill=white] ($(eimm5)-(0.1,1)$) rectangle node{$(\nu_3^-)^{-1}$} ($(eimm7)-(-0.1,.6)$);
			\draw[fill=white] ($(eomm1)+(-0.1,0.5)$) rectangle node{$\mu$} ($(eomm6)+(0.1,0.1)$);
			\draw[fill=white] ($(eimm1)-(0.1,0.5)$) rectangle node{$\mu^{-1}$} ($(eimm4)-(-0.1,0.1)$);
			\draw ($(eom1)-(0.1,0)$) -- ($(eom7)+(0.1,0)$);
			\draw ($(eim1)-(0.1,0)$) -- ($(eim7)+(0.1,0)$);
	\end{tikzpicture}}	
	\caption{This graph has two identical vertices of type $(3,2)$. Therefore any $\mu \in S_k$ which swaps all the edges of the two vertices gives rise to the same graph.}
	\label{fig: One Color Graph Permutation Equivalence With Vertex Symmetry}
\end{figure}

\subsection{Double coset description.}
The equivalence relation in \eqref{eqn: One Color Graph Permutation Equivalence} can be viewed as a partially solved ("gauge fixed" in physics jargon) version of a double coset.
\begin{definition}[Double coset]
A double coset
\begin{equation}
	H_1 \left\backslash G \right/ H_2,
\end{equation}
is the set of equivalence classes of elements $g,g' \in G$ under the identification
\begin{equation}
	g \sim g' \qq{iff} \quad \exists h_1 \in H_1, h_2 \in H_2, \quad g = h_1 g' h_2^{-1},
\end{equation}
where $H_1, H_2$ are subgroups of $G$. The equivalence classes are called double cosets.
\end{definition}

In our case, the double coset will have the form
\begin{equation}
	G(\vec{k}^+,\vec{k}^-) \left\backslash \qty( S_k^+ \times S_k^-) \right/ \diag(S_k), \label{eqn:1colordoublecoset}
\end{equation}
and we will now define the groups appearing in this quotient and prove
\begin{proposition}
The number of double cosets is equal to the number of inequivalent GGPDs
\begin{equation}
	N(\vec{k}^+,\vec{k}^-) = \abs{G(\vec{k}^+,\vec{k}^-) \left\backslash \qty( S_k^+ \times S_k^-) \right/ \diag(S_k)} = \begin{aligned}
		 \text{\# }&\text{Inequivalent GGPDs with}\\ &\text{vertex structure $(\vec{k}^+,\vec{k}^-)$.}
	\end{aligned}
\end{equation} \label{eq: double cosets}
\end{proposition}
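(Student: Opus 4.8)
\emph{Proof idea.} The plan is to recognise the equivalence relation \eqref{eqn: One Color Graph Permutation Equivalence} on permutations $\sigma\in S_k$ as the ``gauge fixed'' form of the double coset in \eqref{eqn:1colordoublecoset}, and then invoke the fact --- built into the definition of equivalent GGPDs and into Definition~\ref{def: GGPD} --- that inequivalent GGPDs of a fixed vertex structure $(\vec{k}^+,\vec{k}^-)$ are precisely the classes of $\sigma$ under \eqref{eqn: One Color Graph Permutation Equivalence}. So the substantive content is the standard dictionary between double cosets and gauge-fixed permutation equivalences.

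First I would make the groups in \eqref{eqn:1colordoublecoset} precise. Let $S_k^+$ and $S_k^-$ be two labelled copies of $S_k$, acting respectively on the outgoing and incoming edge labels $\{1,\dots,k\}$, and set $G=S_k^+\times S_k^-$. Using the embeddings $\gamma^\pm$ of Definition~\ref{def: edge sym grp} and $\rho^\pm$ of Definition~\ref{def: vertex sym grp}, define
\begin{equation}
	G(\vec{k}^+,\vec{k}^-)=\bigl\{\,(\rho^+(\mu)\gamma^+(\nu^+),\,\rho^-(\mu)\gamma^-(\nu^-))\ \bigm|\ \mu\in S_{\vec{l}},\ \nu^+\in S_{\vec{k}^+},\ \nu^-\in S_{\vec{k}^-}\,\bigr\}\subseteq G.
\end{equation}
The first real step is to check that this really is a subgroup. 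Since $\rho^\pm(\mu)$ only permutes blocks $K_i^\pm$ of equal size, conjugation by $\rho^\pm(\mu)$ carries $\gamma^\pm(\nu^\pm)$ to $\gamma^\pm$ of the $\mu$-relabelled partition, which again lies in $S_{\vec{k}^\pm}$; pushing all $\rho$-factors to the left then shows that the product of two elements of $G(\vec{k}^+,\vec{k}^-)$ is again of the displayed form, with the two $\mu$'s multiplied, and likewise for inverses. Thus $G(\vec{k}^+,\vec{k}^-)\cong S_{\vec{l}}\ltimes(S_{\vec{k}^+}\times S_{\vec{k}^-})$, consistently embedded in $S_k^+\times S_k^-$ via the pair $(\rho^+,\rho^-)$ and $(\gamma^+,\gamma^-)$.

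Next I would use the elementary observation that $\diag(S_k)$ acts freely by right translation on $G=S_k^+\times S_k^-$, and that $(a,b)\mapsto ab^{-1}$ induces a bijection $G/\diag(S_k)\xrightarrow{\ \sim\ }S_k$ (surjective since $\phi(\sigma,1)=\sigma$, and a bijection by counting). Under this bijection, left translation by $(h^+,h^-)\in G$ becomes $\sigma\mapsto h^+\sigma(h^-)^{-1}$; restricting to $(h^+,h^-)\in G(\vec{k}^+,\vec{k}^-)$ and substituting the description above recovers exactly the relation \eqref{eqn: One Color Graph Permutation Equivalence}. Hence $G(\vec{k}^+,\vec{k}^-)\backslash(S_k^+\times S_k^-)/\diag(S_k)$ is in bijection with the set of $\sim$-classes of $\sigma\in S_k$, which by the definition of equivalent GGPDs and Definition~\ref{def: GGPD} is the set of inequivalent GGPDs of vertex structure $(\vec{k}^+,\vec{k}^-)$; this gives $N(\vec{k}^+,\vec{k}^-)=\bigl|G(\vec{k}^+,\vec{k}^-)\backslash(S_k^+\times S_k^-)/\diag(S_k)\bigr|$. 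I expect the main obstacle to be the bookkeeping in the subgroup check --- in particular verifying that the \emph{same} $\mu\in S_{\vec{l}}$ must appear in the $+$ and $-$ components, so that $G(\vec{k}^+,\vec{k}^-)$ is the semidirect product and not a larger group --- together with being careful that passing from a permutation $\sigma$ to the drawn GGPD of Definition~\ref{def: GGPD} is surjective onto GGPDs of the given type and descends to a well-defined bijection on classes; the latter is essentially the content of the discussion around Figures~\ref{fig: One Colour Double Coset Graph} and~\ref{fig: One Color Graph Permutation Equivalence With Vertex Symmetry}, which I would isolate as a short lemma before invoking it here.
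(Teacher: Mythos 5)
Your argument is correct, but it proves the statement by a genuinely different route from the paper. You construct an explicit bijection: the map $(\sigma_1,\sigma_2)\mapsto\sigma_1\sigma_2^{-1}$ identifies $(S_k^+\times S_k^-)/\diag(S_k)$ with $S_k$, intertwines the residual left $G(\vec{k}^+,\vec{k}^-)$-action with $\sigma\mapsto\rho^+(\mu)\gamma^+(\nu^+)\,\sigma\,(\rho^-(\mu)\gamma^-(\nu^-))^{-1}$, and so puts double cosets in bijection with the classes of \eqref{eqn: One Color Graph Permutation Equivalence}, which by the definition of equivalent GGPDs are exactly the inequivalent GGPDs of the given vertex structure. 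The paper instead proves only an equality of cardinalities: it writes $N(\vec{k}^+,\vec{k}^-)$ via Burnside's lemma as a double sum of Kronecker deltas over $(\sigma_1,\sigma_2,\gamma,\mu,\nu^\pm)$, performs the $\gamma$-sum to fuse the two deltas, renames $\sigma_1\sigma_2^{-1}=\sigma$ so the $\sigma_2$-sum gives $|S_k|$, and recognizes the result as the Burnside count of classes under \eqref{eqn: One Color Graph Permutation Equivalence}. Your approach is cleaner and stronger in the sense that it exhibits a canonical bijection rather than matching fixed-point counts, and it makes explicit two points the paper leaves implicit — that $G(\vec{k}^+,\vec{k}^-)$ is actually a subgroup (your semidirect-product check $S_{\vec{l}}\ltimes(S_{\vec{k}^+}\times S_{\vec{k}^-})$, which hinges on the same $\mu$ appearing in both slots), and that the passage from $\sigma$ to a GGPD of the given type is surjective and descends to classes. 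What the paper's computation buys in exchange is reusable machinery: the intermediate Burnside expression \eqref{eqn: Burnsides lemma double coset one color} is precisely what is manipulated again in Appendix \ref{apx: double coset} to derive the cycle-index formula \eqref{eq: N(m+,m-) Cycle Index Formula}, so the delta-function bookkeeping is not wasted effort there. If you wanted your version to slot into the paper, you would still want to record the gauge-fixed Burnside form as a corollary of your bijection, since the later counting relies on it.
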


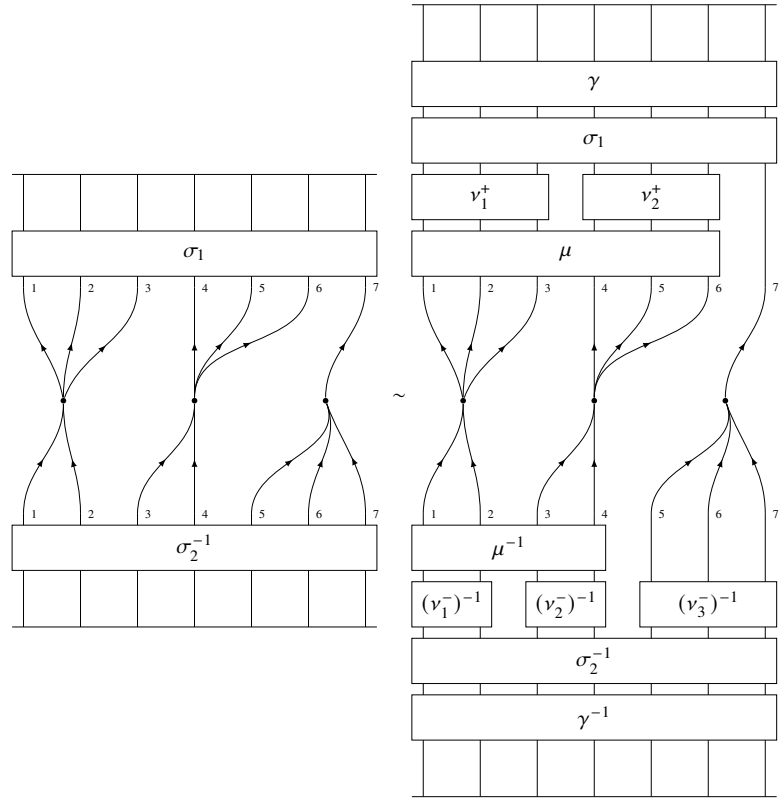
\begin{figure}
	\centering
	\scalebox{0.75}[0.75]{\raisebox{-0.5\height}{\begin{tikzpicture}[scale=2]
				\def \k {2}
				\def \m {6}
				\def \sep {0.5}
				\def \voffset {0.35}
				\pgfmathparse{(\sep*(\m)-2*\voffset)/\k};
				\pgfmathsetmacro{\vsep}{\pgfmathresult};
				\pgfmathint{\k-1};
				\pgfmathsetmacro{\kk}{\pgfmathresult};
				\foreach \v in {0,...,\k}
				{
					\pgfmathparse{\v*\vsep+\voffset}
					\node[circle, fill, inner sep=1pt](v\v) at (\pgfmathresult,0) {};
				}
				\foreach \eOutm in {0,...,\m}
				{
					\pgfmathint{\eOutm+1};
					\pgfmathsetmacro{\seOutm}{\pgfmathresult};
					\pgfmathparse{\eOutm*\sep};
					\coordinate (eom\seOutm) at (\pgfmathresult,2);
				}
				\foreach \eOutm in {0,...,\m}
				{
					\pgfmathint{\eOutm+1};
					\pgfmathsetmacro{\seOutm}{\pgfmathresult};
					\pgfmathparse{\eOutm*\sep};
					\coordinate (eomm\seOutm) at (\pgfmathresult,1);
					\node[right, node distance = 0pt and 0pt] at (eomm\seOutm) {\tiny \seOutm};
					\draw[] (eom\seOutm) -- (eomm\seOutm);
				}
				\foreach \eInm in {0,...,\m}
				{
					\pgfmathint{\eInm+1};
					\pgfmathsetmacro{\seInm}{\pgfmathresult};
					\pgfmathparse{\eInm*\sep};
					\coordinate (eim\seInm) at (\pgfmathresult,-2);
				}
				\foreach \eInm in {0,...,\m}
				{
					\pgfmathint{\eInm+1};
					\pgfmathsetmacro{\seInm}{\pgfmathresult};
					\pgfmathparse{\eInm*\sep};
					\coordinate (eimm\seInm) at (\pgfmathresult,-1);
					\node[right, node distance = 0pt and 0pt] at (eimm\seInm) {\tiny \seInm};
					\draw[] (eim\seInm) -- (eimm\seInm);
				}
				\begin{scope}[decoration={markings, mark=at position 0.5 with \arrow{latex}}]
					\draw[postaction={decorate}] (v0) to[out=100,in=-90] (eomm1);
					\draw[postaction={decorate}] (v0) to[out=90,in=-90] (eomm2);
					\draw[postaction={decorate}] (v0) to[out=70,in=-90] (eomm3);
					\draw[postaction={decorate}] (v1) to[out=90,in=-90] (eomm4);
					\draw[postaction={decorate}] (v1) to[out=90,in=-90] (eomm5);
					\draw[postaction={decorate}] (v1) to[out=90,in=-90] (eomm6);
					\draw[postaction={decorate}] (v2) to[out=90,in=-90] (eomm7);
					\draw[postaction={decorate}] (eimm1) to[out=90,in=-90] (v0);
					\draw[postaction={decorate}] (eimm2) to[out=90,in=-90] (v0);
					\draw[postaction={decorate}] (eimm3) to[out=90,in=-90] (v1);
					\draw[postaction={decorate}] (eimm4) to[out=90,in=-90] (v1);
					\draw[postaction={decorate}] (eimm5) to[out=90,in=-70] (v2);
					\draw[postaction={decorate}] (eimm6) to[out=90,in=-70] (v2);
					\draw[postaction={decorate}] (eimm7) to[out=90,in=-70] (v2);
				\end{scope}
				\draw[fill=white] ($(eomm1)+(-0.1,0.5)$) rectangle node{$\sigma_1$} ($(eomm7)+(0.1,0.1)$);
				\draw[fill=white] ($(eimm1)-(0.1,0.5)$) rectangle node{$\sigma_2^{-1}$} ($(eimm7)-(-0.1,0.1)$);
				\draw ($(eom1)-(0.1,0)$) -- ($(eom7)+(0.1,0)$);
				\draw ($(eim1)-(0.1,0)$) -- ($(eim7)+(0.1,0)$);
		\end{tikzpicture}}
		$\sim$
		\raisebox{-0.5\height}{\begin{tikzpicture}[scale=2]
				\def \k {2}
				\def \m {6}
				\def \sep {0.5}
				\def \voffset {0.35}
				\pgfmathparse{(\sep*(\m)-2*\voffset)/\k};
				\pgfmathsetmacro{\vsep}{\pgfmathresult};
				\pgfmathint{\k-1};
				\pgfmathsetmacro{\kk}{\pgfmathresult};
				\foreach \v in {0,...,\k}
				{
					\pgfmathparse{\v*\vsep+\voffset}
					\node[circle, fill, inner sep=1pt](v\v) at (\pgfmathresult,0) {};
				}
				\foreach \eOutm in {0,...,\m}
				{
					\pgfmathint{\eOutm+1};
					\pgfmathsetmacro{\seOutm}{\pgfmathresult};
					\pgfmathparse{\eOutm*\sep};
					\coordinate (eom\seOutm) at (\pgfmathresult,3.5);
				}
				\foreach \eOutm in {0,...,\m}
				{
					\pgfmathint{\eOutm+1};
					\pgfmathsetmacro{\seOutm}{\pgfmathresult};
					\pgfmathparse{\eOutm*\sep};
					\coordinate (eomm\seOutm) at (\pgfmathresult,1);
					\node[right, node distance = 0pt and 0pt] at (eomm\seOutm) {\tiny \seOutm};
					\draw[] (eom\seOutm) -- (eomm\seOutm);
				}
				\foreach \eInm in {0,...,\m}
				{
					\pgfmathint{\eInm+1};
					\pgfmathsetmacro{\seInm}{\pgfmathresult};
					\pgfmathparse{\eInm*\sep};
					\coordinate (eim\seInm) at (\pgfmathresult,-3.5);
				}
				\foreach \eInm in {0,...,\m}
				{
					\pgfmathint{\eInm+1};
					\pgfmathsetmacro{\seInm}{\pgfmathresult};
					\pgfmathparse{\eInm*\sep};
					\coordinate (eimm\seInm) at (\pgfmathresult,-1);
					\node[right, node distance = 0pt and 0pt] at (eimm\seInm) {\tiny \seInm};
					\draw[] (eim\seInm) -- (eimm\seInm);
				}
				\begin{scope}[decoration={markings, mark=at position 0.5 with \arrow{latex}}]
					\draw[postaction={decorate}] (v0) to[out=100,in=-90] (eomm1);
					\draw[postaction={decorate}] (v0) to[out=90,in=-90] (eomm2);
					\draw[postaction={decorate}] (v0) to[out=70,in=-90] (eomm3);
					\draw[postaction={decorate}] (v1) to[out=90,in=-90] (eomm4);
					\draw[postaction={decorate}] (v1) to[out=90,in=-90] (eomm5);
					\draw[postaction={decorate}] (v1) to[out=90,in=-90] (eomm6);
					\draw[postaction={decorate}] (v2) to[out=90,in=-90] (eomm7);
					\draw[postaction={decorate}] (eimm1) to[out=90,in=-90] (v0);
					\draw[postaction={decorate}] (eimm2) to[out=90,in=-90] (v0);
					\draw[postaction={decorate}] (eimm3) to[out=90,in=-90] (v1);
					\draw[postaction={decorate}] (eimm4) to[out=90,in=-90] (v1);
					\draw[postaction={decorate}] (eimm5) to[out=90,in=-70] (v2);
					\draw[postaction={decorate}] (eimm6) to[out=90,in=-70] (v2);
					\draw[postaction={decorate}] (eimm7) to[out=90,in=-70] (v2);
				\end{scope}
				\draw[fill=white] ($(eomm1)+(-0.1,2)$) rectangle node{$\gamma$} ($(eomm7)+(0.1,1.6)$);
				\draw[fill=white] ($(eimm1)-(0.1,2)$) rectangle node{$\gamma^{-1}$} ($(eimm7)-(-0.1,1.6)$);
				\draw[fill=white] ($(eomm1)+(-0.1,1.5)$) rectangle node{$\sigma_1$} ($(eomm7)+(0.1,1.1)$);
				\draw[fill=white] ($(eimm1)-(0.1,1.5)$) rectangle node{$\sigma_2^{-1}$} ($(eimm7)-(-0.1,1.1)$);		
				\draw[fill=white] ($(eomm1)+(-0.1,1)$) rectangle node{$\nu^+_1$} ($(eomm3)+(0.1,0.6)$);
				\draw[fill=white] ($(eomm4)+(-0.1,1)$) rectangle node{$\nu^+_2$} ($(eomm6)+(0.1,0.6)$);
				\draw[fill=white] ($(eomm1)+(-0.1,0.5)$) rectangle node{$\mu$} ($(eomm6)+(0.1,0.1)$);
				\draw[fill=white] ($(eimm1)-(0.1,1)$) rectangle node{$(\nu^-_1)^{-1}$} ($(eimm2)-(-0.1,0.6)$);
				\draw[fill=white] ($(eimm3)-(0.1,1)$) rectangle node{$(\nu^-_2)^{-1}$} ($(eimm4)-(-0.1,0.6)$);
				\draw[fill=white] ($(eimm5)-(0.1,1)$) rectangle node{$(\nu^-_3)^{-1}$} ($(eimm7)-(-0.1,0.6)$);
				\draw[fill=white] ($(eimm1)-(0.1,0.5)$) rectangle node{$\mu^{-1}$} ($(eimm4)-(-0.1,0.1)$);		
				\draw ($(eom1)-(0.1,0)$) -- ($(eom7)+(0.1,0)$);
				\draw ($(eim1)-(0.1,0)$) -- ($(eim7)+(0.1,0)$);
	\end{tikzpicture}}}
	\caption{Diagrammatic description of the double coset equivalence in equation \eqref{eq: Double Coset Equivalence}.}
	\label{fig: Two Sigma One Color Double Coset Graph}
\end{figure}
\begin{proof}
	
The diagrammatic equivalence to have in mind for the double coset is Figure \ref{fig: Two Sigma One Color Double Coset Graph}.
Because the incoming edges at the top line are identified with the outgoing edges of the bottom line, it is effectively only the product $\sigma_1 \sigma_2^{-1}$ which acts on the edges in this picture. We have increased the redundancy in the picture by going from a single permutation to a pair $(\sigma_1, \sigma_2) \in S_k^+ \times S_k^-$. If $(\sigma_1, \sigma_2)$ is replaced by $(\sigma_1 \gamma, \sigma_2 \gamma)$ for $\gamma \in S_k$, the combination $\sigma_1 \sigma_2^{-1} \mapsto \sigma_1 \gamma \gamma^{-1}\sigma_2^{-1} = \sigma_1\sigma_2^{-1}$ is unchanged. This is the origin of the quotient by $\diag(S_k)$, it describes the redundancy of using pairs of permutations.

The group $G(\vec{k}^+,\vec{k}^-)$ is the subgroup of $S_k^+ \times S_k^-$ with elements of the form
\begin{equation}
	(\rho^+(\mu) \gamma^+(\nu^+), \rho^-(\mu) \gamma^-(\nu^-)),
\end{equation}
for $\mu \in S_{\vec{l}}, \nu^+ \in S_{\vec{k}^+}, \nu^- \in S_{\vec{k}^-}$.
The double cosets are equivalence classes of the relation
\begin{align} \nonumber \label{eq: Double Coset Equivalence}
	(\sigma_1, \sigma_2) \sim &(\sigma_1', \sigma_2') \qq{iff} \exists \nu^+ \in S_{\vec{k}^+}, \nu^- \in S_{\vec{k}^-}, \mu \in S_{\vec{k}}, \gamma \in S_k, \\
	&\qq{st }(\sigma_1, \sigma_2) = (\rho^+(\mu)\gamma^+(\nu^+)\sigma_1'\gamma^{-1}, \rho^-(\mu)\gamma^-(\nu^-)\sigma_2'\gamma^{-1}). 
\end{align}
To see how equation \eqref{eq: Double Coset Equivalence} relates to \eqref{eqn: One Color Graph Permutation Equivalence}, we count the number of equivalence classes. We define the Kronecker delta on a group $G$ as the function that evaluates to $1$ on the identity element and vanishes otherwise,
\begin{equation}
	\delta(g) = \begin{cases}
		1, \qq{if} g=e \\
		0, \qq{otherwise}.
	\end{cases}
\end{equation}
By Burnside's lemma, the number of double cosets is
\begin{align} 
	N(\vec{k}^+,\vec{k}^-) &= \frac{1}{|G(\vec{k}^+,\vec{k}^-)||S_k|}\times \nonumber \\ 
	&\sum_{\substack{\mu \in S_{\vec{l}}, \nu^+ \in S_{\vec{k}^+} \\ \nu^- \in S_{\vec{k}^-},\gamma \in \diag(S_k)}} \sum_{\sigma_1, \sigma_2 \in S_k}
	\begin{aligned}[t]
		&\delta(\sigma_1^{-1}\rho^+(\mu)\gamma^+(\nu^+)\sigma_1\gamma^{-1}) \\
		&\delta(\sigma_2^{-1}\rho^-(\mu)\gamma^-(\nu^-)\sigma_2\gamma^{-1})
	\end{aligned} \label{eqn: Burnsides lemma double coset one color} \\ 
	&=\frac{1}{|G(\vec{k}^+,\vec{k}^-)||S_k|}\times \\ \nonumber
	&\sum_{\substack{\mu \in S_{\vec{l}}, \nu^+ \in S_{\vec{k}^+} \\ \nu^- \in S_{\vec{k}^-}}} \sum_{\sigma_1, \sigma_2 \in S_k}
	\begin{aligned}[t]
		&\delta(\sigma_1^{-1}\rho^+(\mu)\gamma^+(\nu^+)\sigma_1\sigma_2^{-1}\gamma^-((\nu^-)^{-1})\rho^-(\mu^{-1})\sigma_2)
	\end{aligned} \\ 
	&=\frac{1}{|G(\vec{k}^+,\vec{k}^-)|}	\sum_{\substack{\mu \in S_{\vec{l}} \\ \nu^+ \in S_{\vec{k}^+} \\ \nu^- \in S_{\vec{k}^-}}} \sum_{\sigma \in S_k}
	\begin{aligned}[t]
		&\delta(\sigma^{-1}\rho^+(\mu)\gamma^+(\nu^+)\sigma\gamma^-((\nu^-)^{-1})\rho^-(\mu^{-1})).
	\end{aligned}
\end{align}
In the second equality, we carried out the sum over $\gamma$ to impose the second delta function. In the third equality we renamed $\sigma_1\sigma_2^{-1} \equiv \sigma$, this makes the summand independent of $\sigma_2$. Consequently the sum over $\sigma_2$ just gives a factor of $|S_k|$. From Burnside's lemma, we recognize the last line as the counting of equivalence classes of \eqref{eqn: One Color Graph Permutation Equivalence}. This shows the correspondence between the double coset and the counting of graphs under edge and vertex symmetry.

\end{proof}
In appendix \ref{apx: double coset} we give a procedure for explicitly computing the number of double cosets in Proposition \ref{eq: double cosets} using generating functions known as cycle indices.

\section{Summary}
In this chapter we have described a class of matrix models, based on discrete permutation symmetry. We gave a description of the most general Gaussian permutation invariant distribution on real matrices. The matrix units constructed in chapter \ref{chapter: partition algebra} enabled us to explicitly solve for the mean/one-point function/expectation values and covariance/two-point function/propagator of matrix elements for general $\N$. This was first done in \cite{Kartsaklis2017, Ramgoolam2019a} without the use of the partition algebra technology presented here. The partition algebra construction is new and naturally gives rise to the interpretation of correlators in terms of linear combinations of 1-row diagrams. We presented this new perspective here because it generalizes beyond matrix models. In particular, this interpretation is used for permutation invariant tensor models in the upcoming work \cite{PIGTM}.

Observables in these models are defined to be permutation invariant polynomial functions in the matrix elements of general degree, following \cite{Kartsaklis2017, Ramgoolam2019a}. We gave two useful descriptions of the vector space of observables. The first one was a basis labelled by equivalence classes of 1-row partition diagrams. This basis was used in an algebraic combinatorial algorithm for computing expectation values of observables. The algorithm was based on the observation that expectation values can be computed in terms of a pairing on the vector space of 1-row partitions, where the pairing computes the number of components in the join of two 1-row partitions. The algorithm outputs an exact function of $\N$ and all the coupling constants/parameters in the model. This algorithm is new, and in fact different from the algorithm presented in \cite{Barnes2022b} based on so-called F-graphs.

The second basis of observables was labelled by directed graphs. This description was already known in \cite{Kartsaklis2017, Ramgoolam2019a}. We saw that this description was useful for constructing generating functions that count observables. In particular, we saw that directed graphs can be understood through permutations acting on GGPDs defined by a vector partition. These diagrams had two types of symmetry, edge symmetry and vertex symmetry, and the symmetries gave rise to equivalence relations on the set of permutations. We described the symmetry groups as permutation groups acting on the GGPDs. Finally, we saw that these equivalence classes could be understood as double cosets of permutation groups. We refer to Appendix \ref{apx: double coset} for explicit descriptions of the permutation subgroups entering the double coset, and constructions of generating functions counting directed graphs/double cosets. The use of GGPD's to enumerate graphs corresponding to gauge invariant quantities in gauge-string duality is not new (see for example \cite{deMelloKoch:2011uq, MelloKoch2012}). Our construction is a generalization of this to permutation invariant observables. This is new and was first presented in \cite{Barnes2022b}.
	
	\chapter{Matrix quantum mechanics}\label{ch: 1d}
In this chapter we apply the mathematical techniques developed in the previous sections to matrix quantum mechanics. Quantum mechanics corresponds to quantum field theory in one dimension and matrix models can be considered quantum field theories in zero dimensions. Therefore, matrix quantum mechanics is a natural avenue to extend the techniques to. This chapter is based on \cite{Barnes:2022qli}, where permutation invariant matrix quantum mechanics was first invented.

In section \ref{sec: MQM} we introduce systems of quantum matrix harmonic oscillators. We use the simplest model, that of $\N^2$ decoupled harmonic oscillators, to set up the basic language used in later sections. A useful description of the Hilbert space of the matrix harmonic oscillator is in terms of a Fock space spanned by states constructed from creation operators $a^\dagger_{ij}$ acting on the vacuum. As expected, the Hamiltonian measures the number of oscillators -- or degree as a polynomial in $a^\dagger_{ij}$ -- in the state. We then review the model of harmonic oscillators in a permutation invariant quadratic potential constructed and solved in \cite{Barnes:2022qli}.

In the subsequent sections we take inspiration from quantum mechanical systems with singlet constraints, such as gauged matrix models and spin matrix theory \cite{Harmark2014, Baiguera2022}, by considering the physics and algebraic structure of the permutation invariant subspace of the matrix harmonic oscillator. Section \ref{sec: Perm subspace} reviews the observations in \cite{Barnes:2022qli} concerning this subspace and its connection to partition algebras. We present three bases: the diagram basis, the orbit basis and the representation basis for the subspace of invariants. The three bases have several distinguishing properties. The diagram basis is the most geometrical of the three and forms an orthogonal basis for $\N \rightarrow \infty$. The orbit basis is exactly orthogonal for all $\N$ and is useful for describing finite $\N$ effects. As we will see, the representation basis is closely related to matrix units for $P_k(\N)$ and forms an eigenbasis of the algebraic Hamiltonians that we describe in section \ref{sec: algebraic hammy}.

Algebraic Hamiltonians, based on partition algebras, that act on the subspace of invariants through diagram multiplication were constructed in \cite{Barnes:2022qli}. For particular choices, these have exactly solvable spectra. Their description in terms of creation and annihilation operators used projectors onto fixed degree $k$ states. As we will see in section \ref{sec: algebraic hammy}, these algebraic Hamiltonians have a nice description -- without projection operators -- when the diagrams are restricted to permutation diagrams. This is a new observation in this particular context, but Hamiltonians based on symmetric groups have been considered in spin matrix theory. In this thesis, we present the Hamiltonians based on partition algebras as generalizations of those coming from permutation diagrams. Therefore, it is natural to forgo the inclusion of projectors. We illustrate the challenges involved in solving these Hamiltonians when projectors are excluded but do not commit to a particular solution.

In the last section we consider vacuum expectation values inspired by extremal correlators in $\mathcal{N}=4$ SYM. We show that they have a nice description in terms of an outer product in the diagram basis. The representation basis is used to prove representation theoretic selection rules for the extremal correlators. This was first proved in \cite{Barnes:2022qli}.

\section{Matrix harmonic oscillator} \label{sec: MQM}
The simplest matrix harmonic oscillator has a Lagrangian
\begin{equation} \label{eq: free lagrangian}
	L_0 = \frac{1}{2}\Bigg( \sum_{i,j=1}^\N \partial_t X_{ij} \partial_t X_{ij} - X_{ij} X_{ij} \Bigg).
\end{equation}
It describes a system of $\N^2$ decoupled oscillators. The conjugate momenta are 
\begin{equation}
	\Pi_{ij} = \pdv{L_0}{(\partial_t X_{ij})}= \pdv{t} X_{ij}. 
\end{equation}
The Hamiltonian corresponding to $L_0$ is
\begin{equation} \label{eq: free H}
	H_0 = \frac{1}{2} \Bigg( \sum_{i,j=1}^N \Pi_{ij}\Pi_{ij} + X_{ij}X_{ij} \Bigg).
\end{equation}
The canonical commutation relations are
\begin{equation}
	\comm{X_{ij}}{\Pi_{kl}} = i\delta_{ik}\delta_{jl}.
\end{equation}

The Hamiltonian given in \eqref{eq: free H} is diagonalized in the usual way - introducing oscillators $a^{\dagger}_{ij}, a_{ij}$ defined by
\begin{equation}
	\begin{aligned}
		X_{ij} &= \sqrt{\frac{1}{2}}\qty(a^\dagger_{ij} + a_{ij}), \\
		\Pi_{ij} &= i\sqrt{\frac{1}{2}}\qty(a^\dagger_{ij} - a_{ij}),
	\end{aligned} \label{eq: free oscillators}
\end{equation}
with commutation relations
\begin{equation}
	\comm{a_{ij}}{a^\dagger_{kl}} = \delta_{ik}\delta_{jl}. \label{eq: simplest oscillators}
\end{equation}
Normal ordering $H_0$ gives
\begin{equation}
	H_0 = \sum_{i,j=1}^N a_{ij}^{\dagger} a_{ij}, \label{eq: simplest hamiltonian}
\end{equation}
which is just a number operator. We now show that $H_0$ is invariant under a $U(N^2)$ symmetry that acts on oscillators as
\begin{align}
	a_{ij} &\rightarrow \sum_{k,l = 1}^N U_{ij; kl} a_{kl}, \\
	a^{\dagger}_{ij} &\rightarrow \sum_{k,l = 1}^N U^{\dagger}_{kl ; ij} a^{\dagger}_{kl},
\end{align}
with $U_{ij; kl}$ an $N^2 \times N^2$ unitary matrix satisfying
\begin{align}
	\sum_{k,l = 1}^N U_{ij; kl} U^{\dagger}_{kl; mn} = \delta_{im} \delta_{jn}.
\end{align}
Under the $U(N^2)$ transformation $H_0$ is invariant,
\begin{align} \nonumber
	H_0 \rightarrow &\sum_{i,j,k,l,m,n} U^{\dagger}_{kl ; ij} U_{ij ; mn} a^{\dagger}_{kl} a_{mn} \\ \nonumber
	= &\sum_{k,l,m,n} \delta_{km} \delta_{ln} a^{\dagger}_{kl} a_{mn} \\
	= &\sum_{k,l} a^{\dagger}_{kl} a_{kl}.
\end{align}

The oscillator states
\begin{equation}
	\prod_{i,j}\frac{(a^\dagger_{ij})^{k_{ij}}}{\sqrt{{k_{ij}!}}} \ket{0} \label{eq: H0 eigenbasis}
\end{equation}
labelled by non-negative integers $k_{ij}$ with $i,j=1,\dots,N$ are energy eigenstates of $H_0$.
The total Hilbert (Fock) space $\mathcal{H}$ decomposes into subspaces $\mathcal{H}^{(k)}$ with fixed number of oscillators (degree) $k$,
\begin{equation}
	\mathcal{H} \cong \bigoplus_{k=0}^{\infty} \mathcal{H}^{(k)}.
\end{equation}
The subset of states with $k=\sum_{i,j} k_{ij}$ form an eigenbasis for the subspace $\mathcal{H}^{(k)}$ and have energy $k$.
In general the spectrum is highly degenerate. The number of states with energy $k$ is
\begin{equation}
	\dim \mathcal{H}^{(k)} = \binom{N^2+k-1}{k} = \frac{N^2(N^2 +1 ) \dots (N^2+k-1)}{k!}.
\end{equation}
This is the number of ways to choose $k$ elements from a set of $N^2$ when repetition is allowed.
It is also the dimension of the symmetric part of a $k$-fold tensor product of a vector space with dimension $N^2$.
Equivalently, it is the dimension of the vector space of states composed of $k$ bosonic oscillators $a^\dagger_{ij}$. For fixed $k$ and $N \gg  2k$ the dimension grows as $N^{2k}$.

\subsection{Permutation invariant quadratic potentials.}
\label{subsec: PIMQM}
The construction in chapter \ref{chapter: 0d} is closely related to solving a model of matrix oscillators in a permutation invariant quadratic potential $V(X)$. We show how the techniques developed in previous chapters can be used to exactly derive the energy spectrum of this system.

A system of $\N^2$ particles in a potential is described by the Lagrangian
\begin{equation}
	L = \frac{1}{2}\sum_{i,j=1}^N \partial_t X_{ij} \partial_t X_{ij} - \frac{1}{2}V(X). \label{eq: lagrangian}
\end{equation}
We take the potential to be a general quadratic $\SN$ invariant potential
\begin{equation}
	V(X_{ij}) = V(X_{(i)\sn (j)\sn}), \quad \forall \sn \in \SN. \label{eq: potential invariance}
\end{equation}
The action of $\SN$ on $X_{ij}$ defined in \eqref{eq: potential invariance} corresponds to the diagonal action on the tensor product $\VN \otimes \VN$.

In chapter \ref{chapter: 0d} we parametrised the general quadratic $\SN$ invariant potential using representation variables (see \eqref{eq: rep basis})
\begin{equation}
	X_{\lambda, \alpha,a} =  C_{\lambda \alpha a}^{ij} X_{ij},
\end{equation}
where $C_{\lambda \alpha a}^{ij}$ are Clebsch-Gordan coefficients for the decomposition of $\VN \otimes \VN$.
The full Lagrangian in the representation basis is
\begin{equation}
	L = \sum_{\lambda, \alpha,\beta, a} \qty(\delta^{\alpha\beta}\partial_t X_{\lambda, \alpha,a} \partial_t X_{\lambda, \beta,a}-X_{\lambda, \alpha,a} G^{\lambda; \alpha \beta}X_{\lambda, \beta,a}).
\end{equation}
It describes a set of coupled harmonic oscillators.
Writing this Lagrangian in decoupled form only requires the diagonalization of a set of small parameter matrices $G^{\lambda; \alpha \beta}$ (a real symmetric $3 \times 3$ matrix and another real symmetric $2\times 2$ matrix), despite having a potentially large number of harmonic oscillators ($\N^2$).

Let
\begin{equation}
	\Omega^{\lambda; \alpha \beta}= (\omega^{\lambda; \alpha})^2 \delta^{\alpha \beta}
\end{equation}
be the diagonal matrix\footnote{We assume the eigenvalues are positive such that the spectrum of the Hamiltonian is bounded from below. Therefore, we may write the eigenvalues as squares without loss of generality.} such that
\begin{equation} \label{eq: full Hamiltonian metric}
	G^{\lambda; \alpha \beta} = \sum_{\gamma, \delta} (U^{\lambda})^\alpha_\gamma \Omega^{\lambda; \gamma \delta}(U^{\lambda})^\beta_\delta,
\end{equation}
where $U^{\lambda}$ are orthogonal change of basis matrices. In the decoupled basis
\begin{equation}
	S_{\lambda, \alpha,a} = \sum_\beta X_{\lambda, \beta,a}(U^{\lambda})^{\beta}_{\alpha}, \label{eq: decoupled basis}
\end{equation}
we have
\begin{equation}
	L =\frac{1}{2}\sum_{\lambda, \alpha, a}  \qty( \partial_t S_{\lambda, \alpha,a} \partial_t S_{\lambda, \alpha,a} -(\omega^{\lambda;\alpha})^2 S_{\lambda, \alpha,a} S_{\lambda, \alpha,a}).
\end{equation}
The canonical momenta are given by
\begin{equation}
	\Sigma_{\lambda, \alpha,a} = \partial_t S_{\lambda, \alpha,a}.
\end{equation}
The new canonical coordinates satisfy
\begin{equation}
	\comm{\Sigma_{\lambda, \alpha,a}}{S_{\lambda', \beta,b}} = i\delta_{{\lambda \lambda'}}\delta_{\alpha \beta}\delta_{ab},
\end{equation}
since $U^{\lambda}$ are orthogonal matrices.

The corresponding Hamiltonian,
\begin{equation} \label{eq: perm inv Hamiltonian}
	H = \frac{1}{2}\sum_{\lambda, \alpha, a} \qty( \Sigma_{\lambda, \alpha,a} \Sigma_{\lambda, \alpha,a} + (\omega^{\lambda; \alpha})^2 S_{\lambda, \alpha,a} S_{\lambda, \alpha,a}),
\end{equation}
is diagonalized by introducing oscillators
\begin{equation}
	\begin{aligned}
		&S_{\lambda, \alpha,a} = \sqrt{\frac{1}{2\omega^{\lambda; \alpha}}}(A^\dagger_{\lambda, \alpha,a} + A_{\lambda, \alpha,a}), \\
		&\Sigma_{\lambda, \alpha,a} = i\sqrt{\frac{\omega^{\lambda; \alpha}}{2}}(A^\dagger_{\lambda, \alpha,a} - A_{\lambda, \alpha,a}),
	\end{aligned}
\end{equation}
which satisfy
\begin{equation}
	\comm{A_{\lambda, \alpha,a}}{A^\dagger_{\lambda', \alpha',a'}} = \delta_{\lambda \lambda'} \delta_{\alpha \alpha'} \delta_{aa'}.
\end{equation}
In the oscillator basis, the normal ordered Hamiltonian has the form
\begin{equation}\label{eq: hamiltonian}
	H = \sum_{\lambda, \alpha, a}\omega^{\lambda; \alpha}A^\dagger_{\lambda, \alpha,a} A_{\lambda, \alpha,a}.
\end{equation} 

\section{Permutation invariant sectors for quantum matrix systems} \label{sec: Perm subspace}
We will now consider the algebraic structure behind the subspace of permutation invariant states. We will see that these subspaces are closely connected to partition algebras. To discuss the connection between invariant states and partition algebras, it will be useful to introduce the following matrices of oscillators
\begin{equation}
	(a^\dagger)^i_j = a^\dagger_{ji}, \quad a^i_j = a_{ij},
\end{equation}
which satisfy
\begin{equation}
	\comm{a^i_j}{(a^\dagger)^l_k} = \delta^i_k \delta^l_j.
\end{equation}
We think of these as the matrix elements of operator-valued maps in $\End(\VN)$
\begin{equation}
	a^\dagger(e_i) = \sum_{j=1}^\N (a^\dagger)^j_i e_j \, \text{and}\, a(e_i) = \sum_{j=1}^\N a^j_i e_j,
\end{equation}
and $\End(\VN^{\otimes k})$ more generally by
\begin{equation}
	(a^\dagger)^{\otimes k}(e_{i_1} \otimes \dots \otimes e_{i_k}) = a^\dagger(e_{i_1}) \otimes \dots \otimes a^\dagger(e_{i_k}),
\end{equation}
and
\begin{equation}
	a^{\otimes k}(e_{i_1} \otimes \dots \otimes e_{i_k}) = a(e_{i_1}) \otimes \dots \otimes a(e_{i_k}).
\end{equation}
With this notation at hand, we can define general states in $\mathcal{H}^{(k)}$ with a beautiful formula.
\begin{definition}\label{def: tensor state}
	Let $T \in \End(\VN^{\otimes k})$. We define a state corresponding to $T$ by
	\begin{equation}
		\ket{T} = \Tr_{V_N^{\otimes k}}(T (a^\dagger)^{\otimes k}) \ket{0}.
	\end{equation}
\end{definition}
We also define $\bra{T}$ as
\begin{equation}
	\bra{T}=\bra{0} \Tr_{V_N^{\otimes k}}(T (a^\dagger)^{\otimes k})^\dagger = \bra{0} \Tr_{V_N^{\otimes k}}(T^\dagger a^{\otimes k}),
\end{equation}
where $T^\dagger$ is the complex conjugate and transpose of $T$. We have used the fact that $[(a^\dagger)^i_j]^\dagger = [a_{ji}^\dagger]^\dagger = a_{ji} = a^j_i$ in the last equality.

The pairing of two states in this notation has a nice expression in terms of traces.
\begin{proposition}
	Let $\ket{T}, \ket{T'} \in \mathcal{H}^{(k)}$ be two vectors defined by tensors $T,T'$. The pairing
	\begin{equation}
		\bra{T}\ket{T'} = \sum_{\gamma \in S_k} \Tr_{\VN^{\otimes k}}(T^\dagger D_\gamma T'  D_{\gamma^{-1}_{}}),
	\end{equation}
	where for $\gamma \in S_k$, $D_{\gamma}$ is the corresponding linear operator that permutes tensors factors of $\VN^{\otimes k}$.
\end{proposition}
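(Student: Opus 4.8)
The plan is to compute the pairing $\bra{T}\ket{T'}$ directly from the definitions by normal-ordering a string of annihilation and creation operators, and then to repackage the resulting sum over index matchings as a sum over $S_k$. First I would write out both sides using Definition \ref{def: tensor state}:
\begin{equation}
	\bra{T}\ket{T'} = \bra{0} \Tr_{\VN^{\otimes k}}(T^\dagger a^{\otimes k}) \, \Tr_{\VN^{\otimes k}}(T' (a^\dagger)^{\otimes k}) \ket{0}.
\end{equation}
Expanding the traces in components, the left trace contributes a factor $(T^\dagger)^{i_1 \dots i_k}_{j_1 \dots j_k} a^{j_1}_{i_1} \cdots a^{j_k}_{i_k}$ and the right trace contributes $(T')^{p_1 \dots p_k}_{q_1 \dots q_k} (a^\dagger)^{q_1}_{p_1} \cdots (a^\dagger)^{q_k}_{p_k}$, with all indices summed. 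Using $a^j_i = a_{ij}$ and $(a^\dagger)^q_p = a^\dagger_{pq}$ together with the commutation relation $\comm{a_{ij}}{a^\dagger_{kl}} = \delta_{ik}\delta_{jl}$, the vacuum expectation value $\bra{0} a_{i_1 j_1} \cdots a_{i_k j_k} a^\dagger_{p_1 q_1} \cdots a^\dagger_{p_k q_k} \ket{0}$ is, by Wick's theorem for bosonic oscillators, the sum over bijections (i.e. permutations $\gamma \in S_k$) matching the $m$-th annihilation pair to the $\gamma(m)$-th creation pair, each contributing $\prod_m \delta_{i_m, p_{\gamma(m)}} \delta_{j_m, q_{\gamma(m)}}$ — here one must be careful about the index conventions in the operator-valued maps, so I would track the placement of $i$ versus $j$ explicitly against the definitions $a^i_j = a_{ij}$, $(a^\dagger)^i_j = a^\dagger_{ji}$.

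Next I would substitute this back and carry out the Kronecker-delta contractions. For a fixed $\gamma$, the deltas identify the lower indices of $T^\dagger$ (after the trace identification) with a $\gamma$-permuted set of indices of $T'$, and likewise for the upper indices. The key observation is that relabelling $k$ tensor factors by $\gamma$ is exactly the action of the operator $D_\gamma \in \End(\VN^{\otimes k})$ introduced earlier in the chapter (the linear operator permuting tensor factors). Concretely, inserting $D_\gamma$ to the left of $T'$ and $D_{\gamma^{-1}}$ to the right realizes the conjugation $D_\gamma T' D_{\gamma^{-1}}$, and the remaining contraction with $T^\dagger$ closes up into a single trace $\Tr_{\VN^{\otimes k}}(T^\dagger D_\gamma T' D_{\gamma^{-1}})$. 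Summing over $\gamma$ then yields
\begin{equation}
	\bra{T}\ket{T'} = \sum_{\gamma \in S_k} \Tr_{\VN^{\otimes k}}(T^\dagger D_\gamma T' D_{\gamma^{-1}}),
\end{equation}
which is the claim.

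The main obstacle I anticipate is bookkeeping: matching the Wick contractions to the correct slots of $D_\gamma$ and getting the conjugation $D_\gamma T' D_{\gamma^{-1}}$ rather than, say, $D_{\gamma^{-1}} T' D_\gamma$ or a one-sided insertion. The cleanest way to control this is to first verify the identity for $k=1$ (where it reads $\bra{T}\ket{T'} = \Tr_{\VN}(T^\dagger T')$, a trivial check) and $k=2$ by hand, and to use these small cases to pin down the index conventions; after that the general case follows by the same pattern with $\gamma$ ranging over all of $S_k$. A secondary, purely cosmetic point is that there is a choice in how the permutation acts on tensor factors (acting on positions versus acting on labels), but since we sum over all of $S_k$ and $D_{\gamma^{-1}} = D_\gamma^{-1} = D_\gamma^T$, either convention gives the same final sum.
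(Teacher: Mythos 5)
Your proposal is correct and follows essentially the same route as the paper: expand both states in oscillator components, apply Wick's theorem for the bosonic oscillators to write the vacuum expectation value as a sum over permutations $\gamma \in S_k$ of products of Kronecker deltas, and then recognize the $\gamma$-permuted index contraction as $\Tr_{\VN^{\otimes k}}(T^\dagger D_\gamma T' D_{\gamma^{-1}})$. The bookkeeping concerns you flag (placement of $D_\gamma$ versus $D_{\gamma^{-1}}$, index conventions) are exactly what the paper's one-line computation handles, and your suggested small-$k$ checks would resolve them the same way.
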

\begin{proof}
	This pairing is computed by summing over all the ways of contracting creation operators into annihilation operators. Using the fact that
	\begin{equation}
		\bra{0}a^{i_1}_{j_1} \dots a^{i_r}_{j_r} (a^\dagger)^{l_1}_{k_1} \dots (a^\dagger)^{l_r}_{k_r}\ket{0} = \sum_{\gamma \in S_r} [a^{i_{(1)\gamma}}_{j_{(1)\gamma}}, (a^\dagger)^{l_1}_{k_1}] \dots [a^{i_{(r)\gamma}}_{j_{(r)\gamma}}, (a^\dagger)^{l_r}_{k_r}],
	\end{equation}
	we have
	\begin{equation}
		\bra{T}\ket{T'} =  \sum_{\gamma \in S_k} (T^\dagger)_{j_1 \dots j_k}^{i_1 \dots i_k} (T')^{i_{(1)\gamma} \dots i_{(k)\gamma}}_{j_{(1)\gamma} \dots j_{(k)\gamma}} = \sum_{\gamma \in S_k} \Tr_{\VN^{\otimes k}}(T^\dagger D_\gamma T'  D_{\gamma^{-1}_{}}).
	\end{equation}
\end{proof}

In the next section we define invariant states and relate them to partition algebras. We will then describe three bases for the subspace $\Hilbertspace$ of invariant states, and some of their distinguished properties.

\subsection{Invariant states and partition algebras.}
The action of $\sn \in \SN$ on $X_{ij}$ translates to an action on the oscillators
\begin{equation}
	a_{ij} \mapsto a_{(i)\sigma^{-1} (j)\sigma^{-1}}, \quad a^\dagger_{ij} \mapsto a^\dagger_{(i)\sigma^{-1} (j)\sigma^{-1}}.
\end{equation}
We extend this to an action on the Hilbert space $\mathcal{H}$,
\begin{definition}[Adjoint action]
	Let $\sn \in \SN$ and define $\Adj{\sn}: \mathcal{H} \rightarrow \mathcal{H}$ by
	\begin{equation}
		\Adj{\sn}a^\dagger_{i_1 j_1} \dots a^\dagger_{i_k j_k} \ket{0} = a^\dagger_{(i_1)\sn (j_1)\sn} \dots a^\dagger_{(i_k)\sn (j_k)\sn} \ket{0}.
	\end{equation}
	We call this the adjoint action of $\sn$ on $\mathcal{H}$.
\end{definition}
The goal of this section is to describe and construct the subspace of $\Adj{\sn}$ invariant states of $\mathcal{H}$, denoted
\begin{equation}
	\Hilbertspace = \{\ket{T} \in \mathcal{H} \, \vert \, \Adj{\sn} \ket{T} = \ket{T} \, \forall\sn \in \SN \}.
\end{equation}

In the notation of Definition \ref{def: tensor state}, the adjoint action of $\sn \in \SN$ takes the form
\begin{equation}
	\Adj{\sn}\ket{T} = \Tr_{V_N^{\otimes k}}(P_\sn T P_{\sn^{-1}_{}}(a^\dagger)^{\otimes k}) \ket{0}.
\end{equation}
Consequently, the adjoint invariant vectors in $\mathcal{H}$ correspond to $\SN$ invariant tensors,
\begin{equation}
	\Adj{\sn}\ket{T} = \ket{T} \,  \Leftrightarrow \, T^{(j_1)\sn \dots (j_k)\sn}_{(i_1)\sn \dots (i_k)\sn}.  \label{eq: inv space is Pkn}
\end{equation}
For $\N \geq 2k$ we may use the isomorphism with partition algebras. That is,
\begin{equation}
	\Hilbertspace^{(k)} \subset \End_{\SN}(\VN^{\otimes k}) \cong P_k(\N),
\end{equation}
where $\Hilbertspace^{(k)}$ is the subspace of $\Hilbertspace$ of degree $k$ states.

The bosonic(commutative) nature of the creation operators leads to redundancy in the correspondence $T \leftrightarrow \ket{T}$. In particular, let $\tau \in S_k$, $D_\tau \in \End_{\SN}(\VN^{\otimes k})$ the corresponding operator and $T \in \End(\VN^{\otimes k})$ then
\begin{equation}
	\ket{T} = \ket{D_\tau T D_{\tau^{-1}_{}}}.
\end{equation}
This observation leads to the following corollaries
\begin{corollary}
	Define the vector space
	\begin{equation}
		\End_{S_k}(\VN^{\otimes k}) = \Span_\mathbb{C}\{ T \in \End(\VN^{\otimes k}) \, \vert \, D_\tau T D_{\tau^{-1}_{}} = T \, \forall \tau \in S_k\},
	\end{equation}
	for linear maps on $\VN^{\otimes k}$ that commute with $S_k$. As vector spaces we have
	\begin{equation}
		\mathcal{H}^{(k)} \cong \End_{S_k}(\VN^{\otimes k}).
	\end{equation}
\end{corollary}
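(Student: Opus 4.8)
The plan is to establish the isomorphism $\mathcal{H}^{(k)} \cong \End_{S_k}(\VN^{\otimes k})$ as vector spaces by exhibiting an explicit linear map and showing it is a bijection. First I would recall that a general state in $\mathcal{H}^{(k)}$ is a degree-$k$ polynomial in the commuting creation operators $(a^\dagger)^i_j$ acting on $\ket{0}$, so $\mathcal{H}^{(k)}$ is spanned by monomials $(a^\dagger)^{i_1}_{j_1}\cdots(a^\dagger)^{i_k}_{j_k}\ket{0}$. By Definition \ref{def: tensor state}, the assignment $T \mapsto \ket{T}$ sends a tensor $T \in \End(\VN^{\otimes k})$ to the state $\Tr_{V_N^{\otimes k}}(T(a^\dagger)^{\otimes k})\ket{0}$, which is linear in $T$ and surjective onto $\mathcal{H}^{(k)}$ (since taking $T$ to run over elementary tensors $E^{i_1\dots i_k}_{j_1\dots j_k}$ produces all monomials). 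So the map $\End(\VN^{\otimes k}) \to \mathcal{H}^{(k)}$ is a linear surjection; the content is to identify its kernel.

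The second step is to compute the kernel. The commutativity of the creation operators means that permuting the tensor factors of $(a^\dagger)^{\otimes k}$ leaves the state unchanged, and cyclicity of the trace then gives $\ket{T} = \ket{D_\tau T D_{\tau^{-1}}}$ for all $\tau \in S_k$, as already observed in the excerpt just above the statement. Consequently $\ket{T}$ depends only on the $S_k$-average $\tfrac{1}{k!}\sum_{\tau\in S_k} D_\tau T D_{\tau^{-1}}$, which lies in $\End_{S_k}(\VN^{\otimes k})$. I would show that (i) restricting $T \mapsto \ket{T}$ to the subspace $\End_{S_k}(\VN^{\otimes k})$ remains surjective (because the average of a preimage of a given state is again a preimage), and (ii) this restriction is injective: if $T \in \End_{S_k}(\VN^{\otimes k})$ and $\ket{T} = 0$, then pairing against all basis states $\bra{T'}$ (or directly extracting the coefficient of each monomial $(a^\dagger)^{i_1}_{j_1}\cdots(a^\dagger)^{i_k}_{j_k}\ket{0}$) forces the symmetrization of the components of $T$ over $S_k$ to vanish, and since $T$ is already $S_k$-invariant this means $T=0$. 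Combining (i) and (ii) gives the desired vector space isomorphism, with inverse given by sending a monomial basis state to the $S_k$-symmetrized elementary tensor.

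The main obstacle I anticipate is making the injectivity argument in step two airtight: one must carefully match the combinatorial bookkeeping of monomials in the commuting $(a^\dagger)^i_j$ (where a monomial is really an unordered multiset of index-pairs, i.e. an $S_k$-orbit) with the $S_k$-invariant components of $T$, and check that the symmetrization map $\End(\VN^{\otimes k}) \to \End_{S_k}(\VN^{\otimes k})$ is exactly a projection whose image is identified with $\mathcal{H}^{(k)}$. A clean way to do this is dimension counting: both $\mathcal{H}^{(k)}$ and $\End_{S_k}(\VN^{\otimes k})$ have dimension equal to the number of $S_k$-orbits on pairs $(i_1\dots i_k; j_1\dots j_k)$ with entries in $\{1,\dots,\N\}$ — for $\mathcal{H}^{(k)}$ because that counts degree-$k$ monomials in the $\N^2$ commuting variables, and for $\End_{S_k}(\VN^{\otimes k})$ because conjugation-invariant tensors are constant on such orbits — so once surjectivity of the restricted map is established, injectivity follows automatically. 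I would present the argument in that order to keep it short, since the excerpt's preceding discussion of $\End_{\SN}(\VN^{\otimes k})$ via orbits of basis tensors provides the template.
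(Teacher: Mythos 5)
Your proposal is correct and follows essentially the same route as the paper, which treats the statement as an immediate consequence of the preceding observation that $\ket{T}=\ket{D_\tau T D_{\tau^{-1}}}$ for all $\tau\in S_k$; you simply make the implicit step explicit by noting that $T\mapsto\ket{T}$ is a linear surjection onto $\mathcal{H}^{(k)}$ and identifying its kernel via $S_k$-averaging (or, equivalently, by the orbit-counting dimension argument). Both your injectivity argument over $\mathbb{C}$ and the dimension count are sound, so the write-up would be a valid completion of the paper's one-line justification.
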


The action of $S_k$ commutes with the action of $\SN$ on $\VN^{\otimes k}$. Therefore, we can consider the set of elements in $\End(\VN^{\otimes k})$ that commute with both. We have the following corollary relating this vector space to the Hilbert space of invariant states.
\begin{corollary}
Combing the above corollary with equation \eqref{eq: inv space is Pkn} we have
\begin{equation}
	\Hilbertspace^{(k)} \cong \End_{\SN \times S_k}(\VN^{\otimes k}) \subset P_k(\N).
\end{equation}
We define the symmetrized subalgebra of $P_k(\N)$
\begin{equation}
	SP_k(\N) = \frac{P_k(\N)}{S_k} \cong \End_{\SN \times S_k}(\VN^{\otimes k}), \label{eq: def SPk}
\end{equation}
where
\begin{equation}
	\frac{P_k(\N)}{S_k} = \{d \in P_k(\N) \, \vert \, \tau d \tau^{-1} = d \quad \forall \tau \in S_k \subset P_k(\N)\}.
\end{equation}
Therefore, assuming $\N \geq 2k$, we have
\begin{equation}
	\Hilbertspace^{(k)} \cong SP_k(\N).
\end{equation}
\end{corollary}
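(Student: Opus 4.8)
The plan is to chain together the isomorphisms established earlier in the excerpt. The statement to prove is $\Hilbertspace^{(k)} \cong SP_k(\N)$ for $\N \geq 2k$, where $SP_k(\N) = P_k(\N)/S_k$ is the symmetrized subalgebra. The previous corollary already gives $\Hilbertspace^{(k)} \cong \End_{\SN \times S_k}(\VN^{\otimes k})$ and defines $SP_k(\N) = P_k(\N)/S_k \cong \End_{\SN \times S_k}(\VN^{\otimes k})$; so at first glance the result is immediate. The work that remains is to \emph{justify} the isomorphism $SP_k(\N) \cong \End_{\SN \times S_k}(\VN^{\otimes k})$ appearing in \eqref{eq: def SPk}, i.e. to show that the subalgebra of $P_k(\N)$ fixed under conjugation by $S_k \subset P_k(\N)$ is isomorphic to the algebra of linear maps on $\VN^{\otimes k}$ commuting with both $\SN$ and $S_k$.

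First I would invoke Theorem \ref{thm: VN Schur-Weyl}: for $\N \geq 2k$ the map $\phi_k: P_k(\N) \to \End_{\SN}(\VN^{\otimes k})$, $d_\pi \mapsto D_\pi$, is an algebra isomorphism. Next I would observe that this isomorphism is $S_k$-equivariant, where $S_k$ acts on both sides by conjugation. On the $P_k(\N)$ side, $S_k$ is the subgroup of diagram basis elements corresponding to permutation diagrams (identity on each strand up to a permutation of the top vertices), and it acts by $\tau d \tau^{-1}$. On the $\End_{\SN}(\VN^{\otimes k})$ side, $\tau \in S_k$ acts via $D_\tau T D_{\tau^{-1}}$, where $D_\tau$ permutes tensor factors. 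Since $\phi_k$ is an algebra homomorphism and $\phi_k(\tau) = D_\tau$, we get $\phi_k(\tau d \tau^{-1}) = D_\tau \phi_k(d) D_{\tau^{-1}}$, so $\phi_k$ restricts to an isomorphism of the fixed subalgebras:
\begin{equation}
	\left(\frac{P_k(\N)}{S_k}\right) = \{d \in P_k(\N) \, \vert \, \tau d \tau^{-1} = d\} \xrightarrow{\ \sim\ } \{T \in \End_{\SN}(\VN^{\otimes k}) \, \vert \, D_\tau T D_{\tau^{-1}} = T\} = \End_{\SN \times S_k}(\VN^{\otimes k}).
\end{equation}
That this is an algebra isomorphism (not merely a vector space isomorphism) follows because the fixed-point set of a group acting by algebra automorphisms on an associative algebra is always a subalgebra, and $\phi_k$ restricted to a subalgebra is still an algebra homomorphism; injectivity and surjectivity are inherited from $\phi_k$ being an isomorphism on the whole algebra.

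Then I would assemble the chain: $\Hilbertspace^{(k)} \cong \End_{\SN \times S_k}(\VN^{\otimes k})$ by the previous corollary (which rests on \eqref{eq: inv space is Pkn}, stating that $\Adj{\sn}$-invariance of $\ket{T}$ is equivalent to $\SN$-invariance of the tensor $T$, together with the redundancy $\ket{T} = \ket{D_\tau T D_{\tau^{-1}}}$ which cuts $\mathcal{H}^{(k)} \cong \End_{S_k}(\VN^{\otimes k})$ down to the $S_k$-fixed part); and $\End_{\SN \times S_k}(\VN^{\otimes k}) \cong SP_k(\N)$ by the argument above. Composing gives $\Hilbertspace^{(k)} \cong SP_k(\N)$, as claimed.

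The only genuinely delicate point — and the one I would be most careful about — is the identification of the redundancy group with $S_k$ sitting \emph{inside} $P_k(\N)$, and checking that taking the quotient $P_k(\N)/S_k$ (defined as a fixed subalgebra under conjugation) is the correct operation rather than, say, a coinvariant quotient. One must verify that when $T \in \End_{\SN}(\VN^{\otimes k})$ is already $\SN$-invariant, the equivalence $\ket{T} = \ket{T'}$ forces $T$ and $T'$ to have the same $S_k$-symmetrization, and conversely that every $S_k$-invariant $\SN$-invariant tensor arises this way; since $S_k$ acts by automorphisms and we are over $\mathbb{C}$ (so averaging $\frac{1}{k!}\sum_{\tau} D_\tau T D_{\tau^{-1}}$ is available), the fixed subalgebra and the image of the averaging projector coincide, which closes the gap. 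I expect no obstacle beyond bookkeeping, since all the structural inputs (Schur-Weyl duality for $\SN$, the $\mathcal{H}^{(k)} \cong \End_{S_k}(\VN^{\otimes k})$ corollary, and \eqref{eq: inv space is Pkn}) are already in hand.
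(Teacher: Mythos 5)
Your proposal is correct and follows essentially the same route as the paper, which states this corollary without separate proof precisely because it is obtained by chaining $\Hilbertspace^{(k)} \cong \End_{\SN \times S_k}(\VN^{\otimes k})$ (from the preceding corollary and \eqref{eq: inv space is Pkn}) with Schur--Weyl duality $P_k(\N) \cong \End_{\SN}(\VN^{\otimes k})$ for $\N \geq 2k$, restricted to the $S_k$-conjugation-fixed parts. Your explicit check that $\phi_k$ is $S_k$-equivariant (via $\phi_k(\tau) = D_\tau$) and your remark that over $\mathbb{C}$ the fixed subalgebra coincides with the image of the averaging projector are exactly the details the paper leaves implicit, and they close the argument correctly.
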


To summarise the above steps in words, we are investigating the adjoint action of permutations in $\SN$ on $\N \times \N$ quantum mechanical matrix variables $X_{ij}$. The corresponding
oscillators inherit the adjoint $\SN$ action. Oscillator states with $k$ oscillators correspond to
tensors $T$ with $k$ upper and lower indices, subject to an $S_k$ symmetry permuting the $k$
upper-lower index pairs along the tensor. This $S_k$ symmetry arises from the bosonic nature
of the oscillators. The $\SN$ invariant $k$-oscillator states correspond to tensors having $k$ upper
and $k$ lower indices, subject to an $\SN \times S_k$ invariance. This subspace of tensors can be
described as a symmetrized sub-algebra $SP_k(\N)$ of the partition algebra $P_k(\N)$.
This will be used in the following subsections to construct bases for $\Hilbertspace$.

\subsection{Diagram basis.} \label{sec: diagram basis}
In Definition \ref{def: partition algebra} we introduced the partition algebra using the diagram basis. As we now describe, the symmetrized partition algebra inherits a diagram basis from $P_k(\N)$. Consider a diagram basis element $d_\pi \in P_k(\N)$ and define the orbit
\begin{equation}
	[d_\pi] = \{\tau d_\pi \tau^{-1}, \, \forall \tau \in S_k\}.
\end{equation}
The set of all such orbits of $P_k(\N)$ define a diagram basis for $SP_k(\N)$.
\begin{proposition}
	Let $[d_\pi] $ be the $S_k$ orbit of $d_\pi \in P_k(\N)$. Define the corresponding element $\SPk{d}_\pi  \in SP_k(\N)$ by
	\begin{equation}
		\SPk{d}_\pi = \frac{1}{\abs{[d_\pi]}} \sum_{d_{\pi'} \in [d_\pi]} d_{\pi'} = \frac{1}{k!}\sum_{\tau \in S_k} \tau d_\pi \tau^{-1}. \label{def: eq SPk element}
	\end{equation}
	The set of elements $\SPk{d}_\pi$ corresponding to distinct $S_k$ orbits form a basis for $SP_k(\N)$.
\end{proposition}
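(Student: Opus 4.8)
The plan is to show that the elements $\SPk{d}_\pi$ span $SP_k(\N)$ and are linearly independent, exploiting the fact that $P_k(\N)$ already has the diagram basis $\{d_\pi\}$ and that $SP_k(\N)$ is defined as the $S_k$-fixed subspace under conjugation $d \mapsto \tau d \tau^{-1}$.

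First I would observe that conjugation by $\tau \in S_k$ permutes the diagram basis: for any set partition $\pi \in \setpart{[k\vert k']}$, the element $\tau d_\pi \tau^{-1}$ is again (a scalar multiple of, in fact exactly) a diagram basis element $d_{\pi^\tau}$, where $\pi^\tau$ is the set partition obtained by relabelling $i \mapsto (i)\tau$ and $i' \mapsto (i')\tau$ in the blocks of $\pi$. This is because conjugating a partition diagram by a permutation diagram simply permutes the top vertices and the bottom vertices simultaneously, with no middle components ever being created (permutation diagrams are ``through-strands'' only), so no powers of $\N$ appear. Hence the $S_k$ action on $P_k(\N)$ permutes the diagram basis, and the orbits $[d_\pi]$ partition the diagram basis into disjoint subsets. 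The averaging operator $d \mapsto \frac{1}{k!}\sum_{\tau} \tau d \tau^{-1}$ is then a projection of $P_k(\N)$ onto $SP_k(\N)$, and $\SPk{d}_\pi$ is precisely the (suitably normalized) image of $d_\pi$ under this projection — equivalently the sum of the distinct diagram basis elements in the orbit of $d_\pi$, divided by the orbit size.

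Next, spanning: any $d \in SP_k(\N)$ can be written in the diagram basis as $d = \sum_\pi c_\pi d_\pi$; applying the averaging projection (which fixes $d$) and regrouping the sum by $S_k$-orbits shows $d$ is a linear combination of the $\SPk{d}_\pi$, since within one orbit all coefficients $c_\pi$ must be equal by $S_k$-invariance (this last point uses linear independence of the $d_\pi$ in $P_k(\N)$ for $\N \geq 2k$, i.e. the stable limit, to read off that invariance of $d$ forces equality of coefficients on each orbit). For linear independence: distinct orbits $[d_\pi] \neq [d_{\pi'}]$ involve disjoint sets of diagram basis elements, so the $\SPk{d}_\pi$ are supported on disjoint subsets of a linearly independent set, hence are themselves linearly independent. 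Finally, the count: the number of distinct $S_k$-orbits on $\setpart{[k\vert k']}$ equals $\dim SP_k(\N)$ by the projection-trace argument ($\dim SP_k(\N) = \frac{1}{k!}\sum_\tau \#\{\text{fixed diagrams}\} = \#\text{orbits}$ by Burnside), which is consistent with the spanning-plus-independence conclusion.

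The main obstacle — though a mild one — is being careful about the stable-limit hypothesis. Outside $\N \geq 2k$ the diagram basis $\{d_\pi\}$ is not linearly independent in $\End_{\SN}(\VN^{\otimes k})$, so the identification $SP_k(\N) \cong \End_{\SN \times S_k}(\VN^{\otimes k})$ and the reading-off of orbit coefficients both require the stable assumption already in force in this chapter; I would state this explicitly. A secondary technical point is the normalization in \eqref{def: eq SPk element}: one must check $\frac{1}{\abs{[d_\pi]}}\sum_{d_{\pi'}\in[d_\pi]} d_{\pi'} = \frac{1}{k!}\sum_{\tau\in S_k}\tau d_\pi\tau^{-1}$, which follows from the orbit-stabilizer theorem since each distinct diagram in the orbit is hit equally often ($k!/\abs{[d_\pi]}$ times) as $\tau$ ranges over $S_k$. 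Everything else is bookkeeping with orbits of a finite group acting on a finite set.
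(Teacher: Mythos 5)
Your argument is correct and follows essentially the same route as the paper: conjugation by $S_k$ permutes the diagram basis, so the coefficients of any invariant element are constant on each orbit, giving spanning, with linear independence immediate because distinct orbit sums are supported on disjoint subsets of the diagram basis. One small remark: the stable-limit caveat you flag is not actually needed for this statement, since the diagrams $d_\pi$ are a basis of the abstract algebra $P_k(\N)$ by definition — linear dependence only arises in their image inside $\End_{\SN}(\VN^{\otimes k})$ when $\N < 2k$.
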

\begin{proof}
	Let $\SPk{d} \in SP_k(\N)$ have an expansion
	\begin{equation}
		\SPk{d} = \sum_{d_\pi} a(d_\pi) d_\pi,
	\end{equation}
	with coefficients $a(d_\pi)$.
	Invariance under $S_k$ implies
	\begin{equation}
		\sum_{d_\pi} a(d_\pi) d_\pi = \sum_{d_\pi} a(d_\pi) \tau d_\pi \tau^{-1}, \quad \forall \tau \in S_k.
	\end{equation}
	Conjugation by $S_k$ on the diagram basis is a bijective map. Therefore, relabelling the sum gives
	\begin{equation}
		a(d_\pi) = a(\tau d_\pi \tau^{-1}), \quad \forall \tau \in S_k.
	\end{equation}	
	In other words, the coefficients are constant on the orbits $[d_\pi]$ and $\SPk{d}$ can be expanded in terms of their sum $\SPk{d}_\pi$.
\end{proof}

Given this basis for $SP_k(\N)$ we have a corresponding basis for $\Hilbertspace^{(k)}$.
\begin{definition}\label{def: diagram state def}
	Let $\SPk{d} \in SP_k(\N)$ and $\SPk{D} \in \End_{\SN \times S_k}(\VN^{\otimes k})$ the corresponding linear map. We define the state $\ket{\SPk{d}} \in \Hilbertspace^{(k)}$ by
	\begin{equation}
		\ket{\SPk{d}} = \Tr_{V_N^{\otimes k}}(\SPk{D} (a^\dagger)^{\otimes k})\ket{0}.
	\end{equation}
\end{definition}

\begin{example}
	At $k=1$ there are two invariant states
	\begin{equation}
		\ket{\PAdiagram{1}{}{}} = \sum_{i,j} (a^\dagger)^i_j \ket{0}, \ket{\PAdiagram{1}{-1/1}{}} = \sum_{i} (a^\dagger)^i \ket{0}, 
	\end{equation}
	At $k=2$ there are $11$ invariant states, some examples are
	\begin{align}
		&\ket{\PAdiagram{2}{-1/1,-2/2}{}}= \sum_{i,j} (a^\dagger)^i_i (a^\dagger)^j_j\ket{0}, \\
		&\ket{\PAdiagram{2}{-1/2,-2/1}{}}= \sum_{i,j} (a^\dagger)^j_i (a^\dagger)^i_j\ket{0}, \\
		&\ket{\PAdiagram{2}{-1/1}{}}=\ket{\PAdiagram{2}{-2/2}{}}= \sum_{i,j,k} (a^\dagger)^i_i (a^\dagger)^j_k\ket{0}, \\
		&\ket{\PAdiagram{2}{}{}}= \sum_{i,j,k,l} (a^\dagger)^j_i (a^\dagger)^l_k\ket{0}.
	\end{align}
\end{example}

We now state the main property of the diagram basis and will spend the rest of this subsection proving it.
\begin{proposition}(Large $\N$ factorisation) \label{prop: large N factorisation}
	Consider two vectors $\ket{\SPk{d}_\pi}, \ket{\SPk{d}_{\pi'}} \in \Hilbertspace^{(k)}$ corresponding to diagram basis elements of $SP_k(\N)$. Define the normalized states
	\begin{equation}
		\ket*{\widehat{\SPk{d}}_\pi} = \frac{\ket{\SPk{d}_\pi}}{\sqrt{\bra{{\SPk{d}_\pi}}\ket{\SPk{d}_\pi}}}, \quad \ket*{\widehat{\SPk{d}}_{\pi'}} = \frac{\ket{\SPk{d}_{\pi'}}}{\sqrt{\bra{{\SPk{d}_{\pi'}}}\ket{\SPk{d}_{\pi'}}}}.
	\end{equation}
	They are orthonormal at large $\N$,
	\begin{equation}
		\bra*{\widehat{\SPk{d}}_\pi}\ket*{\widehat{\SPk{d}}_{\pi'}} = \begin{cases}
			1 + O(1/\sqrt{\N})\qq{if $[d_\pi] = [d_\pi']$,} \\
			0 + O(1/\sqrt{\N})\qq{if $[d_\pi] \neq [d_\pi']$.}
		\end{cases}
	\end{equation}
	We call this large $\N$ factorisation.
\end{proposition}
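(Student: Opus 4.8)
The plan is to compute the inner product $\bra{\SPk{d}_\pi}\ket{\SPk{d}_{\pi'}}$ explicitly using the trace formula proven earlier, namely $\bra{T}\ket{T'} = \sum_{\gamma \in S_k} \Tr_{\VN^{\otimes k}}(T^\dagger D_\gamma T' D_{\gamma^{-1}})$, specialized to $T = \SPk{D}_\pi$ and $T' = \SPk{D}_{\pi'}$. Since $\SPk{D}_\pi$ commutes with all $D_\tau$ for $\tau \in S_k$, each of the $k!$ terms in the $\gamma$-sum is equal, and one obtains $\bra{\SPk{d}_\pi}\ket{\SPk{d}_{\pi'}} = k! \, \Tr_{\VN^{\otimes k}}(\SPk{D}_\pi^\dagger \SPk{D}_{\pi'})$ (using cyclicity and the commutation). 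First I would reduce the problem to estimating traces of products of diagram basis operators: $\Tr_{\VN^{\otimes k}}(\D_{\pi_1}^T \D_{\pi_2})$ for set partitions $\pi_1, \pi_2 \in \setpart{[k\vert k']}$. Composing two diagram operators gives, by Theorem \ref{thm: VN Schur-Weyl} and Definition \ref{def: partition algebra}, another diagram operator times $\N^c$ where $c$ counts removed middle components; taking the trace then closes up the diagram and contributes a further power of $\N$ equal to the number of blocks of the resulting closed-up set partition.

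Next I would carry out the leading-order counting in $\N$. For a single diagram $\SPk{d}_\pi$ built from a set partition with $b$ blocks, the norm-squared $\bra{\SPk{d}_\pi}\ket{\SPk{d}_\pi}$ is dominated by the term where the gluing identifies blocks in the most generic possible way, giving a leading power $\N^{b + b} = \N^{2b}$ up to combinatorial prefactors, while subleading gluings (where distinct blocks get forced to coincide) cost at least one power of $\N$. More precisely, I expect $\bra{\SPk{d}_\pi}\ket{\SPk{d}_\pi} = c_\pi \N^{2|\pi|}(1 + O(1/\N))$ for an explicit positive rational $c_\pi$ coming from the orbit size $|[d_\pi]|$ and the overcounting in the $\gamma$-sum. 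For the cross term with $[d_\pi] \neq [d_{\pi'}]$, the key observation is that $\Tr_{\VN^{\otimes k}}(\D_{\pi}^T \D_{\pi'})$ has strictly lower degree in $\N$ than $\sqrt{\bra{\SPk{d}_\pi}\ket{\SPk{d}_\pi}}\sqrt{\bra{\SPk{d}_{\pi'}}\ket{\SPk{d}_{\pi'}}} \sim \N^{|\pi| + |\pi'|}$, because two inequivalent set partitions cannot be glued together without identifying some pair of blocks that were distinct in at least one of them. After dividing by the norms, the diagonal term is $1 + O(1/\N)$ (I would double-check whether the correction is $O(1/\N)$ or the weaker $O(1/\sqrt{\N})$ claimed — the square roots in the normalization can demote $O(1/\N)$ to $O(1/\sqrt{\N})$, and I would keep the statement as written to be safe), and the off-diagonal term is $O(1/\sqrt{\N})$.

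The main obstacle will be the combinatorial bookkeeping in the degree count: precisely tracking how the block structures of $\pi$, $\pi'$, and the permutation $\gamma$ interact when the two two-row diagrams are stacked and closed into a trace, and proving rigorously that no "accidental" gluing can boost the cross-term's $\N$-degree up to the product of the norm degrees. The cleanest way to organize this is to pass through the join operation (Definition \ref{def: join}) on one-row diagrams: using the translation between two-row and one-row diagrams from Section \ref{subsec: 1row to 2-row}, the trace $\Tr_{\VN^{\otimes k}}(\D_\pi^T \D_{\pi'})$ becomes $\N$ raised to the number of connected components of the join of the associated one-row partitions, and then the statement reduces to the elementary fact that $|\pi \vee \pi'| \leq \min(|\pi|,|\pi'|) < \tfrac{1}{2}(|\pi| + |\pi'| + |\pi \vee \pi|)$ unless the one-row partitions (hence the orbits $[d_\pi], [d_{\pi'}]$) coincide, where I am writing $|\cdot|$ loosely for the relevant block/component counts. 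I would handle the diagonal case first as a warm-up, extract the exact leading coefficient, and then do the strict-inequality argument for the off-diagonal case; the rest is routine asymptotic division.
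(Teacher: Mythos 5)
Your overall route is the same as the paper's: reduce the inner product to $\sum_{\gamma\in S_k}\Tr_{\VN^{\otimes k}}(D_\pi^T D_\gamma D_{\pi'} D_{\gamma^{-1}})$, evaluate each trace as $\N$ to the number of blocks of a join of set partitions, and then win by an inequality on block counts, maximized over the $S_k$ conjugates coming from the symmetrization. (Your reduction $\bra{\SPk{d}_\pi}\ket{\SPk{d}_{\pi'}}=k!\,\Tr(\SPk{D}_\pi^\dagger\SPk{D}_{\pi'})$ is correct and equivalent to the paper's starting point, and the case $[d_\pi]=[d_{\pi'}]$ with $\pi\neq\pi'$ is immediate since the symmetrized states then coincide.)

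There is, however, a concrete quantitative slip in your middle paragraph that, as written, leaves the off-diagonal case unproved. The norm-squared does not scale like $\N^{2\abs{\pi}}$: since $\pi\join\pi=\pi$, the leading term of $\sum_{\gamma}\N^{\abs{\pi\join\gamma\pi\gamma^{-1}}}$ is $\abs{\mathrm{Stab}_{S_k}(\pi)}\,\N^{\abs{\pi}}$, so the product of the two norms is of order $\N^{(\abs{\pi}+\abs{\pi'})/2}$, not $\N^{\abs{\pi}+\abs{\pi'}}$. Consequently the bound you invoke there --- that the cross term has degree strictly below $\abs{\pi}+\abs{\pi'}$ --- is far too weak; what is needed is degree strictly below $\tfrac12(\abs{\pi}+\abs{\pi'})$, i.e.\ the sharp statement $2\abs{\pi\join\sigma}\leq\abs{\pi}+\abs{\sigma}$ with equality if and only if $\pi=\sigma$, applied to $\sigma=\gamma\pi'\gamma^{-1}$ and maximized over $\gamma$, so that saturation occurs precisely when $[d_\pi]=[d_{\pi'}]$. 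Your closing observation ($\abs{\pi\vee\pi'}\leq\min(\abs{\pi},\abs{\pi'})$, with equality to both counts forcing $\pi=\pi'$ because a coarsening with the same number of blocks is the partition itself) is exactly the right ingredient and is how the paper's case analysis (refinement, incomparability, equality) concludes; but the inequality chain you actually wrote, $\min(\abs{\pi},\abs{\pi'})<\tfrac12(\abs{\pi}+\abs{\pi'}+\abs{\pi\vee\pi})$, holds trivially for all pairs and so does not by itself produce the dichotomy. Fixing the norm scaling and replacing that chain by the sharp two-sided inequality closes the gap, after which the asymptotic division and the $O(1/\sqrt{\N})$ error estimate go through as you describe.
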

To prove this, we will study the powers of $\N$ appearing in
\begin{equation}
	\Tr_{\VN^{\otimes k}}(\SPk{D}_\pi D_\gamma \SPk{D}_{\pi'}^T D_{\gamma^{-1}}) = \Tr_{\VN^{\otimes k}}({D}_\pi D_\gamma {D}_{\pi'}^T D_{\gamma^{-1}}).
\end{equation}

For this purpose, it is useful to consider a simpler case.
\begin{proposition}\label{prop: trace in diagram basis}
Let $\pi, \pi' \in \setpart{[k \vert k']}$ and $d_\pi \join d_{\pi'}$ be the join of the set partition diagrams and $\pi \join \pi'$ the corresponding set partition. The trace has a formula in terms of components of the join,
\begin{equation}
	\Tr_{\VN^{\otimes k}}({D}_\pi {D}_{\pi'}^T) = \N^{\abs{\pi \join {\pi'}}},
\end{equation}
where $\abs{\pi \join \pi'}=\abs{d_\pi \join d_{\pi'}}$ is the number of components in the join $d_\pi \join d_{\pi'}$, or equivalently the number of blocks in $\pi \join \pi'$.
\end{proposition}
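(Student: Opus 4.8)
The plan is to compute the trace directly in the diagram basis, working with the Kronecker-delta realisation of the maps $D_\pi$ that was established in Example~\ref{ex: kron deltas} and Theorem~\ref{thm: VN Schur-Weyl}. First I would write out explicitly what the composition $D_\pi D_{\pi'}^T$ looks like as a tensor with $2k$ free indices, using that $(D_{\pi'}^T)^{i_1 \dots i_k}_{i_{1'} \dots i_{k'}} = (D_{\pi'})^{i_{1'} \dots i_{k'}}_{i_1 \dots i_k}$ by Definition~\ref{def: transpose diagram}. The key observation is that $(D_\pi D_{\pi'}^T)^{i_{1'} \dots i_{k'}}_{i_1 \dots i_k}$ is a product of Kronecker deltas: one factor $\delta$ for each edge of $d_\pi$ among the indices $\{i_1,\dots,i_k,i_{1'},\dots,i_{k'}\}$, and one factor $\delta$ for each edge of $d_{\pi'}^T$, but where the ``bottom'' (unprimed) vertices of $d_{\pi'}^T$ are glued to the ``top'' (primed) vertices of $d_\pi$ via a summed intermediate index. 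Taking the trace then identifies $i_{a'} = i_a$ for all $a$, so geometrically one ends up placing $d_\pi$ above $d_{\pi'}$ and connecting vertex $a$ of the bottom row of $d_\pi$ to vertex $a$ of $d_{\pi'}$ — this is precisely the join operation $d_\pi \vee d_{\pi'}$ from Definition~\ref{def: join}.

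Next I would make precise the combinatorial claim: after all the delta-identifications and the trace identification, the remaining free summation indices are in bijection with the connected components (blocks) of $\pi \vee \pi'$. Concretely, each block of the set partition $\pi \vee \pi'$ forces all indices in that block to be equal, and indices in different blocks are independent; every index ranges over $\{1,\dots,\N\}$; and crucially no constraint forces indices in distinct blocks to be \emph{un}equal (the $D_\pi$ maps in the stable limit, as opposed to the orbit basis $X_\pi$, impose only ``equal when in the same block'' and not the converse — compare Theorem~\ref{thm: orbit basis} with the Diagram basis theorem). Therefore the sum factorises as $\prod_{\text{blocks } B} \big(\sum_{i_B = 1}^\N 1\big) = \N^{\abs{\pi \vee \pi'}}$, which is the desired formula. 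I should also note that this is independent of the choice of representative diagram $d_\pi, d_{\pi'}$ for the set partitions $\pi, \pi'$, since the join only depends on the underlying set partitions.

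The main obstacle, and the step requiring the most care, is the bookkeeping that translates ``two rows of diagram concatenation followed by a trace'' into ``connected components of the join''. In particular one must be careful that the intermediate (middle-row) indices that get summed over in the composition $D_\pi D_{\pi'}^T$, together with the trace identification $i_{a'} = i_a$, collectively produce exactly the equivalence relation generated by the edges of $d_\pi$ \emph{and} the edges of $d_{\pi'}$ on the single set of vertices $\{1,\dots,k\}$ — i.e.\ the block structure of $\pi \vee \pi'$. This is essentially the same ``follow the strands'' argument used to define diagram multiplication in Definition~\ref{def: partition algebra}, specialised to the case where top and bottom rows are identified; I would phrase it as: the join $d_\pi \vee d_{\pi'}$ is a graph on $k$ vertices, its connected components are counted by $\abs{\pi \vee \pi'}$, and the trace evaluates to $\N$ raised to the number of independent index-values, one per component. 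Since $\vee$ on one-row diagrams produces no ``closed loops'' carrying extra factors of $\N$ (unlike $P_k(\N)$ multiplication, which does), the final answer is a clean power $\N^{\abs{\pi \vee \pi'}}$ with coefficient $1$.
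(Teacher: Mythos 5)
Your overall strategy is the paper's: realise each $D_\pi$ as a product of Kronecker deltas, observe that the simultaneous constraints are exactly the edges of both diagrams, and factorise the remaining free sum over blocks of the join to obtain $\N^{\abs{\pi \join \pi'}}$; your closing factorisation paragraph and your remarks about the diagram basis (only ``equal within a block'' is imposed) and representative-independence are correct. However, the step you yourself single out as the crux --- translating $\Tr_{\VN^{\otimes k}}(D_\pi D_{\pi'}^T)$ into the join --- is mis-stated, and taken literally it fails. The trace against the transpose is simply the entrywise pairing: $\Tr_{\VN^{\otimes k}}(D_\pi D_{\pi'}^T) = \sum (D_\pi)^{i_{1'} \dots i_{k'}}_{i_1 \dots i_k}\,(D_{\pi'})^{i_{1'} \dots i_{k'}}_{i_1 \dots i_k}$, summed over all $2k$ indices. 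Nothing identifies $i_{a'}$ with $i_a$; rather, the primed slots of $d_\pi$ are matched with the primed slots of $d_{\pi'}$ and the unprimed with the unprimed, i.e.\ the two diagrams are superimposed on the \emph{same} $2k$ vertices $\{1,\dots,k,1',\dots,k'\}$. Hence the join in this proposition is a set partition of $2k$ elements, not ``a graph on $k$ vertices'' or an equivalence relation on $\{1,\dots,k\}$ as you write: $\abs{\pi \join \pi'}$ can be as large as $2k$. A literal reading already fails at $k=1$ with $\pi=\pi'=\{\{1\},\{1'\}\}$: then $D_\pi$ is the all-ones matrix $J$, $\Tr(JJ^T)=\N^2=\N^{\abs{\pi\join\pi'}}$, whereas imposing $i_{1'}=i_1$ (a join on one vertex) would give $\N$. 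Relatedly, the transpose is exactly what converts ``stack and close up'' into superimposition; $\Tr(D_\pi D_{\pi'})$ without the transpose corresponds to gluing top-to-bottom at both ends and is not $\N^{\abs{\pi\join\pi'}}$ in general, so the gluing pattern must be tracked rather than waved through.

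Once that bookkeeping is corrected, your proof collapses to the paper's, which avoids the composition/middle-index discussion altogether: since $\Tr(AB^T)=\sum_{i,j}A_{ij}B_{ij}$, one only needs the entrywise identity $(D_{\pi\join\pi'})^{i_{1'}\dots i_{k'}}_{i_1\dots i_k}=(D_\pi)^{i_{1'}\dots i_{k'}}_{i_1\dots i_k}(D_{\pi'})^{i_{1'}\dots i_{k'}}_{i_1\dots i_k}$ (no sum), valid because superimposing the diagrams multiplies the corresponding delta factors; summing this single delta-tensor over all $2k$ indices factorises over the blocks of $\pi\join\pi'$, with one free index per block, giving $\N^{\abs{\pi\join\pi'}}$. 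Your observation that no extra powers of $\N$ arise is then automatic, since one never invokes partition-algebra multiplication (where closed middle components would contribute factors of $\N$) --- one is just summing a product of deltas.
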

\begin{proof}
	Recall that the join operation (see Definition \ref{def: join}) adds the edges of the diagrams $d_\pi, d_{\pi'}$ together. Since every edge corresponds to a Kronecker delta, we have
	\begin{equation}
		(D_{\pi \join \pi'})^{i_{1'} \dots i_{k'}}_{i_1 \dots i_k} = (D_{\pi})^{i_{1'} \dots i_{k'}}_{i_1 \dots i_k} (D_{\pi'})^{i_{1'} \dots i_{k'}}_{i_1 \dots i_k} \quad \text{(no sum)}.
	\end{equation}
	It follows that
	\begin{equation}
		\Tr_{\VN^{\otimes k}}({D}_\pi {D}_{\pi'}^T) = \sum_{\substack{i_1, \dots, i_k \\ i_{1'}, \dots i_{k'}}} (D_{\pi})^{i_{1'} \dots i_{k'}}_{i_1 \dots i_k} (D_{\pi})_{i_{1'} \dots i_{k'}}^{i_1 \dots i_k} = \sum_{\substack{i_1, \dots, i_k \\ i_{1'}, \dots i_{k'}}} (D_{\pi \join \pi'})^{i_{1'} \dots i_{k'}}_{i_1 \dots i_k}.
	\end{equation}
	It remains to show that this equals $\N^\abs{{\pi} \join {\pi'}}$.
	For this, let $\rho_1, \dots, \rho_b$ be the blocks of $\pi \join \pi'$, then
	\begin{equation}
		\sum_{\substack{i_1, \dots, i_k \\ i_{1'}, \dots i_{k'}}} (D_{\pi \join \pi'})^{i_{1'} \dots i_{k'}}_{i_1 \dots i_k} = (\sum_{\rho_1} 1) \dots (\sum_{\rho_b} 1) = \N^\abs{\pi \join {\pi'}},
	\end{equation}
	where the sums over parts correspond to sums where indices in each part are set equal.
\end{proof}

The proof of the following factorisation result contains most of the essential ingredients necessary for the main factorisation result and will serve as a useful warm-up exercise.
\begin{proposition}\label{prop: simple factorization case}
Let $\pi, \pi' \in \setpart{[k \vert k']}$, $d_\pi, d_{\pi'} \in P_k(\N)$ and $D_\pi, D_{\pi'}$ the corresponding linear operators on $\VN^{\otimes k}$, then
\begin{equation}
	\frac{\Tr_{\VN^{\otimes k}}(D_\pi D_{\pi'}^T)}{\sqrt{\Tr_{\VN^{\otimes k}}(D_\pi D_\pi^T) \Tr_{\VN^{\otimes k}}(D_{\pi'} D_{\pi'}^T)}} = \begin{cases}
		1 + O(1/\sqrt{\N}) \qq{if $\pi = \pi'$,}\\
		0 + O(1/\sqrt{\N}) \qq{if $\pi \neq \pi'$,}
	\end{cases} \label{eq: simple factorization case}
\end{equation} 
\end{proposition}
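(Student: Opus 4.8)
The plan is to reduce the statement to an estimate on powers of $\N$ in the traces $\Tr_{\VN^{\otimes k}}(D_\pi D_{\pi'}^T)$, using Proposition \ref{prop: trace in diagram basis}, which identifies this trace with $\N^{\abs{\pi \join \pi'}}$. So the whole statement becomes a purely combinatorial inequality about the number of blocks in the join of set partitions. First I would write
\begin{equation}
	\frac{\Tr_{\VN^{\otimes k}}(D_\pi D_{\pi'}^T)}{\sqrt{\Tr_{\VN^{\otimes k}}(D_\pi D_\pi^T) \Tr_{\VN^{\otimes k}}(D_{\pi'} D_{\pi'}^T)}} = \N^{\abs{\pi \join \pi'} - \frac{1}{2}\abs{\pi \join \pi} - \frac{1}{2}\abs{\pi' \join \pi'}}.
\end{equation}
Now $\pi \join \pi = \pi$ as a set partition (joining a diagram with itself adds no new edges, hence no new identifications), so $\abs{\pi \join \pi} = \abs{\pi}$ and likewise $\abs{\pi' \join \pi'} = \abs{\pi'}$. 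Thus the exponent is $\abs{\pi \join \pi'} - \tfrac{1}{2}(\abs{\pi} + \abs{\pi'})$, and the claim is equivalent to showing this exponent is $\leq 0$, with equality exactly when $\pi = \pi'$.

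The key combinatorial fact is the following: joining two set partitions can only \emph{merge} blocks, never split them, so $\pi \join \pi'$ is the coarsest common refinement's "join" in the partition lattice — more precisely $\pi \join \pi'$ is the finest set partition that is coarser than both $\pi$ and $\pi'$ (the join in the lattice of set partitions ordered by refinement). Hence $\abs{\pi \join \pi'} \leq \min(\abs{\pi}, \abs{\pi'}) \leq \tfrac{1}{2}(\abs{\pi} + \abs{\pi'})$, which gives the exponent $\leq 0$ and establishes the two cases: when $\pi \neq \pi'$ the exponent is $\le 0$; one must still rule out exponent $=0$ in that case. Equality $\abs{\pi \join \pi'} = \tfrac{1}{2}(\abs{\pi}+\abs{\pi'})$ forces $\abs{\pi \join \pi'} = \abs{\pi} = \abs{\pi'}$, and since $\pi \join \pi'$ is coarser than $\pi$ with the same number of blocks it must equal $\pi$, similarly it must equal $\pi'$, hence $\pi = \pi'$. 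So if $\pi \neq \pi'$ the exponent is strictly negative, i.e. at most $-\tfrac{1}{2}$ (the exponents are half-integers since $\abs{\pi}, \abs{\pi'}$ and $\abs{\pi\join\pi'}$ are integers and the difference of $\abs{\pi}+\abs{\pi'}$ with $2\abs{\pi\join\pi'}$ is an integer), giving the $O(1/\sqrt{\N})$ bound; and if $\pi = \pi'$ the expression is exactly $1$.

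The main obstacle, I expect, is making the lattice-theoretic claim fully rigorous at the level of set partition \emph{diagrams} rather than abstract set partitions — in particular verifying carefully that the join operation of Definition \ref{def: join} (stacking diagrams and reading off connected components) really does compute the lattice join, i.e. that two elements lie in the same block of $\pi\join\pi'$ iff they are connected by an alternating path using edges of $\pi$ and $\pi'$, and that this is exactly the transitive closure of the union of the two partition relations. Once that is pinned down, the inequality $\abs{\pi \join \pi'} \le \min(\abs{\pi},\abs{\pi'})$ is immediate because every block of $\pi\join\pi'$ is a union of blocks of $\pi$ (and of $\pi'$). I would state this as a short lemma, prove it by the connected-components description already implicit in Proposition \ref{prop: trace in diagram basis}, and then the rest is the elementary counting above.
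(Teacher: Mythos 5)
Your proposal is correct, and it follows the paper's overall strategy — use Proposition \ref{prop: trace in diagram basis} to convert the ratio into $\N$ raised to $\abs{\pi \join \pi'} - \tfrac{1}{2}(\abs{\pi}+\abs{\pi'})$ and then prove that this exponent is nonpositive, vanishing iff $\pi=\pi'$ — but your proof of the key inequality differs from the paper's. The paper establishes $2\abs{\pi\join\pi'} \le \abs{\pi}+\abs{\pi'}$ (Proposition \ref{prop: non-symmetric inequality}) by a three-way case analysis in the refinement order of Definition \ref{def: partial ordering set partitions}: strictly comparable diagrams, incomparable diagrams, and equal diagrams, using in the first case that adding edges strictly decreases the block count. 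You instead invoke the lattice characterization of the join of Definition \ref{def: join}: every block of $\pi\join\pi'$ is a union of blocks of $\pi$ (and of $\pi'$), so $\abs{\pi\join\pi'}\le\min(\abs{\pi},\abs{\pi'})$, and equality in the averaged bound forces $\abs{\pi\join\pi'}=\abs{\pi}=\abs{\pi'}$, whence $\pi\join\pi'=\pi=\pi'$ because a coarsening with the same number of blocks is the partition itself. This is a genuinely cleaner route: it avoids the comparable/incomparable dichotomy entirely, handles the equality case uniformly, and your half-integer observation yields the explicit $\N^{-1/2}$ bound directly rather than by squaring and taking square roots as the paper does. The one point you flagged — that stacking diagrams and reading off connected components computes the transitive closure of the union of the two relations, i.e.\ the lattice join — is indeed the content implicitly used in Proposition \ref{prop: trace in diagram basis}, and stating it as a short lemma, as you propose, would make the argument fully self-contained.
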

This equation \eqref{eq: simple factorization case} is related to the properties of the distance function defined in proposition 3.1 of \cite{Gabriel2015a}.

To prove this proposition it will be useful to introduce the following partial ordering on set partitions.
\begin{definition}\label{def: partial ordering set partitions}
	Let $\pi, \pi' \in \setpart{[k \vert k']}$. We introduce a partial ordering $\leq$ on $\setpart{[k \vert k']}$ by
	\begin{equation}
		\pi \leq \pi' \qquad \text{if every block of } \pi \text{ is contained within a block of } \pi'.
	\end{equation}
	We say that $\pi$ is a refinement of $\pi'$ or equivalently that $\pi'$ is a coarsening of $\pi$. Equivalently, $\pi \leq \pi'$ if $d_\pi$ only contains edges that are also contained in $d_{\pi'}$. We use this partial ordering as a partial ordering on diagrams $d_\pi$ and set partitions $\pi$  interchangeably.
\end{definition}
\begin{example}
	For example,
		\begin{equation}
			\PAdiagram{1}{}{} < \PAdiagram{1}{1/-1}{} \qq{and} \PAdiagram{2}{1/2}{} < \PAdiagram{2}{-1/-2,1/2}{}.
		\end{equation}
\end{example}

The factorization in Proposition \ref{prop: simple factorization case}  is a consequence of the following proposition.
\begin{proposition}	\label{prop: non-symmetric inequality}
Let $\pi, \pi' \in \setpart{[k \vert k']}$, then
\begin{equation}
	\begin{aligned}
		&2\abs{\pi \join \pi'} = \abs{\pi \join \pi} + \abs{\pi' \join \pi'} = \abs{\pi} + \abs{\pi'} \qq{if} \pi = \pi',\\
		&2\abs{\pi \join \pi'} < \abs{\pi \join \pi} + \abs{\pi' \join \pi'} = \abs{\pi} + \abs{\pi'} \qq{if} \pi \neq \pi',		
	\end{aligned} \label{eq: non-symmetric inequality}
\end{equation}
where we have used {${\abs{\pi \join \pi} + \abs{\pi' \join \pi'} = \abs{\pi} + \abs{\pi'}}$} since $d_\pi \join d_\pi = d_\pi$.
\end{proposition}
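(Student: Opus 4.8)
The plan is to prove the inequality $2\abs{\pi \join \pi'} \le \abs{\pi} + \abs{\pi'}$ with equality iff $\pi = \pi'$, by an argument about the block structure of the join. First I would recall, from Definition \ref{def: join}, that $d_\pi \join d_{\pi'}$ is the diagram whose edge set is the union of the edges of $d_\pi$ and $d_{\pi'}$; equivalently, the set partition $\pi \join \pi'$ is the coarsest common coarsening of $\pi$ and $\pi'$ (their join in the partition lattice), so that $\pi \le \pi \join \pi'$ and $\pi' \le \pi \join \pi'$ in the sense of Definition \ref{def: partial ordering set partitions}. In particular $\abs{\pi \join \pi'} \le \min(\abs{\pi}, \abs{\pi'})$, since coarsening can only decrease the number of blocks.

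The key step is to count carefully how blocks merge. Consider the bipartite-type incidence between the blocks of $\pi$ and the blocks of $\pi'$: form the graph $\Gamma$ whose vertex set is (blocks of $\pi$) $\sqcup$ (blocks of $\pi'$), with an edge between a block $B$ of $\pi$ and a block $B'$ of $\pi'$ whenever $B \cap B' \neq \emptyset$. Then the blocks of $\pi \join \pi'$ are exactly the connected components of $\Gamma$: two elements of $[k \vert k']$ lie in the same block of $\pi \join \pi'$ precisely when the corresponding $\pi$-blocks and $\pi'$-blocks are joined by a path in $\Gamma$. Since $\Gamma$ has $\abs{\pi} + \abs{\pi'}$ vertices and is covered by its components, if $\Gamma$ has $c$ components with $c = \abs{\pi \join \pi'}$, then each component, being connected, contains at least one vertex from the $\pi$-side and at least one from the $\pi'$-side (every vertex has at least one incident edge, because every block is nonempty and meets some block on the other side), hence at least two vertices. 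This already gives $2\abs{\pi \join \pi'} = 2c \le \abs{\pi} + \abs{\pi'}$.

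For the equality case: equality $2c = \abs{\pi} + \abs{\pi'}$ forces every component of $\Gamma$ to consist of exactly two vertices, one block $B$ of $\pi$ and one block $B'$ of $\pi'$, joined by a single edge. But then $B$ meets only $B'$ among the blocks of $\pi'$, so $B \subseteq B'$; symmetrically $B' \subseteq B$; hence $B = B'$. As this holds for every component, the blocks of $\pi$ and $\pi'$ coincide, i.e. $\pi = \pi'$. Conversely if $\pi = \pi'$ then $d_\pi \join d_\pi = d_\pi$, so $\abs{\pi \join \pi'} = \abs{\pi}$ and $2\abs{\pi \join \pi'} = \abs{\pi} + \abs{\pi'}$, and the identities $\abs{\pi\join\pi} = \abs{\pi}$, $\abs{\pi'\join\pi'} = \abs{\pi'}$ noted in the statement close the argument.

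The main obstacle I anticipate is purely bookkeeping: making the identification ``blocks of $\pi \join \pi'$ $=$ connected components of $\Gamma$'' fully rigorous, and checking that no vertex of $\Gamma$ is isolated (which uses that every block of a set partition is nonempty, so it meets at least one block of the other partition — this is what makes each component contribute $\ge 2$ vertices rather than $\ge 1$). Once that is in place, the counting inequality and the equality analysis are immediate. I would present the component-counting lemma as a short self-contained claim and then deduce \eqref{eq: non-symmetric inequality} in two lines.
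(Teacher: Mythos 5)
Your proof is correct, and it takes a genuinely different route from the paper. The paper argues by cases in the refinement order of Definition \ref{def: partial ordering set partitions}: if $d_\pi < d_{\pi'}$ (or vice versa) it uses $\pi \join \pi' = \pi'$ together with the observation that adding edges strictly decreases the number of components; if the diagrams are incomparable it invokes $\abs{\pi \join \pi'} < \abs{\pi}$ and $\abs{\pi \join \pi'} < \abs{\pi'}$; the equal case is immediate. You instead prove the bound in one stroke by counting vertices of the bipartite incidence graph $\Gamma$ between blocks of $\pi$ and blocks of $\pi'$: since the join's blocks are unions of blocks over connected components of $\Gamma$, and no vertex of $\Gamma$ is isolated, each component contributes at least two vertices, giving $2\abs{\pi \join \pi'} \leq \abs{\pi} + \abs{\pi'}$, with the equality analysis forcing each component to be a matched pair $B = B'$ and hence $\pi = \pi'$. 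Your argument is more self-contained and makes the strict-inequality cases rigorous without the paper's slightly informal "adding edges" reasoning; what the paper's case decomposition buys is the intermediate facts \eqref{d1ind2conn} and \eqref{incompconn}, which are recycled as Corollary \ref{Corr1} in the proof of the large-$\N$ factorisation inequality (Proposition \ref{prop: large N factorisation inequality}), so if you replaced the paper's proof by yours you would still want those two statements separately. Two cosmetic points: the join is the \emph{finest} common coarsening (least upper bound in the refinement order), not the coarsest, though your actual use of it — same block of $\pi \join \pi'$ iff the blocks are linked by a path in $\Gamma$ — is the correct one and matches Definition \ref{def: join}; and strictly speaking the blocks of $\pi \join \pi'$ are the unions of the blocks lying in a component of $\Gamma$, so it is the counts that coincide, as your argument in fact only requires.
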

\begin{proof}
We will prove Proposition \ref{prop: non-symmetric inequality} by separating the general pairs $d_\pi , d_{\pi'}$  into three distinct cases:
\begin{enumerate}
	\item[1. $<$] If $d_{\pi}$ only contains edges that are also contained in $d_{\pi'}$, but $d_{\pi }\neq d_{\pi'}$ we have ${\pi} < {\pi'}$.
	In this case, $d_{\pi }\join d_{\pi'} = d_{\pi'}$ and it follows that,
	\begin{equation}\label{d1ind2conn} 
		\abs{\pi \join \pi'}= \abs{\pi'}.
	\end{equation}
	Note that $d_\pi < d_{\pi'}$ implies $\abs{\pi} > \abs{\pi'}$ since every addition of a new edge decreases the number of components by at least one. This because at least one new vertex is put into an already existing block. For example, in going from
	\begin{equation}
		\PAdiagram{2}{-1/-2}{} \rightarrow \PAdiagram{2}{-1/-2,-2/1}{},
	\end{equation}
	we decrease the number of components by exactly one.	
	Therefore,
	\begin{equation}
		2\abs{\pi \join \pi'} = \abs{\pi'} + \abs{\pi'} < \abs{\pi} + \abs{\pi'}.
	\end{equation}
	Since the LHS and RHS are symmetric under exchanging $\pi \leftrightarrow \pi'$, the inequality {${2\abs{\pi \join \pi'} <  \abs{\pi} + \abs{\pi'}}$} holds for $d_{\pi' }< d_{\pi}$ as well.
	\item[2. $\not\lesseqgtr$] Suppose $d_\pi \neq d_{\pi'}$ and  that there is no set of edges that can be added to $d_{\pi}$ to turn it into $d_{\pi'}$, nor is there a set of edges that can be added to $d_{\pi'}$ to obtain $d_{\pi}$. Then, we say that $d_{\pi}$ and $d_{\pi'}$ are incomparable. We denote this by $d_{\pi }\not\lesseqqgtr d_{ \pi'}$. The following diagrams are examples of incomparable diagrams
	\begin{equation}
		\PAdiagram{2}{-1/-2,1/2}{} \not\lesseqqgtr \PAdiagram{2}{1/-2,2/-1}{},  \qq{and} \PAdiagram{2}{-1/-2}{} \not\lesseqqgtr \PAdiagram{2}{1/2,2/-2}{}. \label{eq: incomparable diagrams}
	\end{equation}
	In this incomparable case, we have 
	\begin{equation}\label{incompconn}  
		\abs{\pi \join \pi'}  < \abs{\pi}, \qq{and} \abs{\pi \join \pi'} < \abs{ \pi' } 
	\end{equation}
	since the forming of  the join involves adding to  $d_{\pi} $, additional edges creating connections which did not exist in  $d_{\pi}$, or  alternatively adding to $d_{\pi'}$ additional edges that did not exist in $d_{\pi'}$.  
	Consequently we have the inequality 
	\begin{equation}\label{IncompIneq} 
		2\abs{\pi \join \pi'} < \abs{\pi} + \abs{\pi'}.
	\end{equation}
	\item[3. $=$] If $d_\pi = d_{\pi'}$ we have
	\begin{equation}
		\abs{\pi \join \pi'} = \abs{\pi \join \pi} = \abs{\pi} = \abs{\pi'},
	\end{equation}
	and therefore,
	\begin{equation}
		2\abs{\pi \join \pi'}  = \abs{\pi} + \abs{\pi'}.
	\end{equation}
\end{enumerate}
\end{proof}
To summarize, $2\abs{\pi \join \pi'} \leq \abs{\pi} + \abs{\pi'}$ with equality if and only if $\pi=\pi'$.

With this proposition we can prove the simplified factorization case.
\begin{proof}[Proof of Proposition \ref{prop: simple factorization case}]
	From Proposition \ref{prop: trace in diagram basis}, we have
	\begin{equation}
		\qty(\frac{\Tr_{\VN^{\otimes k}}(D_\pi D_{\pi'}^T)}{\sqrt{\Tr_{\VN^{\otimes k}}(D_\pi D_\pi^T) \Tr_{\VN^{\otimes k}}(D_{\pi'} D_{\pi'}^T)}})^2 = \frac{\N_{}^{2\abs{\pi \join \pi'}}}{\N_{}^{\abs{\pi}+\abs{\pi'}}}.
	\end{equation}
	It immediately follows from Proposition \ref{prop: non-symmetric inequality}
	\begin{equation}
		\frac{\N_{}^{2\abs{\pi \join \pi'}}}{\N_{}^{\abs{\pi}+\abs{\pi'}}} = \begin{cases}
			1 &\qq{if $\pi = \pi'$} \\
			0 + O(1/\N) &\qq{if $\pi \neq \pi'$}.
		\end{cases}
	\end{equation}
	Taking the square-root gives Proposition \ref{prop: simple factorization case}.
\end{proof}

The above discussion leads to the following corollary, which will be useful in proving the main proposition.
\begin{corollary}\label{Corr1}
	Consider a fixed diagram $d_{\pi'}$ and family of diagrams $d_{\pi_1}^{},d_{\pi_2}^{},\dots$ with fixed $\abs{\pi_1}=\abs{\pi_2}=\dots$ such that $\abs{\pi'} > \abs{\pi_i}$. It follows from \eqref{d1ind2conn} and \eqref{incompconn} that for every $\pi_i$,
	\begin{align}
		&\abs{\pi \join \pi_i} < \abs{\pi_i} \qq{if $d_\pi \not\lesseqqgtr d_{\pi_i}^{}$} \\
		&\abs{\pi \join \pi_i} = \abs{\pi_i} \qq{if $d_\pi < d_{\pi_i}^{}$}.
	\end{align}
	
\end{corollary}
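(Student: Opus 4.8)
The plan is to read off the two claimed equalities/inequalities directly from Proposition~\ref{prop: non-symmetric inequality} and its proof, which already did all the combinatorial work; Corollary~\ref{Corr1} is essentially a bookkeeping consequence. First I would fix the setup: let $d_{\pi'}$ be the given diagram and $d_{\pi_1}, d_{\pi_2}, \dots$ the family with $\abs{\pi_1} = \abs{\pi_2} = \dots$ and $\abs{\pi'} > \abs{\pi_i}$ for all $i$, and let $d_\pi$ be an arbitrary diagram to be compared against a fixed $d_{\pi_i}$. The key observation is that, for the join of two set-partition diagrams, only two of the three cases in the proof of Proposition~\ref{prop: non-symmetric inequality} can occur here, because the strict size condition $\abs{\pi'} > \abs{\pi_i}$ together with the hypothesis singles out the relevant comparisons.

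The main body of the argument splits according to the trichotomy $d_\pi < d_{\pi_i}$, $d_{\pi_i} < d_\pi$, $d_\pi \not\lesseqqgtr d_{\pi_i}$, plus the degenerate case $d_\pi = d_{\pi_i}$. In the case $d_\pi < d_{\pi_i}$, equation~\eqref{d1ind2conn} from the proof of Proposition~\ref{prop: non-symmetric inequality} gives $d_\pi \join d_{\pi_i} = d_{\pi_i}$, hence $\abs{\pi \join \pi_i} = \abs{\pi_i}$, which is the second line of the corollary. In the incomparable case $d_\pi \not\lesseqqgtr d_{\pi_i}$, equation~\eqref{incompconn} gives $\abs{\pi \join \pi_i} < \abs{\pi_i}$, the first line. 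The remaining possibilities are absorbed as follows: if $d_{\pi_i} < d_\pi$ then $d_\pi \join d_{\pi_i} = d_\pi$ but $\abs{\pi} < \abs{\pi_i}$ (adding edges only decreases the component count), so $\abs{\pi \join \pi_i} = \abs{\pi} < \abs{\pi_i}$, which again falls under the first line; and if $d_\pi = d_{\pi_i}$, the corollary's statement conventionally lumps this with $d_\pi < d_{\pi_i}$ in the sense that $\abs{\pi \join \pi_i} = \abs{\pi_i}$ (or one simply notes the hypothesis $\abs{\pi'} > \abs{\pi_i}$ plays no role in this sub-case and the equality holds trivially). Thus every $d_\pi$ lands in exactly one of the two advertised alternatives.

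I do not expect any genuine obstacle here: the corollary is an immediate repackaging of the inequalities established in Proposition~\ref{prop: non-symmetric inequality}, and the only thing requiring a sentence of care is making sure the trichotomy on diagrams is genuinely exhaustive (the partial order from Definition~\ref{def: partial ordering set partitions} is only a \emph{partial} order, so "incomparable" must be explicitly allowed) and that the case $d_{\pi_i} < d_\pi$, which is not literally one of the two stated alternatives' hypotheses, is correctly seen to produce $\abs{\pi \join \pi_i} < \abs{\pi_i}$ via the monotonicity of component counts under edge addition. A closing remark can point out that the role of the hypothesis $\abs{\pi'} > \abs{\pi_i}$ and of the fixed common value $\abs{\pi_1} = \abs{\pi_2} = \dots$ is not in the proof of the corollary itself but in its intended application to the main factorisation proposition, where one needs a \emph{uniform} power-of-$\N$ bound across the whole family.
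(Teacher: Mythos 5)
Your proposal is correct and follows essentially the same route as the paper: both conditional statements are read off directly from \eqref{d1ind2conn} (the case $d_\pi < d_{\pi_i}$, where the join is $d_{\pi_i}$ itself) and \eqref{incompconn} (the incomparable case), which is all the paper does. The only imprecision is your treatment of the leftover cases: under the standing hypothesis that the fixed diagram has strictly more blocks than each $d_{\pi_i}$, the possibilities $d_{\pi_i} < d_\pi$ and $d_\pi = d_{\pi_i}$ are simply excluded (strict refinement forces strictly more blocks), so they need not be "absorbed" into the first line at all — but this does not affect the validity of the two conditionals you prove.
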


The inner products in Proposition \ref{prop: large N factorisation} include sums over $S_k$,
\begin{equation}
	\bra*{\widehat{\SPk{d}}_\pi}\ket*{\widehat{\SPk{d}}_{\pi'}} = \frac{\sum_{\gamma_1 \in S_k} \N^{\abs{\pi \join \gamma_1 \pi' \gamma_1^{-1}} }}{\sqrt{\sum_{\gamma_2 \in S_k} \N^{\abs{\pi \join \gamma_2 \pi \gamma_2^{-1}}}\sum_{\gamma_3 \in S_k} \N^{\abs{\pi' \join \gamma_3 \pi' \gamma_3^{-1}} }}},
\end{equation}
where $\gamma \pi \gamma^{-1}$ is the set partition corresponding to the diagram obtained by computing $\gamma d_\pi \gamma^{-1}$.
Proposition \ref{prop: large N factorisation} follows from the following result.
\begin{proposition} \label{prop: large N factorisation inequality}
\begin{equation}
	\begin{aligned}
		2\max_{\gamma_1} \abs{\pi \join \gamma_1 \pi'  \gamma_1^{-1}} = \max_{\gamma_2} \abs{\pi \join \gamma_2 \pi  \gamma_2^{-1}} + \max_{\gamma_3}  \abs{\pi' \join \gamma_3 \pi'  \gamma_3^{-1}} \qq{if} [d_\pi] = [d_{\pi'}],\\
		2\max_{\gamma_1} \abs{\pi \join \gamma_1 \pi'  \gamma_1^{-1}} < \max_{\gamma_2} \abs{\pi \join \gamma_2 \pi  \gamma_2^{-1}} + \max_{\gamma_3}  \abs{\pi' \join \gamma_3 \pi'  \gamma_3^{-1}} \qq{if} [d_\pi] \neq [d_{\pi'}]
	\end{aligned} \label{eq: Large N factorisation inequality}
\end{equation}
\end{proposition}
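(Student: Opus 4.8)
The plan is to reduce the full statement to the pairwise inequality $2|\pi\join\pi'|\le|\pi|+|\pi'|$ already established in Proposition \ref{prop: non-symmetric inequality}, but applied to $S_k$-conjugates, together with a matching argument that identifies the conjugates realising the maxima. First I would fix notation: write $\pi_{\gamma}=\gamma\pi\gamma^{-1}$ for $\gamma\in S_k$, and observe that $|\pi_\gamma|=|\pi|$ for all $\gamma$, since conjugating a partition diagram by a permutation merely relabels the vertices and preserves the number of blocks. Hence the two quantities $\max_{\gamma_2}|\pi\join\pi_{\gamma_2}|$ and $\max_{\gamma_3}|\pi'\join\pi'_{\gamma_3}|$ are, in particular, achieved trivially at $\gamma_2=\gamma_3=e$ giving value $|\pi|$ and $|\pi'|$ respectively; but I must be careful — the maximum could be the same for several $\gamma$, and I need the value, which by Proposition \ref{prop: non-symmetric inequality} applied to $\pi$ and $\pi_{\gamma}$ is $|\pi\join\pi_\gamma|\le \tfrac12(|\pi|+|\pi|)=|\pi|$, with equality iff $\pi=\pi_\gamma$. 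So $\max_{\gamma_2}|\pi\join\pi_{\gamma_2}|=|\pi|$ and likewise $\max_{\gamma_3}|\pi'\join\pi'_{\gamma_3}|=|\pi'|$, and the right-hand side of \eqref{eq: Large N factorisation inequality} equals $|\pi|+|\pi'|$ in both cases.

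Next I would handle the left-hand side. For any $\gamma_1$, Proposition \ref{prop: non-symmetric inequality} applied to the pair $\pi$ and $\pi'_{\gamma_1}$ gives $2|\pi\join\pi'_{\gamma_1}|\le|\pi|+|\pi'_{\gamma_1}|=|\pi|+|\pi'|$, with equality iff $\pi=\pi'_{\gamma_1}$, i.e.\ iff $d_\pi$ and $d_{\pi'}$ lie in the same $S_k$-orbit. Taking the maximum over $\gamma_1$: if $[d_\pi]\ne[d_{\pi'}]$ then $\pi\ne\pi'_{\gamma_1}$ for every $\gamma_1$, so the strict inequality holds for each term and hence $2\max_{\gamma_1}|\pi\join\pi'_{\gamma_1}|<|\pi|+|\pi'|$, which is the second line of \eqref{eq: Large N factorisation inequality}. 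If $[d_\pi]=[d_{\pi'}]$, then there exists $\gamma_1^\ast$ with $\pi'_{\gamma_1^\ast}=\pi$, for which $2|\pi\join\pi'_{\gamma_1^\ast}|=2|\pi|=|\pi|+|\pi'|$ (using $|\pi|=|\pi'|$, which follows because conjugate diagrams have equal block count); since no term exceeds $|\pi|+|\pi'|$ by the inequality just proved, the maximum is exactly $\tfrac12(|\pi|+|\pi'|)$, giving the first line.

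Finally I would close the loop back to Proposition \ref{prop: large N factorisation}. Writing the normalised inner product $\bra*{\widehat{\SPk{d}}_\pi}\ket*{\widehat{\SPk{d}}_{\pi'}}$ as the explicit ratio of sums of powers of $\N$ displayed just before the statement of Proposition \ref{prop: large N factorisation inequality}, and using Proposition \ref{prop: trace in diagram basis} to evaluate each trace as $\N^{|\pi\join\gamma\pi'\gamma^{-1}|}$, the leading large-$\N$ behaviour of numerator and denominator is governed precisely by the maximal exponents. Proposition \ref{prop: large N factorisation inequality} then says the exponent of the numerator is (at least a half) less than or equal to half the sum of the exponents of the two factors in the denominator, with equality exactly when $[d_\pi]=[d_{\pi'}]$; substituting gives $1+O(1/\sqrt{\N})$ in the matching case and $O(1/\sqrt{\N})$ otherwise. (The $\sqrt{\N}$ rather than $\N$ arises because the strict inequality among integers improves by at least $1$, and the denominator carries a square root.) I do not expect a serious obstacle here; the only delicate point — and the one I would write out most carefully — is verifying that the maxima $\max_{\gamma_2}$ and $\max_{\gamma_3}$ are genuinely attained at the identity (equivalently, that $|\pi\join\pi_\gamma|\le|\pi|$ always), which is immediate from Proposition \ref{prop: non-symmetric inequality} once one notes $|\pi_\gamma|=|\pi|$, but deserves an explicit sentence since the whole argument pivots on it.
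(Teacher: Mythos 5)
Your proof is correct and follows essentially the same route as the paper: first reduce the right-hand side to $\abs{\pi}+\abs{\pi'}$ by noting conjugation preserves block count and the identity attains the maximum, then bound each term $\abs{\pi \join \gamma\pi'\gamma^{-1}}$ by applying Proposition \ref{prop: non-symmetric inequality} to the pair $(\pi,\gamma\pi'\gamma^{-1})$, with equality possible only when $\pi=\gamma\pi'\gamma^{-1}$, i.e.\ when $[d_\pi]=[d_{\pi'}]$. The only difference is economy: you invoke the equality-iff-equal dichotomy of that proposition as a black box, whereas the paper re-runs the comparable/incomparable/equal case analysis at the level of conjugates to reach the same conclusion.
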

\begin{proof}
The first step in proving this is to simplify the terms on the r.h.s. The inequalities in Proposition \ref{prop: non-symmetric inequality} imply that $\abs{\pi \join \gamma \pi \gamma^{-1}}$ is maximised when $\pi = \gamma \pi \gamma^{-1}$. Since the identity permutation $\gamma=1$ always satisfies this equality we have
\begin{equation}
	\max_{\gamma} \abs{\pi \join \gamma \pi \gamma^{-1}} = \abs{\pi}.
\end{equation}
We are left with proving
\begin{equation}
	\begin{aligned}
			2\max_{\gamma } \abs{\pi \join \gamma  \pi'  \gamma^{-1}} = \abs{\pi} + \abs{\pi'} \qq{if} [d_{\pi}] = [d_{\pi'}],\\
			2\max_{\gamma } \abs{\pi \join \gamma  \pi'  \gamma^{-1}} < \abs{\pi} + \abs{\pi'} \qq{if} [d_\pi] \neq [d_{\pi'}].
		\end{aligned} \label{eq: Large N factorisation inequality 2}
\end{equation}
We employ the same strategy as before, and consider three distinct cases.
\begin{enumerate}
	\item Suppose $\abs{\pi} > \abs{\pi'}$ and consider the set partition $ \gamma \pi' \gamma^{-1} $ for $ \gamma \in S_k$. We have $ \abs{\pi} > \abs{\gamma \pi' \gamma^{-1}}  = \abs{\pi'} $, because conjugation does not change the number of blocks(components). Therefore, either $d_{\pi} < \gamma d_{\pi'}\gamma^{-1}$ or $d_{\pi} \not\lesseqqgtr \gamma d_{\pi'}\gamma^{-1}$. Assume $\pi , \pi' $ are such that there exists some $ \gamma^* $ with	$d_\pi < \gamma^* d_{\pi' }(\gamma^*)^{-1}$. For any such $ \gamma^*$, Corollary \ref{Corr1}
	implies that 
	\begin{equation}
		2 \abs{\pi \join \gamma^* \pi' (\gamma^*)^{-1}} = 2\abs{\pi'} < \abs{\pi} + \abs{\pi'}.
	\end{equation}
	Any $\gamma $ not satisfying this condition leads to 
	$ d_\pi \not\lesseqqgtr \gamma d_{\pi' }\gamma^{-1} $, and the inequality in Corollary \ref{Corr1} implies that 
	\begin{equation}
		2 \abs{\pi \join \gamma \pi' \gamma^{-1}} <  2\abs{\pi'}.
	\end{equation}
	This implies that
	\begin{equation}
			2\max_{\gamma} \abs{\pi \join \gamma \pi' \gamma^{-1}} = 2 \abs{\pi \join \gamma^* \pi' (\gamma^*)^{-1}} = 2\abs{\pi'} < \abs{\pi} + \abs{\pi'}. \label{eq: tau max case 1}
		\end{equation}
	The pair
	\begin{equation}
			d_\pi = \PAdiagram{2}{-1/1}{}, \quad d_{\pi'} = \PAdiagram{2}{1/2,-2/2}{},
	\end{equation}
	is an example of this case since
	\begin{equation}
			\PAdiagram{2}{-1/1}{} < \PAdiagram{2}{1/2,-1/1}{} = \begin{aligned}
					&\PAdiagram{2}{-1/2,-2/1}{} \\
					&\PAdiagram{2}{1/2,-2/2}{}\\
					&\PAdiagram{2}{-1/2,-2/1}{}
				\end{aligned}.
		\end{equation}
	The argument is identical for the case where $\abs{\pi} < \abs{\pi'}$, and there exists some 
	$\gamma^*  \in S_k$ such that $d_{\pi'} < \gamma^* d_{\pi} (\gamma^*)^{-1}$. In this case, by renaming $ \pi \leftrightarrow \pi' $ in \eqref{eq: tau max case 1}, we have 
	\begin{equation}
		2\max_{\gamma} \abs{\pi' \join \gamma \pi \gamma^{-1}} = 2 \abs{\pi' \join \gamma^* \pi (\gamma^*)^{-1}} = 2\abs{\pi} < \abs{\pi} + \abs{\pi'}. 
	\end{equation}
	Using the symmetry of the inner product it follows 
	\begin{equation}
		2\max_{\gamma} \abs{\pi \join \gamma \pi' \gamma^{-1}} < \abs{\pi} + \abs{\pi'}. 
	\end{equation}	  
	\item Secondly, consider the case of incomparability,
	\begin{equation}
			d_\pi \not\lesseqqgtr \gamma d_{\pi'} \gamma^{-1} \quad \forall \gamma \in S_k.
	\end{equation}
	Recall that for incomparable diagrams we have \eqref{IncompIneq} which says that
	\begin{equation}
			2\abs{\pi \join \gamma \pi' \gamma^{-1}} < \abs{\pi} + \abs{\gamma \pi' \gamma^{-1}} = \abs{\pi}+ \abs{\pi'},
	\end{equation}
	where the last equality follows because conjugation by a permutation does not change the number of connected components.
	Therefore
	\begin{equation}
			2\max_{\gamma} \abs{\pi \join \gamma \pi' \gamma^{-1}} < \abs{\pi} + \abs{\pi'},		
	\end{equation}
	in this case as well.
	\item When $\pi = \gamma \pi' \gamma^{-1}$ for some $\gamma \in S_k$, the bound is saturated and \begin{equation}
			2\max_{\gamma} \abs{\pi \join \gamma \pi' \gamma^{-1}} = 2\abs{\pi}.
	\end{equation}
	The condition $\pi = \gamma \pi' \gamma^{-1}$ implies $[d_\pi] = [d_{\pi'}]$.
\end{enumerate}
\end{proof}
Having proven the inequalities in equation \eqref{eq: Large N factorisation inequality}, arguments similar to those used to prove Proposition \ref{prop: simple factorization case} lead to a proof of Proposition \ref{prop: large N factorisation}. To summarise, the diagram basis for $\Hilbertspace^{(k)}$ forms an orthonormal basis at large $\N$.

\subsection{Orbit basis.}  \label{sec: orbit basis}
In Theorem \ref{thm: orbit basis}, we described the orbit basis for $\End_{\SN}(\VN^{\otimes k})$. In this subsection we will describe the corresponding orbit basis for $P_k(\N)$ and show that it corresponds to an orthogonal basis for $\Hilbertspace^{(k)}$.

The orbit basis can be expressed in terms of the diagram basis using the partial ordering in Definition \ref{def: partial ordering set partitions}.
\begin{theorem}
	The orbit basis elements $x_{\pi'}$ for $\pi' \in \setpart{[k \vert k']}$ defined by
	\begin{equation}
		d_\pi = \sum_{\pi \leq \pi'} x_{\pi'}, \label{eq: diagram to orbit basis}
	\end{equation}
	form a basis for $P_k(\N)$. The change of basis matrix determined by \eqref{eq: diagram to orbit basis}, denoted $\zeta_{2k}$, is called the zeta matrix of the partially ordered set $\setpart{[k \vert k']}$. It is upper triangular, with ones on the diagonal and hence invertible. The inverse of $\zeta_{2k}$ is the matrix $\mu_{2k}$ 
	\begin{equation} \label{eq: orbit basis expansion}
		x_{\pi} = \sum_{\pi \leq \pi'} \mu_{2k}(\pi, \pi') d_{\pi'},
	\end{equation}
	where if $\pi \leq \pi'$ and $\pi'$ consists of $b$ blocks such that the $i$th block of $\pi'$ is the union of $b_i$ blocks of $\pi$ then
	\begin{equation}
		\mu_{2k}(\pi, \pi') = \prod_{i=1}^b (-1)^{b_i -1} (b_i - 1)!
	\end{equation}
\end{theorem}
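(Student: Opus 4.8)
The statement asserts that the ``orbit basis'' of $P_k(\N)$ is obtained from the diagram basis by a M\"obius-type inversion on the poset $\setpart{[k\vert k']}$, and gives the closed form of the M\"obius function. The plan is to treat this as a pure poset/linear-algebra fact, completely independent of the algebra structure of $P_k(\N)$: what matters is that the diagram basis and the orbit basis are both indexed by $\setpart{[k\vert k']}$, and that they are related by the incidence algebra of this poset.

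First I would verify that \eqref{eq: diagram to orbit basis} is a genuine change of basis. From Theorem~\ref{thm: orbit basis} and the Kronecker-delta description of the diagram basis (Example~\ref{ex: kron deltas}), the diagram tensor $(\dgram_\pi)^{i_{1'}\dots i_{k'}}_{i_1\dots i_k}$ equals $1$ exactly when $i_a = i_b$ \emph{whenever} $a,b$ lie in the same block of $\pi$ (an inequality constraint is not imposed), while the orbit tensor $(\X_{\pi'})^{\cdots}_{\cdots}$ equals $1$ exactly when $i_a=i_b$ \emph{if and only if} $a,b$ lie in the same block of $\pi'$. A given index assignment $(i_1,\dots,i_k,i_{1'},\dots,i_{k'})$ determines a unique finest set partition $\pi^{\star}$ (group indices with equal values), and it contributes $1$ to $\dgram_\pi$ precisely when $\pi \le \pi^{\star}$, and $1$ to $\X_{\pi'}$ precisely when $\pi' = \pi^{\star}$. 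Summing the orbit tensors over all $\pi' \ge \pi$ therefore reproduces $\dgram_\pi$ componentwise, which is exactly \eqref{eq: diagram to orbit basis}. Since the relation is indexed so that $d_\pi$ only involves $x_{\pi'}$ with $\pi \le \pi'$, the transition matrix $\zeta_{2k}$ is upper triangular in any linear extension of $\le$, with $1$'s on the diagonal (the $\pi'=\pi$ term); hence it is invertible and the $x_\pi$ indeed form a basis.

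Next I would identify $\zeta_{2k}$ with the zeta function $\zeta$ of the incidence algebra of $\setpart{[k\vert k']}$ and invoke M\"obius inversion: its inverse is the matrix of the M\"obius function $\mu$, giving $x_\pi = \sum_{\pi \le \pi'} \mu(\pi,\pi')\, d_{\pi'}$, which is \eqref{eq: orbit basis expansion} with $\mu_{2k} = \mu$. It remains to evaluate $\mu(\pi,\pi')$. Here I would use the standard structural fact that the interval $[\pi,\pi']$ in the partition lattice is isomorphic to a product of partition lattices: if $\pi'$ has blocks $B_1,\dots,B_b$ and $B_i$ is the union of $b_i$ blocks of $\pi$, then $[\pi,\pi'] \cong \prod_{i=1}^b \Pi_{b_i}$, where $\Pi_m$ is the lattice of set partitions of an $m$-element set. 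By multiplicativity of $\mu$ over direct products, $\mu(\pi,\pi') = \prod_{i=1}^b \mu(\hat 0,\hat 1)_{\Pi_{b_i}}$, and the classical evaluation $\mu_{\Pi_m}(\hat 0,\hat 1) = (-1)^{m-1}(m-1)!$ (which I would cite, e.g.\ from Stanley's \emph{Enumerative Combinatorics} or prove by a quick recursion on $m$) yields exactly $\mu_{2k}(\pi,\pi') = \prod_{i=1}^b (-1)^{b_i-1}(b_i-1)!$.

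The main obstacle is not conceptual but bookkeeping: one must be careful that the set partitions here live on the $2k$-element set $[k\vert k']$ rather than on $\{1,\dots,k\}$, and that the partial order in Definition~\ref{def: partial ordering set partitions} (refinement: $\pi \le \pi'$ iff every block of $\pi$ sits inside a block of $\pi'$) matches the direction in which $\zeta$ sums. I would double-check orientation by testing small cases --- e.g.\ $k=1$, where $\setpart{[1\vert 1']}$ has the two elements $1\vert 1'$ and $11'$, so $d_{1\vert 1'} = x_{1\vert 1'} + x_{11'}$ and $d_{11'} = x_{11'}$, giving $x_{11'} = d_{11'}$ and $x_{1\vert 1'} = d_{1\vert 1'} - d_{11'}$, consistent with $\mu_{2}(1\vert 1', 11') = (-1)^{2-1}(2-1)! = -1$. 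Once the conventions are pinned down, the proof is just: (i) componentwise verification of \eqref{eq: diagram to orbit basis} via the finest-partition argument; (ii) triangularity hence basis; (iii) M\"obius inversion on the interval, with the product-of-$\Pi_m$ structure giving the closed form.
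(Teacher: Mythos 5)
Your proposal is correct; the paper does not prove this theorem itself but defers to Benkart--Halverson (Section 4.2--4.3 of the cited reference), and your argument --- componentwise identification of the finest partition determined by an index assignment, unitriangularity of the transition matrix, and M\"obius inversion on intervals of the partition lattice using $[\pi,\pi'] \cong \prod_i \Pi_{b_i}$ and $\mu_{\Pi_m}(\hat 0,\hat 1) = (-1)^{m-1}(m-1)!$ --- is exactly that standard route. The only remark worth keeping in mind is that in the abstract algebra $P_k(\N)$ the elements $x_\pi$ are \emph{defined} by \eqref{eq: diagram to orbit basis}, so the triangularity and M\"obius computation carry the whole proof, while your componentwise verification is the (correct) compatibility check with the concrete orbit basis of $\End_{\SN}(\VN^{\otimes k})$ in the stable range $\N \geq 2k$.
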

\begin{proof}
	See \cite[Section 4.2, 4.3]{Benkart2017}.
\end{proof}
\begin{example}
The diagram basis element $d_{\pi}$ is a sum of all orbit basis elements labelled by set partitions equal to or coarser than $\pi$, for example
\begin{equation}
	\PAdiagram{2}{1/-1}{} = \PAdiagramOrbit{2}{1/-1}{} +  \PAdiagramOrbit{2}{1/-1, 2/-2}{} + \PAdiagramOrbit{2}{1/-1}{2/1}  + \PAdiagramOrbit{2}{1/-1}{-1/-2} + \PAdiagramOrbit{2}{1/-1, 2/-2}{2/1}.
\end{equation}
\end{example}
We will continue to distinguish the diagram and orbit bases by drawing diagram basis elements with black vertices and labelling them with the letter $d$, and drawing orbit basis elements with white vertices and labelling them with the letter $x$.
\begin{example}
We have the following expansion of the orbit basis element labelled by {$\pi = 1|2|1'|2'$}
\begin{align} \nonumber \label{eq: o1 in diagram basis}
	\PAdiagramOrbit{2}{}{} = &\PAdiagram{2}{}{} - \PAdiagram{2}{1/-1}{} - \PAdiagram{2}{2/-2}{} - \PAdiagram{2}{2/1}{} - \PAdiagram{2}{}{-1/-2} - \PAdiagram{2}{1/-2}{} - \PAdiagram{2}{-1/2}{} + \PAdiagram{2}{1/-1, 2/-2}{} + \PAdiagram{2}{1/-2, -1/2}{} +\\
	&\PAdiagram{2}{}{2/1, -1/-2} + 2 \PAdiagram{2}{1/-1}{2/1} + 2 \PAdiagram{2}{1/-1}{-1/-2} + 2 \PAdiagram{2}{2/-2}{2/1} + 2 \PAdiagram{2}{2/-2}{-1/-2} - 6 \PAdiagram{2}{1/-1, 2/-2}{2/1}.
\end{align}
\end{example}

As with the diagram basis, we can construct an orbit basis for $SP_k(\N)$ by considering $S_k$ orbits.
\begin{definition}
	Let $x_\pi$ be an orbit basis element of $P_k(\N)$ and consider the orbit
	\begin{equation}
		[x_\pi] = \{\tau x_\pi \tau^{-1}, \, \forall \tau \in S_k\}.
	\end{equation}
	Define the corresponding element of $SP_k(\N)$ by
	\begin{equation}
		\SPk{x}_\pi = \frac{1}{\abs{[x_\pi]}} \sum_{x_\pi' \in [x_\pi]} x_{\pi'} = \frac{1}{k!} \sum_{\tau \in S_k} \tau x_{\pi} \tau^{-1}.
	\end{equation}
\end{definition}
The set of distinct orbits define a basis for $SP_k(\N)$. The proof is analogous to the diagram basis case.
Given this orbit basis for $SP_k(\N)$, we define the corresponding orbit basis for $\Hilbertspace^{(k)}$
\begin{definition}
	Let $\SPk{x} \in SP_k(\N)$ and $\SPk{X} \in \End_{\SN \times S_k}(\VN^{\otimes k})$ the corresponding linear map. We define the state $\ket{\SPk{x}} \in \Hilbertspace^{(k)}$ by
	\begin{equation}
		\ket{\SPk{x}} = \Tr_{\VN^{\otimes k}}(\SPk{X}(a^\dagger)^{\otimes k})\ket{0}.
	\end{equation}
\end{definition}

We state the main property of the orbit basis and will spend the rest of this subsection proving it.
\begin{proposition}\label{prop: orbit orthogonality}
Let $\SPk{x}_{\pi}, \SPk{x}_{\pi'} \in SP_k(\N)$ be orbit basis elements, then
\begin{equation}
	\bra{\SPk{x}_\pi}\ket{\SPk{x}_{\pi'}} =  \begin{cases}
			\abs{G_{\pi}}\N_{(\abs{\pi})} \qq{if $[x_{\pi'}] = [x_\pi]$,}\\
			0 \qq{otherwise.}
		\end{cases}
\end{equation}
where $\N_{(l)} = \N(\N-1) \dots (\N-l+1)$ is the falling factorial and $ \abs{G_{\pi}}$ is the order of the subgroup of $S_k$ that leaves $x_\pi$ invariant.
\end{proposition}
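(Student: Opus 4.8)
The plan is to compute the inner product $\bra{\SPk{x}_\pi}\ket{\SPk{x}_{\pi'}}$ by reducing it to the trace formula already proven for the diagram basis and then exploiting the special combinatorial structure of the orbit basis. Recall that for two states $\ket{T}, \ket{T'}$ of degree $k$ we have $\bra{T}\ket{T'} = \sum_{\gamma \in S_k} \Tr_{\VN^{\otimes k}}(T^\dagger D_\gamma T' D_{\gamma^{-1}})$. Since $\SPk{x}_\pi$ and $\SPk{x}_{\pi'}$ are already $S_k$-invariant, the operator $D_\gamma \SPk{X}_{\pi'} D_{\gamma^{-1}} = \SPk{X}_{\pi'}$, so the sum over $\gamma$ just contributes a factor of $k!$, and (up to that factor and the averaging normalisations in the definition of $\SPk{x}$) the computation reduces to a sum over representatives $x_{\pi'} \in [x_{\pi'}]$ of $\Tr_{\VN^{\otimes k}}(X_\pi^\dagger X_{\pi'})$. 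The orbit basis is real and symmetric in the relevant sense, so $X_\pi^\dagger$ can be replaced by $X_\pi^T$, and I would reduce everything to evaluating $\Tr_{\VN^{\otimes k}}(X_\pi X_{\pi'}^T)$ for single orbit basis elements of $P_k(\N)$.

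The key computation is therefore: $\Tr_{\VN^{\otimes k}}(X_\pi X_{\pi'}^T)$. Here is where the defining property of the orbit basis (equation \eqref{eq: orbit basis action}) does the work. The tensor $(X_\pi)^{\mathbf{i}'}_{\mathbf{i}}$ is $1$ precisely when the index equalities among $\{i_1,\dots,i_k,i_{1'},\dots,i_{k'}\}$ are \emph{exactly} those dictated by the blocks of $\pi$ — no more and no less. When we form $\Tr_{\VN^{\otimes k}}(X_\pi X_{\pi'}^T) = \sum_{\mathbf{i},\mathbf{i}'} (X_\pi)^{\mathbf{i}'}_{\mathbf{i}} (X_{\pi'})^{\mathbf{i}}_{\mathbf{i}'}$, the summand is nonzero only when the two exact block structures coincide, i.e. only when $\pi = \pi'$. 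In that case the number of index assignments is the number of ways to assign $|\pi|$ distinct values from $\{1,\dots,\N\}$ to the $|\pi|$ blocks, which is the falling factorial $\N_{(|\pi|)} = \N(\N-1)\cdots(\N-|\pi|+1)$. This is precisely why the orbit basis is \emph{exactly} orthogonal for all $\N$, in contrast to the diagram basis which is only orthogonal in the large-$\N$ limit. So $\Tr_{\VN^{\otimes k}}(X_\pi X_{\pi'}^T) = \delta_{\pi\pi'} \N_{(|\pi|)}$ at the level of $P_k(\N)$.

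Next I would lift this to $SP_k(\N)$ by carefully tracking the averaging. Writing $\SPk{x}_\pi = \frac{1}{|[x_\pi]|}\sum_{x_{\rho} \in [x_\pi]} x_\rho = \frac{1}{k!}\sum_{\tau \in S_k} \tau x_\pi \tau^{-1}$, the pairing becomes a double sum over the two orbits. Each cross term $\Tr_{\VN^{\otimes k}}(X_\rho X_{\rho'}^T)$ vanishes unless $\rho = \rho'$, which happens only when $[x_\pi] = [x_{\pi'}]$; and conjugation by $\tau \in S_k$ permutes blocks without changing $|\pi|$, so each nonzero term equals $\N_{(|\pi|)}$. Counting the number of such coincidences requires the stabiliser: the subgroup $G_\pi = \{\tau \in S_k \mid \tau x_\pi \tau^{-1} = x_\pi\}$ has order $|G_\pi|$, and $|[x_\pi]| = k!/|G_\pi|$. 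A bookkeeping argument — the number of pairs $(\tau,\tau') \in S_k \times S_k$ with $\tau x_\pi \tau^{-1} = \tau' x_{\pi'} (\tau')^{-1}$, modulo the normalisation $\tfrac{1}{k!}$ coming from the $S_k$-invariance of the states — collapses to exactly $|G_\pi|$, yielding $\bra{\SPk{x}_\pi}\ket{\SPk{x}_{\pi'}} = |G_\pi| \N_{(|\pi|)}$ when $[x_\pi]=[x_{\pi'}]$ and $0$ otherwise. The main obstacle I anticipate is not any deep idea but precisely this combinatorial bookkeeping of the various $1/k!$ and $1/|[x_\pi]|$ factors together with the $\gamma$-sum from Wick contraction and the $\tau$-sums from the orbit averages; one must be scrupulous that the factors of $k!$, $|[x_\pi]|$, and $|G_\pi|$ combine correctly, using $|[x_\pi]| \cdot |G_\pi| = k!$, to land on the clean answer $|G_\pi| \N_{(|\pi|)}$.
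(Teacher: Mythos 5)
Your proof is correct and takes essentially the same route as the paper: the same key lemma $\Tr_{\VN^{\otimes k}}(X_\pi X_{\pi'}^T)=\delta_{\pi\pi'}\,\N_{(\abs{\pi})}$ for orbit basis elements of $P_k(\N)$, combined with the same $S_k$-orbit/stabiliser bookkeeping via $\abs{[x_\pi]}\,\abs{G_\pi}=k!$, the only difference being organisational (the paper collapses $\sum_{\gamma}D_\gamma X_\pi D_{\gamma^{-1}}=\abs{G_\pi}\sum_{x_\lambda\in[x_\pi]}X_\lambda$ instead of extracting the factor $k!$ first from $S_k$-invariance). One harmless slip: the summand you display should be the entrywise sum $(X_\pi)^{\mathbf{i}'}_{\mathbf{i}}(X_{\pi'})^{\mathbf{i}'}_{\mathbf{i}}$ rather than $(X_\pi)^{\mathbf{i}'}_{\mathbf{i}}(X_{\pi'})^{\mathbf{i}}_{\mathbf{i}'}$ if it is to equal $\Tr(X_\pi X_{\pi'}^T)$, which is the quantity you correctly state and use.
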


Before proving this, consider the simpler proposition
\begin{proposition}\label{prop: simple orbit trace}
Let $X_\pi, X_{\pi'}$ be the orbit basis elements of $\End_{\SN}(\VN^{\otimes k})$ (see Theorem \ref{thm: orbit basis}) then
\begin{equation} 
		\Tr_{\VN^{\otimes k}}(X_{\pi} X_{\pi'}^T) = \N_{(\abs{\pi})}\delta_{\pi \pi'}.
\end{equation}
\end{proposition}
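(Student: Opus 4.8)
The statement to prove is $\Tr_{\VN^{\otimes k}}(X_{\pi} X_{\pi'}^T) = \N_{(\abs{\pi})}\delta_{\pi \pi'}$, where $\N_{(l)} = \N(\N-1)\cdots(\N-l+1)$. The plan is to work directly from the defining property \eqref{eq: orbit basis action} of the orbit basis, which says that $(\X_\pi)^{i_{1'} \dots i_{k'}}_{i_1 \dots i_k}$ is $1$ precisely when the indices $i_1,\dots,i_k,i_{1'},\dots,i_{k'}$ satisfy $i_a = i_b$ \emph{if and only if} $a,b$ lie in the same block of $\pi$, and $0$ otherwise. The crucial feature distinguishing the orbit basis from the diagram basis is the "if and only if": an orbit basis tensor enforces that indices in different blocks of $\pi$ take \emph{distinct} values.

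First I would write $\Tr_{\VN^{\otimes k}}(X_{\pi} X_{\pi'}^T)$ as a sum over all $2k$ indices of the product $(\X_\pi)^{i_{1'} \dots i_{k'}}_{i_1 \dots i_k}\,(\X_{\pi'})^{i_{1} \dots i_{k}}_{i_{1'} \dots i_{k'}}$ (using the transpose to match up primed and unprimed indices correctly). A nonzero summand requires the index assignment $(i_1,\dots,i_{k'})$ to be compatible with $\pi$ via the "iff" condition \emph{and} simultaneously compatible with $\pi'$ via the "iff" condition. The key combinatorial observation is: an assignment $a\mapsto i_a$ of $\{1,\dots,k,1',\dots,k'\}$ into $\{1,\dots,\N\}$ induces a set partition $\hat\pi$ (grouping indices by equal value), and it is compatible with $\pi$ in the orbit sense exactly when $\hat\pi = \pi$. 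Hence a summand is nonzero only if $\pi = \hat\pi = \pi'$, giving the $\delta_{\pi\pi'}$. This is the main point, and it is not really an obstacle — it is a direct unpacking of the "iff" in the definition, but it is the step that must be stated carefully.

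Second, assuming $\pi = \pi'$, I would count the number of index assignments $a \mapsto i_a$ whose induced partition is exactly $\pi$. Such an assignment is the same as an injection from the set of blocks of $\pi$ into $\{1,\dots,\N\}$ (each block gets a value, distinct blocks get distinct values). If $\pi$ has $\abs{\pi}$ blocks, the number of such injections is $\N(\N-1)\cdots(\N-\abs{\pi}+1) = \N_{(\abs{\pi})}$, and each contributes a summand equal to $1\cdot 1 = 1$. This yields $\Tr_{\VN^{\otimes k}}(X_\pi X_\pi^T) = \N_{(\abs{\pi})}$, completing the proof. I would also remark that the transpose $X_{\pi'}^T$ corresponds to the set partition $\pi'^T$ obtained by swapping primed and unprimed labels, but since $\pi = \hat\pi$ forces the assignment to be self-consistent, this relabelling does not affect the count; alternatively one notes the orbit basis tensors are symmetric enough that $X_\pi^T$ is again an orbit basis element, of the same block count.

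The only genuinely delicate bookkeeping is making sure the index identification in the trace (which primed index of $X_\pi$ pairs with which unprimed index of $X_{\pi'}^T$) is handled so that "compatible with $\pi$ and compatible with $\pi'$" really does mean $\pi = \pi'$ as set partitions of the same index set $[k\vert k']$; I expect this to be routine once the notation from Theorem \ref{thm: orbit basis} is in place. No step requires machinery beyond \eqref{eq: orbit basis action} and elementary counting.
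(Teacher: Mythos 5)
Your proposal is correct and follows essentially the same route as the paper's proof: unpack the ``if and only if'' in the orbit basis definition to force the partition induced by any nonvanishing index assignment to equal both $\pi$ and $\pi'$ (giving $\delta_{\pi\pi'}$), then count injections of the blocks of $\pi$ into $\{1,\dots,\N\}$ to obtain the falling factorial $\N_{(\abs{\pi})}$. No substantive difference from the paper's argument.
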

\begin{proof}
The trace is equal to
\begin{equation}
	\Tr_{\VN^{\otimes k}}(X_{\pi} X_{\pi'}^T) = \sum_{\substack{i_1 \dots i_k \\ i_{1'} \dots i_{k'}}} (X_{\pi})^{i_{1'} \dots i_{k'}}_{i_{1} \dots i_{k}} (X_{\pi'})^{i_{1'} \dots i_{k'}}_{i_{1} \dots i_{k}}.
\end{equation}
As we now explain, Equation \eqref{eq: orbit basis action} implies
\begin{equation}
	(X_{\pi})^{i_{1'} \dots i_{k'}}_{i_{1} \dots i_{k}} (X_{\pi'})^{i_{1'} \dots i_{k'}}_{i_{1} \dots i_{k}} = \begin{cases}
		1 \qq{\parbox[t]{8cm}{if $i_a = i_b$ if and only if $a$ and $b$ are in the same block of $\pi$ and the same block of $\pi'$,}} \\
		0 \qq{otherwise.}
	\end{cases}\label{eq: orbit basis matrix elements product}
\end{equation}
If $\pi \neq \pi'$ two situations exist. Consider the set of all pairs $(a,b)$ for $a,b=1,\dots,k,1',\dots,k'$ such that $a$ and $b$ are in the same block of $\pi$. Since $\pi \neq \pi'$ at least one of these pairs are such that $a$ and $b$ are in different blocks of $\pi'$. The second case is the reverse. Consider the set of all $(a,b)$ such that $a$ and $b$ are in the same block of $\pi'$. Then $\pi' \neq \pi$ implies that there exists at least one pair such that $a$ and $b$ are not in the same block of $\pi$.
In that case, there are no choices of $i_a, i_b$ which satisfy the first criteria in \eqref{eq: orbit basis matrix elements product}.
For example, take $a,b$ to be in the same block of $\pi$ but different blocks of $\pi'$. The matrix elements $(X_{\pi})^{i_{1'} \dots i_{k'}}_{i_{1} \dots i_{k}}$ vanish if $i_a \neq i_b$ while the matrix elements $(X_{\pi'})^{i_{1'} \dots i_{k'}}_{i_{1} \dots i_{k}}$ vanish unless $i_a \neq i_b$. Therefore, the product identically vanishes,
\begin{equation}
	(X_{\pi})^{i_{1'} \dots i_{k'}}_{i_{1} \dots i_{k}} (X_{\pi'})^{i_{1'} \dots i_{k'}}_{i_{1} \dots i_{k}} = \delta_{\pi \pi'} (X_{\pi})^{i_{1'} \dots i_{k'}}_{i_{1} \dots i_{k}}
\end{equation}
and
\begin{equation}
	\Tr_{\VN^{\otimes k}}(X_{\pi} X_{\pi'}^T)  = \sum_{\substack{i_1 \dots i_k \\ i_{1'} \dots i_{k'}}} (X_{\pi})^{i_{1'} \dots i_{k'}}_{i_{1} \dots i_{k}}  \delta_{\pi \pi'} = \delta_{\pi \pi'} \N(\N-1) \dots (\N-\abs{\pi}+1).
\end{equation}
The last equality is a consequence of \eqref{eq: orbit basis action}. For example, consider the set partition $12|1'2'$. The trace of $X_{12|1'2'}$ is
\begin{equation}
	\Tr_{\VN^{\otimes 2}}(X_{12|1'2'}) = \sum_{i_1 = i_2 \neq i_3, i_3 = i_4} = \N(\N-1),
\end{equation}
since we have $\N$ choices of indices for $i_1$ and $(\N-1)$ choices for $i_3$ (for every choice of $i_1$). The general case is analogous,
\begin{equation}
	\Tr_{\VN^{\otimes k}}(X_{\pi}) = \N_{(\abs{\pi})}.
\end{equation}
We have $\N$ choices for the indices of the first block of $\pi$, $\N-1$ choices for the indices of the second block and so on.
\end{proof}

We now have the tools to prove the main proposition in this subsection.
\begin{proof}[Proof of Proposition \ref{prop: orbit orthogonality}]
Note that the inner product of two orbit basis elements of is given by
\begin{equation}
	\bra{\SPk{x}_\pi}\ket{\SPk{x}_{\pi'}} = \sum_{\gamma \in S_k} \Tr_{\VN^{\otimes k}}(D_\gamma \SPk{X}_{\pi} D_{\gamma^{-1}} \SPk{X}_{\pi'}^T).
\end{equation}
Note that
\begin{equation}
	\sum_{\gamma \in S_k} D_\gamma \SPk{X}_{\pi} D_{\gamma^{-1}} = \frac{1}{k!}\sum_{\gamma,\mu \in S_k} D_\gamma D_{\mu}X_{\pi}D_{\mu^{-1}} D_{\gamma^{-1}} = \sum_{\gamma \in S_k} D_{\gamma}X_{\pi}D_{\gamma^{-1}},
\end{equation}
and therefore
\begin{align}
	\sum_{\gamma \in S_k} \Tr_{\VN^{\otimes k}}(D_\gamma \SPk{X}_{\pi} D_{\gamma^{-1}} \SPk{X}_{\pi'}^T) &= \frac{1}{(k!)^2}\sum_{\gamma, \mu,\nu \in S_k} \Tr_{\VN^{\otimes k}}(D_\nu D_\gamma D_\mu X_{\pi} D_{\mu^{-1}}D_{\gamma^{-1}}D_{\nu^{-1}} X_{\pi'}^T) \\
	&= \sum_{\gamma \in S_k} \Tr_{\VN^{\otimes k}}(D_\gamma X_{\pi} D_{\gamma^{-1}}X_{\pi'}^T).
\end{align}
We re-write
\begin{equation}
	\sum_{\gamma \in S_k} D_\gamma X_{\pi} D_{\gamma^{-1}}  = \abs{G_{\pi}}\sum_{x_\lambda \in [x_\pi]} X_{\lambda},
\end{equation}
where the sum on the r.h.s. is over the distinct elements in the $S_k$ orbit of $x_{\pi}$. Substituting this into the trace gives
\begin{align}
	\bra{\SPk{x}_\pi}\ket{\SPk{x}_{\pi'}} = \abs{G_{\pi}}\sum_{x_\lambda \in [x_\pi]} \Tr_{\VN^{\otimes k}}(X_{\lambda}X_{\pi'}^T) &= \abs{G_{\pi}}\sum_{x_\lambda \in [x_\pi]} \N_{(\abs{\pi})} \delta_{\lambda \pi'} \\
	&= \begin{cases}
		\abs{G_{\pi}}\N_{(\abs{\pi})} \qq{if $[x_{\pi'}] = [x_\pi]$,} \\
		0 \qq{otherwise,}
	\end{cases}
\end{align}
where we used Proposition \ref{prop: simple orbit trace} in the second equality.
\end{proof}
To summarise we have found that the orbit basis is orthogonal for all $\N$. Note that the norm of $\ket{\SPk{x}_\pi}$ vanishes if $\N \leq \abs{\pi}$ since then $\N_{(\abs{\pi})} = 0$. We leave it for future work to investigate consequences of this.
\subsection{Representation basis.} \label{sec: ON basis}
In section \ref{sec: Semi-Simple Algebra Technology} we saw that $P_k(\N)$ has a basis of matrix units for $\N \geq 2k$. In this section we will see that $SP_k(\N)$ also has a basis of matrix units. The corresponding basis for $\Hilbertspace^{(k)}$ is called the representation basis. It is orthogonal and diagonalizes a set of commuting charges.

The matrix units for $SP_k(N)$ are constructed from the matrix units $Q_{\alpha \beta}^{\lambda}$ in \eqref{def: matrix unit PkN} using Branching coefficients. Branching coefficients are understood as follows.
The partition algebra $P_k(N)$ has a subalgebra (isomorphic to) $\mathbb{C}[S_k]$. For any given irreducible representation $Z_\lambda$ of $P_k(\N)$ there exists a basis where the action of $\mathbb{C}[S_k] \subset P_k(N)$ is manifest and irreducible. That is, as a representation of $S_k$
\begin{equation}
	Z_{{\lambda}} \cong \bigoplus_{\gamma \vdash k} V_{{\gamma}} \otimes M_{\lambda \rightarrow \gamma}, \label{eq: branching Pk to Sk}
\end{equation}
where $M_{\lambda \rightarrow \gamma}$ is a multiplicity space.
We take the l.h.s. to have an orthonormal basis
\begin{equation}
	E^{\lambda}_{\alpha}, \quad \alpha \in \{1, \dots \dimPk{\lambda}\}.
\end{equation}
The r.h.s. has a basis
\begin{equation}
	\begin{aligned}
			E^{\lambda, \gamma}_{p \mu}, \quad &p \in \{1, \dots, \dimSN{\gamma} \}, \\
			&\mu \in \{1,\dots, \dim(M_{\lambda \rightarrow \gamma})\}, 
		\end{aligned}
\end{equation}
where $\mu$ is a multiplicity label for the irreducible representation $V_{\gamma}$ of $S_k$ in the decomposition.
We demand that the representation of $\tau \in \mathbb{C}[S_k]$ is irreducible in this basis and that the basis is orthonormal with respect to the inner product defined by $E^{\lambda}_{\alpha}$.
\begin{definition}
The change of basis coefficients in \eqref{eq: branching Pk to Sk} are called branching coefficients. They are defined by
\begin{equation}
	E^{\lambda, \gamma}_{p \mu} = \sum_{\alpha}(B^{\lambda \rightarrow \gamma})^\alpha_{p \mu} E^{\lambda}_{\alpha}.
\end{equation}
\end{definition}

Given this definition we can define matrix units for $SP_k(\N)$
\begin{proposition}
The elements
\begin{equation}
	Q^{\lambda, \gamma}_{\mu \nu} = \sum_{\alpha, \beta, p} Q^{\lambda}_{\alpha \beta}(B^{\lambda \rightarrow \gamma})^\alpha_{p \mu}(B^{\lambda \rightarrow \gamma})^\beta_{p \nu}. \label{eq: spkn units}
\end{equation}
form matrix units for $SP_k(\N)$. The sum over $p$ in \eqref{eq: spkn units} implements the projection to $S_k$ invariants.
\end{proposition}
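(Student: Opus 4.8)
The plan is to verify directly that the elements $Q^{\lambda,\gamma}_{\mu\nu}$ defined in \eqref{eq: spkn units} satisfy the matrix unit multiplication rule inside $SP_k(\N)$, that they are $S_k$-invariant (so they genuinely lie in $SP_k(\N)$), and that they span $SP_k(\N)$ by a dimension count. The main inputs are: the matrix unit relation $Q^\lambda_{\alpha\beta}Q^{\lambda'}_{\alpha'\beta'} = \delta^{\lambda\lambda'}\delta_{\beta\alpha'}Q^\lambda_{\alpha\beta'}$ for $P_k(\N)$ (Proposition in Section~\ref{apx: subsec matrix unit PkN}), the action formula \eqref{eq: d on Q left}, and the orthonormality of the branching coefficients coming from the $S_k$-decomposition \eqref{eq: branching Pk to Sk}.

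First I would record the key property of branching coefficients: since both $E^\lambda_\alpha$ and $E^{\lambda,\gamma}_{p\mu}$ are orthonormal bases of $Z_\lambda$, the matrix $(B^{\lambda\to\gamma})^\alpha_{p\mu}$ is orthogonal, i.e.
\begin{equation}
	\sum_{\gamma, p, \mu} (B^{\lambda\to\gamma})^\alpha_{p\mu}(B^{\lambda\to\gamma})^\beta_{p\mu} = \delta^{\alpha\beta}, \qquad \sum_\alpha (B^{\lambda\to\gamma})^\alpha_{p\mu}(B^{\lambda\to\gamma'})^\alpha_{p'\mu'} = \delta^{\gamma\gamma'}\delta_{pp'}\delta_{\mu\mu'}.
\end{equation}
Then I would compute $Q^{\lambda,\gamma}_{\mu\nu}Q^{\lambda',\gamma'}_{\mu'\nu'}$ by plugging in \eqref{eq: spkn units}, using the $P_k(\N)$ matrix unit rule to collapse the product $Q^\lambda_{\alpha\beta}Q^{\lambda'}_{\alpha'\beta'}$ to $\delta^{\lambda\lambda'}\delta_{\beta\alpha'}Q^\lambda_{\alpha\beta'}$, and then using the second orthogonality relation above on the sum over $\beta=\alpha'$ and $p$ (noting the two branching factors carry the same $S_k$-index $p$ because of how $Q^\lambda_{\alpha\beta}$ appears). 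This should yield $\delta^{\lambda\lambda'}\delta^{\gamma\gamma'}\delta_{\nu\mu'}Q^{\lambda,\gamma}_{\mu\nu'}$, the desired matrix unit relation; I would state that the $\dim M_{\lambda\to\gamma} \times \dim M_{\lambda\to\gamma}$ matrix algebras thus obtained give an Artin--Wedderburn decomposition $SP_k(\N) \cong \bigoplus_{\lambda,\gamma} \End(M_{\lambda\to\gamma})$.

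Second I would check $S_k$-invariance: for $\tau \in S_k \subset P_k(\N)$, using \eqref{eq: d on Q left} twice gives $\tau Q^\lambda_{\alpha\beta}\tau^{-1} = \sum_{\sigma,\rho} D^\lambda_{\sigma\alpha}(\tau) D^\lambda_{\beta\rho}(\tau^{-1}) Q^\lambda_{\sigma\rho}$, and since $D^\lambda|_{S_k}$ decomposes into the irreducibles $V_\gamma$ via precisely the branching coefficients, contracting with $(B^{\lambda\to\gamma})^\alpha_{p\mu}(B^{\lambda\to\gamma})^\beta_{p\nu}$ and summing over $p$ kills the $\tau$-dependence by the intertwining property $\sum_\alpha D^\lambda_{\sigma\alpha}(\tau)(B^{\lambda\to\gamma})^\alpha_{p\mu} = \sum_{p'}P^\gamma_{p'p}(\tau)(B^{\lambda\to\gamma})^\sigma_{p'\mu}$ followed by orthogonality of $P^\gamma(\tau)$ over the summed index $p$. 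This shows $Q^{\lambda,\gamma}_{\mu\nu} \in SP_k(\N)$ and simultaneously explains the parenthetical remark that the $p$-sum implements the projection onto $S_k$-invariants. Finally, the dimension match: $\sum_{\lambda,\gamma}(\dim M_{\lambda\to\gamma})^2 = \dim SP_k(\N)$ follows because $SP_k(\N) \cong \End_{\SN\times S_k}(\VN^{\otimes k})$ and the double commutant of $P_k(\N)\otimes \mathbb{C}[S_k]$ acting on $\VN^{\otimes k}$ has exactly these block sizes; together with linear independence (which follows from the matrix unit relations showing the $Q^{\lambda,\gamma}_{\mu\nu}$ are nonzero orthogonal idempotent-generating families), this proves they form a basis.

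The step I expect to be the main obstacle is the precise bookkeeping in the invariance/multiplication computation showing that the summed $S_k$-index $p$ on the two branching factors really does decouple the $\tau$-dependence and the $P_k(\N)$-matrix-unit index contraction simultaneously — in particular making sure the index $p$ is the one being traced over in both $Q^\lambda_{\alpha\beta}$ appearances so that the orthogonality relations apply cleanly. If the branching coefficient conventions in the cited source differ slightly from those needed here, one may instead argue more structurally: $SP_k(\N) = e P_k(\N) e$ where $e = \frac{1}{k!}\sum_{\tau\in S_k}\tau$ is not quite right (that is the corner algebra, not the commutant), so the cleaner route is to observe $SP_k(\N) = \{d : \tau d\tau^{-1}=d\}$ is the fixed subalgebra, decompose each block $Z_\lambda\otimes Z_\lambda$ of $P_k(\N)$ under the conjugation $S_k$-action, and identify the invariants block-by-block with $\End(M_{\lambda\to\gamma})$ via Schur's lemma applied inside $V_\gamma\otimes V_\gamma^* \otimes M_{\lambda\to\gamma}\otimes M_{\lambda\to\gamma}^*$ — this is essentially the computation above, organized representation-theoretically rather than index-by-index.
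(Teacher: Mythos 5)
Your proposal is correct and follows essentially the same route as the paper: the core is the same computation, collapsing $Q^{\lambda}_{\alpha\beta}Q^{\lambda'}_{\alpha'\beta'}$ via the $P_k(\N)$ matrix unit relation and then using orthogonality of the branching coefficients $\sum_{\alpha}(B^{\lambda\to\gamma})^\alpha_{p\mu}(B^{\lambda\to\gamma'})^\alpha_{q\nu}=\delta^{\gamma\gamma'}\delta_{pq}\delta_{\mu\nu}$ to obtain $\delta^{\lambda\lambda'}\delta^{\gamma\gamma'}\delta_{\nu\mu'}Q^{\lambda,\gamma}_{\mu\nu'}$. Your additional explicit checks of $S_k$-invariance (via \eqref{eq: d on Q left} and equivariance of the branching coefficients) and the dimension/spanning count are sound and simply make explicit what the paper handles in its subsequent Artin--Wedderburn discussion of $SP_k(\N)\cong\bigoplus_{\lambda,\gamma}M_{\lambda\to\gamma}\otimes M_{\lambda\to\gamma}$.
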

\begin{proof}
The matrix unit property
\begin{equation}
	Q^{\lambda, \gamma}_{\mu \nu}Q^{\lambda', \gamma'}_{\mu' \nu'} = \delta^{\lambda \lambda'}\delta^{\gamma \gamma'}\delta_{\nu \mu'} Q^{\lambda, \gamma}_{\mu \nu'},
\end{equation}
of the $SP_k(N)$ basis follows from that of the $P_k(N)$ units together with orthogonality of $E^{\lambda, \gamma}_{p \mu}$,
\begin{equation}
	\begin{aligned}
			Q^{\lambda, \gamma}_{\mu \nu}Q^{\lambda', \gamma'}_{\mu' \nu'}&=  \sum_{\substack{\alpha, \beta, p \\ \alpha', \beta', p'}}(B^{\lambda \rightarrow \gamma})^\alpha_{p \mu}(B^{\lambda \rightarrow \gamma})^\beta_{p \nu}(B^{\lambda' \rightarrow \gamma'})^{\alpha'}_{p' \mu'}(B^{\lambda' \rightarrow \gamma'})^{\beta'}_{p' \nu'} Q^{\lambda}_{\alpha \beta} Q^{\lambda'}_{\alpha' \beta'} \\
			&=\sum_{\substack{\alpha, \beta, p \\ \alpha', \beta', p'}}(B^{\lambda \rightarrow \gamma})^\alpha_{p \mu}(B^{\lambda \rightarrow \gamma})^\beta_{p \nu} (B^{\lambda' \rightarrow \gamma'})^{\alpha'}_{p' \mu'}(B^{\lambda' \rightarrow \gamma'})^{\beta'}_{p' \nu'} \delta^{\lambda \lambda'} \delta_{\beta \alpha'} Q_{\alpha \beta'}^{\lambda} \\
			&=\sum_{\substack{\alpha, p \\ \beta', p'}}(B^{\lambda \rightarrow \gamma})^\alpha_{p \mu}(B^{\lambda' \rightarrow \gamma'})^{\beta'}_{p' \nu'}\delta^{\lambda \lambda'}\delta^{\gamma \gamma'}\delta_{pp'}\delta_{\nu \mu'}Q_{\alpha \beta'}^{\lambda}  \\
			&=\sum_{\substack{\alpha,\beta',p}}(B^{\lambda \rightarrow \gamma})^\alpha_{p \mu}(B^{\lambda' \rightarrow \gamma'})^{\beta'}_{p \nu'}\delta^{\lambda \lambda'}\delta^{\gamma \gamma'}\delta_{\nu \mu'}Q_{\alpha \beta'}^{\lambda} \\
			&=\delta^{\lambda \lambda'}\delta^{\gamma \gamma'}\delta_{\nu \mu'} Q^{\lambda, \gamma}_{\mu \nu'}.
		\end{aligned} \label{eq: SPkN matrix unit product}
\end{equation}
Going from the first line to the second we used the matrix unit property of $Q^{\lambda_1}_{\alpha \beta}$. Going from the second line to the third line uses orthogonality
\begin{equation}
	\sum_{\alpha} (B^{\lambda \rightarrow \gamma})^\alpha_{p \mu} (B^{\lambda \rightarrow \gamma'})^\alpha_{q \nu} = \delta^{\gamma \gamma'} \delta_{pq} \delta_{\mu \nu}.
\end{equation}
\end{proof}

This proposition should be understood as the Artin-Wedderburn decomposition of $SP_k(\N)$. Recall that
\begin{equation}
	P_k(\N)  \cong \bigoplus_{\lambda \in \Lambda_{k,\N}} Z_\lambda \otimes Z_{\lambda},
\end{equation}
as a left and right representation of $P_k(\N)$. As a left and right representation of $S_k$ we have (using \eqref{eq: branching Pk to Sk})
\begin{equation}
	\bigoplus_{\lambda \in \Lambda_{k,\N}} \qty(\bigoplus_{\gamma \vdash k} V_\gamma \otimes M_{\lambda \rightarrow \gamma}) \otimes \qty(\bigoplus_{\gamma' \vdash k} V_{\gamma'} \otimes M_{\lambda \rightarrow \gamma'}).
\end{equation}
Projecting to the $S_k$ invariant part gives
\begin{equation}
	SP_k(\N) \cong \bigoplus_{\lambda \in \Lambda_{k,\N}} \bigoplus_{\gamma \vdash k} M_{\lambda \rightarrow \gamma} \otimes M_{\lambda \rightarrow \gamma}.
\end{equation}
These steps are precisely mimicked in \eqref{eq: spkn units} -- we branch to $S_k$ on both sides and project to the $S_k$ invariants by contracting the $S_k$ representation indices. The above equation is an Artin-Wedderburn decomposition of
\begin{equation}
	 SP_k(\N) \cong \End_{S_N \times S_k}(\VN^{\otimes k}).
\end{equation}

The matrix units define states in $\Hilbertspace^{(k)}$
\begin{definition}
Let $\N \geq 2k$ and define the representation states
\begin{equation}
	\ket*{Q^{\lambda, \gamma}_{\mu \nu}} = \Tr_{\VN^{\otimes k}}(Q^{\lambda, \gamma}_{\mu \nu} (a^{\dagger})^{\otimes k})\ket{0}. \label{eq: rep basis states}
\end{equation}
They form a basis for $\Hilbertspace^{(k)}$.
\end{definition}
\begin{proposition}
	The representation basis is orthogonal (for $\N \geq 2k$)
	\begin{equation}
		\bra*{Q^{\lambda, \gamma}_{\mu \nu}}\ket*{Q^{\lambda', \gamma'}_{\mu' \nu'}} \propto \delta^{\lambda \lambda'}\delta^{\gamma \gamma'}\delta_{\mu\mu'}\delta_{\nu \nu'}.
	\end{equation}
\end{proposition}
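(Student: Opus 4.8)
The plan is to reduce the orthogonality of the representation states $\ket*{Q^{\lambda,\gamma}_{\mu\nu}}$ to the orthogonality of matrix elements of the $P_k(\N)$ units already established in Proposition \ref{prop: orthogonality of matrix elements}, together with the matrix-unit product rule for $SP_k(\N)$ proven just above. First I would write out the inner product using the trace-pairing formula from Proposition \ref{prop: trace in diagram basis}'s neighbourhood, namely
\begin{equation}
	\bra*{Q^{\lambda,\gamma}_{\mu\nu}}\ket*{Q^{\lambda',\gamma'}_{\mu'\nu'}} = \sum_{\tau \in S_k} \Tr_{\VN^{\otimes k}}\big( (Q^{\lambda,\gamma}_{\mu\nu})^\dagger \, D_\tau \, Q^{\lambda',\gamma'}_{\mu'\nu'} \, D_{\tau^{-1}} \big).
\end{equation}
Since the $Q^{\lambda,\gamma}_{\mu\nu}$ lie in $SP_k(\N)$, i.e. they commute with every $D_\tau$ for $\tau \in S_k$, the conjugation by $D_\tau$ acts trivially and the sum over $\tau$ collapses to a factor of $k!$ times $\Tr_{\VN^{\otimes k}}\big((Q^{\lambda,\gamma}_{\mu\nu})^\dagger Q^{\lambda',\gamma'}_{\mu'\nu'}\big)$. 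I also need $(Q^{\lambda,\gamma}_{\mu\nu})^\dagger = (Q^{\lambda,\gamma}_{\mu\nu})^T$ (the tensors are real, using real Clebsch-Gordan and real branching coefficients) and then the transpose-swaps-indices property; from the definition \eqref{eq: spkn units} together with \eqref{eq: orthogonal reps of PkN} one gets $(Q^{\lambda,\gamma}_{\mu\nu})^T = Q^{\lambda,\gamma}_{\nu\mu}$, matching the symmetric behaviour of the underlying $P_k(\N)$ units.

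Next I would compute $\Tr_{\VN^{\otimes k}}(Q^{\lambda,\gamma}_{\nu\mu} Q^{\lambda',\gamma'}_{\mu'\nu'})$. The cleanest route is to use the $SP_k(\N)$ matrix-unit product relation $Q^{\lambda,\gamma}_{\nu\mu}Q^{\lambda',\gamma'}_{\mu'\nu'} = \delta^{\lambda\lambda'}\delta^{\gamma\gamma'}\delta_{\mu\mu'}Q^{\lambda,\gamma}_{\nu\nu'}$ (from \eqref{eq: SPkN matrix unit product}), so the trace reduces to $\delta^{\lambda\lambda'}\delta^{\gamma\gamma'}\delta_{\mu\mu'}\Tr_{\VN^{\otimes k}}(Q^{\lambda,\gamma}_{\nu\nu'})$. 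Then I would show $\Tr_{\VN^{\otimes k}}(Q^{\lambda,\gamma}_{\nu\nu'}) \propto \delta_{\nu\nu'}$: expanding via \eqref{eq: spkn units} gives $\sum_{\alpha,\beta,p}(B^{\lambda\to\gamma})^\alpha_{p\nu}(B^{\lambda\to\gamma})^\beta_{p\nu'}\Tr_{\VN^{\otimes k}}(Q^\lambda_{\alpha\beta})$, and $\Tr_{\VN^{\otimes k}}(Q^\lambda_{\alpha\beta}) \propto \delta_{\alpha\beta}$ follows from Proposition \ref{prop: orthogonality of matrix elements} applied with the trivial representation (or directly: $\Tr_{\VN^{\otimes k}}(Q^\lambda_{\alpha\beta}) = \dimSN{\lambda}\sum_i D^\lambda_{\alpha\beta}(b_i^\dagger)\Tr_{\VN^{\otimes k}}(B_i)$, and $\Tr_{\VN^{\otimes k}}(B_i) = \tr(\Omega b_i)$ decomposes over irreducibles by \eqref{eq: reg trace decomp}, picking out $\chr^\lambda$, so one ends with something proportional to $\sum_i D^\lambda_{\alpha\beta}(b_i^\dagger) D^\lambda(b_i)$-type sums that reduce to $\delta_{\alpha\beta}$). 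Substituting back, $\sum_{\alpha,p}(B^{\lambda\to\gamma})^\alpha_{p\nu}(B^{\lambda\to\gamma})^\alpha_{p\nu'} = \delta_{\nu\nu'}$ by orthonormality of the branching coefficients (the same orthogonality relation invoked in the proof of the $SP_k(\N)$ matrix-unit property). This yields $\bra*{Q^{\lambda,\gamma}_{\mu\nu}}\ket*{Q^{\lambda',\gamma'}_{\mu'\nu'}} \propto \delta^{\lambda\lambda'}\delta^{\gamma\gamma'}\delta_{\mu\mu'}\delta_{\nu\nu'}$, which is exactly the claim.

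The main obstacle I anticipate is bookkeeping rather than conceptual: carefully justifying each index contraction, in particular (i) that the tensors are genuinely real so that $\dagger$ and $T$ coincide, (ii) the precise transpose behaviour $(Q^{\lambda,\gamma}_{\mu\nu})^T = Q^{\lambda,\gamma}_{\nu\mu}$ which requires knowing how transposition interacts with the branching coefficients (one should check the branching coefficients can be taken real and that transposition of the $P_k$ unit swaps $\alpha\leftrightarrow\beta$ without touching $p,\mu,\nu$), and (iii) nailing down the overall proportionality constant if one wants more than $\propto$ — though the statement only asks for proportionality, so I would leave the constant implicit, noting it is a positive multiple of $k!$, $\dimSN{\lambda}$, and branching normalizations and in particular is nonzero for $\N \geq 2k$. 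A secondary subtlety is making sure the collapse of the $\sum_{\tau\in S_k}$ is legitimate: this uses precisely the defining property \eqref{eq: def SPk} that elements of $SP_k(\N)$ are $S_k$-central inside $P_k(\N)$, and that the corresponding linear maps $\SPk{Q}$ commute with all $D_\tau$, so $D_\tau Q^{\lambda,\gamma}_{\mu\nu}D_{\tau^{-1}} = Q^{\lambda,\gamma}_{\mu\nu}$ termwise.
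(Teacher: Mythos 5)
Your proposal is correct and follows essentially the same route as the paper: collapse the $\sum_{\tau\in S_k}$ using $S_k$-centrality of $SP_k(\N)$ elements, apply $(Q^{\lambda,\gamma}_{\mu\nu})^T = Q^{\lambda,\gamma}_{\nu\mu}$ together with the $SP_k(\N)$ matrix-unit product rule, and reduce everything to the trace of a single unit. The only minor divergence is the last step, where the paper obtains $\Tr_{\VN^{\otimes k}}(Q^{\lambda,\gamma}_{\mu\mu'}) \propto \delta_{\mu\mu'}$ by factorizing $Q^{\lambda,\gamma}_{\mu\mu'} = Q^{\lambda,\gamma}_{\mu 1}Q^{\lambda,\gamma}_{1\mu'}$ and using cyclicity of the trace, while you expand in branching coefficients with $\Tr_{\VN^{\otimes k}}(Q^{\lambda}_{\alpha\beta}) = \dimSN{\lambda}\,\delta_{\alpha\beta}$ --- exactly the computation the paper performs immediately after the proof to fix the normalization constant (note your branching sum actually yields $\dimSN{\gamma}\,\delta_{\nu\nu'}$ rather than $\delta_{\nu\nu'}$, which only affects the irrelevant proportionality constant).
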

\begin{proof}
The proof goes as follows
\begin{equation}
	\begin{aligned}
		\bra*{Q^{\lambda, \gamma}_{\mu \nu}}\ket*{Q^{\lambda', \gamma'}_{\mu' \nu'}} &= \sum_{\tau \in S_k} \Tr_{\VN^{\otimes k}}(Q^{\lambda, \gamma}_{\mu \nu}\tau \qty(Q^{\lambda', \gamma'}_{\mu' \nu'})^T \tau^{-1} ) \\
		&= \sum_{\tau \in S_k} \Tr_{\VN^{\otimes k}}(Q^{\lambda, \gamma}_{\mu \nu}\tau \qty(Q^{\lambda', \gamma'}_{\mu' \nu'})^T \tau^{-1}) \\ 
		&= k! \Tr_{V_N^{\otimes k}}(Q^{\lambda, \gamma}_{\mu \nu} \qty(Q^{\lambda', \gamma'}_{\mu' \nu'})^T ) \\
		&= k! \delta^{\lambda \lambda'} \delta_{\gamma \gamma'} \delta_{{\nu} {\nu'}} \Tr_{\VN^{\otimes k}}(Q^{\lambda, \gamma}_{\mu \mu'} ).
	\end{aligned}
\end{equation}
In the second equality we used $\qty(Q^{\lambda', \gamma'}_{\mu' \nu'})^T = Q^{\lambda', \gamma'}_{\nu' \mu'}$ which follows from equation \eqref{eq: transpose diagram is transpose matrix element}. Note that
\begin{equation}
	\begin{aligned}
		\Tr_{\VN^{\otimes k}}(Q^{\lambda', \gamma'}_{\mu \mu'})
		&= \Tr_{\VN^{\otimes k}}(Q^{\lambda', \gamma'}_{\mu 1}Q^{\lambda', \gamma'}_{1 \mu'}) \\
		&=  \Tr_{\VN^{\otimes k}}(Q^{\lambda', \gamma'}_{1 \mu'} Q^{\lambda', \gamma'}_{\mu 1}) \\
		&= \delta_{\mu \mu'}\Tr_{\VN^{\otimes k}}(Q^{\lambda', \gamma'}_{1 1})
	\end{aligned}
\end{equation}
such that the normalization only depends on irreducible representations $\lambda, \gamma$, which proves orthogonality.
\end{proof}
The normalization constant is readily computed as follows. From Proposition \ref{prop: orthogonality of matrix elements}
\begin{equation}
	\chr^{{\lambda'}}(Q_{\alpha \beta}^{\lambda}) =\delta^{\lambda \lambda'} \delta_{\alpha \beta}.
\end{equation}
We use this fact together with Schur-Weyl duality to compute $\Tr_{\VN^{\otimes k}}(Q^{\lambda}_{\alpha \beta})$
\begin{equation}
	\Tr_{\VN^{\otimes k}}(Q^{\lambda}_{\alpha \beta}) = \sum_{\lambda' \in \Lambda_{k,\N}} \dimSN{\lambda'}\chr^{\lambda'}(Q^{\lambda}_{\alpha \beta}) = \sum_{\lambda' \in \Lambda_{k,\N}} \dimSN{\lambda'} \delta^{\lambda \lambda'}\delta_{\alpha \beta} = \dimSN{\lambda} \delta_{\alpha \beta}. \label{eq: trace of matrix units}
\end{equation}
Consequently,
\begin{equation}
	\begin{aligned}
			\Tr_{\VN^{\otimes k}}(Q^{\lambda, \gamma}_{\mu \nu}) 
			&= \sum_{\alpha, \beta, p} (B^{\lambda \rightarrow \gamma})^\alpha_{p \mu}(B^{\lambda \rightarrow \gamma})^\beta_{p \nu}\delta_{\alpha \beta}\dimSN{\lambda} \\
			&=\sum_{p} \delta_{pp}\delta_{\mu \nu}\dimSN{\lambda} \\
			&= \delta_{\mu \nu}\dimSN{\lambda}\dimSN{\gamma},
	\end{aligned} \label{eq: normalization constant as dimensions}
\end{equation}
where the last two equalities hold if and only if the branching coefficients are non-zero.

In \cite{Barnes:2022qli} we gave explicit constructions of matrix units for $SP_1(\N), SP_2(\N)$ and $SP_3(\N)$ using sets of commuting elements, inspired by the method used to construct matrix units for $P_k(\N)$ in Section \ref{sec: construction of units}. This required careful choices of commuting elements for each $k=1,2,3$ to ensure that all indices are distinguished by the eigenvalues of the corresponding linear operators. Here, we will describe a procedure that works for general $k$ given the matrix units for $P_k(\N)$. The downside is that this procedure only works for a subset of matrix units with $\lambda$ such that $\abs{\lambda^\#} = k$.

The procedure relies on the following result.
\begin{theorem}
Let  $Z_\lambda$ be an irreducible representation of $P_k(\N)$ with $\abs{\lambda^\#} = k$. Then the restriction to $\mathbb{C}S_k$ is multiplicity free and
\begin{equation}
	\Res{P_k(\N)}{\mathbb{C}S_k}{Z_\lambda} = V_{\lambda^{\#}}.
\end{equation}
\end{theorem}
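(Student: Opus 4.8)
The plan is to exploit the chain of partition algebras and the branching rules established earlier in the chapter. Recall that $P_k(\N)$ contains $\mathbb{C}S_k$ as a subalgebra (the span of permutation diagrams), and that as a $P_k(\N) \times P_k(\N)$-bimodule the algebra decomposes via Schur--Weyl duality as $\VN^{\otimes k} \cong \bigoplus_{\lambda \in \Lambda_{k,\N}} V_\lambda \otimes Z_\lambda$. The key numerical fact is that when $\abs{\lambda^\#} = k$, the multiplicity $\dim Z_\lambda = m^\lambda_{k,\N}$ equals the number of vacillating tableaux of shape $\lambda$ and length $k$ (Theorem \ref{thm: multi is vac tab}); for such $\lambda$ the Stirling-number analysis in the decomposition theorem forces $S(k,t)$ to contribute only at $t = \abs{\lambda^\#} = k$, so that $m^\lambda_{k,\N} = S(k,k) f^{\lambda/[\N-k]} = f^{\lambda/[\N-k]}$. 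Since $[\N-k]$ is precisely the first row of $\lambda$ (because $\lambda = [\N-k, \lambda^\#]$ when $\lambda^\#$ has $k$ boxes and we're in the stable range), the skew shape $\lambda/[\N-k]$ is just the shape $\lambda^\#$ sitting below the first row, and standard tableaux of that skew shape are in bijection with standard tableaux of shape $\lambda^\#$. Hence $\dim Z_\lambda = f^{\lambda^\#} = \dim V_{\lambda^\#}$.

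**Key steps.** First I would verify the dimension count above carefully: that $\abs{\lambda^\#} = k$ together with $\lambda \in \Lambda_{k,\N}$ (i.e. $\N \geq 2k$, the stable limit being implicitly assumed) forces $\lambda = [\N-k,\lambda^\#]$ and $\dim Z_\lambda = f^{\lambda^\#} = \dim V_{\lambda^\#}$. Second, I would show the restriction $\Res{P_k(\N)}{\mathbb{C}S_k}{Z_\lambda}$ is nonzero and contains $V_{\lambda^\#}$: this follows because $\dim Z_\lambda = \dim V_{\lambda^\#}$ and the restriction is a genuine (possibly reducible) $S_k$-representation, so if it contains $V_{\lambda^\#}$ at all, a dimension comparison forces it to equal $V_{\lambda^\#}$ with multiplicity one, which gives the multiplicity-free claim. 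The remaining content is therefore to prove that $V_{\lambda^\#}$ actually appears. For this I would use the branching decomposition $Z_\lambda \cong \bigoplus_{\gamma \vdash k} V_\gamma \otimes M_{\lambda \to \gamma}$ from the representation basis section, together with Frobenius reciprocity / the inductive structure: the top box of $\lambda$ beyond the first row being reached by exactly $k$ steps of remove-then-add in the vacillating tableau means every vacillating tableau of shape $\lambda$ ends with a genuine standard tableau of $\lambda^\#$, and the $S_k$-action permuting the $k$ "legs" of the partition diagram acts on these exactly as $S_k$ acts on standard tableaux of $\lambda^\#$. Concretely, one identifies the inductive basis vectors $E^\lambda_{\mathfrak{v}}$ with pairs (standard tableau of $\lambda^\#$, trivial data), and checks that the permutation diagrams in $P_k(\N)$ act through $D^{\lambda^\#}$ of $S_k$.

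**Main obstacle.** The hard part will be the last step: showing that the $\mathbb{C}S_k$-action on $Z_\lambda$ is irreducible of type $\lambda^\#$, rather than merely matching dimensions. One clean route is to note that the "middle-row" propagating strands of a partition diagram in $P_k(\N)$ acting on a state built from $k$ creation operators with all-distinct "color" indices (which is forced when $\abs{\lambda^\#}=k$, since the irreducible $Z_\lambda$ then sits in the top cell of the induction-restriction lattice) behave exactly like a permutation of those $k$ strands — any non-propagating (cup/cap) part of a diagram kills such a state or drops it out of the $\abs{\lambda^\#}=k$ sector. Thus on $Z_\lambda$ the whole partition algebra factors through $\mathbb{C}S_k$, and since $Z_\lambda$ is $P_k(\N)$-irreducible it is automatically $S_k$-irreducible; matching dimension then pins down the type as $\lambda^\#$. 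Making the statement "the partition algebra acts on $Z_\lambda$ through its quotient $\mathbb{C}S_k$" precise — e.g. by showing the kernel of $P_k(\N) \to \End(Z_\lambda)$ contains every diagram with a non-propagating block — is the real work, and I would likely cite the explicit action of the orbit/diagram basis on $\Hilbertspace^{(k)}$ together with Proposition \ref{prop: simple orbit trace} to control which diagrams survive in the top sector.
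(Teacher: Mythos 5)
The paper itself gives no internal proof here (it simply cites Remark 4.18 of \cite{Halverson2018}), so there is nothing to match your argument against; your skeleton — the dimension count $\dim Z_\lambda = f^{\lambda/[\N-k]} = f^{\lambda^\#}$, plus the claim that for $\abs{\lambda^\#}=k$ the whole of $P_k(\N)$ acts on $Z_\lambda$ through its quotient by the span of diagrams with fewer than $k$ propagating blocks, which is $\mathbb{C}(S_k)$ — is indeed the standard route, and the Stirling-number computation in your first step is correct. However, two steps of your proposal have genuine gaps. First, the annihilation claim (every diagram with a non-propagating block kills $Z_\lambda$) is the heart of the theorem, and the tools you propose to invoke are the wrong ones: the oscillator states with distinct indices and Proposition \ref{prop: simple orbit trace} live in the symmetrized setting $SP_k(\N)$, $\Hilbertspace^{(k)}$, and control traces and inner products there; they say nothing about the left action of a single diagram on the abstract irreducible module $Z_\lambda$. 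A fix that stays inside the paper's toolkit: if $d_\pi$ has $p<k$ propagating blocks, then the image of $D_\pi$ in $\VN^{\otimes k}$ is spanned by vectors in which each non-propagating top block contributes the fixed $\SN$-invariant vector $\sum_i e_i\otimes\dots\otimes e_i$ and only $p$ indices run freely, so the image is an $\SN$-equivariant quotient of $\VN^{\otimes p}$; since $m^\lambda_{p,\N}=0$ whenever $\abs{\lambda^\#}=k>p$ (noted in the paper as an Example after the decomposition theorem), this image contains no copy of $V_\lambda$, while $D_\pi$ preserves the isotypic component $V_\lambda\otimes Z_\lambda$ because it commutes with $\SN$; hence $D_\pi$ vanishes on it and $d_\pi$ acts as zero on $Z_\lambda$. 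This makes $Z_\lambda$ an irreducible module over the quotient $\mathbb{C}(S_k)$, so its restriction is automatically irreducible and in particular multiplicity free.

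Second, your statement that ``matching dimension then pins down the type as $\lambda^\#$'' is not valid: a dimension does not determine an irreducible representation of $S_k$ (conjugate partitions give equal dimensions, e.g. $[3,1]$ and $[2,1,1]$ for $k=4$), so after the quotient argument you only know $\Res{P_k(\N)}{\mathbb{C}S_k}{Z_\lambda}\cong V_\gamma$ for some $\gamma\vdash k$ with $f^\gamma=f^{\lambda^\#}$. The identification $\gamma=\lambda^\#$ is real work: either prove your asserted claim that the permutation diagrams act on the inductive (vacillating-tableau) basis exactly as $S_k$ acts on standard tableaux of $\lambda^\#$ — which is currently only a gesture — or compute the $\SN\times S_k$ multiplicity of $V_\lambda\otimes V_\gamma$ in $\VN^{\otimes k}$ for $\abs{\lambda^\#}=k$ (by a character inner product, or by induction on $k$ using the restriction rules \eqref{eq: RES SN}, \eqref{eq: IND SN}) and show it equals $\delta_{\gamma\lambda^{\#}}$; the $k=2$ case is visible in the paper's own decomposition, where $V_{[\N-2,2]}$ pairs with the trivial and $V_{[\N-2,1,1]}$ with the sign representation of $S_2$. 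Without one of these arguments the sign/conjugate ambiguity is unresolved and the proof is incomplete.
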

\begin{proof}
	See \cite[Remark 4.18]{Halverson2018}.
\end{proof}
This implies that there exists a basis for $Z_\lambda$ (given in \cite{Halverson2018} in terms of set partition tableaux) where the branching coefficients are trivial
\begin{equation}
	(B^{\lambda \rightarrow \lambda^{\#}})^\alpha_{p} = \delta^\alpha_p.
\end{equation}
Consequently
\begin{equation}
	Q^{\lambda, \lambda^{\#}} = \sum_{\alpha=1}^{\dimPk{\lambda}}Q_{\alpha \alpha}^{\lambda}.
\end{equation}
This allows us to leverage the construction of $P_k(\N)$ matrix units in Section \ref{sec: construction of units} to immediately find the multiplicity free matrix units of $SP_k(\N)$.

\section{Algebraic Hamiltonians on algebraic states}\label{sec: algebraic hammy}
In this section we will leverage the results in the previous section to construct Hamiltonians that act algebraically through diagram concatenation on the invariant states. These operators will be labelled by elements of the partition algebra, and in the subspace $\Hilbertspace \subset \mathcal{H}$, the matrix units form exact eigenvectors. We discuss some challenges in constructing Hamiltonians that distinguish all the labels of $SP_k(\N)$ matrix units.

To motivate the next definition, consider the following operator
\begin{equation}
	\mathcal{O} = \frac{1}{2}(a^\dagger)^{i^{}_{1'} }_{i_2^{}}(a^\dagger)^{i^{}_{2'}}_{i_1^{}}a^{i_1}_{i^{}_{1'}}a^{i_2}_{i^{}_{2'}}.
\end{equation}
The following proposition gives the action of this operator on $\mathcal{H}$.
\begin{proposition}\label{prop: action of O}
	Let $\mathcal{O}$ be the operator defined in the previous equation and $\ket{T} \in \mathcal{H}^{(r)}$ a degree $r$ state as in Definition \ref{def: tensor state}, then
	\begin{equation}
		\mathcal{O}\ket{T} = \begin{cases}
			\sum_{1 \leq i < j \leq r} \ket{D_{(ij)} T} &\qq{for $r \geq 2$}, \\
			0 &\qq{otherwise,}
		\end{cases}
	\end{equation}
	where $D_{(ij)}$ acts on $\VN^{\otimes r}$ by permuting tensor factors $i$ and $j$.
\end{proposition}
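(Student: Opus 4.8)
The plan is to compute the action of $\mathcal{O}$ directly on a degree-$r$ state $\ket{T} = \Tr_{\VN^{\otimes r}}(T(a^\dagger)^{\otimes r})\ket{0}$ by normal-ordering, moving the two annihilation operators $a^{i_1}_{i_{1'}}a^{i_2}_{i_{2'}}$ through the $r$ creation operators sitting inside $\ket{T}$. First I would write $\ket{T}$ out in index form as
\begin{equation}
	\ket{T} = \sum T^{j_{1'}\dots j_{r'}}_{j_1\dots j_r}(a^\dagger)^{j_{1'}}_{j_1}\cdots (a^\dagger)^{j_{r'}}_{j_r}\ket{0},
\end{equation}
and then use the commutator $[a^i_j,(a^\dagger)^l_k]=\delta^i_k\delta^l_j$ repeatedly. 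Since $a^{i_2}_{i_{2'}}$ and $a^{i_1}_{i_{1'}}$ each annihilate $\ket{0}$, the surviving terms come from choosing an ordered pair of distinct creation-operator slots $(b,c)$ out of $\{1,\dots,r\}$ for $a^{i_1}_{i_{1'}}$ and $a^{i_2}_{i_{2'}}$ to contract with; this is where the sum $\sum_{1\le i<j\le r}$ (together with the factor $\tfrac12$ cancelling the over-counting of unordered pairs) will emerge, and why the result vanishes for $r<2$.

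Next I would reinsert the two new creation operators $(a^\dagger)^{i_{1'}}_{i_2}(a^\dagger)^{i_{2'}}_{i_1}$ carried by $\mathcal{O}$ and track the index flow: contracting $a^{i_1}_{i_{1'}}$ into slot $b$ forces $j_{b'}=i_{1'}$, $j_b=i_1$, and contracting $a^{i_2}_{i_{2'}}$ into slot $c$ forces $j_{c'}=i_{2'}$, $j_c=i_2$. After summing over the internal indices $i_1,i_2,i_{1'},i_{2'}$, the effect on the tensor $T$ is precisely to swap the upper indices $j_{b'}\leftrightarrow j_{c'}$ and the lower indices $j_b\leftrightarrow j_c$ of the $b$-th and $c$-th index pairs — equivalently, to replace $T$ by $D_{(bc)}TD_{(bc)^{-1}}$; but since the new creation operators are reattached in the original slot positions (recall creation operators commute, so the ordering of the product of $a^\dagger$'s is immaterial), one checks this conjugation collapses to left multiplication, giving the state $\ket{D_{(bc)}T}$. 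Summing over all $\binom{r}{2}$ unordered pairs $\{i,j\}$ yields $\sum_{1\le i<j\le r}\ket{D_{(ij)}T}$.

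The main obstacle is the bookkeeping in the second step: making sure that the conjugation $T\mapsto D_{(bc)}TD_{(bc)^{-1}}$ really reduces to $T\mapsto D_{(bc)}T$ at the level of \emph{states}. This uses the redundancy $\ket{T}=\ket{D_\tau T D_{\tau^{-1}}}$ for $\tau\in S_k$ noted earlier in the excerpt (from the bosonic/commutative nature of the creation operators), applied with $\tau=(bc)$: so $\ket{D_{(bc)}TD_{(bc)^{-1}}}=\ket{D_{(bc)}T}$. I would also double-check the combinatorial factor — there are $2$ ordered contractions $(b,c)$ and $(c,b)$ giving the transpositions $(bc)$ and $(cb)=(bc)$, which together with the $\tfrac12$ in $\mathcal{O}$ produce exactly one copy of each $D_{(ij)}$ with $i<j$ — and confirm the degree is preserved (two $a^\dagger$ in, two $a$ out), so $\mathcal{O}:\mathcal{H}^{(r)}\to\mathcal{H}^{(r)}$, with the $r=0,1$ cases vanishing because no pair of distinct slots exists.
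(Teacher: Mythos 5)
Your overall strategy (direct normal-ordering of the two annihilation operators against the $r$ creation operators, then counting ordered contractions against the factor $\tfrac12$) is the same as the paper's, and your combinatorial bookkeeping of the $\binom{r}{2}$ pairs and the vanishing for $r<2$ is fine. However, the central index-tracking step is wrong as stated. Contracting $a^{i_1}_{i_{1'}}$ and $a^{i_2}_{i_{2'}}$ into slots $b,c$ forces $j_{b'}=i_{1'},\,j_b=i_1,\,j_{c'}=i_{2'},\,j_c=i_2$, so the reinserted operators are $(a^\dagger)^{i_{1'}}_{i_2}(a^\dagger)^{i_{2'}}_{i_1}=(a^\dagger)^{j_{b'}}_{j_c}(a^\dagger)^{j_{c'}}_{j_b}$: only \emph{one} set of index lines (the lower ones) is crossed, because $\mathcal{O}$'s creation operators have their lower indices exchanged relative to its annihilation operators. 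This is a one-sided permutation of the tensor, $T\mapsto D_{(bc)}T$ (equivalently $T\mapsto TD_{(bc)}$ after relabelling dummy indices), not the conjugation $T\mapsto D_{(bc)}TD_{(bc)^{-1}}$ you claim. If the action really swapped both upper and lower indices it would be conjugation, and then by the bosonic redundancy the state would be unchanged, so $\mathcal{O}$ would act as $\binom{r}{2}\cdot\mathrm{id}$ on $\Hilbertspace^{(r)}$ — contradicting the proposition (and the later use of $\mathcal{O}$ acting by non-constant normalized characters).

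Your attempted repair is also invalid: the redundancy $\ket{T'}=\ket{D_\tau T' D_{\tau^{-1}}}$ applied with $\tau=(bc)$ gives $\ket{D_{(bc)}TD_{(bc)^{-1}}}=\ket{T}$, \emph{not} $\ket{D_{(bc)}TD_{(bc)^{-1}}}=\ket{D_{(bc)}T}$; the two states $\ket{T}$ and $\ket{D_{(bc)}T}$ are in general distinct, so no appeal to that redundancy can convert a conjugation into a one-sided multiplication. The correct route (which is what the paper does, organizing the contractions as a sum over $\gamma\in S_r$ with weight $1/(r-2)!$ and then using the stabilizer $S_2\times S_{r-2}$ of the transposition) is to note that the crossed lower lines directly produce $\ket{D_{(bc)}T}$, with the legitimate use of the redundancy being only the harmless identity $\ket{D_{(bc)}T}=\ket{TD_{(bc)}}$. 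With the index flow corrected in this way, the rest of your argument goes through.
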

\begin{proof}
	It is clear that $\mathcal{O}$ vanishes on states of degree $r < 2$. Thus, we now assume $r \geq 2$, for which the proposition follows from a straight-forward computation. First note that
	\begin{align}
		&a^{i_1}_{i_{1'}^{}}a^{i_2}_{i_{2'}^{}}(a^\dagger)^{l_1}_{k_1} \dots (a^\dagger)^{l_r}_{k_r}\ket{0} \\
		& = \frac{1}{(r-2)!}\sum_{\gamma \in S_r} \comm{a^{i_1}_{i_{1'}^{}}}{(a^\dagger)^{l_{(1)\gamma}}_{k_{(1)\gamma}}}\comm{a^{i_2}_{i_{2'}^{}}}{(a^\dagger)^{l_{(2)\gamma}}_{k_{(2)\gamma}}}(a^\dagger)^{l_{(3)\gamma}}_{k_{(3)\gamma}} {\dots} (a^\dagger)^{l_{(r)\gamma}}_{k_{(r)\gamma}}\ket{0}.
	\end{align}
	Therefore
	\begin{equation}
		\mathcal{O}(a^\dagger)^{l_1}_{k_1} \dots (a^\dagger)^{l_r}_{k_r}\ket{0} = \frac{1}{2(r-2)!}\sum_{\gamma \in S_r}(a^\dagger)^{l_{(1)\gamma}}_{k_{(2)\gamma}} (a^\dagger)^{l_{(2)\gamma}}_{k_{(1)\gamma}}(a^\dagger)^{l_{(3)\gamma}}_{k_{(3)\gamma}} {\dots} (a^\dagger)^{l_{(r)\gamma}}_{k_{(r)\gamma}}\ket{0}.
	\end{equation}
	Define $\tau = (12) \in S_r$ and re-write the above equation as
	\begin{equation}
		\mathcal{O}(a^\dagger)^{l_1}_{k_1} \dots (a^\dagger)^{l_r}_{k_r}\ket{0} = \frac{1}{2(r-2)!}\sum_{\gamma \in S_r}(a^\dagger)^{l_{(1)\gamma}}_{k_{(1)\tau\gamma}} (a^\dagger)^{l_{(2)\gamma}}_{k_{(2)\tau\gamma}}(a^\dagger)^{l_{(3)\gamma}}_{k_{(3)\gamma}} {\dots} (a^\dagger)^{l_{(r)\gamma}}_{k_{(r)\gamma}}\ket{0}.
	\end{equation}
	Therefore
	\begin{align}
		\mathcal{O}\ket{T} &= \sum_{\substack{l_1 \dots l_r \\ k_1 \dots k_r}} T_{l_1 \dots l_r}^{k_1 \dots k_r} \frac{1}{2(r-2)!}\sum_{\gamma \in S_r}(a^\dagger)^{l_{(1)\gamma}}_{k_{(1)\tau\gamma}} (a^\dagger)^{l_{(2)\gamma}}_{k_{(2)\tau\gamma}}(a^\dagger)^{l_{(3)\gamma}}_{k_{(3)\gamma}} {\dots} (a^\dagger)^{l_{(r)\gamma}}_{k_{(r)\gamma}}\ket{0} \\
		&= \frac{1}{2(r-2)!}\sum_{\gamma \in S_r}\sum_{\substack{l_1 \dots l_r \\ k_1 \dots k_r}} T_{l_{(1)\gamma^{-1}} \dots l_{(r)\gamma^{-1}}}^{k_{(1)(\tau\gamma)^{-1}}k_{(2)(\tau\gamma)^{-1}}k_{(3)\gamma^{-1}}\dots k_{(r)\gamma^{-1}}} (a^\dagger)^{l_{1}}_{k_{1}} (a^\dagger)^{l_{2}}_{k_{2}}(a^\dagger)^{l_{3}}_{k_{3}} {\dots} (a^\dagger)^{l_{r}}_{k_{r}}\ket{0} \\
		&= \frac{1}{2(r-2)!}\sum_{\gamma \in S_r}\sum_{\substack{l_1 \dots l_r \\ k_1 \dots k_r}} T_{l_{(1)\gamma} \dots l_{(r)\gamma}}^{k_{(1)\gamma\tau}k_{(2)\gamma\tau}k_{(3)\gamma}\dots k_{(r)\gamma}} (a^\dagger)^{l_{1}}_{k_{1}} (a^\dagger)^{l_{2}}_{k_{2}}(a^\dagger)^{l_{3}}_{k_{3}} {\dots} (a^\dagger)^{l_{r}}_{k_{r}}\ket{0},
	\end{align}
	where in the last line we relabelled the sum over $\gamma$ to a sum over $\gamma^{-1}$ and used $\tau^{-1}=(12)^{{-1}}=(12)=\tau$.
	We re-write this as a trace of operators on $\VN^{\otimes r}$ and find
	\begin{equation}
		\mathcal{O}\ket{T} = \frac{1}{2(r-2)!}\sum_{\gamma \in S_r} \ket{D_{\tau}D_{\gamma^{-1}}TD_{\gamma}} = \sum_{1 \leq i < j \leq r} \ket{D_{(ij)} T}.
	\end{equation}
	The last equality follows from $D_{\gamma}(a^\dagger)^{\otimes k}=(a^\dagger)^{\otimes r} D_{\gamma}$ and the fact that $(ij)$ is fixed by elements in $S_2 \times S_{r-2} \subset S_r$ under conjugation.
\end{proof}

The operator $\mathcal{O}$ can be written in the very suggestive form
\begin{equation}
	\mathcal{O} = \frac{1}{2}\Tr_{\VN^{\otimes 2}}((a^\dagger)^{\otimes 2} D_\tau a^{\otimes 2}),
\end{equation}
where $\tau = {(12)}$.
It has a natural generalization
\begin{definition}\label{def: Ok for Sr}
	Let $\tau = (12 \dots k) \in S_k$ and define the operator
	\begin{equation}		
		\mathcal{O}_k = \frac{1}{k}\Tr_{\VN^{\otimes k}}((a^\dagger)^{\otimes k} D_\tau a^{\otimes k}).
	\end{equation}
\end{definition}
Note that $\tau$ is invariant under conjugation by permutations in $S_{r-k}$ and cyclic permutations of $\{1,2,\dots,k\}$.
Identical steps as for the $k=2$ case gives the following corollary.
\begin{corollary}
	Let $C_k$ be the conjugacy class of $S_r$ with elements having the same cycle structure as $(12\dots k)(k+1) \dots (r)$ and $\ket{T} \in \mathcal{H}^{(r)}$, then
	\begin{equation}
		\mathcal{O}_k\ket{T} = \begin{cases}
			\sum_{\tau \in C_k \subset S_r} \ket{D_{\tau} T} &\qq{for $r \geq k$,}\\
			0 &\qq{otherwise.}
		\end{cases}
	\end{equation}
\end{corollary}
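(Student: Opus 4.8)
The plan is to mimic exactly the computation that proves Proposition \ref{prop: action of O} for $k=2$, generalizing the single transposition $\tau=(12)$ to the cycle $\tau=(12\dots k)$. First I would dispose of the trivial case: if $r<k$, then $a^{\otimes k}$ annihilates $\ket{T}\in\mathcal{H}^{(r)}$ since there are too few creation operators to contract, so $\mathcal{O}_k\ket{T}=0$; hence assume $r\geq k$ from now on.

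Next I would expand $a^{\otimes k}(a^\dagger)^{l_1}_{k_1}\cdots(a^\dagger)^{l_r}_{k_r}\ket{0}$ by Wick contractions, exactly as in the $k=2$ case. Summing over all ways of contracting the $k$ annihilation operators with $k$ of the $r$ creation operators produces a factor $\tfrac{1}{(r-k)!}\sum_{\gamma\in S_r}$ together with $k$ commutators $\comm{a^{i_a}_{i_{a'}}}{(a^\dagger)^{l_{(a)\gamma}}_{k_{(a)\gamma}}}$ for $a=1,\dots,k$, leaving the remaining $(a^\dagger)^{l_{(a)\gamma}}_{k_{(a)\gamma}}$ for $a=k+1,\dots,r$ acting on $\ket{0}$. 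Acting with the outer $(a^\dagger)^{\otimes k}D_\tau$ then re-inserts $k$ creation operators whose index pattern is twisted by $\tau$: the upper indices stay in positions $1,\dots,k$ while the lower indices get cyclically permuted, i.e. $(a^\dagger)^{l_{(a)\gamma}}_{k_{(a)\tau\gamma}}$ for $a=1,\dots,k$. Pulling out the $T$-coefficients, relabelling the $\gamma$-sum by $\gamma^{-1}$, and using $D_\gamma (a^\dagger)^{\otimes r}=(a^\dagger)^{\otimes r}D_\gamma$, the result assembles into
\begin{equation}
	\mathcal{O}_k\ket{T}=\frac{1}{k\,(r-k)!}\sum_{\gamma\in S_r}\ket{D_\tau D_{\gamma^{-1}} T D_\gamma}=\frac{1}{k\,(r-k)!}\sum_{\gamma\in S_r}\ket{D_{\gamma^{-1}\tau\gamma}\, T},
\end{equation}
where in the last step I use $\ket{D_{\gamma^{-1}}TD_\gamma}$-type identities together with the fact (already used in the $k=2$ proof) that $\ket{D_\sigma T D_{\sigma^{-1}}}=\ket{T}$ for $\sigma\in S_r$ because the oscillators are bosonic; more directly, $D_\tau D_{\gamma^{-1}}T D_\gamma$ and $D_{\gamma^{-1}\tau\gamma}T$ define the same state. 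The final step is a counting argument: as $\gamma$ ranges over $S_r$, the conjugate $\gamma^{-1}\tau\gamma$ ranges over the conjugacy class $C_k$ of $\tau=(12\dots k)(k+1)\cdots(r)$, and each element of $C_k$ is hit exactly $|Z(\tau)|$ times where $Z(\tau)$ is the centralizer of $\tau$ in $S_r$. Since $\tau$ is a $k$-cycle on $\{1,\dots,k\}$ times the identity on the remaining $r-k$ points, $|Z(\tau)|=k\cdot(r-k)!$, which cancels the prefactor precisely and gives $\mathcal{O}_k\ket{T}=\sum_{\tau'\in C_k}\ket{D_{\tau'}T}$.

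I expect the only real care-point to be the combinatorial bookkeeping in the Wick expansion — tracking how the cyclic twist $\tau$ acts on upper versus lower indices through the relabelling of the $\gamma$-sum — together with correctly identifying the centralizer order $|Z(\tau)|=k(r-k)!$ so that the normalization $\tfrac{1}{k}$ in Definition \ref{def: Ok for Sr} and the Wick factor $\tfrac{1}{(r-k)!}$ combine to give coefficient $1$ in front of each $\ket{D_{\tau'}T}$. None of this is deep; it is the same argument as the $k=2$ case with ``$(12)$'' replaced by ``$(12\dots k)$'' and ``$S_2\times S_{r-2}$'' replaced by the centralizer of a $k$-cycle, so I would present it as a short remark that ``identical steps as for the $k=2$ case'' yield the claim, spelling out only the centralizer-order computation.
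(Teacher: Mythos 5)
Your proposal is correct and follows essentially the same route as the paper, which proves the $k=2$ case in detail (Proposition \ref{prop: action of O}) and then states the corollary with the remark that identical steps apply; your Wick expansion, the relabelling of the $\gamma$-sum, and the cancellation of the prefactor $\tfrac{1}{k(r-k)!}$ against the centralizer order $k(r-k)!$ of the $k$-cycle reproduce exactly the mechanism used there (where for $k=2$ the centralizer is $S_2\times S_{r-2}$ of order $2(r-2)!$). The only cosmetic difference is that you write the conjugate as $\gamma^{-1}\tau\gamma$ rather than $\gamma\tau\gamma^{-1}$, which is immaterial since $\gamma$ runs over all of $S_r$.
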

We now use these operators to construct Hamiltonians with solvable spectra.

%

\subsection{Spectra of algebraic Hamiltonians.}
The operators $\mathcal{O}_k$ are in fact Hermitian since
\begin{align}
	\bra{T'}\mathcal{O}_k\ket{T} &= \sum_{\tau \in C_k \subset S_r}\bra{T} \ket{D_\tau T} \\
	& = \sum_{\gamma \in S_r}\sum_{\tau \in C_k \subset S_r}  \Tr_{\VN^{\otimes r}}((T')^\dagger  D_{\gamma} D_\tau T D_{\gamma^{-1}}) \\
	& = \sum_{\gamma \in S_r}\sum_{\tau \in C_k \subset S_r}  \Tr_{\VN^{\otimes r}}((T')^\dagger D_\tau  D_{\gamma}  T D_{\gamma^{-1}}) \label{eq: Ok hermitian 3} \\
	& = \sum_{\gamma \in S_r}\sum_{\tau \in C_k \subset S_r}  \Tr_{\VN^{\otimes r}}(( D_\tau^\dagger T')^\dagger D_{\gamma}  T D_{\gamma^{-1}}) \\
	& = \sum_{\gamma \in S_r}\sum_{\tau \in C_k \subset S_r}  \Tr_{\VN^{\otimes r}}(( D_\tau T')^\dagger D_{\gamma}  T D_{\gamma^{-1}}) \label{eq: Ok hermitian 5}\\
	& = \bra{T'}\mathcal{O}_k^\dagger\ket{T}
\end{align}
where in going to \eqref{eq: Ok hermitian 3} we used the fact that $\sum_{\tau \in C_k \subset S_r} D_{\gamma} D_\tau = \sum_{\tau \in C_k \subset S_r} D_\tau D_{\gamma}$ and \eqref{eq: Ok hermitian 5} uses
\begin{equation}
	\sum_{\tau \in C_k \subset S_r} D_\tau^\dagger = \sum_{\tau \in C_k \subset S_r} D_{\tau^{-1}} = \sum_{\tau \in C_k \subset S_r} D_{\tau},
\end{equation}
which holds because $\tau$ and $\tau^{-1}$ are in the same conjugacy class for symmetric groups.

Note that the following element of $\mathbb{C}S_r$
\begin{equation}
	\sum_{\tau \in C_k \subset S_r} \tau,
\end{equation}
is central. Therefore, the corresponding linear operator $\mathcal{O}_k$ acts on irreducible representations of $S_r$ by a scalar, and in particular a normalized character. The normalized characters are known for general $r=k$ \cite[Theorem 4]{Lassalle} (see \eqref{eq: norm characters of T2s} for $k=2$). For $k=3$ we have
\begin{theorem}
	Let $\gamma \vdash r$ and $\YT{\gamma}$ the corresponding Young diagram. Then
	\begin{equation}
		\sum_{\tau \in C_k \subset S_r} \hat{\chr}_\gamma(\tau) = \sum_{(i,j) \in \YT{\gamma}}  (j-i)^2 - \binom{k}{2}\label{eq: norm characters of T3},
	\end{equation}
	where $(i,j)$ corresponds to the cell in the $i$th row and $j$th column of the Young diagram (the top left box has coordinate $(1,1)$).
\end{theorem}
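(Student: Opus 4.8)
The plan is to compute the normalized character of the central element $\sum_{\tau \in C_3 \subset S_r}\tau$ (the sum over all $3$-cycles) acting in the irreducible representation $V_\gamma$ of $S_r$, and to show it equals $\sum_{(i,j)\in Y_\gamma}(j-i)^2 - \binom{3}{2}$. Here $C_3$ denotes the conjugacy class with cycle type $[3,1^{r-3}]$ (the notation in \eqref{eq: norm characters of T3} is a mild abuse: $k=3$ throughout, and the displayed $\binom{k}{2}=\binom{3}{2}=3$). The key input is the theory of \emph{Jucys--Murphy elements} $X_1 = 0,\; X_m = \sum_{i<m}(i\,m)$ for $m=2,\dots,r$ in $\mathbb{C}S_r$. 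Their elementary symmetric functions are class sums, but more relevantly the \emph{power sums} $\sum_{m} X_m^p$, when symmetrized, express class sums of small cycles in terms of the $X_m$'s, and on the irreducible $V_\gamma$ each $X_m$ acts (in the Young/Gelfand--Tsetlin basis) by the content $c(b)=j-i$ of the box $b$ added at step $m$. This is exactly the mechanism already used in the excerpt for $k=2$: Theorem just before \eqref{eq: norm characters of T2s} states $\hat\chi_\gamma(z_n)=\sum_{(i,j)\in Y_\gamma}(j-i)$, and $z_n=\sum_{m=2}^n X_m$ is literally the sum of JM elements.

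First I would recall the standard identity expressing the sum of $3$-cycles in terms of JM elements. One has $\sum_{\tau\in C_2} \tau = \sum_m X_m$, and for $3$-cycles the relevant identity (a classical fact, e.g. from Jucys or from Lassalle's work cited in the excerpt as \cite{Lassalle}) is
\begin{equation}
	\sum_{\tau\in C_3}\tau = \sum_{m=2}^r X_m^2 - \binom{r}{2}\,\mathrm{id} + (\text{lower class sums that vanish appropriately}),
\end{equation}
or more precisely, after the correct combinatorial bookkeeping, $\sum_{\tau\in C_3}\tau$ acts on $V_\gamma$ as $\sum_m X_m^2$ evaluated on the GZ basis, minus a constant. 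The cleanest route: since each GZ basis vector $v_T$ (indexed by a standard tableau $T$ of shape $\gamma$) is a simultaneous eigenvector with $X_m v_T = c_T(m) v_T$ where $c_T(m)$ is the content of the box containing $m$, we get $\big(\sum_m X_m^2\big) v_T = \big(\sum_{(i,j)\in Y_\gamma}(j-i)^2\big) v_T$ — independently of $T$, hence this operator is already scalar. So the whole task reduces to proving the operator identity
\begin{equation}
	\sum_{\tau\in C_3}\tau \;=\; \Big(\sum_{m=2}^r X_m^2\Big) \;-\; 3\,\mathrm{id}
\end{equation}
holds in $\mathbb{C}S_r$ up to terms that act as zero on every $V_\gamma$, equivalently holds in $\mathbb{C}S_r$ exactly modulo the span of lower class sums whose normalized characters cancel. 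I would verify this by a direct expansion: $X_m^2 = \sum_{i<m}(i\,m)^2 + \sum_{i\neq j,\; i,j<m}(i\,m)(j\,m)$; the first sum is $(m-1)\,\mathrm{id}$, so $\sum_m \sum_{i<m}(i\,m)^2 = \binom{r}{2}\mathrm{id}$; and $(i\,m)(j\,m) = (i\,m\,j)$ for $i\neq j$, so $\sum_m \sum_{i\neq j<m}(i\,m)(j\,m)$ counts each $3$-cycle $(a\,b\,c)$ a fixed number of times — one must count carefully how many ordered triples $(i,j,m)$ with $i,j<m$ produce a given $3$-cycle. Since $(i\,m\,j)=(m\,j\,i)=(j\,i\,m)$ and the constraint $i,j<m$ picks out the unique cyclic representative with largest entry in the middle slot, each $3$-cycle arises exactly once (with a factor accounting for the $i\leftrightarrow$ nothing — actually exactly once as an ordered pair since $i,j$ are determined once $m$ is forced to be the max). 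A short check of small cases ($r=3$: $X_2^2+X_3^2 = (12)^2 + ((13)+(23))^2 = \mathrm{id} + (\mathrm{id}+\mathrm{id}) + (13)(23)+(23)(13) = 3\,\mathrm{id} + (132)+(123)$, so $\sum X_m^2 - 3\,\mathrm{id} = (123)+(132) = \sum_{\tau\in C_3}\tau$, confirming the constant is exactly $3=\binom{3}{2}$) pins down the normalization.

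Dividing the resulting scalar by $\dim V_\gamma$ to get the normalized character is trivial since the operator is already a scalar multiple of the identity, giving $\hat\chi_\gamma\big(\sum_{\tau\in C_3}\tau\big) = \sum_{(i,j)\in Y_\gamma}(j-i)^2 - 3$, which is \eqref{eq: norm characters of T3}. The main obstacle I anticipate is the bookkeeping in the expansion step: getting the multiplicity of each $3$-cycle in $\sum_m\sum_{i\neq j<m}(i\,m)(j\,m)$ exactly right, and confirming that no class sums other than $C_3$ and the identity survive (in particular that the $i=j$ terms account for precisely the $\binom{r}{2}$ identity coefficient and nothing of lower cycle type leaks in). Once that combinatorial identity in $\mathbb{C}S_r$ is nailed down — which for $C_3$ is short, and is in any case the content of \cite[Theorem 4]{Lassalle} for general single cycles — the content-eigenvalue property of JM elements on the Young basis (which I would cite rather than reprove, as it underlies Theorem on normalized characters of $z_n$ already used in the excerpt) finishes the argument immediately.
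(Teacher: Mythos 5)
Your route is genuinely different from the paper's: the paper offers no argument at all for this statement, deferring entirely to \cite[Theorem 4]{Lassalle}, whereas you give an essentially self-contained proof via Jucys--Murphy elements. The skeleton of your argument is sound and worth having: $X_m$ acts on the Young/Gelfand--Tsetlin basis by contents; the diagonal terms of $\sum_{m=2}^r X_m^2$ contribute $\sum_m (m-1)\,\mathrm{id}=\binom{r}{2}\mathrm{id}$; and your counting of the cross terms is exactly right --- for a $3$-cycle with support $\{a<b<c\}$ only $m=c$ contributes, and the two ordered pairs $(i,j)$ with $\{i,j\}=\{a,b\}$ produce the two distinct $3$-cycles on that support once each --- so in fact the identity $\sum_{\tau\in C_3}\tau=\sum_{m=2}^r X_m^2-\binom{r}{2}\,\mathrm{id}$ holds exactly in $\mathbb{C}S_r$, with no lower class sums left over and no need to work "modulo terms acting as zero."

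The genuine problem is the constant in your final step. Having correctly extracted $\binom{r}{2}\mathrm{id}$ from the diagonal terms, you then assert the operator identity with $-3\,\mathrm{id}$ instead, so as to land on the printed formula with $\binom{k}{2}=\binom{3}{2}$. That identity is false for $r>3$, and it contradicts your own expansion; your $r=3$ check cannot detect this because $\binom{3}{2}=\binom{r}{2}$ there. Test $r=4$, $\gamma=[4]$: there are $8$ three-cycles, each with normalized character $1$, so the left-hand side is $8$, while $\sum_{(i,j)}(j-i)^2=0+1+4+9=14$ and $14-\binom{4}{2}=8$, whereas $14-\binom{3}{2}=11$. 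So what your argument actually proves is $\sum_{\tau\in C_3}\hat{\chi}_\gamma(\tau)=\sum_{(i,j)\in Y_\gamma}(j-i)^2-\binom{r}{2}$; the $\binom{k}{2}$ in the statement as printed is a misprint for $\binom{r}{2}$, as one can also see from the chapter's subsequent eigenvalue formula $r+\sum_{(i,j)}\big((j-i)^2+g(j-i)\big)$, which relies on $\frac{r(r+1)}{2}-\binom{r}{2}=r$. Keep your derivation, but fix the constant rather than forcing it to match the misprinted statement.
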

\begin{proof}
	See \cite[Theorem 4]{Lassalle}.
\end{proof}

This leads us to define the following Hamiltonians
\begin{definition}
	Let $\mathcal{O}_k$ be as in Definition \ref{def: Ok for Sr} and define the Hermitian operator
	\begin{equation}
		H = \frac{\widehat{K}(\widehat{K}+1)}{2} + \mathcal{O}_3 + g\mathcal{O}_2, \label{eq: Sk Hammy}
	\end{equation}
	where
	\begin{equation}
		\widehat{K} = \sum_{i,j=1}^\N a^\dagger_{ij} a_{ij}.
	\end{equation}
\end{definition}

From \eqref{eq: norm characters of T2s}, \eqref{eq: norm characters of T3} and the fact that $\widehat{K}$ has eigenvalue $r$ on $\mathcal{H}^{(r)}$, we find that the eigenvalues of $H$ are
\begin{align}
	r &\qq{for $r = 0,1$,}\\
	r+g\sum_{(i,j) \in \YT{\gamma}} (j-i)&\qq{for $r=2$,}\\
	 r + \sum_{(i,j) \in \YT{\gamma}}  (j-i)^2 + g(j-i) &\qq{for $r \geq 3$.}
\end{align}
where $\gamma \vdash r$. The first two terms are always positive, while the third term is negative for Young diagrams with many rows and few columns. In fact, for fixed $r$ and $g > 0$ the third term is minimized for the anti-symmetric Young diagram $[1^r]$. For $g < 0$ it is minimized for the symmetric diagram $[r]$. We leave it as a future direction to investigate when the spectrum of this Hamiltonian is bounded from below, and what the ground states are.

\subsection{Algebraic eigenvectors.}
It is difficult to exactly diagonalize the Hamiltonian \eqref{eq: Sk Hammy} in $\mathcal{H}^{(k)}$ for large $\N$. However, in the smaller subspace $\Hilbertspace^{{(k)}}$ the Hamiltonian can be understood as acting on $SP_k(\N) \subset P_k(\N)$ through left multiplication. The dimension of $P_k(\N)$ is independent of $\N$ and there is hope in finding exact eigenvectors of $H$ inside $\Hilbertspace$ for all $\N$. As we will now see, the matrix units of $SP_k(\N)$ are exact eigenvectors of $H$.

The following corollary gives the action of $\tau$ on matrix units $Q^{\lambda, \gamma}_{\mu {\nu}}$.
\begin{corollary}
	Let $\tau \in S_k$ and $Q^{\lambda, \gamma}_{\mu {\nu}}$ a matrix unit as defined in \eqref{eq: spkn units} then
	\begin{equation}
		D_\tau Q^{\lambda, \gamma}_{\mu {\nu}} = \sum_{\alpha, \beta, q,p} Q_{\alpha \beta}^\lambda D^\gamma_{qp}(\tau) (B^{\lambda \rightarrow \gamma})^\alpha_{q\mu}(B^{\lambda \rightarrow \gamma})^\beta_{p\nu}.
	\end{equation}
\end{corollary}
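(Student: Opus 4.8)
The plan is to unwind the definitions of the matrix unit $Q^{\lambda,\gamma}_{\mu\nu}$ in \eqref{eq: spkn units} and the matrix unit $Q^\lambda_{\alpha\beta}$ in \eqref{def: matrix unit PkN}, and to use the left-action formula already established in Corollary \ref{cor: d on Q}. First I would substitute \eqref{eq: spkn units} into the left-hand side, pulling the (scalar) branching coefficients $(B^{\lambda\rightarrow\gamma})^\alpha_{p\mu}$, $(B^{\lambda\rightarrow\gamma})^\beta_{p\nu}$ out of the product with $D_\tau$, so that
\begin{equation}
	D_\tau Q^{\lambda,\gamma}_{\mu\nu} = \sum_{\alpha,\beta,p} (B^{\lambda\rightarrow\gamma})^\alpha_{p\mu}(B^{\lambda\rightarrow\gamma})^\beta_{p\nu}\, D_\tau Q^\lambda_{\alpha\beta}.
\end{equation}
Here I am using that $\tau \in S_k \subset P_k(\N)$ so that $D_\tau$ is a concrete element acting by left multiplication on the matrix units of $P_k(\N)$.

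The key step is then to apply \eqref{eq: d on Q left} with $d = \tau$, which gives
\begin{equation}
	D_\tau Q^\lambda_{\alpha\beta} = \sum_{\sigma=1}^{\dimPk{\lambda}} D^\lambda_{\sigma\alpha}(\tau)\, Q^\lambda_{\sigma\beta}.
\end{equation}
Substituting and relabelling the dummy summation index $\sigma \to \alpha$ (and the old $\alpha$ to something that gets contracted against the branching coefficient) yields a formula with a factor $D^\lambda_{\alpha\alpha'}(\tau)(B^{\lambda\rightarrow\gamma})^{\alpha'}_{p\mu}$. The remaining task is to recognise this contraction of the irreducible representation matrix $D^\lambda(\tau)$ of $P_k(\N)$ with the branching coefficient as producing the irreducible representation matrix $D^\gamma(\tau)$ of $S_k$ together with a branching coefficient. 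Concretely, the defining property of the branching basis $E^{\lambda,\gamma}_{p\mu} = \sum_\alpha (B^{\lambda\rightarrow\gamma})^\alpha_{p\mu}E^\lambda_\alpha$ is that $\mathbb{C}S_k$ acts irreducibly on it as $V_\gamma$ in the $p$-index and trivially (as the identity) in the multiplicity index $\mu$; in matrix form this is the equivariance identity
\begin{equation}
	\sum_{\alpha'} D^\lambda_{\alpha\alpha'}(\tau)\,(B^{\lambda\rightarrow\gamma})^{\alpha'}_{p\mu} = \sum_{q}(B^{\lambda\rightarrow\gamma})^{\alpha}_{q\mu}\,D^\gamma_{qp}(\tau).
\end{equation}
Feeding this in (applied to the $\mu$ side; the $\nu$ side carries only the untouched branching coefficient $(B^{\lambda\rightarrow\gamma})^\beta_{p\nu}$) immediately produces the claimed expression
\begin{equation}
	D_\tau Q^{\lambda,\gamma}_{\mu\nu} = \sum_{\alpha,\beta,q,p} Q^\lambda_{\alpha\beta}\,D^\gamma_{qp}(\tau)\,(B^{\lambda\rightarrow\gamma})^\alpha_{q\mu}\,(B^{\lambda\rightarrow\gamma})^\beta_{p\nu}.
\end{equation}

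The main obstacle is justifying the branching-equivariance identity rigorously: the text introduces branching coefficients only through the change-of-basis definition $E^{\lambda,\gamma}_{p\mu} = \sum_\alpha (B^{\lambda\rightarrow\gamma})^\alpha_{p\mu}E^\lambda_\alpha$ together with the stipulation that $\mathbb{C}S_k$ acts irreducibly and orthonormally in this basis, so I would need to translate that stipulation into the precise matrix statement above, taking care with index placement (which index of $B$ is contracted with $D^\lambda(\tau)$ versus with $D^\gamma(\tau)$), with the fact that $D^\lambda$ here is the representation obtained by restricting the $P_k(\N)$-representation to $\mathbb{C}S_k$, and with the convention \eqref{eq: orthogonal reps of PkN} that representation matrices are orthogonal. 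Everything else — pulling scalars through, invoking Corollary \ref{cor: d on Q}, relabelling dummy indices — is routine. I would present the proof as: (i) expand using \eqref{eq: spkn units}, (ii) apply \eqref{eq: d on Q left}, (iii) invoke branching equivariance, (iv) collect terms.
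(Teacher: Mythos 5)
Your proof is correct and follows the same route as the paper, which simply cites Corollary \ref{cor: d on Q} together with the equivariance property of the branching coefficients; your write-up just makes those two ingredients explicit. The equivariance identity you flag as the "main obstacle" is indeed exactly the matrix translation of the stipulation that $S_k$ acts irreducibly as $V_\gamma$ in the $p$-index of $E^{\lambda,\gamma}_{p\mu}$, obtained by acting with $\tau$ on both sides of $E^{\lambda,\gamma}_{p\mu}=\sum_\alpha (B^{\lambda\rightarrow\gamma})^\alpha_{p\mu}E^\lambda_\alpha$ and comparing coefficients, so there is no gap.
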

This follows from Corollary \ref{cor: d on Q} and the equivariance property of the branching coefficients.
An immediate consequence of this is that
\begin{equation}
	\sum_{\tau \in C_k \subset S_k} D_\tau Q^{\lambda, \gamma}_{\mu {\nu}} = \sum_{\tau \in C_k \subset S_k} \hat{\chr}^{}_{\gamma}(\tau)Q^{\lambda, \gamma}_{\mu {\nu}},
\end{equation}
because $\sum_{\tau \in C_k \subset S_k} D_\tau$ is central in $\mathbb{C}S_k$. As we show below, it follows that the representation basis elements form eigenvectors of $\mathcal{O}_k$,
\begin{equation}
	\mathcal{O}_k\ket{Q^{\lambda, \gamma}_{\mu {\nu}}} = \sum_{\tau \in C_k \subset S_k} \ket{D_\tau Q^{\lambda, \gamma}_{\mu {\nu}} }= \sum_{\tau \in C_k \subset S_k} \hat{\chr}^{}_{\gamma}(\tau)\ket{Q^{\lambda, \gamma}_{\mu {\nu}}}. \label{eq: eigenvectors of Ok}
\end{equation}

Hamiltonians very similar to those constructed in this section were discussed in \cite[Section 5]{Geloun2021}. Here we have given a construction of these Hamiltonians in a matrix oscillator setting. In the next subsection we discuss some extensions beyond symmetric group Hamiltonians

\subsection{Extensions beyond symmetric group Hamiltonians.}
In Equation \ref{eq: eigenvectors of Ok} we saw that the irreducible representation $\gamma$ of $S_k$ is closely connected to the eigenvalues of the Hamiltonian defined in \eqref{eq: Sk Hammy}. Naturally, we ask if there exists analogues of the Hermitian operators $\mathcal{O}_k$, whose eigenvalues are closely connected to the irreducible representation $\lambda$ of $\SN$.

There is no issue in generalizing Definition \ref{def: Ok for Sr} to arbitrary elements $\SPk{d} \in SP_k(\N)$.
\begin{definition}
	Let $\SPk{d} \in SP_k(\N)$ and define the operator
	\begin{equation}
		\mathcal{O}_\SPk{d} = \frac{1}{k!}\Tr_{\VN^{\otimes k}}((a^\dagger)^{\otimes k} \SPk{d} a^{\otimes k}).
	\end{equation}
\end{definition}
Unfortunately, the action of $\mathcal{O}_\SPk{d}$ on $\mathcal{H}^{(r)}$ is more complicated than the action of $\mathcal{O}_k$. In particular, using the same computation as for proving Proposition \ref{prop: action of O}, one finds
\begin{equation}
	\mathcal{O}_{\SPk{d}}\ket{T} = \begin{cases}
		0 &\qq{for $r < k$,} \\
		\ket{\SPk{d} T} &\qq{for $r = k$,} \\
		\frac{1}{k!}\sum_{\gamma \in S_r} \ket{D_\gamma(\SPk{d} \otimes \idn^{r-k})D_{\gamma^{-1}} T} &\qq{for $r > k$.}
	\end{cases}
\end{equation}
Therefore, one might pick $\SPk{d}$ to be an element of the center of $P_k(\N)$, in which case
\begin{equation}
	\mathcal{O}_\SPk{d} \ket{Q^{\lambda, \gamma}_{\mu {\nu}}} = \hat{\chr}_\lambda(\SPk{d}) \ket{Q^{\lambda, \gamma}_{\mu {\nu}}},
\end{equation}
for $Q^{\lambda, \gamma}_{\mu {\nu}} \in SP_k(\N)$. However, for representation basis elements corresponding to $SP_r(\N)$ with $r > k$ we have
\begin{equation}
	\mathcal{O}_\SPk{d} \ket{Q^{\lambda, \gamma}_{\mu {\nu}}}  = \frac{1}{k!}\sum_{\gamma \in S_r} \ket{D_\gamma(\SPk{d} \otimes \idn^{r-k})D_{\gamma^{-1}} Q^{\lambda, \gamma}_{\mu {\nu}} }.
\end{equation}
It is not clear that
\begin{equation}
	\sum_{\gamma \in S_r} \gamma(\SPk{d} \otimes \idn^{r-k}){\gamma^{-1}},
\end{equation}
is a central element of $SP_r(\N)$ or $P_r(\N)$. Therefore, we are no longer able prove that this acts by a normalized character of the representation $\lambda$.

In \cite{Barnes:2022qli} we proposed a solution to this by defining operators that vanish unless $r = k$. This definition allowed us to construct Hamiltonians that completely distinguish all the labels on matrix units using a complete set of commuting operators. So far, we do not have nice expressions for these operators in terms of finite sums of oscillators, as we do for $\mathcal{O}_k$. Finding such expressions is a very interesting problem for the future.


\section{Extremal correlators} \label{sec: extremal correlators}
Extremal correlators in $\mathcal{N}=4$ SYM form interesting sectors having non-renormalization properties \cite{Eden2000}. They are closely connected to representation theoretic quantities such as Littlewood-Richardsson coefficients, and form a crucial set of examples for checking the AdS/CFT correspondence. In the quantum mechanical model presented in this paper, vacuum expectation values similar to extremal correlators can be computed exactly. They form generalizations of the two-point functions(inner products) previously considered and obey representation theoretic selection rules that we derive in the representation basis.

Definition \ref{def: diagram state def} can be interpreted as a quantum mechanical operator-state correspondence for $S_N$ invariant states
\begin{equation}
	\ket{\SPk{d}_\pi} \longleftrightarrow \mathcal{O}_{\pi} = \Tr_{V_N^{\otimes k}}(\SPk{D}_\pi (a^\dagger)^{\otimes k}).
\end{equation}
From equation \eqref{def: tensor state} we have
\begin{equation}
	\mathcal{O}_{\pi}^\dagger = \Tr_{V_N^{\otimes k}}( \SPk{D}_\pi^\dagger a^{\otimes k}).
\end{equation}
The time-dependent operators are given by
\begin{equation}
	\mathcal{O}_{\pi}(t) = \mathrm{e}^{-iH_0t}\mathcal{O}_{\pi}\mathrm{e}^{iH_0t} = \mathrm{e}^{-ikt} \mathcal{O}_{\pi},
\end{equation}
where$H_0$ is the free Hamiltonian, defined in equation \eqref{eq: simplest hamiltonian}.

With this notation, proposition \ref{prop: large N factorisation} takes the form
\begin{equation}
	\frac{\expval{\mathcal{O}_{\pi}^\dagger \mathcal{O}_{\pi'}}}{\sqrt{\expval{\mathcal{O}_{\pi}^\dagger \mathcal{O}_{\pi}}\expval{\mathcal{O}_{\pi'}^\dagger \mathcal{O}_{\pi'}} }} = \delta_{\pi \sim \pi'} + O(1/\sqrt{N}),
\end{equation}
where
\begin{equation}
	\delta_{\pi \sim \pi'}  = \begin{cases}
		1 \qq{if $\pi, \pi'$ differ by a permutation,} \\
		0 \qq{otherwise.}
	\end{cases}
\end{equation}

\subsection{Three-point correlators}
We now study the following generalization of the above expectation value
\begin{definition}[Extremal correlator]
	Let $\SPk{d}_{\pi_1} \in SP_{k_1}(N), \SPk{d}_{\pi_2} \in SP_{k_2}(N), \SPk{d}_\pi \in SP_k(N)$ such that $k = k_1 + k_2$. Extremal three-point correlators (of degree k) are expectation values of the form
	\begin{equation}
		\bra{0} \mathcal{O}_{\pi_1}^\dagger(t_1) \mathcal{O}^{\dagger}_{\pi_2}(t_2) \mathcal{O}_{\pi}(t) \ket{0}.\label{eq: extremal correlator}
	\end{equation}
\end{definition}
In what follows, we will ignore the trivial time-dependence that factorizes as
\begin{equation}
	\mathrm{e}^{i k_1t_1 + i k_2t_2 - i  k t } \bra{0} \mathcal{O}_{\pi_1}^\dagger \mathcal{O}^{\dagger}_{\pi_2} \mathcal{O}_{\pi}\ket{0}.
\end{equation}

As we now show, extremal correlators are simple in the diagram basis.
\begin{proposition}
	Let
	\begin{equation}
		\bra{0} \mathcal{O}_{\pi_1}^\dagger \mathcal{O}^{\dagger}_{\pi_2} \mathcal{O}_{\pi} \ket{0},
	\end{equation}
	be an extremal correlator of degree $k$ (ignoring the time-dependence). It is equal to
	\begin{equation}
		\sum_{\gamma \in S_k} \N^{\abs{\pi_1 \otimes \pi_2 \join \gamma \pi \gamma^{-1}}},
	\end{equation}
	where the tensor product $\pi_1 \otimes \pi_2$ is the diagram obtained by horizontal concatenation. For example,
	\begin{equation}
		\PAdiagram{2}{}{2/1, -1/-2} \otimes \PAdiagram{3}{1/-2}{3/1} = \PAdiagram{5}{3/-4}{2/1, -1/-2, 5/3}.
	\end{equation}
\end{proposition}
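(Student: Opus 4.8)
The plan is to reduce this three-point statement to the two-point machinery already developed for the pairing $\bra{T}\ket{T'}$ and the join operation. First I would expand all three operators using Definition~\ref{def: tensor state} and its adjoint, writing
\begin{equation}
	\bra{0} \mathcal{O}_{\pi_1}^\dagger \mathcal{O}^{\dagger}_{\pi_2} \mathcal{O}_{\pi} \ket{0} = \bra{0}\Tr_{\VN^{\otimes k_1}}(\SPk{D}_{\pi_1}^\dagger a^{\otimes k_1})\Tr_{\VN^{\otimes k_2}}(\SPk{D}_{\pi_2}^\dagger a^{\otimes k_2})\Tr_{\VN^{\otimes k}}(\SPk{D}_\pi (a^\dagger)^{\otimes k})\ket{0}.
\end{equation}
Since the annihilation operators from the two daggered factors together carry $k_1 + k_2 = k$ lower/upper index pairs, and the creation operators carry exactly $k$, the vacuum expectation is non-vanishing and reduces to a sum over the $k!$ ways of contracting the $k$ annihilation operators against the $k$ creation operators, exactly as in the proof of the pairing formula in the proposition preceding Section~\ref{sec: Perm subspace}. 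The key combinatorial observation is that the operator $\Tr_{\VN^{\otimes k_1}}(\SPk{D}_{\pi_1}^\dagger a^{\otimes k_1})\Tr_{\VN^{\otimes k_2}}(\SPk{D}_{\pi_2}^\dagger a^{\otimes k_2})$ is precisely $\Tr_{\VN^{\otimes k}}((\SPk{D}_{\pi_1}\otimes \SPk{D}_{\pi_2})^\dagger a^{\otimes k})$, i.e. the horizontal concatenation $\pi_1 \otimes \pi_2$ appears naturally because the two separate traces over tensor factors of size $k_1$ and $k_2$ combine into a single trace over the size-$k$ tensor product, with the tensor factors juxtaposed.

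Next I would apply the pairing formula directly: with $T = \SPk{D}_{\pi_1}\otimes \SPk{D}_{\pi_2}$ and $T' = \SPk{D}_\pi$, the earlier proposition gives
\begin{equation}
	\bra{0} \mathcal{O}_{\pi_1}^\dagger \mathcal{O}^{\dagger}_{\pi_2} \mathcal{O}_{\pi} \ket{0} = \sum_{\gamma \in S_k} \Tr_{\VN^{\otimes k}}\big((\SPk{D}_{\pi_1}\otimes\SPk{D}_{\pi_2})^\dagger D_\gamma \SPk{D}_\pi D_{\gamma^{-1}}\big).
\end{equation}
Because the diagram basis elements $D_\pi$ realise set partitions through products of Kronecker deltas, $(\SPk{D}_{\pi_1}\otimes\SPk{D}_{\pi_2})^\dagger = D_{(\pi_1\otimes\pi_2)^T}$ (up to the transpose convention of Definition~\ref{def: transpose diagram}), and $D_\gamma \SPk{D}_\pi D_{\gamma^{-1}}$ is the diagram $\gamma\pi\gamma^{-1}$. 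Then I would invoke Proposition~\ref{prop: trace in diagram basis}, which states precisely that $\Tr_{\VN^{\otimes k}}(D_\rho D_{\rho'}^T) = \N^{\abs{\rho \join \rho'}}$, to convert each term in the $\gamma$-sum into $\N^{\abs{\pi_1\otimes\pi_2 \join \gamma\pi\gamma^{-1}}}$, yielding the claimed formula. I should double-check that the transpose on $\pi_1\otimes\pi_2$ does not actually matter here — either because the symmetrized elements $\SPk{D}$ are symmetric in the appropriate sense, or because the join with the $\gamma$-orbit of $\pi$ absorbs it; this is a minor bookkeeping point about which row of the 2-row diagram is being joined.

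The main obstacle I anticipate is the careful handling of the symmetrization: the states are built from $\SPk{d}_{\pi_i} \in SP_{k_i}(\N)$, which are averages over $S_{k_i}$ orbits, and I must verify that $\SPk{D}_{\pi_1} \otimes \SPk{D}_{\pi_2}$, while only invariant under $S_{k_1}\times S_{k_2} \subset S_k$ rather than all of $S_k$, still produces the correct answer after the full $\sum_{\gamma\in S_k}$ contraction — essentially because the outer sum over $\gamma$ already restores the missing symmetrization, so no overcounting or undercounting occurs. Concretely I expect the argument to parallel the manipulation $\sum_{\gamma\in S_k} D_\gamma \SPk{X}_\pi D_{\gamma^{-1}} = \sum_{\gamma\in S_k} D_\gamma X_\pi D_{\gamma^{-1}}$ used in the proof of Proposition~\ref{prop: orbit orthogonality}, with a factor of $\abs{S_{k_1}\times S_{k_2}}/k!$ or similar cancelling against the normalization built into $\SPk{d}_{\pi_1}\otimes\SPk{d}_{\pi_2}$. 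Everything else is routine once this normalization check is clean.
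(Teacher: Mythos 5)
Your proposal is correct and follows essentially the same route as the paper: the paper also computes the correlator by Wick contractions (noting no contractions occur between the two daggered operators), arrives at $\sum_{\gamma \in S_k}\Tr_{\VN^{\otimes k}}((D_{\pi_1}^T\otimes D_{\pi_2}^T)D_\gamma D_\pi D_{\gamma^{-1}})$ with representative diagrams, and concludes via the trace-equals-$\N^{\#\text{components}}$ fact. Your anticipated symmetrization check indeed works out exactly as you sketch — the full $\gamma$-sum absorbs the $\frac{1}{k_1!k_2!}$ and $\frac{1}{k!}$ averages with no leftover factors, which is why the paper can simply work with orbit representatives.
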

\begin{proof}
	We compute the extremal correlator using Wick contractions. By construction, there are no contractions between $\mathcal{O}_{\pi_1}^\dagger$ and $\mathcal{O}_{\pi_2}^\dagger$ that give a non-zero result. Therefore,
	\begin{equation}
		\bra{0} \mathcal{O}_{\pi_1}^\dagger \mathcal{O}^{\dagger}_{\pi_2} \mathcal{O}_{\pi} \ket{0} =  \sum_{\gamma \in S_k} \Tr_{\VN^{\otimes k}}((D_{\pi_1}^T \otimes D_{\pi_2}^T) D_{\gamma }D_{\pi} D_{\gamma^{-1}}),
	\end{equation}
	where the contractions are encoded using the sum over $S_k$ and $D_{\pi_1}, D_{\pi_2}, D_\pi$ are linear maps corresponding representative diagrams in the $S_k$ orbits $[d_{\pi_1}], [d_{\pi_2}], [d_\pi]$. The proof follows since the trace is $\N$ raised to the number of connected components.
\end{proof}

We will now derive representation theoretic selection rules for the extremal correlators.
\begin{proposition}
	Consider the operator-state correspondence in the representation basis
	\begin{equation}
		\ket*{Q^{\lambda, \gamma}_{\mu \nu}}  \rightarrow \mathcal{O}^{\lambda, \gamma}_{\mu \nu} = \Tr_{\VN^{\otimes k}}(Q^{\lambda, \gamma}_{\mu \nu} (a^{\dagger})^{\otimes k}).
	\end{equation}
	Extremal correlators in the representation basis satisfies
	\begin{equation}
		\bra{0} (\mathcal{O}^{\lambda', \gamma'}_{\mu' \nu'})^{\dagger} (\mathcal{O}^{\lambda'', \gamma''}_{\mu'' \nu''})^\dagger \mathcal{O}^{\lambda, \gamma}_{\mu \nu} \ket{0} = 0 \quad \text{if $C(\lambda, \lambda', \lambda'') = 0$,}
	\end{equation}
	where $C(\lambda, \lambda', \lambda'')$ is the Kronecker coefficient, or multiplicities of irreducible representations in the decomposition of tensor products of $\SN$ representations.
\end{proposition}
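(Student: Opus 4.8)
The plan is to trace how the three operators $\mathcal{O}^{\lambda',\gamma'}_{\mu'\nu'}$, $\mathcal{O}^{\lambda'',\gamma''}_{\mu''\nu''}$ and $\mathcal{O}^{\lambda,\gamma}_{\mu\nu}$ sit inside the Schur--Weyl decomposition of $\VN^{\otimes k_1}$, $\VN^{\otimes k_2}$ and $\VN^{\otimes k}$ respectively, with $k = k_1 + k_2$, and to observe that the extremal correlator is, up to the combinatorics of Wick contractions, a pairing between $\VN^{\otimes k}$ and $\VN^{\otimes k_1} \otimes \VN^{\otimes k_2}$. Under this pairing, the $\SN$-equivariance forces the correlator to factor through $\Hom_{\SN}(\VN^{\otimes k_1} \otimes \VN^{\otimes k_2}, \VN^{\otimes k})$, which by \eqref{eq: VN SW simple} and Schur's lemma only receives a contribution from a given triple of $\SN$-irreps $(V_\lambda, V_{\lambda'}, V_{\lambda''})$ when $V_\lambda$ appears in $V_{\lambda'} \otimes V_{\lambda''}$ --- equivalently when the Kronecker coefficient $C(\lambda,\lambda',\lambda'') \neq 0$.

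Concretely, I would proceed as follows. First, recall from the previous proposition that ignoring time-dependence,
\begin{equation}
	\bra{0} (\mathcal{O}^{\lambda',\gamma'}_{\mu'\nu'})^\dagger (\mathcal{O}^{\lambda'',\gamma''}_{\mu''\nu''})^\dagger \mathcal{O}^{\lambda,\gamma}_{\mu\nu} \ket{0} = \sum_{\delta \in S_k} \Tr_{\VN^{\otimes k}}\!\big( (({Q}^{\lambda',\gamma'}_{\mu'\nu'})^T \otimes ({Q}^{\lambda'',\gamma''}_{\mu''\nu''})^T) D_\delta\, {Q}^{\lambda,\gamma}_{\mu\nu}\, D_{\delta^{-1}} \big),
\end{equation}
since no Wick contraction pairs the creation operators inside $\mathcal{O}_{\pi_1}^\dagger$ with those inside $\mathcal{O}_{\pi_2}^\dagger$. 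Next, I would use \eqref{eq: PkN irreps from clebsch} to express each matrix unit $Q^{\lambda,\gamma}_{\mu\nu}$ in terms of Clebsch--Gordan coefficients for the decomposition of $\VN^{\otimes k}$, and similarly $Q^{\lambda',\gamma'}_{\mu'\nu'}$, $Q^{\lambda'',\gamma''}_{\mu''\nu''}$ in terms of Clebsch--Gordan coefficients for $\VN^{\otimes k_1}$ and $\VN^{\otimes k_2}$. The trace then becomes a contraction of Clebsch--Gordan coefficients along the physical $\VN$-indices, and the equivariance property \eqref{eq: VNk clebsch SN equivariance} together with the fact that the sum over $\delta \in S_k$ is invariant under conjugation lets me insert a projector onto $\SN$-invariants: effectively, the contraction computes the image of the invariant-projection of $E^{\lambda'} \otimes E^{\lambda''}$ paired against $E^{\lambda}$. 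This is exactly a matrix element of a map in $\Hom_{\SN}(V_{\lambda'} \otimes V_{\lambda''}, V_\lambda)$, and the dimension of that Hom-space is $C(\lambda,\lambda',\lambda'')$; when this vanishes, Schur's lemma forces the whole contraction to vanish.

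The main obstacle I anticipate is bookkeeping the interplay between the two distinct symmetry groups acting on each factor: the $S_{k_1} \times S_{k_2}$ (coming from the bosonic nature of the oscillators in the two ``source'' operators, and reflected in the $\gamma', \gamma''$ labels) and the $S_k$ that appears as the Wick-contraction sum, versus the diagonal $\SN$ action. One has to check carefully that summing over $\delta \in S_k$ and then over the internal $S_{k_1}, S_{k_2}$ symmetrizations is equivalent to symmetrizing with respect to the embedding $S_{k_1} \times S_{k_2} \hookrightarrow S_k$ and a coset sum, so that the Clebsch--Gordan contraction really lands in $\Hom_{\SN}(\VN^{\otimes k_1} \otimes \VN^{\otimes k_2}, \VN^{\otimes k})$ rather than some larger space; and that the $P_{k_i}(\N)$-module labels $\gamma_i$ do not obstruct the argument (they only affect the overall normalization and the surviving multiplicity indices $\mu_i, \nu_i$, not whether the correlator is forced to zero). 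Once this identification is made, the selection rule is an immediate consequence of Schur's lemma applied to $\SN$, and I would not expect further complications; the Kronecker coefficient enters precisely as $\dim \Hom_{\SN}(V_{\lambda'} \otimes V_{\lambda''}, V_\lambda)$.
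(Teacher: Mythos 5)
Your argument is correct, but it runs on the Schur--Weyl dual side of the one in the paper. You stay on the $\SN$ side: after reducing (as the paper also does) to a trace of $(Q^{\lambda',\gamma'}\otimes Q^{\lambda'',\gamma''})\,Q^{\lambda,\gamma}$ over $\VN^{\otimes k}$, you expand the matrix units in Clebsch--Gordan coefficients, observe that the whole contraction factors through $\Hom_{\SN}(V_{\lambda'}\otimes V_{\lambda''},V_\lambda)$, and invoke Schur's lemma --- so the Kronecker coefficient enters through its \emph{definition} as $\dim\Hom_{\SN}(V_{\lambda'}\otimes V_{\lambda''},V_\lambda)$. The paper instead stays inside the partition-algebra framework: it uses Schur--Weyl to write the trace as a sum of $P_k(\N)$-characters, applies orthogonality of matrix elements \eqref{eq: orthogonality Pk matrix elements} to isolate $D^{\lambda}_{\beta\alpha}(Q^{\lambda'}\otimes Q^{\lambda''})$, then uses the restriction of $Z_\lambda$ to $P_{k_1}(\N)\otimes P_{k_2}(\N)$ and must \emph{cite} the nontrivial identification of those branching multiplicities with Kronecker coefficients. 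Your route is therefore more self-contained on that point (no external branching theorem needed), while the paper's route exposes more structure of the nonvanishing correlators (they are proportional to specific matrix elements $D^{\lambda}_{\beta\alpha}(Q^{\lambda'}\otimes Q^{\lambda''})$), which is consistent with the paper's own remark that its shortcut hides detailed structure. One small simplification you could make: the ``obstacle'' you flag about reconciling the Wick sum over $S_k$ with the $S_{k_1}\times S_{k_2}$ symmetrizations dissolves immediately, either because $Q^{\lambda,\gamma}_{\mu\nu}\in SP_k(\N)$ is $S_k$-invariant (so the $\delta$-sum contributes only a factor $k!$, as in the paper), or more simply because $D_\delta$ commutes with the diagonal $\SN$ action, so each term $D_\delta Q^{\lambda,\gamma}_{\mu\nu}D_{\delta^{-1}}$ is separately supported on the $V_\lambda$-isotypic component and your Schur's-lemma argument applies term by term; the $\gamma$ labels indeed play no role in the vanishing.
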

\begin{proof}
	As before, this correlator (modulo time-dependence) is proportional to a trace
	\begin{equation}
		\Tr_{\VN^{\otimes k}}(Q^{\lambda', \gamma'}_{\nu' \mu'} \otimes Q^{\lambda'', \gamma''}_{\nu'' \mu''} Q^{\lambda, \gamma}_{\mu \nu}),
	\end{equation}
	where we used the fact that transposition exchanges the order of the multiplicity indices and $ Q^{\lambda, \gamma}_{\mu \nu}$ being invariant under conjugation by $\gamma \in S_k$.
	
	To prove this proposition, it is sufficient to consider a trace of $P_k(\N)$ matrix units,
	\begin{equation}
		\Tr_{\VN^{\otimes k}}(Q^{\lambda'}_{\beta' \alpha'} \otimes Q^{\lambda''}_{\beta'' \alpha''} Q^{\lambda}_{\alpha \beta}).
	\end{equation}
	Schur-Weyl duality \eqref{eq: VN SW simple} gives a decomposition of the trace into $\SN \times P_k(\N)$ representations
	\begin{equation}
		\Tr_{\VN^{\otimes k}}(Q^{\lambda'}_{\beta' \alpha'}\otimes Q^{\lambda''}_{\beta'' \alpha''} Q^{\lambda}_{\alpha \beta}) =  \sum_{\tilde{\lambda} \in \Lambda_{k,\N}} \dimSN{\tilde{\lambda}} \Tr_{Z_{\tilde{\lambda}}}(Q^{\lambda'}_{\beta' \alpha'}\otimes Q^{\lambda''}_{\beta'' \alpha''} Q^{\lambda}_{\alpha \beta}).
	\end{equation}
	From orthogonality of matrix elements \eqref{eq: orthogonality Pk matrix elements} this vanishes unless $\tilde{\lambda} = \lambda$,
	\begin{equation}
		\Tr_{\VN^{\otimes k}}(Q^{\lambda'}_{\beta' \alpha'}\otimes Q^{\lambda''}_{\beta'' \alpha''} Q^{\lambda}_{\alpha \beta})  = \dimSN{\lambda}\Tr_{Z_\lambda}(Q^{\lambda'}_{\beta' \alpha'} \otimes Q^{\lambda''}_{\beta'' \alpha''} Q^{\lambda}_{\alpha \beta}).
	\end{equation}
	Orthogonality also gives
	\begin{equation}
		\Tr_{Z_\lambda}(Q^{\lambda'}_{\beta' \alpha'} \otimes Q^{\lambda''}_{\beta'' \alpha''} Q^{\lambda}_{\alpha \beta}) = D^{\lambda}_{\beta \alpha}(Q^{\lambda'}_{\beta' \alpha'} \otimes Q^{\lambda''}_{\beta'' \alpha''}).
	\end{equation}

	To find the selection rule, we want to consider $Z_\lambda$ as a representation of $P_{k_1}(\N) \otimes P_{k_2}(\N)$. Suppose $Z_\lambda$ decomposes as follows under this restriction
	\begin{equation}
		Z_\lambda \cong \bigoplus_{\substack{\widetilde{\lambda' }\in \Lambda_{k_1,\N}  \\ \widetilde{\lambda''} \in \Lambda_{k_2,\N}} } Z_{\widetilde{\lambda'}} \otimes Z_{\widetilde{\lambda''}} \otimes M_\lambda^{\widetilde{\lambda'}, \widetilde{\lambda''}},
	\end{equation}
	with multiplicities given by $\dim M_\lambda^{\widetilde{\lambda'}, \widetilde{\lambda''}}$. Again, orthogonality implies that the representation $\widetilde{\lambda'}$ of a matrix unit $Q^{\lambda'}_{\beta' \alpha'}$ vanishes unless $\widetilde{\lambda'} = \lambda'$. Therefore, we find that
	\begin{align}
		\bra{0} (\mathcal{O}^{\lambda', \gamma'}_{\mu' \nu'})^{\dagger} (\mathcal{O}^{\lambda'', \gamma''}_{\mu'' \nu''})^\dagger \mathcal{O}^{\lambda, \gamma}_{\mu \nu} \ket{0} &\propto \dimSN{\lambda}D^{\lambda}_{\beta \alpha}(Q^{\lambda'}_{\beta' \alpha'} \otimes Q^{\lambda''}_{\beta'' \alpha''}) \\
		&= 0 \qq{if $\dim M_\lambda^{{\lambda'}, {\lambda''}}$ = 0}.
	\end{align}
	
	The branching multiplicities for partition algebras are related to the multiplicities $ C(\lambda, {\lambda'}, \lambda'') $, known as Kronecker coefficients, of irreducible 
	representations $ \lambda$ in tensor products of $\SN$ representations $ {\lambda'} \otimes {\lambda''}$ (see eq. (3.1.3) of \cite{bowman2013partition})  
	\begin{equation}
		\dim M_\lambda^{{\lambda'}, {\lambda''}} = C(\lambda, {\lambda'}, \lambda'') .
	\end{equation}
	For simplicity we are assuming $ N \ge (2 k_1 + 2 k_2)$.
	
	Matrix units of $SP_k(\N)$ are linear combinations of matrix units for $P_k(\N)$ with fixed $\lambda$. Therefore, the same selection rule applies, and the proposition follows.
\end{proof}
\begin{remark}
	Analogous selection rules for extremal correlators in general quiver gauge theories are described in \cite{QuivCalc}.	
	For comparison, in Schur-Weyl duality between $U(N)$ and $\mathbb{C}[S_k]$, Littlewood-Richardson coefficients are branching multiplicities for $S_{k_1 + k_2} \rightarrow S_{k_1} \times S_{k_2}$ but correspond to decomposition of tensor products of $U(N)$ representations.
\end{remark}

\section{Summary}
This chapter extends the mathematical techniques developed in chapter \ref{chapter: partition algebra} to one-dimensional (quantum mechanical) matrix models. We reviewed the basics of matrix harmonic oscillators, their quantization and diagonalization of the free Hamiltonian. Following the review, we studied harmonic oscillators in a permutation invariant quadratic potential. The most general such potential can be constructed using similar techniques to those used to construct the distributions in chapter \ref{chapter: 0d}. This gives a Hamiltonian with exactly solvable eigenvalues. In particular, the very difficult problem of diagonalizing a $\N^2 \times \N^2$ matrix at large $\N$ is reduced to diagonalizing two real symmetric matrices of size two and three. This result was first presented in \cite{Barnes:2022qli}.

In the next section we studied the subspace $\Hilbertspace \subset \mathcal{H}$ of $\SN$ invariant states. This was also first done in \cite{Barnes:2022qli} where it was realised that the subspace is closely related to partition algebras. A basis for degree $k$ states in $\Hilbertspace$ is in one-to-one correspondence with a basis for the subalgebra $SP_k(\N) \subset P_k(\N)$, called the symmetrised partition algebra. These algebras were first defined in \cite{Barnes:2021tjp}, but are special cases of permutation centralizer algebras \cite{PCA2016} with $A=P_k(\N)$ and $B=\mathbb{C}(S_k)$.

We described three bases for $SP_k(\N)$: the diagram basis, the orbit basis and the representation basis. The orbit and diagram bases come from symmetrisation of the corresponding bases of $P_k(\N)$. The diagram and orbit bases of $P_k(\N)$ were known already by Jones and Martin \cite{Jones1994, Martin1994}. The existence of representation bases is also known in the literature since it follows from semi-simplicity of $P_k(\N)$. The diagram basis is the most geometrical of the three. It forms an orthogonal basis for $\N \rightarrow \infty$. The orbit basis is exactly orthogonal, for all $\N$. It is particularly useful for describing finite $\N$ effects, where the Hilbert space may develop states with zero norm. The first result was first proven in \cite{Barnes:2021tjp}, but in the setting of zero-dimensional matrix models. The second result was proven in \cite{Barnes:2022qli}. The representation basis is based on matrix units for $SP_k(\N)$ and form eigenvectors of algebraic Hamiltonians described in the subsequent section. Furthermore, they are useful for deriving selection rules of extremal correlators, as described in the last section of this chapter.

In the penultimate section we used diagram algebras to construct Hamiltonian operators that act on the invariant states through diagram multiplication. We described a family of such operators, based on the diagram algebra $\mathbb{C}(S_k)$. As mentioned, we found that the representation basis is an eigenbasis for these operators. However, these algebraic operators have a highly degenerate spectrum. This naturally raised the question of constructing generalizations of the algebraic Hamiltonians where all labels on the representation basis elements are distinguished by their eigenvalue. We discussed generalizations of these algebraic Hamiltonians based on partition algebras and some of the challenges of solving them. In particular, we saw that algebraic Hamiltonians coming from elements of $SP_k(\N)$ act nicely on states with degree $r=k$. On states with $r > k$ the action is significantly more complicated due to what essentially amounts to a non-trivial embedding of $SP_k(\N) \rightarrow SP_r(\N)$. At the moment we do not have a nice description of this embedding. Hamiltonians based on sums of symmetric group elements have been considered in Spin Matrix Theory \cite{Harmark2014} in connection to AdS/CFT and $\mathcal{N}=4$ SYM. We have considered a special case of exactly solvable combinations. The generalizations of algebraic Hamiltonians coming from $SP_k(\N)$ were first constructed in \cite{Barnes:2022qli}. Here, operators projecting to fixed degree $r=k$ states were used to overcome the previously mentioned challenges.

In the last section we studied expectation values inspired by extremal correlators in AdS/CFT. Schur-Weyl duality has been very successful in studying extremal correlators in gauge theories with continuous group symmetries \cite{CJR}. This work generalizes many of these results to permutation invariant observables. We found that the extremal correlators in the diagram basis have a simple description in terms of geometric quantities, e.g. number of connected components in the join of tensor products of diagrams. Extremal correlators in the representation basis gave rise to selection rules based on Kronecker coefficients of $\SN$ irreducible representations. This was proven in detail in \cite{Barnes:2022qli}. In this thesis, we took a shortcut that gives the selection rule but does not reveal the full detailed structure of the extremal correlators.
	
	\chapter{Conclusion}
In this thesis we have built on the Schur-Weyl duality framework for studying matrix models and invariant observables. Previous applications have focused on matrix models with continuous symmetry, where the Schur-Weyl dual algebras correspond to the symmetric group algebras $\mathbb{C}(S_k)$ or Brauer algebras. Here, we considered models with significantly less symmetry, that of discrete permutation symmetry. For models with permutation symmetry, the dual algebras are called partition algebras $P_k(\N)$. They generalize symmetric group algebras, and in fact $\mathbb{C}(S_k) \subset P_k(\N)$.

The first chapter of this thesis contained a short review of the essential facts about symmetric groups and their representation theory. The combinatorial objects known as vacillating tableaux played a particularly important role in determining the multiplicities in the decomposition of $\VN^{\otimes k}$. These facts were used in the subsequent chapter, where we defined and studied the structure of partition algebras. In particular, we saw that partition algebras are semi-simple algebras and form a so-called inductive chain $P_1(\N) \subset \dots \subset P_k(\N)$. Semi-simplicity implies that there exists a basis of matrix units for $P_k(\N)$, where multiplication mimics multiplication of block matrices. The inductive chain was used to construct this basis as eigenvectors of a complete set of commuting operators. This culminated in an all $\N$ construction of the matrix units for $P_k(\N)$. A table of matrix units for $P_2(\N)$, up to normalization, was given in appendix \ref{apx: P2N units}.

In the next chapter, the matrix units were used to define the most general permutation invariant Gaussian/quadratic matrix model, in block diagonal form. The block diagonal form (matrix units form) was essential for describing the first and second moment of the matrix model in closed form. Because the matrix model is quadratic, expectation values of observables are exactly computable using Wick's theorem and the expressions for first and second moments. We defined observables as general permutation invariant matrix polynomials and found that the space of observables has two useful bases: the 1-row partition basis and the basis of directed graphs. Using the 1-row partition basis, we gave an algebraic combinatorial algorithm for computing expectation values of observables. Code implementing this algorithm in Python/Sage is given in appendix \ref{apx: EV code} with detailed comments. The directed graph basis is useful for combinatorial counting and construction. We introduced Graph Generating Permutation Diagrams for describing directed graphs. This ultimately led to a double coset description of the space of observables described by graphs with fixed local structure. That is, graphs with fixed number of in/out-going edges at each vertex. A detailed discussion of how to compute the order of these double cosets using generating functions/cycle indices and associated computer code is given in appendix \ref{apx: double coset}.

In the last chapter, we applied our mathematical techniques to matrix models in one dimension. That is, the physics of quantum mechanical matrix oscillators. We constructed exactly solvable models of harmonic oscillators in a permutation invariant quadratic potential. Following that, we focused on the subspace of states invariant under the adjoint action of permutations. We presented three interesting bases for this subspace, based on the symmetrized partition algebra $SP_k(\N) \subset P_k(\N)$: the diagram basis, the orbit basis and the representation basis. We then considered algebraic interacting Hamiltonians based on partition algebras. We found that the representation basis forms an eigenbasis for the Hamiltonians based on symmetric group algebras $\mathbb{C}(S_k) \subset P_k(\N)$ and discussed some of the challenges in diagonalizing the more general Hamiltonians based on partition algebras.
The representation basis was also useful for deriving selection rules of extremal correlators inspired by AdS/CFT. Extremal correlators play a crucial role in AdS/CFT where they provide non-trivial checks on the duality.

The main results in the thesis are
\begin{enumerate}
	\item The construction of projection operators $P_{\vactab \vactab'}: P_k(\N) \rightarrow P_k(\N)$ in Definition \ref{def: vactab proj}, labelled by pairs of vacillating tableaux of shape $\lambda$, whose one-dimensional image is spanned by matrix units $Q^\lambda_{\vactab \vactab'} \in P_k(\N)$ for $k=1,2$.
	\item The all $\N$ algorithm in section \ref{subsec: all N matrix units} for finding a basis for the above-mentioned one-dimensional subspaces, or equivalently a set of matrix units for $P_k(\N)$ as closed form functions of $\N$ for $k=1,2$.
	\item The connection between first and second moments of PIGMMs given in Proposition \ref{prop: 1mom 2mom} and matrix units for $P_1(\N), P_2(\N)$, as explained in section \ref{subsec: 1row to 2-row}.
	\item The diagrammatic interpretation of permutation invariant observables as 1-row partition diagrams as explained in the beginning of section \ref{subsec: observables}.
	\item Proposition \ref{prop: graphs equals trace}, which proves that permutation invariant observables of degree $k$ are in one-to-one correspondence with unlabelled directed graphs with $k$ edges and $\N$ vertices.
	\item The combinatorial algorithm for computing expectation values of observables as functions of $\N$ using the diagrammatics of 1-row diagrams and partition algebras, together with Wick's theorem. This is given in section \ref{subsec: exp vals}.
	\item The group theoretical framework for enumerating directed graphs and permutation invariant observables, given in section \ref{subsec: graph counting}.
	\item Proposition \ref{prop: large N factorisation}, or Large $\N$ factorisation of permutation invariant states in matrix quantum mechanics.
\end{enumerate}

Zero-dimensional matrix models with continuous symmetry broken to a discrete symmetry were considered in \cite{Lionni2019}, where the introduction of higher-order terms in the action breaks the continuous symmetry. This is distinct from the models considered here, where the continuous symmetry acting on matrices is broken already at the quadratic level. Permutation invariant random matrix models have also been considered within mathematical statistics \cite{Gabriel2015a, gabriel2016b, Gabriel2015}. It would be interesting to explore applications of the techniques developed here to these models. More generally, adding higher-order terms into the models is a good problem to tackle in the future. This would allow us to explore the phase structure of permutation invariant models as has been done for $U(\N)$ invariant models \cite{osti1980, Wadia1980N, Skag1984, Douglas1993, Sundborg2000, Aharony2004, FHY2007, Dutta2008, ramgoolam2019quiver, 2021Ali}.

The techniques presented in this thesis, for constructing permutation invariant matrix models, generalizes to tensor models \cite{PIGTM}. This was one of the main motivations behind the authors choice in presenting the permutation invariant 1-matrix model of \cite{Ramgoolam2019a} in the manner laid out in this thesis. Furthermore, Schur-Weyl duality itself has been generalized to many settings. Therefore, the techniques described in this thesis should have applications to matrix/tensor models with other symmetries given by families of groups $G_N$. This would require the study of Jucys-Murphy type elements in other dual algebras. The framework should be a valuable tool for studying large $N$ models beyond the conventional settings and beyond eigenvalue distributions. Following this direction will form important bridges between classical random matrix theory and the work this thesis is based on.

A natural question to ask is whether there exists a gauge-string dual interpretation of permutation invariant observables in matrix models. In models with $U(\N)$ symmetry, non-singlet sectors have been considered in models of low-dimensional black holes \cite{Kazakov:2000pm, SDTD2003, Maldacena2005}. Permutation invariant sectors go beyond the $U(\N)$ singlet sector and it would be interesting to explore implications of the results in this thesis for space-time duals of permutation invariant states. The large $\N$ factorization of diagram basis elements is a particularly promising result in this direction. It indicates that classicality emerges as $\N \rightarrow \infty$, in the space of diagram basis states. This could be studied, for example, using the coherent state method \cite{Yaffe1982}.

Related to this, is the observation that the $1/\N$ expansion of correlators in $U(\N)$ invariant matrix models have a geometric interpretation in terms of counting branched coverings \cite{ITZYK, MelloKoch2010} (see also \cite{Gopak2011, dMKLN} for connections to topological strings). Given that the partition functions of matrix models with higher order $U(N)$ invariant potentials can be interpreted as generating functions for branched covers, it is natural to ask if permutation invariant potentials have a similar geometric interpretation. A good starting point would be to consider cubic deformations, for example,
\begin{equation}
	\delta V(X) = \sum_{i,j,k,l,m,n} X_{ij}X_{kl}X_{mn},
\end{equation}
of the $U(N)$ invariant quadratic potential, explicitly breaking the symmetry to $\SN$. Computing
\begin{equation}
	\expval{\mathrm{e}^{\lambda \delta V}},
\end{equation}
order-by-order in $\lambda$ may give some hints towards a geometric interpretation.

Hamiltonians based on symmetric groups have been considered in Spin Matrix Theory \cite{Harmark2014, Baiguera2022} in connection to AdS/CFT and $\mathcal{N}=4$ SYM. The algebraic Hamiltonians based on symmetric groups constructed in this thesis form exactly solvable versions, corresponding to elements in the centres of the group algebras. The extensions to Hamiltonians based on partition algebras are a natural generalization of the ones in Spin Matrix Theory. They deserve a more detailed investigation than the one presented here. For example, Spin Matrix Theory is connected to spin chains at large $N$ and low temperature. I find it plausible that a similar connection can be extended to the permutation invariant case. The Hamiltonians constructed in this thesis also have a natural interpretation in terms of a $N$-by-$N$ lattice of oscillators $a_{ij}^\dagger$ labelled by sites $(i,j)$. This was explored in \cite{Barnes:2022qli}. In the lattice interpretation, permutation invariant operators seem non-local at first glance, but may have a physical interpretation if a more sophisticated perspective is developed. For some applications it would be useful to study invariant operators that are polynomials in the position operators $X_{ij}$. This is related to understanding the action of invariant operators that do not have equal numbers of annihilation and creation operators. I expect that Schur-Weyl duality, in combination with diagrammatic techniques similar to the ones presented in this thesis, can provide a powerful tool for tackling this problem. This is particularly true for multi-matrix models, since Schur-Weyl duality techniques readily generalize to such models, and would be interesting even for the $U(N)$ case.

	\appendix 
	
	\chapter{Matrix units for $\mathbb{C}(\SN)$} \label{apx: SN units}
Theorem \ref{thm: SN irreps} tells us that the set of non-isomorphic irreducible representations of $\SN$ are vector spaces $V_\lambda$ of dimension $\abs{\SYT{\lambda}}$ for $\lambda \vdash \N$. In this appendix we briefly review the inductive approach to representation theory of symmetric groups presented in \cite{VershikOkounkov}. We will see that the basis of standard Young tableaux can be understood through so called Jucys-Murphy elements. In particular, diagonalizing the Jucys-Murphy elements acting on the group algebra gives the basis of matrix units and therefore the irreducible representations of $\SN$.

The famous formula
\begin{equation}
	\abs{G} = \sum_{R \in \Rep{G}} d_R^2,
\end{equation}
where $d_R = \chi^R(1)$ is the dimension of the irreducible representation $R$ is direct consequence of the Artin-Wedderburn decomposition \cite{Artin, Wedderburn} of $\mathbb{C}(G)$. Let $V_R$ be an irreducible representation of $G$, the decomposition says that
\begin{equation}
	\mathbb{C}(G) = \bigoplus_{R \in \Rep{G}} \End(V_R), \label{eq: AW decomp CG}
\end{equation}
where $\End(V_R)$ is the set of linear maps from $V_R$ to $V_R$. Given a basis for $V_R$, $\End(V_R)$ is the algebra of $d_R \times d_R$ matrices. In other words, \eqref{eq: AW decomp CG} says that there exists a change of basis on the group algebra that makes it manifestly an algebra of block matrices. The blocks are labelled by $R$ and each block is of dimension $d_R$. A formula for this change of basis is known.
\begin{theorem}(Fourier inversion formula)
	Let $\mathbb{C}(G)$ be a group algebra and $D_{ab}^R(g)$ be irreducible matrix representations of $G$ for all $R \in \Rep{G}$. The following elements in $\mathbb{C}(G)$
	\begin{equation}
		Q^R_{ab} = \frac{1}{\abs{G}}\sum_{g \in G} d_R D_{ab}^R(g^{-1})g,
	\end{equation}
	is an isomorphism of the type in \eqref{eq: AW decomp CG}. In other words,
	\begin{equation}
		Q^{{R}}_{ab} Q^{R'}_{a'b'} = \delta^{RR'} \delta_{ba'} Q^R_{ab'}.
	\end{equation}
	These elements, called matrix units, form a basis for the group algebra.
\end{theorem}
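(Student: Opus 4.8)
The plan is to verify the matrix-unit relations by a direct computation in $\mathbb{C}(G)$ resting on the orthogonality of matrix elements of irreducible representations, and then to deduce the basis statement from a dimension count together with the rigidity of matrix units. This is the familiar-setting warm-up for the more elaborate construction of matrix units for $P_k(N)$, so I would keep the argument parallel to the proof of Proposition~\ref{prop: orthogonality of matrix elements}.

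First I would record the Schur orthogonality relations for matrix elements,
\begin{equation}
	\sum_{g \in G} D^{R}_{ij}(g)\, D^{R'}_{kl}(g^{-1}) \;=\; \frac{\abs{G}}{d_R}\,\delta^{RR'}\,\delta_{il}\,\delta_{jk},
\end{equation}
since the excerpt only quotes orthogonality of characters (Proposition~\ref{prop: rep basis of ZFG}). This follows from Schur's lemma by the same symmetrisation trick used in this thesis for partition algebras: the averaged operator $\sum_{g} D^{R}(g)\,E_{jk}\,D^{R'}(g^{-1})$, with $E_{jk}$ an elementary matrix, intertwines the two irreducible representations, hence vanishes unless $R \cong R'$ and is a scalar multiple of the identity when $R = R'$; evaluating the trace fixes the constant and produces the displayed identity.

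Next I would substitute the definition into the product and collapse the double group sum. Writing $Q^{R}_{ab}Q^{R'}_{a'b'} = \frac{d_R d_{R'}}{\abs{G}^{2}}\sum_{g,h\in G} D^{R}_{ab}(g^{-1})\,D^{R'}_{a'b'}(h^{-1})\,gh$, I would collect terms according to the product $k=gh$, use the homomorphism property $D^{R'}\!\bigl((g^{-1}k)^{-1}\bigr) = D^{R'}(k^{-1})D^{R'}(g)$ to split off the $g$-dependence, and apply the orthogonality relation above to the resulting sum over $g$. The surviving sum over $k$ is again of the form $\sum_{k} D^{R}_{\bullet}(k^{-1})\,k$, i.e.\ proportional to one of the $Q^{R}$'s, and tracking the indices through the two Kronecker deltas produced by orthogonality yields $Q^{R}_{ab}Q^{R'}_{a'b'} = \delta^{RR'}\delta_{ba'}Q^{R}_{ab'}$. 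The index bookkeeping here is the one genuinely delicate point: it must be carried out in accordance with the thesis's conventions for indexing matrix representations and for reading group products left to right, and this is the step I expect to be the main obstacle; everything else is routine once the matrix-element orthogonality is in hand.

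Finally, for the basis statement I would argue as follows. The number of elements $Q^{R}_{ab}$, with $R$ ranging over $\Rep{G}$ and $a,b$ over $\{1,\dots,d_R\}$, equals $\sum_{R} d_R^{2} = \abs{G} = \dim\mathbb{C}(G)$, so it suffices to prove linear independence. If $\sum_{R,a,b}c^{R}_{ab}Q^{R}_{ab}=0$, then for any fixed $S,c,d$, left-multiplying by $Q^{S}_{cc}$ and right-multiplying by $Q^{S}_{dd}$ and using the matrix-unit relations isolates $c^{S}_{cd}Q^{S}_{cd}=0$; since $Q^{S}_{cd}Q^{S}_{dc}=Q^{S}_{cc}$ has coefficient $d_S/\abs{G}\neq 0$ on the identity element of $G$, no $Q^{S}_{cd}$ can vanish, forcing all $c^{S}_{cd}=0$. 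Hence the $Q^{R}_{ab}$ form a basis, and the multiplication rule exhibits $\mathbb{C}(G)$ as the block-matrix algebra $\bigoplus_{R\in\Rep{G}}\End(V_{R})$, which is the Artin--Wedderburn decomposition~\eqref{eq: AW decomp CG}.
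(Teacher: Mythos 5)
Your proof is correct in substance, but it is not the route the thesis takes: the thesis does not prove this theorem at all — its ``proof'' is a citation to Serre together with the remark that the analogous statement for partition algebras is proved in Section \ref{sec: Semi-Simple Algebra Technology}. What you supply is essentially that analogue transplanted back to $\mathbb{C}(G)$: Schur orthogonality of matrix elements obtained by the symmetrisation trick, a direct collapse of the double group sum in $Q^R_{ab}Q^{R'}_{a'b'}$, and a dimension count plus linear independence for the basis claim. This mirrors Proposition \ref{prop: orthogonality of matrix elements} and the matrix-unit construction of Section \ref{apx: subsec matrix unit PkN}, which is exactly the role this appendix is meant to play as a warm-up; the paper buys brevity by outsourcing, you buy a self-contained argument. (Your independence step is fine; it could be shortened by noting that the coefficient functions $g\mapsto D^R_{ab}(g^{-1})$ are already linearly independent by the same orthogonality relation.)

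One caveat on the index bookkeeping that you yourself flag as the delicate point. With the paper's definition $Q^R_{ab}=\frac{d_R}{\abs{G}}\sum_g D^R_{ab}(g^{-1})\,g$ and the standard convention $D^R_{ab}(xy)=\sum_c D^R_{ac}(x)D^R_{cb}(y)$ — which is the convention the appendix itself uses when deriving \eqref{eq: g on Q} — your computation, carried through carefully, yields $Q^R_{ab}Q^{R'}_{a'b'}=\delta^{RR'}\delta_{ab'}\,Q^R_{a'b}$, i.e.\ $Q^R_{ab}$ plays the role of the elementary matrix $E_{ba}$ rather than $E_{ab}$. The relation displayed in the theorem, $\delta^{RR'}\delta_{ba'}\,Q^R_{ab'}$, is the transposed form; it would follow if the definition used $D^R_{ba}(g^{-1})$, and it is in fact not consistent with the paper's own \eqref{eq: g on Q}. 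None of this affects the substance — either index arrangement gives matrix units, the basis statement and the Artin--Wedderburn decomposition \eqref{eq: AW decomp CG} — but you should not expect the literal index placement of the theorem to drop out of the computation as you have set it up: fix one convention explicitly and derive the corresponding relation, rather than asserting the displayed one.
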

\begin{proof}
	This is a standard result in group theory and abstract algebra (see for example \cite[Proposition 11]{Serre1977}). We proved a version of this for partition algebras in Section \ref{sec: Semi-Simple Algebra Technology}.
\end{proof}
We will use the fact that matrix units have the following property
\begin{align}
	hQ^R_{ab} &= \frac{1}{\abs{G}}\sum_{g \in G} d_R D_{ab}^R(g^{-1})hg \\
	&=\frac{1}{\abs{G}}\sum_{g \in G} d_R D_{ab}^R((h^{-1}g)^{-1})g \\
	&=\frac{1}{\abs{G}}\sum_{g \in G} d_R D_{ab}^R(g^{-1}h)g \\
	&=Q^R_{ac}D_{cb}^R(h), \label{eq: g on Q}
\end{align}
and similarly for $Q^R_{ab}h$.

\section{Jucys-Murphy elements}
In the special case of $G = \SN$, Theorem \ref{thm: SN irreps} says that the indices $a,b$ on $D^\lambda_{ab}(\sigma)$ can be understood as standard Young tableaux of shape $\lambda$.
We will define the following elements of $\mathbb{C}(\SN)$.
\begin{definition}(Jucys-Murphy elements)
	Let $X_1 = 0$ and
	\begin{equation}
		X_i = (1i) + (2i) + \dots + (i-1 i), \quad  i=2,\dots,\N.
	\end{equation}
\end{definition}
\begin{remark}
	Note that
	\begin{equation}
		X_i = z_i - z_{i-1},
	\end{equation}
	where $z_i$ are the sums over transpositions in $S_i$ as defined in \eqref{eq: z_n}.
\end{remark}
They span a maximal commutative subalgebra of $\mathbb{C}(\SN)$ \cite{VershikOkounkov}. Remarkably, they act diagonally on the basis labelled by standard Young tableaux.
\begin{theorem}
	Let $\lambda \vdash \N$, $a \in \SYT{\lambda}$ a standard Young tableaux of shape $\lambda$ and $v_a$ the corresponding basis element in $V_\lambda$. Let $(i,j)$ be the coordinate of the box in $a$ filled with the number $k$, then
	\begin{equation}
		X_k v_a = (j-i)v_a.
	\end{equation}
	The number $j-i$ is known as the content of the box containing $k$. We denote this by $c_k(a)$.
\end{theorem}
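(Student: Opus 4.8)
The plan is to prove the statement by induction on $k$, following the Vershik--Okounkov inductive approach referenced in the excerpt. The key idea is that the Jucys--Murphy elements $X_k$ are built from the central elements $z_i \in \mathcal{Z}[\mathbb{C}(S_i)]$ via $X_k = z_k - z_{k-1}$, and these centres act on irreducible representations (restricted appropriately) by normalized characters, which by the theorem on normalized characters of $z_n$ are given by sums of contents $\sum_{(i,j)} (j-i)$ over the relevant Young diagram.

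First I would make precise what basis $v_a$ we are using: the Young (Gelfand--Tsetlin) basis of $V_\lambda$ indexed by standard Young tableaux $a \in \SYT{\lambda}$, characterized by the property that $v_a$ lies in the line cut out by the chain of restrictions $V_\lambda \supset V_{\lambda^{(k-1)}} \supset \dots \supset V_{\lambda^{(1)}}$, where $\lambda^{(m)}$ is the shape of the subtableau of $a$ containing the entries $1, \dots, m$. The element $z_m = \sum_{1 \le r < s \le m}(rs)$ is central in $\mathbb{C}(S_m)$, so by the corollary to Schur's lemma (the displayed formula $D^R(z) = \tfrac{\chi^R(z)}{\chi^R(1)}\idn$) it acts on the irreducible $S_m$-subrepresentation $V_{\lambda^{(m)}}$ containing $v_a$ by the scalar $\hat{\chr}_{\lambda^{(m)}}(z_m)$. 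By the theorem on normalized characters of $z_n$, this scalar equals $\sum_{(i,j) \in \YT{\lambda^{(m)}}} (j-i)$, the sum of contents of all cells of $\lambda^{(m)}$.

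The main computation is then the telescoping step: $X_k v_a = (z_k - z_{k-1}) v_a = \big(\hat{\chr}_{\lambda^{(k)}}(z_k) - \hat{\chr}_{\lambda^{(k-1)}}(z_{k-1})\big) v_a$. Since $\lambda^{(k)}$ is obtained from $\lambda^{(k-1)}$ by adding exactly one box --- namely the box containing $k$, at coordinate $(i,j)$ --- the difference of the two content-sums is precisely the content $j - i$ of that added box. Hence $X_k v_a = (j-i)v_a = c_k(a) v_a$, as claimed. One subtlety I would address carefully: the theorem on normalized characters of $z_n$ as quoted applies to $n = \N, \N-1$, but the same Macdonald identity holds for any $n$ and any $\lambda \vdash n$, so I would either restate it in that generality or simply invoke the content-sum formula directly, which is elementary to prove independently (e.g.\ by the hook-content / diagonal-sum argument).

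The step I expect to be the main obstacle is justifying cleanly that $v_a$ is a simultaneous eigenvector for \emph{all} the $z_m$ with $m \le k$, i.e.\ that the Young basis is well-defined and has the stated branching behaviour --- this is really the content of Theorem \ref{thm: SN irreps} together with the multiplicity-free branching rule $\Res{\SN}{S_{\N-1}}{V_\lambda} \cong \bigoplus_{\lambda' \in \lambda - {\scriptstyle \ydiagram{1}}} V_{\lambda'}$. Given that multiplicity-freeness, each line in the Gelfand--Tsetlin decomposition is canonically determined by the chain of shapes $(\lambda^{(1)}, \dots, \lambda^{(k)})$, which is exactly the data of the standard tableau $a$; so the eigenvector $v_a$ is unambiguous and the eigenvalue computation above goes through. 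I would present this as a short lemma (Young basis exists and is GT-adapted), cite the branching rule already proved, and then the content computation is a one-line telescoping argument.
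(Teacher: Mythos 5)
Your argument is correct, but it is worth noting that the paper does not actually prove this theorem at all: its ``proof'' is a one-line citation to Jucys \cite{Jucys74}, so your write-up supplies a genuine argument where the thesis defers to the literature. The route you take is the standard one and it is sound: once $v_a$ is taken to be the Gelfand--Tsetlin (equivalently Young seminormal/orthogonal) basis vector determined by the chain of shapes $\lambda^{(1)} \subset \dots \subset \lambda^{(k)} \subset \dots \subset \lambda$, each $z_m$ is central in $\mathbb{C}(S_m)$ and hence acts on the unique irreducible $\mathbb{C}(S_m)$-submodule containing $v_a$ by the normalized character $\hat{\chr}_{\lambda^{(m)}}(z_m) = \sum_{(i,j)\in \YT{\lambda^{(m)}}}(j-i)$, and the telescoping $X_k = z_k - z_{k-1}$ picks out exactly the content of the box added at step $k$. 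Your two flagged subtleties are the right ones to flag: the content-sum formula quoted in the paper only for $n=\N,\N-1$ indeed holds for all $n$ (it is the classical Frobenius/Macdonald identity, and the paper's restriction is an artifact of how it is used later), and the existence and uniqueness (up to scalars) of the GT basis follows from the multiplicity-free branching rule already stated in the paper together with the one-dimensionality of $S_1$-irreducibles. The only caveat worth making explicit in a final write-up is that the eigenvector property is specific to the GT-adapted basis: in Young's \emph{natural} representation the Jucys--Murphy elements are not diagonal, so the statement's ``corresponding basis element $v_a$'' must be read as the seminormal/GT vector, exactly as you do. With that clarification, your proof is complete and consistent with the Vershik--Okounkov framework the appendix sketches; it is also more self-contained than the paper's treatment, at the modest cost of invoking the content-sum character formula as an input.
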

\begin{proof}
	See for example \cite[Equation 12]{Jucys74}.
\end{proof}
In \cite{VershikOkounkov} they proved that the vector $v_a$, or equivalently the standard Young tableaux $a$ is uniquely determined by the tuple of eigenvalues of $X_i$. That is, there is a bijection between ordered lists
\begin{equation}
	(c_1(a), c_2(a), \dots, c_\N(a)) \label{eq: content list}
\end{equation}
and standard Young tableaux $a$.

Representation theoretically, this bijection can be understood from the following property of the Young tableaux basis. Suppose $a$ is a standard Young tableaux of shape $\lambda^{(\N)} \vdash \N$. Removing the box labelled by $\N$ gives a new standard Young tableaux of shape $\lambda^{(\N-1)} \vdash \N-1$. Iterating this procedure gives a sequence of standard Young tableaux of shape $\lambda^{(1)}, \dots, \lambda^{{(\N)}}$. For example
\begin{align}
	&{\tiny \ytableaushort{1 2 3, 4 5, 6}} \rightarrow && {\tiny \ytableaushort{1 2 3, 4 5}} \rightarrow && {\tiny \ytableaushort{1 2 3, 4} }\rightarrow &&{\tiny \ytableaushort{1 2 3}} \rightarrow &&{\tiny \ytableaushort{1 2}} \rightarrow &&{\tiny \ytableaushort{1}} \\
	&\lambda^{{6}} \rightarrow &&\lambda^{{5}} \rightarrow &&\lambda^{{4}} \rightarrow &&\lambda^{{3}} \rightarrow &&\lambda^{{2}} \rightarrow &&\lambda^{{1}}.
\end{align}
In particular, the sequence of Young diagrams themselves (not including fillings) completely fix the positions of the numbers $\{1,\dots,\N\}$ in the final tableaux. This is another way of saying that the branching $S_{n} \rightarrow S_{n-1}$ is multiplicity free, and therefore the sequence of irreducible representations in the branching$\SN \rightarrow S_{\N-1} \rightarrow \dots \rightarrow S_1$ label a unique (up to scalar) vector in an irreducible representation. Essentially, the ordered list \eqref{eq: content list} specifies the position of each number in the Young tableaux. We have not proven this here. For more information on this construction see \cite{VershikOkounkov}.

It follows from \eqref{eq: g on Q} that
\begin{equation}
	X_k Q^R_{ab} = c_k(b)Q^R_{ab}, \quad Q^R_{ab}X_k = c_k(a)Q^R_{ab}.
\end{equation}
Therefore, the simultaneous eigenbasis of the operators corresponding to left action and right action of $X_i$ on the group algebra $\mathbb{C}(\SN)$ is the matrix unit basis.

\subsection{Example 1.}
The simplest case to consider is $\N = 2$ or $S_2$. In this case we only need to consider
\begin{equation}
	X_2 = (12),
\end{equation}
since $X_1 = 0$. We have
\begin{equation}
	X_2 (12) = (12) X_2 = (1)(2), \quad X_2 (1)(2) = (12).
\end{equation}
That is, the matrices of the left and right action of $X_2$ on $\mathbb{C}(S_2)$ are
\begin{equation}
	X_2^L = X_2^R = \mqty(0 & 1 \\ 1 & 0).
\end{equation}
The vectors
\begin{equation}
	Q^{[2]}_{\tiny \ytableaushort{1 2}, \ytableaushort{1 2}} = (1)(2) + (12), \quad Q^{[1,1]}_{\tiny \ytableaushort{1,2}, \ytableaushort{1,2}} = (1)(2) - (12),
\end{equation}
form an eigenbasis of $X_2^L$ and $X_2^R$ since
\begin{equation}
	X_2Q^{[2]}_{\tiny \ytableaushort{1 2}, \ytableaushort{1 2}} = Q^{[2]}_{\tiny \ytableaushort{1 2}, \ytableaushort{1 2}}X_2 = Q^{[2]}_{\tiny \ytableaushort{1 2}, \ytableaushort{1 2}}, \quad X_2Q^{[1,1]}_{\tiny \ytableaushort{1,2}, \ytableaushort{1,2}} = Q^{[1,1]}_{\tiny \ytableaushort{1,2}, \ytableaushort{1,2}}X_2 = - Q^{[1,1]}_{\tiny \ytableaushort{1,2}, \ytableaushort{1,2}}.
\end{equation}
Indeed, these eigenvalues are the contents of corresponding to the standard tableaux
\begin{equation}
	c_2({\tiny \ytableaushort{1 2}}) = 1-0 = 1,\quad c_2({\tiny \ytableaushort{1,2}}) = 0-1= -1.
\end{equation}

\subsection{Example 2.} For $S_3$ we will not present the explicit diagonalization. Instead, we will give a table that shows that the ordered list of contents (eigenvalues of $X_i$) distinguishes between all standard Young tableaux for $S_3$. Each row in the table is a standard Young tableaux $a$ and each column corresponds to the contents $c_2(a), c_3(a)$.
\begin{equation}
	\begin{array}{l | l l}
		\text{SYT $a$}& c_2(a) & c_3(a)\\
		{\tiny \ytableaushort{1 2 3}} & 1  & 2 \\
		{\tiny \ytableaushort{1, 2, 3}} & -1 & -2 \\
	{	\tiny \ytableaushort{1 2, 3}} & 1 & -1 \\
		{\tiny \ytableaushort{1 3, 2}} & -1 & 1
	\end{array}
\end{equation}
Note that each row of numbers is distinct.

\chapter{Table of matrix units for $P_2(\N)$}\label{apx: P2N units}
In section \ref{sec: construction of units} we gave an all $\N$ construct of matrix units for $P_k(\N)$. Here we list the result of the Sage code implementing this procedure for $k=2$. The full sage code is explained in detail in Appendix \ref{apx: EV code}.
To make the equations more readable we use the following short hands for the relevant vacillating tableaux
\begin{align}
	&\vactab_1 = ([\N], [\N-1], [\N], [\N-1], [\N]),\\
	&\vactab_2 = ([\N], [\N-1], [\N-1,1], [\N-1], [\N]) \\
	&\vactab_3 = ([\N], [\N-1], [\N], [\N-1], [\N-1,1]), \\
	&\vactab_4 = ([\N], [\N-1], [\N-1,1], [\N-1], [\N-1,1]) \\
	&\vactab_5 = ([\N], [\N-1], [\N-1,1], [\N-2,1], [\N-1,1]), \\
	&\vactab_6 = ([\N], [\N-1], [\N-1,1], [\N-2,1], [\N-2,2]), \\
	&\vactab_7 = ([\N], [\N-1], [\N-1,1], [\N-2,1], [\N-2,1,1]).
\end{align}
For $\lambda = [\N]$ we have the following matrix units
\begin{align}
	&Q^{[\N]}_{\vactab_1 \vactab_1} = \frac{1}{N^{3}}\begin{tikzpicture}[scale = 0.25,baseline={(0,-1ex/2)}] 
		\tikzstyle{vertex} = [shape = circle, fill, minimum size = 1pt, inner sep = 1pt] 
		\node[vertex] (G--2) at (1.5, -1) [shape = circle, fill, draw] {}; 
		\node[vertex] (G--1) at (0.0, -1) [shape = circle, fill, draw] {}; 
		\node[vertex] (G-1) at (0.0, 1) [shape = circle, fill, draw] {}; 
		\node[vertex] (G-2) at (1.5, 1) [shape = circle, fill, draw] {}; 
	\end{tikzpicture}\\
	&Q^{[\N]}_{\vactab_1 \vactab_2} = -\frac{1}{N^{3}}\begin{tikzpicture}[scale = 0.25, baseline={(0,-1ex/2)}] 
		\tikzstyle{vertex} = [shape = circle, fill, minimum size = 1pt, inner sep = 1pt] 
		\node[vertex] (G--2) at (1.5, -1) [shape = circle, fill, draw] {}; 
		\node[vertex] (G--1) at (0.0, -1) [shape = circle, fill, draw] {}; 
		\node[vertex] (G-1) at (0.0, 1) [shape = circle, fill, draw] {}; 
		\node[vertex] (G-2) at (1.5, 1) [shape = circle, fill, draw] {}; 
	\end{tikzpicture} + \frac{1}{N^{2}}\begin{tikzpicture}[scale = 0.25, baseline={(0,-1ex/2)}] 
		\tikzstyle{vertex} = [shape = circle, fill, minimum size = 1pt, inner sep = 1pt] 
		\node[vertex] (G--2) at (1.5, -1) [shape = circle, fill, draw] {}; 
		\node[vertex] (G--1) at (0.0, -1) [shape = circle, fill, draw] {}; 
		\node[vertex] (G-1) at (0.0, 1) [shape = circle, fill, draw] {}; 
		\node[vertex] (G-2) at (1.5, 1) [shape = circle, fill, draw] {}; 
		\draw[] (G--2) .. controls +(-0.5, 0.5) and +(0.5, 0.5) .. (G--1); 
	\end{tikzpicture}\\
	&Q^{[\N]}_{\vactab_2 \vactab_1} = -\frac{1}{N^{3}}\begin{tikzpicture}[scale = 0.25, baseline={(0,-1ex/2)}] 
		\tikzstyle{vertex} = [shape = circle, fill, minimum size = 1pt, inner sep = 1pt] 
		\node[vertex] (G--2) at (1.5, -1) [shape = circle, draw] {}; 
		\node[vertex] (G--1) at (0.0, -1) [shape = circle, draw] {}; 
		\node[vertex] (G-1) at (0.0, 1) [shape = circle, draw] {}; 
		\node[vertex] (G-2) at (1.5, 1) [shape = circle, draw] {}; 
	\end{tikzpicture} + \frac{1}{N^{2}}\begin{tikzpicture}[scale = 0.25, baseline={(0,-1ex/2)}] 
		\tikzstyle{vertex} = [shape = circle, fill, minimum size = 1pt, inner sep = 1pt] 
		\node[vertex] (G--2) at (1.5, -1) [shape = circle, draw] {}; 
		\node[vertex] (G--1) at (0.0, -1) [shape = circle, draw] {}; 
		\node[vertex] (G-1) at (0.0, 1) [shape = circle, draw] {}; 
		\node[vertex] (G-2) at (1.5, 1) [shape = circle, draw] {}; 
		\draw[] (G-1) .. controls +(0.5, -0.5) and +(-0.5, -0.5) .. (G-2); 
	\end{tikzpicture} \\
	&Q^{[\N]}_{\vactab_2 \vactab_2} =
	\frac{1}{N^{3}}\begin{tikzpicture}[scale = 0.25  , baseline={(0,-1ex/2)}] 
		\tikzstyle{vertex} = [shape = circle, minimum size = 1pt, inner sep = 1pt] 
		\node[vertex] (G--2) at (1.5, -1) [shape = circle, fill, draw] {}; 
		\node[vertex] (G--1) at (0.0, -1) [shape = circle, fill, draw] {}; 
		\node[vertex] (G-1) at (0.0, 1) [shape = circle, fill, draw] {}; 
		\node[vertex] (G-2) at (1.5, 1) [shape = circle, fill, draw] {}; 
	\end{tikzpicture} - \frac{1}{N^{2}}\begin{tikzpicture}[scale = 0.25  , baseline={(0,-1ex/2)}] 
		\tikzstyle{vertex} = [shape = circle, fill, minimum size = 1pt, inner sep = 1pt] 
		\node[vertex] (G--2) at (1.5, -1) [shape = circle, fill, draw] {}; 
		\node[vertex] (G--1) at (0.0, -1) [shape = circle, fill, draw] {}; 
		\node[vertex] (G-1) at (0.0, 1) [shape = circle, fill, draw] {}; 
		\node[vertex] (G-2) at (1.5, 1) [shape = circle, fill, draw] {}; 
		\draw[] (G-1) .. controls +(0.5, -0.5) and +(-0.5, -0.5) .. (G-2); 
	\end{tikzpicture} - \frac{1}{N^{2}}\begin{tikzpicture}[scale = 0.25  , baseline={(0,-1ex/2)}] 
		\tikzstyle{vertex} = [shape = circle, fill, minimum size = 1pt, inner sep = 1pt] 
		\node[vertex] (G--2) at (1.5, -1) [shape = circle, fill, draw] {}; 
		\node[vertex] (G--1) at (0.0, -1) [shape = circle, fill, draw] {}; 
		\node[vertex] (G-1) at (0.0, 1) [shape = circle, fill, draw] {}; 
		\node[vertex] (G-2) at (1.5, 1) [shape = circle, fill, draw] {}; 
		\draw[] (G--2) .. controls +(-0.5, 0.5) and +(0.5, 0.5) .. (G--1); 
	\end{tikzpicture} + \frac{1}{N}\begin{tikzpicture}[scale = 0.25  , baseline={(0,-1ex/2)}] 
		\tikzstyle{vertex} = [shape = circle, fill, minimum size = 1pt, inner sep = 1pt] 
		\node[vertex] (G--2) at (1.5, -1) [shape = circle, fill, draw] {}; 
		\node[vertex] (G--1) at (0.0, -1) [shape = circle, fill, draw] {}; 
		\node[vertex] (G-1) at (0.0, 1) [shape = circle, fill, draw] {}; 
		\node[vertex] (G-2) at (1.5, 1) [shape = circle, fill, draw] {}; 
		\draw[] (G--2) .. controls +(-0.5, 0.5) and +(0.5, 0.5) .. (G--1); 
		\draw[] (G-1) .. controls +(0.5, -0.5) and +(-0.5, -0.5) .. (G-2); 
	\end{tikzpicture}
\end{align}

For $\lambda = [\N-1,1]$ we have nine matrix units. Since $(Q^\gamma_{\alpha \beta})^T = Q^\lambda_{\beta \alpha}$ we only give those that are not related by diagram transposition.
\begin{align}
	&Q^{[\N-1,1]}_{\vactab_3 \vactab_3} =
	-\frac{1}{N^{3}}\begin{tikzpicture}[scale = 0.25, , baseline={(0,-1ex/2)}] 
		\tikzstyle{vertex} = [shape = circle, fill, inner sep = 1pt] 
		\node[vertex] (G--2) at (1.5, -1) [shape = circle, fill, draw] {}; 
		\node[vertex] (G--1) at (0.0, -1) [shape = circle, fill, draw] {}; 
		\node[vertex] (G-1) at (0.0, 1) [shape = circle, fill, draw] {}; 
		\node[vertex] (G-2) at (1.5, 1) [shape = circle, fill, draw] {}; 
	\end{tikzpicture} + \frac{1}{N^{2}}\begin{tikzpicture}[scale = 0.25, , baseline={(0,-1ex/2)}] 
		\tikzstyle{vertex} = [shape = circle, fill, inner sep = 1pt] 
		\node[vertex] (G--2) at (1.5, -1) [shape = circle, fill, draw] {}; 
		\node[vertex] (G-2) at (1.5, 1) [shape = circle, fill, draw] {}; 
		\node[vertex] (G--1) at (0.0, -1) [shape = circle, fill, draw] {}; 
		\node[vertex] (G-1) at (0.0, 1) [shape = circle, fill, draw] {}; 
		\draw[] (G-2) .. controls +(0, -1) and +(0, 1) .. (G--2); 
	\end{tikzpicture} \\
	&Q^{[\N-1,1]}_{\vactab_5 \vactab_3} =
	-\frac{1}{{\left(N - 1\right)} N}\begin{tikzpicture}[scale = 0.25, , baseline={(0,-1ex/2)}] 
		\tikzstyle{vertex} = [shape = circle, fill, inner sep = 1pt] 
		\node[vertex] (G--2) at (1.5, -1) [shape = circle, fill, draw] {}; 
		\node[vertex] (G--1) at (0.0, -1) [shape = circle, fill, draw] {}; 
		\node[vertex] (G-1) at (0.0, 1) [shape = circle, fill, draw] {}; 
		\node[vertex] (G-2) at (1.5, 1) [shape = circle, fill, draw] {}; 
	\end{tikzpicture} + \frac{1}{{\left(N - 1\right)} N}\begin{tikzpicture}[scale = 0.25, , baseline={(0,-1ex/2)}] 
		\tikzstyle{vertex} = [shape = circle, fill, inner sep = 1pt] 
		\node[vertex] (G--2) at (1.5, -1) [shape = circle, fill, draw] {}; 
		\node[vertex] (G--1) at (0.0, -1) [shape = circle, fill, draw] {}; 
		\node[vertex] (G-1) at (0.0, 1) [shape = circle, fill, draw] {}; 
		\node[vertex] (G-2) at (1.5, 1) [shape = circle, fill, draw] {}; 
		\draw[] (G-1) .. controls +(0.5, -0.5) and +(-0.5, -0.5) .. (G-2); 
	\end{tikzpicture} + \frac{1}{N}\begin{tikzpicture}[scale = 0.25, , baseline={(0,-1ex/2)}] 
		\tikzstyle{vertex} = [shape = circle, fill, inner sep = 1pt] 
		\node[vertex] (G--2) at (1.5, -1) [shape = circle, fill, draw] {}; 
		\node[vertex] (G-1) at (0.0, 1) [shape = circle, fill, draw] {}; 
		\node[vertex] (G--1) at (0.0, -1) [shape = circle, fill, draw] {}; 
		\node[vertex] (G-2) at (1.5, 1) [shape = circle, fill, draw] {}; 
		\draw[] (G-1) .. controls +(0.75, -1) and +(-0.75, 1) .. (G--2); 
	\end{tikzpicture} + -\frac{1}{N - 1}\begin{tikzpicture}[scale = 0.25, , baseline={(0,-1ex/2)}] 
		\tikzstyle{vertex} = [shape = circle, fill, inner sep = 1pt] 
		\node[vertex] (G--2) at (1.5, -1) [shape = circle, fill, draw] {}; 
		\node[vertex] (G-1) at (0.0, 1) [shape = circle, fill, draw] {}; 
		\node[vertex] (G-2) at (1.5, 1) [shape = circle, fill, draw] {}; 
		\node[vertex] (G--1) at (0.0, -1) [shape = circle, fill, draw] {}; 
		\draw[] (G-1) .. controls +(0.5, -0.5) and +(-0.5, -0.5) .. (G-2); 
		\draw[] (G-2) .. controls +(0, -1) and +(0, 1) .. (G--2); 
		\draw[] (G--2) .. controls +(-0.75, 1) and +(0.75, -1) .. (G-1); 
	\end{tikzpicture} + \frac{1}{{\left(N - 1\right)} N}\begin{tikzpicture}[scale = 0.25, , baseline={(0,-1ex/2)}] 
		\tikzstyle{vertex} = [shape = circle, fill, inner sep = 1pt] 
		\node[vertex] (G--2) at (1.5, -1) [shape = circle, fill, draw] {}; 
		\node[vertex] (G-2) at (1.5, 1) [shape = circle, fill, draw] {}; 
		\node[vertex] (G--1) at (0.0, -1) [shape = circle, fill, draw] {}; 
		\node[vertex] (G-1) at (0.0, 1) [shape = circle, fill, draw] {}; 
		\draw[] (G-2) .. controls +(0, -1) and +(0, 1) .. (G--2); 
	\end{tikzpicture}\\
	&\begin{aligned}Q^{[\N-1,1]}_{\vactab_5 \vactab_5} &=
	-\frac{1}{{\left(N - 1\right)}^{2} {\left(N - 2\right)}}\begin{tikzpicture}[scale = 0.25, , baseline={(0,-1ex/2)}] 
		\tikzstyle{vertex} = [shape = circle, fill, inner sep = 1pt] 
		\node[vertex] (G--2) at (1.5, -1) [shape = circle, fill, draw] {}; 
		\node[vertex] (G--1) at (0.0, -1) [shape = circle, fill, draw] {}; 
		\node[vertex] (G-1) at (0.0, 1) [shape = circle, fill, draw] {}; 
		\node[vertex] (G-2) at (1.5, 1) [shape = circle, fill, draw] {}; 
	\end{tikzpicture} + \frac{1}{{\left(N - 1\right)}^{2} {\left(N - 2\right)}}\begin{tikzpicture}[scale = 0.25, , baseline={(0,-1ex/2)}] 
		\tikzstyle{vertex} = [shape = circle, fill, inner sep = 1pt] 
		\node[vertex] (G--2) at (1.5, -1) [shape = circle, fill, draw] {}; 
		\node[vertex] (G--1) at (0.0, -1) [shape = circle, fill, draw] {}; 
		\node[vertex] (G-1) at (0.0, 1) [shape = circle, fill, draw] {}; 
		\node[vertex] (G-2) at (1.5, 1) [shape = circle, fill, draw] {}; 
		\draw[] (G-1) .. controls +(0.5, -0.5) and +(-0.5, -0.5) .. (G-2); 
	\end{tikzpicture} + \frac{1}{{\left(N - 2\right)} N}\begin{tikzpicture}[scale = 0.25, , baseline={(0,-1ex/2)}] 
		\tikzstyle{vertex} = [shape = circle, fill, inner sep = 1pt] 
		\node[vertex] (G--2) at (1.5, -1) [shape = circle, fill, draw] {}; 
		\node[vertex] (G--1) at (0.0, -1) [shape = circle, fill, draw] {}; 
		\node[vertex] (G-1) at (0.0, 1) [shape = circle, fill, draw] {}; 
		\node[vertex] (G-2) at (1.5, 1) [shape = circle, fill, draw] {}; 
		\draw[] (G-1) .. controls +(0, -1) and +(0, 1) .. (G--1); 
	\end{tikzpicture} \\
&-\frac{1}{{\left(N - 1\right)} {\left(N - 2\right)}}\begin{tikzpicture}[scale = 0.25, , baseline={(0,-1ex/2)}] 
		\tikzstyle{vertex} = [shape = circle, fill, inner sep = 1pt] 
		\node[vertex] (G--2) at (1.5, -1) [shape = circle, fill, draw] {}; 
		\node[vertex] (G--1) at (0.0, -1) [shape = circle, fill, draw] {}; 
		\node[vertex] (G-1) at (0.0, 1) [shape = circle, fill, draw] {}; 
		\node[vertex] (G-2) at (1.5, 1) [shape = circle, fill, draw] {}; 
		\draw[] (G-1) .. controls +(0.5, -0.5) and +(-0.5, -0.5) .. (G-2); 
		\draw[] (G-2) .. controls +(-0.75, -1) and +(0.75, 1) .. (G--1); 
		\draw[] (G--1) .. controls +(0, 1) and +(0, -1) .. (G-1); 
	\end{tikzpicture} + \frac{1}{{\left(N - 1\right)} {\left(N - 2\right)} N}\begin{tikzpicture}[scale = 0.25, , baseline={(0,-1ex/2)}] 
		\tikzstyle{vertex} = [shape = circle, fill, inner sep = 1pt] 
		\node[vertex] (G--2) at (1.5, -1) [shape = circle, fill, draw] {}; 
		\node[vertex] (G--1) at (0.0, -1) [shape = circle, fill, draw] {}; 
		\node[vertex] (G-2) at (1.5, 1) [shape = circle, fill, draw] {}; 
		\node[vertex] (G-1) at (0.0, 1) [shape = circle, fill, draw] {}; 
		\draw[] (G-2) .. controls +(-0.75, -1) and +(0.75, 1) .. (G--1); 
	\end{tikzpicture} + \frac{1}{{\left(N - 1\right)}^{2} {\left(N - 2\right)}}\begin{tikzpicture}[scale = 0.25, , baseline={(0,-1ex/2)}] 
		\tikzstyle{vertex} = [shape = circle, fill, inner sep = 1pt] 
		\node[vertex] (G--2) at (1.5, -1) [shape = circle, fill, draw] {}; 
		\node[vertex] (G--1) at (0.0, -1) [shape = circle, fill, draw] {}; 
		\node[vertex] (G-1) at (0.0, 1) [shape = circle, fill, draw] {}; 
		\node[vertex] (G-2) at (1.5, 1) [shape = circle, fill, draw] {}; 
		\draw[] (G--2) .. controls +(-0.5, 0.5) and +(0.5, 0.5) .. (G--1); 
	\end{tikzpicture} \\
&-\frac{1}{{\left(N - 1\right)}^{2} {\left(N - 2\right)}}\begin{tikzpicture}[scale = 0.25, , baseline={(0,-1ex/2)}] 
		\tikzstyle{vertex} = [shape = circle, fill, inner sep = 1pt] 
		\node[vertex] (G--2) at (1.5, -1) [shape = circle, fill, draw] {}; 
		\node[vertex] (G--1) at (0.0, -1) [shape = circle, fill, draw] {}; 
		\node[vertex] (G-1) at (0.0, 1) [shape = circle, fill, draw] {}; 
		\node[vertex] (G-2) at (1.5, 1) [shape = circle, fill, draw] {}; 
		\draw[] (G--2) .. controls +(-0.5, 0.5) and +(0.5, 0.5) .. (G--1); 
		\draw[] (G-1) .. controls +(0.5, -0.5) and +(-0.5, -0.5) .. (G-2); 
	\end{tikzpicture} -\frac{1}{{\left(N - 1\right)} {\left(N - 2\right)}}\begin{tikzpicture}[scale = 0.25, , baseline={(0,-1ex/2)}] 
		\tikzstyle{vertex} = [shape = circle, fill, inner sep = 1pt] 
		\node[vertex] (G--2) at (1.5, -1) [shape = circle, fill, draw] {}; 
		\node[vertex] (G--1) at (0.0, -1) [shape = circle, fill, draw] {}; 
		\node[vertex] (G-1) at (0.0, 1) [shape = circle, fill, draw] {}; 
		\node[vertex] (G-2) at (1.5, 1) [shape = circle, fill, draw] {}; 
		\draw[] (G-1) .. controls +(0.75, -1) and +(-0.75, 1) .. (G--2); 
		\draw[] (G--2) .. controls +(-0.5, 0.5) and +(0.5, 0.5) .. (G--1); 
		\draw[] (G--1) .. controls +(0, 1) and +(0, -1) .. (G-1); 
	\end{tikzpicture} + \frac{N}{{\left(N - 1\right)}^{2} {\left(N - 2\right)}}\begin{tikzpicture}[scale = 0.25, , baseline={(0,-1ex/2)}] 
		\tikzstyle{vertex} = [shape = circle, fill, inner sep = 1pt] 
		\node[vertex] (G--2) at (1.5, -1) [shape = circle, fill, draw] {}; 
		\node[vertex] (G--1) at (0.0, -1) [shape = circle, fill, draw] {}; 
		\node[vertex] (G-1) at (0.0, 1) [shape = circle, fill, draw] {}; 
		\node[vertex] (G-2) at (1.5, 1) [shape = circle, fill, draw] {}; 
		\draw[] (G-1) .. controls +(0.5, -0.5) and +(-0.5, -0.5) .. (G-2); 
		\draw[] (G-2) .. controls +(0, -1) and +(0, 1) .. (G--2); 
		\draw[] (G--2) .. controls +(-0.5, 0.5) and +(0.5, 0.5) .. (G--1); 
		\draw[] (G--1) .. controls +(0, 1) and +(0, -1) .. (G-1); 
	\end{tikzpicture} \\
&-\frac{1}{{\left(N - 1\right)}^{2} {\left(N - 2\right)}}\begin{tikzpicture}[scale = 0.25, , baseline={(0,-1ex/2)}] 
		\tikzstyle{vertex} = [shape = circle, fill, inner sep = 1pt] 
		\node[vertex] (G--2) at (1.5, -1) [shape = circle, fill, draw] {}; 
		\node[vertex] (G--1) at (0.0, -1) [shape = circle, fill, draw] {}; 
		\node[vertex] (G-2) at (1.5, 1) [shape = circle, fill, draw] {}; 
		\node[vertex] (G-1) at (0.0, 1) [shape = circle, fill, draw] {}; 
		\draw[] (G-2) .. controls +(0, -1) and +(0, 1) .. (G--2); 
		\draw[] (G--2) .. controls +(-0.5, 0.5) and +(0.5, 0.5) .. (G--1); 
		\draw[] (G--1) .. controls +(0.75, 1) and +(-0.75, -1) .. (G-2); 
	\end{tikzpicture} + \frac{1}{{\left(N - 1\right)} {\left(N - 2\right)} N}\begin{tikzpicture}[scale = 0.25, , baseline={(0,-1ex/2)}] 
		\tikzstyle{vertex} = [shape = circle, fill, inner sep = 1pt] 
		\node[vertex] (G--2) at (1.5, -1) [shape = circle, fill, draw] {}; 
		\node[vertex] (G-1) at (0.0, 1) [shape = circle, fill, draw] {}; 
		\node[vertex] (G--1) at (0.0, -1) [shape = circle, fill, draw] {}; 
		\node[vertex] (G-2) at (1.5, 1) [shape = circle, fill, draw] {}; 
		\draw[] (G-1) .. controls +(0.75, -1) and +(-0.75, 1) .. (G--2); 
	\end{tikzpicture} -\frac{1}{{\left(N - 1\right)}^{2} {\left(N - 2\right)}}\begin{tikzpicture}[scale = 0.25, , baseline={(0,-1ex/2)}] 
		\tikzstyle{vertex} = [shape = circle, fill, inner sep = 1pt] 
		\node[vertex] (G--2) at (1.5, -1) [shape = circle, fill, draw] {}; 
		\node[vertex] (G-1) at (0.0, 1) [shape = circle, fill, draw] {}; 
		\node[vertex] (G-2) at (1.5, 1) [shape = circle, fill, draw] {}; 
		\node[vertex] (G--1) at (0.0, -1) [shape = circle, fill, draw] {}; 
		\draw[] (G-1) .. controls +(0.5, -0.5) and +(-0.5, -0.5) .. (G-2); 
		\draw[] (G-2) .. controls +(0, -1) and +(0, 1) .. (G--2); 
		\draw[] (G--2) .. controls +(-0.75, 1) and +(0.75, -1) .. (G-1); 
	\end{tikzpicture} \\
&+ \frac{1}{{\left(N - 1\right)}^{2} {\left(N - 2\right)} N}\begin{tikzpicture}[scale = 0.25, , baseline={(0,-1ex/2)}] 
		\tikzstyle{vertex} = [shape = circle, fill, inner sep = 1pt] 
		\node[vertex] (G--2) at (1.5, -1) [shape = circle, fill, draw] {}; 
		\node[vertex] (G-2) at (1.5, 1) [shape = circle, fill, draw] {}; 
		\node[vertex] (G--1) at (0.0, -1) [shape = circle, fill, draw] {}; 
		\node[vertex] (G-1) at (0.0, 1) [shape = circle, fill, draw] {}; 
		\draw[] (G-2) .. controls +(0, -1) and +(0, 1) .. (G--2); 
	\end{tikzpicture}
\end{aligned} \\
	&Q^{[\N-1,1]}_{\vactab_4 \vactab_3} =
	\frac{1}{N^{3}}\begin{tikzpicture}[scale = 0.25, , baseline={(0,-1ex/2)}] 
		\tikzstyle{vertex} = [shape = circle, fill, inner sep = 1pt] 
		\node[vertex] (G--2) at (1.5, -1) [shape = circle, fill, draw] {}; 
		\node[vertex] (G--1) at (0.0, -1) [shape = circle, fill, draw] {}; 
		\node[vertex] (G-1) at (0.0, 1) [shape = circle, fill, draw] {}; 
		\node[vertex] (G-2) at (1.5, 1) [shape = circle, fill, draw] {}; 
	\end{tikzpicture} - \frac{1}{N^{2}}\begin{tikzpicture}[scale = 0.25, , baseline={(0,-1ex/2)}] 
		\tikzstyle{vertex} = [shape = circle, fill, inner sep = 1pt] 
		\node[vertex] (G--2) at (1.5, -1) [shape = circle, fill, draw] {}; 
		\node[vertex] (G--1) at (0.0, -1) [shape = circle, fill, draw] {}; 
		\node[vertex] (G-1) at (0.0, 1) [shape = circle, fill, draw] {}; 
		\node[vertex] (G-2) at (1.5, 1) [shape = circle, fill, draw] {}; 
		\draw[] (G-1) .. controls +(0.5, -0.5) and +(-0.5, -0.5) .. (G-2); 
	\end{tikzpicture} + \frac{1}{N}\begin{tikzpicture}[scale = 0.25, , baseline={(0,-1ex/2)}] 
		\tikzstyle{vertex} = [shape = circle, fill, inner sep = 1pt] 
		\node[vertex] (G--2) at (1.5, -1) [shape = circle, fill, draw] {}; 
		\node[vertex] (G-1) at (0.0, 1) [shape = circle, fill, draw] {}; 
		\node[vertex] (G-2) at (1.5, 1) [shape = circle, fill, draw] {}; 
		\node[vertex] (G--1) at (0.0, -1) [shape = circle, fill, draw] {}; 
		\draw[] (G-1) .. controls +(0.5, -0.5) and +(-0.5, -0.5) .. (G-2); 
		\draw[] (G-2) .. controls +(0, -1) and +(0, 1) .. (G--2); 
		\draw[] (G--2) .. controls +(-0.75, 1) and +(0.75, -1) .. (G-1); 
	\end{tikzpicture} - \frac{1}{N^{2}}\begin{tikzpicture}[scale = 0.25, , baseline={(0,-1ex/2)}] 
		\tikzstyle{vertex} = [shape = circle, fill, inner sep = 1pt] 
		\node[vertex] (G--2) at (1.5, -1) [shape = circle, fill, draw] {}; 
		\node[vertex] (G-2) at (1.5, 1) [shape = circle, fill, draw] {}; 
		\node[vertex] (G--1) at (0.0, -1) [shape = circle, fill, draw] {}; 
		\node[vertex] (G-1) at (0.0, 1) [shape = circle, fill, draw] {}; 
		\draw[] (G-2) .. controls +(0, -1) and +(0, 1) .. (G--2); 
	\end{tikzpicture}\\
	&\begin{aligned}Q^{[\N-1,1]}_{\vactab_4 \vactab_5} &=
	\frac{1}{{\left(N - 1\right)} N}\begin{tikzpicture}[scale = 0.25, , baseline={(0,-1ex/2)}] 
		\tikzstyle{vertex} = [shape = circle, fill, inner sep = 1pt] 
		\node[vertex] (G--2) at (1.5, -1) [shape = circle, fill, draw] {}; 
		\node[vertex] (G--1) at (0.0, -1) [shape = circle, fill, draw] {}; 
		\node[vertex] (G-1) at (0.0, 1) [shape = circle, fill, draw] {}; 
		\node[vertex] (G-2) at (1.5, 1) [shape = circle, fill, draw] {}; 
	\end{tikzpicture} -\frac{1}{N - 1}\begin{tikzpicture}[scale = 0.25, , baseline={(0,-1ex/2)}] 
		\tikzstyle{vertex} = [shape = circle, fill, inner sep = 1pt] 
		\node[vertex] (G--2) at (1.5, -1) [shape = circle, fill, draw] {}; 
		\node[vertex] (G--1) at (0.0, -1) [shape = circle, fill, draw] {}; 
		\node[vertex] (G-1) at (0.0, 1) [shape = circle, fill, draw] {}; 
		\node[vertex] (G-2) at (1.5, 1) [shape = circle, fill, draw] {}; 
		\draw[] (G-1) .. controls +(0.5, -0.5) and +(-0.5, -0.5) .. (G-2); 
	\end{tikzpicture} + \begin{tikzpicture}[scale = 0.25, , baseline={(0,-1ex/2)}] 
		\tikzstyle{vertex} = [shape = circle, fill, inner sep = 1pt] 
		\node[vertex] (G--2) at (1.5, -1) [shape = circle, fill, draw] {}; 
		\node[vertex] (G--1) at (0.0, -1) [shape = circle, fill, draw] {}; 
		\node[vertex] (G-1) at (0.0, 1) [shape = circle, fill, draw] {}; 
		\node[vertex] (G-2) at (1.5, 1) [shape = circle, fill, draw] {}; 
		\draw[] (G-1) .. controls +(0.5, -0.5) and +(-0.5, -0.5) .. (G-2); 
		\draw[] (G-2) .. controls +(-0.75, -1) and +(0.75, 1) .. (G--1); 
		\draw[] (G--1) .. controls +(0, 1) and +(0, -1) .. (G-1); 
	\end{tikzpicture} \\
&-\frac{1}{N}\begin{tikzpicture}[scale = 0.25, , baseline={(0,-1ex/2)}] 
		\tikzstyle{vertex} = [shape = circle, fill, inner sep = 1pt] 
		\node[vertex] (G--2) at (1.5, -1) [shape = circle, fill, draw] {}; 
		\node[vertex] (G--1) at (0.0, -1) [shape = circle, fill, draw] {}; 
		\node[vertex] (G-2) at (1.5, 1) [shape = circle, fill, draw] {}; 
		\node[vertex] (G-1) at (0.0, 1) [shape = circle, fill, draw] {}; 
		\draw[] (G-2) .. controls +(-0.75, -1) and +(0.75, 1) .. (G--1); 
	\end{tikzpicture} -\frac{1}{{\left(N - 1\right)} N}\begin{tikzpicture}[scale = 0.25, , baseline={(0,-1ex/2)}] 
		\tikzstyle{vertex} = [shape = circle, fill, inner sep = 1pt] 
		\node[vertex] (G--2) at (1.5, -1) [shape = circle, fill, draw] {}; 
		\node[vertex] (G--1) at (0.0, -1) [shape = circle, fill, draw] {}; 
		\node[vertex] (G-1) at (0.0, 1) [shape = circle, fill, draw] {}; 
		\node[vertex] (G-2) at (1.5, 1) [shape = circle, fill, draw] {}; 
		\draw[] (G--2) .. controls +(-0.5, 0.5) and +(0.5, 0.5) .. (G--1); 
	\end{tikzpicture} + \frac{1}{N - 1}\begin{tikzpicture}[scale = 0.25, , baseline={(0,-1ex/2)}] 
		\tikzstyle{vertex} = [shape = circle, fill, inner sep = 1pt] 
		\node[vertex] (G--2) at (1.5, -1) [shape = circle, fill, draw] {}; 
		\node[vertex] (G--1) at (0.0, -1) [shape = circle, fill, draw] {}; 
		\node[vertex] (G-1) at (0.0, 1) [shape = circle, fill, draw] {}; 
		\node[vertex] (G-2) at (1.5, 1) [shape = circle, fill, draw] {}; 
		\draw[] (G--2) .. controls +(-0.5, 0.5) and +(0.5, 0.5) .. (G--1); 
		\draw[] (G-1) .. controls +(0.5, -0.5) and +(-0.5, -0.5) .. (G-2); 
	\end{tikzpicture} \\
&-\frac{N}{N - 1}\begin{tikzpicture}[scale = 0.25, , baseline={(0,-1ex/2)}] 
		\tikzstyle{vertex} = [shape = circle, fill, inner sep = 1pt] 
		\node[vertex] (G--2) at (1.5, -1) [shape = circle, fill, draw] {}; 
		\node[vertex] (G--1) at (0.0, -1) [shape = circle, fill, draw] {}; 
		\node[vertex] (G-1) at (0.0, 1) [shape = circle, fill, draw] {}; 
		\node[vertex] (G-2) at (1.5, 1) [shape = circle, fill, draw] {}; 
		\draw[] (G-1) .. controls +(0.5, -0.5) and +(-0.5, -0.5) .. (G-2); 
		\draw[] (G-2) .. controls +(0, -1) and +(0, 1) .. (G--2); 
		\draw[] (G--2) .. controls +(-0.5, 0.5) and +(0.5, 0.5) .. (G--1); 
		\draw[] (G--1) .. controls +(0, 1) and +(0, -1) .. (G-1); 
	\end{tikzpicture} + \frac{1}{N - 1}\begin{tikzpicture}[scale = 0.25, , baseline={(0,-1ex/2)}] 
		\tikzstyle{vertex} = [shape = circle, fill, inner sep = 1pt] 
		\node[vertex] (G--2) at (1.5, -1) [shape = circle, fill, draw] {}; 
		\node[vertex] (G--1) at (0.0, -1) [shape = circle, fill, draw] {}; 
		\node[vertex] (G-2) at (1.5, 1) [shape = circle, fill, draw] {}; 
		\node[vertex] (G-1) at (0.0, 1) [shape = circle, fill, draw] {}; 
		\draw[] (G-2) .. controls +(0, -1) and +(0, 1) .. (G--2); 
		\draw[] (G--2) .. controls +(-0.5, 0.5) and +(0.5, 0.5) .. (G--1); 
		\draw[] (G--1) .. controls +(0.75, 1) and +(-0.75, -1) .. (G-2); 
	\end{tikzpicture} + \frac{1}{N - 1}\begin{tikzpicture}[scale = 0.25, , baseline={(0,-1ex/2)}] 
		\tikzstyle{vertex} = [shape = circle, fill, inner sep = 1pt] 
		\node[vertex] (G--2) at (1.5, -1) [shape = circle, fill, draw] {}; 
		\node[vertex] (G-1) at (0.0, 1) [shape = circle, fill, draw] {}; 
		\node[vertex] (G-2) at (1.5, 1) [shape = circle, fill, draw] {}; 
		\node[vertex] (G--1) at (0.0, -1) [shape = circle, fill, draw] {}; 
		\draw[] (G-1) .. controls +(0.5, -0.5) and +(-0.5, -0.5) .. (G-2); 
		\draw[] (G-2) .. controls +(0, -1) and +(0, 1) .. (G--2); 
		\draw[] (G--2) .. controls +(-0.75, 1) and +(0.75, -1) .. (G-1); 
	\end{tikzpicture} \\
&-\frac{1}{{\left(N - 1\right)} N}\begin{tikzpicture}[scale = 0.25, , baseline={(0,-1ex/2)}] 
		\tikzstyle{vertex} = [shape = circle, fill, inner sep = 1pt] 
		\node[vertex] (G--2) at (1.5, -1) [shape = circle, fill, draw] {}; 
		\node[vertex] (G-2) at (1.5, 1) [shape = circle, fill, draw] {}; 
		\node[vertex] (G--1) at (0.0, -1) [shape = circle, fill, draw] {}; 
		\node[vertex] (G-1) at (0.0, 1) [shape = circle, fill, draw] {}; 
		\draw[] (G-2) .. controls +(0, -1) and +(0, 1) .. (G--2); 
	\end{tikzpicture}
	\end{aligned}\\
	&\begin{aligned}Q^{[\N-1,1]}_{\vactab_4 \vactab_4} &=
	-\frac{1}{N^{3}}\begin{tikzpicture}[scale = 0.25, , baseline={(0,-1ex/2)}] 
		\tikzstyle{vertex} = [shape = circle, fill, inner sep = 1pt] 
		\node[vertex] (G--2) at (1.5, -1) [shape = circle, fill, draw] {}; 
		\node[vertex] (G--1) at (0.0, -1) [shape = circle, fill, draw] {}; 
		\node[vertex] (G-1) at (0.0, 1) [shape = circle, fill, draw] {}; 
		\node[vertex] (G-2) at (1.5, 1) [shape = circle, fill, draw] {}; 
	\end{tikzpicture} + \frac{1}{N^{2}}\begin{tikzpicture}[scale = 0.25, , baseline={(0,-1ex/2)}] 
		\tikzstyle{vertex} = [shape = circle, fill, inner sep = 1pt] 
		\node[vertex] (G--2) at (1.5, -1) [shape = circle, fill, draw] {}; 
		\node[vertex] (G--1) at (0.0, -1) [shape = circle, fill, draw] {}; 
		\node[vertex] (G-1) at (0.0, 1) [shape = circle, fill, draw] {}; 
		\node[vertex] (G-2) at (1.5, 1) [shape = circle, fill, draw] {}; 
		\draw[] (G-1) .. controls +(0.5, -0.5) and +(-0.5, -0.5) .. (G-2); 
	\end{tikzpicture} + \frac{1}{N^{2}}\begin{tikzpicture}[scale = 0.25,  baseline={(0,-1ex/2)}] 
		\tikzstyle{vertex} = [shape = circle, fill, inner sep = 1pt] 
		\node[vertex] (G--2) at (1.5, -1) [shape = circle, fill, draw] {}; 
		\node[vertex] (G--1) at (0.0, -1) [shape = circle, fill, draw] {}; 
		\node[vertex] (G-1) at (0.0, 1) [shape = circle, fill, draw] {}; 
		\node[vertex] (G-2) at (1.5, 1) [shape = circle, fill, draw] {}; 
		\draw[] (G--2) .. controls +(-0.5, 0.5) and +(0.5, 0.5) .. (G--1); 
	\end{tikzpicture} -\frac{1}{N}\begin{tikzpicture}[scale = 0.25,  baseline={(0,-1ex/2)}] 
		\tikzstyle{vertex} = [shape = circle, fill, inner sep = 1pt] 
		\node[vertex] (G--2) at (1.5, -1) [shape = circle, fill, draw] {}; 
		\node[vertex] (G--1) at (0.0, -1) [shape = circle, fill, draw] {}; 
		\node[vertex] (G-1) at (0.0, 1) [shape = circle, fill, draw] {}; 
		\node[vertex] (G-2) at (1.5, 1) [shape = circle, fill, draw] {}; 
		\draw[] (G--2) .. controls +(-0.5, 0.5) and +(0.5, 0.5) .. (G--1); 
		\draw[] (G-1) .. controls +(0.5, -0.5) and +(-0.5, -0.5) .. (G-2); 
	\end{tikzpicture} \\
&+ \begin{tikzpicture}[scale = 0.25,  baseline={(0,-1ex/2)}] 
		\tikzstyle{vertex} = [shape = circle, fill, inner sep = 1pt] 
		\node[vertex] (G--2) at (1.5, -1) [shape = circle, fill, draw] {}; 
		\node[vertex] (G--1) at (0.0, -1) [shape = circle, fill, draw] {}; 
		\node[vertex] (G-1) at (0.0, 1) [shape = circle, fill, draw] {}; 
		\node[vertex] (G-2) at (1.5, 1) [shape = circle, fill, draw] {}; 
		\draw[] (G-1) .. controls +(0.5, -0.5) and +(-0.5, -0.5) .. (G-2); 
		\draw[] (G-2) .. controls +(0, -1) and +(0, 1) .. (G--2); 
		\draw[] (G--2) .. controls +(-0.5, 0.5) and +(0.5, 0.5) .. (G--1); 
		\draw[] (G--1) .. controls +(0, 1) and +(0, -1) .. (G-1); 
	\end{tikzpicture} -\frac{1}{N}\begin{tikzpicture}[scale = 0.25,  baseline={(0,-1ex/2)}] 
		\tikzstyle{vertex} = [shape = circle, fill, inner sep = 1pt] 
		\node[vertex] (G--2) at (1.5, -1) [shape = circle, fill, draw] {}; 
		\node[vertex] (G--1) at (0.0, -1) [shape = circle, fill, draw] {}; 
		\node[vertex] (G-2) at (1.5, 1) [shape = circle, fill, draw] {}; 
		\node[vertex] (G-1) at (0.0, 1) [shape = circle, fill, draw] {}; 
		\draw[] (G-2) .. controls +(0, -1) and +(0, 1) .. (G--2); 
		\draw[] (G--2) .. controls +(-0.5, 0.5) and +(0.5, 0.5) .. (G--1); 
		\draw[] (G--1) .. controls +(0.75, 1) and +(-0.75, -1) .. (G-2); 
	\end{tikzpicture} -\frac{1}{N}\begin{tikzpicture}[scale = 0.25,  baseline={(0,-1ex/2)}] 
		\tikzstyle{vertex} = [shape = circle, fill, inner sep = 1pt] 
		\node[vertex] (G--2) at (1.5, -1) [shape = circle, fill, draw] {}; 
		\node[vertex] (G-1) at (0.0, 1) [shape = circle, fill, draw] {}; 
		\node[vertex] (G-2) at (1.5, 1) [shape = circle, fill, draw] {}; 
		\node[vertex] (G--1) at (0.0, -1) [shape = circle, fill, draw] {}; 
		\draw[] (G-1) .. controls +(0.5, -0.5) and +(-0.5, -0.5) .. (G-2); 
		\draw[] (G-2) .. controls +(0, -1) and +(0, 1) .. (G--2); 
		\draw[] (G--2) .. controls +(-0.75, 1) and +(0.75, -1) .. (G-1); 
	\end{tikzpicture} + \frac{1}{N^{2}}\begin{tikzpicture}[scale = 0.25,  baseline={(0,-1ex/2)}] 
		\tikzstyle{vertex} = [shape = circle, fill, inner sep = 1pt] 
		\node[vertex] (G--2) at (1.5, -1) [shape = circle, fill, draw] {}; 
		\node[vertex] (G-2) at (1.5, 1) [shape = circle, fill, draw] {}; 
		\node[vertex] (G--1) at (0.0, -1) [shape = circle, fill, draw] {}; 
		\node[vertex] (G-1) at (0.0, 1) [shape = circle, fill, draw] {}; 
		\draw[] (G-2) .. controls +(0, -1) and +(0, 1) .. (G--2); 
	\end{tikzpicture}
	\end{aligned}
\end{align}
For $\lambda = [\N-2,2]$ there is only one matrix unit.
\begin{align}
	Q^{[\N-2,2]}_{\vactab_6 \vactab_6} &=  \frac{1}{{ (N - 1) } {( N - 2) }} \begin{tikzpicture}[scale = 0.25,  baseline={(0,-1ex/2)}] 
		\tikzstyle{vertex} = [shape = circle, fill,   inner sep = 1pt] 
		\node[vertex] (G--2) at (1.5, -1) [shape = circle, fill, draw] {}; 
		\node[vertex] (G--1) at (0.0, -1) [shape = circle, fill, draw] {}; 
		\node[vertex] (G-1) at (0.0, 1) [shape = circle, fill, draw] {}; 
		\node[vertex] (G-2) at (1.5, 1) [shape = circle, fill, draw] {}; 
	\end{tikzpicture}  -\frac{1}{{ (N - 1) } { (N - 2 )}} \begin{tikzpicture}[scale = 0.25,  baseline={(0,-1ex/2)}] 
		\tikzstyle{vertex} = [shape = circle, fill,   inner sep = 1pt] 
		\node[vertex] (G--2) at (1.5, -1) [shape = circle, fill, draw] {}; 
		\node[vertex] (G--1) at (0.0, -1) [shape = circle, fill, draw] {}; 
		\node[vertex] (G-1) at (0.0, 1) [shape = circle, fill, draw] {}; 
		\node[vertex] (G-2) at (1.5, 1) [shape = circle, fill, draw] {}; 
		\draw[] (G-1) .. controls +(0.5, -0.5) and +(-0.5, -0.5) .. (G-2); 
	\end{tikzpicture} -\frac{1}{2 \, { (N - 2 )}} \begin{tikzpicture}[scale = 0.25,  baseline={(0,-1ex/2)}] 
		\tikzstyle{vertex} = [shape = circle, fill,   inner sep = 1pt] 
		\node[vertex] (G--2) at (1.5, -1) [shape = circle, fill, draw] {}; 
		\node[vertex] (G--1) at (0.0, -1) [shape = circle, fill, draw] {}; 
		\node[vertex] (G-1) at (0.0, 1) [shape = circle, fill, draw] {}; 
		\node[vertex] (G-2) at (1.5, 1) [shape = circle, fill, draw] {}; 
		\draw[] (G-1) .. controls +(0, -1) and +(0, 1) .. (G--1); 
	\end{tikzpicture} \\
&+  \frac{1}{N - 2} \begin{tikzpicture}[scale = 0.25,  baseline={(0,-1ex/2)}] 
		\tikzstyle{vertex} = [shape = circle, fill,   inner sep = 1pt] 
		\node[vertex] (G--2) at (1.5, -1) [shape = circle, fill, draw] {}; 
		\node[vertex] (G--1) at (0.0, -1) [shape = circle, fill, draw] {}; 
		\node[vertex] (G-1) at (0.0, 1) [shape = circle, fill, draw] {}; 
		\node[vertex] (G-2) at (1.5, 1) [shape = circle, fill, draw] {}; 
		\draw[] (G-1) .. controls +(0.5, -0.5) and +(-0.5, -0.5) .. (G-2); 
		\draw[] (G-2) .. controls +(-0.75, -1) and +(0.75, 1) .. (G--1); 
		\draw[] (G--1) .. controls +(0, 1) and +(0, -1) .. (G-1); 
	\end{tikzpicture} -\frac{1}{2 \, {( N - 2) }} \begin{tikzpicture}[scale = 0.25,  baseline={(0,-1ex/2)}] 
		\tikzstyle{vertex} = [shape = circle, fill,   inner sep = 1pt] 
		\node[vertex] (G--2) at (1.5, -1) [shape = circle, fill, draw] {}; 
		\node[vertex] (G--1) at (0.0, -1) [shape = circle, fill, draw] {}; 
		\node[vertex] (G-2) at (1.5, 1) [shape = circle, fill, draw] {}; 
		\node[vertex] (G-1) at (0.0, 1) [shape = circle, fill, draw] {}; 
		\draw[] (G-2) .. controls +(-0.75, -1) and +(0.75, 1) .. (G--1); 
	\end{tikzpicture} -\frac{1}{{ (N - 1) } {( N - 2) }} \begin{tikzpicture}[scale = 0.25,  baseline={(0,-1ex/2)}] 
		\tikzstyle{vertex} = [shape = circle, fill,   inner sep = 1pt] 
		\node[vertex] (G--2) at (1.5, -1) [shape = circle, fill, draw] {}; 
		\node[vertex] (G--1) at (0.0, -1) [shape = circle, fill, draw] {}; 
		\node[vertex] (G-1) at (0.0, 1) [shape = circle, fill, draw] {}; 
		\node[vertex] (G-2) at (1.5, 1) [shape = circle, fill, draw] {}; 
		\draw[] (G--2) .. controls +(-0.5, 0.5) and +(0.5, 0.5) .. (G--1); 
	\end{tikzpicture} \\
&+  \frac{1}{{( N - 1 )} { (N - 2) }} \begin{tikzpicture}[scale = 0.25,  baseline={(0,-1ex/2)}] 
		\tikzstyle{vertex} = [shape = circle, fill,   inner sep = 1pt] 
		\node[vertex] (G--2) at (1.5, -1) [shape = circle, fill, draw] {}; 
		\node[vertex] (G--1) at (0.0, -1) [shape = circle, fill, draw] {}; 
		\node[vertex] (G-1) at (0.0, 1) [shape = circle, fill, draw] {}; 
		\node[vertex] (G-2) at (1.5, 1) [shape = circle, fill, draw] {}; 
		\draw[] (G--2) .. controls +(-0.5, 0.5) and +(0.5, 0.5) .. (G--1); 
		\draw[] (G-1) .. controls +(0.5, -0.5) and +(-0.5, -0.5) .. (G-2); 
	\end{tikzpicture} +  \frac{1}{N - 2} \begin{tikzpicture}[scale = 0.25,  baseline={(0,-1ex/2)}] 
		\tikzstyle{vertex} = [shape = circle, fill,   inner sep = 1pt] 
		\node[vertex] (G--2) at (1.5, -1) [shape = circle, fill, draw] {}; 
		\node[vertex] (G--1) at (0.0, -1) [shape = circle, fill, draw] {}; 
		\node[vertex] (G-1) at (0.0, 1) [shape = circle, fill, draw] {}; 
		\node[vertex] (G-2) at (1.5, 1) [shape = circle, fill, draw] {}; 
		\draw[] (G-1) .. controls +(0.75, -1) and +(-0.75, 1) .. (G--2); 
		\draw[] (G--2) .. controls +(-0.5, 0.5) and +(0.5, 0.5) .. (G--1); 
		\draw[] (G--1) .. controls +(0, 1) and +(0, -1) .. (G-1); 
	\end{tikzpicture} -\frac{N}{N - 2} \begin{tikzpicture}[scale = 0.25,  baseline={(0,-1ex/2)}] 
		\tikzstyle{vertex} = [shape = circle, fill,   inner sep = 1pt] 
		\node[vertex] (G--2) at (1.5, -1) [shape = circle, fill, draw] {}; 
		\node[vertex] (G--1) at (0.0, -1) [shape = circle, fill, draw] {}; 
		\node[vertex] (G-1) at (0.0, 1) [shape = circle, fill, draw] {}; 
		\node[vertex] (G-2) at (1.5, 1) [shape = circle, fill, draw] {}; 
		\draw[] (G-1) .. controls +(0.5, -0.5) and +(-0.5, -0.5) .. (G-2); 
		\draw[] (G-2) .. controls +(0, -1) and +(0, 1) .. (G--2); 
		\draw[] (G--2) .. controls +(-0.5, 0.5) and +(0.5, 0.5) .. (G--1); 
		\draw[] (G--1) .. controls +(0, 1) and +(0, -1) .. (G-1); 
	\end{tikzpicture} \\
&+  \frac{1}{N - 2} \begin{tikzpicture}[scale = 0.25,  baseline={(0,-1ex/2)}] 
		\tikzstyle{vertex} = [shape = circle, fill,   inner sep = 1pt] 
		\node[vertex] (G--2) at (1.5, -1) [shape = circle, fill, draw] {}; 
		\node[vertex] (G--1) at (0.0, -1) [shape = circle, fill, draw] {}; 
		\node[vertex] (G-2) at (1.5, 1) [shape = circle, fill, draw] {}; 
		\node[vertex] (G-1) at (0.0, 1) [shape = circle, fill, draw] {}; 
		\draw[] (G-2) .. controls +(0, -1) and +(0, 1) .. (G--2); 
		\draw[] (G--2) .. controls +(-0.5, 0.5) and +(0.5, 0.5) .. (G--1); 
		\draw[] (G--1) .. controls +(0.75, 1) and +(-0.75, -1) .. (G-2); 
	\end{tikzpicture} -\frac{1}{2 \, { (N - 2) }} \begin{tikzpicture}[scale = 0.25,  baseline={(0,-1ex/2)}] 
		\tikzstyle{vertex} = [shape = circle, fill,   inner sep = 1pt] 
		\node[vertex] (G--2) at (1.5, -1) [shape = circle, fill, draw] {}; 
		\node[vertex] (G-1) at (0.0, 1) [shape = circle, fill, draw] {}; 
		\node[vertex] (G--1) at (0.0, -1) [shape = circle, fill, draw] {}; 
		\node[vertex] (G-2) at (1.5, 1) [shape = circle, fill, draw] {}; 
		\draw[] (G-1) .. controls +(0.75, -1) and +(-0.75, 1) .. (G--2); 
	\end{tikzpicture} + \frac{1}{2}\begin{tikzpicture}[scale = 0.25,  baseline={(0,-1ex/2)}] 
		\tikzstyle{vertex} = [shape = circle, fill,   inner sep = 1pt] 
		\node[vertex] (G--2) at (1.5, -1) [shape = circle, fill, draw] {}; 
		\node[vertex] (G-1) at (0.0, 1) [shape = circle, fill, draw] {}; 
		\node[vertex] (G--1) at (0.0, -1) [shape = circle, fill, draw] {}; 
		\node[vertex] (G-2) at (1.5, 1) [shape = circle, fill, draw] {}; 
		\draw[] (G-1) .. controls +(0.75, -1) and +(-0.75, 1) .. (G--2); 
		\draw[] (G-2) .. controls +(-0.75, -1) and +(0.75, 1) .. (G--1); 
	\end{tikzpicture} \\
&+  \frac{1}{N - 2} \begin{tikzpicture}[scale = 0.25,  baseline={(0,-1ex/2)}] 
		\tikzstyle{vertex} = [shape = circle, fill,   inner sep = 1pt] 
		\node[vertex] (G--2) at (1.5, -1) [shape = circle, fill, draw] {}; 
		\node[vertex] (G-1) at (0.0, 1) [shape = circle, fill, draw] {}; 
		\node[vertex] (G-2) at (1.5, 1) [shape = circle, fill, draw] {}; 
		\node[vertex] (G--1) at (0.0, -1) [shape = circle, fill, draw] {}; 
		\draw[] (G-1) .. controls +(0.5, -0.5) and +(-0.5, -0.5) .. (G-2); 
		\draw[] (G-2) .. controls +(0, -1) and +(0, 1) .. (G--2); 
		\draw[] (G--2) .. controls +(-0.75, 1) and +(0.75, -1) .. (G-1); 
	\end{tikzpicture} -\frac{1}{2 \, { (N - 2) }} \begin{tikzpicture}[scale = 0.25,  baseline={(0,-1ex/2)}] 
		\tikzstyle{vertex} = [shape = circle, fill,   inner sep = 1pt] 
		\node[vertex] (G--2) at (1.5, -1) [shape = circle, fill, draw] {}; 
		\node[vertex] (G-2) at (1.5, 1) [shape = circle, fill, draw] {}; 
		\node[vertex] (G--1) at (0.0, -1) [shape = circle, fill, draw] {}; 
		\node[vertex] (G-1) at (0.0, 1) [shape = circle, fill, draw] {}; 
		\draw[] (G-2) .. controls +(0, -1) and +(0, 1) .. (G--2); 
	\end{tikzpicture} + \frac{1}{2}\begin{tikzpicture}[scale = 0.25,  baseline={(0,-1ex/2)}] 
		\tikzstyle{vertex} = [shape = circle, fill,   inner sep = 1pt] 
		\node[vertex] (G--2) at (1.5, -1) [shape = circle, fill, draw] {}; 
		\node[vertex] (G-2) at (1.5, 1) [shape = circle, fill, draw] {}; 
		\node[vertex] (G--1) at (0.0, -1) [shape = circle, fill, draw] {}; 
		\node[vertex] (G-1) at (0.0, 1) [shape = circle, fill, draw] {}; 
		\draw[] (G-2) .. controls +(0, -1) and +(0, 1) .. (G--2); 
		\draw[] (G-1) .. controls +(0, -1) and +(0, 1) .. (G--1); 
	\end{tikzpicture}
\end{align}
For $\lambda = [\N-2,1,1]$ there is just one matrix unit as well
\begin{equation}
	Q^{[\N-2,1,1]}_{\vactab_7 \vactab_7} = 
	\frac{1}{2\N}\begin{tikzpicture}[scale = 0.25,  baseline={(0,-1ex/2)}] 
		\tikzstyle{vertex} = [shape = circle, fill,inner sep = 1pt] 
		\node[vertex] (G--2) at (1.5, -1) [shape = circle, fill, draw] {}; 
		\node[vertex] (G--1) at (0.0, -1) [shape = circle, fill, draw] {}; 
		\node[vertex] (G-1) at (0.0, 1) [shape = circle, fill, draw] {}; 
		\node[vertex] (G-2) at (1.5, 1) [shape = circle, fill, draw] {}; 
		\draw[] (G-1) .. controls +(0, -1) and +(0, 1) .. (G--1); 
	\end{tikzpicture} - \frac{1}{2\N}\begin{tikzpicture}[scale = 0.25,  baseline={(0,-1ex/2)}] 
		\tikzstyle{vertex} = [shape = circle, fill,inner sep = 1pt] 
		\node[vertex] (G--2) at (1.5, -1) [shape = circle, fill, draw] {}; 
		\node[vertex] (G--1) at (0.0, -1) [shape = circle, fill, draw] {}; 
		\node[vertex] (G-2) at (1.5, 1) [shape = circle, fill, draw] {}; 
		\node[vertex] (G-1) at (0.0, 1) [shape = circle, fill, draw] {}; 
		\draw[] (G-2) .. controls +(-0.75, -1) and +(0.75, 1) .. (G--1); 
	\end{tikzpicture} - \frac{1}{2\N}\begin{tikzpicture}[scale = 0.25,  baseline={(0,-1ex/2)}] 
		\tikzstyle{vertex} = [shape = circle, fill,inner sep = 1pt] 
		\node[vertex] (G--2) at (1.5, -1) [shape = circle, fill, draw] {}; 
		\node[vertex] (G-1) at (0.0, 1) [shape = circle, fill, draw] {}; 
		\node[vertex] (G--1) at (0.0, -1) [shape = circle, fill, draw] {}; 
		\node[vertex] (G-2) at (1.5, 1) [shape = circle, fill, draw] {}; 
		\draw[] (G-1) .. controls +(0.75, -1) and +(-0.75, 1) .. (G--2); 
	\end{tikzpicture} + \frac{1}{2}\begin{tikzpicture}[scale = 0.25,  baseline={(0,-1ex/2)}] 
		\tikzstyle{vertex} = [shape = circle, fill,inner sep = 1pt] 
		\node[vertex] (G--2) at (1.5, -1) [shape = circle, fill, draw] {}; 
		\node[vertex] (G-1) at (0.0, 1) [shape = circle, fill, draw] {}; 
		\node[vertex] (G--1) at (0.0, -1) [shape = circle, fill, draw] {}; 
		\node[vertex] (G-2) at (1.5, 1) [shape = circle, fill, draw] {}; 
		\draw[] (G-1) .. controls +(0.75, -1) and +(-0.75, 1) .. (G--2); 
		\draw[] (G-2) .. controls +(-0.75, -1) and +(0.75, 1) .. (G--1); 
	\end{tikzpicture} + \frac{1}{2\N}\begin{tikzpicture}[scale = 0.25,  baseline={(0,-1ex/2)}] 
		\tikzstyle{vertex} = [shape = circle, fill,inner sep = 1pt] 
		\node[vertex] (G--2) at (1.5, -1) [shape = circle, fill, draw] {}; 
		\node[vertex] (G-2) at (1.5, 1) [shape = circle, fill, draw] {}; 
		\node[vertex] (G--1) at (0.0, -1) [shape = circle, fill, draw] {}; 
		\node[vertex] (G-1) at (0.0, 1) [shape = circle, fill, draw] {}; 
		\draw[] (G-2) .. controls +(0, -1) and +(0, 1) .. (G--2); 
	\end{tikzpicture} - \frac{1}{2}\begin{tikzpicture}[scale = 0.25,  baseline={(0,-1ex/2)}] 
		\tikzstyle{vertex} = [shape = circle, fill,inner sep = 1pt] 
		\node[vertex] (G--2) at (1.5, -1) [shape = circle, fill, draw] {}; 
		\node[vertex] (G-2) at (1.5, 1) [shape = circle, fill, draw] {}; 
		\node[vertex] (G--1) at (0.0, -1) [shape = circle, fill, draw] {}; 
		\node[vertex] (G-1) at (0.0, 1) [shape = circle, fill, draw] {}; 
		\draw[] (G-2) .. controls +(0, -1) and +(0, 1) .. (G--2); 
		\draw[] (G-1) .. controls +(0, -1) and +(0, 1) .. (G--1); 
	\end{tikzpicture}
\end{equation}
Note that we have not included the normalization constants that we computed in section \ref{subsection: normalization}. Rather, we have chosen to tabulate the exact output of the construction algorithm for ease of comparison.

\chapter{Expectation values: Combinatorial algorithm (code)}\label{apx: EV code}
In this section we will describe the SageMath \cite{sagemath} code implementing the combinatorial algorithm discussed in section \ref{subsec: exp vals}. Roughly, the code has two parts. The first part uses the matrix units construction in \ref{sec: construction of units} to compute the 1-point and 2-point function of matrix elements. The second part uses the matrix units to compute expectation values of observables labelled by 1-row partition diagrams. The code can be found at \href{https://github.com/adrianpadellaro/PhD-Thesis}{Link to GitHub Repository}.

\section{Propagators}
Throughout this section and the code, we use the following alternative description of partitions $\lambda \vdash \N$, which is useful when describing large $\N$ partitions. Let $\lambda^{\#} = [\lambda_2, \dots, \lambda_l] \vdash k$. This corresponds to a partition $\lambda = [\lambda_1, \lambda_2, \dots, \lambda_l] \vdash \N$ where $\lambda_1 = \N - k$. For example, the empty partition $\lambda^{\#} = []$ corresponds to $[\N]$; the partition $[1]$ corresponds to $[\N-1,1]$; $[1,1]$ corresponds to $[\N-2,1,1]$ and $[2]$ corresponds to $[\N-2,2]$.

\sloppy
\captionsetup{format=nocaption,aboveskip=0pt,belowskip=0pt}
The first cell in the notebook defines a polynomial ring $\mathbb{Q}[\N]$ with variable $\N$. This ring is necessary to define the partition algebras $P_1(\N), P_2(\N)$.
\begin{tcolorbox}[breakable, size=fbox, boxrule=1pt, pad at break*=1mm,colback=cellbackground, colframe=cellborder]
	\prompt{In}{incolor}{1}{\boxspacing}
	\begin{Verbatim}[commandchars=\\\{\},fontsize=\small]
		\PY{c+c1}{\PYZsh{}\PYZsh{} Define Polynomial ring QQ[N], Partition algebras P\PYZus{}1(N) and P\PYZus{}2(N)}
		\PY{n}{R}\PY{o}{.}\PY{o}{\PYZlt{}}\PY{n}{N}\PY{o}{\PYZgt{}} \PY{o}{=} \PY{n}{QQ}\PY{p}{[}\PY{p}{]}
		\PY{n}{P1N} \PY{o}{=} \PY{n}{PartitionAlgebra}\PY{p}{(}\PY{l+m+mi}{1}\PY{p}{,}\PY{n}{N}\PY{p}{)}
		\PY{n}{P2N} \PY{o}{=} \PY{n}{PartitionAlgebra}\PY{p}{(}\PY{l+m+mi}{2}\PY{p}{,}\PY{n}{N}\PY{p}{)}
	\end{Verbatim}
\end{tcolorbox}

In the second cell, we define three functions that take in partitions (respectively irreducible representations of $\SN$) R1, R2, R3 and return the three factors defined in \eqref{eq: vactab projector}. We remark that R1, R2 should only take values from $[], [1]$; R3 can take values from $[],[1], [2], [1,1]$. The variables Z1, Z32, Z2 corresponds to the commuting elements defined in Theorem \ref{thm: murphys}, where we have removed the constant factor such that eigenvalues correspond to those of $P^{(1)}_\N,P^{(2)}_\N,P^{(1+\frac{1}{2})}_\N$. We construct the numerator and denominator of \eqref{eq: vactab projector} separately and return both. This will be useful in what follows.
\begin{tcolorbox}[breakable, size=fbox, boxrule=1pt, pad at break*=1mm,colback=cellbackground, colframe=cellborder]
	\prompt{In}{incolor}{2}{\boxspacing}
	\begin{Verbatim}[commandchars=\\\{\},fontsize=\small]
		\PY{c+c1}{\PYZsh{}\PYZsh{} We define the projector on the first slot in the vacillating tableau}
		\PY{k}{def} \PY{n+nf}{P\PYZus{}R1}\PY{p}{(}\PY{n}{R1}\PY{p}{)}\PY{p}{:}
		\PY{n}{R}\PY{o}{.}\PY{o}{\PYZlt{}}\PY{n}{N}\PY{o}{\PYZgt{}} \PY{o}{=} \PY{n}{QQ}\PY{p}{[}\PY{p}{]}
		\PY{c+c1}{\PYZsh{}\PYZsh{} Construct T2 \PYZbs{}otimes \PYZbs{}idn \PYZbs{}otimes \PYZbs{}idn defined in equation (3.42)}
		\PY{n}{Z1} \PY{o}{=} \PY{n}{P2N}\PY{p}{(}\PY{n+nb}{sum}\PY{p}{(}\PY{n}{P1N}\PY{o}{.}\PY{n}{jucys\PYZus{}murphy\PYZus{}element}\PY{p}{(}\PY{n}{i}\PY{o}{/}\PY{l+m+mi}{2}\PY{p}{)} \PY{k}{for} \PY{n}{i} \PY{o+ow}{in} \PY{p}{[}\PY{l+m+mf}{1.}\PY{o}{.}\PY{l+m+mi}{2}\PY{p}{]}\PY{p}{)}\PY{o}{+}\PY{p}{(}\PY{n}{N}\PY{o}{*}\PY{p}{(}\PY{n}{N}\PY{o}{\PYZhy{}}\PY{l+m+mi}{1}\PY{p}{)}\PY{o}{/}\PY{l+m+mi}{2}\PY{o}{\PYZhy{}}\PY{n}{N}\PY{p}{)}\PY{o}{*}\PY{n}{P1N}\PY{o}{.}\PY{n}{one}\PY{p}{(}\PY{p}{)}\PY{p}{)}
		\PY{c+c1}{\PYZsh{}\PYZsh{} Define the set of irreps and corresponding normalized characters to take a product over in equation (3.52)}
		\PY{n}{IrrepsEigenvaluesDictionary} \PY{o}{=} \PY{p}{\PYZob{}}\PY{n}{Partition}\PY{p}{(}\PY{p}{[}\PY{p}{]}\PY{p}{)}\PY{p}{:} \PY{n}{R}\PY{p}{(}\PY{l+m+mi}{1}\PY{o}{/}\PY{l+m+mi}{2}\PY{o}{*}\PY{p}{(}\PY{n}{N}\PY{o}{\PYZhy{}}\PY{l+m+mi}{1}\PY{p}{)}\PY{o}{*}\PY{n}{N}\PY{p}{)}\PY{p}{,} \PY{n}{Partition}\PY{p}{(}\PY{p}{[}\PY{l+m+mi}{1}\PY{p}{]}\PY{p}{)}\PY{p}{:} \PY{n}{R}\PY{p}{(}\PY{p}{(}\PY{n}{N}\PY{o}{\PYZhy{}}\PY{l+m+mi}{3}\PY{p}{)}\PY{o}{*}\PY{n}{N}\PY{o}{/}\PY{l+m+mi}{2}\PY{p}{)}\PY{p}{\PYZcb{}}
		\PY{n}{numerator} \PY{o}{=} \PY{n}{prod}\PY{p}{(}\PY{p}{(}\PY{n}{Z1}\PY{o}{\PYZhy{}}\PY{n}{ev2}\PY{o}{*}\PY{n}{P2N}\PY{o}{.}\PY{n}{one}\PY{p}{(}\PY{p}{)}\PY{p}{)} \PY{k}{for} \PY{p}{(}\PY{n}{rep}\PY{p}{,} \PY{n}{ev2}\PY{p}{)} \PY{o+ow}{in} \PY{n}{IrrepsEigenvaluesDictionary}\PY{o}{.}\PY{n}{items}\PY{p}{(}\PY{p}{)} \PY{k}{if} \PY{n}{rep} \PY{o}{!=} \PY{n}{R1}\PY{p}{)}
		\PY{n}{denom} \PY{o}{=} \PY{n}{prod}\PY{p}{(}\PY{n}{IrrepsEigenvaluesDictionary}\PY{p}{[}\PY{n}{R1}\PY{p}{]}\PY{o}{\PYZhy{}}\PY{n}{ev2} \PY{k}{for} \PY{p}{(}\PY{n}{rep}\PY{p}{,}\PY{n}{ev2}\PY{p}{)} \PY{o+ow}{in} \PY{n}{IrrepsEigenvaluesDictionary}\PY{o}{.}\PY{n}{items}\PY{p}{(}\PY{p}{)} \PY{k}{if} \PY{n}{rep} \PY{o}{!=} \PY{n}{R1}\PY{p}{)}
		\PY{k}{return} \PY{n}{numerator}\PY{p}{,} \PY{n}{denom}
		\PY{c+c1}{\PYZsh{}\PYZsh{} We define the projector on the second slot in the vacillating tableau}
		\PY{k}{def} \PY{n+nf}{P\PYZus{}R2}\PY{p}{(}\PY{n}{R2}\PY{p}{)}\PY{p}{:}
		\PY{n}{R}\PY{o}{.}\PY{o}{\PYZlt{}}\PY{n}{N}\PY{o}{\PYZgt{}} \PY{o}{=} \PY{n}{QQ}\PY{p}{[}\PY{p}{]}
		\PY{c+c1}{\PYZsh{}\PYZsh{} Construct T2 \PYZbs{}otimes \PYZbs{}idn \PYZbs{}otimes \PYZbs{}idn defined in equation (3.42)}
		\PY{n}{Z32} \PY{o}{=} \PY{n}{P2N}\PY{p}{(}\PY{n+nb}{sum}\PY{p}{(}\PY{n}{P2N}\PY{o}{.}\PY{n}{jucys\PYZus{}murphy\PYZus{}element}\PY{p}{(}\PY{n}{i}\PY{o}{/}\PY{l+m+mi}{2}\PY{p}{)} \PY{k}{for} \PY{n}{i} \PY{o+ow}{in} \PY{p}{[}\PY{l+m+mf}{1.}\PY{o}{.}\PY{l+m+mi}{3}\PY{p}{]}\PY{p}{)}\PY{o}{+}\PY{p}{(}\PY{n}{N}\PY{o}{*}\PY{p}{(}\PY{n}{N}\PY{o}{\PYZhy{}}\PY{l+m+mi}{1}\PY{p}{)}\PY{o}{/}\PY{l+m+mi}{2}\PY{o}{\PYZhy{}}\PY{l+m+mi}{2}\PY{o}{*}\PY{n}{N}\PY{o}{+}\PY{l+m+mi}{1}\PY{p}{)}\PY{o}{*}\PY{n}{P2N}\PY{o}{.}\PY{n}{one}\PY{p}{(}\PY{p}{)}\PY{p}{)}
		\PY{c+c1}{\PYZsh{}\PYZsh{} Define the set of irreps and corresponding normalized characters to take a product over in equation (3.52)}
		\PY{n}{IrrepsEigenvaluesDictionary} \PY{o}{=} \PY{p}{\PYZob{}}\PY{n}{Partition}\PY{p}{(}\PY{p}{[}\PY{p}{]}\PY{p}{)}\PY{p}{:} \PY{n}{R}\PY{p}{(}\PY{l+m+mi}{1}\PY{o}{/}\PY{l+m+mi}{2}\PY{o}{*}\PY{p}{(}\PY{n}{N}\PY{o}{\PYZhy{}}\PY{l+m+mi}{2}\PY{p}{)}\PY{o}{*}\PY{p}{(}\PY{n}{N}\PY{o}{\PYZhy{}}\PY{l+m+mi}{1}\PY{p}{)}\PY{p}{)}\PY{p}{,} \PY{n}{Partition}\PY{p}{(}\PY{p}{[}\PY{l+m+mi}{1}\PY{p}{]}\PY{p}{)}\PY{p}{:} \PY{n}{R}\PY{p}{(}\PY{p}{(}\PY{n}{N}\PY{o}{\PYZhy{}}\PY{l+m+mi}{4}\PY{p}{)}\PY{o}{*}\PY{p}{(}\PY{n}{N}\PY{o}{\PYZhy{}}\PY{l+m+mi}{1}\PY{p}{)}\PY{o}{/}\PY{l+m+mi}{2}\PY{p}{)}\PY{p}{\PYZcb{}}
		\PY{n}{numerator} \PY{o}{=} \PY{n}{prod}\PY{p}{(}\PY{p}{(}\PY{n}{Z32}\PY{o}{\PYZhy{}}\PY{n}{ev2}\PY{o}{*}\PY{n}{P2N}\PY{o}{.}\PY{n}{one}\PY{p}{(}\PY{p}{)}\PY{p}{)} \PY{k}{for} \PY{p}{(}\PY{n}{rep}\PY{p}{,} \PY{n}{ev2}\PY{p}{)} \PY{o+ow}{in} \PY{n}{IrrepsEigenvaluesDictionary}\PY{o}{.}\PY{n}{items}\PY{p}{(}\PY{p}{)} \PY{k}{if} \PY{n}{rep} \PY{o}{!=} \PY{n}{R2}\PY{p}{)}
		\PY{n}{denom} \PY{o}{=} \PY{n}{prod}\PY{p}{(}\PY{n}{IrrepsEigenvaluesDictionary}\PY{p}{[}\PY{n}{R2}\PY{p}{]}\PY{o}{\PYZhy{}}\PY{n}{ev2} \PY{k}{for} \PY{p}{(}\PY{n}{rep}\PY{p}{,}\PY{n}{ev2}\PY{p}{)} \PY{o+ow}{in} \PY{n}{IrrepsEigenvaluesDictionary}\PY{o}{.}\PY{n}{items}\PY{p}{(}\PY{p}{)} \PY{k}{if} \PY{n}{rep} \PY{o}{!=} \PY{n}{R2}\PY{p}{)}
		\PY{k}{return} \PY{n}{numerator}\PY{p}{,} \PY{n}{denom}
		\PY{c+c1}{\PYZsh{}\PYZsh{} We define the projector on the third slot in the vacillating tableau}
		\PY{k}{def} \PY{n+nf}{P\PYZus{}R3}\PY{p}{(}\PY{n}{R3}\PY{p}{)}\PY{p}{:}
		\PY{n}{R}\PY{o}{.}\PY{o}{\PYZlt{}}\PY{n}{N}\PY{o}{\PYZgt{}} \PY{o}{=} \PY{n}{QQ}\PY{p}{[}\PY{p}{]}
		\PY{c+c1}{\PYZsh{}\PYZsh{} Construct T2 \PYZbs{}otimes \PYZbs{}idn \PYZbs{}otimes \PYZbs{}idn defined in equation (3.42)}
		\PY{n}{Z2} \PY{o}{=} \PY{n}{P2N}\PY{p}{(}\PY{n+nb}{sum}\PY{p}{(}\PY{n}{P2N}\PY{o}{.}\PY{n}{jucys\PYZus{}murphy\PYZus{}element}\PY{p}{(}\PY{n}{i}\PY{o}{/}\PY{l+m+mi}{2}\PY{p}{)} \PY{k}{for} \PY{n}{i} \PY{o+ow}{in} \PY{p}{[}\PY{l+m+mf}{1.}\PY{o}{.}\PY{l+m+mi}{4}\PY{p}{]}\PY{p}{)}\PY{o}{+}\PY{p}{(}\PY{n}{N}\PY{o}{*}\PY{p}{(}\PY{n}{N}\PY{o}{\PYZhy{}}\PY{l+m+mi}{1}\PY{p}{)}\PY{o}{/}\PY{l+m+mi}{2}\PY{o}{\PYZhy{}}\PY{l+m+mi}{2}\PY{o}{*}\PY{n}{N}\PY{p}{)}\PY{o}{*}\PY{n}{P2N}\PY{o}{.}\PY{n}{one}\PY{p}{(}\PY{p}{)}\PY{p}{)}
		\PY{c+c1}{\PYZsh{}\PYZsh{} Define the set of irreps and corresponding normalized characters to take a product over in equation (3.52)}
		\PY{n}{IrrepsEigenvaluesDictionary} \PY{o}{=} \PY{p}{\PYZob{}}\PY{n}{Partition}\PY{p}{(}\PY{p}{[}\PY{p}{]}\PY{p}{)}\PY{p}{:} \PY{n}{R}\PY{p}{(}\PY{l+m+mi}{1}\PY{o}{/}\PY{l+m+mi}{2}\PY{o}{*}\PY{p}{(}\PY{n}{N}\PY{o}{\PYZhy{}}\PY{l+m+mi}{1}\PY{p}{)}\PY{o}{*}\PY{n}{N}\PY{p}{)}\PY{p}{,} \PY{n}{Partition}\PY{p}{(}\PY{p}{[}\PY{l+m+mi}{1}\PY{p}{]}\PY{p}{)}\PY{p}{:} \PY{n}{R}\PY{p}{(}\PY{p}{(}\PY{n}{N}\PY{o}{\PYZhy{}}\PY{l+m+mi}{3}\PY{p}{)}\PY{o}{*}\PY{n}{N}\PY{o}{/}\PY{l+m+mi}{2}\PY{p}{)}\PY{p}{,} \PY{n}{Partition}\PY{p}{(}\PY{p}{[}\PY{l+m+mi}{2}\PY{p}{]}\PY{p}{)}\PY{p}{:} \PY{n}{R}\PY{p}{(}\PY{l+m+mi}{1}\PY{o}{/}\PY{l+m+mi}{2}\PY{o}{*}\PY{p}{(}\PY{n}{N} \PY{o}{\PYZhy{}} \PY{l+m+mi}{1}\PY{p}{)}\PY{o}{*}\PY{p}{(}\PY{n}{N} \PY{o}{\PYZhy{}} \PY{l+m+mi}{4}\PY{p}{)}\PY{p}{)}\PY{p}{,} \PY{n}{Partition}\PY{p}{(}\PY{p}{[}\PY{l+m+mi}{1}\PY{p}{,}\PY{l+m+mi}{1}\PY{p}{]}\PY{p}{)}\PY{p}{:} \PY{n}{R}\PY{p}{(}\PY{l+m+mi}{1}\PY{o}{/}\PY{l+m+mi}{2}\PY{o}{*}\PY{p}{(}\PY{n}{N} \PY{o}{\PYZhy{}} \PY{l+m+mi}{5}\PY{p}{)}\PY{o}{*}\PY{n}{N}\PY{p}{)}\PY{p}{\PYZcb{}}
		\PY{n}{numerator} \PY{o}{=} \PY{n}{prod}\PY{p}{(}\PY{p}{(}\PY{n}{Z2}\PY{o}{\PYZhy{}}\PY{n}{ev2}\PY{o}{*}\PY{n}{P2N}\PY{o}{.}\PY{n}{one}\PY{p}{(}\PY{p}{)}\PY{p}{)} \PY{k}{for} \PY{p}{(}\PY{n}{rep}\PY{p}{,} \PY{n}{ev2}\PY{p}{)} \PY{o+ow}{in} \PY{n}{IrrepsEigenvaluesDictionary}\PY{o}{.}\PY{n}{items}\PY{p}{(}\PY{p}{)} \PY{k}{if} \PY{n}{rep} \PY{o}{!=} \PY{n}{R3}\PY{p}{)}
		\PY{n}{denom} \PY{o}{=} \PY{n}{prod}\PY{p}{(}\PY{n}{IrrepsEigenvaluesDictionary}\PY{p}{[}\PY{n}{R3}\PY{p}{]}\PY{o}{\PYZhy{}}\PY{n}{ev2} \PY{k}{for} \PY{p}{(}\PY{n}{rep}\PY{p}{,}\PY{n}{ev2}\PY{p}{)} \PY{o+ow}{in} \PY{n}{IrrepsEigenvaluesDictionary}\PY{o}{.}\PY{n}{items}\PY{p}{(}\PY{p}{)} \PY{k}{if} \PY{n}{rep} \PY{o}{!=} \PY{n}{R3}\PY{p}{)}
		\PY{k}{return} \PY{n}{numerator}\PY{p}{,} \PY{n}{denom}
	\end{Verbatim}
\end{tcolorbox}

Now we construct the projector $P_{\vactab \vactab'}$ and the corresponding matrix \eqref{eq: P vactab vactab matrix}. For this it is useful to enumerate all the allowed vacillating tableaux $\vactab$. This is done in the third cell.
\begin{tcolorbox}[breakable, size=fbox, boxrule=1pt, pad at break*=1mm,colback=cellbackground, colframe=cellborder]
	\prompt{In}{incolor}{3}{\boxspacing}
	\begin{Verbatim}[commandchars=\\\{\},fontsize=\small]
		\PY{c+c1}{\PYZsh{}\PYZsh{} The set of all vacillating tableaux at k=2}
		\PY{n}{VacTabs} \PY{o}{=} \PY{p}{[} \PY{p}{(}\PY{n}{Partition}\PY{p}{(}\PY{p}{[}\PY{p}{]}\PY{p}{)}\PY{p}{,}\PY{n}{Partition}\PY{p}{(}\PY{p}{[}\PY{p}{]}\PY{p}{)}\PY{p}{,}\PY{n}{Partition}\PY{p}{(}\PY{p}{[}\PY{p}{]}\PY{p}{)}\PY{p}{)}\PY{p}{,} \PYZbs{}
		\PY{p}{(}\PY{n}{Partition}\PY{p}{(}\PY{p}{[}\PY{p}{]}\PY{p}{)}\PY{p}{,}\PY{n}{Partition}\PY{p}{(}\PY{p}{[}\PY{p}{]}\PY{p}{)}\PY{p}{,}\PY{n}{Partition}\PY{p}{(}\PY{p}{[}\PY{l+m+mi}{1}\PY{p}{]}\PY{p}{)}\PY{p}{)}\PY{p}{,} \PYZbs{}
		\PY{p}{(}\PY{n}{Partition}\PY{p}{(}\PY{p}{[}\PY{l+m+mi}{1}\PY{p}{]}\PY{p}{)}\PY{p}{,}\PY{n}{Partition}\PY{p}{(}\PY{p}{[}\PY{l+m+mi}{1}\PY{p}{]}\PY{p}{)}\PY{p}{,}\PY{n}{Partition}\PY{p}{(}\PY{p}{[}\PY{l+m+mi}{1}\PY{p}{]}\PY{p}{)}\PY{p}{)}\PY{p}{,} \PYZbs{}
		\PY{p}{(}\PY{n}{Partition}\PY{p}{(}\PY{p}{[}\PY{l+m+mi}{1}\PY{p}{]}\PY{p}{)}\PY{p}{,}\PY{n}{Partition}\PY{p}{(}\PY{p}{[}\PY{l+m+mi}{1}\PY{p}{]}\PY{p}{)}\PY{p}{,}\PY{n}{Partition}\PY{p}{(}\PY{p}{[}\PY{l+m+mi}{2}\PY{p}{]}\PY{p}{)}\PY{p}{)}\PY{p}{,} \PYZbs{}
		\PY{p}{(}\PY{n}{Partition}\PY{p}{(}\PY{p}{[}\PY{l+m+mi}{1}\PY{p}{]}\PY{p}{)}\PY{p}{,}\PY{n}{Partition}\PY{p}{(}\PY{p}{[}\PY{l+m+mi}{1}\PY{p}{]}\PY{p}{)}\PY{p}{,}\PY{n}{Partition}\PY{p}{(}\PY{p}{[}\PY{l+m+mi}{1}\PY{p}{,}\PY{l+m+mi}{1}\PY{p}{]}\PY{p}{)}\PY{p}{)}\PY{p}{,} \PYZbs{}
		\PY{p}{(}\PY{n}{Partition}\PY{p}{(}\PY{p}{[}\PY{l+m+mi}{1}\PY{p}{]}\PY{p}{)}\PY{p}{,}\PY{n}{Partition}\PY{p}{(}\PY{p}{[}\PY{p}{]}\PY{p}{)}\PY{p}{,}\PY{n}{Partition}\PY{p}{(}\PY{p}{[}\PY{p}{]}\PY{p}{)}\PY{p}{)}\PY{p}{,} \PYZbs{}
		\PY{p}{(}\PY{n}{Partition}\PY{p}{(}\PY{p}{[}\PY{l+m+mi}{1}\PY{p}{]}\PY{p}{)}\PY{p}{,}\PY{n}{Partition}\PY{p}{(}\PY{p}{[}\PY{p}{]}\PY{p}{)}\PY{p}{,}\PY{n}{Partition}\PY{p}{(}\PY{p}{[}\PY{l+m+mi}{1}\PY{p}{]}\PY{p}{)}\PY{p}{)}\PY{p}{]}
	\end{Verbatim}
\end{tcolorbox}

For each pair of vacillating tableaux, there is a corresponding coupling constant. These will be symbolic variables. We define these and collect them into a dictionary in the fourth cell, for future convenience.
\begin{tcolorbox}[breakable, size=fbox, boxrule=1pt, pad at break*=1mm,colback=cellbackground, colframe=cellborder]
	\prompt{In}{incolor}{4}{\boxspacing}
	\begin{Verbatim}[commandchars=\\\{\},fontsize=\small]
		\PY{c+c1}{\PYZsh{}\PYZsh{} Some initializations}
		\PY{n}{g} \PY{o}{=} \PY{p}{\PYZob{}}\PY{p}{\PYZcb{}}
		\PY{n}{k} \PY{o}{=} \PY{n+nb}{len}\PY{p}{(}\PY{n}{VacTabs}\PY{p}{)}
		\PY{c+c1}{\PYZsh{}\PYZsh{} Now produce the diagram element corresponding to the connected 2\PYZhy{}point function}
		\PY{c+c1}{\PYZsh{}\PYZsh{} For this we need symbolic variables corresponding to coupling constants.}
		\PY{c+c1}{\PYZsh{}\PYZsh{} We encode these in a dictionary}
		\PY{k}{for} \PY{n}{v1} \PY{o+ow}{in} \PY{n+nb}{range}\PY{p}{(}\PY{n}{k}\PY{p}{)}\PY{p}{:}
		\PY{k}{for} \PY{n}{v2} \PY{o+ow}{in} \PY{n+nb}{range}\PY{p}{(}\PY{n}{k}\PY{p}{)}\PY{p}{:}
		\PY{k}{if} \PY{n}{v2} \PY{o}{\PYZlt{}}\PY{o}{=} \PY{n}{v1}\PY{p}{:}
		\PY{n}{g}\PY{p}{[}\PY{p}{(}\PY{n}{VacTabs}\PY{p}{[}\PY{n}{v1}\PY{p}{]}\PY{p}{,}\PY{n}{VacTabs}\PY{p}{[}\PY{n}{v2}\PY{p}{]}\PY{p}{)}\PY{p}{]} \PY{o}{=} \PY{n}{var}\PY{p}{(}\PY{l+s+s1}{\PYZsq{}}\PY{l+s+s1}{g\PYZus{}}\PY{l+s+si}{\PYZob{}0\PYZcb{}}\PY{l+s+si}{\PYZob{}1\PYZcb{}}\PY{l+s+s1}{\PYZsq{}}\PY{o}{.}\PY{n}{format}\PY{p}{(}\PY{n}{v1}\PY{p}{,}\PY{n}{v2}\PY{p}{)}\PY{p}{)}	
	\end{Verbatim}
\end{tcolorbox}
We record the normalization constants  in \eqref{eq: matrix units normalizations}, associated with pairs of vacillating tableaux.
\begin{tcolorbox}[breakable, size=fbox, boxrule=1pt, pad at break*=1mm,colback=cellbackground, colframe=cellborder]
	\prompt{In}{incolor}{ }{\boxspacing}
	\begin{Verbatim}[commandchars=\\\{\}]
		\PY{c+c1}{\PYZsh{}\PYZsh{} We also collect the set of normalization constants computed in section 4.3.5}
		\PY{n}{n} \PY{o}{=} \PY{p}{\PYZob{}}\PY{p}{\PYZcb{}}
		\PY{n}{n}\PY{p}{[}\PY{l+m+mi}{0}\PY{p}{,}\PY{l+m+mi}{0}\PY{p}{]} \PY{o}{=} \PY{n}{N}
		\PY{n}{n}\PY{p}{[}\PY{l+m+mi}{1}\PY{p}{,}\PY{l+m+mi}{0}\PY{p}{]} \PY{o}{=} \PY{n}{N}\PY{o}{*}\PY{n}{sqrt}\PY{p}{(}\PY{l+m+mi}{2}\PY{o}{/}\PY{p}{(}\PY{n}{N}\PY{o}{\PYZhy{}}\PY{l+m+mi}{1}\PY{p}{)}\PY{p}{)}
		\PY{n}{n}\PY{p}{[}\PY{l+m+mi}{1}\PY{p}{,}\PY{l+m+mi}{1}\PY{p}{]} \PY{o}{=} \PY{n}{N}\PY{o}{/}\PY{p}{(}\PY{n}{N}\PY{o}{\PYZhy{}}\PY{l+m+mi}{1}\PY{p}{)}
		\PY{n}{n}\PY{p}{[}\PY{l+m+mi}{2}\PY{p}{,}\PY{l+m+mi}{2}\PY{p}{]} \PY{o}{=} \PY{n}{N}
		\PY{n}{n}\PY{p}{[}\PY{l+m+mi}{3}\PY{p}{,}\PY{l+m+mi}{2}\PY{p}{]} \PY{o}{=} \PY{n}{N}\PY{o}{*}\PY{n}{sqrt}\PY{p}{(}\PY{l+m+mi}{2}\PY{o}{/}\PY{p}{(}\PY{n}{N}\PY{o}{\PYZhy{}}\PY{l+m+mi}{1}\PY{p}{)}\PY{p}{)}
		\PY{n}{n}\PY{p}{[}\PY{l+m+mi}{3}\PY{p}{,}\PY{l+m+mi}{3}\PY{p}{]} \PY{o}{=} \PY{n}{N}\PY{o}{/}\PY{p}{(}\PY{n}{N}\PY{o}{\PYZhy{}}\PY{l+m+mi}{1}\PY{p}{)}
		\PY{n}{n}\PY{p}{[}\PY{l+m+mi}{4}\PY{p}{,}\PY{l+m+mi}{2}\PY{p}{]} \PY{o}{=} \PY{n}{sqrt}\PY{p}{(}\PY{l+m+mi}{2}\PY{o}{*}\PY{p}{(}\PY{n}{N}\PY{o}{\PYZhy{}}\PY{l+m+mi}{1}\PY{p}{)}\PY{o}{/}\PY{p}{(}\PY{n}{N}\PY{o}{\PYZhy{}}\PY{l+m+mi}{2}\PY{p}{)}\PY{p}{)}
		\PY{n}{n}\PY{p}{[}\PY{l+m+mi}{4}\PY{p}{,}\PY{l+m+mi}{3}\PY{p}{]} \PY{o}{=} \PY{n}{sqrt}\PY{p}{(}\PY{l+m+mi}{2}\PY{o}{/}\PY{p}{(}\PY{n}{N}\PY{o}{\PYZhy{}}\PY{l+m+mi}{2}\PY{p}{)}\PY{p}{)}
		\PY{n}{n}\PY{p}{[}\PY{l+m+mi}{4}\PY{p}{,}\PY{l+m+mi}{4}\PY{p}{]} \PY{o}{=} \PY{p}{(}\PY{n}{N}\PY{o}{\PYZhy{}}\PY{l+m+mi}{1}\PY{p}{)}
		\PY{n}{n}\PY{p}{[}\PY{l+m+mi}{5}\PY{p}{,}\PY{l+m+mi}{5}\PY{p}{]} \PY{o}{=} \PY{l+m+mi}{1}
		\PY{n}{n}\PY{p}{[}\PY{l+m+mi}{6}\PY{p}{,}\PY{l+m+mi}{6}\PY{p}{]} \PY{o}{=} \PY{l+m+mi}{1}
	\end{Verbatim}
\end{tcolorbox}

To construct the element of $P_2(\N)$ corresponding to a propagator we define a partition algebra over symbolic rings (P2SR in the code), this is necessary for taking linear combinations of elements weighted by coupling constants. For this we iterate over all pairs of vacillating tableaux (v1, v2). At each iteration we construct the tuple LeftZip associated with the vacillating tableau v1. In the first slot LeftZip[0], there is a list of numerators for the three projectors previously defined. The second slot, LeftZip[1] is a list of numerators. LeftProj is a product of the numerators in LeftZip[0]. We also construct RightZip, RightProj corresponding to the vacillating tableau v2.

Given the two elements LeftProj, RightProj in $P_2(\N)$ we construct the matrix corresponding to left multiplication by LeftProj and right multiplication by RightProj on $P_2(\N)$. This matrix is called ProjMatrix in the cell. We want to compute the pivot column of this matrix in accordance with the construction in section \ref{sec: construction of units}. In particular we compute it for $\N = 10$ and store the column index in the variable pivot. Given the pivot column index we can extract the pivot column as ProjMatrix.column(pivot[0]). This is stored in the variable Q. Note that we have ignored the denominators up to now. We restore these to get the correct normalization for Q. The last step is to construct the transposition symmetrized element of Q, as these are the ones relevant for the propagator, and weight it by the coupling constant and normalization corresponding to the pair (v1,v2) of vacillating tableaux. This is added to the variable SRQ, which after running this cell, captures the full 1-matrix model propagator.
\begin{tcolorbox}[breakable, size=fbox, boxrule=1pt, pad at break*=1mm,colback=cellbackground, colframe=cellborder]
	\prompt{In}{incolor}{5}{\boxspacing}
	\begin{Verbatim}[commandchars=\\\{\},fontsize=\small]
		\PY{c+c1}{\PYZsh{}\PYZsh{} We now generate all the matrix units and multiply by the corresponding coupling constant (which are symbolic)}
		\PY{c+c1}{\PYZsh{}\PYZsh{} For this we need to use the partition algebra over a symbolic ring}
		\PY{n}{N} \PY{o}{=} \PY{n}{var}\PY{p}{(}\PY{l+s+s1}{\PYZsq{}}\PY{l+s+s1}{N}\PY{l+s+s1}{\PYZsq{}}\PY{p}{)}
		\PY{n}{P2SR} \PY{o}{=} \PY{n}{PartitionAlgebra}\PY{p}{(}\PY{l+m+mi}{2}\PY{p}{,}\PY{n}{N}\PY{p}{,}\PY{n}{SR}\PY{p}{)}
		\PY{c+c1}{\PYZsh{}\PYZsh{} Some initializations}
		\PY{n}{Q} \PY{o}{=} \PY{l+m+mi}{0}
		\PY{n}{SRQ} \PY{o}{=} \PY{l+m+mi}{0}
		\PY{c+c1}{\PYZsh{}\PYZsh{} We now iterate over all the pairs of vacillating tableaux}
		\PY{k}{for} \PY{n}{v1} \PY{o+ow}{in} \PY{n+nb}{range}\PY{p}{(}\PY{n}{k}\PY{p}{)}\PY{p}{:}
		\PY{n}{LeftZip} \PY{o}{=} \PY{n+nb}{list}\PY{p}{(}\PY{n+nb}{zip}\PY{p}{(}\PY{n}{P\PYZus{}R3}\PY{p}{(}\PY{n}{VacTabs}\PY{p}{[}\PY{n}{v1}\PY{p}{]}\PY{p}{[}\PY{l+m+mi}{2}\PY{p}{]}\PY{p}{)}\PY{p}{,}\PY{n}{P\PYZus{}R2}\PY{p}{(}\PY{n}{VacTabs}\PY{p}{[}\PY{n}{v1}\PY{p}{]}\PY{p}{[}\PY{l+m+mi}{1}\PY{p}{]}\PY{p}{)}\PY{p}{,}\PY{n}{P\PYZus{}R1}\PY{p}{(}\PY{n}{VacTabs}\PY{p}{[}\PY{n}{v1}\PY{p}{]}\PY{p}{[}\PY{l+m+mi}{0}\PY{p}{]}\PY{p}{)}\PY{p}{)}\PY{p}{)}
		\PY{c+c1}{\PYZsh{}\PYZsh{} LeftZip[0] is a list of the numerators of the three projectors for the first vacillating tableaux}
		\PY{n}{LeftProj} \PY{o}{=}\PY{n}{prod}\PY{p}{(}\PY{n}{LeftZip}\PY{p}{[}\PY{l+m+mi}{0}\PY{p}{]}\PY{p}{)}
		\PY{c+c1}{\PYZsh{}\PYZsh{} We take their product to get the projector associated with the first vacillating tableaux}
		\PY{k}{for} \PY{n}{v2} \PY{o+ow}{in} \PY{n+nb}{range}\PY{p}{(}\PY{n}{k}\PY{p}{)}\PY{p}{:}
		\PY{k}{if} \PY{n}{v2} \PY{o}{\PYZlt{}}\PY{o}{=} \PY{n}{v1}\PY{p}{:}
		\PY{n}{RightZip} \PY{o}{=} \PY{n+nb}{list}\PY{p}{(}\PY{n+nb}{zip}\PY{p}{(}\PY{n}{P\PYZus{}R3}\PY{p}{(}\PY{n}{VacTabs}\PY{p}{[}\PY{n}{v2}\PY{p}{]}\PY{p}{[}\PY{l+m+mi}{2}\PY{p}{]}\PY{p}{)}\PY{p}{,}\PY{n}{P\PYZus{}R2}\PY{p}{(}\PY{n}{VacTabs}\PY{p}{[}\PY{n}{v2}\PY{p}{]}\PY{p}{[}\PY{l+m+mi}{1}\PY{p}{]}\PY{p}{)}\PY{p}{,}\PY{n}{P\PYZus{}R1}\PY{p}{(}\PY{n}{VacTabs}\PY{p}{[}\PY{n}{v2}\PY{p}{]}\PY{p}{[}\PY{l+m+mi}{0}\PY{p}{]}\PY{p}{)}\PY{p}{)}\PY{p}{)}
		\PY{c+c1}{\PYZsh{}\PYZsh{} RightZip[0] is a list of the numerator of the three projectors for the second vacillating tableau}
		\PY{n}{RightProj} \PY{o}{=}\PY{n}{prod}\PY{p}{(}\PY{n}{RightZip}\PY{p}{[}\PY{l+m+mi}{0}\PY{p}{]}\PY{p}{)}
		\PY{c+c1}{\PYZsh{}\PYZsh{} We take their product to get the projector associated with the second vacillating tableaux}
		\PY{n}{ProjMatrix} \PY{o}{=} \PY{n}{LeftProj}\PY{o}{.}\PY{n}{to\PYZus{}matrix}\PY{p}{(}\PY{n}{side}\PY{o}{=}\PY{l+s+s1}{\PYZsq{}}\PY{l+s+s1}{left}\PY{l+s+s1}{\PYZsq{}}\PY{p}{)}\PY{o}{*}\PY{n}{RightProj}\PY{o}{.}\PY{n}{to\PYZus{}matrix}\PY{p}{(}\PY{n}{side}\PY{o}{=}\PY{l+s+s1}{\PYZsq{}}\PY{l+s+s1}{right}\PY{l+s+s1}{\PYZsq{}}\PY{p}{)}
		\PY{c+c1}{\PYZsh{}\PYZsh{} ProjMatrix is the matrix representing the simultaneous left action of LeftProj and right action of RightProj}
		\PY{n}{pivot} \PY{o}{=} \PY{n}{ProjMatrix}\PY{o}{.}\PY{n}{subs}\PY{p}{(}\PY{n}{N}\PY{o}{=}\PY{l+m+mi}{10}\PY{p}{)}\PY{o}{.}\PY{n}{pivots}\PY{p}{(}\PY{p}{)}
		\PY{c+c1}{\PYZsh{}\PYZsh{} Compute the pivot columns of ProjMatrix for N=10}
		\PY{k}{if} \PY{n+nb}{len}\PY{p}{(}\PY{n}{pivot}\PY{p}{)} \PY{o}{\PYZgt{}} \PY{l+m+mi}{0}\PY{p}{:}
		\PY{n}{Q} \PY{o}{=} \PY{n}{ProjMatrix}\PY{o}{.}\PY{n}{column}\PY{p}{(}\PY{n}{pivot}\PY{p}{[}\PY{l+m+mi}{0}\PY{p}{]}\PY{p}{)}\PY{o}{/}\PY{n}{prod}\PY{p}{(}\PY{n}{LeftZip}\PY{p}{[}\PY{l+m+mi}{1}\PY{p}{]}\PY{p}{)}\PY{o}{/}\PY{n}{prod}\PY{p}{(}\PY{n}{RightZip}\PY{p}{[}\PY{l+m+mi}{1}\PY{p}{]}\PY{p}{)}
		\PY{c+c1}{\PYZsh{}\PYZsh{} As long as ProjMatrix has a pivot, extract the pivot column and divide by the numerators of the projectors}
		\PY{n}{SRQ} \PY{o}{+}\PY{o}{=} \PY{n}{n}\PY{p}{[}\PY{n}{v1}\PY{p}{,}\PY{n}{v2}\PY{p}{]}\PY{o}{*}\PY{n}{g}\PY{p}{[}\PY{p}{(}\PY{n}{VacTabs}\PY{p}{[}\PY{n}{v1}\PY{p}{]}\PY{p}{,}\PY{n}{VacTabs}\PY{p}{[}\PY{n}{v2}\PY{p}{]}\PY{p}{)}\PY{p}{]}\PY{o}{*}\PY{p}{(}\PY{n}{P2SR}\PY{o}{.}\PY{n}{from\PYZus{}vector}\PY{p}{(}\PY{n}{Q}\PY{p}{)}\PY{o}{/}\PY{l+m+mi}{2}\PY{o}{+} \PY{n}{P2SR}\PY{o}{.}\PY{n}{from\PYZus{}vector}\PY{p}{(}\PY{n}{Q}\PY{p}{)}\PY{o}{.}\PY{n}{dual}\PY{p}{(}\PY{p}{)}\PY{o}{/}\PY{l+m+mi}{2}\PY{p}{)}
		\PY{c+c1}{\PYZsh{}\PYZsh{} We take the average of the pivot column and its transpose,}
		\PY{c+c1}{\PYZsh{}\PYZsh{} weight it by the coupling constant associated with the pair of vacillating tableaux and add it to SRQ,}
		\PY{c+c1}{\PYZsh{}\PYZsh{} SRQ will correspond to the connected two\PYZhy{}point function (propagator) in our matrix model}
	\end{Verbatim}
\end{tcolorbox}
This ends the first part of the code. We now implement the 1-row partition combinatorics.

\section{One-row partitions}
To compute expectation values we need to: implement a vector space with basis labelled by set partitions/1-row diagrams; translate the element SRQ into a linear combination of 1-row diagrams; construct the linear combination of 1-row diagrams corresponding to the 1-point function; implement tensor products of 1-row diagrams and an inner product of two 1-row diagrams that returns $N$ to the number of components in the join of the two diagrams. This is done in the sixth cell of the notebook.

The CombinatorialFreeModule together with AlgebrasWithBasis in Sage provide the skeletons for constructing this. All that we need to provide is the labelling set for the vector space -- this is the set of all set partitions -- which is returned by calling $SetPartitions()$. Secondly, we have to define the product of two set partitions as product\_on\_basis(). This should mimic the tensor product of two diagrams, which corresponds to concatenation of the set partitions. Note that our current implementation assumes that the two set partitions are set partitions of distinct sets. This will be sufficient for our application. A careful implementation should allow these two sets to overlap (it is then necessary to relabel and reorder the elements in the set partitions). Thirdly, we define the identity element with respect to the tensor product -- this is just the empty partition $[]$. To define the inner product, or pairing of two set partitions, we first define the inner product of two basis elements as inner\_product\_on\_basis(). It takes two set partitions of the same set, constructs the join and counts the number of blocks. We return $\N$ to the power of the number of blocks. For a pair of general elements in this vector space we define inner\_prod by linear extension. There are two help functions in this class: from\_partition\_algebra() takes in an element of $P_2(\N)$ -- for example SRQ -- and returns the corresponding linear combination of vectors in this space. It will also be useful to have a function relabel\_element() that relabels the labelling set of the set partitions in an element. For example -- if an element $a$ contains set partitions of $\{1,2,3,4\}$, relabel\_element(a, [5,6,7,8]) returns the same element as a linear combination of set partitions of $\{5,6,7,8\}$ by replacing $1 \rightarrow 5, \dots, 4 \rightarrow 8$.
\begin{tcolorbox}[breakable, size=fbox, boxrule=1pt, pad at break*=1mm,colback=cellbackground, colframe=cellborder]
	\prompt{In}{incolor}{6}{\boxspacing}
	\begin{Verbatim}[commandchars=\\\{\},,fontsize=\small]
		\PY{c+c1}{\PYZsh{}\PYZsh{} We define a vector space labeled by set partitions}
		\PY{c+c1}{\PYZsh{}\PYZsh{} It has a \PYZdq{}product\PYZdq{} that combines set partitions by concatenation of lists: see product\PYZus{}on\PYZus{}basis()}
		\PY{c+c1}{\PYZsh{}\PYZsh{} This product implements tensor products of diagrams}
		\PY{c+c1}{\PYZsh{}\PYZsh{} It has some help functions, such as}
		\PY{c+c1}{\PYZsh{}\PYZsh{} (1) a function that converts a partition algebra element into an element of this vector space: see from\PYZus{}partition\PYZus{}algebra()}
		\PY{c+c1}{\PYZsh{}\PYZsh{} (2) a function that lets you change the label set of an element: see relabel\PYZus{}element()}
		\PY{c+c1}{\PYZsh{}\PYZsh{} (3) an inner product function that computes the number of parts in the join of two set partitions (linearly extended for general elements): see inner\PYZus{}prod()}
		\PY{k}{class} \PY{n+nc}{SetPartitionNA}\PY{p}{(}\PY{n}{CombinatorialFreeModule}\PY{p}{)}\PY{p}{:}
		
		\PY{k}{def} \PY{n+nf}{\PYZus{}\PYZus{}init\PYZus{}\PYZus{}}\PY{p}{(}\PY{n+nb+bp}{self}\PY{p}{,} \PY{o}{*}\PY{o}{*}\PY{n}{keywords}\PY{p}{)}\PY{p}{:}
		\PY{n+nb+bp}{self}\PY{o}{.}\PY{n}{\PYZus{}baseset} \PY{o}{=} \PY{n}{SetPartitions}\PY{p}{(}\PY{p}{)}
		\PY{n}{CombinatorialFreeModule}\PY{o}{.}\PY{n+nf+fm}{\PYZus{}\PYZus{}init\PYZus{}\PYZus{}}\PY{p}{(}\PY{n+nb+bp}{self}\PY{p}{,} \PY{n}{SR}\PY{p}{,} \PY{n+nb+bp}{self}\PY{o}{.}\PY{n}{\PYZus{}baseset}\PY{p}{,}
		\PY{n}{category}\PY{o}{=}\PY{n}{AlgebrasWithBasis}\PY{p}{(}\PY{n}{SR}\PY{p}{)}\PY{p}{,} \PY{o}{*}\PY{o}{*}\PY{n}{keywords}\PY{p}{)}
		
		\PY{k}{def} \PY{n+nf}{product\PYZus{}on\PYZus{}basis}\PY{p}{(}\PY{n+nb+bp}{self}\PY{p}{,} \PY{n}{left}\PY{p}{,} \PY{n}{right}\PY{p}{)}\PY{p}{:}
		\PY{n}{l} \PY{o}{=} \PY{n+nb}{list}\PY{p}{(}\PY{n}{left}\PY{p}{)}
		\PY{n}{r} \PY{o}{=} \PY{n+nb}{list}\PY{p}{(}\PY{n}{right}\PY{p}{)}
		\PY{k}{return} \PY{n+nb+bp}{self}\PY{o}{.}\PY{n}{monomial}\PY{p}{(}\PY{n}{SetPartition}\PY{p}{(}\PY{n}{l}\PY{o}{+}\PY{n}{r}\PY{p}{)}\PY{p}{)}
		
		\PY{k}{def} \PY{n+nf}{one\PYZus{}basis}\PY{p}{(}\PY{n+nb+bp}{self}\PY{p}{)}\PY{p}{:}
		\PY{k}{return} \PY{n}{SetPartition}\PY{p}{(}\PY{p}{[}\PY{p}{]}\PY{p}{)}
		
		\PY{k}{def} \PY{n+nf}{algebra\PYZus{}generators}\PY{p}{(}\PY{n+nb+bp}{self}\PY{p}{)}\PY{p}{:}
		\PY{k}{return} \PY{n}{SetPartitions}\PY{p}{(}\PY{p}{)}
		
		\PY{k}{def} \PY{n+nf}{\PYZus{}repr\PYZus{}}\PY{p}{(}\PY{n+nb+bp}{self}\PY{p}{)}\PY{p}{:}
		\PY{k}{return} \PY{l+s+s2}{\PYZdq{}}\PY{l+s+s2}{Algebra of set partitions over }\PY{l+s+si}{\PYZpc{}s}\PY{l+s+s2}{ with multiplication given by concatenation}\PY{l+s+s2}{\PYZdq{}}\PY{o}{\PYZpc{}}\PY{p}{(}\PY{n}{SR}\PY{p}{)}
		
		\PY{k}{def} \PY{n+nf}{inner\PYZus{}product\PYZus{}on\PYZus{}basis}\PY{p}{(}\PY{n+nb+bp}{self}\PY{p}{,} \PY{n}{left}\PY{p}{,} \PY{n}{right}\PY{p}{)}\PY{p}{:}
		\PY{k}{if} \PY{n}{left}\PY{o}{.}\PY{n}{base\PYZus{}set}\PY{p}{(}\PY{p}{)} \PY{o}{!=} \PY{n}{right}\PY{o}{.}\PY{n}{base\PYZus{}set}\PY{p}{(}\PY{p}{)}\PY{p}{:}
		\PY{k}{return} \PY{l+m+mi}{0}
		\PY{k}{else}\PY{p}{:}
		\PY{n}{join} \PY{o}{=} \PY{n}{left}\PY{o}{.}\PY{n}{sup}\PY{p}{(}\PY{n}{right}\PY{p}{)}
		\PY{k}{return} \PY{n}{SR}\PY{p}{(}\PY{n}{N}\PY{o}{\PYZca{}}\PY{n+nb}{len}\PY{p}{(}\PY{n}{join}\PY{p}{)}\PY{p}{)}
		
		\PY{k}{def} \PY{n+nf}{inner\PYZus{}prod}\PY{p}{(}\PY{n+nb+bp}{self}\PY{p}{,} \PY{n}{left}\PY{p}{,} \PY{n}{right}\PY{p}{)}\PY{p}{:}
		\PY{n}{innerprod} \PY{o}{=} \PY{l+m+mi}{0}
		\PY{k}{for} \PY{n}{l} \PY{o+ow}{in} \PY{n}{left}\PY{p}{:}
		\PY{k}{for} \PY{n}{r} \PY{o+ow}{in} \PY{n}{right}\PY{p}{:}
		\PY{n}{innerprod} \PY{o}{+}\PY{o}{=} \PY{n}{l}\PY{p}{[}\PY{l+m+mi}{1}\PY{p}{]}\PY{o}{*}\PY{n}{r}\PY{p}{[}\PY{l+m+mi}{1}\PY{p}{]}\PY{o}{*}\PY{n+nb+bp}{self}\PY{o}{.}\PY{n}{inner\PYZus{}product\PYZus{}on\PYZus{}basis}\PY{p}{(}\PY{n}{l}\PY{p}{[}\PY{l+m+mi}{0}\PY{p}{]}\PY{p}{,}\PY{n}{r}\PY{p}{[}\PY{l+m+mi}{0}\PY{p}{]}\PY{p}{)}
		\PY{k}{return} \PY{n}{innerprod}
		
		\PY{k}{def} \PY{n+nf}{from\PYZus{}partition\PYZus{}algebra}\PY{p}{(}\PY{n+nb+bp}{self}\PY{p}{,} \PY{n}{d}\PY{p}{,} \PY{n}{baseset\PYZus{}left}\PY{p}{,} \PY{n}{baseset\PYZus{}right}\PY{p}{)}\PY{p}{:}
		\PY{n}{B} \PY{o}{=} \PY{n+nb+bp}{self}\PY{o}{.}\PY{n}{basis}\PY{p}{(}\PY{p}{)}
		\PY{n}{element} \PY{o}{=} \PY{l+m+mi}{0}
		\PY{k}{for} \PY{p}{(}\PY{n}{setpart}\PY{p}{,} \PY{n}{coeff}\PY{p}{)} \PY{o+ow}{in} \PY{n}{d}\PY{p}{:}
		\PY{n}{setpartnew} \PY{o}{=} \PY{p}{[}\PY{p}{]}
		\PY{k}{for} \PY{n}{part} \PY{o+ow}{in} \PY{n}{setpart}\PY{o}{.}\PY{n}{set\PYZus{}partition}\PY{p}{(}\PY{p}{)}\PY{p}{:}
		\PY{n}{partnew} \PY{o}{=} \PY{p}{[}\PY{p}{]}
		\PY{k}{for} \PY{n}{p} \PY{o+ow}{in} \PY{n}{part}\PY{p}{:}
		\PY{k}{if} \PY{n}{p} \PY{o}{\PYZgt{}} \PY{l+m+mi}{0}\PY{p}{:}
		\PY{n}{partnew} \PY{o}{+}\PY{o}{=} \PY{p}{[}\PY{n}{baseset\PYZus{}left}\PY{p}{[}\PY{n}{p}\PY{o}{\PYZhy{}}\PY{l+m+mi}{1}\PY{p}{]}\PY{p}{]}
		\PY{k}{elif} \PY{n}{p} \PY{o}{\PYZlt{}} \PY{l+m+mi}{0}\PY{p}{:}
		\PY{n}{partnew} \PY{o}{+}\PY{o}{=} \PY{p}{[}\PY{n}{baseset\PYZus{}right}\PY{p}{[}\PY{o}{\PYZhy{}}\PY{n}{p}\PY{o}{\PYZhy{}}\PY{l+m+mi}{1}\PY{p}{]}\PY{p}{]}
		\PY{n}{setpartnew} \PY{o}{+}\PY{o}{=} \PY{p}{[}\PY{n}{partnew}\PY{p}{]}
		\PY{n}{element} \PY{o}{+}\PY{o}{=} \PY{n}{coeff}\PY{o}{*}\PY{n}{B}\PY{p}{[}\PY{n}{SetPartition}\PY{p}{(}\PY{n}{setpartnew}\PY{p}{)}\PY{p}{]}
		\PY{k}{return} \PY{n}{element}
		
		\PY{k}{def} \PY{n+nf}{relabel\PYZus{}element}\PY{p}{(}\PY{n+nb+bp}{self}\PY{p}{,} \PY{n}{elem}\PY{p}{,} \PY{n}{labelset}\PY{p}{)}\PY{p}{:}
		\PY{n}{B} \PY{o}{=} \PY{n+nb+bp}{self}\PY{o}{.}\PY{n}{basis}\PY{p}{(}\PY{p}{)}
		\PY{n}{element} \PY{o}{=} \PY{l+m+mi}{0}
		\PY{k}{for} \PY{p}{(}\PY{n}{setpart}\PY{p}{,} \PY{n}{coeff}\PY{p}{)} \PY{o+ow}{in} \PY{n}{elem}\PY{p}{:}
		\PY{n}{setpartnew} \PY{o}{=} \PY{p}{[}\PY{p}{]}
		\PY{k}{for} \PY{n}{part} \PY{o+ow}{in} \PY{n}{setpart}\PY{p}{:}
		\PY{n}{partnew} \PY{o}{=} \PY{p}{[}\PY{p}{]}
		\PY{k}{for} \PY{n}{p} \PY{o+ow}{in} \PY{n}{part}\PY{p}{:}
		\PY{n}{partnew} \PY{o}{+}\PY{o}{=} \PY{p}{[}\PY{n}{labelset}\PY{p}{[}\PY{n}{p}\PY{o}{\PYZhy{}}\PY{l+m+mi}{1}\PY{p}{]}\PY{p}{]}
		\PY{n}{setpartnew} \PY{o}{+}\PY{o}{=} \PY{p}{[}\PY{n}{partnew}\PY{p}{]}
		\PY{n}{element} \PY{o}{+}\PY{o}{=} \PY{n}{coeff}\PY{o}{*}\PY{n}{B}\PY{p}{[}\PY{n}{SetPartition}\PY{p}{(}\PY{n}{setpartnew}\PY{p}{)}\PY{p}{]}
		\PY{k}{return} \PY{n}{element}
	\end{Verbatim}
\end{tcolorbox}

In cell seven we initialize this vector space with tensor product as $A$ and give the basis a name $B$.
\begin{tcolorbox}[breakable, size=fbox, boxrule=1pt, pad at break*=1mm,colback=cellbackground, colframe=cellborder]
	\prompt{In}{incolor}{7}{\boxspacing}
	\begin{Verbatim}[commandchars=\\\{\}, fontsize=\small]
		\PY{c+c1}{\PYZsh{}\PYZsh{} Initialize this vector space}
		\PY{n}{A} \PY{o}{=} \PY{n}{SetPartitionNA}\PY{p}{(}\PY{p}{)}
		\PY{n}{B} \PY{o}{=} \PY{n}{A}\PY{o}{.}\PY{n}{basis}\PY{p}{(}\PY{p}{)}
	\end{Verbatim}
\end{tcolorbox}

We are now ready to define the linear combination corresponding to the 1-point function \eqref{eq: matrix basis 1pt}. We define two variables gJ1, gJ2 capturing the coupling constants $(G^{-1})_{[\N];1\beta}J^{\beta}$ and $(G^{-1})_{[\N];2\beta}J^{\beta}$ respectively. C00 and CHH are the two distinct contributions to the 1-point function and EXP\_VAL is just their weighted sum.
\begin{tcolorbox}[breakable, size=fbox, boxrule=1pt, pad at break*=1mm,colback=cellbackground, colframe=cellborder]
	\prompt{In}{incolor}{8}{\boxspacing}
	\begin{Verbatim}[commandchars=\\\{\},fontsize=\small]
		\PY{c+c1}{\PYZsh{}\PYZsh{} Define Clebsches as elements of this vector space}
		\PY{n}{gJ1}\PY{p}{,} \PY{n}{gJ2} \PY{o}{=} \PY{n}{var}\PY{p}{(}\PY{l+s+s1}{\PYZsq{}}\PY{l+s+s1}{gJ1, gJ2}\PY{l+s+s1}{\PYZsq{}}\PY{p}{)}
		\PY{n}{C00} \PY{o}{=} \PY{l+m+mi}{1}\PY{o}{/}\PY{n}{N}\PY{o}{*}\PY{n}{B}\PY{p}{[}\PY{n}{SetPartition}\PY{p}{(}\PY{p}{[}\PY{p}{[}\PY{l+m+mi}{1}\PY{p}{]}\PY{p}{,}\PY{p}{[}\PY{l+m+mi}{2}\PY{p}{]}\PY{p}{]}\PY{p}{)}\PY{p}{]}
		\PY{n}{CHH} \PY{o}{=} \PY{l+m+mi}{1}\PY{o}{/}\PY{n}{sqrt}\PY{p}{(}\PY{p}{(}\PY{n}{N}\PY{o}{\PYZhy{}}\PY{l+m+mi}{1}\PY{p}{)}\PY{p}{)}\PY{o}{*}\PY{n}{B}\PY{p}{[}\PY{n}{SetPartition}\PY{p}{(}\PY{p}{[}\PY{p}{[}\PY{l+m+mi}{1}\PY{p}{,}\PY{l+m+mi}{2}\PY{p}{]}\PY{p}{]}\PY{p}{)}\PY{p}{]} \PY{o}{\PYZhy{}} \PY{l+m+mi}{1}\PY{o}{/}\PY{n}{N}\PY{o}{*}\PY{l+m+mi}{1}\PY{o}{/}\PY{n}{sqrt}\PY{p}{(}\PY{p}{(}\PY{n}{N}\PY{o}{\PYZhy{}}\PY{l+m+mi}{1}\PY{p}{)}\PY{p}{)}\PY{o}{*}\PY{n}{B}\PY{p}{[}\PY{n}{SetPartition}\PY{p}{(}\PY{p}{[}\PY{p}{[}\PY{l+m+mi}{1}\PY{p}{]}\PY{p}{,}\PY{p}{[}\PY{l+m+mi}{2}\PY{p}{]}\PY{p}{]}\PY{p}{)}\PY{p}{]}
		\PY{c+c1}{\PYZsh{}\PYZsh{} The one point function is a linear combination of these}
		\PY{n}{EXP\PYZus{}VAL} \PY{o}{=} \PY{n}{gJ1}\PY{o}{*}\PY{n}{C00}\PY{o}{+}\PY{n}{gJ2}\PY{o}{*}\PY{n}{CHH}
	\end{Verbatim}
\end{tcolorbox}
For the propagator we use the partition algebra element SRQ and the help function from\_partition\_algebra(SRQ).
\begin{tcolorbox}[breakable, size=fbox, boxrule=1pt, pad at break*=1mm,colback=cellbackground, colframe=cellborder]
	\prompt{In}{incolor}{9}{\boxspacing}
	\begin{Verbatim}[commandchars=\\\{\}, fontsize=\small]
		\PY{c+c1}{\PYZsh{}\PYZsh{} The connected two point function (propagator) is given by the partition algebra element SRQ computed earlier}
		\PY{n}{PROPAGATOR} \PY{o}{=} \PY{n}{A}\PY{o}{.}\PY{n}{from\PYZus{}partition\PYZus{}algebra}\PY{p}{(}\PY{n}{SRQ}\PY{p}{,} \PY{p}{[}\PY{l+m+mi}{1}\PY{p}{,}\PY{l+m+mi}{2}\PY{p}{]}\PY{p}{,} \PY{p}{[}\PY{l+m+mi}{3}\PY{p}{,}\PY{l+m+mi}{4}\PY{p}{]}\PY{p}{)}
	\end{Verbatim}
\end{tcolorbox}

We now have all the technology necessary to compute expectation values of observables. Observables are specified by a set partition. For degree $k$ observables they are set partitions of $\{1, \dots ,2k\}$. We give some examples in cell ten.
\begin{tcolorbox}[breakable, size=fbox, boxrule=1pt, pad at break*=1mm,colback=cellbackground, colframe=cellborder]
	\prompt{In}{incolor}{10}{\boxspacing}
	\begin{Verbatim}[commandchars=\\\{\}, fontsize=\small]
		\PY{c+c1}{\PYZsh{}\PYZsh{} Observables are specified by a set partiton. E.g.}
		\PY{n}{observable\PYZus{}as\PYZus{}set\PYZus{}partition} \PY{o}{=} \PY{n}{B}\PY{p}{[}\PY{n}{SetPartition}\PY{p}{(}\PY{p}{[}\PY{p}{[}\PY{l+m+mi}{1}\PY{p}{,}\PY{l+m+mi}{2}\PY{p}{]}\PY{p}{]}\PY{p}{)}\PY{p}{]}
		\PY{c+c1}{\PYZsh{}\PYZsh{} For a degree 1 observable}
		\PY{n}{observable\PYZus{}as\PYZus{}set\PYZus{}partition} \PY{o}{=} \PY{n}{B}\PY{p}{[}\PY{n}{SetPartition}\PY{p}{(}\PY{p}{[}\PY{p}{[}\PY{l+m+mi}{1}\PY{p}{,}\PY{l+m+mi}{2}\PY{p}{]}\PY{p}{,}\PY{p}{[}\PY{l+m+mi}{3}\PY{p}{,}\PY{l+m+mi}{4}\PY{p}{]}\PY{p}{]}\PY{p}{)}\PY{p}{]}
		\PY{c+c1}{\PYZsh{}\PYZsh{} For a degree 2 observable}
		\PY{n}{observable\PYZus{}as\PYZus{}set\PYZus{}partition} \PY{o}{=} \PY{n}{B}\PY{p}{[}\PY{n}{SetPartition}\PY{p}{(}\PY{p}{[}\PY{p}{[}\PY{l+m+mi}{1}\PY{p}{]}\PY{p}{,}\PY{p}{[}\PY{l+m+mi}{2}\PY{p}{]}\PY{p}{,}\PY{p}{[}\PY{l+m+mi}{3}\PY{p}{]}\PY{p}{,}\PY{p}{[}\PY{l+m+mi}{4}\PY{p}{]}\PY{p}{,}\PY{p}{[}\PY{l+m+mi}{5}\PY{p}{]}\PY{p}{,}\PY{p}{[}\PY{l+m+mi}{6}\PY{p}{]}\PY{p}{]}\PY{p}{)}\PY{p}{]}
		\PY{c+c1}{\PYZsh{}\PYZsh{} For a degree 3 observable}
	\end{Verbatim}
\end{tcolorbox}

To compute a degree 1 expectation values we specify an observable. The expectation value only gets contributions from the one-point function. To compute the expectation value we take the inner product of the observable and the onepoint function variable.
\begin{tcolorbox}[breakable, size=fbox, boxrule=1pt, pad at break*=1mm,colback=cellbackground, colframe=cellborder]
	\prompt{In}{incolor}{11}{\boxspacing}
	\begin{Verbatim}[commandchars=\\\{\},fontsize=\small]
		\PY{c+c1}{\PYZsh{}\PYZsh{} Degree 1 expvals are computed as}
		\PY{n}{observable\PYZus{}as\PYZus{}set\PYZus{}partition} \PY{o}{=} \PY{n}{B}\PY{p}{[}\PY{n}{SetPartition}\PY{p}{(}\PY{p}{[}\PY{p}{[}\PY{l+m+mi}{1}\PY{p}{,}\PY{l+m+mi}{2}\PY{p}{]}\PY{p}{]}\PY{p}{)}\PY{p}{]}
		\PY{n}{onepoint} \PY{o}{=} \PY{n}{EXP\PYZus{}VAL}
		\PY{n}{A}\PY{o}{.}\PY{n}{inner\PYZus{}prod}\PY{p}{(}\PY{n}{observable\PYZus{}as\PYZus{}set\PYZus{}partition}\PY{p}{,} \PY{n}{onepoint}\PY{p}{)}  
	\end{Verbatim}
\end{tcolorbox}
This returns the expectation value as a function of $\N$ and the parameters gJ1, gJ2.
\begin{tcolorbox}[breakable, size=fbox, boxrule=.5pt, pad at break*=1mm, opacityfill=0]
	\prompt{Out}{outcolor}{11}{\boxspacing}
	\begin{Verbatim}[commandchars=\\\{\}, fontsize=\small]
		N*(gJ1/N - gJ2/(sqrt(N - 1)*N)) + N*gJ2/sqrt(N - 1)
	\end{Verbatim}
\end{tcolorbox}
For degree two expectation values we receive contributions from the product of two one-point functions and a propagator -- this is stored in the variable twopoint. To get the correct form of the product of one-point functions we need to relabel EXP\_VAL. Recall that it initially was defined in terms of set partitions of $\{1,2\}$. The second EXP\_VAL should correspond to set partitions of $\{3,4\}$. Therefore, we relabel it and then take a tensor product. To find the expectation value we simply compute the inner product of twopoint function and the observable.
\begin{tcolorbox}[breakable, size=fbox, boxrule=1pt, pad at break*=1mm,colback=cellbackground, colframe=cellborder]
	\prompt{In}{incolor}{12}{\boxspacing}
	\begin{Verbatim}[commandchars=\\\{\},fontsize=\small]
		\PY{c+c1}{\PYZsh{}\PYZsh{} Degree 2 expvals are computed as}
		\PY{n}{observable\PYZus{}as\PYZus{}set\PYZus{}partition} \PY{o}{=} \PY{n}{B}\PY{p}{[}\PY{n}{SetPartition}\PY{p}{(}\PY{p}{[}\PY{p}{[}\PY{l+m+mi}{1}\PY{p}{,}\PY{l+m+mi}{2}\PY{p}{]}\PY{p}{,}\PY{p}{[}\PY{l+m+mi}{3}\PY{p}{,}\PY{l+m+mi}{4}\PY{p}{]}\PY{p}{]}\PY{p}{)}\PY{p}{]}
		\PY{n}{twopoint} \PY{o}{=} \PY{n}{PROPAGATOR} \PY{o}{+} \PY{n}{EXP\PYZus{}VAL}\PY{o}{*}\PY{n}{A}\PY{o}{.}\PY{n}{relabel\PYZus{}element}\PY{p}{(}\PY{n}{EXP\PYZus{}VAL}\PY{p}{,} \PY{p}{[}\PY{l+m+mi}{3}\PY{p}{,}\PY{l+m+mi}{4}\PY{p}{]}\PY{p}{)}
		\PY{n}{A}\PY{o}{.}\PY{n}{inner\PYZus{}prod}\PY{p}{(}\PY{n}{observable\PYZus{}as\PYZus{}set\PYZus{}partition}\PY{p}{,} \PY{n}{twopoint}\PY{p}{)}
	\end{Verbatim}
\end{tcolorbox}
This gives the following function of coupling constants and $\N$
\begin{tcolorbox}[breakable, size=fbox, boxrule=.5pt, pad at break*=1mm, opacityfill=0]
	\prompt{Out}{outcolor}{12}{\boxspacing}
	\begin{Verbatim}[commandchars=\\\{\},fontsize=\small]
		((gJ1/N - gJ2/(sqrt(N - 1)*N))\^{}2 - g\_21/(N\^{}2 - N) - g\_22/(N\^{}3 - 4*N\^{}2 + 5*N - 2)
		+ g\_33/(N\^{}2 - 3*N + 2) + g\_62/(N\^{}2 - N) + g\_00/N\^{}3 - g\_11/N\^{}3 - g\_50/N\^{}3 +
		g\_55/N\^{}3 + g\_61/N\^{}3 - g\_66/N\^{}3)*N\^{}2 - (g\_62*(1/(N\^{}2 - N) + 1/(N - 1)) -
		2*gJ2*(gJ1/N - gJ2/(sqrt(N - 1)*N))/sqrt(N - 1) - g\_21/(N\^{}2 - N) - 2*g\_22/(N\^{}3 -
		4*N\^{}2 + 5*N - 2) + 2*g\_33/(N\^{}2 - 3*N + 2) - g\_50/N\^{}2 + 2*g\_55/N\^{}2 + g\_61/N\^{}2 -
		2*g\_66/N\^{}2)*N\^{}2 + N\^{}2*(gJ2\^{}2/(N - 1) - g\_22/(N\^{}3 - 4*N\^{}2 + 5*N - 2) + g\_33/(N\^{}2
		- 3*N + 2) + g\_55/N + g\_62/(N - 1) - g\_66/N) + N*(N*g\_22/(N\^{}3 - 4*N\^{}2 + 5*N - 2)
		- N*g\_33/(N - 2) - N*g\_62/(N - 1) + g\_66) + 1/2*N*(g\_33 + g\_44) + 1/2*N*(g\_33 -
		g\_44) + N*(g\_62 - 2*g\_22/(N\^{}2 - 3*N + 2) + 2*g\_33/(N - 2)) + 1/2*N*(2*g\_21/(N\^{}2
		- N) + 2*g\_22/(N\^{}4 - 4*N\^{}3 + 5*N\^{}2 - 2*N) - g\_33/(N - 2) + g\_44/N - 2*g\_62/(N\^{}2
		- N) + 2*g\_11/N\^{}2 - 2*g\_61/N\^{}2 + 2*g\_66/N\^{}2) - N*(g\_21/(N - 1) + 2*g\_22/(N\^{}3 -
		4*N\^{}2 + 5*N - 2) - 2*g\_33/(N - 2) - g\_61/N - 2*g\_62/(N - 1) + 2*g\_66/N) +
		N*(g\_21/N + 2*g\_22/(N\^{}3 - 3*N\^{}2 + 2*N) - g\_33/(N - 2) - g\_44/N - g\_62/N) +
		1/2*N*(2*g\_22/(N\^{}2 - 2*N) - g\_33/(N - 2) + g\_44/N)
	\end{Verbatim}
\end{tcolorbox}
Similar considerations give expectation values of degree three observables. By Wick's theorem we get contributions from four different terms -- these are store in the threepoint variable. We use the relabeling function to get the correct set partitions.
\begin{tcolorbox}[breakable, size=fbox, boxrule=1pt, pad at break*=1mm,colback=cellbackground, colframe=cellborder]
	\prompt{In}{incolor}{13}{\boxspacing}
	\begin{Verbatim}[commandchars=\\\{\}, fontsize=\small]
		\PY{c+c1}{\PYZsh{}\PYZsh{} Degree 3 expvals are computed as}
		\PY{n}{observable\PYZus{}as\PYZus{}set\PYZus{}partition} \PY{o}{=} \PY{n}{B}\PY{p}{[}\PY{n}{SetPartition}\PY{p}{(}\PY{p}{[}\PY{p}{[}\PY{l+m+mi}{1}\PY{p}{]}\PY{p}{,}\PY{p}{[}\PY{l+m+mi}{2}\PY{p}{]}\PY{p}{,}\PY{p}{[}\PY{l+m+mi}{3}\PY{p}{]}\PY{p}{,}\PY{p}{[}\PY{l+m+mi}{4}\PY{p}{]}\PY{p}{,}\PY{p}{[}\PY{l+m+mi}{5}\PY{p}{]}\PY{p}{,}\PY{p}{[}\PY{l+m+mi}{6}\PY{p}{]}\PY{p}{]}\PY{p}{)}\PY{p}{]}
		\PY{n}{threepoint} \PY{o}{=} \PY{n}{PROPAGATOR}\PY{o}{*}\PY{n}{A}\PY{o}{.}\PY{n}{relabel\PYZus{}element}\PY{p}{(}\PY{n}{EXP\PYZus{}VAL}\PY{p}{,} \PY{p}{[}\PY{l+m+mi}{5}\PY{p}{,}\PY{l+m+mi}{6}\PY{p}{]}\PY{p}{)}
		\PY{n}{threepoint} \PY{o}{+}\PY{o}{=} \PY{n}{A}\PY{o}{.}\PY{n}{relabel\PYZus{}element}\PY{p}{(}\PY{n}{PROPAGATOR}\PY{p}{,} \PY{p}{[}\PY{l+m+mi}{1}\PY{p}{,}\PY{l+m+mi}{2}\PY{p}{,}\PY{l+m+mi}{5}\PY{p}{,}\PY{l+m+mi}{6}\PY{p}{]}\PY{p}{)}\PY{o}{*}\PY{n}{A}\PY{o}{.}\PY{n}{relabel\PYZus{}element}\PY{p}{(}\PY{n}{EXP\PYZus{}VAL}\PY{p}{,} \PY{p}{[}\PY{l+m+mi}{3}\PY{p}{,}\PY{l+m+mi}{4}\PY{p}{]}\PY{p}{)}
		\PY{n}{threepoint} \PY{o}{+}\PY{o}{=} \PY{n}{A}\PY{o}{.}\PY{n}{relabel\PYZus{}element}\PY{p}{(}\PY{n}{PROPAGATOR}\PY{p}{,} \PY{p}{[}\PY{l+m+mi}{3}\PY{p}{,}\PY{l+m+mi}{4}\PY{p}{,}\PY{l+m+mi}{5}\PY{p}{,}\PY{l+m+mi}{6}\PY{p}{]}\PY{p}{)}\PY{o}{*}\PY{n}{A}\PY{o}{.}\PY{n}{relabel\PYZus{}element}\PY{p}{(}\PY{n}{EXP\PYZus{}VAL}\PY{p}{,} \PY{p}{[}\PY{l+m+mi}{1}\PY{p}{,}\PY{l+m+mi}{2}\PY{p}{]}\PY{p}{)}
		\PY{n}{threepoint} \PY{o}{+}\PY{o}{=} \PY{n}{A}\PY{o}{.}\PY{n}{relabel\PYZus{}element}\PY{p}{(}\PY{n}{EXP\PYZus{}VAL}\PY{p}{,} \PY{p}{[}\PY{l+m+mi}{1}\PY{p}{,}\PY{l+m+mi}{2}\PY{p}{]}\PY{p}{)}\PY{o}{*}\PY{n}{A}\PY{o}{.}\PY{n}{relabel\PYZus{}element}\PY{p}{(}\PY{n}{EXP\PYZus{}VAL}\PY{p}{,} \PY{p}{[}\PY{l+m+mi}{3}\PY{p}{,}\PY{l+m+mi}{4}\PY{p}{]}\PY{p}{)}\PY{o}{*}\PY{n}{A}\PY{o}{.}\PY{n}{relabel\PYZus{}element}\PY{p}{(}\PY{n}{EXP\PYZus{}VAL}\PY{p}{,} \PY{p}{[}\PY{l+m+mi}{5}\PY{p}{,}\PY{l+m+mi}{6}\PY{p}{]}\PY{p}{)}
		\PY{n}{A}\PY{o}{.}\PY{n}{inner\PYZus{}prod}\PY{p}{(}\PY{n}{observable\PYZus{}as\PYZus{}set\PYZus{}partition}\PY{p}{,}\PY{n}{threepoint}\PY{p}{)}  
	\end{Verbatim}
\end{tcolorbox}
We do not print the result of this computation due its length.

For higher-degree expectation values we automate the computation of Wick contractions. The function kpoint(k) returns the appropriate element encoding Wick's theorem for degree $k$ expectation values.
\begin{tcolorbox}[breakable, size=fbox, boxrule=1pt, pad at break*=1mm,colback=cellbackground, colframe=cellborder]
	\prompt{In}{incolor}{14}{\boxspacing}
	\begin{Verbatim}[commandchars=\\\{\},fontsize=\small]
		\PY{c+c1}{\PYZsh{}\PYZsh{} For higher degree expvals it is useful to automate the Wick contractions}
		\PY{k}{def} \PY{n+nf}{kpoint}\PY{p}{(}\PY{n}{k}\PY{p}{)}\PY{p}{:}
		\PY{c+c1}{\PYZsh{}\PYZsh{} Fix a degree k}
		\PY{n}{pairs} \PY{o}{=} \PY{p}{[}\PY{p}{(}\PY{l+m+mi}{2}\PY{o}{*}\PY{n}{i}\PY{o}{\PYZhy{}}\PY{l+m+mi}{1}\PY{p}{,}\PY{l+m+mi}{2}\PY{o}{*}\PY{n}{i}\PY{p}{)} \PY{k}{for} \PY{n}{i} \PY{o+ow}{in} \PY{p}{[}\PY{l+m+mf}{1.}\PY{o}{.}\PY{n}{k}\PY{p}{]}\PY{p}{]}
		\PY{c+c1}{\PYZsh{}\PYZsh{} Generate the set of pairs [(1,2),....,(2k\PYZhy{}1,2k)]}
		\PY{n}{kpoint}\PY{o}{=}\PY{l+m+mi}{0}
		\PY{k}{for} \PY{n}{i} \PY{o+ow}{in} \PY{p}{[}\PY{l+m+mf}{0.}\PY{o}{.}\PY{n}{floor}\PY{p}{(}\PY{n}{k}\PY{o}{/}\PY{l+m+mi}{2}\PY{p}{)}\PY{p}{]}\PY{p}{:}
		\PY{c+c1}{\PYZsh{}\PYZsh{} Iterate over all set partitions of the pairs with blocks of size 1 or 2}
		\PY{k}{for} \PY{n}{term} \PY{o+ow}{in} \PY{n}{SetPartitions}\PY{p}{(}\PY{n}{pairs}\PY{p}{,} \PY{n+nb}{sorted}\PY{p}{(}\PY{p}{(}\PY{n}{k}\PY{o}{\PYZhy{}}\PY{l+m+mi}{2}\PY{o}{*}\PY{n}{i}\PY{p}{)}\PY{o}{*}\PY{p}{[}\PY{l+m+mi}{1}\PY{p}{]}\PY{o}{+}\PY{n}{i}\PY{o}{*}\PY{p}{[}\PY{l+m+mi}{2}\PY{p}{]}\PY{p}{,} \PY{n}{reverse}\PY{o}{=}\PY{k+kc}{True}\PY{p}{)}\PY{p}{)}\PY{p}{:}
		\PY{n}{kpoint} \PY{o}{+}\PY{o}{=} \PY{n}{prod}\PY{p}{(}\PY{n}{A}\PY{o}{.}\PY{n}{relabel\PYZus{}element}\PY{p}{(}\PY{n}{EXP\PYZus{}VAL}\PY{p}{,} \PY{n+nb}{list}\PY{p}{(}\PY{n}{i} \PY{k}{for} \PY{n}{c} \PY{o+ow}{in} \PY{n}{contraction} \PY{k}{for} \PY{n}{i} \PY{o+ow}{in} \PY{n}{c}\PY{p}{)}\PY{p}{)} \PY{k}{if} \PY{n+nb}{len}\PY{p}{(}\PY{n}{contraction}\PY{p}{)}\PY{o}{==}\PY{l+m+mi}{1} \PY{k}{else} \PY{n}{A}\PY{o}{.}\PY{n}{relabel\PYZus{}element}\PY{p}{(}\PY{n}{PROPAGATOR}\PY{p}{,} \PY{n+nb}{list}\PY{p}{(}\PY{n}{i} \PY{k}{for} \PY{n}{c} \PY{o+ow}{in} \PY{n}{contraction} \PY{k}{for} \PY{n}{i} \PY{o+ow}{in} \PY{n}{c}\PY{p}{)}\PY{p}{)}  \PY{k}{for} \PY{n}{contraction} \PY{o+ow}{in} \PY{n}{term} \PY{p}{)}
		\PY{k}{return} \PY{n}{kpoint}
	\end{Verbatim}
\end{tcolorbox}
This function can be checked against the known examples at degree $1,2,3$.
\begin{tcolorbox}[breakable, size=fbox, boxrule=1pt, pad at break*=1mm,colback=cellbackground, colframe=cellborder]
	\prompt{In}{incolor}{15}{\boxspacing}
	\begin{Verbatim}[commandchars=\\\{\},fontsize=\small]
		\PY{c+c1}{\PYZsh{}\PYZsh{} We can check that this gives the right answer for k=1,2,3}
		\PY{n}{kpoint}\PY{p}{(}\PY{l+m+mi}{1}\PY{p}{)} \PY{o}{==} \PY{n}{onepoint} \PY{o+ow}{and} \PY{n}{kpoint}\PY{p}{(}\PY{l+m+mi}{2}\PY{p}{)} \PY{o}{==} \PY{n}{twopoint} \PY{o+ow}{and} \PY{n}{kpoint}\PY{p}{(}\PY{l+m+mi}{3}\PY{p}{)} \PY{o}{==} \PY{n}{threepoint}
	\end{Verbatim}
\end{tcolorbox}
\begin{tcolorbox}[breakable, size=fbox, boxrule=.5pt, pad at break*=1mm, opacityfill=0]
	\prompt{Out}{outcolor}{15}{\boxspacing}
	\begin{Verbatim}[commandchars=\\\{\},fontsize=\small]
		True
	\end{Verbatim}
\end{tcolorbox}

To use this, for example at degree $4$ we fix an observable and compute
\begin{tcolorbox}[breakable, size=fbox, boxrule=1pt, pad at break*=1mm,colback=cellbackground, colframe=cellborder]
	\prompt{In}{incolor}{40}{\boxspacing}
	\begin{Verbatim}[commandchars=\\\{\},fontsize=\small]
		\PY{c+c1}{\PYZsh{}\PYZsh{} Now we can compute degree 4 expectation values}
		\PY{n}{observable\PYZus{}as\PYZus{}set\PYZus{}partition} \PY{o}{=} \PY{n}{B}\PY{p}{[}\PY{n}{SetPartition}\PY{p}{(}\PY{p}{[}\PY{p}{[}\PY{l+m+mi}{1}\PY{p}{]}\PY{p}{,}\PY{p}{[}\PY{l+m+mi}{2}\PY{p}{]}\PY{p}{,}\PY{p}{[}\PY{l+m+mi}{3}\PY{p}{]}\PY{p}{,}\PY{p}{[}\PY{l+m+mi}{4}\PY{p}{]}\PY{p}{,}\PY{p}{[}\PY{l+m+mi}{5}\PY{p}{]}\PY{p}{,}\PY{p}{[}\PY{l+m+mi}{6}\PY{p}{]}\PY{p}{,}\PY{p}{[}\PY{l+m+mi}{7}\PY{p}{]}\PY{p}{,}\PY{p}{[}\PY{l+m+mi}{8}\PY{p}{]}\PY{p}{]}\PY{p}{)}\PY{p}{]}
		\PY{n}{fourpoint}\PY{o}{=}\PY{n}{kpoint}\PY{p}{(}\PY{l+m+mi}{4}\PY{p}{)}
		\PY{n}{A}\PY{o}{.}\PY{n}{inner\PYZus{}prod}\PY{p}{(}\PY{n}{observable\PYZus{}as\PYZus{}set\PYZus{}partition}\PY{p}{,} \PY{n}{fourpoint}\PY{p}{)}
	\end{Verbatim}
\end{tcolorbox}
We do not print the result here either, but it can be found by running the code found at \href{https://github.com/adrianpadellaro/PhD-Thesis}{Link to GitHub Repository}.

\chapter{Observables: Double coset counting} \label{apx: double coset}
This appendix is devoted to a procedure for explicitly computing the number of double cosets of the type defined in Proposition \ref{eq: double cosets}.

In general, the number of double cosets in $H_1 \left\backslash G \right/ H_2$ can be written \cite{BenGeloun2014}
\begin{equation}
	|H_1 \left\backslash G \right/ H_2| = \frac{1}{|H_1||H_2|}\sum_{C} Z_C^{H_1\rightarrow G}Z_C^{H_2\rightarrow G}\Sym(C),
\end{equation}
where the sum is over conjugacy classes of $G$. The symbols $Z_C^{G\rightarrow H_1}, Z_C^{G\rightarrow H_2}$ denote the number of elements of $H_1$ and $H_2$ in the conjugacy class $C$ of $G$, respectively. $\Sym(C)$ is the number of elements in $G$ which commute with an element in $C$. 
\begin{proposition}
	For a permutation subgroup $H \subset G_1 \times G_2$, let $Z_{p,q}^{H}$ be the number of permutations $(h_1, h_2) \in H$ with cycle structure $p\vdash k$ in the first slot and $q \vdash k$ in the second slot.	
	For the double coset in \eqref{eqn:1colordoublecoset} we have
	\begin{align} \nonumber \label{eq: N(m+,m-) Cycle Index Formula}
		N(\vec{k}^+,\vec{k}^-) &= \frac{1}{|G(\vec{k}^+,\vec{k}^-)||S_k|} \sum_{p \vdash k} Z_{p,p}^{G(\vec{k}^+,\vec{k}^-)}Z_{p,p}^{\diag(S_k)} \Sym(p)^2 \\
		&=\frac{1}{|G(\vec{k}^+,\vec{k}^-)|} \sum_{p \vdash k} Z_{p,p}^{G(\vec{k}^+,\vec{k}^-)} \Sym(p).
	\end{align} 
	with
	\begin{equation}
		\Sym(p) = \prod_{i=1}^m p_i!i^{p_i}, \quad \sum_i ip_i = m.
	\end{equation}
\end{proposition}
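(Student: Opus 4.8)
The plan is to specialize the general double-coset counting formula of \cite{BenGeloun2014}, namely
\begin{equation}
	|H_1 \left\backslash G \right/ H_2| = \frac{1}{|H_1||H_2|}\sum_{C} Z_C^{H_1\rightarrow G}Z_C^{H_2\rightarrow G}\Sym(C),
\end{equation}
to the case at hand, where $G = S_k^+ \times S_k^-$, $H_1 = G(\vec{k}^+,\vec{k}^-)$, and $H_2 = \diag(S_k)$. First I would identify the conjugacy classes of $G = S_k \times S_k$: they are labelled by pairs of partitions $(p,q)$ with $p \vdash k$ and $q \vdash k$, an element $(h_1,h_2)$ lying in the class $(p,q)$ precisely when $h_1$ has cycle type $p$ and $h_2$ has cycle type $q$. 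With the notation $Z_{p,q}^H$ for the number of elements of a subgroup $H \subseteq S_k \times S_k$ with cycle type $(p,q)$, the general formula rewrites as a sum over pairs of partitions, with $\Sym((p,q)) = \Sym(p)\Sym(q)$ since the centralizer of $(h_1,h_2)$ in $S_k \times S_k$ is the product of the centralizers of $h_1$ and $h_2$, each of order $\Sym(p) = \prod_i p_i! i^{p_i}$ (and similarly $\Sym(q)$).

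The key simplification comes from $H_2 = \diag(S_k)$: an element $(h,h) \in \diag(S_k)$ has cycle type $(p,q)$ only when $p = q$, so $Z_{p,q}^{\diag(S_k)} = 0$ unless $p = q$, in which case $Z_{p,p}^{\diag(S_k)}$ counts the elements of $S_k$ with cycle type $p$, i.e.\ $Z_{p,p}^{\diag(S_k)} = k!/\Sym(p)$. Substituting this into the general formula collapses the double sum over $(p,q)$ to a single sum over $p \vdash k$, yielding
\begin{equation}
	N(\vec{k}^+,\vec{k}^-) = \frac{1}{|G(\vec{k}^+,\vec{k}^-)|\,|S_k|} \sum_{p \vdash k} Z_{p,p}^{G(\vec{k}^+,\vec{k}^-)}\,\frac{k!}{\Sym(p)}\,\Sym(p)^2.
\end{equation}
Then $k!/(|S_k|\Sym(p)) \cdot \Sym(p)^2 = \Sym(p)$ since $|S_k| = k!$, which gives exactly the claimed second line
\begin{equation}
	N(\vec{k}^+,\vec{k}^-) = \frac{1}{|G(\vec{k}^+,\vec{k}^-)|} \sum_{p \vdash k} Z_{p,p}^{G(\vec{k}^+,\vec{k}^-)} \Sym(p),
\end{equation}
and the first displayed line in the statement is just the intermediate step before cancelling $|S_k|$ against $k!/\Sym(p) \cdot \Sym(p)$ appropriately (keeping it in the more symmetric form $Z_{p,p}^{G}Z_{p,p}^{\diag(S_k)}\Sym(p)^2 / (|G||S_k|)$).

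The only genuinely substantive point, rather than bookkeeping, is justifying that the counting formula of \cite{BenGeloun2014} applies here — that is, confirming that $G(\vec{k}^+,\vec{k}^-)$ is indeed a subgroup of $S_k^+ \times S_k^-$ (which follows from Definitions \ref{def: edge sym grp} and \ref{def: vertex sym grp}, since $\rho^\pm$ and $\gamma^\pm$ are homomorphisms into $S_k$ and their images generate a subgroup) and recalling the derivation of the formula itself via Burnside's lemma, exactly as carried out in the proof of Proposition \ref{eq: double cosets} in equation \eqref{eqn: Burnsides lemma double coset one color}. I would therefore present the proof as: (i) recall the Burnside-type identity already established for $N(\vec{k}^+,\vec{k}^-)$; (ii) reorganize the sum over group elements $(\sigma_1,\sigma_2)$ into a sum over conjugacy classes of $S_k \times S_k$, grouping terms by the cycle type of $\sigma_1\sigma_2^{-1}$ and using that the diagonal constraint forces equal cycle types; (iii) evaluate $\Sym(p) = \prod_i p_i! i^{p_i}$ as the centralizer order. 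I expect step (ii) — the precise translation between the "gauge-fixed" equivalence-relation count and the conjugacy-class sum — to be the main place where care is needed, though it is essentially a repackaging of the computation already displayed in \eqref{eqn: Burnsides lemma double coset one color}.
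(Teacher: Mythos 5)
Your proposal is correct and takes essentially the same route as the paper: specialize the general double-coset formula to $G=S_k^+\times S_k^-$, $H_1=G(\vec{k}^+,\vec{k}^-)$, $H_2=\diag(S_k)$, note that the diagonal subgroup forces equal cycle types with $Z_{p,p}^{\diag(S_k)}=k!/\Sym(p)$ to get the second line, and back up the first line by reorganizing the Burnside sum \eqref{eqn: Burnsides lemma double coset one color} over conjugacy classes with the centralizer factors $\Sym(p)^2$. The only minor slip is in your step (ii): the grouping in that computation is by the conjugacy class of the diagonal element $\gamma$ (which the delta functions force to coincide with the class of the $G(\vec{k}^+,\vec{k}^-)$ element in each factor), not by the cycle type of $\sigma_1\sigma_2^{-1}$, but this does not affect the argument.
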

The last equality follows from
\begin{equation}
	Z_{p,p}^{\diag(S_k)}=Z_{p}^{S_k} = \frac{|S_k|}{\Sym(p)}.
\end{equation}
We now prove the first equality (the equivalence with the counting due to Burnside's lemma \eqref{eqn: Burnsides lemma double coset one color}).
\begin{proof}
	Organize the sum $\sum_{\gamma \in \diag(S_k)}$ in \eqref{eqn: Burnsides lemma double coset one color} into a sum over conjugacy classes of $S_k^+ \times S_k^-$, and a sum over elements in the conjugacy class,
	\begin{equation}
		\sum_{\gamma \in \diag(S_m)} = \sum_{p \vdash m} \hspace{4pt} \sum_{\gamma \in C_p}.
	\end{equation}
	The Kronecker delta $\delta(\sigma_1^{-1}\rho^+(\mu)\gamma^+(\nu^+)\sigma_1 \gamma^{-1})$ vanishes unless $\rho^+(\mu)\gamma^+(\nu^+)$ is in the same conjugacy class as $\gamma^{-1} \in C_p$. Similarly for the second Kronecker delta. The number of elements $(\rho^+(\mu)\gamma^+(\nu^+),\rho^-(\mu)\gamma^-(\nu^-))$ in the conjugacy class $C_p \times C_p$ of $S_k^+ \times S_k^-$ is the definition of the coefficients $Z_{p,p}^{G(\vec{k}^+,\vec{k}^-)}$. The number of elements $(\gamma,\gamma)$ in the conjugacy class $C_p \times C_p$ is $Z_{p,p}^{\diag(S_k)}$. Given an element in $G(\vec{k}^+,\vec{k}^-)$ and an element in $\diag(S_k)$ in the same conjugacy class, there exists at least one element $(\sigma_1,\sigma_2)$ which relates the two by conjugation. Therefore, the Kronecker delta is non-zero at least $ Z_{p,p}^{G(\vec{k}^+,\vec{k}^-)}Z_{p,p}^{\diag(S_k)}$ times for each conjugacy class $C_p$. In equations we have
	\begin{align} \nonumber
		&\sum_{\substack{\mu \in S_{\vec{l}}, \nu^+ \in S_{\vec{k}^+} \\ \nu^- \in S_{\vec{k}^-},\gamma \in \diag(S_k)}} \sum_{\sigma_1, \sigma_2 \in S_k}
		\begin{aligned}[t]
			&\delta(\sigma_1^{-1}\rho^+(\mu)\gamma^+(\nu^+)\sigma_1\gamma^{-1})
			\delta(\sigma_2^{-1}\rho^-(\mu)\gamma^-(\nu^-)\sigma_2\gamma^{-1})
		\end{aligned}\\ \nonumber
		&= \sum_{p \vdash k} \sum_{\sigma_1, \sigma_2 \in S_k}
		\delta(\sigma_1^{-1}G_p^+\sigma_1\gamma^{-1}_p)
		\delta(\sigma_2^{-1}G_p^-\sigma_2\gamma^{-1}_p)
		Z_{p,p}^{G(\vec{k}^+,\vec{k}^-)}Z_{p,p}^{\diag(S_k)} \\
		&= \sum_{p \vdash k} Z_{p,p}^{G(\vec{k}^+,\vec{k}^-)}Z_{p,p}^{\diag(S_k)} \Sym(p)^2.
	\end{align}
	where $(G_p^+,G_p^-)$ is an arbitrary element of $G(\vec{k}^+,\vec{k}^-)$ in the conjugacy class $C_p \times C_p$ and similarly for $\gamma^{-1}_p$ in $\diag(S_k)$. To understand the last equality, consider the case where $(\sigma_1^{-1}G_p^+\sigma_1, \sigma_2^{-1}G_p^-\sigma_2) = (\gamma_p, \gamma_p)$. If $\sigma_1',\sigma_2'$ commute with $G_p^+$ and $G_p^-$ respectively, then
	\begin{equation}
		((\sigma_1'\sigma_1)^{-1}G_p^+\sigma_1'\sigma_1,(\sigma_2'\sigma_2)^{-1}G_p^-\sigma_2'\sigma_2 ) = (\sigma_1^{-1}G_p^+\sigma_1, \sigma_2^{-1}G_p^-\sigma_2) = (\gamma_p, \gamma_p).
	\end{equation}
	The function $\Sym(p)$ is the number of elements in $S_k$ which commute with $G_p$. This only depends on the conjugacy class $C_p$, or equivalently, the partition $p$.
\end{proof}

The functions $Z^H_{p}$, which count the number of elements in the conjugacy class labelled by $p$, are of central importance in equation \ref{eq: N(m+,m-) Cycle Index Formula}.
\begin{definition}[Cycle index]
	For a partition $p \vdash l$, let $\mathbf{x}^p$ be the degree $l$ monomial $x_1^{p_1}x_2^{p_2}\dots$, where $\sum_j jp_j = l$. We construct generating functions, called cycle indices
	\begin{equation}
		Z^H(\mathbf{x}) = Z^H(x_1,x_2,\dots) = \frac{1}{|H|} \sum_p Z_p^H \mathbf{x}^p, 
	\end{equation}
	such that
	\begin{equation}
		\frac{1}{|H|}Z^H_p = \text{Coefficient}(Z^H(x_1,x_2,\dots), \mathbf{x}^p).
	\end{equation}
\end{definition}

We are interested in the cycle indices $Z^{G(\vec{k}^+,\vec{k}^-)}$. To efficiently describe them we define the following compact notation.
\begin{definition}[Exponential notation for vector partition]
	A vector partition
	\begin{equation}
		(\vec{k}^+,\vec{k}^-) = (k_1^+, k_1^-) + \dots +  (k_l^+, k_l^-).
	\end{equation}
	can equivalently be described using a generalization of exponential notation for partitions,
	\begin{equation}
		(\vec{k}^+,\vec{k}^-) = p_{01}(0,1) + p_{10}(1,0) + \dots = \sum_{v^{(2)}} p_{v^{(2)}} v^{(2)},
	\end{equation}
	where the sum is over ordered lists of two integers $v^{(2)}$ with at least one non-zero entry and $p_{v^{(2)}}$ is the number of times it appears in the vector partition.
\end{definition}
\begin{definition}[Wreath product]
	A general wreath product $S_l[S_v]$ is a semi-direct product 
	\begin{equation}
		S_l \ltimes \underbrace{(S_v \times \dots \times S_v)}_{\text{l factors}},
	\end{equation}
	which is naturally viewed as a subgroup of $S_{lv}$.
	For example, elements of $S_4[S_2]$ correspond to diagrams
	\begin{equation}
		\vcenter{\hbox{\begin{tikzpicture}[scale=1]
					\def \k {3}
					\def \m {7}
					\def \sep {0.5}
					\def \voffset {0.25}
					\pgfmathparse{2*\sep};
					\pgfmathsetmacro{\vsep}{\pgfmathresult};
					\pgfmathint{\k-1};
					\pgfmathsetmacro{\kk}{\pgfmathresult};
					\foreach \v in {0,...,\k}
					{
						\pgfmathparse{\v*\vsep+\voffset}
						\coordinate (v\v) at (\pgfmathresult,.5) {};
					}
					\foreach \v in {0,...,\k}
					{
						\pgfmathparse{\v*\vsep+\voffset}
						\coordinate (w\v) at (\pgfmathresult,-.5) {};
						\draw[] (w\v) -- (v\v);
					}
					\foreach \eOutm in {0,...,\m}
					{
						\pgfmathint{\eOutm+1};
						\pgfmathsetmacro{\seOutm}{\pgfmathresult};
						\pgfmathparse{\eOutm*\sep};
						\coordinate (eom\seOutm) at (\pgfmathresult,1.2);
					}
					\foreach \eOutm in {0,...,\m}
					{
						\pgfmathint{\eOutm+1};
						\pgfmathsetmacro{\seOutm}{\pgfmathresult};
						\pgfmathparse{\eOutm*\sep};
						\coordinate (eomm\seOutm) at (\pgfmathresult,1);
						\draw[] (eom\seOutm) -- (eomm\seOutm);
					}
					\foreach \eInm in {0,...,\m}
					{
						\pgfmathint{\eInm+1};
						\pgfmathsetmacro{\seInm}{\pgfmathresult};
						\pgfmathparse{\eInm*\sep};
						\coordinate (eim\seInm) at (\pgfmathresult,-1.8);
					}
					\foreach \eInm in {0,...,\m}
					{
						\pgfmathint{\eInm+1};
						\pgfmathsetmacro{\seInm}{\pgfmathresult};
						\pgfmathparse{\eInm*\sep};
						\coordinate (eimm\seInm) at (\pgfmathresult,-1);
						\draw[] (eim\seInm) -- (eimm\seInm);
					}
					\begin{scope}[decoration={markings}]
						\draw[postaction={decorate}] (v0) -- (eomm1);
						\draw[postaction={decorate}] (v0) -- (eomm2);
						\draw[postaction={decorate}] (v1) -- (eomm3);
						\draw[postaction={decorate}] (v1) -- (eomm4);
						\draw[postaction={decorate}] (v2) -- (eomm5);
						\draw[postaction={decorate}] (v2) -- (eomm6);
						\draw[postaction={decorate}] (v3) -- (eomm7);
						\draw[postaction={decorate}] (v3) -- (eomm8);
						\draw[postaction={decorate}] (eimm1)  -- (w0);
						\draw[postaction={decorate}] (eimm2)  -- (w0);
						\draw[postaction={decorate}] (eimm3)  -- (w1);
						\draw[postaction={decorate}] (eimm4)  -- (w1);
						\draw[postaction={decorate}] (eimm5)  -- (w2);
						\draw[postaction={decorate}] (eimm6)  -- (w2);
						\draw[postaction={decorate}] (eimm7)  -- (w3);
						\draw[postaction={decorate}] (eimm8)  -- (w3);
					\end{scope}
					\draw[fill=white] ($(v0)+(-0.5,-0.2)$) rectangle node{$\mu$} ($(w3)+(0.5,0.2)$);
					\draw[fill=white] ($(eimm1)-(0.1,.5)$) rectangle node{$\nu_1$} ($(eimm2)-(-0.1,0.1)$);
					\draw[fill=white] ($(eimm3)-(0.1,.5)$) rectangle node{$\nu_2$} ($(eimm4)-(-0.1,0.1)$);
					\draw[fill=white] ($(eimm5)-(0.1,.5)$) rectangle node{$\nu_3$} ($(eimm6)-(-0.1,0.1)$);
					\draw[fill=white] ($(eimm7)-(0.1,.5)$) rectangle node{$\nu_4$} ($(eimm8)-(-0.1,0.1)$);
		\end{tikzpicture}}}\label{eqn: wreath_diagram}
	\end{equation}
	with $\nu_i \in S_2, \mu \in S_4$. The vertices are concatenations of edges and $\mu$ permutes the resulting collections
	\begin{equation}
		\vcenter{\hbox{\scalebox{2}[2]{\begin{tikzpicture}[scale=1]
						\draw (-0.25,-0.3) -- (0,0) -- (-0.0,0.3);
						\draw (-0.15,-0.3) -- (0,0) -- (-0.0,0.3);
						\draw (0.15,-0.3) -- (0,0) -- (0.0,0.3);
						\draw (0.25,-0.3) -- (0,0) -- (0.0,0.3);
		\end{tikzpicture}}}}
		\longleftrightarrow
		\vcenter{\hbox{\scalebox{2}[2]{\begin{tikzpicture}[scale=1]
						\draw (-0.25,-0.3) -- (-0.05,0) -- (-0.05,0.3);
						\draw (-0.15,-0.3) -- (-0.02,0) -- (-0.02,0.3);
						\draw (0.15,-0.3) -- (0.02,0) -- (0.02,0.3);
						\draw (0.25,-0.3) -- (0.05,0) -- (0.05,0.3);
		\end{tikzpicture}}}}
	\end{equation}
\end{definition}
\begin{example}
	The illustration of $S_4[S_2]$ in \eqref{eqn: wreath_diagram} can be made more explicit by embedding it into $S_8$. Labelling the edges at the bottom from left to right by $1,\dots,8$ we have the subgroups
	\begin{equation}
		\nu_1 \in \Perms({1,2}), \nu_2 \in \Perms({3,4}), \nu_3 \in \Perms({5,6}), \nu_4 \in \Perms({7,8}),
	\end{equation}
	and $\mu$ is in the subgroup generated by the swaps
	\begin{equation}
		S_4 \cong \langle (13)(24), (35)(46), (57)(68) \rangle.
	\end{equation}
\end{example}
With these definitions, we can write the symmetry group as
\begin{equation}
	G(\vec{k}^+,\vec{k}^-) = \bigtimes_{v^{(2)}} S_{p_{v^{(2)}}}[S_{v^{(2)}}], \quad v^{(2)} \in \mathbb{N}^{\times 2}\left\backslash \{0,0\}\right. \label{eqn: sym grp Gkk}
\end{equation}
where
\begin{equation}
	S_{v^{(2)}} = S_{v^{(2)}_1} \times S_{v^{(2)}_2},
\end{equation}
and $S_{p_{v^{(2)}}}[S_{v^{(2)}}]$ is the wreath product.

The form \eqref{eqn: sym grp Gkk} of the symmetry group is particularly useful for simplifying the computation of its cycle index.
\begin{proposition} \label{prop: cycle index of prod}
	Let $G_1 \times G_2$ be a subgroup of $S_k = \Perms(\{1,\dots,k\})$ for some $k$. Then the cycle index of the direct product is a product of cycle indices
	\begin{equation}
		Z^{G_1 \times G_2}(\mathbf{x}) = Z^{{G_1}}(\mathbf{x})Z^{{G_2}}(\mathbf{x}). \label{eqn: Cycle Index Product Group}
	\end{equation}
\end{proposition}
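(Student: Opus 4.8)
The plan is to reduce the claim to the combinatorial definition of the cycle index as a generating function counting elements by cycle type, and then exhibit a bijection between conjugacy-class data on $G_1 \times G_2$ and pairs of conjugacy-class data on $G_1$ and on $G_2$. Concretely, I would start from the definition
\begin{equation}
	Z^{H}(\mathbf{x}) = \frac{1}{|H|}\sum_{h \in H} \mathbf{x}^{p(h)},
\end{equation}
where $p(h) \vdash k$ is the cycle type of $h$ acting on $\{1,\dots,k\}$. This rephrasing (summing over group elements rather than conjugacy classes) removes the need to track the multiplicities $Z^H_p$ explicitly and makes the product structure transparent.

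Next I would observe that if $G_1 \subseteq \Perms(\Omega_1)$ and $G_2 \subseteq \Perms(\Omega_2)$ with $\Omega_1, \Omega_2$ disjoint and $\Omega_1 \sqcup \Omega_2 = \{1,\dots,k\}$, then an element of $G_1 \times G_2$ is a pair $(g_1, g_2)$ acting as $g_1$ on $\Omega_1$ and $g_2$ on $\Omega_2$, fixing the complementary set pointwise. The cycle decomposition of $(g_1,g_2)$ on $\{1,\dots,k\}$ is simply the disjoint union of the cycles of $g_1$ on $\Omega_1$ and the cycles of $g_2$ on $\Omega_2$; hence $\mathbf{x}^{p((g_1,g_2))} = \mathbf{x}^{p(g_1)}\mathbf{x}^{p(g_2)}$ as monomials in $x_1, x_2, \dots$. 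Substituting into the sum and using $|G_1 \times G_2| = |G_1||G_2|$ gives
\begin{equation}
	Z^{G_1 \times G_2}(\mathbf{x}) = \frac{1}{|G_1||G_2|}\sum_{g_1 \in G_1}\sum_{g_2 \in G_2} \mathbf{x}^{p(g_1)}\mathbf{x}^{p(g_2)} = Z^{G_1}(\mathbf{x})\,Z^{G_2}(\mathbf{x}),
\end{equation}
which is the claim. I would then note that the hypothesis "$G_1 \times G_2 \subseteq S_k$" in the statement implicitly means exactly this disjoint-support embedding (as is the case in the application \eqref{eqn: sym grp Gkk}), and I would make this assumption explicit in the proof.

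The main obstacle — though it is conceptual rather than technical — is being careful about what embedding $G_1 \times G_2 \hookrightarrow S_k$ is intended: the identity $\mathbf{x}^{p((g_1,g_2))} = \mathbf{x}^{p(g_1)}\mathbf{x}^{p(g_2)}$ holds precisely when the two factors act on disjoint point sets. If one instead used, say, the diagonal embedding, the identity would fail, so the statement must be read with the "external" direct product acting on $\Omega_1 \sqcup \Omega_2$. Once that is pinned down, everything else is a one-line manipulation of the generating-function sum, so I would keep the proof short and simply cite the disjoint-union-of-cycles observation. This proposition then feeds into the cycle index of $G(\vec{k}^+,\vec{k}^-)$ via \eqref{eqn: sym grp Gkk}, together with a companion formula for the cycle index of a wreath product $S_l[S_v]$.
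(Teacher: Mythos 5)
Your proof is correct and follows essentially the same route as the paper: sum the monomial contributions element by element, note that the cycle type of $(g_1,g_2)$ is the union of the cycle types of $g_1$ and $g_2$, and factorize the double sum. Your explicit remark that the identity requires the two factors to act on disjoint point sets is a point the paper leaves implicit, but it does not change the argument.
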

\begin{proof}
	Suppose $g_1 \in G_1$ has cycle structure $p$ where $p_i$ is the number of $i$-cycles, and $g_2 \in G_2$ has cycle structure $q$. The cycle structure of the product $g_1 g_2$ is $r=(p_1+q_1, p_2+q_2, \dots)$. Therefore, the contribution of $(g_1, g_2) \in G_1 \times G_2$ to the cycle index is
	\begin{equation}
		\mathbf{x}^p \mathbf{x}^q.
	\end{equation}
	Summing up the contributions from every element gives
	\begin{equation}
		Z^{G_1 \times G_2}(\mathbf{x}) = Z^{{G_1}}(\mathbf{x})Z^{{G_2}}(\mathbf{x}).
	\end{equation}
\end{proof}
It is convenient to formally think of $\mathbf{x} = (x_1,x_2,\dots)$ as a countably infinite number of variables. In practice it truncates at $x_c$, where $c$ is the size of the largest cycle in ${G_1 \times G_2}$.

Proposition \ref{prop: cycle index of prod} implies that we can compute the cycle indices of each wreath group separately. Cycle indices of wreath products can be computed as follows.
\begin{theorem}\label{thm: cycle index of wreath}	
	The cycle index of a wreath product $S_l[S_v]$ is
	\begin{equation}
		Z^{S_l[S_v]}(x_1,\dots,x_{lv}) = Z^{S_l}(Z_1^{S_v}(\mathbf{x}), \dots, Z_{l}^{S_v}(\mathbf{x})),
	\end{equation}
	where
	\begin{equation}
		Z_i^{S_v}(\mathbf{x}) = Z^{S_v}(x_{1\cdot i}, x_{2 \cdot i}, \dots,x_{v\cdot i}),
	\end{equation}
	is given by multiplying the labels on $x_1,x_2, \dots$ by $i$ in the cycle index.
\end{theorem}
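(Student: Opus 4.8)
The strategy is the standard Pólya-theoretic decomposition of a wreath-product permutation according to the cycle structure of its ``top'' component. Write an element of $S_l[S_v] = S_l \ltimes (S_v)^{\times l}$ as a tuple $g = (\mu; \nu_1, \dots, \nu_l)$ with $\mu \in S_l$ and each $\nu_i \in S_v$, acting on $\{1,\dots,l\} \times \{1,\dots,v\}$ by $g\cdot(i,a) = (\mu(i), \nu_i(a))$. First I would compute the cycle type of $g$ as a permutation of the $lv$ points. Iterating the action shows that for $i$ lying in a $d$-cycle of $\mu$, the $d$-th power of $g$ fixes the block over $i$ and acts there by the ``residual'' permutation $\pi_i := \nu_{\mu^{d-1}(i)}\cdots\nu_{\mu(i)}\,\nu_i \in S_v$. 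Consequently each $d$-cycle of $\mu$, together with the residual permutation $\pi_i$ (well defined up to conjugacy, independent of the chosen representative $i$ of the cycle), contributes to $g$ exactly one $(dm)$-cycle for every $m$-cycle of $\pi_i$; in terms of the cycle-index monomial this is the factor obtained from the cycle type of $\pi_i$ by the substitution $x_j \mapsto x_{jd}$, i.e. a factor whose average value over $\pi_i$ is $Z_d^{S_v}(\mathbf{x})$.

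Next I would assemble the generating function. Since the contribution of $g$ to $|S_l[S_v]|\,Z^{S_l[S_v]}(\mathbf{x})$ is the product over the cycles of $\mu$ of the corresponding block factors, and the residual permutations attached to distinct cycles of $\mu$ depend on disjoint sets of the $\nu_i$, the sum over $(\nu_1,\dots,\nu_l) \in (S_v)^{\times l}$ factorises over the cycles of $\mu$. For a fixed $d$-cycle of $\mu$ the sum over the $d$ relevant $\nu$'s of the block factor equals $|S_v|^d \cdot Z_d^{S_v}(\mathbf{x})$, by the distributional lemma below. Multiplying over all cycles of $\mu$ and using $\sum (\text{cycle lengths of }\mu) = l$, the factors $|S_v|^d$ collect to $|S_v|^l$, which cancels the $|S_v|^l$ in $|S_l[S_v]| = l!\,|S_v|^l$. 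What remains is
\begin{equation}
	Z^{S_l[S_v]}(\mathbf{x}) = \frac{1}{l!}\sum_{\mu \in S_l} \ \prod_{d\ \text{a cycle length of}\ \mu} Z_d^{S_v}(\mathbf{x}),
\end{equation}
and recognising the right-hand side as $Z^{S_l}$ evaluated at $y_d = Z_d^{S_v}(\mathbf{x})$ (since $Z^{S_l}(y_1,y_2,\dots) = \frac{1}{l!}\sum_{\mu}\prod_{d} y_d^{(\#\,d\text{-cycles of }\mu)}$) gives the claimed identity $Z^{S_l[S_v]}(x_1,\dots,x_{lv}) = Z^{S_l}\bigl(Z_1^{S_v}(\mathbf{x}),\dots,Z_l^{S_v}(\mathbf{x})\bigr)$.

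\textbf{The distributional lemma, and the main obstacle.} The step that carries the real content is: for a fixed $d$-cycle $(i_1\,i_2\,\cdots\,i_d)$ of $\mu$, as $(\nu_{i_1},\dots,\nu_{i_d})$ ranges over $(S_v)^{\times d}$, the residual product $\pi = \nu_{i_d}\cdots\nu_{i_1}$ is uniformly distributed over $S_v$ (each value attained exactly $|S_v|^{d-1}$ times). This is immediate because, fixing $\nu_{i_2},\dots,\nu_{i_d}$ arbitrarily, the map $\nu_{i_1}\mapsto \pi$ is a bijection of $S_v$. Hence the block-factor sum is $\sum_{(\nu_{i_1},\dots,\nu_{i_d})}(\text{monomial of }\pi,\ x_j\to x_{jd}) = |S_v|^{d-1}\sum_{\pi\in S_v}(\text{monomial of }\pi,\ x_j\to x_{jd}) = |S_v|^{d}\,Z_d^{S_v}(\mathbf{x})$. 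The only subtlety I anticipate is careful bookkeeping of conventions — the precise order in which the $\nu$'s multiply to form $\pi$, and whether the group acts on the left or right — but since only the conjugacy class (equivalently the cycle type) of $\pi$ enters, and reversal/conjugation preserve cycle type, every admissible convention yields the same generating function; I would simply fix one convention at the outset and note this invariance. Beyond that the proof is routine, so I would keep the write-up to the cycle-counting identity, the factorisation over cycles of $\mu$, and the one-line distributional lemma.
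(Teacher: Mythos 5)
Your proof is correct, and it is essentially the argument the paper appeals to: the paper's ``proof'' simply cites P\'olya and Constantine and quotes the fixed-$\mu$ contribution $\frac{1}{|S_l|}Z_1^{S_v}(\mathbf{x})^{p_1}\cdots Z_l^{S_v}(\mathbf{x})^{p_l}$, which is exactly what your residual-permutation cycle computation together with the uniform-distribution lemma establishes. In other words, you have supplied in full the standard P\'olya argument that the paper delegates to the references, with the correct handling of the convention ambiguity (only the conjugacy class of the residual product matters).
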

\begin{proof}
	This result originally proved by Pólya in \cite{Polya} says: for a permutation \eqref{eqn: wreath_diagram} with $\mu$ fixed to have cycle structure $p \vdash l$, the contribution to the cycle index as we sum over $S_v^{\times l}$ is \cite{Constantine}
	\begin{equation}
		\frac{1}{\abs{S_l}}Z_1^{S_v}(\mathbf{x})^{p_1} \dots Z_l^{S_v}(\mathbf{x})^{p_l}.
	\end{equation}
\end{proof}
We are interested in counting cycles of wreath products of the form $S_l[S_{v^+} \times S_{v^-}]$. This wreath product is most naturally thought of as a subgroup of $S_{lv^+ + lv^-}$. However, elements in $S_l[S_{v^+} \times S_{v^-}]$ are determined by $\mu \in S_l, \nu^+_i \in S_{v^+}, \nu^-_i \in S_{v^-}$ according to the diagram (in the case of $S_4[S_2 \times S_2]$)
\begin{equation}
	\vcenter{\hbox{\begin{tikzpicture}[scale=2]
				\def \k {3}
				\def \m {15}
				\def \sep {0.25}
				\def \voffset {0.375}
				\pgfmathparse{(\sep*(\m)-2*\voffset)/\k};
				\pgfmathsetmacro{\vsep}{\pgfmathresult};
				\pgfmathint{\k-1};
				\pgfmathsetmacro{\kk}{\pgfmathresult};
				\foreach \v in {0,...,\k}
				{
					\pgfmathparse{\v*\vsep+\voffset}
					\coordinate (v\v) at (\pgfmathresult,.5) {};
				}
				\foreach \v in {0,...,\k}
				{
					\pgfmathparse{\v*\vsep+\voffset}
					\coordinate (w\v) at (\pgfmathresult,-.5) {};
					\draw[] (w\v) -- (v\v);
				}
				\foreach \eOutm in {0,...,\m}
				{
					\pgfmathint{\eOutm+1};
					\pgfmathsetmacro{\seOutm}{\pgfmathresult};
					\pgfmathparse{\eOutm*\sep};
					\coordinate (eom\seOutm) at (\pgfmathresult,1.2);
				}
				\foreach \eOutm in {0,...,\m}
				{
					\pgfmathint{\eOutm+1};
					\pgfmathsetmacro{\seOutm}{\pgfmathresult};
					\pgfmathparse{\eOutm*\sep};
					\coordinate (eomm\seOutm) at (\pgfmathresult,1);
					\draw[] (eom\seOutm) -- (eomm\seOutm);
				}
				\foreach \eInm in {0,...,\m}
				{
					\pgfmathint{\eInm+1};
					\pgfmathsetmacro{\seInm}{\pgfmathresult};
					\pgfmathparse{\eInm*\sep};
					\coordinate (eim\seInm) at (\pgfmathresult,-1.8);
				}
				\foreach \eInm in {0,...,\m}
				{
					\pgfmathint{\eInm+1};
					\pgfmathsetmacro{\seInm}{\pgfmathresult};
					\pgfmathparse{\eInm*\sep};
					\coordinate (eimm\seInm) at (\pgfmathresult,-1);
					\draw[] (eim\seInm) -- (eimm\seInm);
				}
				\begin{scope}[decoration={markings}]
					\draw[postaction={decorate}] (v0) -- (eomm1);
					\draw[postaction={decorate}] (v0) -- (eomm2);
					\draw[postaction={decorate}] (v0) -- (eomm3);
					\draw[postaction={decorate}] (v0) -- (eomm4);
					\draw[postaction={decorate}] (v1) -- (eomm5);
					\draw[postaction={decorate}] (v1) -- (eomm6);
					\draw[postaction={decorate}] (v1) -- (eomm7);
					\draw[postaction={decorate}] (v1) -- (eomm8);
					\draw[postaction={decorate}] (v2) -- (eomm9);
					\draw[postaction={decorate}] (v2) -- (eomm10);
					\draw[postaction={decorate}] (v2) -- (eomm11);
					\draw[postaction={decorate}] (v2) -- (eomm12);
					\draw[postaction={decorate}] (v3) -- (eomm13);
					\draw[postaction={decorate}] (v3) -- (eomm14);
					\draw[postaction={decorate}] (v3) -- (eomm15);
					\draw[postaction={decorate}] (v3) -- (eomm16);
					\draw[postaction={decorate}] (eimm1)  -- (w0);
					\draw[postaction={decorate}] (eimm2)  -- (w0);
					\draw[postaction={decorate}] (eimm3)  -- (w0);
					\draw[postaction={decorate}] (eimm4)  -- (w0);
					\draw[postaction={decorate}] (eimm5)  -- (w1);
					\draw[postaction={decorate}] (eimm6)  -- (w1);
					\draw[postaction={decorate}] (eimm7)  -- (w1);
					\draw[postaction={decorate}] (eimm8)  -- (w1);
					\draw[postaction={decorate}] (eimm9)  -- (w2);
					\draw[postaction={decorate}] (eimm10)  -- (w2);
					\draw[postaction={decorate}] (eimm11)  -- (w2);
					\draw[postaction={decorate}] (eimm12)  -- (w2);
					\draw[postaction={decorate}] (eimm13)  -- (w3);
					\draw[postaction={decorate}] (eimm14)  -- (w3);
					\draw[postaction={decorate}] (eimm15)  -- (w3);
					\draw[postaction={decorate}] (eimm16)  -- (w3);
				\end{scope}
				\draw[fill=white] ($(v0)+(-0.5,-0.2)$) rectangle node{$\mu$} ($(w3)+(0.5,0.2)$);
				\draw[fill=white] ($(eimm1)-(0.1,.5)$) rectangle node{$\nu_1^+$} ($(eimm2)-(-0.1,0.1)$);
				\draw[fill=white] ($(eimm3)-(0.1,.5)$) rectangle node{$\nu_1^-$} ($(eimm4)-(-0.1,0.1)$);
				\draw[fill=white] ($(eimm5)-(0.1,.5)$) rectangle node{$\nu_2^+$} ($(eimm6)-(-0.1,0.1)$);
				\draw[fill=white] ($(eimm7)-(0.1,.5)$) rectangle node{$\nu_2^-$} ($(eimm8)-(-0.1,0.1)$);
				\draw[fill=white] ($(eimm9)-(0.1,.5)$) rectangle node{$\nu_3^+$} ($(eimm10)-(-0.1,0.1)$);
				\draw[fill=white] ($(eimm11)-(0.1,.5)$) rectangle node{$\nu_3^-$} ($(eimm12)-(-0.1,0.1)$);
				\draw[fill=white] ($(eimm13)-(0.1,.5)$) rectangle node{$\nu_4^+$} ($(eimm14)-(-0.1,0.1)$);
				\draw[fill=white] ($(eimm15)-(0.1,.5)$) rectangle node{$\nu_4^-$} ($(eimm16)-(-0.1,0.1)$);
	\end{tikzpicture}}}\label{eqn: wreath_diagram2}
\end{equation}
which can be factorized as
\begin{equation}
	\vcenter{\hbox{{\begin{tikzpicture}[scale=1]
					\def \k {3}
					\def \m {7}
					\def \sep {0.5}
					\def \voffset {0.25}
					\pgfmathparse{(\sep*(\m)-2*\voffset)/\k};
					\pgfmathsetmacro{\vsep}{\pgfmathresult};
					\pgfmathint{\k-1};
					\pgfmathsetmacro{\kk}{\pgfmathresult};
					\foreach \v in {0,...,\k}
					{
						\pgfmathparse{\v*\vsep+\voffset}
						\coordinate (v\v) at (\pgfmathresult,.5) {};
					}
					\foreach \v in {0,...,\k}
					{
						\pgfmathparse{\v*\vsep+\voffset}
						\coordinate (w\v) at (\pgfmathresult,-.5) {};
						\draw[] (w\v) -- (v\v);
					}
					\foreach \eOutm in {0,...,\m}
					{
						\pgfmathint{\eOutm+1};
						\pgfmathsetmacro{\seOutm}{\pgfmathresult};
						\pgfmathparse{\eOutm*\sep};
						\coordinate (eom\seOutm) at (\pgfmathresult,1.2);
					}
					\foreach \eOutm in {0,...,\m}
					{
						\pgfmathint{\eOutm+1};
						\pgfmathsetmacro{\seOutm}{\pgfmathresult};
						\pgfmathparse{\eOutm*\sep};
						\coordinate (eomm\seOutm) at (\pgfmathresult,1);
						\draw[] (eom\seOutm) -- (eomm\seOutm);
					}
					\foreach \eInm in {0,...,\m}
					{
						\pgfmathint{\eInm+1};
						\pgfmathsetmacro{\seInm}{\pgfmathresult};
						\pgfmathparse{\eInm*\sep};
						\coordinate (eim\seInm) at (\pgfmathresult,-1.8);
					}
					\foreach \eInm in {0,...,\m}
					{
						\pgfmathint{\eInm+1};
						\pgfmathsetmacro{\seInm}{\pgfmathresult};
						\pgfmathparse{\eInm*\sep};
						\coordinate (eimm\seInm) at (\pgfmathresult,-1);
						\draw[] (eim\seInm) -- (eimm\seInm);
					}
					\begin{scope}[decoration={markings}]
						\draw[postaction={decorate}] (v0) -- (eomm1);
						\draw[postaction={decorate}] (v0) -- (eomm2);
						\draw[postaction={decorate}] (v1) -- (eomm3);
						\draw[postaction={decorate}] (v1) -- (eomm4);
						\draw[postaction={decorate}] (v2) -- (eomm5);
						\draw[postaction={decorate}] (v2) -- (eomm6);
						\draw[postaction={decorate}] (v3) -- (eomm7);
						\draw[postaction={decorate}] (v3) -- (eomm8);
						\draw[postaction={decorate}] (eimm1)  -- (w0);
						\draw[postaction={decorate}] (eimm2)  -- (w0);
						\draw[postaction={decorate}] (eimm3)  -- (w1);
						\draw[postaction={decorate}] (eimm4)  -- (w1);
						\draw[postaction={decorate}] (eimm5)  -- (w2);
						\draw[postaction={decorate}] (eimm6)  -- (w2);
						\draw[postaction={decorate}] (eimm7)  -- (w3);
						\draw[postaction={decorate}] (eimm8)  -- (w3);
					\end{scope}
					\draw[fill=white] ($(v0)+(-0.5,-0.2)$) rectangle node{$\mu$} ($(w3)+(0.5,0.2)$);
					\draw[fill=white] ($(eimm1)-(0.1,.6)$) rectangle node{$\nu_1^+$} ($(eimm2)-(-0.1,0.1)$);
					\draw[fill=white] ($(eimm3)-(0.1,.6)$) rectangle node{$\nu_2^+$} ($(eimm4)-(-0.1,0.1)$);
					\draw[fill=white] ($(eimm5)-(0.1,.6)$) rectangle node{$\nu_3^+$} ($(eimm6)-(-0.1,0.1)$);
					\draw[fill=white] ($(eimm7)-(0.1,.6)$) rectangle node{$\nu_4^+$} ($(eimm8)-(-0.1,0.1)$);
			\end{tikzpicture}}~{\begin{tikzpicture}[scale=1]
					\def \k {3}
					\def \m {7}
					\def \sep {0.5}
					\def \voffset {0.25}
					\pgfmathparse{(\sep*(\m)-2*\voffset)/\k};
					\pgfmathsetmacro{\vsep}{\pgfmathresult};
					\pgfmathint{\k-1};
					\pgfmathsetmacro{\kk}{\pgfmathresult};
					\foreach \v in {0,...,\k}
					{
						\pgfmathparse{\v*\vsep+\voffset}
						\coordinate (v\v) at (\pgfmathresult,.5) {};
					}
					\foreach \v in {0,...,\k}
					{
						\pgfmathparse{\v*\vsep+\voffset}
						\coordinate (w\v) at (\pgfmathresult,-.5) {};
						\draw[] (w\v) -- (v\v);
					}
					\foreach \eOutm in {0,...,\m}
					{
						\pgfmathint{\eOutm+1};
						\pgfmathsetmacro{\seOutm}{\pgfmathresult};
						\pgfmathparse{\eOutm*\sep};
						\coordinate (eom\seOutm) at (\pgfmathresult,1.2);
					}
					\foreach \eOutm in {0,...,\m}
					{
						\pgfmathint{\eOutm+1};
						\pgfmathsetmacro{\seOutm}{\pgfmathresult};
						\pgfmathparse{\eOutm*\sep};
						\coordinate (eomm\seOutm) at (\pgfmathresult,1);
						\draw[] (eom\seOutm) -- (eomm\seOutm);
					}
					\foreach \eInm in {0,...,\m}
					{
						\pgfmathint{\eInm+1};
						\pgfmathsetmacro{\seInm}{\pgfmathresult};
						\pgfmathparse{\eInm*\sep};
						\coordinate (eim\seInm) at (\pgfmathresult,-1.8);
					}
					\foreach \eInm in {0,...,\m}
					{
						\pgfmathint{\eInm+1};
						\pgfmathsetmacro{\seInm}{\pgfmathresult};
						\pgfmathparse{\eInm*\sep};
						\coordinate (eimm\seInm) at (\pgfmathresult,-1);
						\draw[] (eim\seInm) -- (eimm\seInm);
					}
					\begin{scope}[decoration={markings}]
						\draw[postaction={decorate}] (v0) -- (eomm1);
						\draw[postaction={decorate}] (v0) -- (eomm2);
						\draw[postaction={decorate}] (v1) -- (eomm3);
						\draw[postaction={decorate}] (v1) -- (eomm4);
						\draw[postaction={decorate}] (v2) -- (eomm5);
						\draw[postaction={decorate}] (v2) -- (eomm6);
						\draw[postaction={decorate}] (v3) -- (eomm7);
						\draw[postaction={decorate}] (v3) -- (eomm8);
						\draw[postaction={decorate}] (eimm1)  -- (w0);
						\draw[postaction={decorate}] (eimm2)  -- (w0);
						\draw[postaction={decorate}] (eimm3)  -- (w1);
						\draw[postaction={decorate}] (eimm4)  -- (w1);
						\draw[postaction={decorate}] (eimm5)  -- (w2);
						\draw[postaction={decorate}] (eimm6)  -- (w2);
						\draw[postaction={decorate}] (eimm7)  -- (w3);
						\draw[postaction={decorate}] (eimm8)  -- (w3);
					\end{scope}
					\draw[fill=white] ($(v0)+(-0.5,-0.2)$) rectangle node{$\mu$} ($(w3)+(0.5,0.2)$);
					\draw[fill=white] ($(eimm1)-(0.1,.6)$) rectangle node{$\nu_1^-$} ($(eimm2)-(-0.1,0.1)$);
					\draw[fill=white] ($(eimm3)-(0.1,.6)$) rectangle node{$\nu_2^-$} ($(eimm4)-(-0.1,0.1)$);
					\draw[fill=white] ($(eimm5)-(0.1,.6)$) rectangle node{$\nu_3^-$} ($(eimm6)-(-0.1,0.1)$);
					\draw[fill=white] ($(eimm7)-(0.1,.6)$) rectangle node{$\nu_4^-$} ($(eimm8)-(-0.1,0.1)$);
	\end{tikzpicture}}}}\label{eqn: wreath_diagram3}
\end{equation}
This amounts to embedding $S_l[S_{v^+} \times S_{v^-}]$ as a subgroup of $S_{lv^+} \times S_{lv^-}$.
From the double coset \eqref{eqn:1colordoublecoset} we can see that this is the type of embedding we are interested in. By using Theorem \ref{thm: cycle index of wreath} for the cycle index of a wreath product, we can separately keep track of the cycle structure of the left and right diagram in \eqref{eqn: wreath_diagram3}. For $\mu$ with fixed cycle structure $p \vdash l$, the contribution of the cycle index for $S_l[S_{v^+} \times S_{v^-}]$ as embedded into $S_{lv^+} \times S_{lv^-}$ is simply the product of the contribution from each. That is
\begin{equation}
	\frac{1}{\abs{S_l}}\qty[Z_1^{S_{v^+}}(\mathbf{x})^{p_1}\dots Z_l^{S_{v^+}}(\mathbf{x})^{p_l}]\qty[Z_1^{S_{v^-}}(\mathbf{y})^{p_1}\dots Z_l^{S_{v^-}}(\mathbf{y})^{p_l}]. \label{eq: xy cycle index 1}
\end{equation}
If we sum over all $\mu \in S_k$ we get the generating function
\begin{equation}
	Z^{S_l[S_{v^+} \times S_{v^-}]}(x_1,\dots,x_{lv^+ + lv^-}) = Z^{S_l}(Z_1^{S_{v^+}}(\mathbf{x})Z_1^{S_{v^-}}(\mathbf{y}), \dots, Z_{l}^{S_{v^+}}(\mathbf{x})Z_l^{S_{v^-}}(\mathbf{y})).
\end{equation}

Returning to the case at hand, $S_{p_{v^{(2)}}}\qty[S_{v^{(2)}}]$ is considered as a subgroup of $S_k^+\times S_k^-$. Our goal is to count the number of elements $(\sigma_k^+, \sigma_k^-) \in S_{p_{v^{(2)}}}\qty[S_{v^{(2)}}]$ with cycle structure $p_1,p_2$, respectively. To that end, we construct the refined version as
\begin{align}
	Z^{S_{p_{v^{(2)}}}\qty[S_{v^{(2)}}]}&(\mathbf{x},\mathbf{y})\\ &=Z^{S_{p_{v^{(2)}}}}\Big(Z^{S_{v^{(2)}_1}}_1(\mathbf{x})Z^{S_{v^{(2)}_2}}_1(\mathbf{y}), \dots, Z^{S_{v^{(2)}_1}}_{p_{v^{(2)}}}(\mathbf{x})Z^{S_{v^{(2)}_2}}_{p_{v^{(2)}}}(\mathbf{y})\Big)
\end{align}
Then the number of elements in $S_{p_{v^{(4)}}}\qty[S_{v^{(2)}}]$ with cycle structure $p_1,p_2$ is
\begin{equation}
	\frac{1}{|S_{p_{v^{(2)}}}[S_{v^{(2)}}]|} Z^{S_{p_{v^{(2)}}}\qty[S_{v^{(2)}}]}_{p_1,p_2 } = \text{Coefficient}(Z^{S_{p_{v^{(2)}}}\qty[S_{v^{(2)}}]}(\mathbf{x},\mathbf{y} ), \mathbf{x}^{p_1} \mathbf{y}^{p_2}).
\end{equation}
For products of wreath products we can use the factorization property \eqref{eqn: Cycle Index Product Group}. Consequently, the full cycle index of $G(\vec{k}^+,\vec{k}^-)$ is given by a product
\begin{equation}
	Z^{G(\vec{k}^+,\vec{k}^-)}(\mathbf{x},\mathbf{y}) = \prod_{v^{(2)}} Z^{S_{p_{v^{(2)}}}\qty[S_{v^{(2)}}]}(\mathbf{x},\mathbf{y}). \label{eq: ZG prod cycle indices}
\end{equation}

\begin{example}
\label{ex: cycle index}	
	It is instructive to calculate $N(\vec{k}^+,\vec{k}^-)$ for Figure \ref{fig: Two Sigma One Color Double Coset Graph}, where $(\vec{k}^+,\vec{k}^-) = (3,2) + (3,2) + (1,3)= 2(3,2)+(1,3)$. The first step is to write down $G(\vec{k}^+,\vec{k}^-)$ as a product of wreath products,
	\begin{equation}
		G(\vec{k}^+,\vec{k}^-) = S_2[S_3 \times S_2] \times S_1 \times S_3.
	\end{equation}
	Using the factorization property \eqref{eqn: Cycle Index Product Group} for cycle indices we have
	\begin{equation}
		Z^{G(\vec{k}^+,\vec{k}^-)}(\mathbf{x},\mathbf{y}) = Z^{S_2[S_3 \times S_2]}(\mathbf{x},\mathbf{y})Z^{S_1}(\mathbf{x})Z^{S_3}(\mathbf{y}),
	\end{equation}
	where
	\begin{equation}
		Z^{S_2[S_3 \times S_2]}(\mathbf{x},\mathbf{y}) = Z^{S_2}(Z_1^{S_3}(\mathbf{x})Z_1^{S_2}(\mathbf{y}), Z_2^{S_3}(\mathbf{x})Z_2^{S_2}(\mathbf{y})).
	\end{equation}
	The four relevant cycle indices are
	\begin{equation}
		Z^{S_0}(\mathbf{x})=1,\quad Z^{S_1}(\mathbf{x}) = x_1, \quad Z^{S_2}(\mathbf{x}) = \frac{1}{2}(x_1^2 + x_2), \quad Z^{S_3}(\mathbf{x}) = \frac{1}{6}(x_1^3 + 3x_2x_1+2x_3).
	\end{equation}
	Explicitly, the cycle index for the wreath product is
	\begin{equation}
		Z^{S_2[S_3 \times S_2]}(\mathbf{x},\mathbf{y}) = \frac{1}{2}\qty(\frac{(x_1^3 + 3x_2x_1+2x_3)}{6})^2\qty(\frac{(y_1^2 + y_2)}{2})^2 + \frac{1}{2}\frac{(x_2^3 + 3x_4x_2+2x_6)}{6}\frac{(y_2^2 + y_4)}{2}.
	\end{equation}
	To perform the sum in equation \eqref{eq: N(m+,m-) Cycle Index Formula}, we need to pick out the coefficients
	\begin{equation}
		\text{Coefficient}(Z^{G(\vec{k}^+,\vec{k}^-)}, \mathbf{x}^p\mathbf{y}^p), \quad p \vdash 7.
	\end{equation}
	There are seven non-zero coefficients of this form,
	\begin{align} \nonumber
		\text{Coefficient}(Z^{G(\vec{k}^+,\vec{k}^-)}, x_1^7y_1^7) &= \frac{1}{1728}, \\ \nonumber
		\text{Coefficient}(Z^{G(\vec{k}^+,\vec{k}^-)}, x_1^5x_2y_1^5 y_2) &=  \frac{5}{288}, \\ \nonumber
		\text{Coefficient}(Z^{G(\vec{k}^+,\vec{k}^-)}, x_1^4x_3y_1^4 y_3) &=  \frac{1}{216}, \\ \nonumber
		\text{Coefficient}(Z^{G(\vec{k}^+,\vec{k}^-)}, x_1^3x_2^2y_1^3 y_2^2) &=  \frac{7}{192}, \\ \nonumber
		\text{Coefficient}(Z^{G(\vec{k}^+,\vec{k}^-)}, x_1^2x_2x_3y_1^2 y_2y_3) &=  \frac{1}{36}, \\ \nonumber
		\text{Coefficient}(Z^{G(\vec{k}^+,\vec{k}^-)}, x_1^1x_2^3y_1^1 y_2^3) &=  \frac{1}{48}, \\
		\text{Coefficient}(Z^{G(\vec{k}^+,\vec{k}^-)}, x_1^1x_2^1x_4^1y_1^1y_2^1y_4) &=  \frac{1}{16}.
	\end{align}
	We find
	\begin{equation}
		N(\vec{k}^+,\vec{k}^-) = \frac{1}{1728}7!+\frac{5}{288}5!2+\frac{1}{216}4!3+\frac{7}{192}3!2!2^2+\frac{1}{36}2!2\cdot 3+\frac{1}{48}3!2^3+\frac{1}{16}2\cdot 4 = 11.
	\end{equation}
	
\end{example}

In this section we have discussed three ways of counting observables, with increasing level of refinement. Because the double coset counting is the most granular of the three, we expect appropriate sums over $N(\vec{k}^+,\vec{k}^-)$ to reproduce previous counting formulae. For example, by Proposition \ref{prop: graphs equals trace} the number of graphs with $k$ edges and exactly $l$ vertices is given by the difference
\begin{equation}
	\abs{\mathcal{G}_{k,l}} - \abs{\mathcal{G}_{k,l-1}} = \dim \, [V_l^{\otimes 2k}]^{S_{l} \times S_k} - \dim \, [V_{l-1}^{\otimes 2k}]^{S_{l-1} \times S_k}
\end{equation}
It is also given by a sum over those vector partitions which have exactly $l$ parts,
\begin{align} \nonumber
	\abs{\mathcal{G}_{k,l}} - \abs{\mathcal{G}_{k,l-1}} &= \sum_{\substack{(\vec{k}^+,\vec{k}^-)\vdash (k,k) \\ \text{with}~ l ~\text{parts.}}} N(\vec{k}^+,\vec{k}^-)\\ \nonumber
	&= \sum_{\substack{(\vec{k}^+,\vec{k}^-)\vdash (k,k) \\ \text{with}~ l ~\text{parts.}}}\frac{1}{|G(\vec{k}^+,\vec{k}^-)|} \sum_{p \vdash k} Z_{p,p}^{G(\vec{k}^+,\vec{k}^-)} \Sym(p).
\end{align}
This is a refinement of the counting in Section \ref{subsec: observables} (Table \ref{tab: Table of invariant dimensions}), as can be seen from Table \ref{tab: Table of invariant dimensions refined}.
\begin{table}[t!]
	\begin{center}
		\caption{Number of linearly independent invariants contained in $\Sym^k(\VN^{\otimes 2})$ with refinement on the number of vertices $l$ of the corresponding graphs.}
		\label{tab: Table of invariant dimensions refined}
		\begin{tabular}{c c c}
			\textbf{k} & \textbf{\# graphs} &\textbf{$l=1,2,\dots,2k$}\\
			\hline
			1 &  2 & 1,1 \\
			2 &  11 & 1,5,4,1\\
			3 &  52 & 1, 9, 21, 16, 4, 1\\
			4 &  296 & 1, 18, 71, 108, 71, 22, 4, 1\\
			5 &  1724&1, 27, 194, 491, 557, 326, 101, 22, 4, 1 \\
			6 &  11060&1, 43, 476, 1903, 3353, 3062, 1587, 497, 111, 22, 4, 1 \\
		\end{tabular}
	\end{center}
\end{table}

\section{Cycle index code}
SageMath \cite{sagemath} has several tools for computing cycle indices. The code below can be found at \href{https://github.com/adrianpadellaro/PhD-Thesis}{Link to GitHub Repository}. For example, the following cell computes the cycle index for $S_2$.
    \begin{tcolorbox}[breakable, size=fbox, boxrule=1pt, pad at break*=1mm,colback=cellbackground, colframe=cellborder]
\prompt{In}{incolor}{1}{\boxspacing}
\begin{Verbatim}[commandchars=\\\{\},fontsize=\small]
\PY{n}{k} \PY{o}{=} \PY{l+m+mi}{2}
\PY{n}{Z\PYZus{}k} \PY{o}{=} \PY{n}{SymmetricGroup}\PY{p}{(}\PY{n}{k}\PY{p}{)}\PY{o}{.}\PY{n}{cycle\PYZus{}index}\PY{p}{(}\PY{p}{)}
\PY{n}{Z\PYZus{}k}
\end{Verbatim}
\end{tcolorbox}
            \begin{tcolorbox}[breakable, size=fbox, boxrule=.5pt, pad at break*=1mm, opacityfill=0]
\prompt{Out}{outcolor}{1}{\boxspacing}
\begin{Verbatim}[commandchars=\\\{\},fontsize=\small]
1/2*p[1, 1] + 1/2*p[2]
\end{Verbatim}
\end{tcolorbox}
Note that SageMath computes cycle indices in terms of partitions/symmetric functions. To convert it into a polynomial we define a polynomial ring $\mathbb{Q}[x_1,x_2]$ (note that by default Sage includes $x_0$ in the polynomial ring)
    \begin{tcolorbox}[breakable, size=fbox, boxrule=1pt, pad at break*=1mm,colback=cellbackground, colframe=cellborder]
\prompt{In}{incolor}{2}{\boxspacing}
\begin{Verbatim}[commandchars=\\\{\},fontsize=\small]
\PY{n}{QX} \PY{o}{=} \PY{n}{PolynomialRing}\PY{p}{(}\PY{n}{QQ}\PY{p}{,} \PY{n}{k}\PY{o}{+}\PY{l+m+mi}{1}\PY{p}{,} \PY{l+s+s1}{\PYZsq{}}\PY{l+s+s1}{x}\PY{l+s+s1}{\PYZsq{}}\PY{p}{)}
\end{Verbatim}
\end{tcolorbox}

    \begin{tcolorbox}[breakable, size=fbox, boxrule=1pt, pad at break*=1mm,colback=cellbackground, colframe=cellborder]
\prompt{In}{incolor}{3}{\boxspacing}
\begin{Verbatim}[commandchars=\\\{\},fontsize=\small]
\PY{n}{QX}\PY{o}{.}\PY{n}{gens}\PY{p}{(}\PY{p}{)}
\end{Verbatim}
\end{tcolorbox}

            \begin{tcolorbox}[breakable, size=fbox, boxrule=.5pt, pad at break*=1mm, opacityfill=0]
\prompt{Out}{outcolor}{3}{\boxspacing}
\begin{Verbatim}[commandchars=\\\{\},fontsize=\small]
(x0, x1, x2)
\end{Verbatim}
\end{tcolorbox}
To compute the polynomial corresponding to the cycle index Z\_k we simply replace every instance of $1$ in a partition by $x_1$, every $2$ with $x_2$ and so on. 
    \begin{tcolorbox}[breakable, size=fbox, boxrule=1pt, pad at break*=1mm,colback=cellbackground, colframe=cellborder]
\prompt{In}{incolor}{4}{\boxspacing}
\begin{Verbatim}[commandchars=\\\{\},fontsize=\small]
\PY{n+nb}{sum}\PY{p}{(}\PY{n}{z}\PY{p}{[}\PY{l+m+mi}{1}\PY{p}{]}\PY{o}{*}\PY{n}{prod}\PY{p}{(}\PY{n}{QX}\PY{o}{.}\PY{n}{gens}\PY{p}{(}\PY{p}{)}\PY{p}{[}\PY{n}{i}\PY{p}{]} \PY{k}{for} \PY{n}{i} \PY{o+ow}{in} \PY{n}{z}\PY{p}{[}\PY{l+m+mi}{0}\PY{p}{]}\PY{p}{)} \PY{k}{for} \PY{n}{z} \PY{o+ow}{in} \PY{n}{Z\PYZus{}k}\PY{p}{)}
\end{Verbatim}
\end{tcolorbox}

            \begin{tcolorbox}[breakable, size=fbox, boxrule=.5pt, pad at break*=1mm, opacityfill=0]
\prompt{Out}{outcolor}{4}{\boxspacing}
\begin{Verbatim}[commandchars=\\\{\},fontsize=\small]
1/2*x1\^{}2 + 1/2*x2
\end{Verbatim}
\end{tcolorbox}
We turn this procedure into a function that takes a polynomial ring and a cycle index.        
    \begin{tcolorbox}[breakable, size=fbox, boxrule=1pt, pad at break*=1mm,colback=cellbackground, colframe=cellborder]
\prompt{In}{incolor}{5}{\boxspacing}
\begin{Verbatim}[commandchars=\\\{\},fontsize=\small]
\PY{k}{def} \PY{n+nf}{CycleIndexPolynomial}\PY{p}{(}\PY{n}{PolynomialRing}\PY{p}{,} \PY{n}{CycleIndex}\PY{p}{)}\PY{p}{:}
    \PY{k}{return} \PY{n+nb}{sum}\PY{p}{(}\PY{n}{z}\PY{p}{[}\PY{l+m+mi}{1}\PY{p}{]}\PY{o}{*}\PY{n}{prod}\PY{p}{(}\PY{n}{PolynomialRing}\PY{o}{.}\PY{n}{gens}\PY{p}{(}\PY{p}{)}\PY{p}{[}\PY{n}{i}\PY{p}{]} \PY{k}{for} \PY{n}{i} \PY{o+ow}{in} \PY{n}{z}\PY{p}{[}\PY{l+m+mi}{0}\PY{p}{]}\PY{p}{)} \PY{k}{for} \PY{n}{z} \PY{o+ow}{in} \PY{n}{CycleIndex}\PY{p}{)}
\end{Verbatim}
\end{tcolorbox}
We can check that this returns the expected polynomial for $S_2$
    \begin{tcolorbox}[breakable, size=fbox, boxrule=1pt, pad at break*=1mm,colback=cellbackground, colframe=cellborder]
\prompt{In}{incolor}{6}{\boxspacing}
\begin{Verbatim}[commandchars=\\\{\},fontsize=\small]
\PY{n}{CycleIndexPolynomial}\PY{p}{(}\PY{n}{QX}\PY{p}{,}\PY{n}{Z\PYZus{}k}\PY{p}{)}
\end{Verbatim}
\end{tcolorbox}
            \begin{tcolorbox}[breakable, size=fbox, boxrule=.5pt, pad at break*=1mm, opacityfill=0]
\prompt{Out}{outcolor}{6}{\boxspacing}
\begin{Verbatim}[commandchars=\\\{\},fontsize=\small]
1/2*x1\^{}2 + 1/2*x2
\end{Verbatim}
\end{tcolorbox}
The next step is to compute the shifted cycle indices $Z_i^{S_k}$. For this we need a polynomial ring QXX that includes $x_{1\cdot i}, \dots, x_{k\cdot i}$. We get the shifted cycle index polynomial by replacing every $1$ in the partitions of Z\_k  by $x_i$, every $2$ by $x_{2i}$ and so on.
    \begin{tcolorbox}[breakable, size=fbox, boxrule=1pt, pad at break*=1mm,colback=cellbackground, colframe=cellborder]
\prompt{In}{incolor}{7}{\boxspacing}
\begin{Verbatim}[commandchars=\\\{\},fontsize=\small]
\PY{n}{i} \PY{o}{=} \PY{l+m+mi}{3}
\PY{n}{QXX} \PY{o}{=} \PY{n}{PolynomialRing}\PY{p}{(}\PY{n}{QQ}\PY{p}{,} \PY{n}{i}\PY{o}{*}\PY{n}{k}\PY{o}{+}\PY{l+m+mi}{1}\PY{p}{,} \PY{l+s+s1}{\PYZsq{}}\PY{l+s+s1}{x}\PY{l+s+s1}{\PYZsq{}}\PY{p}{)}
\PY{n+nb}{sum}\PY{p}{(}\PY{n}{z}\PY{p}{[}\PY{l+m+mi}{1}\PY{p}{]}\PY{o}{*}\PY{n}{prod}\PY{p}{(}\PY{n}{QXX}\PY{o}{.}\PY{n}{gens}\PY{p}{(}\PY{p}{)}\PY{p}{[}\PY{n}{p}\PY{o}{*}\PY{n}{i}\PY{p}{]} \PY{k}{for} \PY{n}{p} \PY{o+ow}{in} \PY{n}{z}\PY{p}{[}\PY{l+m+mi}{0}\PY{p}{]}\PY{p}{)} \PY{k}{for} \PY{n}{z} \PY{o+ow}{in} \PY{n}{Z\PYZus{}k}\PY{p}{)}
\end{Verbatim}
\end{tcolorbox}

            \begin{tcolorbox}[breakable, size=fbox, boxrule=.5pt, pad at break*=1mm, opacityfill=0]
\prompt{Out}{outcolor}{7}{\boxspacing}
\begin{Verbatim}[commandchars=\\\{\},fontsize=\small]
1/2*x3\^{}2 + 1/2*x6
\end{Verbatim}
\end{tcolorbox}
Again, we turn this procedure into a function that takes a polynomial ring, a cycle index to be shifted and an integer (Shift) to shift by.
    \begin{tcolorbox}[breakable, size=fbox, boxrule=1pt, pad at break*=1mm,colback=cellbackground, colframe=cellborder]
\prompt{In}{incolor}{8}{\boxspacing}
\begin{Verbatim}[commandchars=\\\{\},fontsize=\small]
\PY{k}{def} \PY{n+nf}{ShiftedCycleIndexPolynomial}\PY{p}{(}\PY{n}{PolynomialRing}\PY{p}{,} \PY{n}{CycleIndex}\PY{p}{,} \PY{n}{Shift}\PY{p}{)}\PY{p}{:}
    \PY{k}{return} \PY{n+nb}{sum}\PY{p}{(}\PY{n}{z}\PY{p}{[}\PY{l+m+mi}{1}\PY{p}{]}\PY{o}{*}\PY{n}{prod}\PY{p}{(}\PY{n}{PolynomialRing}\PY{o}{.}\PY{n}{gens}\PY{p}{(}\PY{p}{)}\PY{p}{[}\PY{n}{p}\PY{o}{*}\PY{n}{Shift}\PY{p}{]} \PY{k}{for} \PY{n}{p} \PY{o+ow}{in} \PY{n}{z}\PY{p}{[}\PY{l+m+mi}{0}\PY{p}{]}\PY{p}{)} \PY{k}{for} \PY{n}{z} \PY{o+ow}{in} \PY{n}{CycleIndex}\PY{p}{)}
\end{Verbatim}
\end{tcolorbox}
We can confirm that this gives the expected result.
    \begin{tcolorbox}[breakable, size=fbox, boxrule=1pt, pad at break*=1mm,colback=cellbackground, colframe=cellborder]
\prompt{In}{incolor}{9}{\boxspacing}
\begin{Verbatim}[commandchars=\\\{\},fontsize=\small]
\PY{n}{ShiftedCycleIndexPolynomial}\PY{p}{(}\PY{n}{QXX}\PY{p}{,} \PY{n}{Z\PYZus{}k}\PY{p}{,} \PY{l+m+mi}{3}\PY{p}{)}
\end{Verbatim}
\end{tcolorbox}

            \begin{tcolorbox}[breakable, size=fbox, boxrule=.5pt, pad at break*=1mm, opacityfill=0]
\prompt{Out}{outcolor}{9}{\boxspacing}
\begin{Verbatim}[commandchars=\\\{\},fontsize=\small]
1/2*x3\^{}2 + 1/2*x6
\end{Verbatim}
\end{tcolorbox}
The next step is to compute the cycle index $Z^{S_l[S_k]}(\mathbf{x})$ (we will generalize in to $Z^{S_l[S_{k_1} \times S_{k_2}]}(\mathbf{x},\mathbf{y})$ in a minute). For this we compute the cycle index Z\_l and replace every instance of $1$ with the shifted Z\_k cycle index with shift $1$, every $2$ with  a shift by $2$ and so on. The largest unknown that can appear in this cycle index is $x_{l\cdot k}$, therefore we need a polynomial ring QXX containing this variable.
    \begin{tcolorbox}[breakable, size=fbox, boxrule=1pt, pad at break*=1mm,colback=cellbackground, colframe=cellborder]
\prompt{In}{incolor}{10}{\boxspacing}
\begin{Verbatim}[commandchars=\\\{\},fontsize=\small]
\PY{n}{l} \PY{o}{=} \PY{l+m+mi}{3}
\PY{n}{Z\PYZus{}l} \PY{o}{=} \PY{n}{SymmetricGroup}\PY{p}{(}\PY{n}{l}\PY{p}{)}\PY{o}{.}\PY{n}{cycle\PYZus{}index}\PY{p}{(}\PY{p}{)}
\PY{n}{QXX} \PY{o}{=} \PY{n}{PolynomialRing}\PY{p}{(}\PY{n}{QQ}\PY{p}{,} \PY{n}{l}\PY{o}{*}\PY{n}{k}\PY{o}{+}\PY{l+m+mi}{1}\PY{p}{,} \PY{l+s+s1}{\PYZsq{}}\PY{l+s+s1}{x}\PY{l+s+s1}{\PYZsq{}}\PY{p}{)}
\PY{n+nb}{sum}\PY{p}{(}\PY{n}{z}\PY{p}{[}\PY{l+m+mi}{1}\PY{p}{]}\PY{o}{*}\PY{n}{prod}\PY{p}{(}\PY{n}{ShiftedCycleIndexPolynomial}\PY{p}{(}\PY{n}{QXX}\PY{p}{,} \PY{n}{Z\PYZus{}k}\PY{p}{,} \PY{n}{i}\PY{p}{)} \PY{k}{for} \PY{n}{i} \PY{o+ow}{in} \PY{n}{z}\PY{p}{[}\PY{l+m+mi}{0}\PY{p}{]}\PY{p}{)} \PY{k}{for} \PY{n}{z} \PY{o+ow}{in} \PY{n}{Z\PYZus{}l}\PY{p}{)}
\end{Verbatim}
\end{tcolorbox}

            \begin{tcolorbox}[breakable, size=fbox, boxrule=.5pt, pad at break*=1mm, opacityfill=0]
\prompt{Out}{outcolor}{10}{\boxspacing}
\begin{Verbatim}[commandchars=\\\{\},fontsize=\small]
1/48*x1\^{}6 + 1/16*x1\^{}4*x2 + 3/16*x1\^{}2*x2\^{}2 + 7/48*x2\^{}3 + 1/8*x1\^{}2*x4 + 1/6*x3\^{}2 +
1/8*x2*x4 + 1/6*x6
\end{Verbatim}
\end{tcolorbox}
Below we generalize to $Z^{S_l[S_{k_1} \times S_{k_2}]}(\mathbf{x},\mathbf{y})$ with $k_1 = k_2$. For this we need a polynomial ring QYY containing $y_1, \dots, y_{l\cdot k}$. We will also need a polynomial ring QXY containing all these unknowns. To compute the just mentioned cycle index, we replace every instance of $1$ in the partitions of Z\_l by the product of shifted cycle indices, as in \eqref{eq: xy cycle index 1}.
    \begin{tcolorbox}[breakable, size=fbox, boxrule=1pt, pad at break*=1mm,colback=cellbackground, colframe=cellborder]
\prompt{In}{incolor}{11}{\boxspacing}
\begin{Verbatim}[commandchars=\\\{\},fontsize=\small]
\PY{n}{QYY} \PY{o}{=} \PY{n}{PolynomialRing}\PY{p}{(}\PY{n}{QQ}\PY{p}{,} \PY{n}{l}\PY{o}{*}\PY{n}{k}\PY{o}{+}\PY{l+m+mi}{1}\PY{p}{,} \PY{l+s+s1}{\PYZsq{}}\PY{l+s+s1}{y}\PY{l+s+s1}{\PYZsq{}}\PY{p}{)}
\PY{n}{QXY} \PY{o}{=} \PY{n}{PolynomialRing}\PY{p}{(}\PY{n}{QQ}\PY{p}{,} \PY{n}{l}\PY{o}{*}\PY{n}{k}\PY{o}{+}\PY{l+m+mi}{1}\PY{p}{,} \PY{n}{var\PYZus{}array}\PY{o}{=}\PY{p}{[}\PY{l+s+s1}{\PYZsq{}}\PY{l+s+s1}{x}\PY{l+s+s1}{\PYZsq{}}\PY{p}{,}\PY{l+s+s1}{\PYZsq{}}\PY{l+s+s1}{y}\PY{l+s+s1}{\PYZsq{}}\PY{p}{]}\PY{p}{)}
\PY{n+nb}{sum}\PY{p}{(}\PY{n}{z}\PY{p}{[}\PY{l+m+mi}{1}\PY{p}{]}\PY{o}{*}\PY{n}{prod}\PY{p}{(}\PY{n}{QXY}\PY{o}{.}\PY{n}{coerce}\PY{p}{(}\PY{n}{ShiftedCycleIndexPolynomial}\PY{p}{(}\PY{n}{QXX}\PY{p}{,} \PY{n}{Z\PYZus{}k}\PY{p}{,} \PY{n}{i}\PY{p}{)}\PY{p}{)}\PY{o}{*}\PY{n}{QXY}\PY{o}{.}\PY{n}{coerce}\PY{p}{(}\PY{n}{ShiftedCycleIndexPolynomial}\PY{p}{(}\PY{n}{QYY}\PY{p}{,} \PY{n}{Z\PYZus{}k}\PY{p}{,} \PY{n}{i}\PY{p}{)}\PY{p}{)}  \PY{k}{for} \PY{n}{i} \PY{o+ow}{in} \PY{n}{z}\PY{p}{[}\PY{l+m+mi}{0}\PY{p}{]}\PY{p}{)}  \PY{k}{for} \PY{n}{z} \PY{o+ow}{in} \PY{n}{Z\PYZus{}l}\PY{p}{)}
\end{Verbatim}
\end{tcolorbox}
This produces the following polynomial.
            \begin{tcolorbox}[breakable, size=fbox, boxrule=.5pt, pad at break*=1mm, opacityfill=0]
\prompt{Out}{outcolor}{11}{\boxspacing}
\begin{Verbatim}[commandchars=\\\{\},fontsize=\small]
1/384*x1\^{}6*y1\^{}6 + 1/128*x1\^{}4*y1\^{}6*x2 + 1/128*x1\^{}6*y1\^{}4*y2 + 1/128*x1\^{}2*y1\^{}6*x2\^{}2
+ 3/128*x1\^{}4*y1\^{}4*x2*y2 + 1/128*x1\^{}6*y1\^{}2*y2\^{}2 + 1/384*y1\^{}6*x2\^{}3 +
3/128*x1\^{}2*y1\^{}4*x2\^{}2*y2 + 3/128*x1\^{}4*y1\^{}2*x2*y2\^{}2 + 1/384*x1\^{}6*y2\^{}3 +
1/128*y1\^{}4*x2\^{}3*y2 + 7/128*x1\^{}2*y1\^{}2*x2\^{}2*y2\^{}2 + 1/128*x1\^{}4*x2*y2\^{}3 +
5/128*y1\^{}2*x2\^{}3*y2\^{}2 + 5/128*x1\^{}2*x2\^{}2*y2\^{}3 + 1/32*x1\^{}2*y1\^{}2*y2\^{}2*x4 +
1/32*x1\^{}2*y1\^{}2*x2\^{}2*y4 + 13/384*x2\^{}3*y2\^{}3 + 1/32*y1\^{}2*x2*y2\^{}2*x4 +
1/32*x1\^{}2*y2\^{}3*x4 + 1/32*y1\^{}2*x2\^{}3*y4 + 1/32*x1\^{}2*x2\^{}2*y2*y4 +
1/32*x1\^{}2*y1\^{}2*x4*y4 + 1/32*x2*y2\^{}3*x4 + 1/32*x2\^{}3*y2*y4 + 1/32*y1\^{}2*x2*x4*y4 +
1/32*x1\^{}2*y2*x4*y4 + 1/12*x3\^{}2*y3\^{}2 + 1/32*x2*y2*x4*y4 + 1/12*y3\^{}2*x6 +
1/12*x3\^{}2*y6 + 1/12*x6*y6
\end{Verbatim}
\end{tcolorbox}
We put this procedure into a function and generalize to $k_1 \neq k_2$ (in the code these correspond to k\_plus and k\_minus). The function takes in three integers and gives the polynomial $Z^{S_l[S_{k_1} \times S_{k_2}]}(\mathbf{x},\mathbf{y})$.
    \begin{tcolorbox}[breakable, size=fbox, boxrule=1pt, pad at break*=1mm,colback=cellbackground, colframe=cellborder]
\prompt{In}{incolor}{12}{\boxspacing}
\begin{Verbatim}[commandchars=\\\{\},fontsize=\small]
\PY{k}{def} \PY{n+nf}{WreathProductCycleIndexPolynomial}\PY{p}{(}\PY{n}{l}\PY{p}{,}\PY{n}{k\PYZus{}plus}\PY{p}{,} \PY{n}{k\PYZus{}minus}\PY{p}{)}\PY{p}{:}
    \PY{n}{QXX} \PY{o}{=} \PY{n}{PolynomialRing}\PY{p}{(}\PY{n}{QQ}\PY{p}{,} \PY{n}{l}\PY{o}{*}\PY{n}{k\PYZus{}plus}\PY{o}{+}\PY{l+m+mi}{1}\PY{p}{,} \PY{l+s+s1}{\PYZsq{}}\PY{l+s+s1}{x}\PY{l+s+s1}{\PYZsq{}}\PY{p}{)}
    \PY{n}{QYY} \PY{o}{=} \PY{n}{PolynomialRing}\PY{p}{(}\PY{n}{QQ}\PY{p}{,} \PY{n}{l}\PY{o}{*}\PY{n}{k\PYZus{}minus}\PY{o}{+}\PY{l+m+mi}{1}\PY{p}{,} \PY{l+s+s1}{\PYZsq{}}\PY{l+s+s1}{y}\PY{l+s+s1}{\PYZsq{}}\PY{p}{)}
    \PY{n}{QXY} \PY{o}{=} \PY{n}{PolynomialRing}\PY{p}{(}\PY{n}{QQ}\PY{p}{,} \PY{n}{l}\PY{o}{*}\PY{n+nb}{max}\PY{p}{(}\PY{n}{k\PYZus{}plus}\PY{p}{,}\PY{n}{k\PYZus{}minus}\PY{p}{)}\PY{o}{+}\PY{l+m+mi}{1}\PY{p}{,} \PY{n}{var\PYZus{}array}\PY{o}{=}\PY{p}{[}\PY{l+s+s1}{\PYZsq{}}\PY{l+s+s1}{x}\PY{l+s+s1}{\PYZsq{}}\PY{p}{,}\PY{l+s+s1}{\PYZsq{}}\PY{l+s+s1}{y}\PY{l+s+s1}{\PYZsq{}}\PY{p}{]}\PY{p}{)}
    \PY{n}{Z\PYZus{}l} \PY{o}{=} \PY{n}{SymmetricGroup}\PY{p}{(}\PY{n}{l}\PY{p}{)}\PY{o}{.}\PY{n}{cycle\PYZus{}index}\PY{p}{(}\PY{p}{)}
    \PY{n}{Z\PYZus{}kplus} \PY{o}{=} \PY{n}{SymmetricGroup}\PY{p}{(}\PY{n}{k\PYZus{}plus}\PY{p}{)}\PY{o}{.}\PY{n}{cycle\PYZus{}index}\PY{p}{(}\PY{p}{)}
    \PY{n}{Z\PYZus{}kminus} \PY{o}{=} \PY{n}{SymmetricGroup}\PY{p}{(}\PY{n}{k\PYZus{}minus}\PY{p}{)}\PY{o}{.}\PY{n}{cycle\PYZus{}index}\PY{p}{(}\PY{p}{)}
    \PY{k}{return} \PY{n+nb}{sum}\PY{p}{(}\PY{n}{z}\PY{p}{[}\PY{l+m+mi}{1}\PY{p}{]}\PY{o}{*}\PY{n}{prod}\PY{p}{(}\PY{n}{QXY}\PY{o}{.}\PY{n}{coerce}\PY{p}{(}\PY{n}{ShiftedCycleIndexPolynomial}\PY{p}{(}\PY{n}{QXX}\PY{p}{,} \PY{n}{Z\PYZus{}kplus}\PY{p}{,} \PY{n}{i}\PY{p}{)}\PY{p}{)}\PY{o}{*}\PY{n}{QXY}\PY{o}{.}\PY{n}{coerce}\PY{p}{(}\PY{n}{ShiftedCycleIndexPolynomial}\PY{p}{(}\PY{n}{QYY}\PY{p}{,} \PY{n}{Z\PYZus{}kminus}\PY{p}{,} \PY{n}{i}\PY{p}{)}\PY{p}{)}  \PY{k}{for} \PY{n}{i} \PY{o+ow}{in} \PY{n}{z}\PY{p}{[}\PY{l+m+mi}{0}\PY{p}{]}\PY{p}{)}  \PY{k}{for} \PY{n}{z} \PY{o+ow}{in} \PY{n}{Z\PYZus{}l}\PY{p}{)}
\end{Verbatim}
\end{tcolorbox}
We can check that this gives the correct answer for $l=3, k_1 = k_2 = 2$.
    \begin{tcolorbox}[breakable, size=fbox, boxrule=1pt, pad at break*=1mm,colback=cellbackground, colframe=cellborder]
\prompt{In}{incolor}{13}{\boxspacing}
\begin{Verbatim}[commandchars=\\\{\},fontsize=\small]
\PY{n}{WreathProductCycleIndexPolynomial}\PY{p}{(}\PY{l+m+mi}{3}\PY{p}{,}\PY{l+m+mi}{2}\PY{p}{,}\PY{l+m+mi}{1}\PY{p}{)}
\end{Verbatim}
\end{tcolorbox}

            \begin{tcolorbox}[breakable, size=fbox, boxrule=.5pt, pad at break*=1mm, opacityfill=0]
\prompt{Out}{outcolor}{13}{\boxspacing}
\begin{Verbatim}[commandchars=\\\{\},fontsize=\small]
1/384*x1\^{}6*y1\^{}6 + 1/128*x1\^{}4*y1\^{}6*x2 + 1/128*x1\^{}6*y1\^{}4*y2 + 1/128*x1\^{}2*y1\^{}6*x2\^{}2
+ 3/128*x1\^{}4*y1\^{}4*x2*y2 + 1/128*x1\^{}6*y1\^{}2*y2\^{}2 + 1/384*y1\^{}6*x2\^{}3 +
3/128*x1\^{}2*y1\^{}4*x2\^{}2*y2 + 3/128*x1\^{}4*y1\^{}2*x2*y2\^{}2 + 1/384*x1\^{}6*y2\^{}3 +
1/128*y1\^{}4*x2\^{}3*y2 + 7/128*x1\^{}2*y1\^{}2*x2\^{}2*y2\^{}2 + 1/128*x1\^{}4*x2*y2\^{}3 +
5/128*y1\^{}2*x2\^{}3*y2\^{}2 + 5/128*x1\^{}2*x2\^{}2*y2\^{}3 + 1/32*x1\^{}2*y1\^{}2*y2\^{}2*x4 +
1/32*x1\^{}2*y1\^{}2*x2\^{}2*y4 + 13/384*x2\^{}3*y2\^{}3 + 1/32*y1\^{}2*x2*y2\^{}2*x4 +
1/32*x1\^{}2*y2\^{}3*x4 + 1/32*y1\^{}2*x2\^{}3*y4 + 1/32*x1\^{}2*x2\^{}2*y2*y4 +
1/32*x1\^{}2*y1\^{}2*x4*y4 + 1/32*x2*y2\^{}3*x4 + 1/32*x2\^{}3*y2*y4 + 1/32*y1\^{}2*x2*x4*y4 +
1/32*x1\^{}2*y2*x4*y4 + 1/12*x3\^{}2*y3\^{}2 + 1/32*x2*y2*x4*y4 + 1/12*y3\^{}2*x6 +
1/12*x3\^{}2*y6 + 1/12*x6*y6
\end{Verbatim}
\end{tcolorbox}
We want to take products of such cycle indices, as dictated by a vector partition. For this we define a function that takes a vector partition and returns it in exponential form.        
    \begin{tcolorbox}[breakable, size=fbox, boxrule=1pt, pad at break*=1mm,colback=cellbackground, colframe=cellborder]
\prompt{In}{incolor}{14}{\boxspacing}
\begin{Verbatim}[commandchars=\\\{\},fontsize=\small]
\PY{k}{def} \PY{n+nf}{vectorpartition\PYZus{}exponential}\PY{p}{(}\PY{n}{vector\PYZus{}partition}\PY{p}{)}\PY{p}{:}
    \PY{n}{j} \PY{o}{=} \PY{l+m+mi}{0}
    \PY{n}{exp\PYZus{}vec\PYZus{}partition} \PY{o}{=} \PY{p}{[}\PY{p}{]}
    \PY{n}{exp\PYZus{}vec\PYZus{}partition} \PY{o}{=} \PY{n}{exp\PYZus{}vec\PYZus{}partition} \PY{o}{+} \PY{p}{[}\PY{p}{[}\PY{l+m+mi}{1}\PY{p}{,} \PY{n}{vector\PYZus{}partition}\PY{p}{[}\PY{n}{j}\PY{p}{]}\PY{p}{]}\PY{p}{]}
    \PY{k}{for} \PY{n}{i} \PY{o+ow}{in} \PY{n+nb}{range}\PY{p}{(}\PY{l+m+mi}{1}\PY{p}{,}\PY{n+nb}{len}\PY{p}{(}\PY{n}{vector\PYZus{}partition}\PY{p}{)}\PY{p}{)}\PY{p}{:}
        \PY{k}{if} \PY{n}{vector\PYZus{}partition}\PY{p}{[}\PY{n}{i}\PY{o}{\PYZhy{}}\PY{l+m+mi}{1}\PY{p}{]} \PY{o}{==} \PY{n}{vector\PYZus{}partition}\PY{p}{[}\PY{n}{i}\PY{p}{]}\PY{p}{:}
            \PY{n}{exp\PYZus{}vec\PYZus{}partition}\PY{p}{[}\PY{n}{j}\PY{p}{]}\PY{p}{[}\PY{l+m+mi}{0}\PY{p}{]} \PY{o}{=} \PY{n}{exp\PYZus{}vec\PYZus{}partition}\PY{p}{[}\PY{n}{j}\PY{p}{]}\PY{p}{[}\PY{l+m+mi}{0}\PY{p}{]}\PY{o}{+}\PY{l+m+mi}{1}
        \PY{k}{else}\PY{p}{:}
            \PY{n}{j} \PY{o}{=} \PY{n}{j}\PY{o}{+}\PY{l+m+mi}{1}
            \PY{n}{exp\PYZus{}vec\PYZus{}partition} \PY{o}{=} \PY{n}{exp\PYZus{}vec\PYZus{}partition} \PY{o}{+} \PY{p}{[}\PY{p}{[}\PY{l+m+mi}{1}\PY{p}{,} \PY{n}{vector\PYZus{}partition}\PY{p}{[}\PY{n}{i}\PY{p}{]}\PY{p}{]}\PY{p}{]}
    \PY{k}{return} \PY{n}{exp\PYZus{}vec\PYZus{}partition}
\end{Verbatim}
\end{tcolorbox}
As an example, we get
    \begin{tcolorbox}[breakable, size=fbox, boxrule=1pt, pad at break*=1mm,colback=cellbackground, colframe=cellborder]
\prompt{In}{incolor}{15}{\boxspacing}
\begin{Verbatim}[commandchars=\\\{\},fontsize=\small]
\PY{n}{vectorpartition\PYZus{}exp} \PY{o}{=} \PY{n}{vectorpartition\PYZus{}exponential}\PY{p}{(}\PY{n}{VectorPartition}\PY{p}{(}\PY{p}{[}\PY{p}{[}\PY{l+m+mi}{3}\PY{p}{,}\PY{l+m+mi}{2}\PY{p}{]}\PY{p}{,}\PY{p}{[}\PY{l+m+mi}{3}\PY{p}{,}\PY{l+m+mi}{2}\PY{p}{]}\PY{p}{,}\PY{p}{[}\PY{l+m+mi}{1}\PY{p}{,}\PY{l+m+mi}{3}\PY{p}{]}\PY{p}{]}\PY{p}{)}\PY{p}{)}
\PY{n}{vectorpartition\PYZus{}exp}
\end{Verbatim}
\end{tcolorbox}

            \begin{tcolorbox}[breakable, size=fbox, boxrule=.5pt, pad at break*=1mm, opacityfill=0]
\prompt{Out}{outcolor}{15}{\boxspacing}
\begin{Verbatim}[commandchars=\\\{\},fontsize=\small]
[[1, [1, 3]], [2, [3, 2]]]
\end{Verbatim}
\end{tcolorbox}
To compute \eqref{eq: ZG prod cycle indices} we take a product over the vector partition.        
    \begin{tcolorbox}[breakable, size=fbox, boxrule=1pt, pad at break*=1mm,colback=cellbackground, colframe=cellborder]
\prompt{In}{incolor}{16}{\boxspacing}
\begin{Verbatim}[commandchars=\\\{\},fontsize=\small]
\PY{n}{Z\PYZus{}G} \PY{o}{=} \PY{n}{prod}\PY{p}{(}\PY{n}{WreathProductCycleIndexPolynomial}\PY{p}{(}\PY{n}{v}\PY{p}{[}\PY{l+m+mi}{0}\PY{p}{]}\PY{p}{,}\PY{n}{v}\PY{p}{[}\PY{l+m+mi}{1}\PY{p}{]}\PY{p}{[}\PY{l+m+mi}{0}\PY{p}{]}\PY{p}{,} \PY{n}{v}\PY{p}{[}\PY{l+m+mi}{1}\PY{p}{]}\PY{p}{[}\PY{l+m+mi}{1}\PY{p}{]}\PY{p}{)} \PY{k}{for} \PY{n}{v} \PY{o+ow}{in} \PY{n}{vectorpartition\PYZus{}exp}\PY{p}{)}
\PY{n}{Z\PYZus{}G}
\end{Verbatim}
\end{tcolorbox}

            \begin{tcolorbox}[breakable, size=fbox, boxrule=.5pt, pad at break*=1mm, opacityfill=0]
\prompt{Out}{outcolor}{16}{\boxspacing}
\begin{Verbatim}[commandchars=\\\{\},fontsize=\small]
1/1728*x1\^{}7*y1\^{}7 + 1/288*x1\^{}5*y1\^{}7*x2 + 5/1728*x1\^{}7*y1\^{}5*y2 +
1/192*x1\^{}3*y1\^{}7*x2\^{}2 + 5/288*x1\^{}5*y1\^{}5*x2*y2 + 7/1728*x1\^{}7*y1\^{}3*y2\^{}2 +
1/432*x1\^{}4*y1\^{}7*x3 + 1/864*x1\^{}7*y1\^{}4*y3 + 5/192*x1\^{}3*y1\^{}5*x2\^{}2*y2 +
7/288*x1\^{}5*y1\^{}3*x2*y2\^{}2 + 1/576*x1\^{}7*y1*y2\^{}3 + 1/144*x1\^{}2*y1\^{}7*x2*x3 +
5/432*x1\^{}4*y1\^{}5*y2*x3 + 1/144*x1\^{}5*y1\^{}4*x2*y3 + 1/432*x1\^{}7*y1\^{}2*y2*y3 +
7/192*x1\^{}3*y1\^{}3*x2\^{}2*y2\^{}2 + 1/96*x1\^{}5*y1*x2*y2\^{}3 + 5/144*x1\^{}2*y1\^{}5*x2*y2*x3 +
7/432*x1\^{}4*y1\^{}3*y2\^{}2*x3 + 1/432*x1*y1\^{}7*x3\^{}2 + 1/96*x1\^{}3*y1\^{}4*x2\^{}2*y3 +
1/72*x1\^{}5*y1\^{}2*x2*y2*y3 + 1/864*x1\^{}7*y2\^{}2*y3 + 1/216*x1\^{}4*y1\^{}4*x3*y3 +
1/144*x1*y1\^{}3*x2\^{}3*y2\^{}2 + 1/64*x1\^{}3*y1*x2\^{}2*y2\^{}3 + 7/144*x1\^{}2*y1\^{}3*x2*y2\^{}2*x3 +
1/144*x1\^{}4*y1*y2\^{}3*x3 + 5/432*x1*y1\^{}5*y2*x3\^{}2 + 1/48*x1\^{}3*y1\^{}2*x2\^{}2*y2*y3 +
1/144*x1\^{}5*x2*y2\^{}2*y3 + 1/72*x1\^{}2*y1\^{}4*x2*x3*y3 + 1/108*x1\^{}4*y1\^{}2*y2*x3*y3 +
1/48*x1*y1*x2\^{}3*y2\^{}3 + 1/48*x1\^{}2*y1*x2*y2\^{}3*x3 + 7/432*x1*y1\^{}3*y2\^{}2*x3\^{}2 +
1/96*x1\^{}3*x2\^{}2*y2\^{}2*y3 + 1/36*x1\^{}2*y1\^{}2*x2*y2*x3*y3 + 1/216*x1\^{}4*y2\^{}2*x3*y3 +
1/216*x1*y1\^{}4*x3\^{}2*y3 + 1/48*x1*y1\^{}3*x2*y2\^{}2*x4 + 1/144*x1*y1\^{}3*x2\^{}3*y4 +
1/144*x1*y1*y2\^{}3*x3\^{}2 + 1/72*x1*x2\^{}3*y2\^{}2*y3 + 1/72*x1\^{}2*x2*y2\^{}2*x3*y3 +
1/108*x1*y1\^{}2*y2*x3\^{}2*y3 + 1/16*x1*y1*x2*y2\^{}3*x4 + 1/48*x1*y1*x2\^{}3*y2*y4 +
1/48*x1*y1\^{}3*x2*x4*y4 + 1/72*x1*y1\^{}3*y2\^{}2*x6 + 1/216*x1*y2\^{}2*x3\^{}2*y3 +
1/24*x1*x2*y2\^{}2*y3*x4 + 1/72*x1*x2\^{}3*y3*y4 + 1/16*x1*y1*x2*y2*x4*y4 +
1/24*x1*y1*y2\^{}3*x6 + 1/72*x1*y1\^{}3*y4*x6 + 1/24*x1*x2*y3*x4*y4 +
1/36*x1*y2\^{}2*y3*x6 + 1/24*x1*y1*y2*y4*x6 + 1/36*x1*y3*y4*x6
\end{Verbatim}
\end{tcolorbox}
All that remains is to pick out the coefficients of monomials $\mathbf{x}^{p}\mathbf{y}^{p}$ for some partition $p \vdash k$. Below, we have a function that take a vector partition, construct the corresponding cycle index and extracts the coefficient of the monomial $\mathbf{x}^{p}\mathbf{y}^{p}$ for a partition $p$.  
    \begin{tcolorbox}[breakable, size=fbox, boxrule=1pt, pad at break*=1mm,colback=cellbackground, colframe=cellborder]
\prompt{In}{incolor}{17}{\boxspacing}
\begin{Verbatim}[commandchars=\\\{\},fontsize=\small]
\PY{k}{def} \PY{n+nf}{monomial\PYZus{}coefficient\PYZus{}of\PYZus{}ZG}\PY{p}{(}\PY{n}{vectorpartition}\PY{p}{,} \PY{n}{partition}\PY{p}{)}\PY{p}{:}
    \PY{n}{vectorpartition\PYZus{}exp} \PY{o}{=} \PY{n}{vectorpartition\PYZus{}exponential}\PY{p}{(}\PY{n}{vectorpartition}\PY{p}{)}
    \PY{n}{k\PYZus{}plus} \PY{o}{=} \PY{n}{vectorpartition}\PY{o}{.}\PY{n}{sum}\PY{p}{(}\PY{p}{)}\PY{p}{[}\PY{l+m+mi}{0}\PY{p}{]}
    \PY{n}{k\PYZus{}minus} \PY{o}{=} \PY{n}{vectorpartition}\PY{o}{.}\PY{n}{sum}\PY{p}{(}\PY{p}{)}\PY{p}{[}\PY{l+m+mi}{1}\PY{p}{]}
    \PY{n}{QXX} \PY{o}{=} \PY{n}{PolynomialRing}\PY{p}{(}\PY{n}{QQ}\PY{p}{,} \PY{l+m+mi}{2}\PY{o}{*}\PY{n}{k\PYZus{}plus}\PY{o}{+}\PY{l+m+mi}{1}\PY{p}{,} \PY{l+s+s1}{\PYZsq{}}\PY{l+s+s1}{x}\PY{l+s+s1}{\PYZsq{}}\PY{p}{)}
    \PY{n}{QYY} \PY{o}{=} \PY{n}{PolynomialRing}\PY{p}{(}\PY{n}{QQ}\PY{p}{,} \PY{l+m+mi}{2}\PY{o}{*}\PY{n}{k\PYZus{}minus}\PY{o}{+}\PY{l+m+mi}{1}\PY{p}{,} \PY{l+s+s1}{\PYZsq{}}\PY{l+s+s1}{y}\PY{l+s+s1}{\PYZsq{}}\PY{p}{)}
    \PY{n}{QXY} \PY{o}{=} \PY{n}{PolynomialRing}\PY{p}{(}\PY{n}{QQ}\PY{p}{,} \PY{l+m+mi}{2}\PY{o}{*}\PY{n+nb}{max}\PY{p}{(}\PY{n}{k\PYZus{}plus}\PY{p}{,}\PY{n}{k\PYZus{}minus}\PY{p}{)}\PY{o}{+}\PY{l+m+mi}{1}\PY{p}{,} \PY{n}{var\PYZus{}array}\PY{o}{=}\PY{p}{[}\PY{l+s+s1}{\PYZsq{}}\PY{l+s+s1}{x}\PY{l+s+s1}{\PYZsq{}}\PY{p}{,}\PY{l+s+s1}{\PYZsq{}}\PY{l+s+s1}{y}\PY{l+s+s1}{\PYZsq{}}\PY{p}{]}\PY{p}{)}
    \PY{n}{X\PYZus{}monomial} \PY{o}{=} \PY{n}{prod}\PY{p}{(}\PY{n}{QXX}\PY{o}{.}\PY{n}{gens}\PY{p}{(}\PY{p}{)}\PY{p}{[}\PY{n}{i}\PY{p}{]} \PY{k}{for} \PY{n}{i} \PY{o+ow}{in} \PY{n}{partition}\PY{p}{)}
    \PY{n}{Y\PYZus{}monomial} \PY{o}{=} \PY{n}{prod}\PY{p}{(}\PY{n}{QYY}\PY{o}{.}\PY{n}{gens}\PY{p}{(}\PY{p}{)}\PY{p}{[}\PY{n}{i}\PY{p}{]} \PY{k}{for} \PY{n}{i} \PY{o+ow}{in} \PY{n}{partition}\PY{p}{)}
    \PY{n}{XY\PYZus{}monomial} \PY{o}{=} \PY{n}{QXY}\PY{o}{.}\PY{n}{coerce}\PY{p}{(}\PY{n}{X\PYZus{}monomial}\PY{p}{)}\PY{o}{*}\PY{n}{QXY}\PY{o}{.}\PY{n}{coerce}\PY{p}{(}\PY{n}{Y\PYZus{}monomial}\PY{p}{)}
    \PY{n}{Z\PYZus{}G} \PY{o}{=} \PY{n}{QXY}\PY{o}{.}\PY{n}{coerce}\PY{p}{(}\PY{n}{prod}\PY{p}{(}\PY{n}{WreathProductCycleIndexPolynomial}\PY{p}{(}\PY{n}{v}\PY{p}{[}\PY{l+m+mi}{0}\PY{p}{]}\PY{p}{,}\PY{n}{v}\PY{p}{[}\PY{l+m+mi}{1}\PY{p}{]}\PY{p}{[}\PY{l+m+mi}{0}\PY{p}{]}\PY{p}{,} \PY{n}{v}\PY{p}{[}\PY{l+m+mi}{1}\PY{p}{]}\PY{p}{[}\PY{l+m+mi}{1}\PY{p}{]}\PY{p}{)} \PY{k}{for} \PY{n}{v} \PY{o+ow}{in} \PY{n}{vectorpartition\PYZus{}exp}\PY{p}{)}\PY{p}{)}
    \PY{k}{return} \PY{n}{Z\PYZus{}G}\PY{o}{.}\PY{n}{monomial\PYZus{}coefficient}\PY{p}{(}\PY{n}{XY\PYZus{}monomial}\PY{p}{)}
\end{Verbatim}
\end{tcolorbox}
We can compare our code to Example \ref{ex: cycle index}.
    \begin{tcolorbox}[breakable, size=fbox, boxrule=1pt, pad at break*=1mm,colback=cellbackground, colframe=cellborder]
\prompt{In}{incolor}{18}{\boxspacing}
\begin{Verbatim}[commandchars=\\\{\},fontsize=\small]
\PY{n}{monomial\PYZus{}coefficient\PYZus{}of\PYZus{}ZG}\PY{p}{(}\PY{n}{VectorPartition}\PY{p}{(}\PY{p}{[}\PY{p}{[}\PY{l+m+mi}{3}\PY{p}{,}\PY{l+m+mi}{2}\PY{p}{]}\PY{p}{,}\PY{p}{[}\PY{l+m+mi}{3}\PY{p}{,}\PY{l+m+mi}{2}\PY{p}{]}\PY{p}{,}\PY{p}{[}\PY{l+m+mi}{1}\PY{p}{,}\PY{l+m+mi}{3}\PY{p}{]}\PY{p}{]}\PY{p}{)}\PY{p}{,} \PY{n}{Partition}\PY{p}{(}\PY{p}{[}\PY{l+m+mi}{1}\PY{p}{,}\PY{l+m+mi}{1}\PY{p}{,}\PY{l+m+mi}{1}\PY{p}{,}\PY{l+m+mi}{1}\PY{p}{,}\PY{l+m+mi}{1}\PY{p}{,}\PY{l+m+mi}{1}\PY{p}{,}\PY{l+m+mi}{1}\PY{p}{]}\PY{p}{)}\PY{p}{)}
\end{Verbatim}
\end{tcolorbox}

            \begin{tcolorbox}[breakable, size=fbox, boxrule=.5pt, pad at break*=1mm, opacityfill=0]
\prompt{Out}{outcolor}{18}{\boxspacing}
\begin{Verbatim}[commandchars=\\\{\},fontsize=\small]
1/1728
\end{Verbatim}
\end{tcolorbox}
        
    \begin{tcolorbox}[breakable, size=fbox, boxrule=1pt, pad at break*=1mm,colback=cellbackground, colframe=cellborder]
\prompt{In}{incolor}{19}{\boxspacing}
\begin{Verbatim}[commandchars=\\\{\},fontsize=\small]
\PY{n}{monomial\PYZus{}coefficient\PYZus{}of\PYZus{}ZG}\PY{p}{(}\PY{n}{VectorPartition}\PY{p}{(}\PY{p}{[}\PY{p}{[}\PY{l+m+mi}{3}\PY{p}{,}\PY{l+m+mi}{2}\PY{p}{]}\PY{p}{,}\PY{p}{[}\PY{l+m+mi}{3}\PY{p}{,}\PY{l+m+mi}{2}\PY{p}{]}\PY{p}{,}\PY{p}{[}\PY{l+m+mi}{1}\PY{p}{,}\PY{l+m+mi}{3}\PY{p}{]}\PY{p}{]}\PY{p}{)}\PY{p}{,} \PY{n}{Partition}\PY{p}{(}\PY{p}{[}\PY{l+m+mi}{2}\PY{p}{,}\PY{l+m+mi}{1}\PY{p}{,}\PY{l+m+mi}{1}\PY{p}{,}\PY{l+m+mi}{1}\PY{p}{,}\PY{l+m+mi}{1}\PY{p}{,}\PY{l+m+mi}{1}\PY{p}{]}\PY{p}{)}\PY{p}{)}
\end{Verbatim}
\end{tcolorbox}

            \begin{tcolorbox}[breakable, size=fbox, boxrule=.5pt, pad at break*=1mm, opacityfill=0]
\prompt{Out}{outcolor}{19}{\boxspacing}
\begin{Verbatim}[commandchars=\\\{\},fontsize=\small]
5/288
\end{Verbatim}
\end{tcolorbox}
We compute the total number of double cosets by summing over all partitions $p \vdash 7$, extracting the coefficient of the corresponding monomial and multiplying by $\Sym(p)$ (this is called p.aut() in the code).        
    \begin{tcolorbox}[breakable, size=fbox, boxrule=1pt, pad at break*=1mm,colback=cellbackground, colframe=cellborder]
\prompt{In}{incolor}{20}{\boxspacing}
\begin{Verbatim}[commandchars=\\\{\},fontsize=\small]
\PY{n+nb}{sum}\PY{p}{(}\PY{n}{partition}\PY{o}{.}\PY{n}{aut}\PY{p}{(}\PY{p}{)}\PY{o}{*}\PY{n}{monomial\PYZus{}coefficient\PYZus{}of\PYZus{}ZG}\PY{p}{(}\PY{n}{VectorPartition}\PY{p}{(}\PY{p}{[}\PY{p}{[}\PY{l+m+mi}{3}\PY{p}{,}\PY{l+m+mi}{2}\PY{p}{]}\PY{p}{,}\PY{p}{[}\PY{l+m+mi}{3}\PY{p}{,}\PY{l+m+mi}{2}\PY{p}{]}\PY{p}{,}\PY{p}{[}\PY{l+m+mi}{1}\PY{p}{,}\PY{l+m+mi}{3}\PY{p}{]}\PY{p}{]}\PY{p}{)}\PY{p}{,} \PY{n}{partition}\PY{p}{)} \PY{k}{for} \PY{n}{partition} \PY{o+ow}{in} \PY{n}{Partitions}\PY{p}{(}\PY{l+m+mi}{7}\PY{p}{)}\PY{p}{)}
\end{Verbatim}
\end{tcolorbox}

            \begin{tcolorbox}[breakable, size=fbox, boxrule=.5pt, pad at break*=1mm, opacityfill=0]
\prompt{Out}{outcolor}{20}{\boxspacing}
\begin{Verbatim}[commandchars=\\\{\},fontsize=\small]
11
\end{Verbatim}
\end{tcolorbox}
We collect this procedure as a function -- it takes a vector partition and computes the number of double cosets $N(\vec{k}^+,\vec{k}^-)$.        
    \begin{tcolorbox}[breakable, size=fbox, boxrule=1pt, pad at break*=1mm,colback=cellbackground, colframe=cellborder]
\prompt{In}{incolor}{21}{\boxspacing}
\begin{Verbatim}[commandchars=\\\{\},fontsize=\small]
\PY{k}{def} \PY{n+nf}{number\PYZus{}of\PYZus{}double\PYZus{}cosets}\PY{p}{(}\PY{n}{vectorpartition}\PY{p}{,} \PY{n}{k}\PY{p}{)}\PY{p}{:}
    \PY{k}{return} \PY{n+nb}{sum}\PY{p}{(}\PY{n}{partition}\PY{o}{.}\PY{n}{aut}\PY{p}{(}\PY{p}{)}\PY{o}{*}\PY{n}{monomial\PYZus{}coefficient\PYZus{}of\PYZus{}ZG}\PY{p}{(}\PY{n}{vectorpartition}\PY{p}{,} \PY{n}{partition}\PY{p}{)} \PY{k}{for} \PY{n}{partition} \PY{o+ow}{in} \PY{n}{Partitions}\PY{p}{(}\PY{n}{k}\PY{p}{)}\PY{p}{)}
\end{Verbatim}
\end{tcolorbox}

Lastly, we can compute the refinement by number of vertices as in Table \ref{tab: Table of invariant dimensions refined} for $k=4$ edges.
    \begin{tcolorbox}[breakable, size=fbox, boxrule=1pt, pad at break*=1mm,colback=cellbackground, colframe=cellborder]
\prompt{In}{incolor}{22}{\boxspacing}
\begin{Verbatim}[commandchars=\\\{\},fontsize=\small]
\PY{n}{k} \PY{o}{=} \PY{l+m+mi}{4}
\PY{n+nb}{list}\PY{p}{(}\PY{n+nb}{sum}\PY{p}{(}\PY{n}{number\PYZus{}of\PYZus{}double\PYZus{}cosets}\PY{p}{(}\PY{n}{v}\PY{p}{)}\PY{k}{for} \PY{n}{v} \PY{o+ow}{in} \PY{n}{VectorPartitions}\PY{p}{(}\PY{p}{[}\PY{n}{k}\PY{p}{,}\PY{n}{k}\PY{p}{]}\PY{p}{)}  \PY{k}{if} \PY{n+nb}{len}\PY{p}{(}\PY{n}{v}\PY{p}{)} \PY{o}{==} \PY{n}{l}\PY{p}{)} \PY{k}{for} \PY{n}{l} \PY{o+ow}{in} \PY{p}{[}\PY{l+m+mf}{1.}\PY{o}{.}\PY{l+m+mi}{2}\PY{o}{*}\PY{n}{k}\PY{p}{]}\PY{p}{)}
\end{Verbatim}
\end{tcolorbox}

            \begin{tcolorbox}[breakable, size=fbox, boxrule=.5pt, pad at break*=1mm, opacityfill=0]
\prompt{Out}{outcolor}{22}{\boxspacing}
\begin{Verbatim}[commandchars=\\\{\},fontsize=\small]
[1, 18, 71, 108, 71, 22, 4, 1]
\end{Verbatim}
\end{tcolorbox}

This code/notebook can be found at \href{https://github.com/adrianpadellaro/PhD-Thesis}{Link to GitHub Repository}.

	\backmatter 

	
	\bibliographystyle{JHEP.bst}
	\bibliography{PhD.bib}{}

\end{document}